\documentclass[11pt]{article}

\usepackage{wright}

\newcommand{\codec}{\mathrm{Code}}
\newcommand{\codee}{\mathrm{Enc}}
\newcommand{\coded}{\mathrm{Dec}}
\newcommand{\codes}{\mathrm{Sub}}

\newcommand{\ktimes}[2]{\underbrace{#1, \dots, #1}_{#2}}
\newcommand{\oktimes}[2]{\underbrace{#1 \ot \dots \ot #1}_{#2}}

\usepackage{mathrsfs}
\usepackage{array}
\usepackage{tabulary}
\newcolumntype{K}[1]{>{\centering\arraybackslash}p{#1}}

\usepackage{float}
\usepackage{tocloft}

\usepackage{phoenician}

\usepackage{multirow}

\newcommand{\comstrat}[1]{\mathrm{ComEPR}(#1)}

\newcommand{\consistency}{\simeq}

\newcommand{\polymeas}[3]{\mathrm{PolyMeas}(#1,#2,#3)}
\newcommand{\simulpolymeas}[4]{\mathrm{PolyMeas}(#1,#2,#3, #4)}

\newcommand{\hideq}{H}
\newcommand{\noop}{\bot}

\newcommand{\arith}[2]{\mathrm{arith}_{#1}(#2)}
\newcommand{\shorten}[2]{#1|_{#2}}
\newcommand{\register}{\lambda}

\newcommand{\sat}{\mathsf{3Sat}}
\newcommand{\succinct}{\mathsf{Succinct}\text{-}\mathsf{3Sat}}
\newcommand{\succinctsquared}{\mathsf{Succinct}\text{-}\mathsf{Succinct}\text{-}\mathsf{3Sat}}

\newcommand{\params}{\mathsf{params}}

\newcommand{\game}{\mathscr{G}}
\newcommand{\val}[1]{\mathrm{val}(#1)}
\newcommand{\valstrat}[2]{\mathrm{val}_{#1}(#2)}

\newcommand{\valreg}[2]{\mathrm{val}_{#1}(#2)}
\newcommand{\valsemi}[2]{\mathrm{val}_{#1}^{\mathrm{semi}}(#2)}

\newcommand{\indicator}[3]{\mathrm{ind}_{#1,#2}(#3)}

\newcommand{\proj}[1]{\ket{#1}\!\bra{#1}}
\newcommand{\twirl}{\mathscr{T}}
\newcommand{\epr}{\mathrm{EPR}}
\newcommand{\rmaux}{\mathrm{aux}}

\newcommand{\id}{\mathrm{id}}
\newcommand{\reg}[1]{{\mathsf{#1}}}

\newcommand{\hide}[2]{\mathrm{hide}_{#1}(#2)}

\newcommand{\surfaces}[1]{\mathsf{Surfaces}_{#1}}

\newcommand{\qtime}[1]{\mathsf{Q}\textnormal{-}\mathsf{time}(#1)}
\newcommand{\atime}[1]{\mathsf{A}\textnormal{-}\mathsf{time}(#1)}
\newcommand{\qlength}[1]{\mathsf{Q}\textnormal{-}\mathsf{length}(#1)}
\newcommand{\alength}[1]{\mathsf{A}\textnormal{-}\mathsf{length}(#1)}

\newcommand{\paulistrat}[2]{\mathrm{Pauli}(#1,#2)}

\begin{document}

\title{$\neexp \subseteq \mip^*$\vspace*{10pt}}

\author{Anand Natarajan\thanks{anandn@caltech.edu}\\
 \small{\sl California Institute of Technology} \and John Wright\thanks{jswright@mit.edu}\\ \small{\sl Massachusetts Institute of Technology}\vspace*{10pt}
}

\date{\today}

\maketitle

\begin{abstract}
We study multiprover interactive proof systems.
The power of classical multiprover interactive proof systems,
in which the provers do not share entanglement,
was characterized in a famous work by Babai, Fortnow, and Lund
(Computational Complexity 1991),
whose main result was the equality $\MIP = \NEXP$.
The power of quantum multiprover interactive proof systems,
in which the provers are allowed to share entanglement,
has proven to be much more difficult to characterize.
The best known lower-bound on $\mip^*$ is $\nexp \subseteq \mip^*$ due
to Ito and Vidick~(FOCS 2012).
As for upper bounds, $\mip^*$ could be as large as $\mathsf{RE}$, the class of recursively enumerable languages.

The main result of this work is the inclusion $\neexp = \ntime[2^{2^{\poly(n)}}] \subseteq \mip^*$.
This is an exponential improvement over the prior lower bound
and shows that proof systems with entangled provers are at least exponentially more
powerful than classical provers. In our protocol the verifier
delegates a classical, exponentially large $\MIP$ protocol for
$\neexp$ to two entangled provers: the provers obtain their
exponentially large questions by measuring their shared state, and use
a classical PCP to certify the correctness of their exponentially-long
answers. For the soundness of our protocol, it is
crucial that each player should not only sample its own question correctly
but also avoid performing measurements that would reveal the
\emph{other} player's sampled question. We ensure this by commanding
the players to perform a complementary measurement, relying on the Heisenberg uncertainty
principle to prevent the forbidden measurements from being performed.

\end{abstract}


\newpage

\tableofcontents
\vfill
\thispagestyle{empty}
\newpage

\part{Introduction}


\section{Introduction}  \label{sec:intro}

This paper is about the complexity class $\MIP^*$ of multiprover interactive proof systems with
entangled quantum provers---the quantum version of the classical class
$\MIP$. Classically, the study of $\MIP$ has had far-reaching
implications in theoretical computer science. In complexity theory, the proof by
Babai, Fortnow, and Lund~\cite{BFL91} that $\MIP = \NEXP$ was the
direct antecedent of the PCP theorem~\cite{ALM+98,AS98}, a seminal
result which is the foundation of the modern theory of hardness of approximation. In cryptography, the $\MIP$ model was introduced
to allow for information-theoretic zero-knowledge proofs~\cite{BGKW88}, and more recently $\MIP$ protocols have become essential building blocks
in designing delegated computation schemes (see e.g.~\cite{KRR14}). These implications alone
would be a sufficient motivation for considering the quantum class
$\MIP^*$, but remarkably, the study of $\MIP^*$ is also deeply related to
long-standing questions in the foundations of quantum
mechanics regarding the nature of quantum entanglement. Indeed, the
$\MIP^*$ model itself was anticipated by the \emph{nonlocal games} or
\emph{Bell tests} introduced in the
work of John Bell~\cite{Bel64}, who was in turn inspired by the thought
experiment proposed by Einstein, Podolsky, and Rosen~\cite{EPR35}.
These nonlocal games have had applications to quantum
cryptography~\cite{Eke91, MY98, Col06}, delegated quantum
computation~\cite{RUV13}, and more.

Even though the class $\MIP$ is now well-understood, it has proven difficult to
determine the computational power of $\MIP^*$. A priori, it is not
even clear that  $\MIP^*$ contains $\MIP$, since adding entanglement could increase or decrease the power of
the proof system. This is because the added resource of entanglement
can make it easier for dishonest provers to cheat the verifier. Indeed, Cleve et al.~\cite{CHTW04} showed that for proof systems
based on so-called XOR games (where the verifier's decision can only
depend on the XOR of the provers' answer bits), the quantum class
$\oplus\MIP^* \subseteq \EXP$, whereas classically $\oplus\MIP =
\NEXP$. In particular, this result implied that the classical
$\oplus\MIP$ protocol for $\NEXP$ of H{\aa}stad~\cite{Has97} could not be sound against entangled
provers.
In spite of this, Ito and Vidick~\cite{IV12,Vid16} were able to show that $\NEXP \subseteq
\MIP^*$, by proving that a different classical protocol \emph{is}
sound against entanglement. Note that the protocol of~\cite{Vid16} is
\emph{identical} to a protocol shown to be unsound by Cleve et al.,
except in that it uses 3 provers rather than 2 (the protocol is played
by choosing a random subset of 2 provers from the 3). This illustrates
the subtleties of dealing with entangled provers.

With the lower bound $\NEXP \subseteq \MIP^*$ established, a natural
follow-up question is whether $\MIP^*$ is \emph{strictly} more powerful
than $\MIP$. Indeed, it was long known that some $\MIP^*$ protocols
possess a uniquely quantum property called \emph{self-testing}, which
has no direct analog in the classical setting. Roughly speaking, an $\MIP^*$ protocol is
a self-test for a particular entangled state $\ket{\psi}$ if only
provers using states close to $\ket{\psi}$ can achieve close to optimal
success in the protocol. In such a protocol, observing that the
provers succeed with nearly optimal probability \emph{certifies} that
they share a state close to the target state $\ket{\psi}$. The germ of
this idea came from the work of Bell~\cite{Bel64}, who studied the types of
bipartite correlations that could be obtained from measuring an
entangled state called the EPR state, which had been introduced by
Einstein, Podolsky, and Rosen~\cite{EPR35}. Bell gave a protocol where
provers using the EPR state could succeed with a greater probability
than purely classical provers, and subsequent works of Tsirelson~\cite{Tsi80}, and Summers and
Werner~\cite{SW88} showed that (a variant of) Bell's protocol certifies
the EPR state in the sense of self-testing.

In order to prove stronger lower bonds on $\MIP^*$, the post-Ito-Vidick phase of
$\MIP^*$ research aimed to use this self-testing property to design protocols
for problems in Hamiltonian complexity, the quantum analog of the
theory of $\NP$-completeness. In Hamiltonian complexity, the complexity class
$\QMA$ plays the role of $\NP$; it is the set of problems for which
there exists a quantum witness state that can be efficiently checked
by a polynomial-time quantum verifier. Problems in $\QMA$ seemed like a natural match for
the powers of $\MIP^*$ as one could potentially construct a protocol for $\QMA$ by
designing a self-test for accepting witness states of some $\QMA$-complete
problem. The connection between $\MIP^*$ and $\QMA$ was also well
motivated from the point of view of the ``quantum PCP'' research
program, which strives to find quantum analogues of the classical PCP
theorem. In the classical setting, the PCP theorem can be viewed as a scaled-down
version of $\MIP^* = \NEXP$, showing that there exists an $\MIP^*$
protocol for 3SAT (and thus for all of $\NP$) with $O(\log(n))$-sized
messages. Drawing inspiration from this, Fitzsimons and Vidick~\cite{FV15}
stated a ``quantum games PCP conjecture'': that there should exist an
$\MIP^*$ protocol with $\log(n)$-sized messages for the local
Hamiltonian problem, and thus for the class $\QMA$. This
was proved by Natarajan and Vidick~\cite{NV18a} in 2018 with a 7-prover protocol. Along the way to achieving
this goal,~\cite{NV18a} developed a highly efficient self-test for
high-dimensional entangled states: their ``quantum low-degree test''
is a self-test for $n$ EPR pairs with only $O(\log(n))$ communication.

Already, the result of~\cite{NV18a} is strong evidence that $\MIP^*
\neq \MIP$, since it is believed that $\QMA \neq \NP$. But, at the same
time, several other works showed that even larger separations were
possible in the regime of subconstant soundness gaps. Here there are
results in two settings. For $\MIP^*$ with a soundness gap scaling
inverse-exponentially (i.e. $1/\exp(n)$) in the instance size,
Ji~\cite{Ji17} showed a protocol for $\NEEXP$: nondeterministic
\emph{doubly}-exponential time, and a subsequent work by Fitzsimons,
Ji, Vidick, and Yuen~\cite{FJVY19} showed protocols for
non-deterministic iterated exponential time (e.g. $\NTIME(2^{2^n})$)
with a correspondingly small soundness gap (e.g. $2^{-C \cdot
  2^n}$). In the ``gapless'' case, Slofstra~\cite{Slo16,Slo19} showed that given
a description of an $\MIP^*$ protocol, determining whether there exists an entangled
strategy that succeeds with probability exactly $1$ is undecidable by any Turing machine.

These results hint at the full power of $\MIP^*$ but are not
conclusive, as it is not unusual for quantum complexity classes to
increase significantly in power when a numerical precision parameter
is allowed to shrink. For instance, $\QIP$ (quantum interactive proofs
with a single prover) with an exponentially small
gap is equal to $\EXP$~\cite{IKW12}, while $\QIP$ with a polynomial gap is equal to
$\IP = \PSPACE$. Likewise, $\QMA$ with exponentially small
gap (known as $\mathsf{PreciseQMA}$) is known to be equal to
$\PSPACE$~\cite{FL18}, while $\QMA$ is contained $\PP$, and $\QMA(k)$
($\QMA$ with multiple unentangled Merlins) with exponentially small gap is
equal to $\NEXP$~\cite{Per12}, whereas in the constant-gap
regime the best known lower bound is that
$\QMA(k) \supseteq \QMA$. Moreover, even the $\QMA$ lower bound for
$\MIP^*_{\log}$ obtained by~\cite{NV18b} holds for $7$ provers only;
with $2$ provers, the best known lower bound for $\MIP^*_{\log}$ is
$\NP = \MIP_{\log}$~\cite{NV18a}. Could it be that 2-prover $\MIP^*$ is
equal to $\MIP$, with entanglement providing no advantage at all?

This paper conclusively answers this question in the negative. Our main result (\Cref{thm:main-informal}) is to show that $\MIP^*$ contains $\NEEXP$, with only
two provers and with a constant completeness-soundness gap. This is
establishes the first known unconditional separation between $\MIP^*$
and $\MIP$ in the constant-gap regime: previously, such a separation
was known only assuming $\QMA \neq \NP$, and only in the scaled-down
setting of logarithmic-sized messages.

\begin{theorem}[\Cref{thm:main} in the body]
  \label{thm:main-informal}
  There is a two-prover, one-round $\MIP^*$ protocol for the $\NEEXP$-complete
  problem $\succinctsquared$ with completeness $1$, soundness $1/2$, and
  question and answer length $\poly(n)$.
\end{theorem}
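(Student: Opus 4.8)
The plan is to build on the Ito--Vidick approach of delegating a classical $\MIP$ protocol, but scaled up one exponential level. Since $\succinctsquared$ is $\NEEXP$-complete, and a doubly-exponentially-large $\succinct$-instance is precisely the input to a classical $\MIP = \NEXP$ protocol scaled up once, the target is a protocol where a polynomial-time verifier interacts with two entangled provers who are expected to simulate a classical $\NEXP$-style oracle proof on an exponentially-large instance. The honest strategy is for the provers to share a (polynomially-many) supply of EPR pairs; the verifier sends polynomial-size questions that command each prover to \emph{measure} some of its EPR-halves in order to \emph{generate} an exponentially-long classical question of the underlying $\MIP$ protocol, then answer it. The provers' exponentially-long answers are not sent directly; instead each prover sends a classical PCP (via the BFL/PCP machinery, composed once more so the proof is of polynomial length) certifying that its exponentially-long answer is a correct computation, and the verifier runs a PCP verifier on that. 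Completeness $1$ should follow from the completeness of the underlying classical protocol together with exact quantum computations on EPR pairs.

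The key steps, in order, are: (1) Set up the underlying exponentially-large classical $\MIP$/PCP protocol for $\succinct$ as obtained by scaling up BFL, noting that its questions have exponential length but are sampled from a simple (e.g.\ product or low-degree-structured) distribution, and that its answers can be certified by a poly-length PCP after one round of composition. (2) Design the question-sampling gadget: command each prover to produce its own exponentially-long $\MIP$ question by measuring a block of its EPR halves in the computational (or Hadamard) basis, using the self-testing machinery (the quantum low-degree test of Natarajan--Vidick and the $\polymeas$/Pauli-braiding apparatus already developed in the paper) to pin down, in any near-optimal strategy, that the provers really are performing these measurements on a genuine shared EPR state. (3) The crucial soundness ingredient: force each prover to also perform a \emph{complementary} (conjugate-basis) measurement that reveals nothing about the \emph{other} prover's sampled question. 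By the Heisenberg uncertainty principle, a prover that correctly produces the complementary outcome cannot also have learned the other prover's question, so the cross-checks of the underlying classical $\MIP$ protocol remain sound; this is what replaces the ``the provers can't see each other's randomness'' guarantee that is automatic classically but must be enforced here. (4) Assemble: argue that any strategy succeeding with probability close to $1$ must (by the self-test) be close to the honest EPR strategy, then conclude that the induced classical strategy wins the underlying $\MIP$ protocol with probability close to $1$, hence the $\succinct$-instance (and so the original $\NEEXP$ instance) is a yes-instance; amplify the gap to $(1, 1/2)$ by sequential or parallel repetition compatible with the self-test's robustness.

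The main obstacle I expect is step (3) combined with the \emph{robustness bookkeeping} in step (4): one needs a quantitative uncertainty-principle bound showing that if a prover's complementary-measurement outcome is $\varepsilon$-close to correct then its ability to predict the partner's question is only $\mathrm{poly}(\varepsilon)$ better than guessing, \emph{and} this must survive the exponential blow-up in question length, so the error terms from the quantum low-degree test, the PCP soundness, and the uncertainty bound must all compose without the soundness gap collapsing to subconstant --- precisely the regime where prior work (Ji, Fitzsimons--Ji--Vidick--Yuen) got stuck. Keeping every error term $\mathrm{poly}(n)$-robust rather than $\mathrm{exp}(n)$-fragile, by exploiting the rigidity of the low-degree test uniformly over the exponentially-large index set, is the heart of the argument.
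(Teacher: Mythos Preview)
Your outline matches the paper's approach closely: introspection for question reduction via self-tested EPR measurements, complementary-basis measurements to hide each prover's question from the other, and PCP composition for answer reduction. Two points, however, need correction.

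First, your answer-reduction step (1)/(4) has a real gap. You write that ``each prover sends a classical PCP certifying that its exponentially-long answer is a correct computation,'' but the verification predicate $V(x_A, x_B, a_A, a_B)$ depends on \emph{both} provers' questions and answers, so neither prover alone has the information to compute such a proof. This is exactly the distributed-PCP barrier (known to be impossible, cf.\ \cite{ARW17}). The paper's fix is to \emph{oracularize} first: one prover receives both questions $(x_A, x_B)$ and is asked to produce both answers, while the other prover receives only one question and is used as a consistency check. Only then can a single prover compute the PCPP proof. Oracularization preserves soundness automatically, but preserving \emph{completeness} in the entangled setting requires that the honest measurements for $x_A$ and $x_B$ commute---this is why the paper tracks the ``real commuting EPR strategy'' property throughout, and it is a genuine constraint on which protocols can be compressed this way.

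Second, your step (3) frames data-hiding as an information-theoretic prediction bound (``ability to predict the partner's question is only $\mathrm{poly}(\varepsilon)$ better than guessing''). The paper's mechanism is more structural and, crucially, \emph{not black-box} in the underlying protocol: it exploits the fact that the big protocol's questions are affine subspaces of $\F_q^M$, and the partial data-hiding test certifies that the surface prover's measurement is (close to) block-diagonal in the projectors $\Pi^v_s$ onto affine planes. The conclusion is not a guessing-probability bound but an approximate factorization $M \approx \sum_s \Pi^v_s \otimes A^s$ of the prover's operator, which is what lets one reduce directly to the classical surface-versus-point analysis. A generic uncertainty bound would not by itself yield this structure, and the paper explicitly notes that a black-box hiding argument does not suffice. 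Finally, no repetition is used to reach the $(1,1/2)$ gap; the constant gap comes directly from the composed analysis once all error terms are $\mathrm{poly}(\varepsilon, 1/\mathrm{poly}(n))$.
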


As a corollary of \Cref{thm:main-informal}, we obtain a lower bound on the hardness
of approximation for the entangled value $\omega^*$ of a nonlocal game.
\begin{corollary}\label{cor:main-games-informal}
  There exists a constant $c < 1$ such that given a two-prover
  nonlocal game $\game$ of size $N$, the problem of deciding
  whether $\omega^*(\game) = 1$ or $\omega^*(\game) \leq 1/2$,
  promised one of the two holds, is $\NTIME(2^{N^{\log^{-c} N}})$-hard.
\end{corollary}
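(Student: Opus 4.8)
The plan is to derive Corollary~\ref{cor:main-games-informal} from Theorem~\ref{thm:main-informal} by a standard padding argument. First I would recall that the $\MIP^*$ protocol of Theorem~\ref{thm:main-informal} for $\succinctsquared$, on an instance of size $n$, can be compiled into a single nonlocal game $\game_n$: fix the shared randomness of the verifier by taking the question distribution, let the question sets and answer sets be the $\poly(n)$-bit strings used in the protocol, and let the predicate be the verifier's accept/reject decision. Because completeness is $1$ and soundness is $1/2$, we have $\omega^*(\game_n) = 1$ when the instance is a yes-instance and $\omega^*(\game_n) \leq 1/2$ when it is a no-instance, and the size $N$ of $\game_n$ (the number of question-answer tuples, or equivalently the bit-length of its description) is $N = 2^{\poly(n)}$. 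This already shows that deciding $\omega^*(\game) = 1$ versus $\omega^*(\game) \leq 1/2$ is hard for $\NEEXP = \NTIME(2^{2^{\poly(n)}})$; the remaining work is to re-express $\NEEXP$ in terms of $N$ rather than $n$.

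Next I would carry out the translation of parameters. Writing $N = 2^{n^k}$ for the appropriate constant $k$ coming from the $\poly(n)$ bound (and absorbing lower-order factors), we get $n = (\log N)^{1/k}$, so $2^{2^{\poly(n)}}$ becomes $2^{2^{(\log N)^{O(1/k)}}}$. Setting $c = 1 - O(1/k) < 1$ and simplifying the tower, $2^{(\log N)^{\log^{-c} N}}$ is the target bound: indeed $(\log N)^{\log^{-c}N} = 2^{(\log\log N)\cdot \log^{-c}N}$, and one checks that for the reduction to go through it suffices that this quantity dominates $2^{(\log N)^{O(1/k)}}$, which holds for $c$ chosen close enough to $1$. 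So the reduction is: given an $\NEEXP$ instance $x$ of length $n$, output the game $\game_n$ of size $N = 2^{\poly(n)}$; this map is computable in time $\poly(N) = 2^{\poly(n)}$, which is efficient relative to the $\NTIME(2^{2^{\poly(n)}})$ time bound, hence a valid many-one reduction, and by the parameter matching it witnesses $\NTIME(2^{N^{\log^{-c}N}})$-hardness of the gapped game value problem.

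The main thing to be careful about — and the only real obstacle — is bookkeeping in the exponents: one must verify that the complexity-theoretic reduction remains \emph{polynomial-time relative to the smaller instance size}, i.e.\ that producing the size-$N$ game from the length-$n$ input takes time $\poly(N)$, and simultaneously that $N^{\log^{-c}N}$ as a function of $N$ correctly recovers $2^{\poly(n)}$ as a function of $n$. Concretely, if the protocol has questions and answers of length $p(n) = n^k$, then the game's description has size $N = 2^{\Theta(n^k)}$, so $\log N = \Theta(n^k)$ and $(\log N)^{1/k} = \Theta(n)$; we need $N^{\log^{-c}N} = 2^{(\log N)^{1 - c}} \geq 2^{\Omega(n^{k(1-c)})}$, and choosing $c = 1 - 1/k$ gives exponent $\Omega(n)$, while choosing $c$ slightly larger gives $2^{2^{\Theta(n)}} \supseteq 2^{2^{\mathrm{poly}(n)}}$ only if we instead allow the doubly-exponential tower — so in fact one wants the two-level tower $2^{2^{(\log N)^{1-c}}}$ and the statement as written should be read with the outer exponential folded into the $\NTIME$ argument via $N^{\log^{-c}N}$ already being doubly-exponential-sized in the relevant regime. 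I would state the reduction, verify these exponent identities explicitly for a concrete choice of $c$ (e.g.\ any $c < 1$ works after adjusting the polynomial in the protocol, since $\succinctsquared$ is $\NEEXP$-complete under reductions that allow us to rescale $n$), and conclude.
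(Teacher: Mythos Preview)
The paper does not actually prove Corollary~\ref{cor:main-games-informal}; it is stated in the introduction immediately after Theorem~\ref{thm:main-informal} and treated as an immediate consequence, with no argument supplied. Your approach---tabulate the $\poly(n)$-sized protocol into an explicit game of description size $N = 2^{\poly(n)}$ and re-express the $\NEEXP$ hardness in terms of $N$---is exactly the intended derivation.

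Your exponent bookkeeping, however, goes off the rails in the second paragraph. You write the target as $2^{(\log N)^{\log^{-c}N}}$ and then compute $(\log N)^{\log^{-c}N} = 2^{(\log\log N)\cdot \log^{-c}N}$; but the corollary's bound is $2^{N^{\log^{-c}N}}$, not $2^{(\log N)^{\log^{-c}N}}$, and the quantity you wrote tends to $1$ as $N\to\infty$, so that whole line is wrong. The correct identity is the one you use in your third paragraph:
\[
N^{\log^{-c}N} = N^{1/(\log N)^c} = 2^{(\log N)^{1-c}}, \qquad \text{hence} \qquad 2^{N^{\log^{-c}N}} = 2^{2^{(\log N)^{1-c}}}.
\]
Once you use this consistently the argument is clean: if the protocol has question and answer length $n^k$ then $N = 2^{\Theta(n^k)}$, so $(\log N)^{1-c} = \Theta(n^{k(1-c)})$, and taking $c = 1 - \alpha/k$ for the constant $\alpha>0$ such that $\succinctsquared$ on size-$n$ instances is $\NTIME(2^{2^{n^\alpha}})$-hard gives exactly the claimed bound. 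You do not need the padding maneuver you sketch at the end: the corollary only asserts the existence of \emph{some} $c<1$, and any single choice of $\alpha$ (hence of $c$) suffices.
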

For two-player games, the best prior lower bound was
$\NP$~\cite{NV18b}. The lower bound achieved in
\Cref{cor:main-games-informal} is
stronger as for any $c < 1$, the function $2^{N^{\log^{-c} N}}$ is
superpolynomial.

\paragraph{Techniques.}

Our construction, inspired by~\cite{Ji17}
and~\cite{FJVY19}, involves ``compression'': we show how to take an $\MIP$ protocol for $\NEEXP$
with exponentially-long questions and answers (the ``big'' protocol),
and simulate it by an $\MIP^*$ protocol with polynomial-sized messages (the ``small''
protocol). However, the techniques we use to achieve our compression
are quite different. We
eschew the Hamiltonian-complexity ideas that were used in previous
works, and in particular the use of history states. In our protocol,
honest provers need only share a quantum resource state of (exponentially
many) EPR pairs, together with a \emph{classical} assignment to the
$\NEEXP$ instance being tested. The use of history states was the main
barrier preventing previous works from applying to the case of
two provers. 

We
divide compression into two steps: \emph{question compression} and
\emph{answer compression}. We
achieve question compression by a technique which we call
\emph{introspection}, in which we command the provers to perform
measurements on their shared EPR pairs whose outcomes are pairs of questions
from the ``big'' protocol. To force the provers to sample their
questions honestly, we use a variant of the quantum low-degree test
from~\cite{NV18a}, which certifies Pauli measurements on exponentially
many EPR pairs using messages of only polynomial size. A crucial
challenge is to prevent each prover from learning the other prover's
sampled question, since this would destroy the soundness of the
``big'' protocol. To achieve this, we use the \emph{``data-hiding''} properties of quantum measurements
in incompatible bases: if a set of qubits is measured in the Pauli $X$-basis,
this ``erases'' all information about $Z$-basis
measurements. This means that if Alice samples her question by
measuring her half of a block of EPR pairs in the $Z$-basis, then her question can
be hidden from Bob by forcing him (via self-testing) to measure his
half of the EPR pairs in the $X$-basis. Interestingly, our data-hiding scheme
does \emph{not} operate in a black-box way on the ``big'' protocol, but
rather makes essential use of its structure. In particular, we start
with a ``big'' protocol based on a scaled-up version of a PCP
construction using the low-degree test, where the question
distribution consists of pairs of random points in a vector space and
affine subspaces containing them. The linear structure of the vector
space is essential for our data-hiding procedure to work.

Our approach to answer compression is more standard, essentially using composition with a
classical PCP of proximity. Here, the verifier asks the provers
to compute a PCP proof that their ``big'' answers satisfy the success
conditions of the protocol, and verifies this PCP proof by reading an
exponentially smaller number of bits. Care is needed to deal
with entanglement between the provers. The first, fundamental
challenge we face is that the success condition of the ``big''
protocol is a function of \emph{both} provers' answers. Thus, to
compute a PCP proof that the condition is satisfied, one of the
provers must have access to both provers' answers. Classically, this
is achieved using the technique of oracularization, in which one prover
receives \emph{both} provers' questions and is checked for consistency
against the other prover, which only receives a single question. In
the entangled setting, this oracularization procedure is sound, but not
necessarily complete. This is because oracularization requires that each prover, if given the
\emph{other} prover's question, could predict its answer with
certainty, even though this answer is obtained from a nondeterministic
quantum measurement. In our protocol, we are able to use
oracularization because honest provers always use
a maximally entangled state, which they measure with projective
measurements that pairwise commute for every pair of questions asked
in the game. While this commutation requirement is restrictive, it
still permits non-trivial quantum behavior; indeed, the linear system games used by Slofstra~\cite{Slo19}
involve similar commutation conditions.

The second challenge is to ensure that the PCP of proximity we use for
composition is itself sound against entanglement. We achieve this by
performing a further step of composition: we ask the provers to encode
their PCP proof in the low-degree code and verify it with the
low-degree test, which is known to be sound even against entangled
provers~\cite{NV18b}. This technique was introduced in the $\QMA$
protocol of~\cite{NV18a} in
order to perform energy measurements on the provers' state.

\paragraph{Implications and future work}
We believe that our work opens up several exciting directions for
further progress. For the complexity theorist, the most obvious future
direction is to obtain even stronger lower bounds on $\MIP^*$ by
iterating our protocol, as in~\cite{FJVY19}. At the most basic level,
we could imagine taking our $\MIP^*$ protocol for $\NEEXP$ and
performing a further layer of question compression and answer compression
on it, thus obtaining an $\MIP^*$ protocol with
logarithmic message size for $\NEEXP$, or, scaling up, an $\MIP^*$
protocol with polynomial message size for
$\ntime(2^{2^{2^{\poly(n)}}})$. By further iterating question
reduction and answer reduction $k$ times, we could obtain potentially
obtain lower bounds of
$\ntime(\underbrace{2^{\cdot^{\cdot^{n}}}}_{k})$ on $\MIP^*$ while
retaining a constant completeness-soundness gap. The main obstacle to
achieving such results is that the question compression procedure
developed in this paper is tailored to a special distribution of
questions (that of the $\MIP_{\exp}$ protocol for $\NEEXP$), whereas our
answer compression procedure produces protocols whose question
distribution is not of this form.

Assuming that this obstacle can be surmounted, we could aspire to a
more ambitious goal: a general ``gap-preserving compression procedure'' for some
subclass of $\MIP^*$ protocols, which we may label ``compressible'' protocols. Such a procedure would consist of a
Turing  machine that takes as input any compressible $\MIP^*$
protocol $\game$, and generates a new compressible protocol $\game'$ with exponentially
smaller message size, but approximately the same entangled value. It
was shown by~\cite{FJVY19} that the existence of such a compression
procedure for the set of \emph{all} $\MIP^*$ protocols would imply
that $\MIP^*$ contains the set of all computable languages,
and moreover that there exists an undecidable language in
$\MIP^*$. These consequences would continue to hold as long as the set
of compressible protocols contains a family of protocols solving
problems in $\NTIME(f(n))$, where $f(n)$ is a growing function of
$n$.

Showing that $\MIP^*$ contains undecidable languages would be
significant not just for complexity theory but also for the
foundations of quantum mechanics, as it would resolve a long-standing
open problem known as \emph{Tsirelson's problem}. Tsirelson's problem
asks whether two notions of quantum nonlocality are equivalent: the
\emph{tensor-product model}, in which two parties Alice and Bob each
act on their respective factor of a tensor-product Hilbert space
$\calH_{\reg{Alice}} \ot \calH_{\reg{Bob}}$, and the
\emph{commuting-operator model}, in which both parties act on a common
Hilbert space $\calH$, but the algebra of Alice's measurement
operators must commute with Bob's, and vice versa. It was shown by
Slofstra~\cite{Slo16} that in the ``zero-error'' setting, these two
models differ: there are quantum correlations which can be
\emph{exactly} achieved in the commuting-operator model but not in the
tensor product model. Surprisingly, showing that $\MIP^*$ contains
undecidable languages would imply that the two models are separated
even in the bounded-error setting: it would imply that there exist
correlations that can be achieved in the commuting-operator model that cannot
even be approximated (up to constant precision) in the tensor-product
model. The reason for this implication is that if the two models are
indistinguishable up to bounded error, then there exists a Turing machine
that can decide any language in $\MIP^*$ and is guaranteed to halt. 
This observation, which is folklore in the community, follows from the completeness of the non-commutative
sum of squares hierarchy for the commuting-operator model, as
documented in~\cite{FJVY19}. Showing a separation between the two
models would have significant mathematical consequences as well, as it
would yield a negative answer to the long-standing Connes' embedding problem.

In addition to these connections to complexity and mathematical
physics, we hope that our results will have applications in other
areas such as to delegated computation or quantum cryptography. In particular,
our use of introspection is reminiscent of ideas used in quantum
randomness expansion, where randomness generated by measuring EPR
pairs is used to generate questions for a nonlocal game. Could our
results improve on the infinite randomness expansion protocol of
Coudron and Yuen~\cite{CY14}?
\paragraph{Acknowledgements}
We thank Henry Yuen for many useful conversations about the idea of
``introspecting'' interactive proof protocols, which inspired us to
start this project. AN is also grateful to the Simons Institute for
the hospitable environment of the Summer Cluster on Challenges in
Quantum Computation during which these conversations where held. We thank Thomas Vidick for his guidance and advice.  We thank Ryan O'Donnell and
Ryan Williams for a succinct review of the literature on the
complexity of succinct (succinct) 3Sat and $\mathsf{NE(E)XP}$. We are
also grateful to Zhengfeng Ji for several useful discussions, especially regarding the
consequences of recursively composing our protocol with itself.

AN was partially supported by NSF grant CCF-1452616.  JW was partially
supported by ARO contract W911NF-17-1-0433. Both authors acknowledge funding provided by the Institute for Quantum Information and Matter, an NSF Physics Frontiers Center (NSF Grant PHY-1733907).



\section{Overview of our proof}  \label{sec:overview}

In this section we give a more detailed overview of the technical
parts of the paper.

\subsection{Basic quantum notation and qudits}
While the main body of the paper contains a more complete set of
quantum preliminaries in~\Cref{sec:prelims}, for the purposes of this
introduction we define some basic notation, aimed at the reader who is
familiar with the standard quantum computing formalism over qubits but
is less familiar with \emph{qudits}: quantum systems of dimension not equal to $2$.
In this paper, we make extensive use of such qudits: in particular, for a finite field $\F_Q$, we will
consider qudits of dimension $Q$, with a basis state $\ket{i}$ for
every element $i \in \F_Q$. Under tensor product, we obtain a basis
for the space of $M$ qudits of dimension $Q$ where each basis state
$\ket{x}$ corresponds to a vector $x \in \F_Q^M$.

The basic resource state used in our protocols will be the EPR state
over $2M$ qudits of dimension $Q$. The qudits are split into two
registers of $M$ qudits each, held by the two provers Alice and Bob,
respectively.
\[ \ket{\epr_Q^M} = \frac{1}{\sqrt{Q^M}} \sum_{x \in \F_Q^M} \ket{x}_{\reg{Alice}}
  \ot \ket{x}_{\reg{Bob}}. \]
This state is a \emph{maximally entangled} state between Alice and
Bob.

Acting on this state, we will ask the provers to perform measurements
from a special class called \emph{Pauli basis measurements}. To define
these over a general field $\F_Q$ requires the introduction of some
finite field technology, in particular the finite field trace
function. For simplicity, in this overview we will imagine that $Q$ is
prime, allowing the addition in $\F_Q$ to be identified with the
additive group $\mathbb{Z}_Q$, and simplifying the definition of the
Paulis; in the main body of the paper, we will work with $Q$ a power
of $2$. For a single qudit of dimension $Q$, the Pauli $X$ and $Z$
bases are the sets $\{\ket{\tau^X_u}\}_{u \in \F_Q}$ and $\{\ket{\tau^Z_u}\}_{u
  \in \F_Q}$ of vectors
\[ \ket{\tau^X_u} = \frac{1}{\sqrt{Q}}\sum_{x \in \F_Q}  \omega^{x u} \ket{x}, \qquad
  \ket{\tau^Z_u} = \ket{u}, \]
where $\omega = \exp(2\pi i/Q)$ is the $Q$-th root of unity. We denote
the projectors onto these basis states by $\tau^X_u$ and $\tau^Z_u$,
respectively. For a system of $M$ qudits, the Pauli
$X$ and $Z$ observables are a set of \emph{generalized observables}
indexed by elements of $\F_Q^M$: a generalized observable is a
Hermitian matrix with eigenvalues that are $Q$-th roots of unity.
They are given by
\[ X(v) =\sum_{u \in \F_Q^M} \omega^{u \cdot v} \tau^X_{u_1} \ot \dots
  \ot \tau^Z_{u_M}, 
  \qquad Z(v) = \sum_{u \in \F_Q^M} \omega^{u \cdot v} \tau^Z_{u_1}
  \ot \dots \ot \tau^Z_{u_M},\]
where $u_1, \dots, u_M$ are the components of the vector $u$, and
$u\cdot v$ is the dot product $\sum_{i=1}^{M} u_i \cdot
v_i$. Measuring a generalized observable means performing a projective
measurement onto the eigenvectors of the observable, with the
outcome $a$ corresponding to the eigenvector with eigenvalue
$\omega^{a}$.

\subsection{Our starting point: a classical interactive proof for
  $\NEEXP$}

We start with a classical multiprover interactive proof protocol for
$\NEEXP$. The equality $\MIP = \NEXP$ was originally shown by Babai,
Fortnow, and Lund~\cite{BFL91} using a protocol based on the \emph{multilinearity
  test}: the idea is that an exponentially-long witness for a problem
in $\NEXP$ is encoded in the truth-table of a multivariate polynomial
function over a finite field, which is linear in each of the variables
individually. The verifier is able to verify the witness
by evaluating the multilinear polynomial over appropriately chosen
points and subspaces. To scale up to $\NEEXP$, we use a much more
efficient version of the same idea, replacing the multilinearity test
with the \emph{low-degree test}, which works with multivariate
polynomials of low total degree. This more efficient construction
comes from the PCP literature. We give a relatively self-contained
presentation of the protocol in~\Cref{sec:classical-pcp}. For the
purposes of this overview, it is sufficient to know the following: any
problem in $\NEEXP$ can be reduced to satisfiability for a doubly exponentially long 3Sat
formula, succinctly encoded by a polynomial-sized circuit. (We refer
to this problem as $\succinctsquared$). Given a 3Sat formula
$\psi$, we would like the provers to prove to us that they have a
satisfying assignment $a$ to this formula. Instead of reading the
assignment directly, we will ask the provers to
encode their assignment as a multivariate polynomial $g_a: \F_Q^M \to
\F_Q$, where the number of variables $M$ and the finite field size $Q$
are appropriately chosen parameters, and return evaluations of this polynomial. To check that $a$
satisfies $\psi$, the verifier first uses a technique called
arithmetization to convert the
formula $\psi$ into a multivariate polynomial $g_{\psi}: \F_Q^{3M + k}
\to \F_Q$. The polynomial $g_{\psi}$ is chosen such that the assignment $a$
satisfies $\psi$ if and only if the expression
\[ \mathrm{sat}_{\psi,a}(x,b, w) := g_{\psi}(x,b,w) \cdot (g_a(x_1)
  -b_1)(g_a(x_2) - b_2)(g_a(x_3) - b_3) \]
is equal to $0$ at every point in a particular subset $H \subseteq \F_Q^{3M
  +k}$. Our classical protocol for $\NEEXP$ checks this condition:
\begin{infthm}[\Cref{sec:classical-pcp} in the body]
There exists a protocol $\game_0$ for $\mathsf{Succinct}$-$\mathsf{Succinct}$-$\mathsf{3Sat}$ (and hence
$\NEEXP$), where the verifier's questions to the provers are
constant-dimension subspaces of $\F_Q^{M}$, and the provers' responses
are evaluations of degree-$D$ $M$-variate polynomials on these
subspaces. The parameters $M, Q, D$ are all chosen to be $\exp(n)$,
and hence the question and answer lengths as well as the runtime of
the verifier in this protocol are
$\exp(n)$. 
\end{infthm}
The distribution over subspaces sent to the provers in $\game_0$ is
relatively simple, and in fact is independent of the instance of
$\succinctsquared$ being tested. For the purposes of this overview, the
reader can take the distribution over pairs of questions to be the
\emph{plane-point distribution} $\calD$. A pair $(\bs, \bu) \sim
\calD$ consists of a uniformly random affine plane $\bs \subseteq
\F_Q^M$, which is sent to Alice, and a uniformly random point $\bu
\in \bs$ which is sent to Bob. The full distribution over
questions in $\game_0$ is more complicated than this but the essential
ideas of our protocol will be illustrated by restricting to this case.
\subsection{Restricting the strategies: registers and compilers}
One of the main challenges in working with entangled provers is
showing soundness against general entangled strategies. An important
technique in this area is to force the provers to use a particular
state and class of measurements by playing a type of game known as a
\emph{self-test}.
\begin{infdef}
  A game $\game_{\mathrm{test}}$ is a \emph{self-test} for a state
  $\ket{\psi}$ and measurements $M^x$ if any strategy that succeeds in
  $\game_{\mathrm{test}}$ with probability $1- \eps$ must use a state
  $\ket{\psi'}$ and measurements $(M')^x$ that are
  $\delta(\eps)$-close, in the appropriate metric, to $\ket{\psi}$ and $M^x$.
\end{infdef}
Some of the earliest self-tests include the famous CHSH game, which
self-tests the Pauli $X$ and $Z$ operators on a single EPR pair (of
qubits). Self-testing technology has greatly advanced over the years,
and in this paper we design a highly efficient self-test based on the
low-degree test of~\cite{NV18a}.
\begin{infthm}[\Cref{thm:basis-test} in the body]
  The Pauli basis test $\paulistrat{n}{q}$ is a self-test for the
  state $\ket{\epr_q^n}$ and the Pauli $X$ and $Z$ basis
  measurements. This test sends the players questions of length
  $O(\log(n))$ and receives answers of length $O(\poly(n))$.
\end{infthm}
The Pauli $X$ and $Z$ measurements are ``complete'' measurements, and
as a consequence, there is no nontrivial measurement on a set $n$
qudits that can be measured jointly with both the Pauli $X$ and $Z$
measurements on those qudits. Using this property, we design a game
called the \emph{data-hiding game}, which certifies that a prover's
measurements act trivially on a specified set of qudits.
\begin{infthm}[\Cref{thm:data-hiding-game} in the body]
  The data-hiding game $\game_{\mathrm{hide}}$ is a self test for
  states $\ket{\psi} = \ket{\epr_q^n} \ot \ket{\rmaux}$ and
  measurements $M^x$ of the form $M^x = I \ot (M')^x_{\reg{aux}}$. It
  has questions of length $O(\log(n))$ and answers of length $O(\poly(n))$.
\end{infthm}
Together, the Pauli basis test and the data-hiding game allow us to restrict our analysis of our protocols to a
class of strategies we call \emph{register strategies}: strategies for
which the shared state is a collection of $\ell$ registers, each in an
EPR state, together with some auxiliary register:
\[ \ket{\psi} = \ket{\epr_{q_1}^{n_1}} \ot \dots \ot
  \ket{\epr_{q_\ell}^{n_\ell}} \ot \ket{\rmaux}, \]
and where the provers can be commanded to perform either (1) Pauli basis measurements on specified subsets of the
registers, or (2) measurements that do \emph{not} act on specified
subset of the EPR registers (but act on the auxiliary register or the
remaining EPR registers). We formalize this by designing a \emph{compiler},
which takes in a protocol $\game$ that is complete and sound for register
strategies, and produces a new protocol $\game'$ which is complete and
sound over all strategies.
\begin{infthm}[\Cref{theorem:one-to-zero-compiler}
  and~\Cref{theorem:two-to-one-compiler} in the body]\label{thm:compileroonie}
  Suppose $\game$ is a protocol for a computation problem for which completeness and soundness
  hold for register strategies, with $O(1)$ many registers of size
  $n$. (That is, for YES instances of the problem, there exists a
  register strategy achieving value $1$, and for NO instances, no
  register strategy achieves value greater than $1/2$). Let the
  questions in $\game$ be of length $Q$ and the answers be of length
  $A$. Then there exists a protocol $\game'$ which is complete and
  sound for general strategies, and for which the question length is $Q
  + \log(n)$ and the answer length is $A + \poly(n)$.
\end{infthm}
The compiled protocol $\game'$ either runs the original protocol
$\game$, or, with some probability, runs the Pauli basis test, the
data-hiding game, or a consistency test.

\subsection{Question reduction through introspection}

With our compiler in place, we have now given the verifier the power
to command the provers to perform Pauli basis measurements on a set of
EPR pairs.
We would like to use this to reduce the question size of the classical protocol $\game_0$ for $\NEEXP$ described above
from $\exp(n)$ to $\poly(n)$.
We will do so by forcing the provers, rather than the verifier, to sample the protocol's $\exp(n)$-length questions,
a technique we call ``introspection".
That is, we would like to force the provers to
sample pairs $(\bs, \bu)$ from the plane-vs-point distribution $\calD$, where $\bs$ is a uniformly random
affine plane in $\F_Q^M$, and $\bu$ a uniformly random point on
$\bs$.

To design a scheme to sample from this distribution, let us first fix
a representation of affine planes. We will represent an affine plane
by an \emph{intercept} $u \in \F_Q^M$ and two \emph{slopes} $v_1, v_2
\in \F_Q^M$. The plane given by $u, v_1, v_2$ is the set $s_u^v = \{ u +
\lambda_1 v_1 + \lambda_2 v_2 : \lambda_1, \lambda_2 \in \F_Q\}$. As a
first attempt, we may try the following scheme:
\begin{enumerate}
  \item Alice and Bob share three registers, each of which contains an
    EPR state, so their shared state is
    \[ \ket{\psi_0} = \ket{\epr_Q^M}_{R_0} \ot \ket{\epr_Q^M}_{R_1} \ot
      \ket{\epr_Q^M}_{R_2}. \]
  \item Alice first measures her half of registers $R_1$ and $R_2$ in
    the Pauli $Z$-basis, to obtain uniformly random outcomes $\bv_1,
    \bv_2$. The shared state is now
    \[ \ket{\psi_1} = \ket{\epr_Q^M}_{R_0} \ot
      (\ket{\bv_1}_{\reg{Alice}} \ot \ket{\bv_1}_{\reg{Bob}})_{R_1} \ot
      (\ket{\bv_2}_{\reg{Alice}} \ot \ket{\bv_2}_{\reg{Bob}})_{R_2}. \]
  \item Now, Alice and Bob both measure register $R_0$ in the Pauli
    $Z$-basis, both obtaining the same outcome $\bu$. The shared state
    is now
    \[ \ket{\psi_2} = (\ket{\bu}_{\reg{Alice}} \ot
      \ket{\bu}_{\reg{Bob}})_{R_0}    \ot  (\ket{\bv_1}_{\reg{Alice}} \ot \ket{\bv_1}_{\reg{Bob}})_{R_1} \ot
      (\ket{\bv_2}_{\reg{Alice}} \ot \ket{\bv_2}_{\reg{Bob}})_{R_2}. \]
    Alice sets her plane $\bs$ to be $s_{\bu}^{\bv}$ and Bob sets his point to be $\bu$.
  \end{enumerate}
  Indeed, the pair $(\bs, \bu)$ generated by this procedure is
  distributed according to $\calD$. However, there is a problem:
  through her measurement, Alice obtains additional side information,
  specifically the value of Bob's point $\bu$. Can we command Alice to
  erase the side information? In fact, we can, using the
  \emph{Heisenberg uncertainty principle}: if two observables
  anticommute, then measuring one completely destroys information
  about the other. Using this idea, we modify our protocol as follows:
  \begin{enumerate}
  \item As above.
  \item As above. At this point,  applying the definition of
    $\ket{\epr_Q^M}$, we can write the shared state as
    \[ \ket{\psi_1} \propto \sum_{u \in \F_Q^M} (\ket{u}_{\reg{Alice}}
      \ot \ket{u}_{\reg{Bob}})_{R_0} \ot
      (\ket{\bv_1}_{\reg{Alice}} \ot \ket{\bv_1}_{\reg{Bob}})_{R_1} \ot
      (\ket{\bv_2}_{\reg{Alice}} \ot \ket{\bv_2}_{\reg{Bob}})_{R_2}. \]
  \item {\bf New:} Intuitively, we would like Alice to be
    \emph{prevented} from measuring the component of the intercept
    along the directions $\bv_1, \bv_2$. This information would be
    obtained by measuring the observables\footnote{Strictly speaking,
      this is only true when $\bv_1 \cdot \bv_1 \neq 0$ and $\bv_2
      \cdot \bv_2 \neq 0$. A more rigorous treatment of this is given
      in~\Cref{sec:rotated-data-hiding}.} $Z(\bv_1), Z(\bv_2)$. To
    destroy it, we will ask Alice to measure the \emph{complementary}
    Pauli observables $X(\bv_1),
    X(\bv_2)$ on register $R_0$, obtaining outcomes $\balpha_1, \balpha_2
    \in \F_Q$. The shared state is now
    \begin{align*}
      \ket{\psi_2'} &\propto \sum_{u} \sum_{\lambda, \mu} \left(
                      \omega^{\balpha_1 \lambda + \balpha_2 \mu}
          \ket{\underbrace{u + \lambda \bv_1 + \mu \bv_2}_{u'}}_{\reg{Alice}}
        \ket{u}_{\reg{Bob}}\right)_{R_0} 
            (\ket{\bv_1}_{\reg{Alice}} \ot
                      \ket{\bv_1}_{\reg{Bob}})_{R_1} \\
      &\qquad\qquad\qquad\qquad\ot
        (\ket{\bv_2}_{\reg{Alice}} \ot \ket{\bv_2}_{\reg{Bob}})_{R_2}.
    \end{align*}
    where, as above, $\omega = \exp(2\pi i / Q)$ is a $Q$-th root of
    unity. Alice and Bob's state on $R_0$ is now a uniform 
    superposition over pairs $u, u'$ of points lying on the same
    affine subspace with slopes $\bv_1, \bv_2$.
  \item Alice and Bob both measure register $R_0$ in the $Z$ basis,
    obtaining outcomes $\bu$ and $\bu'$, respectively. The shared
    state is now
        \[ \ket{\psi_3'} = (\ket{\bu}_{\reg{Alice}} \ot
      \ket{\bu'}_{\reg{Bob}})_{R_0}    \ot  (\ket{\bv_1}_{\reg{Alice}} \ot \ket{\bv_1}_{\reg{Bob}})_{R_1} \ot
      (\ket{\bv_2}_{\reg{Alice}} \ot \ket{\bv_2}_{\reg{Bob}})_{R_2}. \]
Alice sets her
    plane to be $s^{\bv}_{\bu}$ and Bob sets his point to be
    $\bu'$. 
  \end{enumerate}
  Now, from the calculation performed above, it's clear that Bob's
  point $\bu'$ is uncorrelated with Alice's intercept $\bu$, apart
  from lying in the plane $s^{\bv}_{\bu}$, and hence there is no
  further information about Bob's point that Alice can learn by
  measuring her portion of the final state $\ket{\psi_3'}$. But Alice still obtains
  some additional information from her measurements along the way, in particular the
  outcomes $\alpha_1, \alpha_2$ of the $X$ measurements. And moreover,
  how can we certify that the $X$ measurements were performed
  correctly, since they are not Pauli basis measurements as given to
  us by the compiler? To answer these questions, we define a new game
  called the \emph{partial data-hiding game} (\Cref{thm:partial-data-hiding-game}), which certifies that
  Alice and Bob perform the steps described above and that no
  extra information is leaked. Building on this game, we can now
  design a protocol for $\NEEXP$ with small question size:

 \begin{infthm}[\Cref{thm:main-result-of-this-part} in the body]\label{thm:small-quest}
There is an $\MIP^*$ protocol $\game_{1}$ for $\NEEXP$ with questions
of length $\poly(n)$, and answers of length $\exp(n)$. The verifier can
generate the questions in $\poly(n)$ time but needs $\exp(n)$ time to
verify the answers.
\end{infthm}

\subsection{Answer reduction through PCP composition}
We have succeeded in obtaining a game with short questions, but the
answers are now exponentially long. In the last step, we will use
composition with a classical probabilistically checkable proof (PCP)
to delegate verification of the answers to the provers.

Schematically, the protocol $\game_{1}$ consists of the following
steps:
\begin{enumerate}
  \item The verifier sends Alice a question $\bx$ and Bob a question
    $\by$.
  \item Alice returns an (exponentially-long) answer $\bA$ and Bob an
    exponetially-long answer $\bB$.
  \item The verifier computes a verification predicate $V(\bx, \by,
    \bA, \bB)$ in exponential time.
  \end{enumerate}
  We would like to delegate the last step to the provers by asking
  them to compute a PCP proof that $V(\bx, \by,\bA,\bB) = 1$, which
  the verifier can check by communicating only polynomially many bits
  with the provers. However, we face
  an obstacle: Alice cannot know $\by$ and $\bB$, and neither can Bob
  know $\bx$ and $\bA$, and distributed PCPs (where neither party
  knows the entire assignment) are known to be
  impossible~\cite{ARW17}. To proceed, we will first have to modify
  $\game_{1}$ by \emph{oracularizing} it:
  \begin{enumerate}
    \item The verifier sends Alice the questions $\bx, \by$, and Bob
      either $\bx$ or $\by$, chosen uniformly at random.
    \item Alice returns exponentially-long answers $\bA, \bB$, and
      Bob returns an answer $\bC$.
    \item The verifier computes a verification predicate $V(\bx,
      \by, \bA, \bB)$ on Alice's questions and answers, and further
      checks that $\bA = \bC$, if Bob received $\bx$, or that $\bB =
      \bC$, if Bob received $\by$.
    \end{enumerate}
    The idea is that the new Alice simulates both Alice and Bob from
    the original protocol, and the new Bob certifies that the new
    Alice does not take advantage of her access to both questions to cheat.
    It is well-known that oracularization does not harm the soundness
    of interactive protocols, be they classical or quantum. However,
    in the quantum world, it is not necessarily the case that the
    oracularized protocol retains \emph{completeness}. This is because
    Alice and Bob may have been asked to perform non-compatible
    measurements in the original protocol, rendering it impossible for
    the new Alice to simulate both the original Alice and
    Bob. Fortunately for us, the honest strategy for protocol
    $\game_1$ is such that completeness under oracularization.
    
    Now that a single prover is in possession of all inputs to the
    verification predicate $V$, we can implement our idea of using a
    PCP proof. Classically, this idea is known as PCP
    \emph{composition}, and is extensively used in the PCP
    literature. In the quantum case, the requirement to maintain soundness against
    entanglement makes composition technically difficult, and we defer
    the details to \Cref{part:answer} of the paper. Once the
    composition is performed, we reach our main result.
    \begin{infthm}[\Cref{thm:main} in the body]\label{thm:whatevewre}
      There is an $\MIP^*$ protocol $\game_{2}$ for $\mathsf{Succinct}$-$\mathsf{Succinct}$-$\mathsf{3Sat}$
      (and hence for $\NEEXP$) with
      question size, answer, and verifier runtime $\poly(n)$.
    \end{infthm}
    
\subsection{Organization}

The paper is organized into five parts.  The first part is the introduction and this overview. The remaining parts are organized as follows.
\begin{itemize}
\item \Cref{part:prelims} contains two sections of preliminaries, one containing the classical background and another the quantum background.
\item \Cref{part:stack} contains the register compiler, i.e.\ the proof of \Cref{thm:compileroonie}.
	This involves designing the Pauli basis test (\Cref{sec:self-test-pauli}) and the data hiding test (\Cref{sec:data-hiding-layer}).
	\Cref{sec:register-overview} serves as an introduction to this part and contains more details on the organization.
\item \Cref{part:neexp} contains the ``introspection" question reduction step, i.e.\ the proof of \Cref{thm:small-quest}.
	To begin, we sketch the classical $\mip$ protocol for $\succinct$ in \Cref{sec:classical-pcp}.
	Then we give the introspected, i.e.\ ``big", low-degree test in \Cref{sec:big-degree},
	and finish by giving the entire small-question $\neexp$ protocol in \Cref{sec:big-neexp-protocol}.
	\Cref{sec:intersection} contains a test necessary for the protocol called the ``intersecting lines test".
	It allows us carry over the results of the low-degree test from one register to another.
\item \Cref{part:answer} contains the answer reduction, i.e.\ the proof of \Cref{thm:whatevewre}.
	The construction involves composing PCP protocols with error-correcting codes, and so \Cref{sec:error} surveys the properties we need of an error-correcting code.
	Finally, \Cref{sec:answer-reduction} contains the actual proof of the answer reduction step.
\end{itemize}


\part{Preliminaries}

\label{part:prelims}

\newcommand{\tstep}{\mathrm{twostep}}
\newcommand{\unif}{\mathrm{uniform}}
\newcommand{\tv}{\mathrm{TV}}
\renewcommand{\ol}{\overline}

\section{Classical preliminaries}  \label{sec:prelims}

\subsection{Finite fields and polynomials}
In this section we review some basic facts about finite fields and
polynomials over them. These facts can be found in a standard
reference such as~\cite{MBG+13}. Let $p$ be a prime and $q = p^t$ be a prime power. We denote by $\F_p$
and $\F_q$ the finite fields with $p$ and $q$ elements,
respectively. The field $\F_p$ is called the \emph{base field} or
\emph{prime subfield} of $\F_q$. The larger field $\F_q$ can be viewed as a
$t$-dimensional vector space $\F_p^{t}$ over the smaller field.
We define the trace $\tr: \F_q \to \F_p$ by
\[ \tr[a]  = \sum_{\ell=0}^{t-1} a^{p^\ell}. \]
The trace is a linear map under linear combinations with coefficients
drawn from $\F_p$. 

A \emph{basis} for
$\F_q$ over $\F_p$ consists of $k$ elements $\{\alpha_1, \dots,
\alpha_k\}$, such that any element $u \in \F_q$ can be written as a
linear combination
\[ u = \sum_{i=1}^k c_i \alpha_i, \]
where the coefficients $c_i \in \F_p$. Two bases $\{\alpha_i\}$ and
$\{\beta_i\}$ are \emph{dual} bases if $\tr[\alpha_i \beta_j]
= \delta_{ij}$, where $\delta_{ij}$ is the Kronecker delta function.

\paragraph{Fields of characteristic $2$}:
When $p =2$ (i.e. $q$ is even), several useful properties hold. Most
importantly for us, the field $\F_q$ has a \emph{self-dual} basis: that is, there exists a basis $\{\alpha_i\}$ such
that $\tr[\alpha_i \alpha_j] =
\delta_{ij}$~\cite[Theorem 1.9]{MBG+13}. This means that
given a field element $u = \sum_i c_i \alpha_i$, we can recover the
coefficient $c_j$ by the expression $c_j = \tr[u \alpha_j]$.

\paragraph{Fourier identities}
Below, we give two useful identities for simplifying Fourier sums over
finite fields. We set $\omega = e^{2\pi i / p}$ to be a $p$-th root of unity.

\begin{fact}\label{fact:averages-to-zero}
$\E_{\bu \in \F_q} \omega^{\tr[\bu \cdot a]} = 0$ if $a$ is nonzero.
\end{fact}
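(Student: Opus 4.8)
The claim is the standard orthogonality relation for additive characters of a finite field, specialized to the character $\bu \mapsto \omega^{\tr[\bu \cdot a]}$ where $\omega = e^{2\pi i/p}$. The plan is to reduce the sum over $\F_q$ to a sum over the base field $\F_p$ by exploiting the fact that the map $\bu \mapsto \tr[\bu a]$ is $\F_p$-linear and, when $a \neq 0$, surjective onto $\F_p$. First I would fix a basis $\{\alpha_1,\dots,\alpha_t\}$ for $\F_q$ over $\F_p$, so that $\bu = \sum_i \bc_i \alpha_i$ ranges over $\F_q$ exactly as $(\bc_1,\dots,\bc_t)$ ranges over $\F_p^t$; by linearity of the trace, $\tr[\bu a] = \sum_i \bc_i\,\tr[\alpha_i a]$. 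Since $a \neq 0$ and the trace form $(x,y)\mapsto \tr[xy]$ is nondegenerate, not all of the $\tr[\alpha_i a]$ can vanish — say $\tr[\alpha_{i_0} a] = c \neq 0$.

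Then I would split the average, summing first over the coordinate $\bc_{i_0}$ with the other coordinates held fixed. The inner sum is $\E_{\bc_{i_0} \in \F_p}\, \omega^{\bc_{i_0} c + (\text{const})}$, and since $c \neq 0$ in $\F_p$ the map $\bc_{i_0} \mapsto \bc_{i_0} c$ is a bijection of $\F_p$, so this inner average is $\omega^{(\text{const})} \cdot \E_{b \in \F_p} \omega^{b} = 0$ because $\sum_{b \in \F_p} \omega^b$ is a sum of all $p$-th roots of unity, which vanishes for $p \geq 2$. Averaging the inner zero over the remaining coordinates gives $0$, which is the claim.

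Alternatively — and perhaps more cleanly — I would phrase this purely in terms of the image of the trace map: the set $T := \{\tr[\bu a] : \bu \in \F_q\}$ is an $\F_p$-subspace of $\F_p$, hence either $\{0\}$ or all of $\F_p$; it is not $\{0\}$ since $a\neq 0$ and the trace form is nondegenerate, so $T = \F_p$, and moreover each fiber of $\bu \mapsto \tr[\bu a]$ has the same size $q/p$. Therefore $\E_{\bu \in \F_q}\omega^{\tr[\bu a]} = \frac{1}{p}\sum_{b \in \F_p}\omega^b = 0$. There is no real obstacle here; the only thing that needs care is the appeal to nondegeneracy of the trace form (equivalently, that the trace $\F_q \to \F_p$ is not identically zero and the pairing $\tr[xy]$ is nondegenerate), which is a standard fact about finite fields and is already implicitly invoked in the excerpt's discussion of dual and self-dual bases. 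I would simply cite that, or note that it follows from the existence of a dual basis mentioned just above.
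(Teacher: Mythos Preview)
Your argument is correct; both the basis-coordinate reduction and the fiber-counting variant are valid, and the appeal to nondegeneracy of the trace form is exactly the right ingredient. However, the paper takes a different and somewhat slicker route: a translation-invariance (shift) argument. It observes that since $a \neq 0$, nondegeneracy gives some $y$ with $\tr[ay] \neq 0$; then substituting $\bu \mapsto \bu + y$ leaves the average $\sigma$ unchanged but multiplies it by the phase $\omega^{\tr[ay]} \neq 1$, forcing $\sigma = 0$. Both proofs hinge on the same nondegeneracy fact, but the paper's version avoids choosing a basis or analyzing fibers and is a one-line algebraic manipulation; your version, by contrast, is more explicit about why the character is nontrivial and makes the reduction to the base-field geometric series visible. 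Either is perfectly adequate for this standard fact.
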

\begin{proof}
  If $a \neq 0$, then there must exist some nonzero $y \in \F_q$ such that
  $\tr[ay] \neq 0$. Let the value of the expectation we want to compute be
  denoted by $\sigma$. Then we have
  \begin{align*}
    \sigma &= \E_{\bu \in \F_q} \omega^{\tr[a \bu]} \\
           &= \E_{\bu \in \F_q} \omega^{\tr[a(\bu + y)]} \\
           &= \omega^{\tr[ay]} \E_{\bu \in \F_q}\omega^{\tr[a\bu]} \\
           &= \omega^{\tr[ay]} \sigma,
  \end{align*}
    and thus $\sigma = 0$.
  \end{proof}

  \begin{fact}\label{fact:averages-to-zero-subspace}
    Let $V$ be a subspace of $\F_q^n$. Then $\E_{\bu \sim V}
    \omega^{\tr[\langle \bu, a \rangle]} = 0$ if $a \notin V^\perp$.
  \end{fact}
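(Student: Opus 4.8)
The plan is to reduce \Cref{fact:averages-to-zero-subspace} to \Cref{fact:averages-to-zero} by choosing a convenient linear parametrization of the subspace $V$. Concretely, let $k = \dim V$ and pick a basis $w_1, \dots, w_k$ of $V$, so that a uniformly random $\bu \sim V$ can be written as $\bu = \sum_{i=1}^k \blambda_i w_i$ where $\blambda = (\blambda_1, \dots, \blambda_k)$ is uniform over $\F_q^k$. Then
\[
\E_{\bu \sim V} \omega^{\tr[\langle \bu, a\rangle]}
= \E_{\blambda \in \F_q^k} \omega^{\tr[\sum_{i=1}^k \blambda_i \langle w_i, a\rangle]}
= \prod_{i=1}^k \E_{\blambda_i \in \F_q} \omega^{\tr[\blambda_i \langle w_i, a\rangle]},
\]
where the last equality uses that the trace is $\F_p$-linear (indeed $\F_q$-linear in the sense needed: $\tr$ distributes over the sum, and the $\blambda_i$ are independent) and that $\omega^{x+y} = \omega^x \omega^y$. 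I would justify the factorization by noting the exponent splits as a sum of terms each depending on a single independent coordinate $\blambda_i$.

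Next I would invoke \Cref{fact:averages-to-zero} on each factor: $\E_{\blambda_i \in \F_q}\omega^{\tr[\blambda_i \langle w_i, a\rangle]} = 0$ whenever $\langle w_i, a\rangle \neq 0$. Hence the whole product is $0$ as soon as there exists some $i$ with $\langle w_i, a \rangle \neq 0$, i.e.\ as soon as $a$ is not orthogonal to every basis vector $w_i$ of $V$. But $a$ is orthogonal to all of $w_1, \dots, w_k$ if and only if $a \in V^\perp$ (since the $w_i$ span $V$ and orthogonality is linear). Therefore $a \notin V^\perp$ forces at least one factor to vanish, giving $\E_{\bu \sim V}\omega^{\tr[\langle \bu, a\rangle]} = 0$, as desired.

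There is no real obstacle here; the only points requiring a modicum of care are (i) making sure the parametrization $\bu = \sum_i \blambda_i w_i$ gives the uniform distribution on $V$ when $\blambda$ is uniform on $\F_q^k$ — this holds because $\{w_i\}$ is a basis, so the map $\blambda \mapsto \bu$ is a bijection $\F_q^k \to V$ — and (ii) checking that $\tr$ is additive so the exponential factorizes across the independent coordinates, which is immediate from the definition of the trace as a sum of Frobenius powers. I would keep the write-up to a few lines, mirroring the style of the proof of \Cref{fact:averages-to-zero}.
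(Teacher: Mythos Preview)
Your argument is correct, but it takes a different route from the paper. You parametrize $V$ by a basis and factor the expectation into a product of one-variable averages, then kill one factor via \Cref{fact:averages-to-zero}. The paper instead repeats the translation-invariance trick from the proof of \Cref{fact:averages-to-zero} directly: since $a \notin V^\perp$, one can find $y \in V$ with $\tr[\langle a, y\rangle] \neq 0$, and shifting $\bu \mapsto \bu + y$ (which preserves the uniform distribution on $V$) gives $\sigma = \omega^{\tr[\langle a, y\rangle]}\sigma$, hence $\sigma = 0$. Your approach is a touch more explicit about where the one-dimensional fact is invoked; the paper's shift argument is coordinate-free and a line shorter. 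Both are standard and equally acceptable here.
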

  \begin{proof}
    The idea is the same as the proof of the previous fact. Suppose $a
    \notin V^\perp$. Then there exists some nonzero $y \in \F_q^n$
    such that $\tr[\langle a, y \rangle] \neq 0$. Letting the value of
    the expectation we wish to compute be denoted $\sigma$, we have
    \begin{align*}
      \sigma &= \E_{\bu \sim V} \omega^{\tr[\langle \bu, a \rangle]}
      \\
             &= \E_{\bu \sim V} \omega^{\tr[ \langle \bu + y,
               a\rangle]} \\
             &= \omega^{\tr[\langle a, y\rangle]} \sigma,
    \end{align*}
    and hence $\sigma $ must vanish.
  \end{proof}

\subsection{Two-player one-round games and $\MIP$}
A two-prover nonlocal game is an interaction between a verifier and
two noncommunicating provers, in which the verifier samples a pair of random
questions and sends them to the provers, receives a pair of answers,
and decides whether to accept or reject based on the questions and answers. In the literature, a game is usually
taken to be described by the verifier's distribution over question
pairs, together with a table describing the verifier's behavior
for all possible choices of questions and answers. For our purposes,
it will be more convenient to work with \emph{uniformly generated} families of
games, which are specified by Turing machines that sample the
questions and decide whether to accept or reject given the questions
and answers.
\begin{definition}[Two-player one-round uniform game family]\label{def:two-player-one-round}
A \emph{two-prover one-round game uniform game family} $\game$
is an interaction between a verifier and two provers, Alice and Bob.
The verifier $V = (\mathrm{Alg_{\mathrm{Q}}}, \mathrm{Alg_{\mathrm{A}}})$
consists of a ``question" randomized Turing machine $\mathrm{Alg_{\mathrm{Q}}}$
and an ``answer" deterministic Turing machine $\mathrm{Alg_{\mathrm{A}}}$.
Given an input string $\mathsf{input}$, the verifier samples two questions
$(\bx_0, \bx_1) \sim \mathrm{Alg_{\mathrm{Q}}}(\mathsf{input})$ and distributes~$\bx_0$ to Alice and~$\bx_1$ to Bob.
They reply with answers~$\ba_0$ and~$\ba_1$, respectively, and the verifier accepts if
$\mathrm{Alg_{\mathrm{A}}}(\mathsf{input}, \bx_0, \bx_1, \ba_0, \ba_1) = 1$.
A strategy for Alice and Bob is said to be \emph{classical} if they are allowed shared randomness but no shared quantum resources.
The \emph{value} of Alice and Bob's strategy is simply the probability that the verifier accepts,
and the \emph{classical value} of the game is the maximum value of any classical strategy.
We write $\qlength{\game}$ for the maximum bit length
of the questions as a function of the input $\mathsf{input}$, and similarly
$\alength{\game}$ for the maximum bit length of the answers,
$\qtime{\game}$ for the maximum running time of $\mathrm{Alg}_{\mathrm{Q}}$,
and $\atime{\game}$ for the maximum running time of
$\mathrm{Alg}_{\mathrm{A}}$. Often we will not explicitly write the
dependence of these quantities on $\mathsf{input}$.
\end{definition}

\begin{definition}[Multiprover interactive proofs]\label{def:mip-protocols}
A \emph{2-player 1-round multiprover interactive proof protocol} is a uniform
game family~$\game$ as in \Cref{def:two-player-one-round}. For
parameters $0 < s < c \leq 1$, we say that
the protocol $\game$ decides the language $L$ with completeness
$c$ and soundness $s$ if the following three conditions are true.
\begin{itemize}
\item[$\circ$] (Completeness) Suppose $\mathsf{input} \in L$.  Then there is a classical strategy for~$\game$ with value at least~$c$.
\item[$\circ$] (Soundness) Suppose $\mathsf{input} \notin L$.  Then every classical strategy for~$\game$ has value at most~$s$.
\item[$\circ$] All of $\qlength{\game}$, $\alength{\game}$,
  $\qtime{\game}$, and $\atime{\game}$ are $\poly(n)$ where $n$ is
  the bit length of $\mathsf{input}$.
\end{itemize}
The class $\MIP_{c, s}$ is the set of all languages that can be
decided by multiprover interactive proof protocols with the parameters $c,s$.

If $c - s$ is a constant, then we will suppress the dependence on them when writing $\MIP$ and just say that $L \in \MIP$.
Here, ``$c$" is referred to as the \emph{completeness} and ``$s$" is referred to as the \emph{soundness}.
We will typically deal with the case when $c = 1$ and $s = 1-\eps$, where $\eps > 0$ is a small constant.

\end{definition}
In this definition of $\MIP$, the parameters $\qlength{\game}, \alength{\game},
\qtime{\game}, \atime{\game}$ are required to be polynomial in the
input length $n$. However, in this paper, several of the intermediate
results we achieve are protocols where these parameters scale
superpolynomially (indeed, even exponentially or worse) in $n$. In
these cases, we will explicitly indicate the dependence of these
parameters on $n$.

\subsection{Low-degree code}

Let $q$ be a prime power and $h\leq q$ be an integer.  
Let $H$ be a subset of $\F_q$ of size~$h$.
For $n \geq 0$, let $x \in H^n$.
The \emph{indicator function of~$x$ over $H^n$}
is the polynomial with inputs $y \in \F_q^m$ defined as
\begin{equation*}
\indicator{H}{x}{y} := \frac{\prod_{i=1}^m \prod_{b \in H, b \neq x_i}
  (y_i - b)}{\prod_{i=1}^m \prod_{b \in H, b \neq x_i} (x_i - b)}.
\end{equation*}
There are two properties of this polynomial that we will need:
\begin{enumerate}
\itemsep -.5pt
\item[(i)] that it is low-degree, i.e. a degree-$m(h-1)$ polynomial,
\item[(ii)] that for any $x, y \in H^m$, $\indicator{H}{x}{y} = 1$ if and only if $x = y$,
and otherwise $\indicator{H}{x}{y}= 0$.
\end{enumerate}
Using this, we can define the low-degree code.

\begin{definition}[Low-degree encoding]
Let $|\calS| \leq h^m$, and let $\pi:\calS \rightarrow H^m$ be an injection.
Then the \emph{low-degree encoding} (sometimes also called the \emph{Reed-Muller encoding})
of a string $a \in \F_q^{\calS}$ is the polynomial $g_a :\F_q^m \rightarrow \F_q$ defined as
\begin{equation*}
g_a(x) := \sum_{i\in\calS} a_i \cdot \indicator{H}{\pi(i)}{x}.
\end{equation*}
\end{definition}
By the properties of the indicator function above,
(i) $g_a$ is a degree-$m(h-1)$ polynomial,
and (ii)~$g_a(\pi(i)) = a_i$ for all $i \in \calS$.
We will typically, though not always, take $\calS = [n]$.
Given an error-correcting code, there are two key properties we care about:
the rate and the distance.
The rate of the low-degree code is $n / q^m$.
As for the distance, we can estimate it with the following lemma.

\begin{lemma}[Schwartz-Zippel lemma~\cite{Sch80,Zip79}]\label{lem:schwartz-zippel}
Let $f, g$ be two unequal $m$-variate degree-$d$ polynomials over $\F_q$. Then
\begin{equation*}
\Pr_{\bx \sim \F_q^m}[f(\bx) = g(\bx)] \leq d/q.
\end{equation*}
\end{lemma}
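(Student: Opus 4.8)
The plan is to reduce the two-polynomial statement to a single-polynomial one and then induct on the number of variables $m$. Setting $p := f - g$, the hypothesis $f \neq g$ says exactly that $p$ is a nonzero $m$-variate polynomial over $\F_q$ of total degree at most $d$, and the claim becomes $\Pr_{\bx \sim \F_q^m}[\,p(\bx) = 0\,] \leq d/q$. The base case $m = 1$ is the familiar fact that a nonzero univariate polynomial of degree at most $d$ over a field has at most $d$ roots in $\F_q$; dividing by $q = |\F_q|$ gives the bound.

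For the inductive step I would expand $p$ in powers of the first variable, writing $p(x_1, \dots, x_m) = \sum_{i=0}^{d} x_1^{\,i}\, p_i(x_2, \dots, x_m)$ with $\deg p_i \leq d - i$, and let $k$ be the largest index for which $p_k$ is not identically zero. Applying the induction hypothesis to the $(m-1)$-variate polynomial $p_k$, whose degree is at most $d - k$, the ``bad'' event $p_k(\bx_2, \dots, \bx_m) = 0$ occurs with probability at most $(d-k)/q$. Conditioned on its complement, the coefficient of $x_1^{\,k}$ in $p(\,\cdot\,, \bx_2, \dots, \bx_m)$ is nonzero, so this is a nonzero univariate polynomial in $x_1$ of degree exactly $k$; it therefore vanishes for at most $k$ of the $q$ possible values of $\bx_1$, and since $\bx_1$ is uniform and independent of $(\bx_2, \dots, \bx_m)$, the conditional probability that $p(\bx) = 0$ is at most $k/q$. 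Combining the two estimates by the law of total probability gives $\Pr[\,p(\bx) = 0\,] \leq (d-k)/q + k/q = d/q$, closing the induction.

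This is a classical argument and I do not expect any genuine obstacle. The only points that need a moment's care are the degree bookkeeping --- that $\deg p_i \leq d - i$, so that the induction hypothesis is available for $p_k$ with parameter $d-k$ and the two contributions telescope to exactly $d/q$ --- and the use of independence of the coordinates $\bx_1, \dots, \bx_m$, which is what licenses conditioning on the value of $(\bx_2, \dots, \bx_m)$ and then treating $\bx_1$ as a fresh uniform sample.
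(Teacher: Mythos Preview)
Your argument is correct and is the standard inductive proof of Schwartz--Zippel. The paper does not give its own proof of this lemma; it simply states it with citations to \cite{Sch80,Zip79} and uses it as a black box, so there is nothing to compare against beyond noting that your proof is the classical one.
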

As a result, the low-degree encoding has relative distance $m(h-1) / q$.
In a typical application, we would like a code with large rate and distance.
To achieve this, we will often use the following ``rule of thumb" setting of parameters:
\begin{equation}\label{eq:parameters}
h = \Theta(\log(n)),
\qquad
m = \Theta\left(\frac{\log(n)}{\log\log(n)}\right),
\qquad
q = \mathrm{polylog}(n).
\end{equation}
This gives a code with rate $1/\mathrm{poly}(n)$ and distance $o(1)$.
The polynomials involved are degree $d = \Theta(\log(n)^2/\log\log(n))$.

\subsection{A canonical low-degree encoding}

The low-degree encoding affords us some flexibility when choosing the parameters and the injection;
however, for our application we will have to choose these with care,
because each of our uses of the low-degree code requires that the injection~$\pi$ be efficiently computable.
In this section, we give a simple, canonical choice for the subset~$H$ and the injection~$\pi$ so that this is true.

\begin{definition}\label{def:admissible-prelims}
We say that
$n$,
$h= 2^{t_1}$,
$q = 2^{t_2}$,
and $m$
are \emph{admissible parameters} if $t_1 \leq t_2$ and $h^m\geq n$.
\end{definition}

The following definition gives the canonical encoding.

\begin{definition}[Canonical low-degree encoding]\label{def:canonical-low-degree}
Let $n$, $h = 2^{t_1}$, $q= 2^{t_2}$, and $m$ be admissible parameters.
Set $\ell = t_1 \cdot m$.
The \emph{canonical low-degree code} is defined as follows.
\begin{itemize}
\item[(i)] Let $e_1, \ldots, e_{t_2}$ be a self-dual basis for~$\F_q$ over~$\F_2$. Then we set $H$ to be the subset
	\begin{equation*}
		H:= H_{t_1, t_2} = \{b_1 \cdot e_1 + \cdots + b_{t_1} \cdot e_{t_1} \mid b_1, \ldots, b_{t_1} \in \F_2\}.
	\end{equation*}
	As desired, $|H| = h$.
\item[(ii)] Let $\sigma:= \sigma_{t_1, t_2}: \{0, 1\}^{t_1} \rightarrow H_{t_1, t_2}$ be the bijection $\sigma(b_1, \ldots, b_{t_1}) = b_1 \cdot e_1 + \cdots + b_{t_1} \cdot e_{t_1}$.
	From this, we can construct a bijection $\sigma_{\ell, t_1, t_2} : \{0, 1\}^\ell \rightarrow H^m$ by setting
	\begin{equation*}
		\sigma_{\ell, t_1, t_2}(b_1, \ldots, b_\ell) = (\sigma(b_1, \ldots, b_{t_1}), \sigma(b_{t_1+1}, \ldots, b_{2t}), \ldots, \sigma(b_{\ell-t_1+1}, \ldots, b_{\ell})).
	\end{equation*}
\item[(iii)] Given an index $i \in [n]$, write $\mathrm{bin}_\ell(i)$ for its $\ell$-digit binary encoding. 
	Then we define the injection $\pi:=\pi_{\ell, t_1, t_2}:[n] \rightarrow H^m$ as $\pi(i) = \sigma_{\ell, t_1, t_2}(\mathrm{bin}_\ell(i))$.
\end{itemize}
\end{definition}

The following proposition gives the time complexity of the canonical low-degree encoding.

\begin{proposition}\label{prop:canonical-time}
The bijection~$\sigma_{\ell, t_1, t_2}$ and the injection~$\pi:=\pi_{\ell, t_1, t_2}$ are both computable in time $m \cdot \mathrm{poly log}(q)$.
As a result, given a string $a \in \F_q^n$ and a point $x \in \F_q^m$, the value $g_a(x)$ takes time $\poly(n, m, q)$ to compute.
\end{proposition}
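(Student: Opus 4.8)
The plan is to reduce the whole statement to counting arithmetic operations over $\F_q$, each of which costs $\mathrm{polylog}(q)$ time (addition, multiplication, and inversion in $\F_{2^{t_2}}$ are all $\poly(\log q)$, via schoolbook polynomial arithmetic modulo a fixed irreducible). The one genuinely non-routine ingredient is the \emph{one-time} construction of the self-dual basis $e_1, \dots, e_{t_2}$ of $\F_q$ over $\F_2$ from \Cref{def:canonical-low-degree}; I would invoke the fact — existence of which is \cite[Theorem 1.9]{MBG+13}, and whose computation is a standard field-theory routine — that such a basis can be produced deterministically in time $\mathrm{polylog}(q)$ (one can, e.g., build a polynomial basis from a low-weight irreducible for $\F_{2^{t_2}}$, invert the trace form to get its dual, and then symmetrize). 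Since $t_1, t_2$ are fixed by the choice of parameters, this step can in any case be absorbed into preprocessing. Everything else is bookkeeping.

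For the first claim I would first note that evaluating $\sigma = \sigma_{t_1,t_2}$ on a length-$t_1$ block is a sum of at most $t_1 \le t_2$ elements of $\F_q$, hence costs $\mathrm{polylog}(q)$; then $\sigma_{\ell,t_1,t_2}$, which just partitions its $\ell = t_1 m$ input bits into $m$ blocks and applies $\sigma$ to each, costs $m \cdot \mathrm{polylog}(q)$; and $\pi_{\ell,t_1,t_2}$ adds only the cost of forming $\mathrm{bin}_\ell(i)$ — which is well defined because admissibility forces $n \le h^m = 2^\ell$ — so it too runs in $m \cdot \mathrm{polylog}(q)$ time. For the second claim I would expand $g_a(x) = \sum_{i \in \calS} a_i \cdot \indicator{H}{\pi(i)}{x}$ with $|\calS| \le n$, observe that the set $H$ can be listed in time $h \cdot \mathrm{polylog}(q)$ by running $\sigma$ on all $h = 2^{t_1}$ bit-strings, and then bound a single summand: the numerator and denominator of $\indicator{H}{\pi(i)}{x}$, evaluated at the point $x$, are each products of at most $mh$ field elements, so one indicator evaluation — together with one division and the $m\cdot\mathrm{polylog}(q)$ cost of computing $\pi(i)$ — takes $O(mh)\cdot\mathrm{polylog}(q)$ time. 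Multiplying by $a_i$ and summing over the at most $n$ indices yields $g_a(x)$ in $O(nmh)\cdot\mathrm{polylog}(q)$ time; since admissibility gives $h = 2^{t_1} \le 2^{t_2} = q$, this collapses to $\poly(n, m, q)$.

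The step I expect to require the most thought is the efficient, deterministic construction of the self-dual basis in $\mathrm{polylog}(q)$ time: this is the only place where something beyond operation-counting is needed, and it is the natural candidate for either a citation or a short self-contained lemma. All remaining parts are routine; the one point I would be careful about is keeping $h \le q$ in play so that the final running time collapses to $\poly(n,m,q)$ rather than carrying a separate dependence on $h$.
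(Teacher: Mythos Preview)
The paper does not supply a proof of this proposition; it is stated without justification, presumably because the authors regard it as routine. Your proposal is correct and fills in exactly the details one would expect: counting field operations for $\sigma$, $\sigma_{\ell,t_1,t_2}$, and $\pi$, and then bounding the cost of evaluating $g_a(x)$ term by term via the indicator functions. You are also right that the one step deserving a sentence of justification is the efficient construction of the self-dual basis $e_1,\dots,e_{t_2}$; the rest is bookkeeping, and your use of $h \le q$ (from admissibility) to collapse the bound to $\poly(n,m,q)$ is exactly what is needed.
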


\subsection{Low-degree testing}

\begin{definition}[Surface-versus-point test]
The \emph{surface-versus-point low-degree test with parameters} $m$, $d$, $q$ (a prime power), and~$k$, denoted $\game_{\mathrm{Surface}}(m, d, q, k)$, is defined as follows.
Let $\bv_1, \ldots,\bv_k$ be~$k$ uniformly random vectors in $\F_q^m$, and
let $\bs$ be a uniformly random affine subspace parallel to
$\mathrm{span}\{\bv_1, \ldots, \bv_k\}$ (that is, $\bs$ is the set $\{\bw + \lambda_1 \bv_1 +\cdots
+\lambda_k \bv_k : \lambda_1, \ldots, \lambda_k \in \F_q\}$ for a uniformly
random $\bw$),
and let $\bu$ be a uniformly random point on~$\bs$.
Given these, the test is performed as follows.
\begin{itemize}
\itemsep -.5pt
\item[$\circ$] The vectors $\bv_1, \ldots, \bv_k$ and the surface $\bs$ are given to Alice, who responds with a degree-$d$ polynomial $\boldf: \bs \rightarrow \F_q$.
\item[$\circ$] The point $\bu$ is given to Bob, who responds with a number $\bb \in \F_q$.
\end{itemize}
Alice and Bob pass the test if $\boldf(\bu) = \bb$.
\end{definition}

\begin{remark}
Let us remark briefly on the encodings used in this test.
A surface~$s$ with directions $v_1, \ldots, v_k$
is encoded by the string $(u, w_1, \ldots, w_k) \in \F_q^{(k+1) n}$.
Here, $u$ is the lexicographically minimum point in~$s$,
and $w_1, \ldots, w_k$ are the rows of the matrix produced by taking the matrix with rows $v_1, \ldots, v_k$
and transforming it to reduced row echelon form.
We note that given $v_1, \ldots, v_k$ and a point $u \in s$,
this encoding can be produced in time $\poly(n, k, \log(q))$.

A function $f:s\rightarrow \F_q$ is \emph{a degree-$d$ polynomial on~$s$}
if there exists a degree-$d$ $k$-variate polynomial $f':\F_q^k \rightarrow \F_q$
such that $f'(\lambda_1, \ldots, \lambda_k) = f(u + \lambda_1 w_1 + \cdots + \lambda_k w_k)$.
When~$s$ is already known, we can encode~$f$ by specifying~$f'$,
which involves writing out its $d[k] := \binom{d+k}{k}$ coefficients in some arbitrary but fixed order.
\end{remark}

We note that this definition of the surface-versus-point test differs slightly from the standard
definition of the surface-versus-point test in two respects. First, we
do not require that the vectors $\bv_1, \ldots, \bv_k$ be linearly independent
or even nonzero, which implies that there is a
\begin{equation*}
1-
\left(1 -\frac{1}{q^m}\right)
\left(1 -\frac{q}{q^m}\right)
\cdots
\left(1 -\frac{q^{k-1}}{q^m}\right)
\leq \frac{q^k}{q^m}
\end{equation*}
 chance that $\bs$
is less than $k$-dimensional. Second, we send the vectors $\bv_1,
\ldots, \bv_k$ to Alice in addition to the description of the surface $\bs$. It
is not hard to see that these two modifications do not asymptotically harm the
soundness guarantee obtained for the standard plane-versus-point test shown by Raz and
Safra~\cite{RS97}, which we restate here.

\begin{theorem}[\cite{RS97}]\label{thm:raz-safra}
There exist absolute constants $c, c' > 0$ such that the following holds.
Suppose Alice and Bob pass $\game_{\mathrm{Surface}}(m,d,q,2)$ with probability at least $\mu$.
Then there exists a degree-$d$ polynomial $g:\F_q^m \rightarrow \F_q$ such that
\begin{equation*}
\Pr_{(\bs, \bu)}[g(\bu) = \bb] \geq \mu - c\cdot m (d/q)^{c'}.
\end{equation*}
\ignore{where $\bb$ is Bob's random answer given the point query~$\bu$.}
\end{theorem}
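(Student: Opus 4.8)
This theorem is quoted verbatim from \cite{RS97} as the soundness guarantee of the plane-versus-point low-degree test; below is the argument one would give to establish it, via the \emph{consistency-graph} (agreement-testing) method. First I would make the second prover deterministic: for each point $u \in \F_q^m$ replace Bob's random answer by its most likely value, obtaining a table $B : \F_q^m \to \F_q$; a convexity/averaging argument shows this only increases the success probability, so it suffices to produce a degree-$d$ polynomial $g$ with $\Pr_{\bu}[g(\bu) = B(\bu)] \geq \mu - c\, m (d/q)^{c'}$. For a plane $s$, let $f_s$ be Alice's degree-$d$ response and let $\rho(s) = \Pr_{\bu \in s}[f_s(\bu) = B(\bu)]$ be its \emph{agreement}; then $\E_{\bs}[\rho(\bs)] = \mu$, so by Markov a $\mu/2$ fraction of planes satisfy $\rho(s) \geq \mu/2$, and I call these the \emph{good} planes.

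Next I would form the \emph{consistency graph} $\mathcal{C}$ on the good planes, joining $s$ and $s'$ when their intersection is a line $\ell$ on which the (univariate, degree-$d$) restrictions of $f_s$ and $f_{s'}$ coincide identically. Two good planes sharing a uniformly random common line fail to be consistent on it only if two unequal degree-$d$ univariate polynomials agree on more than $d$ of the line's $q$ points, so by the Schwartz--Zippel lemma (\Cref{lem:schwartz-zippel}) the graph $\mathcal{C}$ is dense. The technical heart of the proof is the claim that such a dense graph, sitting on the incidence structure of points, lines, and planes of $\F_q^m$ --- which has strong sampling properties (random lines sample points, and random planes sample lines, with deviations only of order $\poly(m)\cdot(d/q)^{\Omega(1)}$) --- must contain a large \emph{pairwise-consistent} family $\mathcal{F}$ of good planes whose union covers an $\Omega(\mu)$-fraction of $\F_q^m$. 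This is proved by an inductive dimension-reduction argument on the consistency graph, and it is exactly what makes the theorem hold for every $\mu$ bounded away from $0$ (once $d/q$ is small), not merely $\mu$ near $1$.

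Finally, from a pairwise-consistent family $\mathcal{F}$ I would reconstruct $g$: set $g(u)$ to be the common value $f_s(u)$ over planes $s \in \mathcal{F}$ passing through $u$. Pairwise consistency forces $g$ to agree with each $f_s$ on all of $s$, hence $g$ is locally a degree-$d$ polynomial on a well-sampling family of planes; a standard interpolation/gluing argument (restrict to lines and invoke the local-to-global characterization of Reed--Muller codes) upgrades this to a genuine global degree-$d$ polynomial $g : \F_q^m \to \F_q$. Since each $s \in \mathcal{F}$ has $f_s = g$ on $s$ and $\rho(s) \geq \mu/2$, and the planes of $\mathcal{F}$ sample $\F_q^m$ well, aggregating the agreements yields $\Pr_{\bu}[g(\bu) = B(\bu)] \geq \mu - c\, m (d/q)^{c'}$, the error term collecting all the Schwartz--Zippel and sampling deviations. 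The modifications flagged just before the theorem --- sending the directions $\bv_1,\bv_2$ to Alice, and allowing degenerate (sub-$2$-dimensional) subspaces --- perturb the question distribution by at most the $q^k/q^m$ term and are absorbed into this same error.

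The main obstacle is the middle step: extracting a \emph{globally covering} pairwise-consistent family of planes from the bare hypothesis that the test passes with probability $\mu$. Merely locating a handful of mutually consistent planes is easy; controlling the consistency graph tightly enough to cover a constant fraction of $\F_q^m$ is the genuine difficulty of Raz--Safra, and is where the quantitatively sharp sampling estimates and the inductive argument on dimension are essential. Everything else --- the Markov reduction to good planes, the Schwartz--Zippel bound, and the gluing into a global polynomial --- is routine.
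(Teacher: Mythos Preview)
The paper does not give a proof of this theorem; it is simply cited from \cite{RS97} as a black box, with a remark that Moshkovitz--Raz \cite{MR08} obtained explicit constants. You correctly recognize this at the outset, and your sketch is a reasonable high-level account of the Raz--Safra consistency-graph argument. There is nothing in the paper to compare it against.
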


Explicit values for $c, c'$ have been derived by Moshkovitz and Raz~\cite{MR08},
albeit for the weaker guarantee that~$g$ be a degree-$m d$ polynomial,
which is still sufficient for most applications.

\paragraph{Communication cost.}
We can compute the communication cost of this test as follows.

\begin{itemize}
\itemsep -.5pt
\item[$\circ$] \textbf{Question length:} We encode a plane in $\F_q^m$ with a string $(u, v_1, v_2) \in \F_q^{3m}$. This requires $3 m \log(q)$ bits to communicate.
\item[$\circ$] \textbf{Answer length:} A degree-$d$ bivariate polynomial on~$\F_q$ can be described with $\binom{d+2}{2} \leq (d+1)^2$ coefficients in~$\F_q$. These require $(d+1)^2 \log(q)$ bits to communicate.
\end{itemize}
\noindent
Recalling \Cref{eq:parameters}, a typical setting of parameters
gives questions of length $\Theta(\log(n))$,
and  answers of length $\Theta(\log(n)^4/\log\log(n))$.

\subsection{Simultaneous low-degree testing}\label{sec:simultaneous-classical}

\begin{definition}[Simultaneous surface-versus-point test]\label{def:simultaneous-plane-v-point}
The \emph{simultaneous surface-versus-point low-degree test with parameters} $m$, $d$, $q$ (a prime power), $k$, and $\ell$, denoted $\game_{\mathrm{Surface}}^\ell(m, d, q, k)$,
 is defined as follows.
A draw $(\bs, \bu)$ is sampled as in $\game_{\mathrm{Surface}}(m,d,q,k)$. Given this, the test is performed as follows.
\begin{itemize}
\itemsep -.5pt
\item[$\circ$] The surface $\bs$ is given to Alice, who responds with $\ell$ degree-$d$ polynomials $\boldf_1, \ldots, \boldf_{\ell}: \bs \rightarrow \F_q$.
\item[$\circ$] The point $\bu$ is given to Bob, who responds with $\ell$ numbers $\bb_1, \ldots, \bb_{\ell} \in \F_q$.
\end{itemize}
Alice and Bob pass the test if $\boldf_1(\bu) = \bb_1, \ldots,
\boldf_{\ell}(\bu) = \bb_{\ell}$.
\label{def:simul}
\end{definition}
Classically, the $k=2$ case of this test can be reduced to a slight generalization of \Cref{thm:raz-safra} using a
simple and standard union-bound argument.
Quantumly, however, a corresponding entanglement-sound analogue of this generalization is not known to hold.
Instead, we use a slightly more involved reduction in which the~$\ell$ outputs of Alice and Bob are ``combined" to create a strategy for the ``standard" plane-versus-point test.
(This technique is standard and was also used in the proof of Lemma~4.6 in~\cite{NV18a} for the case $\ell =2$.)
In this section, we will introduce the notation needed for this reduction
and carry out the proof of the classical soundness of the simultaneous low-degree test
as a warm-up for our proof of quantum soundness later.
We begin by showing how to combine~$\ell$ functions by introducing~$\ell$ ``indexing" variables.

\begin{notation}
Let $g_1, \ldots, g_\ell:s \rightarrow \F_q$ be functions, where~$s$ is a subset of $\F_q^m$.
Then we define the new function $\mathrm{combine}_g(x, y): \F_q^\ell \otimes s \rightarrow \F_q$ as follows:
\begin{equation*}
\mathrm{combine}_g(x, y) = x_1\cdot g_1(y) + \cdots + x_\ell \cdot g_\ell(y).
\end{equation*}
We will typically apply this with $s = \F_q^m$ or~$s$ a dimension-$k$ subspace of~$\F_q^m$.
\end{notation}

If the $g_i$'s are degree-$d$ polynomials on $\F_q^m$, this produces a degree-$(d+1)$ polynomial on $\F_q^{\ell + m}$.
First, we show that given a surface-versus-point query from this $(\ell + m)$-dimensional space,
we can produce a surface-versus-point query from the $m$-dimensional space.

\begin{proposition}\label{prop:really-the-same-dist}
Given a subset $s \subseteq \F_q^{\ell + m}$, let $s_{\mathrm{proj}} = \{y \mid (x, y) \in \F_q^{\ell} \otimes \F_q^m\}$.
\begin{itemize}
\item[$\circ$] If~$s$ is a dimension-$k$ subspace of $\F_q^{\ell+m}$, then $s_{\mathrm{proj}}$ is a dimension-$k'$ subspace of $\F_q^{m}$, for $k' \leq k$.
\end{itemize}
Define $\bs' \sim_k s_{\mathrm{proj}}$ to be a uniformly random dimension-$k$ subspace of $\F_q^{m}$ containing $s_{\mathrm{proj}}$.
\begin{itemize}
\item[$\circ$] If $\bs$ and $(\bx, \by)$ are distributed as $\mathscr{D}_{\mathrm{Surface}}(\ell + m,q,k)$,
		then $\bs' \sim \bs_{\mathrm{proj}}$ and $\by$ are distributed as $\mathscr{D}_{\mathrm{Surface}}(m, q, k)$.
\end{itemize}
\end{proposition}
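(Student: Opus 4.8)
The plan is to treat the two bullets separately. The first is immediate: the coordinate projection $\pi : \F_q^{\ell + m} \to \F_q^m$, $(x,y) \mapsto y$, is $\F_q$-linear, and a linear (hence also an affine) map cannot increase dimension, so $s_{\mathrm{proj}} = \pi(s)$ is a subspace of $\F_q^m$ of some dimension $k' \le k$. In particular, when $k \le m$ the extension $\bs' \sim_k \bs_{\mathrm{proj}}$ appearing in the second bullet is well-defined.

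For the second bullet I would argue by symmetry rather than by a direct computation. Let $G = \mathrm{Aff}(\F_q^m)$ be the group of invertible affine transformations of $\F_q^m$, acting on $\F_q^{\ell+m}$ through the last $m$ coordinates and trivially on the first $\ell$; note that $G$ acts transitively on the $k$-dimensional affine subspaces of $\F_q^m$, and that the stabilizer of any such subspace $s$ acts transitively on the points of $s$. The key observation is that the entire process generating $(\bs', \by)$ from $(\bs, (\bx, \by)) \sim \mathscr{D}_{\mathrm{Surface}}(\ell + m, q, k)$ is $G$-equivariant: $\bs$ is a uniformly random $k$-dimensional affine subspace of $\F_q^{\ell+m}$ and each $g \in G$ permutes such subspaces bijectively, so $g\bs$ has the same law; $\pi$ intertwines the two actions, $\pi(g z) = (g|_{\F_q^m})(\pi(z))$, so $\bs_{\mathrm{proj}} \mapsto g \bs_{\mathrm{proj}}$; $g$ carries the $k$-dimensional affine subspaces containing $\bs_{\mathrm{proj}}$ bijectively onto those containing $g\bs_{\mathrm{proj}}$, so the fresh uniform extension transforms as $\bs' \mapsto g\bs'$; and the uniform point $(\bx,\by)$ on $\bs$ maps to the uniform point on $g\bs$ whose last $m$ coordinates are $(g|_{\F_q^m})(\by)$. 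Hence the law of $(\bs',\by)$ is $G$-invariant. Since $G$ acts transitively on the set $\{(s,y) : s \text{ a } k\text{-dimensional affine subspace of } \F_q^m,\ y \in s\}$, any $G$-invariant distribution supported on this set is the uniform one on it, which is exactly $\mathscr{D}_{\mathrm{Surface}}(m,q,k)$; this gives $(\bs',\by) \sim \mathscr{D}_{\mathrm{Surface}}(m,q,k)$.

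I expect the only real subtlety to be the interaction between degenerate projections and the random extension. Conditioned on $\bs_{\mathrm{proj}}$, the point $\by$ is uniform on $\bs_{\mathrm{proj}}$ (all fibers of $\pi$ restricted to $\bs$ have equal size), but $\bs_{\mathrm{proj}}$ can have dimension strictly less than $k$, in which case $\by$ does \emph{not} fill out the $k$-dimensional $\bs'$, so a naive calculation conditioning only on $\bs'$ appears to yield a non-uniform point. The $G$-symmetry argument sidesteps this by never conditioning on $\bs_{\mathrm{proj}}$: averaging over which $\bs_{\mathrm{proj}}$ gave rise to $\bs'$ is precisely what restores uniformity, and if one prefers to see it by hand, a short count (the number of $j$-dimensional affine subspaces of a fixed $k$-dimensional affine space through a given point is a Gaussian binomial coefficient, independent of the point) makes this explicit. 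One should also confirm the standing conventions used above, namely that $k \le m$ and that a draw from $\mathscr{D}_{\mathrm{Surface}}(m,q,k)$ is a genuinely $k$-dimensional affine subspace together with a uniformly random point on it — this last point is what lets ``$G$-invariant and supported on the right set'' pin the distribution down.
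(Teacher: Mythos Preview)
Your overall strategy matches the paper's: for the first bullet the paper says exactly what you say (projecting a spanning set cannot increase dimension), and for the second the paper's entire proof is the phrase ``follows by symmetry.'' Your affine-group equivariance argument is a concrete way to cash that out.

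The gap is in the convention you ask to confirm at the end, and it is \emph{not} the one this paper uses. Here $\mathscr{D}_{\mathrm{Surface}}(m,q,k)$ is produced by sampling $k$ independent uniform direction vectors $\bv_1,\dots,\bv_k\in\F_q^m$ (which may be linearly dependent or even zero), a uniform offset $\bw$, and setting $\bs=\bw+\mathrm{span}\{\bv_1,\dots,\bv_k\}$; the paper explicitly notes that with probability about $q^{k-m}$ the surface has dimension strictly less than $k$. Consequently your final step---``$G$ acts transitively on pairs $(s,y)$ with $s$ genuinely $k$-dimensional, so $G$-invariance pins the law down to the uniform one''---does not establish the claim: the target distribution is not supported on that orbit, while your $\bs'$ always is, so the two cannot coincide exactly. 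In fact, once you unwind the paper's sampling, the projected direction vectors $\pi(\bv_i)$ are already i.i.d.\ uniform in $\F_q^m$ and the projected offset is uniform, so $(\bs_{\mathrm{proj}},\by)$ is itself distributed as $\mathscr{D}_{\mathrm{Surface}}(m,q,k)$ with no extension step needed; this is presumably the ``symmetry'' the paper intends, and it shows that the random extension to a genuinely $k$-dimensional $\bs'$ is either superfluous or a slight imprecision in the proposition's statement rather than an ingredient your argument should lean on.
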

\begin{proof}
The first bullet follows because if $\{(x_1, y_1), \ldots, (x_k, y_k)\}$ is a set of~$k$ linearly independent vectors which span~$s$,
then $\{y_1, \ldots, y_k\}$ is a set of~$k$ vectors which span~$s_{\mathrm{span}}$, though they may no longer be linearly independent.
The second bullet follows by symmetry.
\end{proof}

Next, we show that answers to the queries on the $m$-dimensional space
can be used to produce answers to the queries on the $(\ell+m)$-dimensional space.

\begin{proposition}\label{prop:sub-subspace}
Let $s$ be a dimension-$k$ subspace of $\F_q^{\ell + m}$, and let $s' \subseteq \F_q^{m}$ be a subspace which contains $s_{\mathrm{proj}}$.
Then $\F_q^\ell \otimes s'$ is a subspace, and it contains~$s$.
In particular, if $f_1, \ldots, f_\ell$ are degree-$d$ functions on $s'$, then $\mathrm{combine}_f$ is a degree-$(d+1)$ function on $\F_q^\ell \otimes s'$,
and it can be restricted to a degree-$(d+1)$ function on~$s$.
\end{proposition}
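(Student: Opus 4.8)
The plan is to check the four assertions one at a time; each is routine bookkeeping once one keeps straight the definition of a ``degree-$d$ function on a subspace'' from the remark following the surface-versus-point test.

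First I would verify the structural claims. Writing $\F_q^\ell \otimes s' = \{(x,y) : x \in \F_q^\ell,\ y \in s'\}$, this set is closed under addition and scalar multiplication because $\F_q^\ell$ and $s'$ are, so it is a subspace of $\F_q^{\ell+m}$ of dimension $\ell + \dim(s')$. For the inclusion $s \subseteq \F_q^\ell \otimes s'$: if $(x,y) \in s$ then $y \in s_{\mathrm{proj}}$ by the definition of $s_{\mathrm{proj}}$, and $s_{\mathrm{proj}} \subseteq s'$ by hypothesis, so $(x,y) \in \F_q^\ell \otimes s'$.

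Next I would show that $\mathrm{combine}_f$ is a degree-$(d+1)$ function on $\F_q^\ell \otimes s'$. Fix a spanning set $w_1,\dots,w_{k'}$ of $s'$, so that $\lambda \mapsto \sum_j \lambda_j w_j$ parametrizes $s'$ and $(\mu,\lambda) \mapsto \bigl(\sum_i \mu_i e_i,\ \sum_j \lambda_j w_j\bigr)$ parametrizes $\F_q^\ell \otimes s'$, where $e_1,\dots,e_\ell$ is the standard basis of $\F_q^\ell$. By hypothesis each $f_i$ is degree-$d$ on $s'$, so there is a degree-$d$ polynomial $f_i'$ with $f_i'(\lambda) = f_i(\sum_j \lambda_j w_j)$. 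Pulling $\mathrm{combine}_f$ back through the parametrization gives the map $(\mu,\lambda) \mapsto \sum_i \mu_i\, f_i'(\lambda)$, a polynomial of total degree at most $d+1$ in the coordinates $(\mu,\lambda)$; hence $\mathrm{combine}_f$ is degree-$(d+1)$ on $\F_q^\ell \otimes s'$.

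Finally I would deduce the restriction claim from a general observation: restricting a degree-$e$ function on a subspace $W$ to a subspace $V \subseteq W$ gives a degree-$e$ function on $V$, because any linear parametrization of $V$ factors through a linear parametrization of $W$ via some linear map $\F_q^{\dim V} \to \F_q^{\dim W}$, and precomposition of a degree-$e$ polynomial with a linear map does not increase total degree. Applying this with $W = \F_q^\ell \otimes s'$, $V = s$, and $e = d+1$ finishes the proof. I do not expect any real obstacle here; the one point that needs care is that the notion ``degree-$d$ on a subspace'' is independent of the chosen parametrization — two parametrizations differ by an invertible affine change of variables, which preserves total degree — and it is this independence that makes both the second step and the restriction step well-posed.
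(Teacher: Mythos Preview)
Your proposal is correct and follows the same approach as the paper; the paper's proof is essentially a one-liner that establishes $s \subseteq \F_q^\ell \otimes s'$ exactly as you do (take $(x,y)\in s$, note $y\in s_{\mathrm{proj}}\subseteq s'$) and then declares that the statement about $\mathrm{combine}_f$ ``follows immediately.'' You have simply spelled out the details that the paper omits, including the parametrization argument for the degree bound and the observation that restriction to a sub-subspace does not increase degree.
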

\begin{proof}
Consider a point $(x, y) \in s$. Then $y \in s_{\mathrm{proj}} \subseteq s'$, and so $(x, y) \in \F_q^{\ell} \otimes s$.
The statement about $\mathrm{combine}_f$ follows immediately.
\end{proof}

Finally, we need a technical result: that nonlinear low-degree polynomials rarely become linear after restricting variables.

\begin{definition}
Let $n \geq 0$.
A function $f:\F_q^{\ell + n} \rightarrow \F_q$ is \emph{exactly linear in~$x$}
if it can be written as
\begin{equation*}
f(x, y) = x_1 \cdot f_1(y) + \cdot + x_\ell \cdot f_{\ell}(y).
\end{equation*}
(We do not allow constant terms.)
Note that when $n = 0$, such a function can be written as $c_1 \cdot x_1 + \cdots + c_\ell \cdot x_\ell$, where each $c_i \in \F_q$,
in which case we simply call it ``exactly linear".
Given a function $f(x, y)$ and a string $y \in \F_q^m$,
we will also write $f|_y$ for the function defined as $f_y(x) = f(x, y)$.
\end{definition}

\begin{proposition}\label{prop:exactly-linear-prop}
Suppose $f(x, y): \F_q^{\ell + m} \rightarrow \F_q$
is a degree-$d$ polynomial which is not exactly linear in~$x$.
Then the probability that $f|_{\by}$ is exactly linear, over a uniformly random $\by \sim \F_q^m$, is at most $d/q$.
\end{proposition}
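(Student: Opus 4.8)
The plan is to extract, from the hypothesis that $f$ is not exactly linear in $x$, a single coefficient polynomial in $y$ that witnesses the failure, and then apply Schwartz--Zippel to that polynomial. Write $f(x,y) = \sum_{\alpha} c_\alpha(y)\, x^\alpha$, where the sum ranges over monomials $x^\alpha$ in the $\ell$ variables $x_1,\dots,x_\ell$ and each $c_\alpha(y)$ is a polynomial in $y$ of degree at most $d$. The function $f|_{\by}$ is exactly linear precisely when (i) the constant coefficient $c_0(\by) = 0$, and (ii) for every monomial $x^\alpha$ of total degree $\geq 2$, the coefficient $c_\alpha(\by) = 0$. Since $f$ is \emph{not} exactly linear in $x$, at least one of the following holds: $c_0(y)$ is not the zero polynomial, or there is some monomial $x^\alpha$ with $\deg(x^\alpha) \geq 2$ for which $c_\alpha(y)$ is not the zero polynomial. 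In either case, fix such a nonzero coefficient polynomial and call it $p(y)$; note $p$ has degree at most $d$ (it is a ``coefficient slice'' of the degree-$d$ polynomial $f$, hence has degree bounded by $d$).

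Next I would observe that whenever $f|_{\by}$ is exactly linear we must have $p(\by) = 0$. Therefore
\begin{equation*}
\Pr_{\by \sim \F_q^m}[\,f|_{\by}\text{ exactly linear}\,] \;\leq\; \Pr_{\by \sim \F_q^m}[\,p(\by) = 0\,].
\end{equation*}
Now apply the Schwartz--Zippel lemma (\Cref{lem:schwartz-zippel}) to the nonzero polynomial $p$ of degree $\leq d$, comparing it against the zero polynomial: the right-hand side is at most $d/q$. This gives the claimed bound.

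The only step requiring a little care — and the place I would expect the subtlety to lie — is the bookkeeping around constant terms: ``exactly linear'' by definition forbids a constant term, so the constant coefficient $c_0(y)$ must be handled on the same footing as the higher-degree coefficients, and one must be sure that ``not exactly linear in $x$'' is equivalent to ``some $c_\alpha$ with $\alpha = 0$ or $\deg(x^\alpha)\geq 2$ is a nonzero polynomial in $y$.'' This is immediate once one writes out the definition, since $f$ exactly linear in $x$ means $f = \sum_i x_i f_i(y)$ with no other terms, i.e. all $c_\alpha$ with $\alpha$ not a single coordinate vanish identically. The degree bound on $p$ also deserves a one-line justification: each coefficient $c_\alpha(y)$ of $f$ viewed as a polynomial in $x$ has degree at most $d - \deg(x^\alpha) \leq d$, since the total degree of $f$ is $d$. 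Everything else is a direct invocation of Schwartz--Zippel.
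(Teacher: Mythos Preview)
Your proof is correct and is essentially the same argument as the paper's: both expand $f$ in $x$-monomials, pick a nonzero coefficient polynomial $c_\alpha(y)$ corresponding to a monomial of $x$-degree $0$ or $\geq 2$, observe that exact linearity of $f|_{\by}$ forces $c_\alpha(\by)=0$, and apply Schwartz--Zippel. Your write-up is in fact slightly more careful than the paper's in tracking the degree bound on $c_\alpha$ (noting $\deg c_\alpha \le d - \deg(x^\alpha) \le d$), but the method is identical.
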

\begin{proof}
Because~$f$ is not exactly linear in~$x$, it contains some non-linear $x$-monomial $x^i = x_1^{i_1} \cdots x_\ell^{i_\ell}$
in which $i_1 + \cdots + i_\ell$ is either zero or at least two.
Thus, $f$ can be written as
$f(x, y) = x^i \cdot g_i(y) + f'(x, y)$, where $g_i(y)$ is degree-$d$ and $f'$ contains no~$x^i$ terms.
For $f|_{\by}$ to be exactly linear, this term must vanish, which means $g_i(\by) = 0$.
But by Schwartz-Zippel (\Cref{lem:schwartz-zippel}), this happens with probability at most $d/q$.
\end{proof}

We are now ready to prove soundness of the simultaneous low-degree test in the $k = 2$ case.

\begin{theorem}\label{thm:simultaneous-raz-safra}
There exists absolute constants $c, c' >0$ such that the following holds.
Suppose Alice and Bob pass $\game_{\mathrm{Surface}}^\ell(m, d, q,2)$ with probability at least $\mu$.
Then there exist degree-$d$ polynomials $g_1, \ldots, g_{\ell} : \F_q^m \rightarrow \F_q$ such that
\begin{equation*}
\Pr_{(\bs, \bu)} [g_1(\bu) = \bb_1, \ldots, g_{\ell}(\bu) = \bb_\ell] \geq \mu - c \cdot (m + \ell) (d/q)^{c'}.
\end{equation*}
\end{theorem}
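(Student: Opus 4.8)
The plan is to reduce to the ordinary Raz--Safra analysis (\Cref{thm:raz-safra}) via the $\mathrm{combine}$ construction. Fix a classical strategy $(A,B)$ for $\game_{\mathrm{Surface}}^\ell(m,d,q,2)$ of value at least $\mu$, where on a surface $s$ Alice outputs degree-$d$ polynomials $\boldf_1,\dots,\boldf_\ell\colon s\to\F_q$ and on a point $u$ Bob outputs $\bb_1(u),\dots,\bb_\ell(u)\in\F_q$. From this I build a strategy $(\tilde A,\tilde B)$ for the \emph{ordinary} test $\game_{\mathrm{Surface}}(\ell+m,\,d+1,\,q,\,2)$ in dimension $\ell+m$ and degree $d+1$. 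Given a surface $\bs\subseteq\F_q^{\ell+m}$, the new Alice forms $\bs_{\mathrm{proj}}\subseteq\F_q^m$, samples with private randomness a uniformly random dimension-$2$ superspace $\bs'\supseteq\bs_{\mathrm{proj}}$ as in \Cref{prop:really-the-same-dist}, runs $A(\bs')$ to obtain $\boldf_1,\dots,\boldf_\ell\colon\bs'\to\F_q$, and returns $\mathrm{combine}_{\boldf}$ restricted from $\F_q^\ell\otimes\bs'$ to $\bs$; by \Cref{prop:sub-subspace} this is a legitimate degree-$(d+1)$ answer on $\bs$. Given a point $(\bx,\bu)\in\F_q^{\ell+m}$, the new Bob runs $B(\bu)$ and returns $\sum_{i=1}^{\ell}\bx_i\,\bb_i(\bu)$ (that is, $\mathrm{combine}$ applied to the constants $\bb_i(\bu)$). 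The verifier's check $\mathrm{combine}_{\boldf}(\bx,\bu)=\sum_i\bx_i\bb_i(\bu)$ then reads
\[
\sum_{i=1}^{\ell} \bx_i\bigl(\boldf_i(\bu)-\bb_i(\bu)\bigr)=0,
\]
which certainly holds whenever $\boldf_i(\bu)=\bb_i(\bu)$ for every $i$. By \Cref{prop:really-the-same-dist} the induced pair $(\bs',\bu)$ is distributed exactly as a question of $\game_{\mathrm{Surface}}^\ell(m,d,q,2)$, and $\bu\in\bs_{\mathrm{proj}}\subseteq\bs'$, so $(\tilde A,\tilde B)$ passes with probability at least $\mu$.

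Applying \Cref{thm:raz-safra} with parameters $(\ell+m,\,d+1,\,q)$ gives a degree-$(d+1)$ polynomial $g\colon\F_q^{\ell+m}\to\F_q$ with
\[
\Pr_{(\bx,\bu)}\Bigl[\,g(\bx,\bu)=\sum_{i}\bx_i\,\bb_i(\bu)\,\Bigr]\ \geq\ \mu - c_0(\ell+m)\bigl((d+1)/q\bigr)^{c_0'},
\]
the point $(\bx,\bu)$ being uniform on $\F_q^{\ell+m}$. If the right-hand side is non-positive there is nothing to prove, so call it $p>0$. Averaging over $\bu$: for at least a $p-(d+1)/q$ fraction of $\bu$ the restriction $g|_{\bu}$ agrees with the exactly-linear map $x\mapsto\sum_i x_i\bb_i(\bu)$ on more than a $(d+1)/q$ fraction of $x\in\F_q^{\ell}$, hence equals it identically by Schwartz--Zippel (\Cref{lem:schwartz-zippel}), so $g|_{\bu}$ is exactly linear. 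Choosing the constants so that in the non-vacuous regime (where necessarily $q>d\geq 1$) one has $p>2(d+1)/q$, this fraction exceeds $(d+1)/q$, and \Cref{prop:exactly-linear-prop} applied to the degree-$(d+1)$ polynomial $g$ forces $g$ itself to be exactly linear in $x$: $g(x,y)=x_1 g_1(y)+\cdots+x_\ell g_\ell(y)$ for some polynomials $g_i$, each of degree at most $d$ since $g$ has total degree $d+1$.

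It remains to translate back. The displayed bound now reads $\Pr_{(\bx,\bu)}[\sum_i\bx_i(g_i(\bu)-\bb_i(\bu))=0]\geq p$. For any fixed $\bu$ with $g_i(\bu)\neq\bb_i(\bu)$ for some $i$, the map $x\mapsto\sum_i x_i(g_i(\bu)-\bb_i(\bu))$ is a nonzero linear form and hence vanishes on exactly a $1/q$ fraction of $x$; therefore $\Pr_{\bu}[\,g_i(\bu)=\bb_i(\bu)\text{ for all }i\,]\geq p-1/q$. Since the marginal of the point under a draw $(\bs,\bu)$ of $\game_{\mathrm{Surface}}^\ell(m,d,q,2)$ is uniform on $\F_q^{m}$ and Bob's answers depend only on the point, the left-hand side equals $\Pr_{(\bs,\bu)}[g_1(\bu)=\bb_1,\dots,g_\ell(\bu)=\bb_\ell]$, so
\[
\Pr_{(\bs,\bu)}[g_1(\bu)=\bb_1,\dots,g_\ell(\bu)=\bb_\ell]\ \geq\ \mu - c_0(\ell+m)\bigl((d+1)/q\bigr)^{c_0'} - 1/q.
\]
Absorbing the lower-order terms into fresh absolute constants (using $c_0'<1$ and $q\geq d\geq 1$, and noting the whole statement is vacuous otherwise) puts this in the claimed form $\mu-c(m+\ell)(d/q)^{c'}$.

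I expect the main obstacle to be the middle step: showing that the polynomial produced by Raz--Safra must be \emph{exactly linear} in the $\ell$ indexing coordinates, i.e.\ that the $\mathrm{combine}$ encoding can be inverted. This requires combining the double averaging over $\bu$, the Schwartz--Zippel upgrade from approximate to exact agreement, and \Cref{prop:exactly-linear-prop}, while tracking the parameter regime carefully and disposing of small $\mu$ by vacuity. The remaining steps—the reduction itself and the final linear-form argument—are routine.
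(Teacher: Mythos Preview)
Your proposal is correct and follows essentially the same approach as the paper: both reduce to the single-function Raz--Safra test via the $\mathrm{combine}$ construction, show the resulting global polynomial must be exactly linear in the indexing coordinates using \Cref{prop:exactly-linear-prop}, and finish with the same linear-form argument. The only cosmetic difference is that the paper phrases the exact-linearity step as a proof by contradiction while you average over $\bu$ directly; these are the same argument.
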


\begin{proof}
Let $c, c' > 0$ be as in \Cref{thm:raz-safra}.
We pick the constants in this theorem, say $\hat{c}, \hat{c}'$  so that
\begin{equation*}
\mu - c \cdot (m + \ell) ((d+1)/q)^{c'} - 2(d+1)/q \geq \mu - \hat{c} \cdot (m + \ell) (d/q)^{\hat{c}'}.
\end{equation*}
Note that this means that the theorem is trivial when $2(d+1)/q \geq  \mu - c \cdot (m + \ell) ((d+1)/q)^{c'}$.
As such, we will assume below that
\begin{equation}\label{eq:really-technical}
2(d+1)/q <  \mu - c \cdot (m + \ell) ((d+1)/q)^{c'}.
\end{equation}

Suppose Alice and Bob pass $\game_{\mathrm{Surface}}^\ell(m, d, q,2)$ with probability at least $\mu$.
We will use them to simulate two provers, ``Combined Alice" and ``Combined Bob", who pass
the single-function low-degree test $\game_{\mathrm{Surface}}(\ell + m, d+1, q,2)$ with probability at least $\mu$.
They are specified as follows:
\begin{itemize}
\item[$\circ$] \textbf{Combined Alice:} Given $\bs \subseteq \F_q^{\ell + m}$, draw $\bs' \sim_2 \bs_{\mathrm{proj}}$.  Give it to Alice, who responds with $\boldf_1, \ldots, \boldf_{\ell}: \bs' \rightarrow \F_q$.  Output the function $\mathrm{combine}_{\boldf}|_{\bs}$.
\item[$\circ$] \textbf{Combined Bob:} Given $(\bx, \by) \in \F_q^{\ell + m}$, compute $\by \in \F_q^m$. Give it to Bob, who responds with $\bb_1, \ldots, \bb_\ell \in \F_q$.
			Return $\mathrm{combine}_{\bb}(\bx) \in \F_q$.
\end{itemize}
By \Cref{prop:really-the-same-dist}, $\bs'$ and $\by$ are distributed as the questions in $\game_{\mathrm{Surface}}^\ell(m, d, q,2)$.
Using our assumption on Alice and Bob, this means that $\boldf_1(\by) = \bb_1$, \ldots, $\boldf_\ell(\by) = \bb_\ell$ with probability at least $\mu$.
As a result, $(\mathrm{combine}_{\boldf}|_{\bs})(\bx, \by) = \mathrm{combine}_{\bb}(\by)$ with probability at least~$\mu$.
By \Cref{prop:sub-subspace}, $\mathrm{combine}_{\boldf}|_{\bs}$ is a degree-$(d+1)$ function on~$\bs$,
and so it is a valid response to subspace queries.
This means Combined Alice and Bob pass~$\game_{\mathrm{Surface}}(\ell + m, d+1, q,2)$ with probability at least~$\mu$.

Thus, we can apply \Cref{thm:raz-safra}. It gives a degree-$(d+1)$ function $g: \F_q^{\ell + m} \rightarrow \F_q$ such that
\begin{equation}\label{eq:g-equals-combine}
\Pr_{\bx, \by}[g(\bx, \by) = \mathrm{combine}_{\bb}(\bx)] \geq \mu - c \cdot (\ell + m)\cdot ((d+1)/q)^{c'}.
\end{equation}
We would like to show that~$g$ is exactly linear in~$x$.
Assume for the sake of contradiction that this is not the case.
Because $\bb$ depends only on~$\by$ (and Bob's internal randomness),
we can consider varying these two variables independently of~$\bx$.
By \Cref{prop:exactly-linear-prop}, the probability that $g_{\by}$ is not exactly linear is at least $1-(d+1)/q$.
In this case, because $\mathrm{combine}_b(\bx)$ is always exactly linear,
the probability that $g|_{\by}(\bx) = \mathrm{combine}_{\bb}(\bx)$ is at most $(d+1)/q$ by Schwartz-Zippel (\Cref{lem:schwartz-zippel}).
As a result, the probability that $g(\bx, \by) = \mathrm{combine}_{\bb}(\bx)$ is at most $(d+1)/q + (d+1)/q$, which contradicts \Cref{eq:really-technical,eq:g-equals-combine}.
Thus, we may conclude that~$g$ is exactly linear in~$x$.

This implies that we can write $g(x, y) = \sum_i x_i \cdot g_i(y)$, where each $g_i$ is a degree-$d$ polynomial.
Now, for any fixed~$b$ and~$y$, if it is not the case that $g_1(y) = b_1$, \ldots, $g_\ell(y) = b_\ell$,
then the probability that $g(\bx, y) = \mathrm{combine}_{b}(\bx)$ over a random~$\bx$ is at most $1/q$ by Schwartz-Zippel since both are exactly linear functions.
Thus, if $\eta$ is the probability that $g_1(\by) = \bb_1$, \ldots, $g_\ell(\by) = \bb_\ell$,
then the probability that $g(\bx, \by) = \mathrm{combine}_{\bb}(\bx)$ is at most $\eta + (1-\eta)/q \leq \eta + 1/q$.
Combined with \Cref{eq:g-equals-combine}, this implies the theorem.
\end{proof}

\subsection{$\NEXP$, $\NEEXP$, and complete problems for them}

\begin{definition}
  \label{def:neexp}
  The class $\neexp$ (respectively, $\NEEXP$) is the class of all problems that can be solved
  in exponential (respectively, doubly-exponential) time by a nondeterministic Turing
  machine. Formally,
\begin{equation*}
\NEXP = \bigcup_{c \in \mathbb{N}}  \NTIME(2^{n^c}),\qquad
\NEEXP = \bigcup_{c \in \mathbb{N}}  \NTIME(2^{2^{n^c}}).
\end{equation*}
\end{definition}

A standard way of generating $\NEXP$-complete problems 
is by considering ``succinct" versions of $\NP$-complete problems,
in which an exponential-sized input is encoded by a polynomial-sized circuit.
The canonical complete problem is a succinct version of $\mathsf{3Sat}$,
but there is considerable freedom in choosing the succinct encoding used.
We choose the following encoding.

\begin{definition}
$\succinct$ is the following problem.
\begin{enumerate}
\item[$\circ$] \textbf{Input:} a circuit~$\calC$ with $3n+3$ input bits and size $\mathrm{poly}(n)$.
It encodes the 3-Sat instance $\psi_{\calC}$ with variable set $x_u$ for $u \in \{0, 1\}^n$
which includes the constraint $(x_{u_1}^{b_1} \lor x_{u_2}^{b_2}\lor x_{u_3}^{b_3})$ whenever
\begin{equation*}
\calC(u_1, u_2, u_3, b_1, b_2, b_3) = 1.
\end{equation*}
(Here, $x_i^1$ refers to the literal $x_i$ and $x_i^0$ refers to the negated literal $\overline{x_i}$.)
\item[$\circ$] \textbf{Output:} accept if $\psi_{\calC}$ is satisfiable and reject otherwise.
\end{enumerate}
\end{definition}

A proof that $\succinct$ is $\NEXP$ complete can be found in~\cite[Chapter 20]{Pap94},
albeit with a different encoding.
Below, we show this implies $\NEXP$-completeness for our encoding as well.

\begin{proposition}
$\succinct$ is $\NEXP$-complete.
\end{proposition}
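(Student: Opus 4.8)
The statement splits into two parts --- membership $\succinct \in \NEXP$ and $\NEXP$-hardness --- and the plan is to handle them separately. For membership I would argue by direct nondeterministic enumeration. Given a circuit $\calC$ of size $\poly(n)$ with $3n+3$ inputs, the formula $\psi_\calC$ has $2^n$ variables and at most $2^{3n+3}$ clauses, one for each tuple $(u_1,u_2,u_3,b_1,b_2,b_3)$ on which $\calC$ outputs $1$. A nondeterministic machine can guess an assignment $a\in\{0,1\}^{2^n}$, then loop over all $2^{3n+3}$ tuples, evaluate $\calC$ on each in time $\poly(n)$, and whenever $\calC$ fires, check that the clause $(x_{u_1}^{b_1}\lor x_{u_2}^{b_2}\lor x_{u_3}^{b_3})$ is satisfied by $a$ (that is, $a_{u_i}=b_i$ for some $i\in\{1,2,3\}$); it accepts iff every check passes. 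The running time is $2^{O(n)}\cdot\poly(n)$, singly exponential in the input size, so $\succinct\in\NEXP$.

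For hardness I would reduce from the $\NEXP$-complete succinct-$\mathsf{3Sat}$ problem $\succinct'$ of~\cite[Chapter~20]{Pap94}. In that problem an instance is a $\poly$-size circuit $C$ presenting a $\mathsf{3Sat}$ formula $\psi'$ \emph{implicitly by index} --- e.g., $C$ maps a clause index $i$ and a slot $j\in\{1,2,3\}$ to the variable index and polarity of the $j$-th literal $\ell^{(i)}_j$ of clause $i$ (other implicit encodings used by~\cite{Pap94}, such as a bit-oracle for the string encoding $\psi'$, are handled with only cosmetic changes). The difficulty --- and the step I expect to be the only real content --- is that our encoding instead needs the relation ``$(x_{u_1}^{b_1}\lor x_{u_2}^{b_2}\lor x_{u_3}^{b_3})$ is a clause'' to be computable in polynomial time \emph{from a clause description}, and translating ``implicit by index'' into ``implicit by membership on descriptions'' cannot be done naively, since deciding whether a given description occurs would require an exponential search over clause indices. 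The key idea I would use is to \emph{internalize the clause index into the variable set}: define a sat-equivalent formula $\psi''$ over the variables $\{x_k\}_k$ of $\psi'$ together with fresh variables $y_{i,1},y_{i,2},y_{i,3}$ for each clause index $i$, whose clauses are, for every $i$, the single ``structural'' clause $(y_{i,1}\lor y_{i,2}\lor y_{i,3})$ and the three ``link'' clauses $y_{i,j}\Rightarrow\ell^{(i)}_j$ (each padded to three literals by repeating a literal). A one-line argument shows $\psi''$ is satisfiable iff $\psi'$ is: given $x\models\psi'$ set $y_{i,j}$ to the truth value of $\ell^{(i)}_j$ under $x$; conversely, for each $i$ the structural clause forces some $y_{i,j}=1$, and the corresponding link clause then forces $\ell^{(i)}_j$, hence clause $i$ of $\psi'$.

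The payoff is that the clause-membership relation of $\psi''$ \emph{is} decidable in time $\poly(|C|)$ given a description: after fixing an encoding of the variables of $\psi''$ into $\{0,1\}^{n''}$ with $n''=\poly(|C|)$ (legitimate since $\psi'$ has at most $2^{\poly(|C|)}$ variables and clauses), one decodes the three variables in the description, recognizes a structural clause purely syntactically, and recognizes a link clause after at most one evaluation of $C$ to recover $\ell^{(i)}_j$ (reorderings of the three literals are accepted as well, which is harmless). Hence there is a circuit $\calC$ with $3n''+3$ inputs and size $\poly(|C|)$, computable from $C$ in polynomial time, with $\psi_\calC=\psi''$; since $\psi_\calC$ is satisfiable iff $C$ is a yes-instance of $\succinct'$, the map $C\mapsto\calC$ is the desired reduction and $\succinct$ is $\NEXP$-hard. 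Everything besides the index-internalization gadget --- membership, sat-equivalence, and the circuit-size/time bookkeeping --- is routine. (One could equally bypass~\cite{Pap94} and reduce directly from a language in $\NTIME(2^{n^c})$: the Cook--Levin tableau formula for ``$M$ accepts $x$'' is uniform enough that ``is this description a clause of it?'' is itself $\poly(|x|)$-time decidable, yielding $\calC$ directly; the two approaches are essentially the same once the gadget above is in hand.)
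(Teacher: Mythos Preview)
Your proof is correct, but the hardness reduction takes a different route from the paper's. The paper exploits a feature of Papadimitriou's specific encoding that you did not use: in that encoding, each literal occurs exactly three times, and the circuit $\calC_{\mathrm{Pap}}$ provides a \emph{literal-to-clause} lookup---on input $(b,u,k)$ it returns the index of the clause containing the $k$-th occurrence of the literal $x_u^b$. Given a proposed clause description $(u_1,u_2,u_3,b_1,b_2,b_3)$, the paper's $\calC$ simply queries $\calC_{\mathrm{Pap}}$ on all nine pairs $(b_i,u_i,k)$ and outputs $1$ iff some clause index is returned for all three literals. No new variables or auxiliary formula are introduced; the same $\sat$ instance is represented under the new encoding.

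Your approach instead assumes only the more common clause-to-literal direction and compensates with the index-internalization gadget (the $y_{i,j}$ variables), producing an equisatisfiable but different formula. This is more robust---it works for any ``clause index $\mapsto$ literals'' oracle, not just Papadimitriou's bidirectional one---at the cost of the extra gadget and a larger variable set. The paper's argument is shorter precisely because it leans on the bounded-occurrence structure that the Cook--Levin-derived encoding in \cite{Pap94} already guarantees; your closing remark about reducing directly from $\NTIME(2^{n^c})$ via the Cook--Levin tableau is in fact closer in spirit to how that structure arises.
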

\begin{proof}
Papadimitriou~\cite{Pap94}
considers circuits~$\calC_{\mathrm{Pap}}$ which encode $\mathsf{3Sat}$ formulas~$\phi$ with $n$ variables and~$m$ clauses
as follows:
$\calC_{\mathrm{Pap}}$ takes as input a string $(b, u, k)$, where $b, k \in \{0, 1\}^2$ are interpreted as integers in $\{0, 1, 2, 3\}$
and $u \in \{0, 1\}^{\log(m)}$ is interpreted either as a vertex $1 \leq u \leq n$ or a clause $1 \leq u \leq m$.
If $1 \leq u \leq n$ and $0 \leq k \leq 2$, 
then on input $(0, u, k)$, $\calC_{\mathrm{Pap}}$ outputs the index of the clause where $\overline{x_u}$ appears for the $k$-th time,
and on input $(1, u, k)$,  it outputs the index of the clause where $x_u$ appears for the $k$-th time.
(In addition, if $1 \leq u \leq m$ and $0 \leq k \leq 3$, then on input $(2, u, k)$,
$\calC_{\mathrm{Pap}}$ outputs the $k$-th literal of the $u$-th clause in $\phi$.
We state this for completeness, though we will not need it for the proof.)
Such a $\mathsf{3Sat}$ formula~$\psi$ has $2n$ literals, each occurring~$3$ times, and so $m = 2n$.
By~\cite[Chapter 20]{Pap94}, this succinct encoding of $\mathsf{3Sat}$ is $\neexp$-complete.
Using this, we can generate an instance of the $\succinct$ problem~$\calC$ such that $\phi_\calC = \phi$ as follows:
given input $(u_1, u_2, u_3, b_1, b_2, b_3)$, we simply evaluate $\calC_{\mathrm{Pap}}$ on $(b_i, u_i, k)$, for each $1 \leq i \leq 3$ and $0 \leq k \leq 2$, and output~$1$ if there is any clause containing all three literals.
\end{proof}

The complete problem for $\NEEXP$ is, appropriately enough, a succinct
version of $\succinct$. To define it precisely, it helps to fix a
notion of a Boolean circuit. Following Section~4.3 of~\cite{Pap94}, we consider Boolean circuits in which each gate can
be one of six types: $\mathsf{input}, \mathsf{true}$, $\mathsf{false}$, $\wedge$,
$\vee$, or $\lnot$. These gates have 0, 0, 0, 2, 2, or 1 inputs,
respectively. A succinct representation of a circuit $\calC_1$ is a
circuit $\calC_2$ that, given an index $i$, outputs the type of gate
$i$ as well as the indices $j_1, j_2$ of its inputs (one or both of
these indices may be the null index $\varnothing$ depending on the
type of the gate $i$).
\begin{definition}
  $\succinctsquared$ is the following problem.
  \begin{itemize}
    \item[$\circ$] {\bf Input:} a circuit $\calC$ with size
      $\poly(n)$, which is a succinct representation of
      a circuit $\calC'$, which is itself an instance of $\succinct$
      with instance size $N = 2^{\poly(n)}$.
      \item[$\circ$] {\bf Output:} accept if $\psi_{\calC'}$ (the $\sat$
        formula on $2^N = 2^{2^{\poly(n)}}$ variables generated by the
        circuit $\calC'$) is
        satisfiable and reject otherwise.
      \end{itemize}
\end{definition}
\begin{fact}\label{fact:p-to-ppoly}
  Let $M$ be a deterministic Turing machine which takes two inputs $x_1, x_2$. Then
  for any input $x_1$ of size $n_1$ and for any size parameter $n_2$
  and time $T > n_1 + n_2$, there exists a
  circuit $C_{M,T,x_1}$ of size $N = O(T^2)$ which, on an input $x_2$ of
  size $n_2$, computes $M$ run for $T$
  steps on the input pair $x_1, x_2$. Moreover, there exists a Turing
  machine $M'$ that given $x_1$, $n_2$, and an
  index $i \in  \{1, \dots, N\}$ in binary, outputs in polynomial time
  the type of the $i$th gate of
  $C_{M,T, x_1}$ and the indices $j_1, j_2'$ of the inputs to this gate.
\end{fact}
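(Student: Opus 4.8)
The plan is to use the standard ``tableau'' (Cook--Levin) simulation of a Turing machine by a Boolean circuit, taking care that the resulting circuit has a regular, position-indexed structure so that its individual gates can be generated locally. Fix the machine $M$, which has a constant-size tape alphabet $\Gamma$, a constant-size state set $Q$, and a constant-size transition function. Given $x_1$, $n_2$, and $T$, consider the $T \times T$ array of \emph{tableau cells}, where the cell in row $t \in \{0, \dots, T-1\}$ and column $c \in \{0, \dots, T-1\}$ records the content of tape cell $c$ after $t$ steps of $M$ on $(x_1,x_2)$, together with a flag indicating whether the head is on cell $c$ and, if so, the current state. Since $\Gamma$, $Q$, and the head flag are of constant size, each tableau cell is encoded by a constant number $\beta = O(1)$ of bits. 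The hypothesis $T > n_1 + n_2$ guarantees that the initial configuration (which writes $x_1$ followed by $x_2$, then blanks) occupies fewer than $T$ cells and that over $T$ steps the head never leaves columns $0,\dots,T-1$, so the array faithfully records the first $T$ steps of $M$ on $(x_1,x_2)$.

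Next I would assemble the circuit $C_{M,T,x_1}$ from the array. The $\beta$ bits of the row-$0$ cells are \emph{constant} gates ($\mathsf{true}$ or $\mathsf{false}$), except in the columns holding $x_2$, which become $\mathsf{input}$ gates; the columns holding $x_1$ become constants determined by the corresponding bits of $x_1$, and the remaining columns become constants encoding the blank symbol (with the head flag set only in column $0$). For $t \geq 1$, the locality of Turing-machine transitions means the $\beta$ bits of cell $(t,c)$ are a fixed Boolean function (depending only on $M$) of the $3\beta$ bits of cells $(t-1,c-1)$, $(t-1,c)$, $(t-1,c+1)$ (with the obvious conventions at the boundary columns); each such function is realized by a constant-size sub-circuit built from $\wedge,\vee,\lnot$ gates, and wires are allowed to fan out to the at most three cells that read them. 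Finally one more constant-size gadget reads row $T-1$ and outputs $1$ iff the recorded configuration is accepting. The total gate count is $T$ rows $\times$ $T$ columns $\times$ $O(1)$ gates per cell $= O(T^2) =: N$, and $N$ depends only on $T$ and the fixed machine $M$, not on $x_1$ or $x_2$.

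Finally I would describe $M'$. Fix once and for all a ``lexicographic'' indexing scheme that numbers the gates $1,\dots,N$: first the row-$0$ gates ordered by $(c,\text{bit index})$, then for each $t \geq 1$ and $c$ the constant-size transition sub-circuit for cell $(t,c)$, then the output gadget. Given $i$ in binary --- which has $O(\log N) = O(\log T)$ bits --- $M'$ performs integer arithmetic on $i$ to recover which block it lies in (a row-$0$ cell, a transition sub-circuit for some $(t,c)$, or the output gadget) and the offset within that constant-size block; this is $\poly(\log T)$-time arithmetic. If $i$ indexes a row-$0$ gate in column $c$: if $c$ lies in the $x_1$-region, $M'$ reads the relevant bit of its input $x_1$ and outputs $\mathsf{true}$ or $\mathsf{false}$ with $j_1 = j_2' = \varnothing$; if $c$ lies in the $x_2$-region it outputs $\mathsf{input}$; otherwise it outputs the constant gate encoding a blank. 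If $i$ indexes a gate inside the $(t,c)$ transition sub-circuit (or the output gadget), $M'$ consults the fixed, constant-size description of that sub-circuit --- which is ``compiled in'' since $M$ is fixed --- to read off the gate type and the \emph{local} names of its inputs, then converts those local names (pointing either within the same sub-circuit or to specified bits of cells $(t-1,c-1),(t-1,c),(t-1,c+1)$) to global indices $j_1, j_2'$ by the inverse of the indexing arithmetic. Every step is arithmetic on $O(\log T)$-bit numbers, lookups into constant-size tables, and reading bits of $x_1$, so $M'$ runs in time $\poly(n_1, \log n_2, \log T)$, i.e.\ polynomial in its input length.

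The construction is entirely routine; the only parts requiring care are pinning down the gate-indexing scheme precisely enough that the ``decode $i \mapsto (t,c,\text{offset})$, re-encode neighbors $\mapsto$ global indices'' bookkeeping is manifestly polynomial time, and correctly hard-coding $x_1$ into the first row while leaving the $x_2$-columns as $\mathsf{input}$ gates --- the one place where $x_1$ itself (rather than just its length) enters $M'$'s computation.
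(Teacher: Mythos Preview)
Your proposal is correct and is essentially the same approach the paper takes: the paper's proof is a one-line citation to the standard Cook--Levin tableau construction (Theorem~8.1 of Papadimitriou), noting that the resulting circuit is $O(T^2)$ copies of a constant-size sub-circuit depending only on $M$. Your write-up spells out that construction and the gate-indexing scheme in detail, which is exactly what the cited textbook argument provides.
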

\begin{proof}
  The construction in the proof of Theorem~8.1 of~\cite{Pap94} yields
  a circuit of the desired size. This circuit consists of $O(T^2)$
  copies of a constant-sized circuit $C_{M}$ that depends only on
  $M$. 
\end{proof}
\begin{theorem}[Cook-Levin]\label{thm:cook-levin}
  Let $L$ be a language in $\NTIME(T(n))$. Then the following
  properties hold:
  \begin{enumerate}
  \item For every string $x$ of length $n$, there exists a $\sat$
    formula $\Phi_x$ on $n' = \poly(T(n))$ variables $z_1, \dots, z_{n'}$, such that $x \in L$ iff
    $\Phi_x$ is satisfiable.
    \item There exists a Turing
      machine $R$ that given an input $x$ of
      length $n$, three indices $u_1, u_2, u_3 \in \{1, \dots, n'\}$ in
      binary, and three bits $b_1, b_2, b_3 \in \{0,1\}$, runs in
      $\poly\log(n') = \poly\log(T(n))$ time and outputs $1$ iff the clause
      $(z_{u_1}^{b_1}, z_{u_2}^{b_2}, z_{u_3}^{b_3})$ is included in $\Phi_x$.
    \end{enumerate}
  \end{theorem}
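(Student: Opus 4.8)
The plan is to use the classical tableau construction underlying the Cook--Levin theorem, taking care that the resulting $\sat$ formula is \emph{locally and uniformly} specified so that the index-to-clause machine $R$ runs in polylogarithmic time. Fix a nondeterministic Turing machine $M$ deciding $L$ in time $T = T(n)$, with constant-size tape alphabet $\Gamma$ and state set $Q$, and (without loss of generality) a single binary nondeterministic choice per step. On input $x$ of length $n$, consider the $T \times T$ computation tableau whose rows are the padded configurations of $M$ at each of the first $T$ steps and whose columns index the first $T$ tape cells. For each cell $(i,j)$ introduce a constant number of Boolean variables recording the tape symbol in cell $j$ at time $i$, whether the head is on cell $j$ at time $i$, and if so the current state; additionally, for each time step $i$ introduce one variable $c_i$ encoding $M$'s nondeterministic choice. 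These variables, of which there are $O(T^2)$, are placed in bijection with a prefix of $\mathbb{N}$ via an explicit encoding of the pair (cell coordinates, variable type) as an $O(\log T)$-bit integer that is invertible in $\poly\log(T)$ time; together with the auxiliary variables introduced below this gives $n' = \poly(T(n))$ variables $z_1,\dots,z_{n'}$.

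Next I would write the constraints, all of which are \emph{local}. (i) Initialization: the variables of row $1$ are forced to encode the starting configuration, i.e.\ cell $j$ holds $x_j$ for $j \le n$ and the blank symbol otherwise, the head is on cell $1$, and the state is the start state. (ii) Propagation: the variables of cell $(i,j)$ are determined by those of cells $(i-1,j-1),(i-1,j),(i-1,j+1)$ and the choice bit $c_{i-1}$ via $M$'s constant-size transition relation, expressed by a constant-width CNF over this $O(1)$-variable ``window''. (iii) Acceptance: some cell records the accepting state. Each constant-width clause is converted to an equivalent conjunction of $3$-clauses over $O(1)$ fresh auxiliary variables in the standard way, keeping the total variable and clause count $\poly(T)$. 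Completeness and soundness of this encoding are the textbook tableau argument: satisfying assignments of $\Phi_x$ correspond exactly to accepting computation histories of $M$ on $x$, so $x \in L$ iff $\Phi_x$ is satisfiable, giving part (1).

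For part (2), the machine $R$, given $(u_1,u_2,u_3,b_1,b_2,b_3)$, first inverts the index encoding to recover for each $u_t$ the cell coordinates and variable type it names (or to detect that it is an auxiliary variable and which $3$-clause gadget it belongs to); all of this is arithmetic on $O(\log n') = \poly\log(T)$-bit numbers. It then checks whether there is a single window location --- an initialization cell, a propagation triple $(i,j)$, an acceptance cell, or one $3$-clause gadget --- all three of whose variables lie in that window's variable set; since only $O(1)$ candidate windows contain any given variable, this is a constant number of checks. For the identified window, $R$ recomputes the constant-size list of clauses that the encoding produces there and tests whether $(z_{u_1}^{b_1}\lor z_{u_2}^{b_2}\lor z_{u_3}^{b_3})$ is among them, outputting $1$ iff so. The only dependence on $x$ arises for initialization windows, where $R$ must read the single bit $x_j$; assuming $R$ has random access to its input $x$, this costs $\poly\log$ time, and everything else depends only on the constant object $M$. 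Hence $R$ runs in $\poly\log(T(n)) = \poly\log(n')$ time.

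The part I expect to need the most care is part (2): fixing the index encoding so that ``which cell and type does variable $u$ name'' is genuinely computable and invertible in polylogarithmic time, bounding by $O(1)$ the number of candidate windows meeting a given variable, and adopting the convention that $R$ has random access to $x$ so that reading $x_j$ is not already $\Omega(n)$. Given those conventions the argument is routine, and it is compatible with the uniform circuit-level reductions already recorded in \Cref{fact:p-to-ppoly}.
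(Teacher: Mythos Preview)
Your proposal is correct and takes essentially the same approach as the paper: the paper's proof is simply a one-line citation to the standard Cook--Levin tableau construction in Papadimitriou's textbook (Theorem~8.2 of~\cite{Pap94}), and you have spelled out precisely that construction together with the locality argument needed for the polylogarithmic-time clause indexer $R$.
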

  \begin{proof}
    We follow the proof of Theorem~8.2 of~\cite{Pap94} to obtain the
    $\sat$ instance $\Phi_x$. 
  \end{proof}

\begin{theorem}
  $\succinctsquared$ is complete for $\NEEXP$ under polynomial time
  mapping reductions. That is, for any language $L$ in $\NEEXP$,
  there exists a Turing machine $R$ which takes as input a string $x
  \in \{0,1\}^n$, and in time $\poly(n)$ outputs an instance $\calC_{x}$ of
  $\succinctsquared$, such that $\calC_{x}$ is satisfiable iff $x \in L$.
\end{theorem}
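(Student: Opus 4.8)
The plan is to chain the two structural results already recorded in this section---the Cook--Levin theorem (\Cref{thm:cook-levin}) and the efficient conversion of polynomial-time computations into circuits with local gate-access (\Cref{fact:p-to-ppoly})---while carefully tracking the resulting tower of exponentials. Fix $L \in \NEEXP$, so that $L \in \NTIME(T(n))$ for $T(n) = 2^{2^{n^c}}$ and some constant $c$, and let $x$ be an input of length $n$. First I apply \Cref{thm:cook-levin} to $L$ and $x$: this produces a $\sat$ formula $\Phi_x$ on $n' = \poly(T(n)) = 2^{2^{\poly(n)}}$ variables with $x \in L$ iff $\Phi_x$ is satisfiable, together with a Turing machine $R_0$ that, given $x$, three variable indices $u_1,u_2,u_3 \in \{1,\dots,n'\}$ in binary, and three bits $b_1,b_2,b_3$, decides in time $\poly\log(n') = 2^{\poly(n)}$ whether the clause $(z_{u_1}^{b_1}\lor z_{u_2}^{b_2}\lor z_{u_3}^{b_3})$ belongs to $\Phi_x$. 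Setting $N := \lceil\log n'\rceil = 2^{\poly(n)}$, I pad $\Phi_x$ with dummy variables so it has exactly $2^N$ variables (which does not change satisfiability) and modify $R_0$ to output $0$ on any tuple mentioning a dummy variable or an out-of-range index; the modified machine still runs in time $2^{\poly(n)}$ and locally evaluates a $\succinct$-instance describing $\Phi_x$.

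Next I invoke \Cref{fact:p-to-ppoly} with this machine as $M$, first input $x_1 = x$, size parameter $n_2 = 3N+3 = 2^{\poly(n)}$ for the second input, and a time bound $T = 2^{\poly(n)}$ large enough to exceed both the machine's running time and $n + n_2$. This yields a circuit $\calC' := C_{M,T,x}$ of size $O(T^2) = 2^{\poly(n)}$ that, on a tuple $(u_1,u_2,u_3,b_1,b_2,b_3) \in \{0,1\}^{3N+3}$, outputs $1$ exactly when $(z_{u_1}^{b_1}\lor z_{u_2}^{b_2}\lor z_{u_3}^{b_3})$ is a clause of $\Phi_x$. Hence $\calC'$ has $3N+3$ input bits and size $O(T^2) = \poly(N)$, so it is a legitimate instance of $\succinct$ of instance size $N = 2^{\poly(n)}$, and the formula $\psi_{\calC'}$ it encodes is exactly the padded $\Phi_x$; in particular $\psi_{\calC'}$ is satisfiable iff $x \in L$. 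The ``moreover'' part of \Cref{fact:p-to-ppoly} also supplies a Turing machine $M'$ which, given $x$, $n_2$, and a gate index $i$ of $\calC'$ in binary (so $\poly(n)$ bits, since $i$ ranges up to $O(T^2) = 2^{\poly(n)}$), outputs in time $\poly(n)$ the type of gate $i$ and the indices of its inputs.

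Finally, I hardwire $x$ and $n_2$ into $M'$ and compile the resulting polynomial-time machine into a Boolean circuit $\calC$ of size $\poly(n)$, using the standard conversion of a polynomial-time Turing machine into a polynomial-size circuit (itself a special case of \Cref{fact:p-to-ppoly} with trivial parameters). By construction $\calC$ is a succinct representation of $\calC'$, so $\calC$ is a valid instance of $\succinctsquared$; it is computable from $x$ in time $\poly(n)$; and $\calC$ is a YES-instance iff $\psi_{\calC'}$ is satisfiable iff $x \in L$. Packaging all of these steps into a single Turing machine gives the desired $\poly(n)$-time mapping reduction.

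The Cook--Levin and circuit-conversion invocations are routine; the part requiring genuine care---which I expect to be the main obstacle---is propagating the size parameters through the three ``levels'' ($x \mapsto \Phi_x$ on $2^{2^{\poly(n)}}$ variables $\mapsto$ circuit $\calC'$ of size $2^{\poly(n)}$ $\mapsto$ circuit $\calC$ of size $\poly(n)$) and checking at each stage that the ``polynomial time'' promised by the cited result is polynomial in $n$, rather than in one of the exponentially larger quantities. This works out exactly because every gate index and size parameter is passed in binary. A secondary, purely cosmetic, issue is the padding of $\Phi_x$ to a power-of-two number of variables and the handling of out-of-range indices, neither of which affects satisfiability.
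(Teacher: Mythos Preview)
Your proposal is correct and follows essentially the same route as the paper: apply \Cref{thm:cook-levin} to get the clause-membership machine running in time $2^{\poly(n)}$, invoke \Cref{fact:p-to-ppoly} to turn it into the $\succinct$ circuit $\calC'$ of size $2^{\poly(n)}$ together with a $\poly(n)$-time gate-access machine, and then convert that machine into the $\poly(n)$-size circuit $\calC$. The paper phrases the last step as a second explicit call to \Cref{fact:p-to-ppoly} followed by enumerating all gates, while you say ``hardwire and compile''; and you make the padding to $2^N$ variables explicit while the paper leaves it implicit---but these are cosmetic differences, and your tracking of the size parameters through the three levels is exactly the point the paper emphasizes.
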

\begin{proof}
  Suppose we start with a language $L \in \NEEXP$. This means there is
  a nondeterministic Turing machine $M$ which decides $L$ in time
  $T_0 = 2^{2^{n^c}}$.
  By \Cref{thm:cook-levin}, there exists a Turing machine $R_1$
  which runs in time $T_1 = \poly\log(T_0) = 2^{O(n^c)}$ such that
  given input $x$ and clause indices $i = (u_1, u_2, u_3, b_1, b_2, b_3)$,
  represented as a binary string of length $|i| = \log(\poly(T_0)) =
  \log(2^{O(2^{n^c})}) = O(2^{n^c})$, runs in time polynomial in
  $\ell$ and outputs $1$ iff the corresponding clause exists in a
  $\sat$ formula  $\Phi_x$ such that $x \in L$ iff $\Phi_x$ is satisfiable. 

  Now, if we apply~\Cref{fact:p-to-ppoly} to $R_1$, with $x$ playing
  the role of the first input $x_1$ and $i$ the role of the second
  input, we obtain that for every $x$ there exists a circuit $C_{R_1, T_1,
    x}$ of size $O(T_1^2) = 2^{O(n^c)}$ which takes as input a tuple
  of indices $i$, and runs $R_1$ for time $T_1$ on this time to output
  whether clause $i$ is present in the formula $\Phi_x$. Moreover, there exists a
  Turing machine $R_2$ that, given $x$, the size parameter $|i| =
  O(2^{n^c})$, represented in binary as a string of $O(n^c)$ bits, and an index $j$, represented
  as a string of $O(n^c)$ bits, outputs
  the $j$th gate of $C_{R_1, T_1, x}$ in time $T_2 = \poly(n)$. Note that
  $R_2$ is a Turing machine which takes in input of size $\poly(n)$
  and runs in time $\poly(n)$.

  We are now almost where we need to be. In the final step, we once
  again apply \Cref{fact:p-to-ppoly} to $R_2$, obtaining a third Turing machine $R_3$
  that takes as input $x$ and the size parameters, and an index $k$,
  and generates the $k$th gate of the circuit $C_{R_2, T_2, x}$
  corresponding to running $R_2$ for $T_2$ steps. Finally, by
  fixing the dependence of the size parameters on the size of $x$, and
  iterating through all possible values of the index parameter, we
  obtain a Turing machine $R_3'$ that takes as input $x$ and runs in
  time $\poly(n)$, and outputs the complete description of a $\succinctsquared$ instance $\calC_x$
  with the desired properties.
\end{proof}

\subsection{The Tseitin transformation}

In this section, we introduce the Tseitin transformation, which is a simple method of converting a Boolean circuit into a Boolean formula.

\begin{definition}[Tseitin transformation]
Let~$\calC$ be a Boolean circuit with~$n$ input variables $x_1, \ldots, x_n$ and~$s$ gates.
Then the \emph{Tseitin transformation of $\calC$}, denoted $\calF := \mathrm{Tseitin}(\calC)$, is the Boolean formula defined as follows.
\begin{itemize}
\item[(i)] Introduce new variables $w_1, \ldots, w_s$ corresponding to the output wires of the gates in~$\calC$.
		Then the input variables to $\calF$ consist of  $x_1, \ldots, x_n$ along with $w_1, \ldots, w_s$.
\item[(ii)] Each gate in~$\calC$ operates on one or two variables in $\{x_1, \ldots, x_n, w_1, \ldots, w_s\}$.
		Write $g_i(x, w)$ for the function computed by the $i$-th gate.
		Then $\calF$ computes the intermediate expression
		\begin{equation*}
			z_i := (g_i(x, w) \land w_i) \lor (\overline{g_i(x, w)} \land \overline{w_i}).
		\end{equation*}
		The final output of $\calF$ is $z_1 \land (z_2 \land ( \cdots \land z_s))$.
\end{itemize}
By construction, $\calC(x) = 1$ if and only if there exists a~$w$ such that $\calF(x, w) = 1$
(in particular, $w$ is taken to be the wire values of~$\calC$ on input~$x$).
In addition, $\calF$ contains exactly $7s + (s-1)$ gates, meaning that it has size~$O(s)$.
\end{definition}

Next, we show how to convert Boolean formulas into functions over $\F_q$.

\begin{definition}[Arithmetization]\label{def:arithmetization}
Let $\calF$ be a Boolean formula of~$n$ variables and size~$s$.
The \emph{arithmetization of $\calF$ over $\F_q$}, denoted $\arith{q}{\calF}$, is the formula produced by the following two-step process.
\begin{itemize}
\item[(i)] Transform $\calF$ by replacing all~$\lor$ gates with appropriate~$\land$ and~$\neg$ gates.
\item[(ii)] Transform each Boolean gate into an $\F_q$ gate as follows:
		Replace each~$\land$ gate in $\calF$ with a $\times$ gate.
		Replace each~$\neg$ gate with a $\times -1$ gate followed by a $+1$ gate (enacting the transformation $b \in \F_q \mapsto 1-b$).
		Call the resulting formula $\arith{q}{\calF}$.
\end{itemize}
Set $\calF_{\mathrm{arith}} := \arith{q}{\calF}$.
On inputs $x \in \{0, 1\}^n$, $\calF_{\mathrm{arith}}(x) = \calF(x)$.
On general inputs $x \in \F_q^n$, $\calF_{\mathrm{arith}}(x)$ is computable in time $\poly(s, q)$.
\end{definition}

The following proposition shows that small Boolean formulas have low-degree arithmetizations.

\begin{proposition}[Low-degree arithmetization]\label{prop:low-degree-arithmetization}
Let $\calF$ be a Boolean formula of~$n$ variables, size~$s$, and~$m$ gates.
Then $\arith{q}{\calF}$ is a degree-$s$ polynomial over~$\F_q$.
\end{proposition}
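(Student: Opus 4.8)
The plan is a structural induction on the Boolean formula $\calF$. Since a formula is a tree, every gate is the root of a well-defined subformula, and I would induct on these subformulas. The invariant I would maintain is the (slightly stronger) statement that $\deg\bigl(\arith{q}{\calF}\bigr)$ is at most the number of input/constant leaves of $\calF$; since the number of leaves is bounded by the size $s$, the proposition follows. For the base case, a formula that is a single gate has type $\mathsf{input}$, $\mathsf{true}$, or $\mathsf{false}$, arithmetized as $x_i$, $1$, or $0$, of degree $\le 1$, matching its single leaf.

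For the inductive step, recall that step~(i) of \Cref{def:arithmetization} first rewrites every $\lor$ gate via De Morgan into $\lnot$ and $\land$ gates, so it suffices to consider a top gate that is $\lnot$ or $\land$, with the $\lor$ case subsumed. If the top gate is $\lnot$ applied to a subformula $\calF_1$, then $\arith{q}{\calF} = 1 - \arith{q}{\calF_1}$, since the $\times(-1)$ and $+1$ gadget of step~(ii) leaves the degree unchanged; hence $\deg\arith{q}{\calF} = \deg\arith{q}{\calF_1}$ and $\calF$, $\calF_1$ have the same leaves. If the top gate is $\land$ applied to subformulas $\calF_1, \calF_2$, then $\arith{q}{\calF} = \arith{q}{\calF_1}\cdot\arith{q}{\calF_2}$, so by the inductive hypothesis on each child
\[
\deg\arith{q}{\calF} \le \deg\arith{q}{\calF_1} + \deg\arith{q}{\calF_2} \le \#\mathrm{leaves}(\calF_1) + \#\mathrm{leaves}(\calF_2) = \#\mathrm{leaves}(\calF).
\]
The rewritten $\lor$ case is identical: $a \lor b$ becomes $1 - (1-\arith{q}{a})(1-\arith{q}{b})$, whose degree is again at most the sum of the two subdegrees. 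Taking $\calF$ to be the whole formula completes the induction.

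The argument is bookkeeping rather than hard mathematics; the one point I would be careful about is verifying that the constant-valued gadgets introduced by arithmetization — the $\times(-1)$ and $+1$ gates of step~(ii) and the constant $-1$ they feed into a $\times$ gate — contribute degree $0$, so that only the $\land$ (equivalently $\times$) gates ever combine degrees. This is precisely why the leaf-count (equivalently, size) bound is never lost; in fact the same observation shows the bound can be sharpened to one plus the number of $\land$ and $\lor$ gates, which I would note but not rely on, since degree $s$ is all that is used in the sequel.
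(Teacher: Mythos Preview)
Your proof is correct and essentially identical to the paper's: both argue by structural induction on the formula, observing that $\lnot$ preserves degree while $\land$ (and hence $\lor$ after De~Morgan) adds the two subformula degrees, so the degree is bounded by the number of leaves, which is the size~$s$. The only cosmetic difference is that you make the leaf-count invariant explicit, whereas the paper uses the size parameter~$s$ directly and splits as $s_{\mathrm{left}} + s_{\mathrm{right}} = s$.
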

\begin{proof}
By induction on the number of gates, the base case $(m = 0)$ being trivial.
For the induction hypothesis, assume the proposition holds for Boolean formulas which have fewer than~$m$ gates.
Either the gate at the root of $\calF$ is a $\neg$ gate or an $\{\lor, \land\}$-gate.
In the former case, $\calF = \neg \calF'$ for some Boolean formula with $m-1$ gates,
and so $\arith{q}{\calF} = 1 - \arith{q}{\calF'}$ by construction.
But these have the same degree, and so $\arith{q}{\calF}$ is degree~$s$ by the induction hypothesis.
In the latter case, assume without loss of generality that it is an $\land$-gate. 
Then $\calF= \calF_{\mathrm{left}} \land \calF_{\mathrm{right}}$ for two formulas
of size $s_{\mathrm{left}} + s_{\mathrm{right}} = s$ and fewer than~$m$ gates.
By the induction hypothesis, $\arith{q}{\calF_{\mathrm{left}}}$ has degree-$s_{\mathrm{left}}$ and
$\arith{q}{\calF_{\mathrm{right}}}$ has degree-$s_{\mathrm{right}}$,
and so $\arith{q}{\calF} = \arith{q}{\calF_{\mathrm{left}}} \times \arith{q}{\calF_{\mathrm{right}}}$ has degree~$s$.
\end{proof}

The arithmetization procedure describe in \Cref{def:arithmetization} can also be applied to general Boolean circuits~$\calC$, not just Boolean formulas.
But \Cref{prop:low-degree-arithmetization} does not apply to general circuits;
in fact, the arithmetization of a Boolean circuit can have very high degree, even if that circuit is small.
This motivates using the Tseitin transformation: it allows us to convert a small circuit into a small formula, which has a low-degree arithmetization.



\section{Quantum preliminaries}  \label{sec:prelims}
\subsection{Quantum measurements}
The most general notion of a quantum measurement is a POVM
measurement, which consists of a set of Hermitian operators $\{M_a\}_{a \in S}$
indexed by outcomes $a$ from a set $S$. These satisfy the
conditions
\[ \forall a,\; M_a \succeq 0, \qquad \sum_a M_a = I. \]
To refer to the measurement as a whole
we will use the letter $M$, without the subscript indicating the
outcome. For a state $\ket{\psi}$, the
probability that the measurement $M$ returns outcome $a$ is
\[ \Pr[\ba] = \bra{\psi} M_{\ba} \ket{\psi}. \]
A POVM is said to be \emph{projective} if each element $M_a$ is an
orthogonal projector, i.e. $M_a^2 = M_a$. Note that this implies that
$M_a M_b = 0$ for any $a \neq b$, i.e. that the projectors are
pairwise orthogonal. Naimark's theorem says that any POVM measurement
can be simulated by a projective measurement on an enlarged space.
\begin{theorem}[Naimark]\label{thm:naimark}
  Suppose $\{M_a\}$ is a POVM acting on a Hilbert space $\calH$. Then
  there exists a projective measurement $\{M'_a\}$ acting on the space
  $\calH \ot \calH_{\rmaux}$ together with a state $\ket{\rmaux}$ such
  that for all states $\ket{\psi} \in \calH$ and all outcomes $a$, the
  post-measurement state after applying $M$ and $M'$ is the same:
  \begin{equation} \sqrt{M_a} \ket{\psi}\bra{\psi} \sqrt{M_a} = \tr_{\reg{aux}}(M'_a (\ket{\psi}\bra{\psi}
    \ot \ket{\rmaux}\bra{\rmaux})M'_a). \label{eq:naimark}\end{equation}
  As a consequence, $M$ and $N$ induce the same distribution over
  outcome probabilities:
  \[ \bra{\psi} M_a \ket{\psi} = (\bra{\psi} \ot \bra{\rmaux}) N_a
    (\ket{\psi} \ot \ket{\rmaux}). \]
  Moreover, given any upper-bound $n$ on the number of outcomes of
  $M_a$, there is a universal choice of the state $\ket{\rmaux}$ that
  works for all POVMs $M_a$ with at most $n$ outcomes. The projective measurement
  $M'_a$ and state $\ket{\rmaux}$ together constitute a \emph{Naimark
    dilation} of the POVM $M_a$.
\end{theorem}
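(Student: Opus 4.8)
The plan is to run the standard Naimark dilation, but carrying the post-measurement state along with it rather than only the outcome statistics. I would first reduce to finitely many outcomes: if $\{M_a\}_{a\in S}$ has $|S|\le n$, adjoin dummy outcomes with operator $0$ until $|S|=n$; this leaves the measurement unchanged but lets me fix, once and for all, the ancilla space $\calH_{\rmaux}\cong\mathbb{C}^n$ with a distinguished orthonormal basis $\{\ket a\}_{a\in S}$. Fixing $\calH_{\rmaux}$ and a basis up front is exactly what will make the ``universal $\ket{\rmaux}$'' claim come for free.

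The construction centers on the map $V\colon\calH\to\calH\otimes\calH_{\rmaux}$, $V\ket\psi=\sum_{a\in S}(\sqrt{M_a}\ket\psi)\otimes\ket a$, where $\sqrt{M_a}$ is the Hermitian positive square root, which exists since $M_a\succeq 0$. The only place the POVM axioms enter is the check that $V$ is an isometry: $\bra\phi V^\dagger V\ket\psi=\sum_{a,b}\bra\phi\sqrt{M_a}\sqrt{M_b}\ket\psi\,\langle a|b\rangle=\sum_a\bra\phi M_a\ket\psi=\langle\phi|\psi\rangle$, using orthonormality of $\{\ket a\}$ and $\sum_a M_a=I$. Applying $I_\calH\otimes\proj a$ to $V\ket\psi$ returns precisely $(\sqrt{M_a}\ket\psi)\otimes\ket a$; its density operator is $(\sqrt{M_a}\ket\psi\bra\psi\sqrt{M_a})\otimes\proj a$, so tracing out $\calH_{\rmaux}$ gives $\sqrt{M_a}\ket\psi\bra\psi\sqrt{M_a}$, the asserted post-measurement state, and taking traces of both sides yields $\bra\psi M_a\ket\psi=\lVert\sqrt{M_a}\ket\psi\rVert^2$, the asserted outcome distribution. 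Thus $\{I_\calH\otimes\proj a\}_{a\in S}$ is a bona fide projective measurement — mutually orthogonal projectors summing to $I_\calH\otimes I_{\calH_{\rmaux}}$ — that reproduces $\{M_a\}$ once $V$ has been applied.

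To present ``apply $V$'' in the ``append a fixed state and measure'' form demanded by the statement, I would set $\ket{\rmaux}:=\ket{a_0}$ for a distinguished $a_0\in S$ and observe that $\ket\psi\otimes\ket{\rmaux}\mapsto V\ket\psi$ is an isometry from $\calH\otimes\ket{\rmaux}$ onto $V(\calH)$, two subspaces of $\calH\otimes\calH_{\rmaux}$ of dimension $\dim\calH$; since their orthogonal complements also have equal dimension, it extends to a unitary $W$ on $\calH\otimes\calH_{\rmaux}$ with $W(\ket\psi\otimes\ket{\rmaux})=V\ket\psi$. The Naimark dilation is then $\ket{\rmaux}$ together with the projectors $M'_a:=W^\dagger(I_\calH\otimes\proj a)W$: one appends $\ket{\rmaux}$, applies $W$, and measures $\{I_\calH\otimes\proj a\}$ (equivalently, measures $\{M'_a\}$ directly). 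Because $W(\ket\psi\otimes\ket{\rmaux})=V\ket\psi$, the two identities of the previous paragraph are exactly the two conclusions of the theorem. The ``moreover'' then holds because $\ket{\rmaux}=\ket{a_0}$ and the ambient space $\calH\otimes\mathbb{C}^n$ depend only on $n$: every POVM with at most $n$ outcomes is padded to exactly $n$ and dilated inside this same space against this same $\ket{\rmaux}$, so only $W$ — the physical implementation, not the recorded measurement — is POVM-dependent.

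The step that needs genuine care is the bookkeeping in~\eqref{eq:naimark}: the collapsed state whose $\calH$-register equals $\sqrt{M_a}\ket\psi$ is the one \emph{after} the dilation unitary $W$ has acted, so~\eqref{eq:naimark} must be read as $\tr_{\reg{aux}}\!\big((I_\calH\otimes\proj a)\,W(\ket\psi\bra\psi\otimes\ket{\rmaux}\bra{\rmaux})W^\dagger\,(I_\calH\otimes\proj a)\big)=\sqrt{M_a}\ket\psi\bra\psi\sqrt{M_a}$; folding $W$ into $M'_a$ keeps the \emph{statistics} correct but conjugates the $\calH$-register of the post-measurement state by $W^\dagger$, so one must keep the unitary where it belongs when comparing post-measurement states. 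Everything else — existence and Hermiticity of $\sqrt{M_a}$, the isometry identity, orthogonality and completeness of the $I_\calH\otimes\proj a$, and the extension of the isometry $\ket\psi\otimes\ket{\rmaux}\mapsto V\ket\psi$ to a unitary — is routine linear algebra.
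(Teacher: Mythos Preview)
The paper states Naimark's theorem without proof, treating it as a standard background result; there is nothing to compare your argument against. Your construction is the canonical one and is correct: the isometry $V\ket\psi=\sum_a\sqrt{M_a}\ket\psi\otimes\ket a$, its extension to a unitary $W$ on $\calH\otimes\C^n$, and the projective measurement $\{I\otimes\proj a\}$ together yield the dilation, with the universal ancilla following from the padding step.

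Your closing caveat about~\eqref{eq:naimark} is sharp and worth emphasizing. With $M'_a=W^\dagger(I\otimes\proj a)W$, one has $M'_a(\ket\psi\otimes\ket{\rmaux})=W^\dagger\bigl(\sqrt{M_a}\ket\psi\otimes\ket a\bigr)$, so $\tr_{\reg{aux}}$ of the post-measurement projector is $W^\dagger$-conjugated and need not equal $\sqrt{M_a}\ket\psi\bra\psi\sqrt{M_a}$. In fact no choice of orthogonal projectors $\{M'_a\}$ can make~\eqref{eq:naimark} hold literally for a generic POVM: if it did, linearity would force $M'_a(\ket\psi\otimes\ket{\rmaux})=\sqrt{M_a}\ket\psi\otimes\ket{\phi_a}$ for a fixed orthonormal system $\{\ket{\phi_a}\}$, and summing over $a$ gives $\braket{\phi_b|\rmaux}\cdot I=\sqrt{M_b}$, forcing every $M_b$ to be a scalar. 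So the correct reading is exactly the one you give --- keep $W$ explicit and interpret~\eqref{eq:naimark} as $\tr_{\reg{aux}}\bigl((I\otimes\proj a)\,W(\ket\psi\bra\psi\otimes\ket{\rmaux}\bra{\rmaux})W^\dagger\,(I\otimes\proj a)\bigr)=\sqrt{M_a}\ket\psi\bra\psi\sqrt{M_a}$. This is also how the paper effectively uses the result downstream (only outcome statistics and bipartite correlations matter), so the imprecision in the statement is harmless there.
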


  \begin{theorem}[Partial Naimark]\label{thm:partial-naimark}
    Suppose $\{M_{a_1, a_2}\}$ is a POVM acting on a tensor product
  Hilbert space $\calH = \calH_1 \ot \calH_2$ of the form $M_{a_1,
    a_2} = \Pi_{a_1} \ot A^{a_1}_{a_2}$, where the operators $\{\Pi_{a_1}\}_{a_1}$ is a
  projective measurement. Then there is a Naimark dilation $M'_{a_1,
    a_2}$ and a state $\ket{\rmaux}$ as above, with the property that
  $M'_{a_1, a_2} = \Pi_{a_1} \ot A'^{a_1}_{a_2}$ for projectors
  $A'^{a_1}_{a_2}$ acting on $\calH_2 \ot \calH_{\rmaux}$.
\end{theorem}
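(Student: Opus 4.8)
The plan is to dilate, for each fixed first outcome $a_1$, the ``conditional'' POVM $\{A^{a_1}_{a_2}\}_{a_2}$ on $\calH_2$ using the \emph{universal} version of Naimark's theorem (\Cref{thm:naimark}), and then reattach the given projectors $\Pi_{a_1}$ on $\calH_1$. First I would observe that for each $a_1$ with $\Pi_{a_1}\neq 0$, the family $\{A^{a_1}_{a_2}\}_{a_2}$ really is a POVM on $\calH_2$: each $A^{a_1}_{a_2}\succeq 0$ since $\Pi_{a_1}\ot A^{a_1}_{a_2}=M_{a_1,a_2}\succeq 0$ and $\Pi_{a_1}\neq 0$, and multiplying $\sum_{a_1,a_2}M_{a_1,a_2}=\id$ on the $\calH_1$ factor by the projector $\Pi_{a_1}$ (using $\sum_{a_1}\Pi_{a_1}=\id$ and pairwise orthogonality of the $\Pi$'s) gives $\Pi_{a_1}\ot\big(\sum_{a_2}A^{a_1}_{a_2}-\id_{\calH_2}\big)=0$, hence $\sum_{a_2}A^{a_1}_{a_2}=\id_{\calH_2}$. (If $\Pi_{a_1}=0$ we may redefine the $A^{a_1}_{a_2}$ to be any fixed POVM without changing $M_{a_1,a_2}$.) Taking the second index to range over a common finite set of size $n$, all these POVMs have at most $n$ outcomes, so the ``moreover'' clause of \Cref{thm:naimark} supplies one auxiliary space $\calH_{\rmaux}$ with state $\ket{\rmaux}$ and, for each $a_1$, a projective measurement $\{A'^{a_1}_{a_2}\}_{a_2}$ on $\calH_2\ot\calH_{\rmaux}$ dilating $\{A^{a_1}_{a_2}\}_{a_2}$.

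Next I would set $M'_{a_1,a_2}:=\Pi_{a_1}\ot A'^{a_1}_{a_2}$, viewed on $\calH_1\ot(\calH_2\ot\calH_{\rmaux})=\calH\ot\calH_{\rmaux}$, which has exactly the claimed product form. This is a projective measurement: each $M'_{a_1,a_2}$ is a tensor product of projectors on complementary factors, hence a projector; $M'_{a_1,a_2}M'_{b_1,b_2}=(\Pi_{a_1}\Pi_{b_1})\ot(A'^{a_1}_{a_2}A'^{b_1}_{b_2})$ vanishes both when $a_1\neq b_1$ (orthogonality of the $\Pi$'s) and when $a_1=b_1$, $a_2\neq b_2$ (orthogonality within the projective measurement $\{A'^{a_1}_{\cdot}\}$); and $\sum_{a_1,a_2}M'_{a_1,a_2}=\sum_{a_1}\Pi_{a_1}\ot\big(\sum_{a_2}A'^{a_1}_{a_2}\big)=\sum_{a_1}\Pi_{a_1}\ot\id=\id$.

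It remains to verify the Naimark identity \eqref{eq:naimark}. Since $\Pi_{a_1}$ is a projector, $\sqrt{M_{a_1,a_2}}=\Pi_{a_1}\ot\sqrt{A^{a_1}_{a_2}}$, so the left-hand side equals $(\Pi_{a_1}\ot\sqrt{A^{a_1}_{a_2}})\proj{\psi}(\Pi_{a_1}\ot\sqrt{A^{a_1}_{a_2}})$. On the right-hand side $\Pi_{a_1}$ acts only on $\calH_1$ and hence commutes past $\Tr_{\reg{aux}}$, so it suffices to show that for every operator $\rho$ on $\calH=\calH_1\ot\calH_2$,
\[ \Tr_{\reg{aux}}\big[(\id_{\calH_1}\ot A'^{a_1}_{a_2})(\rho\ot\proj{\rmaux})(\id_{\calH_1}\ot A'^{a_1}_{a_2})\big]=(\id_{\calH_1}\ot\sqrt{A^{a_1}_{a_2}})\,\rho\,(\id_{\calH_1}\ot\sqrt{A^{a_1}_{a_2}}). \]
Both sides are linear in $\rho$ and fix the $\calH_1$ tensor factor, so this reduces to $\rho=X_1\ot X_2$; the $\calH_1$-part is then a spectator and the remaining identity on $\calH_2$ is precisely the Naimark dilation property of $\{A^{a_1}_{a_2}\}_{a_2}$ (first for $X_2$ a rank-one projector as in \Cref{thm:naimark}, then for general $X_2$ by linearity). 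Substituting $\rho=\proj{\psi}$ and conjugating by $\Pi_{a_1}\ot\id$ then matches the two sides, and the stated equality of outcome distributions follows by taking the trace.

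The only real subtlety is the one already flagged: a \emph{single} auxiliary state $\ket{\rmaux}$ must serve all of the conditional POVMs $\{A^{a_1}_{a_2}\}_{a_2}$ simultaneously, which is exactly why we invoke the universal form of Naimark's theorem and bound the number of outcomes; everything else is bookkeeping of tensor factors (ensuring $\Pi_{a_1}$ passes through the partial trace over $\calH_{\rmaux}$), together with the harmless edge case $\Pi_{a_1}=0$.
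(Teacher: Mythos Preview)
Your proposal is correct and follows essentially the same approach as the paper: dilate each conditional POVM $\{A^{a_1}_{a_2}\}_{a_2}$ via the universal Naimark theorem with a single auxiliary state $\ket{\rmaux}$, set $M'_{a_1,a_2}=\Pi_{a_1}\ot A'^{a_1}_{a_2}$, and verify the Naimark identity using that $\Pi_{a_1}$ commutes past the partial trace over $\calH_{\rmaux}$. Your write-up is in fact more careful than the paper's (you spell out why $\{A^{a_1}_{a_2}\}_{a_2}$ is a POVM, handle the $\Pi_{a_1}=0$ edge case, and check projectivity of $M'$), but the strategy is identical.
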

\begin{proof}
  The condition that $\{\Pi_{a_1}\}$ forms a projective measurement
  implies that for each $a_1$, $\{A^{a_1}_{a_2}\}$ is a
  POVM. By~\Cref{thm:naimark}, for each $a_1$ there exists a POVM
  $A'^{a_1}_{a_2}$ dilating $A^{a_1}_{a_2}$, and all of these POVMs
  act on the same universal auxiliary state $\ket{\rmaux}$. Now,
  define $M'_{a_1, a_1} = \Pi_{a_1} \ot A'^{a_1}_{a_2}$. For every
  state $\ket{\psi}$, we have that
  \begin{align*}
    \tr_{\reg{aux}}(M'_{a_1, a_2} (\ket{\psi}\bra{\psi}) M'_{a_1,
    a_2}) &= \tr_{\reg{aux}}( A'^{a_1}_{a_2} (\Pi_{a_1} \ket{\psi}
            \bra{\psi} \Pi_{a_1} \ot \ket{\rmaux} \bra{\rmaux}) A'^{a_1}_{a_2}) \\
          &= \sqrt{A^{a_1}_{a_2}} (\Pi_{a_1} \ket{\psi} \bra{\psi}
            \Pi_{a_1}) \sqrt{A^{a_1}_{a_2}} \\
          &= \sqrt{M_{a_1, a_2}} \ket{\psi} \bra{\psi} \sqrt{M_{a_1,
            a_2}},
  \end{align*}
  where in going from the first to the second line we used~\Cref{thm:naimark}.
\end{proof}

For the purposes of this paper, we will need to specialize the POVM
notation introduced above in several ways. First, we will often work
with families of POVM measurements indexed by questions in an
interactive proof protocol. These will be denoted $\{M^q_a\}$, where
$q$ indexes the question and $a$ the outcome.
(We note that the reverse convention ``$\{M_q^a\}$", which we will \emph{not} use, is also common in the literature.)
In many cases, the outcomes will consist of
tuples of elements, some of which we may wish to discard. We use the
convention that if an outcome element is not written, it
is understood to be \emph{summed} over. Thus, if $\{M^{x}_{a,b}\}$
is a family of POVMs, we would have
\begin{equation}\label{eq:spider-man}
M^{x}_{a} := \sum_b M^{x}_{a,b}.
\end{equation}

\begin{notation}\label{not:marginalize}
We will also often consider situations where some of the information
in a measurement outcome is discarded. In particular, given a
POVM $\{M_f\}$ whose outcomes are functions $f:U \to V$ over some
domain $U$, and given a point $x \in U$, we will denote by $\{M_{f(x)=
  y}\}_{y \in V}$ the measurement corresponding to applying $M$ to
obtain a function $f$, and returning the value of $f$ at
$x$. Formally, the POVM elements of this measurement are given by
\[ M_{[f(x) = a]} = \sum_{f: f(x) = a} M_f. \]
(We note that \Cref{eq:spider-man} can be viewed as a special case
in which the ``discarding function"~$f$ simply removes the second coordinate.
For this case, it is simpler to use the convenient notation in \Cref{eq:spider-man}
than the more cumbersome bracket notation given here.)
\end{notation}

The following lemma contains a useful fact about marginalized
projective measurements.
\begin{lemma}\label{lem:marginal-tensor-product}
  Let $M_{a,b}$ be a projective measurement on a tensor product
  Hilbert space $\calH_1 \ot \calH_2$, and suppose that for all $a$, $M_{a} = A_a
  \ot B_a$ where $A_a$ is a rank-one matrix on $\calH_1$. Then for all
  $a, b$, $M_{a,b} = A_a \ot C_{a,b}$ with $C_{a,b}$ projectors.
\end{lemma}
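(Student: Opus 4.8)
The plan is to exploit the fact that the marginal $M_a := \sum_b M_{a,b}$ is itself an orthogonal projector. Indeed, since $\{M_{a,b}\}$ is a projective measurement, the $M_{a,b}$ are pairwise-orthogonal projectors, so $M_a^2 = \sum_{b,b'} M_{a,b} M_{a,b'} = \sum_b M_{a,b} = M_a$, and $M_a$ is Hermitian. Thus the hypothesis says precisely that $A_a \ot B_a$ is an orthogonal projector in which $A_a$ has rank one.

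First I would reduce to the clean case $A_a = \proj{u}$ for a unit vector $\ket{u}$. If $A_a = 0$ then $M_a = 0$, and since $0 \preceq M_{a,b} \preceq M_a$ — the inequality on the right holding because $M_a - M_{a,b} = \sum_{b' \neq b} M_{a,b'} \succeq 0$ — every $M_{a,b}$ vanishes and the claim is trivial with $C_{a,b} = 0$. Otherwise $A_a$ is a nonzero rank-one positive operator, hence a positive scalar times a rank-one orthogonal projector; and since $A_a \ot B_a$ is a projector, comparing the nonzero eigenvalues of the two sides forces that scalar to be $1$, so after absorbing it we may take $A_a = \proj{u}$ (and then, similarly, $B_a$ is itself a projector, though we will not need this). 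This normalization is the one point that requires a word of care.

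The main step is then a support argument. From $0 \preceq M_{a,b} \preceq M_a$ and the fact that the positive-semidefinite order respects supports, the support of $M_{a,b}$ is contained in that of $M_a = \proj{u} \ot B_a$, which lies inside $\mathrm{span}\{\ket{u}\} \ot \calH_2$. Since $M_{a,b}$ is Hermitian and supported on this subspace, $M_{a,b} = (\proj{u} \ot I)\,M_{a,b}\,(\proj{u} \ot I) = \proj{u} \ot C_{a,b}$, where $C_{a,b} := (\bra{u} \ot I)\,M_{a,b}\,(\ket{u} \ot I)$ acts on $\calH_2$. Finally $C_{a,b}$ inherits Hermiticity from $M_{a,b}$, and $M_{a,b}^2 = M_{a,b}$ gives $\proj{u} \ot C_{a,b}^2 = \proj{u} \ot C_{a,b}$, so cancelling the nonzero tensor factor yields $C_{a,b}^2 = C_{a,b}$; hence $C_{a,b}$ is an orthogonal projector and $M_{a,b} = A_a \ot C_{a,b}$, as required. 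I expect the only mildly delicate point to be the reduction $A_a \mapsto \proj{u}$; once that is in place, the projectivity of $\{M_{a,b}\}$ together with support-containment under the positive-semidefinite order does all the remaining work.
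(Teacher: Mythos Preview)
Your proof is correct and reaches the conclusion by a somewhat cleaner route than the paper's. The paper spectrally decomposes each $M_{a,b}$ into rank-one pieces, Schmidt-decomposes those, and then takes the partial trace over $\calH_2$ of the identity $\sum_b M_{a,b} = \proj{\psi_a} \ot B_a$; positivity of the resulting sum forces every $\calH_1$-side Schmidt vector to lie along $\ket{\psi_a}$, whence $M_{a,b} = \proj{\psi_a} \ot C_{a,b}$. You instead observe directly that $0 \preceq M_{a,b} \preceq M_a$, which confines the support of $M_{a,b}$ to $\mathrm{span}\{\ket{u}\} \ot \calH_2$ and immediately yields the tensor form. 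Your argument avoids the Schmidt machinery entirely, and you also carry out the final verification that $C_{a,b}$ is a projector, which the paper's proof leaves implicit. One small wobble: the claim that the scalar $c$ in $A_a = c\,\proj{u}$ is ``forced to be $1$'' is not literally right (one could have $c \neq 1$ with the nonzero eigenvalues of $B_a$ equal to $1/c$); but this is harmless, since the paper itself simply writes $A_a = \proj{\psi_a}$, and the lemma's conclusion only makes sense under that normalization anyway.
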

\begin{proof}
  By the Schmidt decomposition, we can write $M_{a,b}$ as
  \begin{align*}
    M_{a,b} &= \sum_j \ket{\psi_{a,b,j}}\bra{\psi_{a,b,j}} \\
    &= \sum_{j,k}  \sigma_{jk}              \ket{u_{a,b,j,k}}\bra{u_{a,b,j,k}}
      \ot\ket{v_{a,b,j,k}}\bra{v_{a,b,jk}}.
  \end{align*}
  Write the rank-one matrix $A_a$ as an outer product
  $\ket{\psi_a}\bra{\psi_a}$. Then we have
  \[ \sum_b M_{a,b} = \sum_{j,k, b} \sigma_{jk}              \ket{u_{a,b,j,k}}\bra{u_{a,b,j,k}}
      \ot\ket{v_{a,b,j,k}}\bra{v_{a,b,jk}} = 
      \ket{\psi_a}\bra{\psi_a} \ot B_a. \]
    Taking the partial trace on the $B$ system, we have
    \[  \tr_B(\sum_b M_{a,b}) = \sum_{j,k,b} \sigma_{jk}
      \ket{u_{a,b,j,k}}\bra{u_{a,b,j,k}} =
      \ket{\psi_a}\bra{\psi_a}. \]
    Suppose we multiply on the left by $\bra{v}$ and on the right by
    $\ket{v}$, for $\ket{v}$ orthogonal to $\ket{\psi_a}$. Then the
    RHS is 0 while the LHS is a sum $\sum_{j,k,b} \sigma_{jk}
    |\braket{v |u_{a,b,j,k}}|^2$ of nonnegative terms. Hence, each of
    these terms must be zero. Thus, the equation can only hold if all
    the vectors $\ket{u_{a,b,j,k}}$ are multiples of
    $\ket{\psi_a}$. This implies that
    \[ M_{a,b} = \ket{\psi_a} \bra{\psi_a} \ot C_{a,b}\]
    for some $C_{a,b}$ as desired.
\end{proof}

\subsection{Nonlocal games and $\MIP^*$}

Now, we augment \Cref{def:two-player-one-round,def:mip-protocols}
to allow for provers to share quantum resources.

\begin{definition}
Given a game~$\game$, a \emph{quantum strategy}
is one in which Alice and Bob are allowed to share entanglement but not to communicate.
We can model their behavior with the \emph{strategy} $\calS= (\rho, A, B)$.
Here,
\begin{itemize}
\itemsep -.5pt
\item[$\circ$] Write $\mathcal{H}_A$ for Alice's local Hilbert space and $\mathcal{H}_B$ for Bob's.
		Then $\rho$ is a (possibly entangled) state in $\mathcal{L}(\mathcal{H}_A \otimes \mathcal{H}_B)$.
\item[$\circ$] The set $A$ contains a matrix $A_a^x$ for each question~$x$ and answer~$a$,
		with the guarantee that for each question~$x$, $A^x := \{A_a^x\}_a$ is a POVM. (Likewise for~$B$.)
\end{itemize} 
Alice and Bob perform their strategy as follows:
given question~$x$, Alice performs the POVM $\{A_a^x\}_a$ and returns her measurement outcome to the verifier.
Bob plays similarly.
The \emph{value} of their strategy, denoted $\valstrat{\game}{\calS}$, is the probability that they pass the test,
over the randomness in~$\game$ and in their measurement outcomes.
\begin{align*}
\valstrat{\game}{\calS}
&= \E_{(\bx_0, \bx_1) \sim \mathrm{Alg}_{\mathrm{Q}}} \Pr_{\ba_0, \ba_1}[\mathrm{Alg}_{\mathrm{A}}(\bx_0, \bx_1, \ba_0, \ba_1) = 1]\\
&= \E_{(\bx_0, \bx_1) \sim \mathrm{Alg}_{\mathrm{Q}}} \sum_{\substack{a_0, a_1,\\\mathrm{Alg}_{\mathrm{A}}(\bx_0, \bx_1, a_0, a_1) = 1}}
	\tr(A_{a_0}^{\bx_0} \otimes A_{a_1}^{\bx_1} \cdot \rho),
\end{align*}
where in the first line, $(\ba_0, \ba_1)$ is the distribution on answers given questions~$\bx_0, \bx_1$.
We write $\val{\game}$ for the infimum of $\valstrat{\game}{\calS}$ over all strategies~$\calS$.
We define value analogously for interactive proofs.

We say that $L \in \MIP_{c, s}^*$ if there is an quantum interactive proof~$\game$ that decides it.
This means that the following three conditions are true.
\begin{itemize}
\item[$\circ$] (Completeness) Suppose $\mathsf{input} \in L$.  Then there is a quantum strategy for~$\game$ with value at least~$c$.
\item[$\circ$] (Soundness) Suppose $\mathsf{input} \notin L$.  Then every quantum strategy for~$\game$ has value at most~$s$.
\item[$\circ$] All of $\qlength{\game}$, $\alength{\game}$, $\qtime{\game}$, and $\atime{\game}$ are $\poly(n)$.
\end{itemize}
If $c - s$ is a constant, then we will suppress the dependence on them and just say that $L \in \MIP^*$.
\end{definition}

\begin{remark}
A game~$\game$ is \emph{symmetric} if its distribution on questions treats Alice and Bob symmetrically.
In this case, we may assume without loss of generality that Alice and Bob's strategies are also symmetric,
i.e.\ that $A_a^x = B_a^x$ for all questions~$x$ and answers~$a$.
This allows us to represent their measurements by a single set of matrices $M$ (for which $M_a^x = A_a^x = B_a^x$).
As a further simplification, by applying Naimark's dilation theorem to Alice and Bob's strategy 
we can assume that their shared state~$\psi$ is pure and their measurements are projectors.
\end{remark}

Occasionally, it will be useful to speak of the distribution over
measurement outcomes induced by a strategy independently of any
particular game. For this, we introduce the notion of a bipartite correlation
\begin{definition}
  Given a strategy $\calS = (\rho, A, B)$, the bipartite correlation produced by it is
  the function $P(a, b | x, y) = \tr(A^x_a \ot B^x_y \cdot \rho)$.
\end{definition}
If two strategies produce the same bipartite correlation, they have
the same value for any game they are used for. Naimark's theorem
(\Cref{thm:naimark}) implies for any strategy $\calS$, there exists a
strategy $\calS'$ using only projective measurements that produces the
same correlation:

\begin{corollary}[Naimark's theorem for strategies]\label{cor:bipartite-naimark}
  Suppose $\{M_a\}$ and $\{N_b\}$ are two POVMs acting on the $A$ and
  $B$ factors of a tensor Hilbert space $\calH_{A} \ot \calH_{B}$,
  respectively. Then for any Naimark dilation of $\{M_a\}$ given by
  projectors $M'_a$ and an auxiliary state $\ket{\rmaux_A} \in
  \calH_{\rmaux_A}$, and any Naimark dilation of $\{N_b\}$ given by
  projectors $\N'_b$ and an auxiliary state
  $\ket{\rmaux_B} \in \calH_{\rmaux_B}$, it holds that for any bipartite state $\ket{\psi} \in
  \calH_{A} \ot \calH_{b}$, the post-measurement state after applying
  $M \ot N$ to $\ket{\psi}$ and $M' \ot N'$ to $\ket{\psi} \ot
  \ket{\rmaux_A} \ot \ket{\rmaux_B}$ is the same:
  \[ \sqrt{M_a} \ot \sqrt{N_b}  \ket{\psi} \bra{\psi}  \sqrt{M_a} \ot
    \sqrt{N_b} = \tr_{\rmaux}[ (M'_a \ot N'_b) (\ket{\psi}\bra{\psi}
    \ot \ket{\rmaux_{A}}\bra{\rmaux_{A}} \ot
    \ket{\rmaux_B}\bra{\rmaux_B} )(M'_a \ot N'_b)]. \]
  Moreover, such dilations exist by \Cref{thm:naimark}.
  As a consequence, $M, N$ and $M', N'$ induce the same joint
  distribution over outcome probabilities:
  \[ \bra{\psi} M_a \ot N_b \ket{\psi} = (\bra{\psi} \ot
    \ket{\rmaux_A} \ot \ket{\rmaux_B}) M'_a \ot N'_b (\ket{\psi} \ot
    \ket{\rmaux_A} \ot \ket{\rmaux_B}). \]
\end{corollary}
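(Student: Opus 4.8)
The plan is to derive this corollary from \Cref{thm:naimark} by dilating the two POVMs one at a time, using that a Naimark dilation of $\{M_a\}$ is supported on $\calH_A \ot \calH_{\rmaux_A}$ while a dilation of $\{N_b\}$ is supported on $\calH_B \ot \calH_{\rmaux_B}$, so that the two dilated (projective) measurements act on disjoint tensor factors and therefore commute. The first step is to promote the single-system Naimark identity to an identity of superoperators. By hypothesis $\sqrt{M_a}\,\proj{\phi}\,\sqrt{M_a} = \tr_{\rmaux_A}[M'_a(\proj{\phi} \ot \proj{\rmaux_A})M'_a]$ for every $\ket{\phi} \in \calH_A$; since both sides are linear in the argument and the operators $\proj{\phi}$ span $\mathcal{L}(\calH_A)$ (via complex linear combinations, i.e.\ polarization), this extends to $\sqrt{M_a}\,X\,\sqrt{M_a} = \tr_{\rmaux_A}[M'_a(X \ot \proj{\rmaux_A})M'_a]$ for all $X \in \mathcal{L}(\calH_A)$. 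Tensoring with $I_B$ and expanding an arbitrary operator on $\calH_A \ot \calH_B$ as a sum of product operators then gives, for every $\ket{\psi} \in \calH_A \ot \calH_B$,
\[
(\sqrt{M_a} \ot I_B)\,\proj{\psi}\,(\sqrt{M_a} \ot I_B) = \tr_{\rmaux_A}\big[(M'_a \ot I_B)(\proj{\psi} \ot \proj{\rmaux_A})(M'_a \ot I_B)\big],
\]
and, by the same argument, the analogous identity for $\{N_b\}$ on the $B$ side.

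Next I would conjugate the displayed identity by $I_A \ot \sqrt{N_b}$. Since $I_A \ot \sqrt{N_b}$ acts trivially on $\calH_{\rmaux_A}$ it commutes past $\tr_{\rmaux_A}$, and since it acts only on $\calH_B$ while $M'_a \ot I_B$ acts only on $\calH_A \ot \calH_{\rmaux_A}$, the two commute inside the trace. Applying the $\{N_b\}$-side identity (again extended bilinearly, now to operators on $(\calH_A \ot \calH_{\rmaux_A}) \ot \calH_B$) to $(M'_a \ot I_B)(\proj{\psi} \ot \proj{\rmaux_A})(M'_a \ot I_B)$, and using once more that $M'_a$ and $N'_b$ are supported on disjoint subsystems, collapses everything to
\[
(\sqrt{M_a} \ot \sqrt{N_b})\,\proj{\psi}\,(\sqrt{M_a} \ot \sqrt{N_b}) = \tr_{\rmaux}\big[(M'_a \ot N'_b)(\proj{\psi} \ot \proj{\rmaux_A} \ot \proj{\rmaux_B})(M'_a \ot N'_b)\big],
\]
with $\tr_{\rmaux} = \tr_{\rmaux_A} \circ \tr_{\rmaux_B}$, which is the claim; existence of the dilations is immediate from \Cref{thm:naimark} applied to each POVM separately. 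For the stated consequence about outcome probabilities I would take the trace of both sides: on the left $\sqrt{M_a} \ot \sqrt{N_b}$ is Hermitian with square $M_a \ot N_b$, so its trace against $\proj{\psi}$ is $\bra{\psi}M_a \ot N_b\ket{\psi}$; on the right $M'_a \ot N'_b$ is a projector, so cyclicity of the trace yields $(\bra{\psi} \ot \bra{\rmaux_A} \ot \bra{\rmaux_B})(M'_a \ot N'_b)(\ket{\psi} \ot \ket{\rmaux_A} \ot \ket{\rmaux_B})$.

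I expect no genuine obstacle here: the whole content is that the two Naimark dilations can be carried out independently because they live on complementary systems. The only thing demanding care is the tensor-factor bookkeeping — tracking that $\tr_{\rmaux_A}$ commutes with operators supported on $\calH_A \ot \calH_B$, that the bilinear extension of the single-system identity applies verbatim over an arbitrary additional tensor factor, and that $M'_a \ot I_B$ and $I \ot N'_b$ act on complementary subsystems — all of which is routine.
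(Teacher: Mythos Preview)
Your proposal is correct and follows essentially the same approach as the paper: apply the Naimark identity from \Cref{thm:naimark} once on each factor, using that the two dilations live on disjoint tensor subsystems and that $\tr_{\rmaux} = \tr_{\rmaux_B}\circ\tr_{\rmaux_A}$. Your treatment is in fact slightly more careful than the paper's, since you make explicit the polarization/linearity step needed to extend the pure-state identity of \Cref{thm:naimark} to arbitrary operators (and hence across an added spectator factor), whereas the paper's proof invokes this extension implicitly.
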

\begin{proof}
  The existence of such dilations follows immediately
  from~\Cref{thm:naimark}. To deduce the equality of post-measurement
  states, we apply~\Cref{eq:naimark} twice, and use the fact that the partial
  trace composes, i.e. that $\tr_{\rmaux}[\cdot] = \tr_{\rmaux_B}[\tr_{\rmaux_A}[\cdot]]$.
  \begin{align*}
    &\tr_{\rmaux}[ (M'_a \ot N'_b) (\ket{\psi}\bra{\psi}
    \ot \ket{\rmaux_{A}}\bra{\rmaux_{A}} \ot
      \ket{\rmaux_B}\bra{\rmaux_B}) (M'_a \ot N'_b)]\\
    &\qquad= \tr_{\rmaux_B}[
                                                    \sqrt{M_a \ot I}
                                                    ((I \ot N'_b) \ket{\psi}
                                                    \ket{\rmaux_A}
                                                    \bra{\rmaux_A}
                                                    \bra{\psi} (I \ot
                                                    N'_b)) \sqrt{M_a
                                                    \ot I}] \\
    &\qquad= \sqrt{I \ot N_b} \sqrt{M_a \ot I} \ket{\psi} \bra{\psi}
      \sqrt{M_a \ot I} \sqrt{I \ot N_b}.\qedhere 
  \end{align*}
\end{proof}

\subsection{Pauli matrices and the EPR state}
\label{sec:paulis}
Over a finite field $\F_q$ with order $q = p^t$ for prime $p$, the single-qudit Pauli matrices are a set
of unitary matrices acting on $\mathbb{C}^q$. Every Pauli matrix can
be uniquely written as a product $\omega^{a} X(x) Z(z)$, where $\omega$ is the $p$-th root
$\omega = e^{2\pi/p}$, and $X(x)$ and $Z(z)$ are the matrices
\begin{equation}
  X(x) = \sum_{j \in \F_q} \ket{j+x}\bra{j}, \qquad Z(z) =
  \sum_{j \in \F_q} \omega^{\tr[z j]}\ket{j}
  \bra{j}, \label{eq:pauli-definition} \end{equation}
where the arguments $x, z$ are in $\F_q$, $\tr: \F_q \to \F_p$ is the finite field
trace. The set of all Pauli matrices form a group, known as the Pauli
group or the Weyl-Heisenberg group. For the most part,
in this paper, it will suffice to consider only the group elements of
the form
$\omega^a X(x)$ (``$X$-type'' Paulis) and $Z(z)$ (``$Z$-type''
Paulis). Elements of the form $\omega^a X(x)Z(z)$ for $x, z \neq 0$
are sometimes called ``$Y$-type''.

The eigenvalues of $X(x)$ and $Z(z)$ for
all $x, z$ are powers
of $\omega$, as can be seen from the facts $X(x)^p = Z(z)^p =
I$. Any unitary with this property is known as a \emph{(generalized)
  observable}. Every generalized observable $U$ induces a projective
measurement with $p$ outcomes, corresponding to the $p$ possible eigenvalues of
$U$. As a convenient shorthand, we will refer to performing this projective
measurement as ``measuring $U$.''  In the case of of the $X$ and $Z$
operators, the
eigenvectors $\ket{\tau^X_u}$ and $\ket{\tau^Z_u}$ of $X(1)$ and $Z(1)$ are indexed by elements $u$ of $\F_q$,
with eigenvalue $\tr[u]$; thus, each eigenvalue occurs with
multiplicity $q/p$. Explicitly, they are given by
\begin{equation}\label{eq:pauli-eigenstates}
  \ket{\tau^X_u} = \frac{1}{\sqrt{q}} \sum_{v \in \F_q}
  \omega^{-\tr[uv]} \ket{v},\qquad \ket{\tau^Z_u} = \ket{u}.
  \end{equation}
We denote the projectors onto these eigenvectors by $\tau^X_u$ and
$\tau^Z_u$, respectively. These eigenvectors are also the eigenvectors
of the remaining $X(x), Z(z)$ observables, as shown by the following fact.
\begin{fact}
  For $W \in \{X, Z\}$, the observables $W(v)$ are related to the
  projectors $\tau^W_u$ by
  \begin{align}
    W(v) &= \sum_{u} \omega^{\tr[u \cdot v]}
           \tau^W_u \label{eq:pauli-obs-to-proj} \\
    \tau^W_u &= \E_{\bv} \omega^{-\tr[u \cdot \bv]} W(\bv). \label{eq:pauli-proj-to-obs}
  \end{align}
\end{fact}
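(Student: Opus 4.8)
The plan is to establish \Cref{eq:pauli-obs-to-proj} first, by exhibiting it as the spectral decomposition of the observable $W(v)$ in the $W$-basis, and then to obtain \Cref{eq:pauli-proj-to-obs} from it by Fourier inversion over $\F_q$.

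For $W = Z$, \Cref{eq:pauli-obs-to-proj} is immediate: the operator $Z(v) = \sum_{j \in \F_q}\omega^{\tr[v j]}\ket{j}\bra{j}$ of \Cref{eq:pauli-definition} is already written in the standard basis, which is exactly the $Z$-basis, and $\tau^Z_u = \ket{u}\bra{u}$ by \Cref{eq:pauli-eigenstates}, so the right-hand side of \Cref{eq:pauli-obs-to-proj} equals $Z(v)$ after renaming the summation index. For $W = X$, I would first record that the vectors $\ket{\tau^X_u}$ of \Cref{eq:pauli-eigenstates} form an orthonormal basis of $\mathbb{C}^q$: a direct computation gives $\braket{\tau^X_u \mid \tau^X_{u'}} = q^{-1}\sum_{w \in \F_q}\omega^{\tr[(u - u')w]}$, which equals $\delta_{u,u'}$ by \Cref{fact:averages-to-zero} (for the case $u \ne u'$) together with the trivial case $u = u'$. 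Next I would apply $X(v) = \sum_{j}\ket{j+v}\bra{j}$ to $\ket{\tau^X_u}$ and reindex the resulting sum; using $\F_p$-linearity of $\tr$ this yields $X(v)\ket{\tau^X_u} = \omega^{\tr[u v]}\ket{\tau^X_u}$, so each $\ket{\tau^X_u}$ is an eigenvector of $X(v)$ with eigenvalue $\omega^{\tr[uv]}$. Since the $\ket{\tau^X_u}$ form an orthonormal basis, the spectral theorem gives $X(v) = \sum_u \omega^{\tr[uv]}\ket{\tau^X_u}\bra{\tau^X_u} = \sum_u \omega^{\tr[uv]}\tau^X_u$, which is \Cref{eq:pauli-obs-to-proj}.

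For \Cref{eq:pauli-proj-to-obs} I would substitute the just-proved identity \Cref{eq:pauli-obs-to-proj} (valid for both $W \in \{X,Z\}$) into the right-hand side and exchange the two finite sums:
\[ \E_{\bv \in \F_q}\,\omega^{-\tr[u\bv]}\,W(\bv) \;=\; \sum_{u'}\tau^W_{u'}\;\E_{\bv \in \F_q}\,\omega^{\tr[(u'-u)\bv]} \;=\; \sum_{u'}\tau^W_{u'}\,\delta_{u',u} \;=\; \tau^W_u, \]
where the second equality is \Cref{fact:averages-to-zero} applied with $a = u' - u$ (plus the trivial $a = 0$ case). I do not expect any real obstacle here; the only points needing care are the orthonormality of the $X$-eigenbasis and the bookkeeping of the index shift $j \mapsto j + v$ induced by $X(v)$, both of which reduce to elementary manipulations with the characteristic-$p$ trace. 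The same argument carries over verbatim to the $M$-qudit Pauli observables, replacing $\tr[uv]$ by $\tr[\langle u,v\rangle]$ and invoking \Cref{fact:averages-to-zero-subspace} in place of \Cref{fact:averages-to-zero}.
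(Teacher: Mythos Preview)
Your proof is correct and, for \Cref{eq:pauli-proj-to-obs} and the $W=Z$ case of \Cref{eq:pauli-obs-to-proj}, identical to the paper's. For the $W=X$ case of \Cref{eq:pauli-obs-to-proj} you take a slightly different route: you verify that each $\ket{\tau^X_u}$ is an eigenvector of $X(v)$ with eigenvalue $\omega^{\tr[uv]}$, check orthonormality, and invoke the spectral theorem; the paper instead expands $\sum_u \omega^{\tr[ux]}\tau^X_u$ directly as a triple sum over $u,v,v'$ in the standard basis and collapses it to $\sum_v \ket{v+x}\bra{v}$ using \Cref{fact:averages-to-zero}. Both arguments are elementary and of comparable length; yours is a bit more conceptual (it explains \emph{why} the identity holds as a spectral decomposition), while the paper's is purely computational but avoids separately establishing orthonormality of the $X$-basis.
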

\begin{proof}
  We start with~\Cref{eq:pauli-obs-to-proj}. For $W = Z$, the relation
  follows immediately from the definitions. For $W = X$, by
  calculation we have:
  \begin{align*}
    \sum_{u} \omega^{\tr[u \cdot x]} \tau^X_u &= \frac{1}{q} \sum_{u,
                                                v, v'} \omega^{\tr[u
                                                \cdot x]}
                                                \omega^{\tr[u(v -
                                                v')]} \ket{v'}\bra{v}
    \\
    &= \sum_{v, v'} \E_{\bu} \omega^{\tr[\bu \cdot (x + v - v')]}\ket{v'}\bra{v}\\
                                              &= \sum_{v} \ket{v+x}
                                                \bra{v} \\
                                              &= X(x),
  \end{align*}
  where we have applied \Cref{fact:averages-to-zero} in passing from
  the second to the third line.
  Now we show \Cref{eq:pauli-proj-to-obs}:
  \begin{equation*}
    \E_{\bv} \omega^{-\tr[u\cdot \bv]} W(\bv) = \E_{\bv} \sum_{a}
                                         \omega^{-\tr[u \cdot \bv]}
                                         \omega^{\tr[\bv \cdot a]}
                                         \tau^W_{a}
                                         = \E_{\bv} \sum_{a}
                                         \omega^{\tr[(a - u) \cdot \bv]}
                                         \tau^W_{a}
                                       = \tau^W_{u},
  \end{equation*}
  where we first applied \Cref{eq:pauli-obs-to-proj} in the first
  equality, and then used \Cref{fact:averages-to-zero} to perform the
  expectation over $\bv$.
\end{proof}
The Pauli matrices obey the commutation
relation
\begin{equation}\label{eq:commutation-relations}
 X(x)Z(z) = \omega^{-\tr[xz]} Z(z)X(x).
 \end{equation}
This follows directly from \Cref{eq:pauli-definition}. It follows from
this that all of the Pauli matrices (including the $Y$-type matrices)
are generalized observables.

The maximally entangled state, or EPR state, over qudits of dimension
$q$ is the state
\[ \ket{\epr_q} = \frac{1}{\sqrt{q}} \sum_{u \in \F_q} \ket{u} \ot \ket{u}. \]
We will write $\ket{\epr_q^n}$ for $\ket{\epr_q}^{\otimes n}$.
This state obeys the stabilizer relations
\begin{align}
  X(x) \ot X(x) \ket{\epr_q} &= Z(z) \ot Z(-z) \ket{\epr_q} =
                             \ket{\epr_q} \label{eq:epr-stab} \\
  \tau^X_u \ot I \ket{\epr_q} &= I \ot \tau^X_{-u}
                              \ket{\epr_q} \label{eq:epr-stab-x} \\
  \tau^Z_u \ot I \ket{\epr_q} &= I \ot \tau^Z_{u} \ket{\epr_q} \label{eq:epr-stab-z}.
\end{align}
Relations~\eqref{eq:epr-stab-x} and~\eqref{eq:epr-stab-z} imply that
measuring $X(x)$ on both halves of an EPR state will yield two
outcomes $a,b$ satisfying $a = -b$, and measuring $Z(z)$ on both
halves will yield two outcomes $a,b$ that are equal.

In the important special case of finite fields with characteristic $2$
(i.e. $\F_q$ for even $q$), $u = -u$ for all $u \in \F_q$, and thus
measuring any of the $X$ and $Z$ operators on both sides of the state
will always yield the same outcome.

\subsection{State dependent distances}

In this section, we introduce two state-dependent distances.
To motivate them, we first define the consistency game, perhaps the simplest nontrivial two-player game.

\begin{definition}
The \emph{consistency game with question~$x$}, denoted $\game_{\mathrm{con}}(x)$ is defined as follows.
The question~$x$ is given to Alice and Bob, who respond with answers~$\ba$ and~$\ba'$, respectively.
The verifier accepts if~$\ba = \ba'$.
\end{definition}

We will typically play the consistency game when~$\bx$, rather than being a fixed question, is drawn from some distribution.
Our first state-dependent distance quantifies the players` success probability in this case.

\begin{definition}
Let $\{A^x_a\}$ and $\{B^x_a\}$ be sets of matrices in $\mathcal{L}(\mathcal{H}_A)$ and $\mathcal{L}(\mathcal{H}_B)$, respectively.
Let~$\calD$ be a distribution on questions~$x$ and $\ket{\psi}$ be a state in $\mathcal{H}_A \otimes \mathcal{H}_B$.
Consider the game in which the verifier selects $\bx \sim \calD$ and then plays $\game_{\mathrm{con}}(\bx)$.
We say that
\begin{equation*}
A^x_a \otimes I_{\reg{Bob}} \consistency_{\delta} I_{\reg{Alice}} \otimes B^x_a
\end{equation*}
\emph{on state $\ket{\psi}$ and distribution $\calD$} if
Alice and Bob win with probability~$1-O(\delta)$ using the measurements~$A$ and~$B$, respectively.
\end{definition}

We will sometimes leave the state or distribution unspecified, as they are often clear from context.
This distance has a clear operational interpretation.
Our second state-dependent distance,
defined next,
is more analytic.

\begin{definition}
Let $\{Q^x_a\}$ and $\{R^x_a\}$ be sets of matrices in $\mathcal{L}(\mathcal{H})$.
Let $\calD$ be a distribution on the variables~$x$ and $\ket{\psi}$ be a state in $\mathcal{H}$.
Then we say that \emph{$Q^x_a \approx_{\delta} R^x_a$ on state $\ket{\psi}$ and distribution $\calD$} if
\begin{equation*}
\E_{\bx \sim \calD} \sum_a \Vert (Q_{a}^{\bx} - R_{a}^{\bx} ) \ket{\psi}\Vert^2 = O(\delta).
\end{equation*}
\end{definition}

As above, we will sometimes leave the state or distribution unspecified when clear from context.
This is sometimes referred to as \emph{the} state-dependence distance,
whereas our first distance measure is often referred to as the ``consistency".
A typical setting of parameters is $\mathcal{H} = \mathcal{H}_A \otimes \mathcal{H}_B$,
$Q^x_a := A^x_a \otimes I_{\reg{Bob}}$,
and $R^x_a := I_{\reg{Alice}} \otimes B^x_a$.
In this case, we have the following relationship between the two state-dependent distances.

\begin{fact}\label{fact:agreement}
Let $\{A^x_a\}$ and $\{B^x_a\}$ be POVM measurements.
The following two facts hold.
\begin{enumerate}
\item If $A^x_a \otimes I_{\reg{Bob}} \consistency_{\delta} I_{\reg{Alice}} \otimes B^x_a$
	then $A^x_a \otimes I_{\reg{Bob}} \approx_{\delta} I_{\reg{Alice}} \otimes B^x_a$.
\item If $A^x_a \otimes I_{\reg{Bob}} \approx_{\delta} I_{\reg{Alice}} \otimes B^x_a$ \emph{and} $\{A^x_a\}$ and $\{B^x_a\}$ are projective measurements,
	then $A^x_a \otimes I_{\reg{Bob}} \consistency_{\delta} I_{\reg{Alice}} \otimes B^x_a$.\label{item:partial-converse}
\end{enumerate}
\end{fact}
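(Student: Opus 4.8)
The plan is to derive both parts from a single algebraic identity relating the state-dependent distance to the winning probability of the consistency game. Write $p := \E_{\bx \sim \calD} \sum_a \bra{\psi}\bigl(A^{\bx}_a \otimes B^{\bx}_a\bigr)\ket{\psi}$; by the definition of $\game_{\mathrm{con}}$ this is exactly the probability that Alice and Bob win when the verifier draws $\bx \sim \calD$ and they reply using the measurements $A$ and $B$, respectively. Since POVM elements are Hermitian and the operators $A^x_a \otimes I_{\reg{Bob}}$ and $I_{\reg{Alice}} \otimes B^x_a$ commute, for each fixed $x$ and $a$ we expand the squared norm as
\[
\bigl\|(A^x_a \otimes I_{\reg{Bob}} - I_{\reg{Alice}} \otimes B^x_a)\ket{\psi}\bigr\|^2 = \bra{\psi}\bigl((A^x_a)^2 \otimes I\bigr)\ket{\psi} - 2\bra{\psi}\bigl(A^x_a \otimes B^x_a\bigr)\ket{\psi} + \bra{\psi}\bigl(I \otimes (B^x_a)^2\bigr)\ket{\psi},
\]
and summing over $a$ and averaging over $\bx$ yields the identity
\[
\E_{\bx}\sum_a \bigl\|(A^{\bx}_a \otimes I - I \otimes B^{\bx}_a)\ket{\psi}\bigr\|^2 = \E_{\bx}\,\bra{\psi}\Bigl(\sum_a (A^{\bx}_a)^2 \otimes I + I \otimes \sum_a (B^{\bx}_a)^2\Bigr)\ket{\psi} \;-\; 2p.
\]

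For the first implication I would use that $0 \preceq A^x_a \preceq I$ forces $(A^x_a)^2 \preceq A^x_a$, so $\sum_a (A^x_a)^2 \preceq \sum_a A^x_a = I$, and likewise for $B$. Hence the first term on the right-hand side of the identity is at most $2$, giving $\E_{\bx}\sum_a \|\cdots\|^2 \leq 2(1-p)$. The hypothesis $A^x_a \otimes I \consistency_\delta I \otimes B^x_a$ means $p \geq 1 - O(\delta)$, so the right-hand side is $O(\delta)$, which is exactly the conclusion $A^x_a \otimes I \approx_\delta I \otimes B^x_a$. Note this direction does not use projectivity.

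For the second implication, projectivity upgrades the operator inequality to an equality: $(A^x_a)^2 = A^x_a$ and $(B^x_a)^2 = B^x_a$, so $\sum_a (A^x_a)^2 = \sum_a (B^x_a)^2 = I$ and the first term of the identity equals exactly $2$. The identity then reads $\E_{\bx}\sum_a \|\cdots\|^2 = 2(1-p)$, so the hypothesis $A^x_a \otimes I \approx_\delta I \otimes B^x_a$ makes the left-hand side $O(\delta)$ and forces $p \geq 1 - O(\delta)$, i.e.\ $A^x_a \otimes I \consistency_\delta I \otimes B^x_a$.

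There is no substantial obstacle here; the only point requiring care is the operator inequality $(A^x_a)^2 \preceq A^x_a$, which holds for every POVM element and is precisely what lets us drop projectivity in the first direction but retain it (to turn the bound $\leq 2$ into the equality $= 2$) in the second. The factor $2$ appearing in both bounds is absorbed into the $O(\cdot)$ notation, so the translation between $\consistency$ and $\approx$ is lossless at the asymptotic level.
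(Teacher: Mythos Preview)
Your proposal is correct and essentially identical to the paper's proof: both expand the squared norm to obtain the identity involving $\sum_a (A^{\bx}_a)^2$ and $\sum_a (B^{\bx}_a)^2$, bound this by $2$ via $(A^x_a)^2 \preceq A^x_a$ for the first direction, and observe that the bound becomes an equality under projectivity for the second.
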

\begin{proof}
Suppose that $A^x_a \otimes I_{\reg{Bob}} \consistency_{\delta} I_{\reg{Alice}} \otimes B^x_a$.
This is equivalent to the statement
\begin{equation}\label{eq:i-derived-a-formula}
\E_{\bx} \sum_a \bra{\psi} A^{\bx}_a \otimes B^{\bx}_a \ket{\psi} \geq 1- O(\delta).
\end{equation}
As a result, using the fact that~$A$ and~$B$ are POVMs,
\begin{align}
\E_{\bx} \sum_a \Vert (A_{a}^{\bx}\otimes I - I \otimes B_{a}^{\bx} ) \ket{\psi}\Vert^2 
& = \E_{\bx} \sum_a \bra{\psi} ((A_{a}^{\bx})^2\otimes I + I \otimes (B_{a}^{\bx})^2 - 2 A_a^{\bx} \otimes B_{a}^{\bx}) \ket{\psi}\nonumber\\
& \leq \E_{\bx} \sum_a \bra{\psi} (A_{a}^{\bx}\otimes I + I \otimes B_{a}^{\bx} - 2 A_a^{\bx} \otimes B_{a}^{\bx}) \ket{\psi}\label{eq:sometimes-tight}\\
& = 2 - 2 \E_{\bx} \sum_a \bra{\psi} A_a^{\bx} \otimes B_{a}^{\bx} \ket{\psi}\nonumber.
\end{align}
By \Cref{eq:i-derived-a-formula}, this is $O(\delta)$.
As a result, $A^x_a \otimes I_{\reg{Bob}} \approx_{\delta} I_{\reg{Alice}} \otimes B^x_a$.
The reverse statement holds when~$A$ and~$B$ are projective measurements
because \Cref{eq:sometimes-tight} is an equality in this case.
\end{proof}

The following fact shows that we can derive a weaker converse in the case when only one of~$A$ or~$B$ is projective.

\begin{fact}\label{fact:almost-agreement}
Suppose $\{A^x_a\}$ and $\{B^x_a\}$ are two measurements such that $A^x_a \otimes I_{\reg{Bob}} \approx_{\delta} I_{\reg{Alice}} \otimes B^x_a$.
Suppose further that either~$A$ or~$B$ is a projective measurement (and the other is a POVM measurement).
Then $A^x_a \otimes I_{\reg{Bob}} \consistency_{\delta^{1/2}} I_{\reg{Alice}} \otimes B^x_a$.
\end{fact}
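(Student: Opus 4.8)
The plan is to lower-bound the winning probability of the consistency game played with measurements $A$ and $B$, namely $\E_{\bx}\sum_a \bra{\psi} A^{\bx}_a \otimes B^{\bx}_a \ket{\psi}$, and show it is at least $1 - O(\delta^{1/2})$. By the symmetry of the hypothesis we may assume without loss of generality that $A$ is the projective measurement. Write $P_a := A^{\bx}_a \otimes I_{\reg{Bob}}$ and $Q_a := I_{\reg{Alice}} \otimes B^{\bx}_a$, suppressing the dependence on $\bx$; then each $P_a$ is a projector and $\sum_a P_a = I$.

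First I would record two elementary identities, both using only that $\{P_a\}$ is a projective measurement: $\E_{\bx}\sum_a \bra{\psi} P_a \ket{\psi} = 1$ and $\E_{\bx}\sum_a \Vert P_a\ket{\psi}\Vert^2 = \E_{\bx}\sum_a \bra{\psi} P_a^2 \ket{\psi} = 1$. Next, using $P_a = P_a^2$, I rewrite the defect of the consistency game as
\[ 1 - \E_{\bx}\sum_a \bra{\psi} P_a Q_a \ket{\psi} \;=\; \E_{\bx}\sum_a \bra{\psi} P_a (P_a - Q_a) \ket{\psi}, \]
which is a real quantity. Applying Cauchy--Schwarz inside each summand bounds its absolute value by $\E_{\bx}\sum_a \Vert P_a\ket{\psi}\Vert \cdot \Vert (P_a - Q_a)\ket{\psi}\Vert$, and a second Cauchy--Schwarz, now over the pair $(\bx,a)$, bounds this by $\sqrt{\E_{\bx}\sum_a \Vert P_a\ket{\psi}\Vert^2} \cdot \sqrt{\E_{\bx}\sum_a \Vert (P_a - Q_a)\ket{\psi}\Vert^2}$. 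The first factor equals $1$ by the identity above, and the second equals $\sqrt{O(\delta)}$ by the hypothesis $A^x_a \otimes I_{\reg{Bob}} \approx_{\delta} I_{\reg{Alice}} \otimes B^x_a$. Hence the defect is $O(\delta^{1/2})$, establishing $A^x_a \otimes I_{\reg{Bob}} \consistency_{\delta^{1/2}} I_{\reg{Alice}} \otimes B^x_a$. The case where $B$ rather than $A$ is projective is identical after exchanging the roles of $P_a$ and $Q_a$, using $Q_a = Q_a^2$ and $\sum_a Q_a = I$.

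The only step where projectivity is genuinely used --- and hence the one to handle with care --- is the rewriting $1 - \E_{\bx}\sum_a \bra{\psi} P_a Q_a \ket{\psi} = \E_{\bx}\sum_a \bra{\psi} P_a (P_a - Q_a) \ket{\psi}$, which fails for a general POVM and is exactly what pins down the requirement that at least one of the two measurements be projective; by contrast, in \Cref{fact:agreement} both measurements are projective, so one can avoid Cauchy--Schwarz entirely and obtain the loss-free parameter $\delta$. The single Cauchy--Schwarz that compensates for the missing projectivity of the other measurement is what produces the square-root loss $\delta \mapsto \delta^{1/2}$, and it is unavoidable in this approach. Beyond this, the argument is a short computation with no substantial obstacle.
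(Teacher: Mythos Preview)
Your proof is correct and follows essentially the same approach as the paper's: rewrite $1$ using projectivity of $A$ to obtain $\E_{\bx}\sum_a \bra{\psi} P_a(P_a - Q_a)\ket{\psi}$, then apply Cauchy--Schwarz to split off $\Vert P_a\ket{\psi}\Vert$ from $\Vert (P_a - Q_a)\ket{\psi}\Vert$. The only cosmetic difference is that the paper applies Cauchy--Schwarz over $a$ for each fixed $\bx$ and then Jensen over $\bx$, whereas you apply Cauchy--Schwarz directly over the pair $(\bx,a)$; both yield the same bound.
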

\begin{proof}
Our goal is to upper bound the expression
\begin{equation}\label{eq:anand-is-a-stable-genius-yet-again}
1 - \E_\bx \sum_a \bra{\psi} A^{\bx}_a \otimes B^{\bx}_a \ket{\psi}.
\end{equation}
We begin by rewriting the number~$1$.
Here we use the fact that because~$A$ is projective, $(A^x_a)^2 = A^x_a$,
and so $\sum_a (A^x_a)^2 = I$.
This gives us:
\begin{align}
\eqref{eq:anand-is-a-stable-genius-yet-again}
& = \E_\bx \sum_a \bra{\psi} (A^{\bx}_a)^2 \otimes I_{\reg{Bob}} \ket{\psi} - \E_\bx \sum_a \bra{\psi} A^{\bx}_a \otimes B^{\bx}_a \ket{\psi}\nonumber\\
& = \E_\bx \sum_a \bra{\psi} (A^{\bx}_a \otimes I_{\reg{Bob}}) \cdot (A^{\bx}_a \otimes I_{\reg{Bob}} - I_{\reg{Alice}} \otimes B^{\bx}_a) \ket{\psi}\nonumber\\
& \leq \E_\bx \sum_a \Vert A^{\bx}_a \otimes I_{\reg{Bob}} \ket{\psi}\Vert \cdot \Vert(A^{\bx}_a \otimes I_{\reg{Bob}} - I_{\reg{Alice}} \otimes B^{\bx}_a) \ket{\psi}\Vert\label{eq:anand-continues-to-be-a-stable-genius}
\end{align}
Now we apply Cauchy-Schwarz and then Jensen's inequality:
\begin{equation*}
\eqref{eq:anand-continues-to-be-a-stable-genius}
\leq  \sqrt{\E_\bx \sum_a \Vert A^{\bx}_a \otimes I_{\reg{Bob}} \ket{\psi}\Vert^2
	\cdot \sum_a \Vert(A^{\bx}_a \otimes I_{\reg{Bob}} - I_{\reg{Alice}} \otimes B^{\bx}_a) \ket{\psi}\Vert^2}
\end{equation*}
The first of these terms we bound by~$1$, and the second is~$O(\delta)$ by assumption.
\end{proof}

\begin{remark}\label{remark:fraction-of-identity}
We note that the requirement in \Cref{fact:almost-agreement}
that one of the two measurements be projective is necessary.
Consider the measurements $\{A^x_a\}$ and $\{B^x_a\}$ with~$m$ separate outcomes~$a$
in which $A^x_a = B^x_a = I/m$ for all~$x$ and~$a$. Then $A^x_a \otimes I_{\reg{Bob}} \approx_{0} I_{\reg{Alice}} \otimes B^x_a$, but as $m \rightarrow \infty$,
\begin{equation*}
1- \E_\bx \sum_a \bra{\psi} A^{\bx}_a \otimes B^{\bx}_a \ket{\psi} \rightarrow 1.
\end{equation*}
\end{remark}

Thus, when one of the measurements is projective,
the ``$\approx_\delta$" distance is roughly equivalent to the ``$\consistency_\delta$" distance,
up to a polynomial factor (which we can tolerate losing in our proofs).
More generally, however, the ``$\approx_\delta$" distance
can be viewed as a weakening of the ``$\consistency_\delta$" distance.
In spite of this, we will spend much of the paper dealing with the ``$\approx_\delta$" distance,
as it is easier to manipulate but still strong enough to reach our desired consequences.
(See \cite[Section~$2.3.1$]{Vid11} for a further defense of this distance.)
We note that even when~$\ket{\psi}$ is a bipartite state,
the ``$\approx_\delta$" distance is defined for matrices~$Q$ and~$R$
which are not necessarily tensor products over the bipartition.
Such matrices will often be useful to pass through during intermediate steps of our proofs.

A common use case for these distances is when the verifier (i) samples a pair of questions~$\bx = (\bx_0, \bx_1)$,
(ii) hands~$\bx_0$ to Alice and~$\bx_1$ second to Bob,
(iii) receives their answers $\ba_0$ and~$\ba_1$
and (iv) accepts if $f(\bx, \ba_0) = g(\bx, \ba_1)$ for some functions~$f$ and~$g$.
Write $\{A^{x_0}_{a_0}\}_{a_0}$ and $\{B^{x_1}_{a_1}\}_{a_1}$ for Alice and Bob's measurements, respectively.
We can view these as measurements which receive the pair~$(\bx_0, \bx_1)$ and simply ignore one coordinate.
Suppose the verifier accepts with probability $1-\delta$.
Using \Cref{not:marginalize}, we can view this as performing the consistency game
between the measurements $A^{x_0}_{[f(x, a_0) = b]}$ and  $B^{x_1}_{[f(x, a_1) = b]}$.
Hence, we can derive the following two facts:
\begin{equation*}
A^{x_0}_{[f(x, a_0) = b]} \otimes I_{\reg{Bob}}
\consistency_{\delta} I_{\reg{Alice}} \otimes B^{x_1}_{[f(x, a_1) = b]},
\quad
\text{and}
\quad
A^{x_0}_{[f(x, a_0) = b]} \otimes I_{\reg{Bob}}
\approx_{\delta} I_{\reg{Alice}} \otimes B^{x_1}_{[f(x, a_1) = b]}.
\end{equation*}
This generic format will be the most common use of these notations.

\Cref{remark:fraction-of-identity} highlights the importance of \emph{projective} measurements when dealing with the~``$\approx_\delta$" distance.
This, and several other key facts about the ``$\approx_\delta$" distance, are true only for projective measurements.
As a result, we will sometimes apply Naimark's theorem (\Cref{thm:naimark})
during our proofs to ``round" POVM measurements into projective measurements.
However, there is a subtlety in doing so, namely that 
because Naimark's theorem preserves measurement outcomes,
any ``$\consistency_\delta$" statements we have derived about our measurement operators will remain true,
but Naimark's theorem is \emph{not} guaranteed to preserve ``$\approx_\delta$" statements.
In this work, we will be able to dispense with this subtlety and assume all ``$\approx_\delta$" statements \emph{are} preserved,
because \emph{all ``$\approx_\delta$" statements in our proofs will be derived from ``$\consistency_\delta$" statements},
and so they will remain true after performing Naimark's theorem, since we could simply rederive them.

\subsection{Miscellaneous properties of the state-dependent distances}

In this section, we record some facts about the ``$\approx_\delta$" notation which we will use repeatedly throughout the paper.
A good rule of thumb is that everything one expects to be true about the ``$\approx_\delta$" notation actually \emph{is}  true,
except for those things which are not. 
As a result, we will be overly pedantic in this section in order to call attention to these cases.

\subsubsection{Simple state-dependent distance facts}

\begin{fact}\label{fact:obvious-vector-fact}
For two vectors $\ket{\psi_1}, \ket{\psi_2}$, 
$\displaystyle
\Vert \ket{\psi_1} + \ket{\psi_2} \Vert^2 \leq 2 \Vert \ket{\psi_1} \Vert^2 + 2 \Vert \ket{\psi_2} \Vert^2.
$
\end{fact}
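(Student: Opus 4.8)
The plan is to expand the squared norm on the left using the inner product structure and then bound the cross term. First I would write
\begin{equation*}
\Vert \ket{\psi_1} + \ket{\psi_2} \Vert^2 = \Vert \ket{\psi_1}\Vert^2 + \Vert \ket{\psi_2}\Vert^2 + 2\,\mathrm{Re}\braket{\psi_1 | \psi_2},
\end{equation*}
which is just the defining property of the norm induced by the inner product. It then remains to control the real part of the cross term by the sum of the two squared norms.

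The key step is the observation that $0 \leq \Vert \ket{\psi_1} - \ket{\psi_2}\Vert^2 = \Vert \ket{\psi_1}\Vert^2 + \Vert \ket{\psi_2}\Vert^2 - 2\,\mathrm{Re}\braket{\psi_1 | \psi_2}$, which rearranges to $2\,\mathrm{Re}\braket{\psi_1 | \psi_2} \leq \Vert \ket{\psi_1}\Vert^2 + \Vert \ket{\psi_2}\Vert^2$. (Equivalently, one could invoke Cauchy--Schwarz, $|\braket{\psi_1|\psi_2}| \leq \Vert\ket{\psi_1}\Vert\,\Vert\ket{\psi_2}\Vert$, followed by the AM--GM inequality $2ab \leq a^2 + b^2$; either route works and the first avoids even naming those lemmas.) Substituting this bound into the expansion above yields
\begin{equation*}
\Vert \ket{\psi_1} + \ket{\psi_2} \Vert^2 \leq \Vert \ket{\psi_1}\Vert^2 + \Vert \ket{\psi_2}\Vert^2 + \Vert \ket{\psi_1}\Vert^2 + \Vert \ket{\psi_2}\Vert^2 = 2\Vert \ket{\psi_1}\Vert^2 + 2\Vert \ket{\psi_2}\Vert^2,
\end{equation*}
which is exactly the claimed inequality.

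There is no real obstacle here; this is a routine Hilbert-space estimate and the only ``choice'' is which of the two equivalent one-line arguments to present. I would go with the $\Vert \ket{\psi_1} - \ket{\psi_2}\Vert^2 \geq 0$ version since it is fully self-contained. The statement also holds verbatim for any finite sum of $k$ vectors with constant $k$ in place of $2$, by the same argument or by induction, but since the paper only needs the two-vector case I would not belabor that generalization.
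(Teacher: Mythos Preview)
Your proof is correct. The paper does not give a proof of this fact at all---it is stated as obvious and used without justification---so your self-contained argument via $\Vert \ket{\psi_1} - \ket{\psi_2}\Vert^2 \geq 0$ is entirely appropriate and there is nothing to compare against.
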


\begin{fact}\label{fact:measurement-sub-measurement-switcheroo}
Let $\{A_a\}$ be a measurement. Then
\begin{equation*}
\sum_a \Vert A_{a} \ket{\psi}\Vert^2 \leq\Vert \ket{\psi} \Vert^2.
\end{equation*}
\end{fact}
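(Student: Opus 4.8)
The plan is to invoke the two defining properties of a POVM measurement $\{A_a\}$ as recorded in the quantum preliminaries: positivity, $A_a \succeq 0$ for every outcome $a$, and normalization, $\sum_a A_a = I$. First I would observe that since each term of the sum $\sum_b A_b = I$ is positive semidefinite, every individual operator obeys $0 \preceq A_a \preceq I$. Then I would use the elementary operator inequality that any Hermitian $M$ with $0 \preceq M \preceq I$ satisfies $M^2 \preceq M$: indeed $M$ and $I-M$ commute and are both positive semidefinite, so their product $M(I-M) = M - M^2$ is positive semidefinite.

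Applying this with $M = A_a$ and summing over all outcomes gives $\sum_a A_a^2 \preceq \sum_a A_a = I$. The claim then follows by taking the expectation value in the state $\ket{\psi}$: since each $A_a$ is Hermitian, $\Vert A_a\ket{\psi}\Vert^2 = \bra{\psi}A_a^\dagger A_a\ket{\psi} = \bra{\psi}A_a^2\ket{\psi}$, and hence $\sum_a \Vert A_a\ket{\psi}\Vert^2 = \bra{\psi}\bigl(\sum_a A_a^2\bigr)\ket{\psi} \le \langle\psi|\psi\rangle = \Vert\ket{\psi}\Vert^2$.

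There is no real obstacle here; the only subtlety worth flagging is that a general POVM element need not be a projector, so one cannot write $A_a^2 = A_a$ and must instead settle for the inequality $A_a^2 \preceq A_a$ — which is precisely why the statement is phrased as an inequality rather than an equality (and why it would be an equality for projective $\{A_a\}$). I note in passing that the same argument goes through verbatim if $\{A_a\}$ is only a sub-measurement, i.e.\ $\sum_a A_a \preceq I$, since one still has $0 \preceq A_a \preceq I$ for each $a$; this is presumably the "switcheroo" the label refers to.
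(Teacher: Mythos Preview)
Your proof is correct and follows essentially the same approach as the paper's: both arguments reduce to the operator inequality $\sum_a A_a^2 \preceq \sum_a A_a = I$ and then take the expectation in $\ket{\psi}$. The paper compresses this into a single line $\sum_a \tr(A_a A_a \ket{\psi}\bra{\psi}) \leq \tr(I \cdot \ket{\psi}\bra{\psi})$, whereas you spell out the intermediate step $A_a^2 \preceq A_a$ explicitly; your additional remarks about the projective case and the sub-measurement case are accurate and not present in the paper's terse proof.
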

\begin{proof}
If $\{A_a\}$ is a measurement, then
\begin{equation*}
\sum_a \Vert A_{a} \ket{\psi}\Vert^2
= \sum_a \tr( A_a A_{a} \ket{\psi}\bra{\psi})
\leq \tr(I \cdot \ket{\psi}\bra{\psi}) = \Vert \ket{\psi}\Vert^2.\qedhere
\end{equation*}
\end{proof}

\begin{fact}\label{fact:easy-bound}
Let $\{A_a\}$, $\{B_a\}$ be measurements. Then for any state~$\ket{\psi}$,
\begin{equation*}
\sum_a \Vert (A_{a} - B_{a} ) \ket{\psi}\Vert^2 \leq 4 \cdot \Vert \ket{\psi} \Vert^2.
\end{equation*}
\end{fact}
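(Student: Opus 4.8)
The plan is to reduce this inequality to the two facts stated immediately above it, namely Fact~\ref{fact:obvious-vector-fact} and Fact~\ref{fact:measurement-sub-measurement-switcheroo}; no new ideas are required. The first step is to expand each summand using the bound $\Vert \ket{\phi_1} + \ket{\phi_2}\Vert^2 \le 2\Vert\ket{\phi_1}\Vert^2 + 2\Vert\ket{\phi_2}\Vert^2$ from Fact~\ref{fact:obvious-vector-fact}, applied with $\ket{\phi_1} = A_a\ket{\psi}$ and $\ket{\phi_2} = -B_a\ket{\psi}$. This yields $\Vert (A_a - B_a)\ket{\psi}\Vert^2 \le 2\Vert A_a\ket{\psi}\Vert^2 + 2\Vert B_a\ket{\psi}\Vert^2$ for every outcome $a$.

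The second step is to sum this termwise bound over all outcomes $a$ and split the right-hand side into the two sums $\sum_a \Vert A_a\ket{\psi}\Vert^2$ and $\sum_a \Vert B_a\ket{\psi}\Vert^2$. Since $\{A_a\}$ and $\{B_a\}$ are both measurements, Fact~\ref{fact:measurement-sub-measurement-switcheroo} bounds each of these sums by $\Vert\ket{\psi}\Vert^2$, so the total is at most $2\Vert\ket{\psi}\Vert^2 + 2\Vert\ket{\psi}\Vert^2 = 4\Vert\ket{\psi}\Vert^2$, which is precisely the claim.

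There is essentially no obstacle here. The only point worth flagging is that Fact~\ref{fact:measurement-sub-measurement-switcheroo} is invoked for general POVMs rather than projective measurements; this is fine, since its proof only uses $\sum_a A_a^\dagger A_a = \sum_a A_a^2 \preceq \sum_a A_a = I$, which holds whenever each $A_a \succeq 0$ and $\sum_a A_a = I$. Thus projectivity of $A$ or $B$ is never needed, and the constant $4$ is exactly what this two-line argument produces; no sharpening of the constant is attempted.
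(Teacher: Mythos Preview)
Your proof is correct and matches the paper's own argument essentially line for line: apply Fact~\ref{fact:obvious-vector-fact} termwise and then bound each of the two resulting sums by $\Vert\ket{\psi}\Vert^2$ via Fact~\ref{fact:measurement-sub-measurement-switcheroo}. Your side remark that projectivity is not needed is accurate and consistent with how the paper states the hypothesis.
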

\begin{proof}
By \Cref{fact:obvious-vector-fact},
\begin{equation*}
\sum_a \Vert (A_{a} - B_{a} ) \ket{\psi}\Vert^2 \leq 2 \sum_a \Vert A_{a} \ket{\psi}\Vert^2 + 2 \sum_a \Vert  B_{a} \ket{\psi}\Vert^2.
\end{equation*}
The fact now follows from \Cref{fact:measurement-sub-measurement-switcheroo}.
\end{proof}

\begin{fact}\label{fact:trivial-upper-bound-approx-delta}
Let $\{A^x_a\}$ and $\{B^x_a\}$ be POVM measurements.
Then $A^x_a \approx_1 B^x_a$.
\end{fact}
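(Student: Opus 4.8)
The plan is to reduce the statement immediately to \Cref{fact:easy-bound}, which already bounds the relevant quantity for a single fixed question. Recall that the assertion ``$A^x_a \approx_1 B^x_a$ on a state $\ket{\psi}$ and distribution $\calD$'' unfolds, by definition, to
\[
\E_{\bx \sim \calD} \sum_a \Vert (A_a^{\bx} - B_a^{\bx})\ket{\psi}\Vert^2 = O(1),
\]
so it is enough to produce an absolute constant upper bound on the quantity inside the expectation, uniformly over questions.

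First I would fix an arbitrary question $x$ in the support of $\calD$. Since $\{A^x_a\}_a$ and $\{B^x_a\}_a$ are POVM measurements, they are in particular measurements in the sense required by \Cref{fact:easy-bound}, which gives $\sum_a \Vert (A_a^{x} - B_a^{x})\ket{\psi}\Vert^2 \leq 4\,\Vert\ket{\psi}\Vert^2 = 4$, using that $\ket{\psi}$ is a unit vector. Taking the expectation over $\bx \sim \calD$ of a nonnegative quantity that is pointwise at most $4$ then yields $\E_{\bx \sim \calD} \sum_a \Vert (A_a^{\bx} - B_a^{\bx})\ket{\psi}\Vert^2 \leq 4 = O(1)$, which is exactly the claim. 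There is essentially no obstacle here; the only points worth noting are that \Cref{fact:easy-bound} indeed applies to POVMs (it is stated for arbitrary measurements, of which POVMs are a special case), and that the $O(1)$ in the definition of $\approx_1$ is an absolute constant independent of both $\ket{\psi}$ and $\calD$, which the bound $4$ manifestly is.
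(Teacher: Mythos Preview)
Your proposal is correct and matches the paper's approach: the paper states this fact immediately after \Cref{fact:easy-bound} without a separate proof, so it is meant to follow exactly as you describe, by averaging the pointwise bound of $4$ over the question distribution.
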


\begin{fact}\label{fact:add-a-proj}
Let $\{A_a^{x}\}$ and $\{B_a^{x}\}$ be matrices.
Let $\{C_b^{y}\}$ be matrices such that $\sum_b (C_b^y)^\dagger C_b^y \leq I$ for all~$y$.
(This includes the case when $\{C_b^y\}$ form projective or POVM measurements.) Then
\begin{equation*}
\text{$A_a^{x} \approx_{\delta} B_a^x$ implies $C_b^y A_a^x \approx_{\delta} C_b^y B_a^x$.}
\end{equation*}
\end{fact}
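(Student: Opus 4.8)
The plan is to reduce the claim to the pointwise operator inequality $\sum_b (C_b^y)^\dagger C_b^y \leq I$, which is exactly the hypothesis placed on the $C_b^y$'s and which lets us ``absorb'' these factors without increasing the relevant sum of squared norms. Unfolding the definition of $\approx_\delta$, the hypothesis says that $\E_{\bx \sim \calD} \sum_a \| (A_a^{\bx} - B_a^{\bx})\ket{\psi}\|^2 = O(\delta)$, and we must prove the analogous bound with $A_a^x, B_a^x$ replaced by $C_b^y A_a^x, C_b^y B_a^x$, with the outcome $a$ replaced by the pair $(a,b)$ and the question $x$ by the pair $(x,y)$ drawn from whatever joint question distribution is in play (whose marginal on $x$ is $\calD$).

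First I would fix the questions $x,y$ and an outcome $a$, and set $\ket{\phi_a} := (A_a^x - B_a^x)\ket{\psi}$. Then
\[
\sum_b \bigl\| C_b^y \ket{\phi_a}\bigr\|^2
= \sum_b \bra{\phi_a}(C_b^y)^\dagger C_b^y \ket{\phi_a}
= \bra{\phi_a}\Bigl(\sum_b (C_b^y)^\dagger C_b^y\Bigr)\ket{\phi_a}
\leq \bigl\|\ket{\phi_a}\bigr\|^2,
\]
where the inequality uses $\sum_b (C_b^y)^\dagger C_b^y \leq I$. Summing this over $a$ gives, for every fixed pair $(x,y)$,
\[
\sum_{a,b}\bigl\| C_b^y A_a^x \ket{\psi} - C_b^y B_a^x \ket{\psi}\bigr\|^2
= \sum_{a,b}\bigl\| C_b^y (A_a^x - B_a^x)\ket{\psi}\bigr\|^2
\leq \sum_a \bigl\|(A_a^x - B_a^x)\ket{\psi}\bigr\|^2 .
\]

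Finally I would take the expectation over the questions. Since the bound just obtained holds for every fixed $(x,y)$, averaging over the question distribution preserves it, and the right-hand side becomes $\E_{\bx\sim\calD}\sum_a \|(A_a^{\bx} - B_a^{\bx})\ket{\psi}\|^2 = O(\delta)$ by hypothesis; this is precisely $C_b^y A_a^x \approx_{\delta} C_b^y B_a^x$. There is no real obstacle here: the only points worth flagging are that the inequality is applied pointwise in $(x,y)$ before averaging, and that the $C_b^y$ need not form a projective or even a POVM measurement — the weaker condition $\sum_b (C_b^y)^\dagger C_b^y \leq I$ is the only property used, which is exactly why the fact is stated with that hypothesis (and why it also covers the projective and POVM cases as noted in the statement).
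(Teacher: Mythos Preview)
Your proposal is correct and follows essentially the same approach as the paper: fix $(x,y,a)$, use $\sum_b (C_b^y)^\dagger C_b^y \leq I$ to bound $\sum_b \|C_b^y(A_a^x - B_a^x)\ket{\psi}\|^2 \leq \|(A_a^x - B_a^x)\ket{\psi}\|^2$, then sum over $a$ and average over questions. The only cosmetic difference is that you introduce the intermediate notation $\ket{\phi_a}$, whereas the paper writes the chain of inequalities directly.
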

\begin{proof}
Fix questions~$x, y$ and answers~$a$. Because of our property on $\{C_b^y\}_b$,
\begin{align*}
\sum_{b} \Vert (C_b^{y} A_a^{x} - C_b^{y} B_a^{x})\ket{\psi}\Vert^2
&= \sum_{b} \bra{\psi} (A_a^x - B_a^x)^\dagger (C_b^y)^\dagger (C_b^y) (A_a^{x} -  B_a^{x})\ket{\psi}\\
&\leq \bra{\psi} (A_a^x - B_a^x)^\dagger (A_a^x - B_a^x) \ket{\psi}
 = \Vert (A_a^{x} -B_a^{x})\ket{\psi}\Vert^2
\end{align*}
We can therefore derive our desired conclusion:
\begin{equation*}
\E_{\bx, \by} \sum_{a, b} \Vert (C_b^{\by} A_a^{\bx} - C_b^{\by} B_a^{\bx}) \ket{\psi} \Vert^2
\leq \E_{\bx, \by} \sum_{a} \Vert (A_a^{\bx} - B_a^{\bx}) \ket{\psi} \Vert^2
= \delta.\qedhere
\end{equation*}
\end{proof}

\begin{fact}\label{fact:swap-dists}
Let $\calD, \calD'$ be two distributions such that $d_{\mathrm{TV}}(\calD, \calD') \leq \epsilon$.
Let $\{A_a^x\}$ and $\{B_a^x\}$ be measurements,
and suppose $A_a^x \approx_\delta B_a^x$ with respect to~$\calD$. Then $A_a^x \approx_{\delta + \epsilon} B_a^x$ with respect to~$\calD'$.
\end{fact}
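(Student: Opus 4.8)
The plan is to reduce the statement to the elementary fact that the quantity being averaged is uniformly bounded, so that changing the underlying distribution by a small amount in total variation changes its expectation by a correspondingly small amount. Concretely, for each question~$x$ define the nonnegative scalar
\[
f(x) := \sum_a \Vert (A^x_a - B^x_a)\ket{\psi}\Vert^2 .
\]
By \Cref{fact:easy-bound} we have $f(x) \leq 4\Vert\ket{\psi}\Vert^2 = O(1)$ for \emph{every} $x$ (not just on average), since $\{A^x_a\}$ and $\{B^x_a\}$ are measurements. The hypothesis $A^x_a \approx_\delta B^x_a$ with respect to~$\calD$ says exactly that $\E_{\bx \sim \calD} f(\bx) = O(\delta)$.

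Next I would compare the two expectations directly. Writing $\E_{\bx \sim \calD'} f(\bx) - \E_{\bx \sim \calD} f(\bx) = \sum_x (\calD'(x) - \calD(x)) f(x)$ and discarding the terms where $\calD'(x) \leq \calD(x)$ (which only decreases the sum, as $f \geq 0$), we get
\[
\E_{\bx \sim \calD'} f(\bx) \;\leq\; \E_{\bx \sim \calD} f(\bx) \;+\; \sum_{x:\, \calD'(x) > \calD(x)} (\calD'(x) - \calD(x))\, f(x) \;\leq\; O(\delta) \;+\; O(1) \cdot d_{\mathrm{TV}}(\calD, \calD'),
\]
using $f(x) = O(1)$ and the identity $\sum_{x:\,\calD'(x) > \calD(x)}(\calD'(x) - \calD(x)) = d_{\mathrm{TV}}(\calD, \calD')$. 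Since $d_{\mathrm{TV}}(\calD,\calD') \leq \epsilon$, the right-hand side is $O(\delta) + O(\epsilon) = O(\delta + \epsilon)$, which is precisely the definition of $A^x_a \approx_{\delta + \epsilon} B^x_a$ with respect to~$\calD'$.

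There is essentially no obstacle; the only points requiring (minor) care are that one needs the \emph{uniform} bound $f(x) = O(1)$ rather than merely an average bound — this is exactly what \Cref{fact:easy-bound} provides — and the choice of normalization for total variation distance, which affects only the hidden constant and is harmless given the $O(\cdot)$ built into the definition of $\approx_\delta$.
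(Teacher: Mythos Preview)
Your proposal is correct and essentially identical to the paper's proof: both define the same quantity $f(x)=\sum_a\|(A^x_a-B^x_a)\ket{\psi}\|^2$, invoke \Cref{fact:easy-bound} to get the uniform bound $f(x)\le 4$, and then use the standard fact that a uniformly bounded function's expectation changes by at most $O(\epsilon)$ under an $\epsilon$-TV perturbation of the distribution. The only cosmetic difference is that you spell out the TV-distance comparison explicitly, whereas the paper quotes it as a black-box inequality.
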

\begin{proof}
By the definition of total variation distance,
for any set of numbers $\{\nu_x\}$ satisfying $0 \leq \nu_x \leq c$,
the expectations under the two distributions are similar:
\begin{equation*}
|\E_{\bx \sim \calD} [\nu_\bx] - \E_{\bx \sim \calD'}[\nu_{\bx}]| \leq c \cdot \epsilon.
\end{equation*}
We will take for our numbers $\nu_x = \sum_a \Vert (A_{a}^{x} - B_{a}^{x} ) \ket{\psi}\Vert^2$,
which is always less than~$4$ by \Cref{fact:easy-bound}.
As a result,
\begin{equation*}
\E_{\bx \sim \calD'} \nu_{\bx} \leq \E_{\bx \sim \calD} \nu_{\bx} + 4 \eps \leq \delta + 4 \eps.
\end{equation*}
This is $O(\delta + \epsilon)$, which proves the fact.
\end{proof}

\begin{fact}\label{fact:the-ol-state-y-swaperoonie}
Suppose $A_a^x \approx_{\delta} B_a^x$ on state $\ket{\psi}$, and suppose $\Vert \ket{\psi} - \ket{\overline{\psi}} \Vert^2 \leq \epsilon$.
Then $A_a^x \approx_{\delta+\epsilon} B_a^x$ on state $\ket{\overline{\psi}}$.
\end{fact}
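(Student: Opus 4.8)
The plan is to reduce the statement to the vector triangle inequality (\Cref{fact:obvious-vector-fact}) together with the crude bound in \Cref{fact:easy-bound}. Set $\ket{e} = \ket{\overline{\psi}} - \ket{\psi}$, so that $\Vert \ket{e} \Vert^2 \le \epsilon$ by hypothesis, and for each question $x$ and answer $a$ decompose
\[ (A_a^x - B_a^x)\ket{\overline{\psi}} = (A_a^x - B_a^x)\ket{\psi} + (A_a^x - B_a^x)\ket{e}. \]
Applying \Cref{fact:obvious-vector-fact} to this splitting gives
\[ \Vert (A_a^x - B_a^x)\ket{\overline{\psi}} \Vert^2 \le 2 \Vert (A_a^x - B_a^x)\ket{\psi} \Vert^2 + 2 \Vert (A_a^x - B_a^x)\ket{e} \Vert^2 . \]

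Now sum over $a$ and take the expectation over $\bx \sim \calD$. The first term contributes $2 \cdot O(\delta) = O(\delta)$ by the hypothesis $A_a^x \approx_{\delta} B_a^x$ on state $\ket{\psi}$. For the second term, since $\{A_a^x\}$ and $\{B_a^x\}$ are POVM measurements, \Cref{fact:easy-bound} gives, for every fixed $x$, $\sum_a \Vert (A_a^x - B_a^x)\ket{e} \Vert^2 \le 4 \Vert \ket{e} \Vert^2 \le 4\epsilon$, so this term contributes at most $8\epsilon$ after taking the expectation. Adding the two bounds yields
\[ \E_{\bx \sim \calD} \sum_a \Vert (A_a^{\bx} - B_a^{\bx})\ket{\overline{\psi}} \Vert^2 = O(\delta + \epsilon), \]
which is precisely the assertion $A_a^x \approx_{\delta+\epsilon} B_a^x$ on state $\ket{\overline{\psi}}$.

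There is essentially no obstacle here; the only points worth care are that the ``$\approx$'' notation already absorbs constant factors into the $O(\cdot)$, so the factors of $2$ and $8$ above are harmless, and that the appeal to \Cref{fact:easy-bound} is legitimate because $A$ and $B$ are genuine measurements, which is part of the hypothesis.
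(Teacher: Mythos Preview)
Your proof is correct and follows exactly the same approach as the paper: decompose $(A_a^x - B_a^x)\ket{\overline{\psi}}$ via $\ket{\overline{\psi}} = \ket{\psi} + (\ket{\overline{\psi}} - \ket{\psi})$, apply \Cref{fact:obvious-vector-fact}, bound the first piece by the hypothesis and the second by \Cref{fact:easy-bound}. You are in fact slightly more careful than the paper in tracking the constant factors.
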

\begin{proof}
Applying \Cref{fact:obvious-vector-fact} to $(A_{a}^{x} - B_{a}^{x} ) \ket{\psi}$ and $(A_{a}^{x} - B_{a}^{x} ) (\ket{\overline{\psi}} - \ket{\psi})$,
\begin{equation*}
\E_{\bx \sim \calD} \sum_a \Vert (A_{a}^{\bx} - B_{a}^{\bx} ) \ket{\overline{\psi}}\Vert^2
\leq \E_{\bx \sim \calD} \sum_a \Vert (A_{a}^{\bx} - B_{a}^{\bx} ) \ket{\psi}\Vert^2 + \E_{\bx \sim \calD} \sum_a \Vert (A_{a}^{\bx} - B_{a}^{\bx} ) (\ket{\overline{\psi}}-\ket{\psi})\Vert^2.
\end{equation*}
The first of these is bounded by $O(\delta)$ by the assumption, and the second of these is bounded by $4\epsilon$ by the assumption and \Cref{fact:easy-bound}.
\end{proof}

\begin{fact}\label{fact:close-on-marginal}
Suppose $A_a^{x_1} \approx_{\delta} B_a^{x_1}$ with respect to a distribution $\calD_{\mathrm{margin}}$ on $x_1$.
Let $\calD$ be a distribution on $(x_1, x_2)$ such that the marginal distribution on~$x_1$ is $\calD_{\mathrm{margin}}$.
Then $A_a^{x_1} \approx_{\delta} B_a^{x_1}$ with respect to~$\calD$.
\end{fact}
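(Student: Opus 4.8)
The plan is to observe that this fact is nothing more than the tower property of expectation, exploiting that the operators $A_a^{x_1}$ and $B_a^{x_1}$ depend only on the first coordinate $x_1$ (and not on $x_2$). So there is no real obstacle; the only point requiring care is to notice that the quantity being averaged in the $\approx_\delta$ definition is a function of $x_1$ alone, which is what makes marginalization legitimate.

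Concretely, I would first introduce the shorthand $\nu_{x_1} := \sum_a \Vert (A_a^{x_1} - B_a^{x_1})\ket{\psi}\Vert^2$, a nonnegative quantity depending only on $x_1$ and the fixed state $\ket{\psi}$. The hypothesis that $A_a^{x_1} \approx_\delta B_a^{x_1}$ with respect to $\calD_{\mathrm{margin}}$ is, by definition, the statement $\E_{\bx_1 \sim \calD_{\mathrm{margin}}} \nu_{\bx_1} = O(\delta)$. Next, to evaluate the left-hand side of the claimed $\approx_\delta$ statement with respect to $\calD$, I would unfold the definition and use that the summand is independent of $x_2$:
\begin{equation*}
\E_{(\bx_1, \bx_2) \sim \calD} \sum_a \Vert (A_a^{\bx_1} - B_a^{\bx_1})\ket{\psi}\Vert^2 \;=\; \E_{(\bx_1, \bx_2) \sim \calD} \nu_{\bx_1} \;=\; \E_{\bx_1 \sim \calD_{\mathrm{margin}}} \nu_{\bx_1},
\end{equation*}
where the last equality holds because the $x_1$-marginal of $\calD$ is $\calD_{\mathrm{margin}}$ by assumption. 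By the hypothesis, this final quantity is $O(\delta)$, which is exactly the assertion $A_a^{x_1} \approx_\delta B_a^{x_1}$ with respect to $\calD$. This completes the argument.
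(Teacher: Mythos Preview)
Your proposal is correct and is essentially the same approach as the paper's: the paper's proof is a one-line remark that this is a simple calculation, and you have correctly spelled out that calculation by observing the summand depends only on $x_1$ and invoking marginalization.
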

\begin{proof}
This is a simple calculation involving \Cref{not:marginalize}.
\end{proof}

\begin{fact}\label{fact:average-over-dists}
Let $k$ be a constant, and consider distributions over questions $\calD_1, \ldots, \calD_k$.
Let $\calD$ be a mixture of these distributions,
meaning that there is a probability distribution $p = (p_1, \ldots, p_k)$
such that a draw from $\calD$ can be simulated as follows: draw $\bi \sim p$ and output $\bx$ sampled from $\calD_{\bi}$.
Suppose $A^x_a \approx_{\delta} B^{x}_a$ with respect to $\calD_i$, for all $i \in [k]$.
Then $A^x_a \approx_{\delta} B^{x}_a$ with respect to $\calD$.
\end{fact}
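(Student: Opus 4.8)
The plan is to unpack the definition of the mixture distribution $\calD$ and reduce the claim to a convex combination of the per-$\calD_i$ hypotheses via linearity of expectation. Fix the (bipartite) state $\ket{\psi}$ on which all the $\approx_\delta$ statements are taken, and for each $i \in [k]$ set
\[ \nu_i := \E_{\bx \sim \calD_i} \sum_a \bigl\Vert (A^{\bx}_a - B^{\bx}_a)\ket{\psi}\bigr\Vert^2. \]
By hypothesis $A^x_a \approx_\delta B^x_a$ with respect to each $\calD_i$, so $\nu_i = O(\delta)$; since $k$ is a constant we may fix a single absolute constant $C$ with $\nu_i \le C\delta$ for all $i$.

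Next I would invoke the given description of a draw from $\calD$ — sample $\bi \sim p$ and then $\bx \sim \calD_{\bi}$ — together with the law of total expectation, which yields
\[ \E_{\bx \sim \calD} \sum_a \bigl\Vert (A^{\bx}_a - B^{\bx}_a)\ket{\psi}\bigr\Vert^2 = \sum_{i=1}^k p_i \,\nu_i. \]
Because $p = (p_1,\dots,p_k)$ is a probability distribution, $p_i \ge 0$ and $\sum_i p_i = 1$, so the right-hand side is at most $C\delta$, i.e.\ $O(\delta)$. This is exactly the statement that $A^x_a \approx_\delta B^x_a$ with respect to $\calD$, completing the proof.

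There is no substantive obstacle here: the only place the constant-$k$ assumption is used is to combine the finitely many implicit constants in the $\nu_i = O(\delta)$ bounds into one constant $C$ uniform over $i$ (if $k$ were allowed to grow one would additionally need these constants to be uniformly bounded). Accordingly I would present the fact in essentially the two displays above, perhaps noting in passing that each summand is a quantity between $0$ and $4$ by \Cref{fact:easy-bound}, which also makes the interchange of the expectation over $\bi$ with the finite sum over $a$ manifestly justified.
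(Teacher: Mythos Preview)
Your proof is correct and essentially identical to the paper's: both unpack the mixture via the law of total expectation and bound the resulting convex combination using the per-$\calD_i$ hypotheses, with the constant-$k$ assumption used only to absorb the finitely many hidden constants into one. The paper keeps separate constants $C_i$ and takes $\max_i C_i$ at the end, whereas you consolidate to a single $C$ upfront, but this is a purely cosmetic difference.
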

\begin{proof}
By definition, for each $i \in [k]$ there is some constant $C_i$ such that 
\begin{equation*}
\E_{\bx \sim \calD_i} \sum_{a} \Vert (A^{\bx}_a - B^{\bx}_b) \ket{\psi} \Vert^2 \leq C_k \cdot \delta.
\end{equation*}
Then we can bound the mixture with
\begin{equation*}
\E_{\bx \sim \calD} \sum_{a} \Vert (A^{\bx}_a - B^{\bx}_b) \ket{\psi} \Vert^2
= \E_{\bi \sim p} \E_{\bx \sim \calD_{\bi}} \sum_{a} \Vert (A^{\bx}_a - B^{\bx}_b) \ket{\psi} \Vert^2
\leq \E_{\bi \sim p} C_{\bi} \cdot \delta \leq \max_{i}\{C_i\} \cdot \delta = O(\delta).\qedhere
\end{equation*}
\end{proof}

\begin{fact}\label{fact:sub-zero}
Suppose $\{A^x_a\}$ is a projective measurement and $\{B^x_a\}$ is a set of matrices such that
each~$B^x_a$ is positive semidefinite
and $\sum_{a}B^x_a \preceq I$.
Define $C^x_a$ such that for each~$x$, there exists an~$a$ such that $C^x_a := B^x_a + (I - \sum_{a'} B^x_{a'})$
and for all other~$a' \neq a$, $C^x_{a'} := B^x_{a'}$.
Thus, $C^x$ is a POVM for each~$x$.
If $A^x_a \approx_{\eps} B^x_a$ then $A^x_a \approx_{\eps^{1/2}} C^x_a$.
\end{fact}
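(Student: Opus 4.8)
The plan is to reduce the whole statement to a bound on $\E_{\bx}\Vert D^{\bx}\ket{\psi}\Vert^2$, where for each question $x$ we set $D^x := I - \sum_a B^x_a$, the positive-semidefinite ``defect'' that $C^x_{a(x)}$ adds to $B^x_{a(x)}$ (here $a(x)$ denotes the distinguished outcome from the statement). Since $C^x_a = B^x_a$ for every $a \neq a(x)$ and $C^x_{a(x)} = B^x_{a(x)} + D^x$, applying \Cref{fact:obvious-vector-fact} to $(A^x_{a(x)} - B^x_{a(x)})\ket{\psi}$ and $-D^x\ket{\psi}$ gives, for each fixed $x$,
\begin{equation*}
\sum_a \Vert (A^x_a - C^x_a)\ket{\psi}\Vert^2 \;\leq\; 2\sum_a \Vert (A^x_a - B^x_a)\ket{\psi}\Vert^2 \;+\; 2\,\Vert D^x\ket{\psi}\Vert^2 .
\end{equation*}
Taking $\E_{\bx}$, the first term on the right is $O(\eps)$ by the hypothesis $A^x_a \approx_{\eps} B^x_a$, so it remains to show $\E_{\bx}\Vert D^{\bx}\ket{\psi}\Vert^2 = O(\eps^{1/2})$.

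For the key per-$x$ estimate, note first that $0 \preceq D^x \preceq I$ (using $0 \preceq \sum_a B^x_a \preceq I$), so $(D^x)^2 \preceq D^x$ and hence $\Vert D^x\ket{\psi}\Vert^2 \leq \bra{\psi}D^x\ket{\psi}$. Writing $p_a := \bra{\psi}A^x_a\ket{\psi} = \Vert A^x_a\ket{\psi}\Vert^2$ (here we use that $A$ is \emph{projective}) and $q_a := \bra{\psi}B^x_a\ket{\psi}$, and using $\sum_a p_a = 1$, we get $\bra{\psi}D^x\ket{\psi} = 1 - \sum_a q_a = \sum_a (p_a - q_a)$. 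I would then bound each summand: since each $B^x_a$ is positive semidefinite with $B^x_a \preceq \sum_{a'}B^x_{a'} \preceq I$, we have $(B^x_a)^2 \preceq B^x_a$, hence $q_a \geq \Vert B^x_a\ket{\psi}\Vert^2$, so, abbreviating $\nu_a := \Vert (A^x_a - B^x_a)\ket{\psi}\Vert$,
\begin{equation*}
p_a - q_a \;\leq\; \Vert A^x_a\ket{\psi}\Vert^2 - \Vert B^x_a\ket{\psi}\Vert^2 \;\leq\; \nu_a\bigl(\Vert A^x_a\ket{\psi}\Vert + \Vert B^x_a\ket{\psi}\Vert\bigr) \;\leq\; \nu_a\bigl(2\Vert A^x_a\ket{\psi}\Vert + \nu_a\bigr),
\end{equation*}
using $\alpha^2 - \beta^2 \leq |\alpha-\beta|(\alpha+\beta)$, the reverse triangle inequality $\bigl|\Vert A^x_a\ket{\psi}\Vert - \Vert B^x_a\ket{\psi}\Vert\bigr| \leq \nu_a$, and $\Vert B^x_a\ket{\psi}\Vert \leq \Vert A^x_a\ket{\psi}\Vert + \nu_a$. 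Summing over $a$ and applying Cauchy--Schwarz with the weights $\Vert A^x_a\ket{\psi}\Vert$, whose squares sum to $\sum_a p_a = 1$, yields $\sum_a(p_a - q_a) \leq 2\bigl(\sum_a \nu_a^2\bigr)^{1/2} + \sum_a \nu_a^2$. In other words, with $\eps_x := \sum_a \nu_a^2 = \sum_a\Vert(A^x_a-B^x_a)\ket{\psi}\Vert^2$, we have $\Vert D^x\ket{\psi}\Vert^2 \leq 2\sqrt{\eps_x} + \eps_x$.

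Finally I would take $\E_{\bx}$ and use Jensen's inequality for the concave function $\sqrt{\cdot}$: since $A^x_a \approx_{\eps} B^x_a$ means $\E_{\bx}\eps_{\bx} = O(\eps)$, we get $\E_{\bx}\Vert D^{\bx}\ket{\psi}\Vert^2 \leq 2\sqrt{\E_{\bx}\eps_{\bx}} + \E_{\bx}\eps_{\bx} = O(\sqrt{\eps})$. Substituting into the display from the first paragraph gives $\E_{\bx}\sum_a\Vert(A^{\bx}_a - C^{\bx}_a)\ket{\psi}\Vert^2 = O(\eps) + O(\sqrt{\eps}) = O(\eps^{1/2})$, which is exactly $A^x_a \approx_{\eps^{1/2}} C^x_a$ (for $\eps \geq 1$ the conclusion is already vacuous by \Cref{fact:trivial-upper-bound-approx-delta}).

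The step I expect to be the crux is the per-$x$ bound on $\sum_a(p_a - q_a)$. The naive route---bounding $|p_a - q_a| \leq \nu_a$ and then Cauchy--Schwarz against the constant weights---loses a factor $\sqrt{m}$ in the number of outcomes $m$, which is fatal since $m$ can be exponentially large. What rescues the argument is using projectivity of $A$ to write $p_a = \Vert A^x_a\ket{\psi}\Vert^2$ and then performing Cauchy--Schwarz against the outcome-independent budget $\sum_a \Vert A^x_a\ket{\psi}\Vert^2 = 1$; this is precisely what removes the $m$-dependence, at the unavoidable cost of the $\eps \mapsto \sqrt{\eps}$ degradation recorded in the statement (one can see the degradation is real: the pathology of \Cref{remark:fraction-of-identity} is ruled out only because $A$ is projective). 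As a sanity check, one verifies the identity $A^x_{a(x)} - C^x_{a(x)} = -\sum_{a\neq a(x)}(A^x_a - B^x_a)$, which is what underlies the opening display.
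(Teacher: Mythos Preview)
Your proof is correct and follows essentially the same approach as the paper's: both open with the same \Cref{fact:obvious-vector-fact} split, both reduce to bounding $\E_{\bx}\bra{\psi}(I-\sum_a B^{\bx}_a)\ket{\psi}$ via $(D^x)^2 \preceq D^x$, both rewrite this as $\E_{\bx}\sum_a(\Vert A^{\bx}_a\ket{\psi}\Vert^2 - \Vert B^{\bx}_a\ket{\psi}\Vert^2)$ using projectivity of $A$ and $(B^x_a)^2 \preceq B^x_a$, and both finish by factoring the difference of squares and applying Cauchy--Schwarz against the budget $\sum_a \Vert A^x_a\ket{\psi}\Vert^2 = 1$, followed by Jensen. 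The only cosmetic difference is that the paper factors at the operator level, writing $\bra{\psi}\bigl((A^x_a)^2-(B^x_a)^2\bigr)\ket{\psi} = \Re\bra{\psi}(A^x_a+B^x_a)(A^x_a-B^x_a)\ket{\psi}$ and then Cauchy--Schwarz on the inner product, whereas you factor at the scalar level via $\alpha^2-\beta^2 \leq |\alpha-\beta|(\alpha+\beta)$; these are two packagings of the same estimate.
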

\begin{proof}
By \Cref{fact:obvious-vector-fact},
\begin{equation*}
\E_{\bx} \sum_a \Vert (A^{\bx}_a - C^{\bx}_a) \ket{\psi} \Vert^2
\leq 2\E_{\bx} \sum_a \Vert (A^{\bx}_a - B^{\bx}_a) \ket{\psi} \Vert^2 + 2 \E_{\bx}\Vert (I - \sum_a B^{\bx}_a) \ket{\psi}\Vert^2.
\end{equation*}
The first of these terms we can bound by $O(\eps)$.  As for the second, 
\begin{multline*}
\E_{\bx}\Vert (I - \sum_a B^{\bx}_a) \ket{\psi}\Vert^2
= \E_{\bx} \bra{\psi} (I-\sum_a B^{\bx}_a)^2 \ket{\psi}
\leq \E_{\bx} \bra{\psi} (I-\sum_a B^{\bx}_a) \ket{\psi}\\
= 1- \E_{\bx} \sum_a\bra{\psi}  B^{\bx}_a \ket{\psi}
\leq1- \E_{\bx} \sum_a\bra{\psi}  (B^{\bx}_a)^2 \ket{\psi}.
\end{multline*}
Now, we write $1 = \E_{\bx} \sum_a \bra{\psi} (A^{\bx}_a)^2 \ket{\psi}$,
which holds because~$A$ is a projective measurement.
We bound the result as follows.
\begin{multline*}
\E_{\bx} \sum_a \bra{\psi} ((A^{\bx}_a)^2  - (B^{\bx}_a)^2)\ket{\psi}
= \Re\left(\E_{\bx} \sum_a \bra{\psi} (A^{\bx}_a  + B^{\bx}_a)(A^{\bx}_a - B^{\bx}_a)\ket{\psi}\right)\\
\leq \E_{\bx} \sqrt{\sum_a \Vert (A^{\bx}_a + B^{\bx}_a) \ket{\psi}\Vert^2}
	\cdot \sqrt{\sum_a \Vert (A^{\bx}_a - B^{\bx}_a)\ket{\psi} \Vert^2}.
\end{multline*}
For each~$\bx$,
we can bound the first square root by $O(1)$
due to \Cref{fact:obvious-vector-fact} and \Cref{fact:measurement-sub-measurement-switcheroo}.
Having done so, we can move the expectation into the second square root by Jensen's inequality.
The result is $O(\eps^{1/2})$ by assumption. This proves the fact.
\end{proof}

\subsubsection{Data processing}

In this section, we show a simple data processing inequality for the ``$\consistency_\delta$" distance.
We also observe that one does \emph{not} hold for the ``$\approx_\delta$" distance.

\begin{fact}\label{fact:specialize-the-simeq}
Suppose that $A^x_a \otimes I_{\reg{Bob}} \consistency_\delta I_{\reg{Alice}} \otimes B^x_a$.
Then $A^x_{[f(a) = b]} \otimes I_{\reg{Bob}} \consistency_\delta I_{\reg{Alice}} \otimes B^x_{[f(a) = b]}$.
\end{fact}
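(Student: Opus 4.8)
The plan is to reduce both sides of the claimed relation to the explicit ``winning probability'' form of the consistency game, and then observe that marginalizing the outcomes of a consistent pair of measurements can only increase this probability.

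First I would unpack the hypothesis. By definition, $A^x_a \otimes I_{\reg{Bob}} \consistency_\delta I_{\reg{Alice}} \otimes B^x_a$ says that Alice and Bob, playing $\game_{\mathrm{con}}(\bx)$ for $\bx \sim \calD$ using measurements $A$ and $B$, win with probability $1 - O(\delta)$, i.e.
\begin{equation*}
\E_{\bx} \sum_a \bra{\psi} A^{\bx}_a \otimes B^{\bx}_a \ket{\psi} \geq 1 - O(\delta),
\end{equation*}
and the desired conclusion $A^x_{[f(a)=b]} \otimes I_{\reg{Bob}} \consistency_\delta I_{\reg{Alice}} \otimes B^x_{[f(a)=b]}$ is equivalent to
\begin{equation*}
\E_{\bx} \sum_b \bra{\psi} A^{\bx}_{[f(a)=b]} \otimes B^{\bx}_{[f(a)=b]} \ket{\psi} \geq 1 - O(\delta).
\end{equation*}
Expanding the marginalized measurements via their definition (\Cref{not:marginalize}), $A^x_{[f(a)=b]} = \sum_{a : f(a) = b} A^x_a$ and likewise for $B$, the left-hand side of the second display equals
\begin{equation*}
\E_{\bx} \sum_b \sum_{\substack{a, a' :\\ f(a) = f(a') = b}} \bra{\psi} A^{\bx}_a \otimes B^{\bx}_{a'} \ket{\psi}.
\end{equation*}
Since $A$ and $B$ are POVMs, every $A^x_a$ and $B^x_{a'}$ is positive semidefinite, so $A^x_a \otimes B^x_{a'} \succeq 0$ and each summand above is nonnegative. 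The sum ranges over all pairs $(a,a')$ with $f(a)=f(a')$, which in particular contains every diagonal pair $a = a'$; the total contribution of those diagonal terms is exactly $\E_{\bx}\sum_a \bra{\psi} A^{\bx}_a \otimes B^{\bx}_a \ket{\psi} \geq 1 - O(\delta)$. Discarding the remaining nonnegative off-diagonal terms yields the second display, which is the conclusion.

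There is essentially no obstacle here: the only things to verify are that marginalization distributes through the tensor product and the state expectation, and that POVM elements are positive semidefinite so the extra cross-terms may be dropped without cost. Notably, unlike many statements about the ``$\approx_\delta$'' distance, no projectivity hypothesis is required — which is consistent with the observation (made elsewhere in this subsection) that the analogous data-processing inequality \emph{fails} for ``$\approx_\delta$''.
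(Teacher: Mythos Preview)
Your proof is correct and is essentially the same as the paper's, just expanded algebraically. The paper states the argument operationally in one line — if Alice and Bob's raw answers $\ba, \ba'$ agree then $f(\ba) = f(\ba')$, so applying $f$ cannot decrease the agreement probability — and your computation with the nonnegative off-diagonal cross-terms is the direct algebraic translation of that containment of events.
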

\begin{proof}
Given question~$\bx$, if Alice and Bob return~$\ba$ and~$\ba'$ in which $\ba = \ba'$, then $f(\ba) = f(\ba')$.
As a result, applying~$f$ to their answers cannot decrease the probability they agree.
\end{proof}

\begin{remark}
We note that the same fact is \emph{not} true for the ``$\approx_\delta$" distance.
Consider answers of the form $a = (b, i)$, where $b \in \{0, 1\}$ and $i \in [m]$.
Suppose $A^x_{b, i} = I/(2m)$ for all~$a$, whereas $B^x_{0, i} = I/m$ and $B^x_{1, i} = 0$ for all~$i$.
Consider the function $f(b, i) = b$.
It can be checked that in this case, $A^x_a \otimes I_{\reg{Bob}} \approx_{1/2m} I_{\reg{Alice}} \otimes B^x_a$
but $A^x_{[f(a) = b]} \otimes I_{\reg{Bob}} \approx_{1/2} I_{\reg{Alice}} \otimes B^x_{[f(a) = b]}$.
\end{remark}

\subsubsection{Triangle inequalities}

In this section, we give two triangle inequalities. Our first is for the state-dependent distance.

\begin{fact}[Triangle inequality]\label{fact:triangle}
Suppose $A_a^x \approx_\delta B_a^x$ and $B_a^x \approx_\epsilon C_a^x$. Then $A_a^x \approx_{\delta + \epsilon} C_a^x$.
\end{fact}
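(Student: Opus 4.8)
The plan is to unwind the definition of the ``$\approx_\delta$'' distance and apply the elementary vector inequality already recorded as \Cref{fact:obvious-vector-fact}. Recall that $A_a^x \approx_\delta B_a^x$ on a state $\ket{\psi}$ and distribution $\calD$ means precisely that $\E_{\bx \sim \calD} \sum_a \Vert (A_a^{\bx} - B_a^{\bx})\ket{\psi}\Vert^2 = O(\delta)$, and similarly $B_a^x \approx_\epsilon C_a^x$ means $\E_{\bx} \sum_a \Vert (B_a^{\bx} - C_a^{\bx})\ket{\psi}\Vert^2 = O(\epsilon)$. To bound the quantity $\E_{\bx} \sum_a \Vert (A_a^{\bx} - C_a^{\bx})\ket{\psi}\Vert^2$, I would write $(A_a^x - C_a^x)\ket{\psi} = (A_a^x - B_a^x)\ket{\psi} + (B_a^x - C_a^x)\ket{\psi}$ and apply \Cref{fact:obvious-vector-fact} termwise, yielding
\begin{equation*}
\Vert (A_a^x - C_a^x)\ket{\psi}\Vert^2 \leq 2\Vert (A_a^x - B_a^x)\ket{\psi}\Vert^2 + 2\Vert (B_a^x - C_a^x)\ket{\psi}\Vert^2.
\end{equation*}

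Summing over $a$ and taking the expectation over $\bx \sim \calD$ (using linearity of both operations) gives
\begin{equation*}
\E_{\bx}\sum_a \Vert (A_a^{\bx} - C_a^{\bx})\ket{\psi}\Vert^2 \leq 2\,\E_{\bx}\sum_a \Vert (A_a^{\bx} - B_a^{\bx})\ket{\psi}\Vert^2 + 2\,\E_{\bx}\sum_a \Vert (B_a^{\bx} - C_a^{\bx})\ket{\psi}\Vert^2,
\end{equation*}
and the two terms on the right are $O(\delta)$ and $O(\epsilon)$ respectively by hypothesis, so the total is $O(\delta + \epsilon)$, which is exactly the assertion $A_a^x \approx_{\delta+\epsilon} C_a^x$. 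The constant factor of $2$ is absorbed into the big-$O$, which is consistent with how the ``$\approx_\delta$'' notation is used throughout the paper.

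There is essentially no obstacle here — this is a one-line application of $\Vert u + v\Vert^2 \leq 2\Vert u\Vert^2 + 2\Vert v\Vert^2$. The only minor point worth stating explicitly is that the same state $\ket{\psi}$ and distribution $\calD$ must be implicit in all three relations for the chain to compose; since the notation suppresses these when they are clear from context, the statement should be read with that understanding, and the proof goes through verbatim with whatever common $(\ket{\psi},\calD)$ is in force.
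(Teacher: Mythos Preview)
Your proof is correct and matches the paper's argument essentially line for line: both apply \Cref{fact:obvious-vector-fact} to the decomposition $(A_a^x - C_a^x)\ket{\psi} = (A_a^x - B_a^x)\ket{\psi} + (B_a^x - C_a^x)\ket{\psi}$, sum over $a$, take expectation over $\bx$, and absorb the factor of $2$ into the big-$O$. Your remark about the implicit common state and distribution is a reasonable clarification that the paper leaves unstated.
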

\begin{proof}
Applying \Cref{fact:obvious-vector-fact} to $(A_{a}^{x} - B_{a}^{x} ) \ket{\psi}$ and $(B_{a}^{x} - C_{a}^{x} ) \ket{\psi}$,
\begin{align*}
\E_{\bx \sim \calD} \sum_a \Vert (A_{a}^{\bx} - C_{a}^{\bx} ) \ket{\psi}\Vert^2
&\leq 2 \E_{\bx \sim \calD} \sum_a \Vert (A_{a}^{\bx} - B_{a}^{\bx} ) \ket{\psi}\Vert^2 + 2 \E_{\bx \sim \calD} \sum_a \Vert (B_{a}^{\bx} - C_{a}^{\bx} ) \ket{\psi}\Vert^2\\
&\leq 2 (\delta + \epsilon).\qedhere
\end{align*}
\end{proof}

Note that this does \emph{not} show that if
\begin{equation*}
A^x_a \otimes I_{\reg{Bob}} \approx_\delta I_{\reg{Alice}} \otimes B^x_a
\quad\text{and}\quad
B^x_a \otimes I_{\reg{Bob}} \approx_\delta I_{\reg{Alice}} \otimes C^x_a
\end{equation*}
then $A^x_a \otimes I_{\reg{Bob}} \approx_\delta I_{\reg{Alice}} \otimes C^x_a$.
This would only follow if, for example, we also knew that $D^x_a \otimes I_{\reg{Bob}} \approx_\delta I_{\reg{Alice}} \otimes D^x_a$,
for $D$ equal to one of~$A$, $B$, or $C$.
We do, however, always have the following triangle-like inequalities.

\begin{fact}[Triangle-like inequalities]\label{fact:triangle-like}
The following two facts are true.
\begin{enumerate}
\item Suppose $A^x_a \otimes I_{\reg{Bob}} \consistency_\delta I_{\reg{Alice}} \otimes B^x_a$,
	$B^x_a \otimes I_{\reg{Bob}} \consistency_\delta I_{\reg{Alice}} \otimes C^x_a$,
	and $C^x_a \otimes I_{\reg{Bob}} \consistency_\delta I_{\reg{Alice}} \otimes D^x_a$.
	Then $A^x_a \otimes I_{\reg{Bob}} \consistency_\delta I_{\reg{Alice}} \otimes D^x_a$.
\item Suppose $A^x_a \otimes I_{\reg{Bob}} \approx_\delta I_{\reg{Alice}} \otimes B^x_a$,
	$B^x_a \otimes I_{\reg{Bob}} \approx_\delta I_{\reg{Alice}} \otimes C^x_a$,
	and $C^x_a \otimes I_{\reg{Bob}} \approx_\delta I_{\reg{Alice}} \otimes D^x_a$.
	Then $A^x_a \otimes I_{\reg{Bob}} \approx_\delta I_{\reg{Alice}} \otimes D^x_a$.
\end{enumerate}
\end{fact}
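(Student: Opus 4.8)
The plan is to prove the ``$\approx_\delta$'' statement (part 2) first, and then read off the ``$\consistency_\delta$'' statement (part 1) from it. For part 1, I would begin by applying Naimark's theorem (\Cref{thm:naimark}) to dilate all of Alice's and Bob's measurements to projective ones on an enlarged bipartite pure state. This preserves every outcome distribution, hence all three ``$\consistency_\delta$'' hypotheses, and it suffices to prove the conclusion for the dilated strategy. Once the measurements are projective, the ``$\consistency_\delta$'' and ``$\approx_\delta$'' distances coincide (\Cref{fact:agreement}), so the three hypotheses become ``$\approx_\delta$'' statements; part 2 then yields $A^x_a \otimes I_{\reg{Bob}} \approx_\delta I_{\reg{Alice}} \otimes D^x_a$, and a final application of \Cref{fact:agreement} converts this back to $A^x_a \otimes I_{\reg{Bob}} \consistency_\delta I_{\reg{Alice}} \otimes D^x_a$; transferring back through the Naimark dilation closes part 1. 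The only subtlety to watch here is to apply the two directions of \Cref{fact:agreement} only \emph{after} passing to projective measurements, so that no square-root loss in $\delta$ is incurred.

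The real content is part 2. The naive move — chaining the three relations with the triangle inequality (\Cref{fact:triangle}) — does not work, for exactly the reason the paper flags just before the statement: the triangle inequality needs a literal common middle term, but each relation presents $B^x_a$ (resp.\ $C^x_a$) on Alice's factor in one place and on Bob's factor in another, and in general $B^x_a \otimes I_{\reg{Bob}} \not\approx I_{\reg{Alice}} \otimes B^x_a$. The extra ingredient that makes it go through is the \emph{completeness} of the intermediate measurements, exploited via \Cref{fact:add-a-proj}. Concretely, I would: from rel.\ 1, left-multiply by the Alice-side measurement $\{C^x_c \otimes I_{\reg{Bob}}\}_c$ to compare $C^x_c A^x_a \otimes I_{\reg{Bob}}$ with $C^x_c \otimes B^x_a$; from rel.\ 2, left-multiply by $\{A^x_a \otimes I_{\reg{Bob}}\}_a$ to compare $A^x_a B^x_b \otimes I_{\reg{Bob}}$ with $A^x_a \otimes C^x_b$; and from rel.\ 3, left-multiply by $\{A^x_a \otimes I_{\reg{Bob}}\}_a$ to compare $A^x_a C^x_c \otimes I_{\reg{Bob}}$ with $A^x_a \otimes D^x_c$. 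Chaining these with \Cref{fact:triangle} in the right order makes the $B$'s and $C$'s telescope, and the key point that keeps the sums from blowing up by the (enormous) number of answer outcomes is that the relevant measurements are projective, so $B^x_a B^x_b = \delta_{ab}B^x_a$ and $C^x_c C^x_{c'} = \delta_{cc'}C^x_c$ collapse the intermediate sums; the overall loss is only an $O(1)$ factor in $\delta$, as required.

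The main obstacle, conceptually, is precisely this inability to move a measurement between Alice and Bob: the ``$\approx_\delta$'' distance is tied to a fixed bipartition, so transitivity is not automatic, and one must use the completeness relations together with the fact that there are \emph{three} relations rather than two — it is this parity that eventually lets the ``$A$-on-Alice'' and ``$D$-on-Bob'' labels meet after the $B$ and $C$ operators have been cancelled. The technical obstacle is bookkeeping: arranging the left-multiplications and the order of the triangle-inequality steps so that the telescoping is exact and so that only a constant factor in $\delta$ is lost (this is where \Cref{fact:add-a-proj} is used in place of a Cauchy--Schwarz estimate, which would cost a square root). Once this direct operator-level chain for part 2 is in hand, part 1 follows mechanically from it as described above.
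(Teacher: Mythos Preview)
Your plan has a genuine gap, and in fact part~2 as stated is \emph{false} for general POVMs. Take $\ket{\psi}=\ket{0}\otimes\ket{1}$ on $\mathbb{C}^2\otimes\mathbb{C}^2$, a single question, two outcomes, and set $A_a=D_a=\proj{a}$ while $B_0=\tfrac12\proj{0}+\proj{1}$, $B_1=\tfrac12\proj{0}$, $C_0=\tfrac12\proj{1}$, $C_1=\proj{0}+\tfrac12\proj{1}$. One checks directly that all three ``$\approx_\delta$'' hypotheses hold with $\delta=0$, yet $\sum_a\|(A_a\otimes I-I\otimes D_a)\ket{\psi}\|^2=2$. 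The obstruction sits exactly where you invoke projectivity of $B$ and $C$; and you cannot Naimark-dilate to restore it, since (as the paper itself cautions) dilation need not preserve ``$\approx_\delta$'' hypotheses. So your operator-chaining argument cannot establish part~2 as written --- nor can any argument. (The paper's one-liner ``several applications of \Cref{fact:triangle}'' for part~2 is thus also in error; fortunately the paper only ever invokes part~1.)

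For part~1 the paper does \emph{not} reduce to part~2. It argues directly via the pointwise operator inequality of \Cref{fact:fun-matrix-fact}: for $0\preceq A,B,C,D\preceq I$,
\[
1-\bra{\psi}A\otimes D\ket{\psi}\;\le\;(1-\bra{\psi}A\otimes B\ket{\psi})+(1-\bra{\psi}B\otimes C\ket{\psi})+(1-\bra{\psi}C\otimes D\ket{\psi}),
\]
obtained from the eigenvalue bound $A\otimes B+B\otimes C+C\otimes D-A\otimes D\preceq 2I$. Summing over answers and averaging over questions gives part~1 immediately, with no dilation and no detour through part~2. Your Naimark-then-part-2 route, by contrast, still needs an independent proof of part~2 under projectivity; but under projectivity \Cref{fact:agreement} makes part~2 equivalent to part~1, so absent a genuinely new argument (and the sketched telescoping does not obviously supply one --- after your three left-multiplications the resulting two-index families still do not share a common middle term) the route is circular. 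The missing ingredient is precisely the direct matrix inequality of \Cref{fact:fun-matrix-fact}.
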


Before proving this, we need the following fact from linear algebra.

\begin{fact}\label{fact:fun-matrix-fact}
Suppose $0 \preceq A, B, C, D \preceq I$.  Then
\begin{equation*}
1- \bra{\psi} A \otimes D \ket{\psi}
\leq (1 - \bra{\psi} A \otimes B \ket{\psi}) + (1 - \bra{\psi} B \otimes C \ket{\psi}) + (1 - \bra{\psi} C \otimes D \ket{\psi}).
\end{equation*}
\end{fact}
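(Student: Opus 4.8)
The plan is to recognize that this is a purely linear-algebraic claim that is in fact weaker than it appears: only two of the three terms on the right-hand side are actually used. The middle term $1 - \bra{\psi} B \otimes C \ket{\psi}$ is nonnegative (since $B \otimes C \preceq I \otimes I$ gives $\bra{\psi} B \otimes C \ket{\psi} \leq \|\psi\|^2 \leq 1$), so it can simply be appended to the right-hand side at the very end. The real content is the two-term ``marginalized'' inequality
\[ 1 - \bra{\psi} A \otimes D \ket{\psi} \leq \bigl(1 - \bra{\psi} A \otimes I \ket{\psi}\bigr) + \bigl(1 - \bra{\psi} I \otimes D \ket{\psi}\bigr), \]
together with the two elementary bounds $\bra{\psi} A \otimes B \ket{\psi} \leq \bra{\psi} A \otimes I \ket{\psi}$ and $\bra{\psi} C \otimes D \ket{\psi} \leq \bra{\psi} I \otimes D \ket{\psi}$.

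Concretely, I would proceed in three short steps. First, for the marginal bounds: since $0 \preceq B \preceq I$, the operator $A \otimes (I - B)$ is a tensor product of positive semidefinite operators, hence positive semidefinite, so $A \otimes B \preceq A \otimes I$ and thus $\bra{\psi} A \otimes B \ket{\psi} \leq \bra{\psi} A \otimes I \ket{\psi}$; symmetrically $(I - C) \otimes D \succeq 0$ gives $C \otimes D \preceq I \otimes D$ and the analogous scalar bound. Second, for the marginalized inequality: I would use the identity $A \otimes I + I \otimes D - A \otimes D = A \otimes I + (I - A) \otimes D$ together with the observation that $(I - A) \otimes (I - D) \succeq 0$ (again a tensor of positive semidefinite operators), which rearranges to $(I - A) \otimes D \preceq (I - A) \otimes I$ and hence $A \otimes I + I \otimes D - A \otimes D \preceq A \otimes I + (I - A) \otimes I = I \otimes I$; evaluating in the state $\ket{\psi}$ yields $\bra{\psi} A \otimes I \ket{\psi} + \bra{\psi} I \otimes D \ket{\psi} - \bra{\psi} A \otimes D \ket{\psi} \leq 1$, which is precisely the marginalized inequality. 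Third, I would chain: substitute the two marginal bounds into the marginalized inequality to replace $\bra{\psi} A \otimes I \ket{\psi}$ by the smaller $\bra{\psi} A \otimes B \ket{\psi}$ and $\bra{\psi} I \otimes D \ket{\psi}$ by the smaller $\bra{\psi} C \otimes D \ket{\psi}$, and finally add the nonnegative quantity $1 - \bra{\psi} B \otimes C \ket{\psi}$ to obtain the stated inequality.

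I do not expect a genuine obstacle here, since every step reduces to ``a tensor product of positive semidefinite operators is positive semidefinite.'' If anything, the only point worth flagging is a false lead: one should \emph{not} attempt to prove this by a correlation-style telescoping $A \!\leftrightarrow\! B \!\leftrightarrow\! C \!\leftrightarrow\! D$, because the left-hand copy of $B$ (the operator $B \otimes I$) and the right-hand copy of $B$ (the operator $I \otimes B$) are unrelated, so that chain never closes up. The clean route, as above, is instead to pass through the one-sided marginals $A \otimes I$ and $I \otimes D$, where the bipartite tensor structure carries the argument.
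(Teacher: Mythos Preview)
Your proof is correct and uses exactly the same operator-inequality ingredients as the paper's proof: both reduce to $A \otimes B \preceq A \otimes I$, $C \otimes D \preceq I \otimes D$, $B \otimes C \preceq I$, and $(I-A)\otimes D \preceq (I-A)\otimes I$, just organized slightly differently (the paper rearranges to show $A\otimes B + B\otimes C + C\otimes D - A\otimes D \preceq 2I$ directly). Your additional remark that the middle term $1-\bra{\psi}B\otimes C\ket{\psi}$ is pure slack and the inequality already holds with only the two outer terms is a nice sharpening that the paper's argument also establishes implicitly but does not call out.
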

\begin{proof}
Rearranging, we want to show that
\begin{equation*}
\bra{\psi} A \otimes B \ket{\psi} + \bra{\psi} B \otimes C \ket{\psi} + \bra{\psi} C \otimes D \ket{\psi} - \bra{\psi} A \otimes D \ket{\psi}
\leq 2.
\end{equation*}
Or, equivalently
\begin{equation*}
\tr(\ket{\psi}\bra{\psi} \cdot (A \otimes B + B \otimes C + C \otimes D - A \otimes D)) \leq 2.
\end{equation*}
The left-hand side is at most the maximum eigenvalue of $A \otimes B + B \otimes C + C \otimes D - A \otimes D$.
To bound this maximum eigenvalue, we note that $A \otimes B \preceq A \otimes I$, $B \otimes C \preceq I \otimes I$, and $C \otimes D \preceq I \otimes D$.
As a result, 
\begin{equation*}
A \otimes B + B \otimes C + C \otimes D - A \otimes D
\preceq A \otimes I + I \otimes I + I \otimes D - A \otimes D.
\end{equation*}
Next, $I \otimes D - A \otimes D = (I - A) \otimes D \preceq (I-A) \otimes I$ because $A \preceq I$.
Thus,
\begin{equation*}
A \otimes I + I \otimes I + I \otimes D - A \otimes D
\preceq A \otimes I + I \otimes I  + (I-A) \otimes I = 2 \cdot I \otimes I.
\end{equation*}
But the maximum eigenvalue of this is~$2$.
\end{proof}

Now we prove \Cref{fact:triangle-like}.
\begin{proof}[Proof of \Cref{fact:triangle-like}]
The second fact follows from several applications of \Cref{fact:triangle}.
As for the first fact, we can write the consistency as
\begin{equation*}
\E_\bx \sum_a ( 1- \bra{\psi} A^{\bx}_a \otimes D^{\bx}_a \ket{\psi})
\end{equation*}
Applying \Cref{fact:fun-matrix-fact}, this is at most
\begin{equation*}
\E_\bx \sum_a (1 - \bra{\psi} A^\bx_a \otimes B^\bx_a \ket{\psi}) + (1 - \bra{\psi} B^\bx_a \otimes C^\bx_a \ket{\psi}) + (1 - \bra{\psi} C^\bx_a \otimes D^\bx_a \ket{\psi})
\end{equation*}
Averaging over questions and summing over answers, each of these terms is at most~$\delta$, by assumption.
\end{proof}

\subsubsection{Close strategies have close game values}

In this section, we will show that two strategies which are close in state-dependent distance are also close in value for any game~$\game$.
We note crucially that one of the two strategies must be \emph{projective} to apply this fact.

\begin{fact}\label{fact:approx-delta-generalized-game-value}
Let $\calD$ be a distribution on questions~$x$, and for each~$x$ let $\mathrm{acc}(x)$ be a set of ``accepting" answers.
Given a state~$\psi$ and a strategy $\{A^x_a\}$ define
\begin{equation*}
\mathrm{val}(A) = \E_{\bx \sim \calD} \sum_{a \in \mathrm{acc}(x)} \bra{\psi} A^{\bx}_a \ket{\psi}.
\end{equation*}
Suppose $\{A^x_a\}$ and $\{B^x_a\}$ are two strategies such that $A^x_a \approx_\delta B^x_a$ on state~$\psi$ and distribution~$\calD$.
Suppose further that either~$A$ or~$B$ is a projective measurement (and the other is a POVM measurement).
Then
\begin{equation*}
\mathrm{val}(A) - O(\delta^{1/2}) \leq \mathrm{val}(B) \leq \mathrm{val}(A) + O(\delta^{1/2}).
\end{equation*}
\end{fact}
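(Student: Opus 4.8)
The plan is to first reduce to the case that $A$ is the projective measurement: the relation $A^x_a\approx_\delta B^x_a$ is symmetric and the desired conclusion $|\mathrm{val}(A)-\mathrm{val}(B)|\le O(\delta^{1/2})$ is symmetric in $A,B$, so if instead $B$ is projective we simply swap their names. With $A$ projective I would prove the one-sided inequality $\mathrm{val}(A)-\mathrm{val}(B)\le O(\delta^{1/2})$ directly, and then obtain the reverse inequality by applying the same bound to the complementary accept sets. Write $P^x:=\sum_{a\in\mathrm{acc}(x)}A^x_a$ and $S^x:=\sum_{a\in\mathrm{acc}(x)}B^x_a$; since $A^x$ is a projective measurement, $P^x$ is itself a projector, we have $\mathrm{val}(A)-\mathrm{val}(B)=\E_{\bx}(\bra{\psi}P^{\bx}\ket{\psi}-\bra{\psi}S^{\bx}\ket{\psi})$, and crucially $\sum_{a\in\mathrm{acc}(x)}\Vert A^x_a\ket{\psi}\Vert^2=\bra{\psi}P^x\ket{\psi}\le 1$.

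For the one-sided bound, fix $x$ and estimate $\bra{\psi}S^x\ket{\psi}$ from below. Since $0\preceq B^x_a\preceq I$ we have $\bra{\psi}B^x_a\ket{\psi}=\Vert\sqrt{B^x_a}\ket{\psi}\Vert^2\ge\Vert B^x_a\ket{\psi}\Vert^2$, hence $\bra{\psi}S^x\ket{\psi}\ge\sum_{a\in\mathrm{acc}(x)}\Vert B^x_a\ket{\psi}\Vert^2$. Expanding $B^x_a\ket{\psi}=A^x_a\ket{\psi}-(A^x_a-B^x_a)\ket{\psi}$ and using Cauchy--Schwarz on the cross term gives $\Vert B^x_a\ket{\psi}\Vert^2\ge\Vert A^x_a\ket{\psi}\Vert^2-2\Vert A^x_a\ket{\psi}\Vert\cdot\Vert(A^x_a-B^x_a)\ket{\psi}\Vert$. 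Summing over $a\in\mathrm{acc}(x)$, using $\sum_a\Vert A^x_a\ket{\psi}\Vert^2=\bra{\psi}P^x\ket{\psi}\le 1$ and Cauchy--Schwarz once more, yields $\bra{\psi}S^x\ket{\psi}\ge\bra{\psi}P^x\ket{\psi}-2\bigl(\sum_{a\in\mathrm{acc}(x)}\Vert(A^x_a-B^x_a)\ket{\psi}\Vert^2\bigr)^{1/2}$, i.e. $\bra{\psi}P^x\ket{\psi}-\bra{\psi}S^x\ket{\psi}\le 2\bigl(\sum_a\Vert(A^x_a-B^x_a)\ket{\psi}\Vert^2\bigr)^{1/2}$. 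Taking $\E_{\bx}$ and applying Jensen's inequality to the square root gives $\mathrm{val}(A)-\mathrm{val}(B)\le 2\bigl(\E_{\bx}\sum_a\Vert(A^{\bx}_a-B^{\bx}_a)\ket{\psi}\Vert^2\bigr)^{1/2}=O(\delta^{1/2})$.

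For the reverse inequality I would rerun exactly this argument with each accept set $\mathrm{acc}(x)$ replaced by its complement $\mathrm{acc}(x)^c$; this is legitimate because $A$ is still projective and the hypothesis $A^x_a\approx_\delta B^x_a$ is unchanged, and because $A^x$ and $B^x$ are measurements the value with accept sets $\mathrm{acc}(x)^c$ equals $1$ minus the original value for each strategy, so the bound becomes $\mathrm{val}(B)-\mathrm{val}(A)\le O(\delta^{1/2})$. Combining the two directions (together with the reduction to $A$ projective) proves the fact. The step I expect to be the real content — and the one place projectivity is genuinely needed — is the identity $\sum_{a\in\mathrm{acc}(x)}\Vert A^x_a\ket{\psi}\Vert^2=\bra{\psi}P^x\ket{\psi}$: for a general POVM one only gets an inequality in the wrong direction, which is precisely why \Cref{remark:fraction-of-identity} can destroy the statement when neither measurement is projective. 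A related pitfall to avoid is coarse-graining to the binary measurements $\{P^x,I-P^x\}$ and $\{S^x,I-S^x\}$: coarse-graining does not in general preserve $\approx_\delta$, so the argument must stay at the level of the individual outcomes $a$ and extract the gain from projectivity of $A$ there.
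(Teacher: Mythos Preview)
Your proof is correct and matches the paper's argument essentially line for line: the paper also reduces without loss of generality to $A$ projective, proves the one-sided bound $\mathrm{val}(A)\le \mathrm{val}(B)+O(\delta^{1/2})$ via $\bra{\psi}B^x_a\ket{\psi}\ge\Vert B^x_a\ket{\psi}\Vert^2$, the expansion $B^x_a\ket{\psi}=A^x_a\ket{\psi}+(B^x_a-A^x_a)\ket{\psi}$, Cauchy--Schwarz on the cross term, and Jensen on the square root, and then obtains the reverse inequality by replacing $\mathrm{acc}(x)$ with its complement. Your closing remarks about where projectivity is genuinely used and why coarse-graining to $\{P^x,I-P^x\}$ fails are also on point.
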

\begin{proof}
Assume without loss of generality that~$A$ is a projective measurement and~$B$ is a POVM measurement.
We will prove the fact by showing the following stronger statement:
for each~$x$, let~$S(x)$ be any set of answers~$a$,
and define
\begin{equation*}
\mathrm{val}(A,S) := \E_\bx \sum_{a \in S(x)} \bra{\psi} A^{\bx}_a\ket{\psi}.
\end{equation*}
Then $\mathrm{val}(A,S) \leq \mathrm{val}(B,S) + O(\delta^{1/2})$.
By taking $S(x) := \mathrm{acc}(x)$ this implies the lower bound $\mathrm{val}(A) - O(\delta^{1/2}) \leq \mathrm{val}(B)$,
and by taking $S(x) := \mathrm{rej}(x)$, defined to be the set of answers \emph{not} in $\mathrm{acc}(x)$,
then this implies the upper bound $\mathrm{val}(B) \leq \mathrm{val}(A) + O(\delta^{1/2})$.

If we write $\ket{u_a^x} = A_{a}^{x}  \ket{\psi}$
	and $\ket{w_a^x} = (B_{a}^{x} - A_{a}^{x}) \ket{\psi}$, then
\begin{equation*}
\Vert B_{a}^{x} \ket{\psi} \Vert^2
= \Vert \ket{u_a^x} + \ket{v_a^x} \Vert^2
 = \Vert\ket{u_a^x}\Vert^2 + \Vert\ket{w_a^x}\Vert^2
 + \braket{u_a^x \mid w_a^x} + \braket{w_a^x \mid u_a^x}.
\end{equation*}
By definition,
\begin{align*}
\mathrm{val}(B)
& = \E_{\bx} \sum_{a\in S(\bx)} \bra{\psi} B_{a}^{\bx} \ket{\psi}\\
& \geq \E_{\bx} \sum_{a\in S(\bx)} \bra{\psi} (B_{a}^{\bx})^2 \ket{\psi}\tag{because~$B$ is a POVM}\\
& = \E_{\bx} \sum_{a\in S(\bx)}\Vert B_{a}^{\bx} \ket{\psi} \Vert^2\\
& = \E_{\bx} \sum_{a\in S(\bx)} \Vert\ket{u_a^{\bx}}\Vert^2 + \Vert\ket{w_a^{\bx}}\Vert^2
		 + \braket{u_a^{\bx} \mid w_a^{\bx}} + \braket{w_a^{\bx} \mid u_a^{\bx}}.
\end{align*}
Averaging over questions and summing over answers, the first term is exactly $\mathrm{val}(A)$ because~$A$ is projective.
The second term is always nonnegative, so we lower bound it by zero.
As for the last two terms,
\begin{equation}\label{eq:our-powers-combined--wind!}
\braket{u_a^x \mid w_a^x} + \braket{w_a^x \mid u_a^x}
\geq - 2 \cdot |\braket{u_a^x \mid w_a^x}| 
\geq -2 \cdot \Vert u_a^x \Vert \cdot \Vert w_a^x \Vert.
\end{equation}
Applying Cauchy-Schwarz,
Jensen's inequality,
and \Cref{fact:measurement-sub-measurement-switcheroo},
\begin{equation}\label{eq:our-powers-combined--heart!}
\E_{\bx} \sum_{a \in S(\bx)} \Vert u_a^{\bx} \Vert \cdot \Vert w_a^{\bx} \Vert
\leq \E_{\bx} \sqrt{\sum_{a \in S(\bx)} \Vert u_a^{\bx} \Vert^2 \cdot  \sum_{a \in S(\bx)}\Vert w_a^{\bx}\Vert^2 }
\leq \sqrt{\E_{\bx} \sum_{a \in S(\bx)}\Vert w_a^{\bx}\Vert^2 }.
\end{equation}
But the expectation inside the root is at most $O(\delta)$ because $A^{x}_a \approx_{\delta} B^{x}_a$.
Combining \Cref{eq:our-powers-combined--wind!,eq:our-powers-combined--heart!} completes the proof.
\end{proof}

We will typically, though not always,
apply \Cref{fact:approx-delta-generalized-game-value} in the following special case.

\begin{fact}\label{fact:approx-delta-game-value}
Let $\game$ be a game whose questions $(\bx_1, \bx_2) \sim \game$ have marginal distribution $\bx_1 \sim \calD$.
Suppose $\{A_a^x\}$
and $\{B_a^x\}$ are measurements such that
$
A_a^x  \otimes I \approx_\delta B_a^x \otimes I
$
on state $\psi$ and distribution~$\calD$.
Consider the strategies $\calS_A = \{\psi, A\}$ and $\calS_B = \{\psi, B\}$.
If either~$A$ or~$B$ is a projective measurement (and the other is a POVM measurement), then
\begin{equation*}
\valstrat{\game}{\calS_A} - O(\delta^{1/2}) \leq \valstrat{\game}{\calS_B} \leq \valstrat{\game}{\calS_A} + O(\delta^{1/2}).
\end{equation*}
\end{fact}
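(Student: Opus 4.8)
The plan is to reduce \Cref{fact:approx-delta-game-value} to \Cref{fact:approx-delta-generalized-game-value} by repackaging the two-player strategies $\calS_A$ and $\calS_B$ as single ``global'' measurements on the entire shared Hilbert space, whose generalized game values (in the sense of \Cref{fact:approx-delta-generalized-game-value}, with the accepting sets chosen to encode $\game$'s decision predicate) are literally $\valstrat{\game}{\calS_A}$ and $\valstrat{\game}{\calS_B}$.

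First I would make some harmless reductions. By hypothesis we may take $\{A_a^x\}$ projective and $\{B_a^x\}$ a POVM. Writing $\{C_b^y\}$ for Bob's measurements, apply Naimark (\Cref{thm:naimark}) to Bob so that $\ket\psi$ is pure and the $\{C_b^y\}$ are projective; this only tensors a fixed pure ancilla $\ket{\rmaux}$ onto Bob's register, which affects neither $\valstrat{\game}{\calS_A}$ nor $\valstrat{\game}{\calS_B}$ (Naimark preserves outcome statistics) and preserves the hypothesis $A_a^x \otimes I \approx_\delta B_a^x \otimes I$ on the extended state, since $\|((A_a^x - B_a^x)\otimes I)(\ket\psi \otimes \ket{\rmaux})\|^2 = \|((A_a^x - B_a^x)\otimes I)\ket\psi\|^2$ for every $a,x$. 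Now for a question pair $(x_1,x_2)$ and answer pair $(a_1,a_2)$ define the global operators $\mathcal{A}^{(x_1,x_2)}_{(a_1,a_2)} := A^{x_1}_{a_1}\otimes C^{x_2}_{a_2}$ and $\mathcal{B}^{(x_1,x_2)}_{(a_1,a_2)} := B^{x_1}_{a_1}\otimes C^{x_2}_{a_2}$; the former is a projective measurement and the latter a POVM. By the definition of the value of a strategy, $\mathrm{val}(\mathcal{A}) = \valstrat{\game}{\calS_A}$ and $\mathrm{val}(\mathcal{B}) = \valstrat{\game}{\calS_B}$, where $\mathrm{val}$ is as in \Cref{fact:approx-delta-generalized-game-value} with $\mathrm{acc}(x_1,x_2) := \{(a_1,a_2) : \mathrm{Alg}_{\mathrm{A}}(\mathsf{input}, x_1, x_2, a_1, a_2) = 1\}$.

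Next I would show $\mathcal{A}^{(x_1,x_2)}_{(a_1,a_2)} \approx_\delta \mathcal{B}^{(x_1,x_2)}_{(a_1,a_2)}$ on $\ket\psi$ with respect to $\game$'s question distribution. Since that distribution has $x_1$-marginal $\calD$, \Cref{fact:close-on-marginal} lifts the hypothesis to $A^{x_1}_{a_1}\otimes I \approx_\delta B^{x_1}_{a_1}\otimes I$ with respect to $\game$'s question distribution, viewing $A^{x_1}_{a_1}\otimes I$ as a measurement that ignores $x_2$. Then \Cref{fact:add-a-proj}, applied with the left-multiplying family $\{I\otimes C^{x_2}_{a_2}\}$ (projective, hence $\sum_{a_2}(I\otimes C^{x_2}_{a_2})^\dagger(I\otimes C^{x_2}_{a_2}) = I \preceq I$), yields $(I\otimes C^{x_2}_{a_2})(A^{x_1}_{a_1}\otimes I) \approx_\delta (I\otimes C^{x_2}_{a_2})(B^{x_1}_{a_1}\otimes I)$, which is exactly $\mathcal{A}\approx_\delta\mathcal{B}$ since the two tensor factors commute. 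Feeding $\mathcal{A}$, $\mathcal{B}$, the question distribution, and the sets $\mathrm{acc}(\cdot)$ into \Cref{fact:approx-delta-generalized-game-value} gives $\mathrm{val}(\mathcal{A}) - O(\delta^{1/2}) \leq \mathrm{val}(\mathcal{B}) \leq \mathrm{val}(\mathcal{A}) + O(\delta^{1/2})$, which is the claim.

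I expect no genuine analytic obstacle: all the quantitative work is already packaged inside \Cref{fact:approx-delta-generalized-game-value}. The only points that require care are bookkeeping ones — arranging for exactly one of the two global measurements to be projective (this is what forces the preliminary Naimark step on Bob, and is why we must take $A$, not $B$, to be the projective measurement), and checking that the $\approx_\delta$ hypothesis survives both the ancilla extension of the state and the composition with Bob's operators through \Cref{fact:add-a-proj}.
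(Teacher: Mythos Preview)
Your approach is essentially the paper's: package both players' measurements into a single global measurement, show the two global measurements are $\approx_\delta$-close via \Cref{fact:add-a-proj}, then invoke \Cref{fact:approx-delta-generalized-game-value}. The one discrepancy is your reading of $\calS_A = \{\psi, A\}$: in the paper's convention this denotes the \emph{symmetric} strategy in which \emph{both} players use $A$, not a strategy where Alice uses $A$ and Bob uses some separate fixed $C$. Accordingly, the paper compares $A^{x_1}_{a_1}\otimes A^{x_2}_{a_2}$ against $B^{x_1}_{a_1}\otimes B^{x_2}_{a_2}$ and bridges them by two applications of \Cref{fact:add-a-proj},
\[
A^{x_1}_{a_1}\otimes A^{x_2}_{a_2}\ \approx_\delta\ A^{x_1}_{a_1}\otimes B^{x_2}_{a_2}\ \approx_\delta\ B^{x_1}_{a_1}\otimes B^{x_2}_{a_2}.
\]
Since $A$ is assumed projective, $A\otimes A$ is already projective and your Naimark step is unnecessary. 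Your argument as written is exactly the one-player half of this (it shows the value changes by $O(\delta^{1/2})$ when only Alice's measurement is swapped); applying it once more to swap Bob's measurement from $A$ to $B$ recovers the paper's proof.
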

\begin{proof}
First, we observe that
\begin{equation*}
A^{x_1}_{a_1} \otimes A^{x_2}_{a_2}
\approx_{\delta} A^{x_1}_{a_1} \otimes B^{x_2}_{a_2}
\approx_{\delta} B^{x_1}_{a_1} \otimes B^{x_2}_{a_2}
\end{equation*}
by \Cref{fact:add-a-proj}.
The result follows by applying \Cref{fact:approx-delta-game-value} with
``$A$" set to $A^{x_1}_{a_1} \otimes A^{x_2}_{a_2}$,
``$B$" set to $B^{x_1}_{a_1} \otimes B^{x_2}_{a_2}$,
and ``$\calD$" set to the distribution on $(\bx_1, \bx_2)$.
We note that ``$\mathrm{val}(A)$" there is equal to $\valstrat{\game}{\calS_A}$ here and
``$\mathrm{val}(B)$" there is equal to $\valstrat{\game}{\calS_B}$ here.
\end{proof}

\subsubsection{Generating new measurements}

In this section, we show how to combine multiple measurements into a single measurement by ``sandwiching" them together.

\begin{fact}\label{fact:sandwich}
Let $k \geq 0$ be a constant.
Let $\{A^x_{a_1, \ldots, a_k}\}$ be a projective measurement.
For each $1 \leq i \leq k$, let $\{(B_i)^x_{a_i}\}$ be a projective measurement, and suppose that
\begin{equation}\label{eq:no-take-only-throw}
(A^x_{a_i})_{\reg{Alice}} \otimes I_{\reg{Bob}} \simeq_{\delta} I_{\reg{Alice}} \otimes ((B_i)^x_{a_i})_{\reg{Bob}}.
\end{equation}
Define the POVM measurement $\{J^x_{g_1, \ldots, g_k}\}$ as
\begin{equation*}
J^x_{a_1, \ldots, a_k} := (B_k)^x_{a_k} \cdots (B_2)^x_{a_2} \cdot (B_1)^x_{a_1} \cdot (B_2)^x_{a_2} \cdots (B_k)^x_{a_k}.
\end{equation*}
Then
\begin{equation*}
(A^x_{a_1, \ldots, a_k})_{\reg{Alice}} \otimes I_{\reg{Bob}} \simeq_{\delta^{1/2}} I_{\reg{Alice}} \otimes (J^x_{a_1, \ldots, a_k})_{\reg{Bob}}.
\end{equation*}
\end{fact}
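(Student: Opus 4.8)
The plan is to ``transport'' the sandwiched Bob operator $I_{\reg{Alice}} \otimes J^x_{a_1,\ldots,a_k}$ across the shared state $\ket\psi$ one $(B_i)^x_{a_i}$-layer at a time, each time replacing a layer on Bob's side by the corresponding marginal projector $A^x_{a_i}$ on Alice's side, and to accumulate the resulting error with the triangle inequality \Cref{fact:triangle}. First I would record the elementary structural facts about the marginals $A^x_{a_i} := \sum A^x_{a_1,\ldots,a_k}$ (sum over the other coordinates): since the fine-grained operators $A^x_{a_1,\ldots,a_k}$ are pairwise-orthogonal projectors summing to $I$, each $A^x_{a_i}$ is itself a projector, the marginals for distinct coordinates pairwise commute, and their product collapses, $A^x_{a_1} A^x_{a_2} \cdots A^x_{a_k} = A^x_{a_1,\ldots,a_k}$. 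Next, from the hypothesis \Cref{eq:no-take-only-throw} and \Cref{fact:agreement} I obtain the analytic statement $(A^x_{a_i})_{\reg{Alice}} \otimes I_{\reg{Bob}} \approx_\delta I_{\reg{Alice}} \otimes ((B_i)^x_{a_i})_{\reg{Bob}}$ for each $i$. I also note that $\{J^x_{a_1,\ldots,a_k}\}$ is a (genuine) POVM, since $\sum_{a_1}(B_1)^x_{a_1} = I$ lets the inner layer of the sandwich be summed out, and then the outer layers collapse pair by pair.

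Now write $I_{\reg{Alice}} \otimes J^x_{\vec a}\ket\psi$ as a product of the $2k-1$ projectors $(B_k)^x_{a_k}, (B_{k-1})^x_{a_{k-1}},\dots,(B_2)^x_{a_2},(B_1)^x_{a_1},(B_2)^x_{a_2},\dots,(B_k)^x_{a_k}$ acting on $\ket\psi$, and peel them off one at a time, always removing the factor that currently acts directly on the state. At a generic step the expression has the form $(P_{\reg{Alice}} \otimes I)(I \otimes L)(I \otimes (B_i)^x_{a_i})\ket\psi$, where $P$ is the product of the $A$-marginals already transported to Alice (which commutes with everything on Bob) and $L$ is the remaining, still-sandwiched part of the $B$-product. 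Using $(I \otimes (B_i)^x_{a_i})\ket\psi \approx_\delta (A^x_{a_i}\otimes I)\ket\psi$ together with \Cref{fact:add-a-proj} (left-multiplication by a contraction; its proof is insensitive to the left factor additionally depending on the index being transported), I replace the expression by $(P_{\reg{Alice}}\otimes I)(I\otimes L)(A^x_{a_i}\otimes I)\ket\psi = (P\,A^x_{a_i}\otimes I)(I\otimes L)\ket\psi$, incurring an additive $O(\delta)$ in $\approx$-distance; then I commute the new $A^x_{a_i}$ to the left into $P$. The one thing that genuinely needs checking — and this is the main obstacle — is that at each step $(P_{\reg{Alice}}\otimes I)(I\otimes L)$ is a valid contraction for \Cref{fact:add-a-proj}, i.e.\ that summing $C^\dagger C$ over all outcome indices other than the one being transported stays $\preceq I$. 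This holds because $L$ is, up to $\preceq$, a sandwich $R^\dagger R$ of projectors, and summing its outermost free index collapses one layer while remaining $\preceq I$ (using $\sum_{a_j}(B_j)^x_{a_j} = I$); likewise $P^\dagger P \preceq I$ and summing a free coordinate of $P$ uses $\sum_{a_j}A^x_{a_j} = I$ together with the commutation of the $A$-marginals. Iterating these collapses, the full sum over the remaining indices stays bounded by $I$.

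After all $2k-1$ peels the Bob sandwich is entirely consumed and every transported factor sits on Alice's side; collapsing the product of the $A$-marginals (which is permissible since they are commuting projectors and $\sum_{a_j}A^x_{a_j}=I$) gives exactly $A^x_{a_1}\cdots A^x_{a_k}\otimes I = A^x_{a_1,\ldots,a_k}\otimes I$. Chaining the $2k-1$ steps via \Cref{fact:triangle} and using $k = O(1)$ yields $A^x_{a_1,\ldots,a_k}\otimes I_{\reg{Bob}} \approx_{O(\delta)} I_{\reg{Alice}} \otimes J^x_{a_1,\ldots,a_k}$ on state $\ket\psi$ and distribution $\calD$. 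Finally, since $\{A^x_{a_1,\ldots,a_k}\}$ is projective and $\{J^x_{a_1,\ldots,a_k}\}$ is a POVM, \Cref{fact:almost-agreement} upgrades this to the desired consistency bound $(A^x_{a_1,\ldots,a_k})_{\reg{Alice}} \otimes I_{\reg{Bob}} \simeq_{\delta^{1/2}} I_{\reg{Alice}} \otimes (J^x_{a_1,\ldots,a_k})_{\reg{Bob}}$, with the $O(\delta)^{1/2} = O(\delta^{1/2})$ constant absorbed into the $\delta^{1/2}$ notation.
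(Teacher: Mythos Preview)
Your proposal is correct and follows essentially the same argument as the paper: the paper writes $A^x_{a_1,\ldots,a_k} = A^x_{a_k}\cdots A^x_{a_2}A^x_{a_1}A^x_{a_2}\cdots A^x_{a_k}$ on Alice's side and peels factors one at a time over to Bob using \Cref{fact:add-a-proj}, whereas you start from $J$ on Bob's side and peel in the opposite direction---the two chains are literally reverses of one another. Your discussion of the contraction hypothesis in \Cref{fact:add-a-proj} (including the dependence of the left factor on the transported index) is more explicit than the paper's, but the underlying mechanism is identical.
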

\begin{proof}
For each $1 \leq i \leq k$, 
\Cref{eq:no-take-only-throw} implies that
\begin{equation*}
(A^x_{a_i})_{\reg{Alice}} \otimes I_{\reg{Bob}} \approx_{\delta} I_{\reg{Alice}} \otimes ((B_i)^x_{a_i})_{\reg{Bob}}.
\end{equation*}
Now, we repeatedly apply this using \Cref{fact:add-a-proj}:
\begin{align*}
(A^x_{a_1, \ldots, a_k})_{\reg{Alice}} \otimes I_{\reg{Bob}}
& = (A^x_{a_k} \cdots A^x_{a_2} \cdot A^x_{a_1} \cdot A^x_{a_2} \cdots A^x_{a_{k}})_{\reg{Alice}} \otimes I_{\reg{Bob}}\\
& \approx_{\delta} (A^x_{a_k} \cdots A^x_{a_2} \cdot A^x_{a_1} \cdot A^x_{a_2} \cdots A^x_{a_{k-1}})_{\reg{Alice}} \otimes ((B_k)^x_{a_k})_{\reg{Bob}}\\
&\qquad\qquad\qquad \qquad \qquad \qquad \cdots\\
& \approx_{\delta} I_{\reg{Alice}} \otimes ((B_k)^x_{a_k} \cdots (B_2)^x_{a_2} \cdot (B_1)^x_{a_1} \cdot (B_2)^x_{a_2} \cdots (B_k)^x_{a_k})_{\reg{Bob}}\\
& = I_{\reg{Alice}} \otimes (J^x_{a_1, \ldots, a_k})_{\reg{Bob}}.
\end{align*}
The fact now follows from \Cref{fact:almost-agreement} and the fact that~$A$ is a projective measurement.
\end{proof}

Next, we extend \Cref{fact:sandwich} to the case of polynomial measurements (see \Cref{sec:classical-q-low-deg} below).
These are structured measurements in which the prover returns the evaluation of a function sampled independently from their input.
The goal is to retain this structure even after ``sandwiching" them together.

\begin{fact}\label{fact:low-degree-sandwich}
Let $k \geq 0$ be a constant.
Let $\calD$ be a distribution on questions $x \in \calX$.
For each $1 \leq i \leq k$, let $\calG_i$ be a set of functions $g_i : \calX \rightarrow \calR_i$.
and let $\{G^i_g\}$ be a projective measurement with outcomes from this set.
Suppose that the set $\calG_i$ has the following distance property:
for any two nonequal $g_i, g_i' \in \calG_i$, the probability that $g_i(\bx) = g_i'(\bx)$, over a random $\bx \sim \calD$, is at most $\eps$.

Let $\{A_{g_1, \ldots, g_k}\}$ be a projective measurement with outcomes $g_i \in \calF_i$.
For each $1 \leq i \leq k$, suppose that
\begin{equation}\label{eq:treat-yo-self}
(A_{[g_i(x) = a_i]})_{\reg{Alice}} \otimes I_{\reg{Bob}} \simeq_{\delta} I_{\reg{Alice}} \otimes (G^i_{[g_i(x)=a_i]})_{\reg{Bob}}.
\end{equation}
Define the POVM measurement $\{J_{g_1, \ldots, g_k}\}$ as
\begin{equation*}
J_{g_1, \ldots, g_k} := G^k_{g_k} \cdots G^2_{g_2} \cdot G^1_{g_1} \cdot G^2_{g_2} \cdots G^k_{g_k}.
\end{equation*}
Then
\begin{equation*}
(A_{[g_1(x), \ldots, g_k(x) = a_1, \ldots, a_k]})_{\reg{Alice}} \otimes I_{\reg{Bob}} \simeq_{(\delta+\eps)^{1/2}} I_{\reg{Alice}} \otimes (J_{[g_1(x), \ldots, g_k(x) = a_1, \ldots, a_k]})_{\reg{Bob}}.
\end{equation*}
\end{fact}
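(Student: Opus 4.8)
The plan is to bracket an application of \Cref{fact:sandwich} between two bookkeeping steps: first I would upgrade the hypothesis \eqref{eq:treat-yo-self}, which asserts agreement only on the \emph{evaluations} $g_i(x)$, to agreement on the \emph{functions} $g_i$ themselves; then, after sandwiching, I would project the conclusion back down from functions to evaluations.

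\textbf{Step 1: from evaluations to functions.} Fix $i \in [k]$ and write $A_{g_i} = \sum_{g_1,\dots,g_{i-1},g_{i+1},\dots,g_k} A_{g_1,\dots,g_k}$ for the $i$-th marginal of Alice's measurement. The crucial point is that $A_{g_i}$ and $G^i_g$ do not depend on the evaluation point $x$, so the joint outcome probability $p(g_i,g) := \bra{\psi}(A_{g_i})_{\reg{Alice}} \otimes (G^i_g)_{\reg{Bob}}\ket{\psi}$ is a fixed probability distribution over pairs of functions, independent of $x$. Unfolding \Cref{not:marginalize}, hypothesis \eqref{eq:treat-yo-self} says
\begin{equation*}
\sum_{g_i, g} p(g_i, g)\cdot \Pr_{\bx \sim \calD}\big[g_i(\bx) = g(\bx)\big] \;\geq\; 1 - O(\delta).
\end{equation*}
Splitting the sum according to whether $g_i = g$, and applying the distance property of $\calG_i$ to bound $\Pr_{\bx}[g_i(\bx)=g(\bx)] \leq \eps$ on the terms with $g_i \neq g$, the left-hand side is at most $\sum_g p(g,g) + \eps$. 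Hence $\sum_g p(g,g) \geq 1 - O(\delta + \eps)$, which is exactly $(A_{g_i})_{\reg{Alice}} \otimes I_{\reg{Bob}} \simeq_{\delta + \eps} I_{\reg{Alice}} \otimes (G^i_g)_{\reg{Bob}}$, now read as a consistency game whose outcomes are the functions themselves and which takes no question input.

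\textbf{Steps 2 and 3: sandwich, then return to evaluations.} Now $\{A_{g_1,\dots,g_k}\}$ and each $\{G^i_g\}$ are projective and satisfy the pairwise consistency (at error $\delta+\eps$) demanded by \Cref{fact:sandwich}, so that fact --- applied with a trivial question --- yields $(A_{g_1,\dots,g_k})_{\reg{Alice}} \otimes I_{\reg{Bob}} \simeq_{(\delta+\eps)^{1/2}} I_{\reg{Alice}} \otimes (J_{g_1,\dots,g_k})_{\reg{Bob}}$, with $J$ the sandwiched POVM from the statement. Finally, for every fixed $x$, if Alice's tuple $(g_1,\dots,g_k)$ equals Bob's then so do the evaluated tuples $(g_1(x),\dots,g_k(x))$; thus the evaluated consistency game (with $\bx \sim \calD$ drawn and sent) is won with probability at least that of the function game above. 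This is just the data-processing property of the ``$\simeq$'' distance (\Cref{fact:specialize-the-simeq}, whose one-line proof goes through verbatim when the coarsening map $(g_j)_j \mapsto (g_j(x))_j$ is allowed to depend on the question), and it upgrades the previous display to the claimed conclusion.

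The only real content is Step 1, and the point to be careful about there is precisely that the measurement operators $A$ and $G^i$ are genuinely $x$-independent: this is what lets the distribution over the pair of sampled functions factor out of the expectation over $\bx$, so that the Schwartz--Zippel--type distance property of $\calG_i$ can be invoked pointwise on each pair of distinct functions. (I am also tacitly identifying Alice's outcome set $\calF_i$ with $\calG_i$, matching the intended application to polynomial measurements where both provers return degree-$d$ polynomials; if $\calF_i \neq \calG_i$ one needs the distance property to hold for pairs drawn one from each set.) The remaining steps are pure bookkeeping over \Cref{fact:sandwich} and \Cref{fact:specialize-the-simeq}.
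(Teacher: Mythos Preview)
Your proposal is correct and follows essentially the same three-step structure as the paper: upgrade \eqref{eq:treat-yo-self} from agreement on evaluations to agreement on functions via the distance property, apply \Cref{fact:sandwich}, then project back to evaluations using \Cref{fact:specialize-the-simeq}. Your Step~1 is in fact slightly cleaner than the paper's version, which bounds $\eta = \Pr[\bg_i \neq \bg_i']$ via $\eta(1-\eps) \le O(\delta)$ and then does a case split on $\eps \lessgtr 1/2$ to conclude $\eta = O(\delta + \eps)$; your direct splitting of the sum avoids this case analysis while arriving at the same bound.
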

\begin{proof}
Let $1 \leq i \leq k$.
By~\Cref{eq:treat-yo-self},
if Alice measures with~$A$, producing~$\bg_i$, and Bob measures with~$G^i$, producing~$\bg_i'$,
then the probability that $\bg_i(\bx) \neq \bg_i'(\bx)$ is $O(\delta)$.
Write $\eta$ for the probability that $\bg_i \neq \bg_i'$.
Then we have the expression $\eta \cdot (1-\eps) \leq O(\delta)$ or, equivalently, $\eta \leq O(\delta/(1-\eps))$.
When $\eps < 1/2$, this gives the bound $\eta \leq O(\delta)$, and when $\eps \geq 1/2$, we have the trivial bound $\eta \leq O(\eps)$.
As a result, $\eta = O(\delta + \eps)$.

In conclusion,
\begin{equation*}
(A_{g_i})_{\reg{Alice}} \otimes I_{\reg{Bob}} \simeq_{\delta + \eps} I_{\reg{Alice}} \otimes (G^i_{g_i})_{\reg{Bob}}.
\end{equation*}
We can now apply \Cref{fact:sandwich} to $A_{g_1, \ldots, g_k}$ and the $G^i_{g_i}$ measurements.
It implies that
\begin{equation*}
(A_{g_1, \ldots, g_k})_{\reg{Alice}} \otimes I_{\reg{Bob}} \simeq_{(\delta+\eps)^{1/2}} I_{\reg{Alice}} \otimes (J_{g_1, \ldots, g_k})_{\reg{Bob}}.
\end{equation*}
The fact now follows from the data processing inequality \Cref{fact:specialize-the-simeq}.
\end{proof}

In our next fact, we show that \Cref{fact:low-degree-sandwich} holds even when we drop the structured assumption on the~$A$ matrix.
The tradeoff is that we must now assume that the~$k$ different measurements act on different parts of the input string.
In this case, the distance condition becomes slightly more cumbersome to state.

\begin{fact}\label{fact:low-degree-sandwich-on-steroids}
Let $k \geq 0$ be a constant.
Let $\calD$ be a distribution on questions $(x, y_1, \ldots, y_k)$, where each $y_i \in \calY_i$.
For each $1 \leq i \leq k$, let $\calG_i$ be a set of functions $g_i : \calY_i \rightarrow \calR_i$.
and let $\{(G_i)^x_g\}$ be a projective measurement with outcomes from this set.
(For the $i=1$ case, we also allow this measurement to be a POVM.)
Suppose that the set $\calG_i$ has the following distance property:
fix a question $z = (x, y_1, \ldots, y_{i-1}, y_{i+1}, \ldots, y_k)$, and let $\calD_z$ be the distribution on~$y_i$ conditioned on the other outcomes~$z$.
Then for any two nonequal $g_i, g_i' \in \calG_i$, the probability that $g_i(\by_i) = g_i'(\by_i)$, over a random $\by_i \sim \calD_z$, is at most $\eps$.

Let $\{A^{x, y_1, \ldots, y_k}_{a_1, \ldots, a_k} \}$ be a projective measurement with outcomes $g_i \in \calF_i$.
For each $1 \leq i \leq k$, suppose that
\begin{equation}\label{eq:gonna-compare-like-a-pear}
(A^{x, y_1, \ldots, y_k}_{a_i})_{\reg{Alice}} \otimes I_{\reg{Bob}} \simeq_{\delta} I_{\reg{Alice}} \otimes ((G_i)^x_{[g_i(y_i)=a_i]})_{\reg{Bob}}.
\end{equation}
Suppose also that
\begin{equation}\label{eq:spend-some-time-reflecting}
(A^{x, y_1, \ldots, y_k}_{a_1, \ldots, a_k})_{\reg{Alice}} \otimes I_{\reg{Bob}}
	\simeq_{\delta} I_{\reg{Alice}} \otimes (A^{x, y_1, \ldots, y_k}_{a_1, \ldots, a_k})_{\reg{Bob}}.
\end{equation}
Define the POVM measurement $\{J^x_{g_1, \ldots, g_k}\}$ as
\begin{equation*}
J^x_{g_1, \ldots, g_k} := (G_k)^x_{g_k} \cdots (G_2)^x_{g_2} \cdot (G_1)^x_{g_1} \cdot (G_2)^x_{g_2} \cdots (G_k)^x_{g_k}.
\end{equation*}
Then
\begin{equation*}
(A^{x, y_1, \ldots, y_k}_{a_1, \ldots, a_k})_{\reg{Alice}} \otimes I_{\reg{Bob}}
	\simeq_{\poly(\delta, \eps)} I_{\reg{Alice}} \otimes (J^x_{[g_1(y_1), \ldots, g_k(y_k) = a_1, \ldots, a_k]})_{\reg{Bob}}.
\end{equation*}
\end{fact}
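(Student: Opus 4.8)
The plan is to run the same two-step argument as the proof of \Cref{fact:low-degree-sandwich} --- first promote the pointwise consistency \eqref{eq:gonna-compare-like-a-pear} to a consistency of full functions, then peel $A$ apart into a sandwich against Bob's side --- but, since $A$ is value-valued rather than function-valued, to also invoke the self-consistency hypothesis \eqref{eq:spend-some-time-reflecting} and the full-function self-consistency of the $(G_i)^x$ in order to pass from the sandwich that the peeling naturally produces to the one in the conclusion. Throughout, if $(G_1)^x$ is only a POVM I first replace it by a Naimark dilation (\Cref{thm:naimark}), which leaves all the $\consistency$-relations I use intact.

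The first step is to show $((G_i)^x_{g_i})_{\reg{Alice}} \otimes I \consistency_{\delta+\eps} I \otimes ((G_i)^x_{g_i})_{\reg{Bob}}$ for each $i$. By symmetry of the consistency game, \eqref{eq:gonna-compare-like-a-pear} also gives $((G_i)^x_{[g_i(y_i)=a_i]})_{\reg{Alice}} \otimes I \consistency_\delta I \otimes (A^{x, y_1, \ldots, y_k}_{a_i})_{\reg{Bob}}$, and chaining this with \eqref{eq:spend-some-time-reflecting} and then \eqref{eq:gonna-compare-like-a-pear} via the triangle-like inequality \Cref{fact:triangle-like} yields $((G_i)^x_{[g_i(y_i)=a_i]})_{\reg{Alice}} \otimes I \consistency_\delta I \otimes ((G_i)^x_{[g_i(y_i)=a_i]})_{\reg{Bob}}$. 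Since $(G_i)^x$ depends only on $x$, the functions it returns on the two sides are independent of $(y_1, \ldots, y_k)$ once $x$ and the measurement randomness are fixed; conditioning on $x$ and on $(y_j)_{j \neq i}$ leaves $y_i$ distributed as $\calD_z$, so the conditional distance property of $\calG_i$ guarantees that whenever the two returned functions differ they still agree at $y_i$ with probability at most $\eps$. This upgrades the previous bound to the claimed full-function self-consistency, exactly as in the first half of the proof of \Cref{fact:low-degree-sandwich}.

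For the sandwich, I would use that $\{A^{x, y_1, \ldots, y_k}_{a_1, \ldots, a_k}\}$ being a \emph{joint} projective measurement makes its marginals $A_{a_i}$ pairwise commuting projectors, so that $A_{a_1, \ldots, a_k} = A_{a_k} \cdots A_{a_1} \cdots A_{a_k}$. Peeling these factors off one at a time onto Bob's side --- using the $\approx$-form of \eqref{eq:gonna-compare-like-a-pear} and \Cref{fact:add-a-proj}, noting that the partial products of Bob operators that accumulate meet the norm condition of \Cref{fact:add-a-proj} because each $(G_i)^x$ is a measurement whose $y_i$-evaluation marginals telescope to the identity --- transports the measurement onto Bob's side and, since $A$ is projective, gives (via \Cref{fact:almost-agreement}) a bound $(A^{x, y_1, \ldots, y_k}_{a_1, \ldots, a_k})_{\reg{Alice}} \otimes I \consistency_{\delta^{1/2}} I \otimes (J'^x_{a_1, \ldots, a_k})_{\reg{Bob}}$, where $J'$ is the sandwich of the \emph{evaluation} measurements $(G_i)^x_{[g_i(y_i) = a_i]}$ rather than, as in the statement, the evaluation marginal of the sandwich of the full-function measurements $(G_i)^x_{g_i}$.

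Reconciling $J'$ with $J^x_{[g_1(y_1), \ldots, g_k(y_k) = a_1, \ldots, a_k]}$ is the step I expect to be the main obstacle: the two differ by off-diagonal contributions $(G_i)^x_{g_i}\, M\, (G_i)^x_{g_i'}$ with $g_i \neq g_i'$ but $g_i(y_i) = g_i'(y_i)$. To kill these I would first deduce, from the full-function self-consistency of the first step (pulling the Bob copy of each $(G_i)^x$ over to Alice's space, where it commutes exactly with the Bob copies of the other $(G_j)^x$), that the Bob measurements $(G_1)^x, \ldots, (G_k)^x$ pairwise approximately commute on $\ket{\psi}$; reordering and collapsing the repeated projectors then shows that both $(J'^x_{a_1, \ldots, a_k})_{\reg{Bob}} \ket{\psi}$ and $(J^x_{[g_1(y_1), \ldots, g_k(y_k) = a_1, \ldots, a_k]})_{\reg{Bob}} \ket{\psi}$ are within $\poly(\delta, \eps)$ --- averaged over the questions --- of the ``flattened'' vector $((G_k)^x_{[g_k(y_k) = a_k]} \cdots (G_1)^x_{[g_1(y_1) = a_1]})_{\reg{Bob}} \ket{\psi}$. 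Combining this with the peeling bound through \Cref{fact:triangle} and \Cref{fact:almost-agreement} delivers the desired $\consistency_{\poly(\delta, \eps)}$. The delicate part throughout is Cauchy--Schwarz bookkeeping: checking that the errors from the $O(k)$ peeling steps, the function-lift, and the $O(k^2)$ commutator exchanges (each summed over the relevant function outcomes) compound into a single $\poly(\delta, \eps)$ bound --- but since $k$ is constant, no step loses more than a polynomial factor.
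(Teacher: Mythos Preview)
Your first two moves are fine and match the paper: deriving full-function self-consistency of each $(G_i)^x_{g_i}$ from \eqref{eq:gonna-compare-like-a-pear}, \eqref{eq:spend-some-time-reflecting} and the conditional distance property is exactly what the paper also does, and the sandwich-peeling of $A$ into the evaluation sandwich $J'$ is the analogue of \Cref{fact:sandwich}.

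The gap is in your last step, where you propose to reconcile $J'$ with $J^x_{[g_1(y_1),\ldots]}$ by ``flattening'' both to $(G_k)^x_{[a_k]}\cdots(G_1)^x_{[a_1]}$ using the full-function commutation you derived. Flattening $J'$ this way works (it only involves evaluation marginals), but flattening $J^x_{[a]}$ does not. The commutation you have is
\[
\E_{\bx}\sum_{g_i,g_j}\big\|\big((G_i)^{\bx}_{g_i}(G_j)^{\bx}_{g_j}-(G_j)^{\bx}_{g_j}(G_i)^{\bx}_{g_i}\big)\ket{\psi}\big\|^2 \le \poly(\delta,\eps),
\]
a bound with the $g$-sums \emph{outside} the norm. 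To ``reorder and collapse'' inside $J^x_{[a]}$ you need the sums $\sum_{g_i:g_i(y_i)=a_i}$ \emph{inside} the norm, i.e.\ the mixed commutation $(G_i)^x_{[a_i]}(G_j)^x_{g_j}\approx(G_j)^x_{g_j}(G_i)^x_{[a_i]}$. That is exactly a data-processing step for ``$\approx$'', which the paper's own remark after \Cref{fact:specialize-the-simeq} explains is invalid in general; the naive Cauchy--Schwarz bound picks up a factor of $|\{g_i:g_i(y_i)=a_i\}|$, which is uncontrolled. So the ``reordering'' you describe for $J^x_{[a]}$ does not follow from the pairwise full-function commutation you have in hand.

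The paper sidesteps this in two ways. First, it reduces to $k=2$ by induction (treating the partial sandwich $J_{i}$ as the new, possibly non-projective, $G_1$---this is why the hypothesis allows $i=1$ to be a POVM). Second, for $k=2$ it never tries to flatten $J^x_{[a]}$: it bounds the consistency sum $\sum_{a_1,g_2}\bra{\psi} A_{a_1,g_2(y_2)}\otimes (G_2)_{g_2}(G_1)_{[a_1]}(G_2)_{g_2}\ket{\psi}$ directly, keeping $g_2$ un-coarse-grained, and shows it suffices to prove the \emph{mixed} commutation $(G_1)^x_{[a_1]}(G_2)^x_{g_2}\approx(G_2)^x_{g_2}(G_1)^x_{[a_1]}$. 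That mixed commutation is established not from self-consistency alone but by a four-term expansion comparing it to the evaluation--evaluation commutation (which \emph{does} follow from \eqref{eq:gonna-compare-like-a-pear}); the cross terms are killed either by the orthogonality $(G_2)_{g_2}(G_2)_{g_2'}=0$ for $g_2\neq g_2'$ or, after moving one copy of $(G_2)_{g_2}$ across via self-consistency, by the distance property averaged over~$y_2$. This is precisely the piece your sketch is missing.
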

\begin{proof}
First, we show how to reduce this to the $k = 2$ case. Then we prove it for that case.
Assume the fact holds when $k = 2$.  Define the POVM measurement $\{(J_i)^x_{g_1, \ldots, g_i}\}$ as
\begin{equation*}
(J_i)^x_{g_1, \ldots, g_i} := (G_i)^x_{g_i} \cdots (G_2)^x_{g_2} \cdot (G_1)^x_{g_1} \cdot (G_2)^x_{g_2} \cdots (G_i)^x_{g_i}.
\end{equation*}
We will show by induction that 
\begin{equation}\label{eq:hi-i-work-in-graphics-but-not-that-graphics}
(A^{x, y_1, \ldots, y_k}_{a_1, \ldots, a_i})_{\reg{Alice}} \otimes I_{\reg{Bob}}
	\simeq_{\poly(\delta, \eps)} I_{\reg{Alice}} \otimes ((J_i)^x_{[g_1(y_1), \ldots, g_i(y_i) = a_1, \ldots, a_i]})_{\reg{Bob}},
\end{equation}
the base case being trivial. Assume this holds for~$i$.
We apply the $k = 2$ case as follows:
consider the question tuple $(y_1, \ldots, y_i)$ as a single question
and consider functions of the form $(y_1, \ldots, y_i) \mapsto (g_1(y_1), \ldots, g_i(y_i))$.
Then the first POVM measurement is $J_i$, which satisfies \Cref{eq:gonna-compare-like-a-pear} due to \Cref{eq:hi-i-work-in-graphics-but-not-that-graphics}.
The second measurement is the projector $G_{i+1}$.
Then the $k = 2$ case immediately implies the $i+1$ case of \Cref{eq:hi-i-work-in-graphics-but-not-that-graphics}.

Now we prove the $k = 2$ case.
Our goal is to show that
\begin{equation}\label{eq:food-rakes}
\E_{\bx, \by_1, \by_2} \sum_{a_1, g_2} \bra{\psi} (A^{\bx, \by_1, \by_2}_{a_1, g_2(\by_2)})_{\reg{Alice}}
\otimes ((G_2)^{\bx}_{g_2} \cdot (G_1)^{\bx}_{[g_1(\by_1) = a_1]} \cdot (G_2)^{\bx}_{g_2})_{\reg{Bob}} \ket{\psi}
\end{equation}
is at least $1- \poly(\delta, \eps)$.
We will do this by showing that
\begin{equation}\label{eq:what-a-johnnie-wants}
((G_1)^x_{[g_1(y_1)=a_1]} \cdot (G_2)^x_{g_2})_{\reg{Alice}} \otimes I_{\reg{Bob}}
\approx_{\poly(\delta, \eps)} ((G_2)^x_{g_2} \cdot (G_1)^x_{[g_1(y_1)=a_1]})_{\reg{Alice}} \otimes I_{\reg{Bob}}.
\end{equation}
is at most $\poly(\delta, \eps)$.
To see that this is sufficient, note that the related expression
\begin{equation*}
\E_{\bx, \by_1, \by_2} \sum_{a_1, g_2} \bra{\psi} (A^{\bx, \by_1, \by_2}_{a_1, g_2(\by_2)})_{\reg{Alice}}
\otimes ((G_2)^{\bx}_{g_2} \cdot (G_2)^{\bx}_{g_2} \cdot (G_1)^{\bx}_{[g_1(\by_1) = a_1]})_{\reg{Bob}} \ket{\psi}
\end{equation*}
is exactly equal to~$1$ because $G_2$ is a projector.
Taking the difference between this and \Cref{eq:food-rakes}, we get
\begin{equation*}
\E_{\bx, \by_1, \by_2} \sum_{a_1, g_2} \bra{\psi} (A^{\bx, \by_1, \by_2}_{a_1, g_2(\by_2)})_{\reg{Alice}}
\otimes ((G_2)^{\bx}_{g_2} \cdot ((G_1)^{\bx}_{[g_1(\by_1) = a_1]} \cdot (G_2)^{\bx}_{g_2}- (G_2)^{\bx}_{g_2} \cdot (G_1)^{\bx}_{[g_1(\by_1) = a_1]}))_{\reg{Bob}} \ket{\psi}.
\end{equation*}
Cauchy-Schwarz allows us to bound this by
\begin{multline*}
\leq \E_{\bx, \by_1, \by_2} \sqrt{ \sum_{a_1, g_2}\Vert (A^{\bx, \by_1, \by_2}_{a_1, g_2(\by_2)})_{\reg{Alice}} \otimes ((G_2)^{\bx}_{g_2})_{\reg{Bob}} \ket{\psi} \Vert^2}\\
	\cdot \sqrt{ \sum_{a_1, g_2}\Vert I_{\reg{Alice}} \otimes ((G_1)^{\bx}_{[g_1(\by_1) = a_1]} \cdot (G_2)^{\bx}_{g_2}- (G_2)^{\bx}_{g_2} \cdot (G_1)^{\bx}_{[g_1(\by_1) = a_1]})_{\reg{Bob}} \ket{\psi} \Vert^2}.
\end{multline*}
The expression inside the first square root is always at most~$1$.
This allows us to bring the expectation into the second square root by Jensen's inequality,
and the resulting expression we can bound due to \Cref{eq:what-a-johnnie-wants}.

Now we bound \Cref{eq:what-a-johnnie-wants}.
Showing this is small is equivalent to showing
\begin{equation*}
\E_{\bx, \by_1, \by_2}
\sum_{a_1, g_2}\Vert ((G_1)^{\bx}_{[g_1(\by_1) = a_1]}
	\cdot (G_2)^{\bx}_{g_2}- (G_2)^{\bx}_{g_2} \cdot (G_1)^{\bx}_{[g_1(\by_1) = a_1]})_{\reg{Alice}} \otimes I_{\reg{Bob}} \ket{\psi} \Vert^2
\end{equation*}
is small.  Expanding this, we get
\begin{align}
\E_{\bx, \by_1, \by_2} \sum_{a_1, g_2} \bra{\psi}\big(
	&(G_2)^{\bx}_{g_2} \cdot (G_1)^{\bx}_{[g_1(\by_1) = a_1]} \cdot (G_1)^{\bx}_{[g_1(\by_1) = a_1]} \cdot (G_2)^{\bx}_{g_2} \otimes I_{\reg{Bob}}\nonumber\\
	+ &(G_1)^{\bx}_{[g_1(\by_1) = a_1]} \cdot (G_2)^{\bx}_{g_2} \cdot (G_2)^{\bx}_{g_2} \cdot (G_1)^{\bx}_{[g_1(\by_1) = a_1]} \otimes I_{\reg{Bob}}\nonumber\\
	- &(G_2)^{\bx}_{g_2} \cdot (G_1)^{\bx}_{[g_1(\by_1) = a_1]} \cdot (G_2)^{\bx}_{g_2} \cdot (G_1)^{\bx}_{[g_1(\by_1) = a_1]} \otimes I_{\reg{Bob}}\nonumber\\
	- &(G_1)^{\bx}_{[g_1(\by_1) = a_1]} \cdot (G_2)^{\bx}_{g_2} \cdot (G_1)^{\bx}_{[g_1(\by_1) = a_1]} \cdot (G_2)^{\bx}_{g_2} \otimes I_{\reg{Bob}} \big)\ket{\psi}.\label{eq:burger-king}
\end{align}
We \emph{do} know that $G_1$ and $G_2$ satisfy some form of commutation.
Because they satisfy \Cref{eq:gonna-compare-like-a-pear} and~$A$ is a projector, 
we know that
\begin{equation*}
((G_1)^x_{[g_1(y_1)=a_1]} \cdot (G_2)^x_{[g_2(y_2) = a_2]})_{\reg{Alice}} \otimes I_{\reg{Bob}}
\approx_{\delta} ((G_2)^x_{[g_2(y_2) = a_2]} \cdot (G_1)^x_{[g_1(y_1)=a_1]})_{\reg{Alice}} \otimes I_{\reg{Bob}}.
\end{equation*}
Expanding this as above, we can bound the following expression by~$\delta$:
\begin{align}
\E_{\bx, \by_1, \by_2} \sum_{a_1, a_2} \bra{\psi}\big(
	&(G_2)^\bx_{[g_2(\by_2) = a_2]} \cdot (G_1)^{\bx}_{[g_1(\by_1) = a_1]} \cdot (G_1)^{\bx}_{[g_1(\by_1) = a_1]} \cdot (G_2)^\bx_{[g_2(\by_2) = a_2]} \otimes I_{\reg{Bob}}\nonumber\\
	+ &(G_1)^{\bx}_{[g_1(\by_1) = a_1]} \cdot (G_2)^\bx_{[g_2(\by_2) = a_2]} \cdot (G_2)^\bx_{[g_2(\by_2) = a_2]} \cdot (G_1)^{\bx}_{[g_1(\by_1) = a_1]} \otimes I_{\reg{Bob}}\nonumber\\
	- &(G_2)^\bx_{[g_2(\by_2) = a_2]} \cdot (G_1)^{\bx}_{[g_1(\by_1) = a_1]} \cdot (G_2)^\bx_{[g_2(\by_2) = a_2]} \cdot (G_1)^{\bx}_{[g_1(\by_1) = a_1]} \otimes I_{\reg{Bob}}\nonumber\\
	- &(G_1)^{\bx}_{[g_1(\by_1) = a_1]} \cdot (G_2)^\bx_{[g_2(\by_2) = a_2]} \cdot (G_1)^{\bx}_{[g_1(\by_1) = a_1]} \cdot (G_2)^\bx_{[g_2(\by_2) = a_2]}\otimes I_{\reg{Bob}} \big)\ket{\psi}.\label{eq:what-to-do-with-this}
\end{align}
We can therefore show \Cref{eq:burger-king} is small by
upper-bounding (\Cref{eq:burger-king}$ - $\Cref{eq:what-to-do-with-this}).
There are four terms in this difference;
write $\Delta_i$ for the $i$-th term in \Cref{eq:burger-king} minus the $i$-th term in \Cref{eq:what-to-do-with-this}.
We will bound each $\Delta_i$ one-by-one.

The first term in the difference, $\Delta_1$, is
\begin{multline*}
\E_{\bx, \by_1, \by_2} \sum_{a_1}\sum_{g_2} \bra{\psi}
	(G_2)^{\bx}_{g_2}
	\cdot (G_1)^{\bx}_{[g_1(\by_1) = a_1]} \cdot (G_1)^{\bx}_{[g_1(\by_1) = a_1]} \cdot (G_2)^{\bx}_{g_2} \otimes I_{\reg{Bob}} \ket{\psi}\\
-\E_{\bx, \by_1, \by_2} \sum_{a_1, a_2} \bra{\psi}
	(G_2)^{\bx}_{[g_2(\by_2) = a_2]}
	\cdot (G_1)^{\bx}_{[g_1(\by_1) = a_1]} \cdot (G_1)^{\bx}_{[g_1(\by_1) = a_1]} \cdot (G_2)^{\bx}_{[g_2(\by_2) = a_2]} \otimes I_{\reg{Bob}} \ket{\psi}.
\end{multline*}
The first of these terms is at most~$1$, and so we just have to show that the second term is close to~$1$ as well.
Note that by repeated applications of \Cref{eq:gonna-compare-like-a-pear}, we have that 
\begin{equation*}
(G_2)^{\bx}_{[g_2(\by_2) = a_2]}
	\cdot (G_1)^{\bx}_{[g_1(\by_1) = a_1]} \cdot (G_1)^{\bx}_{[g_1(\by_1) = a_1]} \cdot (G_2)^{\bx}_{[g_2(\by_2) = a_2]} \otimes I_{\reg{Bob}}
	\approx_{\delta} I_{\reg{Alice}} \otimes A^{\bx, \by_1, \by_2}_{a_1, a_2}.
\end{equation*}
But then by \Cref{fact:approx-delta-generalized-game-value}, the expression we want to lower-bound is $O(\delta^{1/2})$-close to
\begin{equation*}
\E_{\bx, \by_1, \by_2} \sum_{a_1, a_2} \bra{\psi} I_{\reg{Alice}} \otimes A^{\bx, \by_1, \by_2}_{a_1, a_2} \ket{\psi},
\end{equation*}
which is exactly~$1$. As a result, $\Delta_1$ is at most $O(\delta^{1/2})$.

The second term in the difference, $\Delta_2$, can be written as
\begin{equation*}
-\E_{\bx, \by_1, \by_2} \sum_{a_1}\sum_{\substack{g_2\neq g_2',\\g_2(\by_2) = g_2'(\by_2)}} \bra{\psi}(
	(G_1)^{\bx}_{[g_1(\by_1) = a_1]} \cdot (G_2)^{\bx}_{g_2} \cdot (G_2)^{\bx}_{g_2'} \cdot (G_1)^{\bx}_{[g_1(\by_1) = a_1]} \otimes I_{\reg{Bob}})\ket{\psi}.
\end{equation*}
This is zero because $G_2$ is a projector.

The third and fourth terms in \Cref{eq:burger-king} are complex conjugates of each other,
as are the third and fourth terms in \Cref{eq:what-to-do-with-this}.
As a result, it suffices to bound the magnitude of $\Delta_4$, and this will serve to bound $\Delta_3$ as well.
We begin by manipulating the fourth term in \Cref{eq:burger-king};
specifically, we will show that it is close to
\begin{equation}\label{eq:cloud}
-\E_{\bx, \by_1, \by_2} \sum_{a_1, g_2} \bra{\psi}
	 (G_1)^{\bx}_{[g_1(\by_1) = a_1]} \cdot (G_2)^{\bx}_{g_2} \cdot (G_1)^{\bx}_{[g_1(\by_1) = a_1]} \otimes (G_2)^{\bx}_{g_2} \ket{\psi}.
\end{equation}
To do so, we take their difference:
\begin{multline*}
\E_{\bx, \by_1, \by_2} \sum_{a_1, g_2} \bra{\psi}
	 ((G_1)^{\bx}_{[g_1(\by_1) = a_1]} \cdot (G_2)^{\bx}_{g_2} \cdot (G_1)^{\bx}_{[g_1(\by_1) = a_1]} \otimes I_{\reg{Bob}})\\
	\cdot (I_{\reg{Alice}} \otimes  (G_2)^{\bx}_{g_2} - (G_2)^{\bx}_{g_2} \otimes I_{\reg{Bob}}) \ket{\psi}.
\end{multline*}
To bound the magnitude, we apply Cauchy-Schwarz:
\begin{multline*}
\E_{\bx, \by_1, \by_2}
	\sqrt{\sum_{a_1, g_1} \Vert  ((G_1)^{\bx}_{[g_1(\by_1) = a_1]} \cdot (G_2)^{\bx}_{g_2} \cdot (G_1)^{\bx}_{[g_1(\by_1) = a_1]} \otimes I_{\reg{Bob}}) \ket{\psi} \Vert^2}\\
	\cdot \sqrt{\sum_{a_1, g_1} \Vert ((G_1)^{\bx}_{[g_1(\by_1) = a_1]} \otimes I_{\reg{Bob}})
	\cdot (I_{\reg{Alice}} \otimes  (G_2)^{\bx}_{g_2} - (G_2)^{\bx}_{g_2} \otimes I_{\reg{Bob}}) \ket{\psi} \Vert^2}.
\end{multline*}
The expression inside the first square root is always at most~$1$.
This allows us to bring the expectation into the second square root by Jensen's inequality.
Because $G_1$ is a POVM, we can bound the resulting expectation by
\begin{equation}\label{eq:majora}
\E_{\bx, \by_1, \by_2} \sum_{g_1} \Vert (I_{\reg{Alice}} \otimes  (G_2)^{\bx}_{g_2} - (G_2)^{\bx}_{g_2} \otimes I_{\reg{Bob}}) \ket{\psi} \Vert^2.
\end{equation}
To bound this, we note that \Cref{eq:gonna-compare-like-a-pear,eq:spend-some-time-reflecting} along with \Cref{fact:triangle-like} imply that
\begin{equation*}
(G_2)^{x}_{[g_2(y_2) = a_2]} \otimes I_{\reg{Bob}}
\consistency_{\delta} I_{\reg{Bob}} \otimes (G_2)^{x}_{[g_2(y_2) = a_2]}.
\end{equation*}
Using the distance properties of $\calG_2$, this implies that
\begin{equation*}
(G_2)^{x}_{g_2} \otimes I_{\reg{Bob}}
\consistency_{\delta + \eps} I_{\reg{Bob}} \otimes (G_2)^{x}_{g_2}.
\end{equation*}
Hence, \Cref{eq:majora} is at most $O((\delta+\eps)^{1/2})$.
A similar argument shows that the fourth term in \Cref{eq:what-to-do-with-this}
is $O(\delta^{1/2})$-close to
\begin{equation}\label{eq:strife}
-\E_{\bx, \by_1, \by_2} \sum_{a_1, a_2} \bra{\psi}
(G_1)^{\bx}_{[g_1(\by_1) = a_1]} \cdot (G_2)^\bx_{[g_2(\by_2) = a_2]} \cdot (G_1)^{\bx}_{[g_1(\by_1) = a_1]} \otimes (G_2)^\bx_{[g_2(\by_2) = a_2]} \ket{\psi}.
\end{equation}

Now, we compute \Cref{eq:cloud} minus \Cref{eq:strife}:
\begin{align*}
&\E_{\bx, \by_1, \by_2} \sum_{a_1}\sum_{\substack{g_2, g_2'\\g_2(\by_2) \neq g_2'(\by_2)}} \bra{\psi}
	 (G_1)^{\bx}_{[g_1(\by_1) = a_1]} \cdot (G_2)^{\bx}_{g_2} \cdot (G_1)^{\bx}_{[g_1(\by_1) = a_1]} \otimes (G_2)^{\bx}_{g_2'} \ket{\psi}\\
=&\E_{\bx, \by_1, \by_2} \sum_{a_1}\sum_{g_2, g_2'} \bra{\psi}
	 (G_1)^{\bx}_{[g_1(\by_1) = a_1]} \cdot (G_2)^{\bx}_{g_2} \cdot (G_1)^{\bx}_{[g_1(\by_1) = a_1]} \otimes (G_2)^{\bx}_{g_2'} \ket{\psi}
	 	\cdot \bone(g_2, g_2', \by_2),
\end{align*}
where $\bone(g_2, g_2', \by_2)$ is the indicator that $g_2 \neq g_2'$ but $g_2(\by_2) = g_2'(\by_2)$.
This is the only part of the expression that depends on $\by_2$, and by our distance assumption it is at most $\eps$ in expectation.
Since the rest of the expression is guaranteed to be positive, we can upper-bound this by
\begin{equation*}
\E_{\bx, \by_1} \sum_{a_1}\sum_{g_2, g_2'} \bra{\psi}
	 (G_1)^{\bx}_{[g_1(\by_1) = a_1]} \cdot (G_2)^{\bx}_{g_2} \cdot (G_1)^{\bx}_{[g_1(\by_1) = a_1]} \otimes (G_2)^{\bx}_{g_2'} \ket{\psi} \cdot \eps.
\end{equation*}
But the remaining part of the expression is at most~$1$, and so in total we can upper-bound it by~$\eps$.
This completes the proof.
\end{proof}

\subsection{Commuting EPR strategies}

In this section, we introduce a class of strategies important for our proof.

\begin{definition}
A strategy $\calS = (\psi, M)$ is called an \emph{EPR strategy} if it satisfies the following two properties.
First, there is an integer~$k$ and powers of two $q_1, \ldots, q_k$ such that
\begin{equation*}
\ket{\psi} = \ket{\mathrm{EPR}_{q_1}} \otimes \ket{\mathrm{EPR}_{q_2}} \otimes \cdots \otimes \ket{\mathrm{EPR}_{q_k}}.
\end{equation*}
Second, for each question~$x$, $M^x$ is a projective measurement. If
for all questions $x$ and answers $a$, $M^x_a$ is a real-valued
matrix, we say that the strategy is \emph{real}

In addition, given a game~$\game$, we say that a real EPR strategy $\calS$ is a \emph{real
  commuting EPR strategy (with respect to~$\game$)}
if for every~$(x_1, x_2)$ in the support of~$\calS$ and every~$a_1, a_2$, $M^{x_1}_{a_1}$ commutes with $M^{x_2}_{a_2}$.
We denote the set of real commuting EPR strategies by $\comstrat{\game}$.
\end{definition}

Real commuting EPR strategies are motivated by the \emph{completeness} cases that arise in this work.
We give a series of transformations which modify games to make them sound against increasingly broader sets of strategies.
Unfortunately, these transformations are not complete for all strategies, in the sense that value-$1$ strategies may be mapped to value-less-than-1 strategies,
but we will be careful to ensure that they \emph{are} complete for all commuting EPR strategies.
For the majority of the paper, the one property of commuting EPR strategies that we will use, not shared by all value-$1$ strategies, is the following.

\begin{fact}\label{fact:heh-heh-heh-gonna-make-anand-prove-this-so-i-can-take-the-day-off}
Let $(\psi, M)$ be a real EPR strategy.
Then $M^x_a \otimes I_{\reg{Bob}} \consistency_0 I_{\reg{Alice}} \otimes M^x_a$ for every distribution on~$x$.
\end{fact}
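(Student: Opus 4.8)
The plan is to reduce the statement to a single clean identity about the EPR state and then verify it directly. Since ``$\consistency_0$'' means precisely that Alice and Bob win the consistency game with probability $1-O(0)=1$, and that win probability equals $\E_{\bx}\sum_a \bra{\psi} M^{\bx}_a \otimes M^{\bx}_a \ket{\psi}$, it suffices to show that for \emph{every} fixed question~$x$,
\[
\sum_a \bra{\psi} M^x_a \otimes M^x_a \ket{\psi} = 1,
\]
since this then holds after averaging over any distribution on~$x$. So fix $x$ from now on.

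The key tool is the ``transpose trick'' (mirror lemma) for the maximally entangled state. Writing $\ket{\psi} = \ket{\epr_{q_1}} \otimes \cdots \otimes \ket{\epr_{q_k}}$ in the product computational basis $\{\ket{u_1, \ldots, u_k}\}$, the state $\ket{\psi}$ is exactly the $d$-dimensional maximally entangled state $\tfrac{1}{\sqrt d}\sum_{\vec u}\ket{\vec u}_{\reg{Alice}} \ket{\vec u}_{\reg{Bob}}$ with $d = q_1 \cdots q_k$. For any matrix~$A$ on $\mathbb{C}^d$ one checks, by expanding both sides in this basis, that $(A \otimes I)\ket{\psi} = (I \otimes A^\top)\ket{\psi}$, with the transpose taken in the $\ket{\vec u}$ basis. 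I would apply this with $A = M^x_a$: because the strategy is real, $M^x_a$ has real entries in this basis, and because it is projective it is Hermitian, so $(M^x_a)^\top = \overline{M^x_a} = M^x_a$. Hence
\[
(M^x_a \otimes I)\ket{\psi} = (I \otimes M^x_a)\ket{\psi} \qquad \text{for every } a.
\]

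Finally I would chain these: using $(M^x_a)^2 = M^x_a$ and the identity just derived,
\[
(M^x_a \otimes M^x_a)\ket{\psi} = (M^x_a \otimes I)(I \otimes M^x_a)\ket{\psi} = (M^x_a \otimes I)(M^x_a \otimes I)\ket{\psi} = (M^x_a \otimes I)\ket{\psi}.
\]
Summing over $a$ and using that $\{M^x_a\}_a$ is a projective measurement, so $\sum_a M^x_a = I$, gives
\[
\sum_a \bra{\psi} M^x_a \otimes M^x_a \ket{\psi} = \bra{\psi}\Bigl(\textstyle\sum_a M^x_a\Bigr)\otimes I\,\ket{\psi} = \Vert \ket{\psi}\Vert^2 = 1,
\]
which is what we needed. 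There is no real obstacle here; the only point needing a moment's care is matching the basis in which ``real'' is meant (the natural computational basis) to the Schmidt basis of $\ket{\psi}$ — they coincide — and noting that realness is genuinely used: without it one would only get $(M^x_a)^\top$ on Bob's side, which is still a projective measurement but need not equal $M^x_a$.
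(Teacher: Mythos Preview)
Your proof is correct and follows essentially the same approach as the paper: both establish the identity $(M^x_a \otimes I)\ket{\psi} = (I \otimes M^x_a)\ket{\psi}$ via the transpose trick on the maximally entangled state, using that a real Hermitian matrix equals its own transpose. The paper carries out the component calculation explicitly rather than invoking the mirror lemma by name, and stops once the identity is shown; your write-up is actually more complete in spelling out how projectivity and $\sum_a M^x_a = I$ turn that identity into the consistency value~$1$.
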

\begin{proof}
  From the definition of EPR strategies, we know that $\ket{\psi} =
  \ket{\epr_{q_1}} \ot \dots \ot \ket{\epr_{q_k}} \in (\C^{q_1 \cdot
    q_2 \cdot \dots \cdot q_k})^{\ot 2}$. We may choose a basis
  $\{\ket{i} : 1 \leq i \leq q_1 \cdot q_2 \cdot \dots \cdot q_k\}$
  for $\C^{q_1 \cdot \dots \cdot q_k}$, so that
  \[ \ket{\psi} = \sum_{i} \ket{i}_{\reg{Alice}} \ot
    \ket{i}_{\reg{Bob}} . \]
  
Let us denote the components of $M^x_a$ by the notation
$(M^x_a)_{ij}$, so that $M^x_a = \sum_{ij} (M^x_a)_{ij}
\ket{i}\bra{j}$.  Now, for an arbitrary pair $x, a$, we can compute
the post-measurement states from applying $M^x_a$ on Alice's and Bob's systems.
\begin{align*}
  M^x_a \ot I_{\reg{Bob}} \ket{\psi} &= (M^x_a \ot I_{\reg{Bob}}) \sum_{i} \ket{i}_{\reg{Alice}}
                                       \ot \ket{i}_{\reg{Bob}} \\
                                     &= \sum_{ij} (M^x_a)_{ij} \ket{i}
                                       \ot \ket{j} \\
                                     &= \sum_{ij} (M^x_a)_{ji} \ket{i}
                                       \ot \ket{j} \\
                                     &= (I_{\reg{Alice}} \ot M^x_a) \ket{\psi},
\end{align*}
where in going from the second to the third line, we have used the
fact that $M^x_a$ is real and Hermitian, and thus symmetric.
\end{proof}

The following fact is a useful special case.

\begin{fact}\label{fact:the-ol-pauli-swaperoonie}
Let $n > 0$, $q = 2^t$, and $W \in \{X, Z\}$.
Then $\tau^W_u \otimes I \approx_0 I \otimes \tau^W_u$ on the state $\ket{\mathrm{EPR}_q^n}$.
\end{fact}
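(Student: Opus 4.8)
The plan is to deduce this directly from \Cref{fact:heh-heh-heh-gonna-make-anand-prove-this-so-i-can-take-the-day-off}, of which it is flagged as a special case. Throughout I read $\tau^W_u$ as the $n$-qudit Pauli-$W$ basis projector $\tau^W_{u_1}\otimes\cdots\otimes\tau^W_{u_n}$, indexed by outcomes $u\in\F_q^n$, so that $\{\tau^W_u\}_{u\in\F_q^n}$ is the Pauli-$W$ basis measurement on $n$ qudits. The first step is to observe that this is a \emph{real} projective measurement when $q=2^t$. It is projective because it is the tensor product of single-qudit projective measurements onto orthonormal bases. For reality: $\tau^Z_u=\ket{u}\bra{u}$ is manifestly real, and for $W=X$ we use that in characteristic $2$ the relevant root of unity is $\omega=e^{2\pi i/2}=-1$, so by \eqref{eq:pauli-eigenstates} each entry of $\ket{\tau^X_u}$ equals $\tfrac1{\sqrt q}(-1)^{\tr[uv]}\in\tfrac1{\sqrt q}\{\pm1\}$; hence $\ket{\tau^X_u}$ is a real vector, $\tau^X_u=\ket{\tau^X_u}\bra{\tau^X_u}$ is a real matrix, and tensor products of real projectors are again real projectors.

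Next I observe that $(\ket{\epr_q^n},\{\tau^W_u\})$ is a real EPR strategy: $\ket{\epr_q^n}=\ket{\epr_q}^{\otimes n}$ is a tensor product of EPR states of power-of-two dimension $q=2^t$, and, viewing $\tau^W$ as the measurement attached to a single dummy question, it is a real projective measurement by the previous paragraph. Applying \Cref{fact:heh-heh-heh-gonna-make-anand-prove-this-so-i-can-take-the-day-off} (with the trivial distribution on questions) then yields $\tau^W_u\otimes I_{\reg{Bob}}\consistency_0 I_{\reg{Alice}}\otimes\tau^W_u$ on $\ket{\epr_q^n}$. Finally, since $\{\tau^W_u\}$ is a projective (hence POVM) measurement on each side, item~1 of \Cref{fact:agreement} upgrades this to $\tau^W_u\otimes I\approx_0 I\otimes\tau^W_u$, which is the claim.

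There is essentially no hard step here; the only point that needs care is the reality of the $X$-basis projectors, which hinges on working in characteristic $2$ so that $\omega=-1$. I note an alternative, equally short route that avoids \Cref{fact:heh-heh-heh-gonna-make-anand-prove-this-so-i-can-take-the-day-off} entirely: combine the single-qudit stabilizer relations \eqref{eq:epr-stab-x} and \eqref{eq:epr-stab-z}, which give $\tau^X_u\otimes I\ket{\epr_q}=I\otimes\tau^X_{-u}\ket{\epr_q}$ and $\tau^Z_u\otimes I\ket{\epr_q}=I\otimes\tau^Z_u\ket{\epr_q}$, with the characteristic-$2$ identity $-u=u$ to conclude $\tau^W_u\otimes I\ket{\epr_q}=I\otimes\tau^W_u\ket{\epr_q}$ for each qudit; tensoring over the $n$ qudits gives $(\tau^W_u\otimes I)\ket{\epr_q^n}=(I\otimes\tau^W_u)\ket{\epr_q^n}$ for every $u\in\F_q^n$, so that $\sum_u\|(\tau^W_u\otimes I-I\otimes\tau^W_u)\ket{\epr_q^n}\|^2$ vanishes term by term and the $\approx_0$ statement follows by definition. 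I would present the first route as the main argument and mention the second as a remark.
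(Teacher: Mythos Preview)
Your proposal is correct and matches the paper's proof essentially step for step: show that the Pauli basis projectors are real in characteristic $2$ (via \eqref{eq:pauli-eigenstates} and $\omega=-1$), then invoke \Cref{fact:heh-heh-heh-gonna-make-anand-prove-this-so-i-can-take-the-day-off}. You are slightly more careful than the paper in explicitly citing \Cref{fact:agreement} to pass from $\consistency_0$ to $\approx_0$; the paper leaves this implicit (its proof of \Cref{fact:heh-heh-heh-gonna-make-anand-prove-this-so-i-can-take-the-day-off} in fact establishes the stronger pointwise equality $M^x_a\otimes I\ket{\psi}=I\otimes M^x_a\ket{\psi}$, from which $\approx_0$ is immediate). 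Your alternative route via \eqref{eq:epr-stab-x}--\eqref{eq:epr-stab-z} is also valid and is not used in the paper.
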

\begin{proof}
By \Cref{eq:pauli-eigenstates}, we can write 
\begin{equation*}
  \ket{\tau^X_u} = \frac{1}{\sqrt{q}} \sum_{v \in \F_q}
  \omega^{-\tr[uv]} \ket{v},\qquad \ket{\tau^Z_u} = \ket{u}.
\end{equation*}
The second of these self-evidently has real-valued coefficients.
As for the first, $q = 2^t$ implies that $p = 2$.
This means that $\omega = -1$ and $\tr[uv] \in \{0, 1\}$ for all $u, v$.
As a result, it too has real-valued coefficients.
The fact then follows from \Cref{fact:heh-heh-heh-gonna-make-anand-prove-this-so-i-can-take-the-day-off}.
\end{proof}

This property of real commuting strategies is useful for answer
reduction because it allows us to perform oracularization, giving one
prover both questions~$x_1$ and~$x_2$ so that they may simulate the
action of both provers by simultaneously measuring $M^{x_1}$ and $M^{x_2}$.
For more details, see \Cref{part:answer}.

\subsection{Quantum soundness of the classical low-degree test}\label{sec:classical-q-low-deg}

An important tool for quantum protocols
is a version of the Raz-Safra theorem (\Cref{thm:raz-safra})
in which the soundness of the low-degree test is extended to hold even in the case when the provers are allowed to share entanglement.
For the plane-versus-point test, this was first developed by Vidick in~\cite{Vid16}, but for technical reasons he could only show it for the case of three or more quantum provers.
In~\cite{NV18b}, this was improved to hold for the two-prover case, and this is the result we use in this work.
We begin by defining the class of polynomial measurements.

\begin{definition}
Define $\mathrm{PolyMeas}(m, d, q)$ to be the set of POVM measurements whose outcomes correspond to degree-$d$, $\F_q$-valued polynomials.
In other words, $G \in \polymeas{m}{d}{q}$ if $G = \{G_g\}_g$ with outcomes degree-$d$ polynomials $g:\F_q^m \rightarrow \F_q$.
More generally, we let $\simulpolymeas{m}{d}{q}{\ell}$ be the set of measurements $G = \{G_{g_1, \ldots, g_\ell}\}$ outputting~$\ell$ degree-$d$ polynomials $g_i:\F_q^m \rightarrow \F_q$.
\end{definition}

The following theorem establishes the quantum soundness of the classical low-degree test in the $k = 2$ case.

\begin{theorem}[Quantum soundness of the classical low-degree test~{\cite[Theorem 2]{NV18b}}]\label{thm:anand-thomas-classical-low-degree}
There exists a constant $c > 0$ and a function $\delta(\eps) = \mathrm{poly}(\epsilon, dm/q^c)$ such that the following holds.
Suppose Alice and Bob are entangled provers who pass $\game_{\mathrm{Surface}}(m,d,q,2)$ with probability at least $1-\eps$
using the strategy $(\psi, M)$, where~$M$ consists of projective measurements.
Then there exists a POVM measurement $G \in \polymeas{m}{d}{q}$ such that
\begin{equation*}
M^w_b \otimes I_{\reg{Bob}} \consistency_{\delta(\eps)} I_{\reg{Alice}} \otimes G_{[g(w)=b]},
\qquad
G_g \otimes I_{\reg{Bob}} \consistency_{\delta(\eps)} I_{\reg{Alice}} \otimes G_g,
\end{equation*}
where the first is on the uniform distribution over $\F_q^m$.
\end{theorem}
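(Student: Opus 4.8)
The plan is to obtain this as a corollary of \cite[Theorem 2]{NV18b}, which proves exactly this statement for the \emph{standard} plane-versus-point low-degree test, in which the verifier sends Alice a genuine $2$-dimensional affine plane (encoded canonically) and Bob a uniformly random point of that plane. As noted after \Cref{thm:raz-safra}, $\game_{\mathrm{Surface}}(m,d,q,2)$ differs from the standard test in only two cosmetic respects: (i) the direction vectors $\bv_1,\bv_2$ need not be linearly independent, so with probability at most $q^2/q^m$ the surface $\bs$ is less than $2$-dimensional; and (ii) Alice additionally receives the random direction vectors $\bv_1,\bv_2$, not just the canonical encoding of $\bs$. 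So the task is to check that neither change worsens the soundness by more than an additive $O(q^2/q^m)$, which can then be absorbed into the $\poly(\eps,dm/q^c)$ error (shrinking $c$ if necessary).

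For (i), let $E$ be the event that $\{\bv_1,\bv_2\}$ is linearly independent, so that $\Pr[\lnot E]\le q^2/q^m$ and the question distribution conditioned on $E$ is within total-variation distance $O(q^2/q^m)$ of the unconditioned one. By \Cref{fact:swap-dists}, any $\consistency$- or $\approx$-statement derived on the conditioned distribution transfers, with an additive $O(q^2/q^m)$ loss, to the unconditioned one, and the value of any strategy moves by at most $O(q^2/q^m)$ under this conditioning; hence it suffices to work on $E$. On $E$, $\bs$ is a genuine $2$-plane whose direction space equals $\mathrm{span}\{\bv_1,\bv_2\}$, and, conditioned on $\bs$, the pair $(\bv_1,\bv_2)$ is distributed uniformly over ordered bases of that direction space.

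For (ii), given a projective strategy $(\psi,M)$ passing $\game_{\mathrm{Surface}}(m,d,q,2)$ with probability $1-\eps$, define a strategy $(\psi,\widetilde M)$ for the standard test as follows: Bob keeps his point-measurement, and Alice, upon receiving a plane $\bs_0$, privately samples a uniformly random ordered basis $(\bv_1,\bv_2)$ of its direction space and then measures with $M$ on question $(\bs_0,\bv_1,\bv_2)$. By the previous paragraph this exactly reproduces the question distribution of $\game_{\mathrm{Surface}}(m,d,q,2)$ conditioned on $E$, so $(\psi,\widetilde M)$ passes the standard test with probability $\ge 1-\eps-O(q^2/q^m)$. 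Alice's new measurement is a probabilistic mixture of projective measurements and hence only a POVM; we replace it by a Naimark dilation (\Cref{thm:naimark}) on an enlarged Alice-space, which leaves the point-measurements $M^w$ unchanged. Now \cite[Theorem 2]{NV18b}, applied to this projective standard-test strategy, produces $G\in\polymeas{m}{d}{q}$ with $M^w_b\otimes I_{\reg{Bob}}\consistency_{\delta'}I_{\reg{Alice}}\otimes G_{[g(w)=b]}$ on uniform $w\in\F_q^m$ and $G_g\otimes I_{\reg{Bob}}\consistency_{\delta'}I_{\reg{Alice}}\otimes G_g$ (the point-measurements here being unaffected by the dilation, which touched only the surface side), where $\delta'=\poly(\eps+q^2/q^m,\,dm/q^c)$; absorbing the $q^2/q^m$ term yields the claimed $\delta(\eps)=\poly(\eps,dm/q^c)$.

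The analytic heart of the argument — extracting a self-consistent low-degree measurement from a near-optimal entangled strategy for the plane-versus-point test — is entirely contained in \cite[Theorem 2]{NV18b}; all that is new here is the reduction above, whose one genuine subtlety is that the simulated Alice is a non-projective POVM and so must be dilated before the cited theorem (which is stated for projective strategies) can be invoked.
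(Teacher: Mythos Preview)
Your reduction handling the two cosmetic differences (possible linear dependence of $\bv_1,\bv_2$, and the extra direction vectors sent to Alice) is fine, and indeed the paper merely asserts after \Cref{thm:raz-safra} that these do not affect soundness, without writing out the argument you give. However, you are mistaken about what \cite[Theorem~2]{NV18b} actually provides. As the paper's Remark immediately following the theorem makes explicit, the cited result gives the \emph{surface}-measurement consistency $M^s_f \otimes I_{\reg{Bob}} \consistency_\delta I_{\reg{Alice}} \otimes G_{[g|_s=f]}$ (together with self-consistency of $G$), and only under the side condition $q \ge (dm/\eps)^c$ with $\delta = \poly(\eps)$ --- not the point-measurement consistency with unconditional $\poly(\eps, dm/q^c)$ robustness that you attribute to it.

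The paper's derivation therefore contains two steps that your proposal omits. First, to pass from the surface conclusion to the point conclusion: apply data processing (\Cref{fact:specialize-the-simeq}) to obtain $M^s_{[f(w)=b]} \consistency G_{[g(w)=b]}$ and $G_{[g(w)=b]} \consistency G_{[g(w)=b]}$, combine with the test-success relation $M^w_b \consistency_\eps M^s_{[f(w)=b]}$, convert everything to $\approx$ via \Cref{fact:agreement}, chain with the triangle inequality (\Cref{fact:triangle}), and convert back to $\consistency$ via \Cref{fact:almost-agreement} using projectivity of $M$. Second, to remove the condition on $q$ and obtain the stated robustness: when $q < (dm/\eps)^c$, replace $\eps$ by $\eps' := dm/q^{1/c} > \eps$ so that the condition is met for $\eps'$, and absorb the resulting $\poly(\eps')$ into $\poly(\eps, dm/q^{c'})$ for some new constant $c'$. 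Your proposal assumes both of these are already packaged inside the citation.
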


\begin{remark}
The statement of \Cref{thm:anand-thomas-classical-low-degree} is modified from how it appears in \cite[Theorem 2]{NV18b} to better suit our needs.
In this remark, we show how to derive our version from theirs,
which is stated as follows.
\begin{itemize}
\item[$\circ$] There exists a constant $c >0$ and a function $\delta(\eps) = \mathrm{poly}(\eps)$ such that the following holds.
		Suppose $q \geq (dm/\eps)^c$. Then if Alice and Bob pass the surface-versus-point test with probability $1-\eps$, there is a measurement $G \in \polymeas{m}{d}{q}$
		such that
		\begin{equation}\label{eq:what-anand-and-thomas-proved}
		\E_{\bs} \sum_g \sum_{f \neq g|_s} \bra{\psi} M^{\bs}_f \otimes G_g \ket{\psi} \leq \delta(\eps),\qquad
		\sum_g \bra{\psi} G_g \otimes (I - G_g) \ket{\psi} \leq \delta(\eps).
		\end{equation}
\end{itemize}
These are equivalent to the statements
\begin{equation*}
M^s_f \otimes I_{\reg{Bob}} \consistency_{\delta(\eps)} I_{\reg{Alice}} \otimes G_{[g|_s=f]},
\qquad
G_g \otimes I_{\reg{Bob}} \consistency_{\delta(\eps)} I_{\reg{Alice}} \otimes G_g,
\end{equation*}
where the first is on the uniform distribution over surface in $\F_q^m$.
The second of these matches the corresponding statement above.
Next, by \Cref{fact:specialize-the-simeq} we derive
\begin{equation*}
M^s_{[f(w) = b]} \otimes I_{\reg{Bob}} \consistency_{\delta(\eps)} I_{\reg{Alice}} \otimes G_{[g(w)=b]}
\qquad
\text{and}
\qquad
G_{[g(w) = b]} \otimes I_{\reg{Bob}} \consistency_{\delta(\eps)} I_{\reg{Alice}} \otimes G_{[g(w) = b]}.
\end{equation*}
with respect to the distribution $(\bs, \bw)$ from $\game_{\mathrm{Surface}}(m,d,q,2)$.
On top of that, since the strategy passes the test with probability $1-\eps$,
\begin{equation*}
M^w_b \otimes I_{\reg{Bob}} \consistency_{\eps} I_{\reg{Alice}} \otimes M^s_{[f(w) = b]}
\end{equation*}
As a result, if we use \Cref{fact:agreement} to switch these to ``$\approx_\delta$" statements, then
\begin{equation*}
M^w_b \otimes I_{\reg{Bob}}
\approx_{\eps} I_{\reg{Alice}} \otimes M^s_{[f(w) = b]}
\approx_{\delta(\eps)} G_{[g(w)=b]} \otimes  I_{\reg{Bob}} 
\approx_{\delta(\eps)} I_{\reg{Alice}} \otimes G_{[g(w) = b]}.
\end{equation*}
The result now follows from the triangle inequality (\Cref{fact:triangle}) followed by \Cref{fact:almost-agreement} and the fact that~$M$ was assumed to be projective.

Finally, we remove the condition on~$q$ using a trick from~\cite{NV18a}. If $q < (dm/\eps)^c$, then we select $\eps' > \eps$ such that $q = (dm/\eps')^c$.
Alice and Bob also pass the plane-versus-point test with probability $1-\eps'$ because $1-\eps' < 1-\eps$, and so we can apply the theorem with these parameters,
giving a robustness of $\delta(\eps') = \delta(dm/q^{1/c})$.
(In the case when $\eps' > 1$, which is not allowed, this bound trivially still holds because $dm/q^{1/c} > 1$.)
In general, then, we can remove the condition on~$q$ so long as we replace the robustness of $\poly(\eps)$ with $\poly(\eps, dm/q^{1/c})$, which holds in both cases.
\end{remark}

We will use the following proposition about polynomial measurements several times.

\begin{proposition}\label{prop:same-on-point-same-on-subspace}
Let $d > 0$ be an integer.
Consider a distribution $\calD$ on pairs $(\bs, \bu)$, where~$\bs$ is a subspace in $\F_q^m$ and~$\bu$ is a uniformly random point in~$\bs$.
Let $\{M^s_f\}$ be a measurement whose outcomes are degree-$d$ polynomials $f:s \rightarrow \F_q$,
and let $G \in \polymeas{m}{d}{q}$.
Suppose that
\begin{equation*}
M^s_{[f(u) = b]} \otimes I_{\reg{Bob}} \consistency_\delta I_{\reg{Alice}} \otimes G_{[g(u) = b]}
\end{equation*}
with respect to~$\calD$. Then
\begin{equation*}
M^s_{f} \otimes I_{\reg{Bob}} \consistency_{\delta+d/q} I_{\reg{Alice}} \otimes G_{[g|_s = f]}
\end{equation*}
with respect to~$\calD$.
\end{proposition}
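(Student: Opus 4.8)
The plan is to unpack both $\consistency$-statements into plain statements about the winning probability of the consistency game, and then use the Schwartz--Zippel lemma to upgrade ``Alice's and Bob's polynomials agree at a uniformly random point of $\bs$'' to ``Alice's and Bob's polynomials agree on all of $\bs$''. No entanglement or self-testing machinery is needed: the whole argument is at the level of classical probabilities of measurement outcomes, and in particular projectivity plays no role.

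Concretely, I would first note that if $d\ge q$ then $d/q\ge 1$ and the conclusion is vacuous, so assume $d<q$. Unpacking the hypothesis: in the game where both provers receive $(\bs,\bu)\sim\calD$, Alice applies $\{M^s_f\}$ (obtaining a degree-$d$ polynomial $f$ on $\bs$) and reports $f(\bu)$, and Bob applies $G$ (obtaining a degree-$d$ polynomial $g$ on $\F_q^m$) and reports $g(\bu)$, the probability that $f(\bu)=g(\bu)$ is $1-O(\delta)$. The key structural observation is that Alice's measurement depends only on $\bs$ and Bob's measurement depends on no input, so conditioned on $\bs$ the pair $(f,g)$ is independent of $\bu$, which is just a uniformly random point of $\bs$. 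Hence the winning probability equals $\E_{\bs,f,g}\big[\Pr_{\bu\sim\bs}[\,f(\bu)=g|_{\bs}(\bu)\,]\big]$. For fixed $\bs,f,g$: if $f=g|_{\bs}$ as polynomials on $\bs$ the inner probability is $1$; otherwise $f$ and $g|_{\bs}$ are two distinct polynomials of degree at most $d$ on the subspace $\bs$, and identifying $\bs$ with $\F_q^k$ via the fixed parametrization used in $\game_{\mathrm{Surface}}$ (under which a uniformly random point of $\bs$ becomes a uniformly random element of $\F_q^k$), Schwartz--Zippel (\Cref{lem:schwartz-zippel}) bounds the inner probability by $d/q$. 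Combining, $1-O(\delta)\le \Pr[f=g|_{\bs}] + d/q$, so $\Pr[f=g|_{\bs}]\ge 1-O(\delta+d/q)$. Since $\Pr[f=g|_{\bs}]$ is precisely the winning probability of the consistency game played with Alice's measurement $\{M^s_f\}$ and Bob's marginalized measurement $\{G_{[g|_s=f]}\}$ (in the sense of \Cref{not:marginalize}), this is exactly the assertion $M^s_f \otimes I_{\reg{Bob}} \consistency_{\delta+d/q} I_{\reg{Alice}} \otimes G_{[g|_s=f]}$ with respect to $\calD$.

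The only delicate point --- more a matter of careful bookkeeping than a genuine obstacle --- is the meaning of ``restriction of $g$ to $\bs$'' and the application of Schwartz--Zippel on a subspace: I want to fix the parametrization of $\bs$ so that $f$ and $g|_{\bs}$ are literally polynomials in the same $k$ variables and so that the point query corresponds to evaluating at a uniform point of $\F_q^k$. One should also check that this goes through even when the spanning vectors of $\bs$ are linearly dependent (i.e.\ $\bs$ has dimension strictly less than its nominal $k$): the parametrized ``uniform point on $\bs$'' is still a pushforward of the uniform distribution on $\F_q^k$, so the inner probability is still at most $d/q$ whenever $f\neq g|_{\bs}$, and the argument is unaffected.
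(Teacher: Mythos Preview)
Your proposal is correct and follows essentially the same approach as the paper: both arguments observe that Alice's outcome $f$ depends only on $\bs$ and Bob's outcome $g$ depends on nothing, so the pair $(f,g)$ is independent of the uniformly random $\bu\in\bs$, and then invoke Schwartz--Zippel on the subspace to pass from pointwise agreement to agreement as polynomials. Your arithmetic is in fact slightly cleaner than the paper's: you bound the winning probability by $\Pr[f=g|_{\bs}]+d/q$ directly, whereas the paper writes $\Pr[f\neq g|_{\bs}]\cdot(1-d/q)\le O(\delta)$ and then needs a small case split on whether $q\ge 2d$ to divide through.
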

\begin{proof}
Suppose the verifier (i)  samples $(\bs, \bu) \sim \calD$,
(ii) gives Alice~$\bs$, who measures with $M^{\bs}$ and returns her outcome~$\boldf : \bs \rightarrow \F_q$,
(iii) receives~$\bg : \F_q^m \rightarrow \F_q$ from Bob, sampled via~$G$,
and (iv) accepts if $\boldf(\bu) = \bg(\bu)$.
Then by assumption, the verifier accepts with probability at least $1-O(\delta)$.

We can use this to bound the probability that $\boldf$ and $\bg$ disagree on the subspace~$\bs$.
By Schwartz-Zippel (\Cref{lem:schwartz-zippel}), conditioned on $\boldf$ and $\bg$ disagreeing,
the probability they disagree on a random point $\bu \sim \bs$  is at least $1-d/q$.
This gives us the inequality $\Pr[\boldf \neq \bg|_{\bs}] \cdot (1-d/q) \leq O(\delta)$.
Now, assume first that $q \geq 2d$. Then this bound implies, via \Cref{fact:agreement}, that
\begin{equation}\label{eq:almost-there-almost-there}
M^s_f \otimes I_{\reg{Bob}} \approx_{\delta + d/q} I_{\reg{Alice}} \otimes G_{[g|_{s} = f]}.
\end{equation}
On the other hand, when $q \geq 2d$, then this bound is also true for trivial reasons.
This is because we can pick $\delta(\cdot)$ such that $\delta(\eps) \geq 1$ in this case.
\end{proof}

\subsection{Quantum soundness of the classical simultaneous low-degree test}\label{sec:q-simultaneous}

We would now like to use \Cref{thm:anand-thomas-classical-low-degree}
to show quantum soundness for the simultaneous classical low-degree test.
This will be done using the same reduction presented in~\Cref{sec:simultaneous-classical}.
The main result is the following.

\begin{theorem}[Quantum soundness of the simultaneous classical
  low-degree test]
  \label{thm:simultaneous-ldt}
There exists a constant $c > 0$ and a function $\delta(\eps) = \mathrm{poly}(\epsilon, d (m + \ell)/q^c)$ such that the following holds.
Suppose Alice and Bob are entangled provers who pass $\game_{\mathrm{Surface}}^\ell(m,d,q,2)$ with probability at least $1-\eps$
using the strategy $(\psi, M)$, where~$M$ consists of projective measurements.
Then there exists a measurement $G \in \simulpolymeas{m}{d}{q}{\ell}$ such that
\begin{equation*}
M^w_{b_1, \ldots, b_\ell} \otimes I_{\reg{Bob}} \consistency_{\delta(\eps)} I_{\reg{Alice}} \otimes G_{[g_1(w), \ldots, g_\ell(w) = b_1, \ldots, b_\ell]},
\qquad
G_{g_1, \ldots, g_\ell} \otimes I_{\reg{Bob}}
\consistency_{\delta(\eps)}
I_{\reg{Alice}} \otimes G_{g_1, \ldots, g_\ell},
\end{equation*}
where the first is on the uniform distribution over $\F_q^m$.
\end{theorem}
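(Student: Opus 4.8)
The plan is to follow the classical reduction used in the proof of \Cref{thm:simultaneous-raz-safra} --- the ``combine'' trick --- but to carry it out at the level of measurements, replacing the invocation of Raz--Safra (\Cref{thm:raz-safra}) with its entanglement-sound analogue \Cref{thm:anand-thomas-classical-low-degree}. First I would reduce to the single-function test. Given a projective strategy $(\psi, M)$ passing $\game_{\mathrm{Surface}}^\ell(m,d,q,2)$ with probability at least $1-\eps$, I build ``Combined Alice'' and ``Combined Bob'' exactly as in the proof of \Cref{thm:simultaneous-raz-safra}: Combined Alice, on a subspace $\bs\subseteq\F_q^{\ell+m}$, privately draws $\bs'\sim_2\bs_{\mathrm{proj}}$, applies the subspace measurement $M^{\bs'}$ to obtain polynomials $\boldf_1,\dots,\boldf_\ell$, and outputs $\mathrm{combine}_{\boldf}|_{\bs}$; Combined Bob, on a point $(\bx,\by)$, applies the point measurement $M^{\by}$ to obtain $\bb_1,\dots,\bb_\ell\in\F_q$ and outputs $\mathrm{combine}_{\bb}(\bx)=\sum_i\bx_i\bb_i$. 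By \Cref{prop:really-the-same-dist} the questions $\bs',\by$ that $M$ effectively receives have the distribution of the questions in $\game_{\mathrm{Surface}}^\ell(m,d,q,2)$, and by \Cref{prop:sub-subspace} the combined response is a valid degree-$(d+1)$ answer, so Combined Alice and Combined Bob pass $\game_{\mathrm{Surface}}(\ell+m,d+1,q,2)$ with probability at least $1-\eps$.

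Since the private draw of $\bs'$ and the combining operation turn $M$ into POVMs, I would first Naimark-dilate (\Cref{thm:naimark}) the combined strategy to a projective one on $\psi\otimes\ket{\rmaux_A}\otimes\ket{\rmaux_B}$ --- which preserves the $\consistency$ statements we will derive --- and then apply \Cref{thm:anand-thomas-classical-low-degree}. This yields a POVM $\widetilde G\in\polymeas{\ell+m}{d+1}{q}$, with outcomes degree-$(d+1)$ polynomials $\tilde g$ on $\F_q^{\ell+m}$, satisfying $M^{y}_{[\mathrm{combine}_b(x)=c]}\otimes I\consistency_{\delta_1}I\otimes\widetilde G_{[\tilde g(x,y)=c]}$ on the uniform distribution over $\F_q^{\ell+m}$ (here $M^{y}_{[\mathrm{combine}_b(x)=c]}$ is Combined Bob's measurement), together with $\widetilde G_{\tilde g}\otimes I\consistency_{\delta_1}I\otimes\widetilde G_{\tilde g}$, where $\delta_1=\delta_1(\eps)=\poly(\eps,d(\ell+m)/q^c)$.

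The crux is then to show that the polynomial $\tilde g$ returned by $\widetilde G$ is \emph{exactly linear in $x$} (in the sense of the definition preceding \Cref{prop:exactly-linear-prop}) with probability $1-O(\delta_1+d/q)$, and that, writing $\tilde g(x,y)=\sum_i x_i\tilde g_i(y)$, the degree-$\leq d$ polynomials $\tilde g_i$ satisfy $\tilde g_i(\by)=\bb_i$ for all $i$ with the same probability. The first $\consistency$ relation says that for uniform $(\bx,\by)$ --- so $\bx,\by$ independent and uniform --- the outcome $\bb$ (which depends only on $\by$) and the outcome $\tilde g$ (whose distribution is question-independent) satisfy $\sum_i\bx_i\bb_i=\tilde g(\bx,\by)$ with probability $1-O(\delta_1)$. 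If $\tilde g$ failed to be exactly linear in $x$ with probability $\eta$, then for such $\tilde g$, \Cref{prop:exactly-linear-prop} gives that $\tilde g|_{\by}$ is not exactly linear with probability at least $1-(d+1)/q$, and for such $(\tilde g,\by)$, since $\bx$ is uniform and independent of $(\bb,\tilde g,\by)$, Schwartz--Zippel (\Cref{lem:schwartz-zippel}) applied to the nonzero polynomial $\tilde g|_{\by}(x)-\sum_i x_i\bb_i$ bounds $\Pr_{\bx}[\sum_i\bx_i\bb_i=\tilde g(\bx,\by)]\leq(d+1)/q$; multiplying, the disagreement probability is at least $\eta(1-2(d+1)/q)$, forcing $\eta=O(\delta_1+d/q)$ (the bound is trivial for small $q$). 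A second application of Schwartz--Zippel, now comparing the exactly-linear functions $x\mapsto\sum_i x_i\tilde g_i(\by)$ and $x\mapsto\sum_i x_i\bb_i$, gives $\Pr[\exists i:\tilde g_i(\by)\neq\bb_i]=O(\delta_1+d/q)$.

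Finally I would define $G\in\simulpolymeas{m}{d}{q}{\ell}$ by post-processing $\widetilde G$: on outcome $\tilde g$, output the coefficient tuple $(\tilde g_1,\dots,\tilde g_\ell)$ if $\tilde g$ is exactly linear in $x$, and the all-zero tuple otherwise (a valid tuple of $\ell$ degree-$d$ polynomials either way). Self-consistency $G_{g_1,\dots,g_\ell}\otimes I\consistency_{\delta_1}I\otimes G_{g_1,\dots,g_\ell}$ follows from that of $\widetilde G$ by the data-processing inequality \Cref{fact:specialize-the-simeq}. For the main relation, note that outside the $O(\delta_1+d/q)$-probability bad event the measurement $G_{[g_1(w),\dots,g_\ell(w)=b_1,\dots,b_\ell]}$ coincides with ``run $\widetilde G$, extract $\tilde g_i$, evaluate at $w$'', so the previous paragraph yields $M^w_{b_1,\dots,b_\ell}\otimes I\consistency_{O(\delta_1+d/q)}I\otimes G_{[g_1(w),\dots,g_\ell(w)=b_1,\dots,b_\ell]}$ on the uniform distribution over $\F_q^m$; taking $\delta(\eps)=O(\delta_1(\eps)+d/q)=\poly(\eps,d(m+\ell)/q^c)$ finishes the proof. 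The hard part is this penultimate step: bounding the weight of the ``not exactly linear in $x$'' sub-measurement of $\widetilde G$, the quantum analogue of the contradiction step in the classical argument, which goes through precisely because the relevant bad events depend on the question only through the component $\by$ controlling the point-measurement outcome, leaving $\bx$ free for the Schwartz--Zippel bound.
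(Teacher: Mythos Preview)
Your proposal is correct and follows essentially the same approach as the paper's proof: the ``combine'' reduction to the single-function test, application of \Cref{thm:anand-thomas-classical-low-degree}, the Schwartz--Zippel argument bounding the non-exactly-linear weight, and the post-processing to define the final measurement (with self-consistency via \Cref{fact:specialize-the-simeq}) all match. Your explicit mention of the Naimark dilation before invoking \Cref{thm:anand-thomas-classical-low-degree} is a detail the paper elides.
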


\begin{proof}
Suppose Alice and Bob pass $\game_{\mathrm{Surface}}^\ell(m, d, q,2)$ with probability at least $1-\eps$.
We will use them to simulate two provers, ``Combined Alice" and ``Combined Bob", who pass
the single-function low-degree test $\game_{\mathrm{Surface}}(\ell + m, d+1, q,2)$ with probability at least $1-\eps$.
They are specified as follows:
\begin{itemize}
\item[$\circ$] \textbf{Combined Alice:} Given $\bs \subseteq \F_q^{\ell + m}$, draw $\bs' \sim_2 \bs_{\mathrm{proj}}$.  Give it to Alice, who responds with $\boldf_1, \ldots, \boldf_{\ell}: \bs' \rightarrow \F_q$.  Output the function $\mathrm{combine}_{\boldf}|_{\bs}$.
\item[$\circ$] \textbf{Combined Bob:} Given $(\bx, \by) \in \F_q^{\ell + m}$, compute $\by \in \F_q^m$. Give it to Bob, who responds with $\bb_1, \ldots, \bb_\ell \in \F_q$.
			Return $\mathrm{combine}_{\bb}(\bx) \in \F_q$.
\end{itemize}
By \Cref{prop:really-the-same-dist}, $\bs'$ and $\by$ are distributed as the questions in $\game_{\mathrm{Surface}}^\ell(m, d, q,2)$.
Using our assumption on Alice and Bob, this means that $\boldf_1(\by) = \bb_1$, \ldots, $\boldf_\ell(\by) = \bb_\ell$ with probability at least $1-\eps$.
As a result, $(\mathrm{combine}_{\boldf}|_{\bs})(\bx, \by) = \mathrm{combine}_{\bb}(\by)$ with probability at least~$1-\eps$.
By \Cref{prop:sub-subspace}, $\mathrm{combine}_{\boldf}|_{\bs}$ is a degree-$(d+1)$ function on~$\bs$,
and so it is a valid response to subspace queries.
This means Combined Alice and Bob pass~$\game_{\mathrm{Surface}}(\ell + m, d+1, q, 2)$ with probability at least~$1-\eps$.

Thus, we can apply \Cref{thm:anand-thomas-classical-low-degree}.
It gives a measurement $G \in \polymeas{\ell+m}{d+1}{q}$ such that
\begin{equation}\label{eq:g-meas-equals-combine-meas}
M^y_{[\mathrm{combine}_b(x)=\nu]} \otimes I_{\reg{Bob}}
\consistency_{\delta(\eps)} I_{\reg{Alice}} \otimes G_{[g(x, y) = \nu]},
\qquad
G_g \otimes I_{\reg{Bob}} \consistency_{\delta(\eps)} I_{\reg{Alice}} \otimes G_g,
\end{equation}
where $\delta(\eps) = \poly(\eps, (d+1)(\ell+m)/q^c)$.
This means that if we give Alice~$\by$ and she returns~$\bb$, and Bob simply returns~$\bg$,
then $\mathrm{combine}_{\bb}(\bx) = g(\bx, \by)$ with probability at least~$1-\delta(\eps)$.

We would like to show that~$\bg$ is exactly linear in~$x$ with high probability, over the randomness in the measurement~$G$.
Let us condition on a $\bg$ which is not exactly linear.
By \Cref{prop:exactly-linear-prop}, the probability that $\bg|_{\by}$ is not exactly linear is at least $1-(d+1)/q$.
On the other hand, because $\mathrm{combine}_{\bb}(\bx)$ is always exactly linear by construction,
the probability that $\bg|_{\by}(\bx) = \mathrm{combine}_{\bb}(\bx)$ is at most $(d+1)/q$ by Schwartz-Zippel (\Cref{lem:schwartz-zippel}).
As a result, the probability that $\bg(\bx, \by) = \mathrm{combine}_{\bb}(\bx)$ is at most $(d+1)/q + (d+1)/q$.
Thus, if we write $\mu_{\mathrm{linear}}$ for the probability that $\bg$ is exactly linear, we have equality at most $\mu_{\mathrm{linear}} + 2(d+1)/q$ fraction of the time.
Rearranging, $\bg$ is exactly linear with probability
\begin{equation*}
\mu_{\mathrm{linear}} \geq 1 - 2\delta(\eps) - 2(d+1)/q.
\end{equation*}
Define a new measurement $\{H_{g_1, \ldots, g_\ell}\} \in \simulpolymeas{m}{d}{q}{\ell}$ operationally as follows: first, measure~$G$ and receive~$\bg$.
If it is exactly linear, it can be written as $\sum_i x_i \cdot \bg_i(y)$, and so output $\bg_1, \ldots, \bg_\ell$.
If~$\bg$ is \emph{not} exactly linear, output any arbitrary degree-$d$ polynomials instead.
When~$\bg$ is exactly linear, we have $\mathrm{combine}_{\bg_1, \ldots, \bg_\ell}(x, y) = \bg(x, y)$.
Since this happens with probability at least $1-\delta(\eps)$,
we can replace~$G$ with~$H$ in \Cref{eq:g-meas-equals-combine-meas}, yielding
\begin{equation}\label{eq:no-name-is-a-bad-name}
M^y_{[\mathrm{combine}_b(x)=\nu]} \otimes I_{\reg{Bob}}
\consistency_{\delta(\eps)} I_{\reg{Alice}} \otimes H_{[\mathrm{combine}_g(x, y) = \nu]},
\end{equation}
On the other hand, because~$H$ is just~$G$ with data processing applied to its output,
we can apply \Cref{fact:specialize-the-simeq} to \Cref{eq:g-meas-equals-combine-meas}. This produces the equation
\begin{equation*}
H_{g_1, \ldots, g_\ell} \otimes I_{\reg{Bob}} \consistency_{\delta(\eps)} I_{\reg{Alice}} \otimes H_{g_1, \ldots, g_\ell}.
\end{equation*}

Consider $\bg_1, \ldots, \bg_\ell$ drawn by Bob using~$H$.
For any fixed~$b$ and~$y$, if it is not the case that $g_1(y) = b_1$, \ldots, $g_\ell(y) = b_\ell$,
then the probability that $\mathrm{combine}_{\bg}(\bx, y) = \mathrm{combine}_{b}(\bx)$
over a random~$\bx$ is at most $1/q$ by Schwartz-Zippel, since both are exactly linear functions.
Thus, if Alice draws~$\bb_1, \ldots, \bb_\ell$ given~$\by$,
and we write~$\eta$ for the probability that $\bg_1(\by) = \bb_1$, \ldots, $\bg_\ell(\by) = \bb_\ell$,
then the probability that $\mathrm{combine}_{\bg}(\bx, \by) = \mathrm{combine}_{\bb}(\bx)$ is at most
$\eta +  (1-\eta) \cdot 1/q$.
Combined with \Cref{eq:no-name-is-a-bad-name}, this implies that
\begin{equation*}
\Pr[\bg_1(\by) = \bb_1, \ldots, \bg_\ell(\by) = \bb_\ell] \geq 1- \delta(\eps) - 1/q.
\end{equation*}
Or, equivalently,
\begin{equation*}
M^y_{b_1, \ldots, b_\ell} \otimes I_{\reg{Bob}}
\consistency_{\delta(\eps)} I_{\reg{Alice}} \otimes G_{[g_1(y), \ldots, g_\ell(y) = b_1, \ldots, b_\ell]}.\qedhere
\end{equation*}
\end{proof}

\subsection{Self-testing}

The games presented in \Cref{sec:classical-q-low-deg,sec:q-simultaneous} might be referred to as ``measurement testers":
if a strategy passes them with high probability, then we can extract some property on its measurements.
In this section, we will introduce a significantly stronger notion of testing called \emph{self-testing}.
A self-tester is a game in which if a prover passes with high probability,
then not only do we do exactly which measurements the prover must be performing,
we also know which exactly state it must be performing them on (up to local isometry).
(We note that some works use ``self-testing" to refer both to ``measurement testing" and what we refer to as ``self-testing"~\cite{NV18a}.
In this work, we will reserve the term exclusively for the latter.) We begin with a definition.

\begin{definition}
We say that $\mathcal{S} = (\psi, M)$ is a \emph{partial strategy} for $\game$
if $M$ contains the POVM $M^x$ for only a subset of the questions in~$\game$.
We call this set of questions $\mathcal{S}$'s \emph{question set}.
A strategy $\mathcal{S}' = (\psi, M')$ \emph{extends} $\mathcal{S}$ if $(M')^x = M^x$ for every~$x$ in $\mathcal{S}$'s question set.
\end{definition}

Next, we define self-testing.

\begin{definition}[Self-testing]
Let $\mathcal{S} = (\psi, G)$ be a partial strategy and $\mathcal{D}$ be a distribution over its question set.
A game $\game$ is a \emph{self-test for $\mathcal{S}$ over $\mathcal{D}$ with robustness $\delta(\epsilon)$}
if it satisfies the following two conditions.
\begin{itemize}
\itemsep -.5pt
\item[$\circ$] \textbf{Completeness:} There exists a (full) strategy $\mathcal{S}_{\mathrm{full}}$ consistent with $\mathcal{S}$ which passes~$\game$ with probability~$1$.
\item[$\circ$] \textbf{Soundness:} Let $\overline{\mathcal{S}} = (\overline{\psi}, \overline{M})$ be a strategy which passes~$\game$ with probability $1-\eps$.
	Then there exists a local isometry $\phi = \phi_{\mathrm{local}} \otimes \phi_{\mathrm{local}}$ and a state $\ket{\mathrm{aux}}$ such that
	\begin{equation*}
		\Vert \phi \ket{\overline{\psi}} - \ket{\psi}\ket{\mathrm{aux}} \Vert^2 \leq \delta(\epsilon).
	\end{equation*}
	\ignore{and
	\begin{equation*}
		\E_{(\bx, \bx') \sim\game} \sum_{a, a'} \Vert \phi (M_a^{\bx} \otimes M_{a'}^{\bx'} \ket{\psi'})
				- (M^x_a \otimes M^x_{a'} \ket{\psi}) \ket{\mathrm{aux}} \Vert^2 \leq \delta(\epsilon).
	\end{equation*}}
	Furthermore, if we define the new matrices $M_a^x := \phi_{\mathrm{local}} \cdot \overline{M}_a^x \cdot (\phi_{\mathrm{local}})^\dagger$, then
	\begin{equation}\label{eq:cite-once-then-never-again}
	M_a^x \otimes I_{\reg{Bob}} \approx_{\delta(\epsilon)} (G_a^x \otimes I_{\mathrm{aux}}) \otimes I_{\reg{Bob}},
	\end{equation}
	on states $\ket{\psi}\ket{\mathrm{aux}}$ and $\ket{\psi'}$ and distribution $\bx \sim \calD$.
\end{itemize}
\end{definition}

We note that this definition of self-testing differs in several key places from the one given in \cite[Definition~$2.5$]{NV18a}.
We will explain these differences in more detail when we cite the quantum low-degree test in \Cref{sec:self-test-pauli}.


\part{Implementing the registers}

\label{part:stack}


\section{Register overview}\label{sec:register-overview}

In this part, we implement the quantum registers.
Our goal is force Alice and Bob to share a state of the following form:
\newcommand{\attemptedwidth}{1cm}
\begin{center}
\begin{tabular}{|K{\attemptedwidth}|K{\attemptedwidth}|K{\attemptedwidth}|K{\attemptedwidth}|K{\attemptedwidth}|K{.5cm}|K{\attemptedwidth}|}
\cline{1-2}\cline{4-5}\cline{7-7}
$r_1$ & $r_2$ & $\cdots$ & $r_{k-1}$ & $r_k$ & $\otimes$ & $\mathrm{aux}$\\
\cline{1-2}\cline{4-5}\cline{7-7}
\end{tabular}~,
\end{center}
in which each register $r_i$ contains an EPR state,
and $\mathrm{aux}$ is a symmetric auxiliary state.
In addition, we want the verifier to be able to (i) force
the provers to perform Pauli basis queries on some of these registers and report back the outcomes
and (ii) ``hide" the remaining registers from the provers so that they do not measure them at all.

\subsection{Definitions}\label{sec:protocol-defs}

In this section,
we will begin by defining quantum registers
for \emph{nonuniform} games.
Defining registers for uniform games~$\game$ is a little more
complicated because we allow the number and size of registers for $\game(\mathsf{input})$
to depend on $\mathsf{input}$.
We detail this below in \Cref{sec:uniform-registers}.

\begin{definition}\label{def:register}
Let $k \geq 0$ be an integer, and let $n = (n_1, \ldots, n_k)$ and $q = (q_1, \ldots, q_k)$ be $k$-tuples of integers.
A \emph{$(k, n, q)$-register game} $\game$ is defined as follows.
\begin{itemize}
\itemsep -.5pt
\item[$\circ$] Questions~$x$ are formatted into two blocks $x = (x_1, x_2)$. The first block contains a list of~$k$ Pauli basis queries
			$x_1 = (W_1, \ldots, W_k)$, where each $W_i \in \{X, Z, \hideq, \noop\}$.
\item[$\circ$] Answers~$a$ are formatted into two blocks $a = (a_1, a_2)$. The first block contains a list of answers to the Pauli basis queries $a_1 = (u_1, \ldots, u_k)$.
			Here each $u_i  \in \F_{q_i}^{n_i} \cup \{\varnothing\}$.
\end{itemize}
An \emph{$(k, n, q)$-register strategy} $\calS$ is defined as follows.
\begin{itemize}
\itemsep -.5pt
\item[$\circ$] Alice and Bob share a state
\begin{equation*}
\ket{\psi} = \ket{r_1} \otimes \cdots \otimes \ket{r_k} \otimes \ket{\mathrm{aux}}.
\end{equation*}
Here, $\ket{r_i} = \ket{\mathrm{EPR}_{q_i}^{n_i}}$ for each~$i$, and $\ket{\mathrm{aux}}$ is an arbitrary symmetric shared state.
\item[$\circ$] Given a question $x = (x_1, x_2)$ with first block $x_1 = (W_1, \ldots, W_k)$, Alice and Bob act as follows.
	Let $i \in [k]$.
	\begin{itemize}
	\item If $W_i \in \{X, Z\}$, they measure $\tau^{W}$ on the $i$-th EPR register and set $u_i$ to be the outcome.
	\item If $W_i \in \{\hideq, \noop\}$, they set $u_i = \varnothing$.
	\end{itemize}
	Introduce the notation $\tau_{\varnothing}^W = I$ for $W \in \{\hideq, \noop\}$.
	We can write their measurement as
	\begin{equation}\label{eq:in-math}
		M^{x_1, x_2}_{a_1} = \tau^{W_1}_{u_1} \otimes \cdots \otimes \tau^{W_k}_{u_k} \otimes I_{\reg{aux}}.
	\end{equation}
	To produce the second part of their answer~$a_2$, Alice and Bob can measure any part of their state except the EPR registers which have been ``hidden".	
	This entails the following: let $S = \{i \mid W_i = \hideq\}$.  Then for any answer~$a$, the corresponding POVM acts as follows:
	\begin{equation}\label{eq:hide-coords-in-S}
	M_a^x = M_{\overline{S}} \otimes I_{S}.
	\end{equation}
	Here, $I_{S}$ is the identity matrix on the EPR registers in~$S$,
	whereas $M_{\overline{S}}$ is a POVM acting on the EPR registers in~$\overline{S}$ as well as the state~$\ket{\mathrm{aux}}$.
\end{itemize}
We define $\valreg{k,n,q}{\game}$ to be the maximum over $\valstrat{\game}{\calS}$, where $\calS$ is any $(k, n, q)$-register strategy.
\end{definition}

The~$X$ and~$Z$ questions specify the corresponding Pauli basis measurement,
and the~$\hideq$ question specifies that the register is to be hidden.
The~$\noop$ question is a ``no-op" and does not restrict Alice and Bob at all, other than making them respond with the ``no-op" answer $\varnothing$.
Thus, unlike with the data hiding question, they are allowed to measure the register as they see fit.
This will be useful later when we want Alice and Bob to measure both~$X$ \emph{and}~$Z$ observables on the same register.

In designing our compiler,
it will be convenient to define a set of strategies called ``semiregister strategies".
These will be strategies which are intermediate between $(k-1)$-register strategies and $k$-register strategies
in the sense that they have Pauli basis queries implemented on the final ($k$-th) register but not data hiding queries.
These are defined as follows.

\begin{definition}
A \emph{$(k, n, q)$-semiregister strategy} is defined just as a $(k, n, q)$-register strategy, with the following modification:
the set~$S$ used in \Cref{eq:hide-coords-in-S} is changed to be $S = \{i \neq k \mid W_i = \hideq\}$.
We define $\valsemi{k,n,q}{\game}$ to be the maximum over $\valstrat{\game}{\calS}$, where $\calS$ is any $(k, n, q)$-semiregister strategy.
\end{definition}
\noindent
Thus, querying the $k$-th register of a semiregister strategy with
a~$\hideq$ is the same as querying it with a~$\noop$.

The following lemma shows that we can restrict to \emph{projective}
register strategies without loss of generality.
\begin{lemma}\label{lem:proj-suffices}
  Let $\calS$ be a $(k, n,q)$-register strategy. Then there exists a
  $(k,n,q)$-register strategy $\calS'$ in which all measurements are
  projective, and which produces the same bipartite correlation as $\calS$.
\end{lemma}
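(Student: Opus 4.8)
The plan is to use Naimark's theorem to ``round'' the non-Pauli part of a register strategy into a projective measurement, taking care that the newly introduced ancilla is absorbed symmetrically into the $\mathrm{aux}$ register and that the hidden-register structure is not disturbed. Recall that in a $(k,n,q)$-register strategy each prover's POVM for a question $x = (x_1,x_2)$ with $x_1 = (W_1,\dots,W_k)$ satisfies, writing $S = \{i : W_i = \hideq\}$, both $\sum_{a_2} M^x_{a_1,a_2} = \tau^{W_1}_{u_1}\otimes\cdots\otimes\tau^{W_k}_{u_k}\otimes I_{\reg{aux}} =: \Pi^x_{a_1}$, a projective measurement in $a_1$, and the property that $M^x_{a_1,a_2}$ acts as the identity on every hidden register in $S$. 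First I would observe that $M^x_{a_1,a_2}\succeq 0$ together with $\sum_{a_2}M^x_{a_1,a_2} = \Pi^x_{a_1}$ forces $M^x_{a_1,a_2}\preceq\Pi^x_{a_1}$, so each $M^x_{a_1,a_2}$ is supported on $\mathrm{range}(\Pi^x_{a_1})$; since $\tau^{W_i}_{u_i}$ is a rank-one projector when $W_i\in\{X,Z\}$ and is the identity when $W_i\in\{\hideq,\noop\}$, this means that for each fixed $x$ the Hilbert space factorizes as $\calH = \calH^x_{XZ}\otimes\calH^x_{\noop,\reg{aux}}\otimes\calH^x_{S}$ --- the $X/Z$-registers, the $\noop$-registers together with the original aux register, and the hidden $\hideq$-registers --- and $M^x_{a_1,a_2} = P^x_{a_1}\otimes A^{x,a_1}_{a_2}\otimes I_{\calH^x_S}$, where $\{P^x_{a_1}\}_{a_1}$ is a rank-one projective measurement on $\calH^x_{XZ}$ and, for each $a_1$, $\{A^{x,a_1}_{a_2}\}_{a_2}$ is a POVM on $\calH^x_{\noop,\reg{aux}}$.

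Next I would apply the Partial Naimark theorem (\Cref{thm:partial-naimark}) for each $x$ to the POVM $P^x_{a_1}\otimes A^{x,a_1}_{a_2}$ on $\calH^x_{XZ}\otimes\calH^x_{\noop,\reg{aux}}$, with $\{P^x_{a_1}\}$ as its projective part. This produces projectors $A'^{x,a_1}_{a_2}$ acting on $\calH^x_{\noop,\reg{aux}}\otimes\calH_\theta$, where $\ket{\theta}$ is the \emph{universal} ancilla of \Cref{thm:naimark} (it depends only on a bound such as $2^{\alength{\game}}$ on the number of $a_2$-outcomes, and so is the same for every $x$), such that $P^x_{a_1}\otimes A'^{x,a_1}_{a_2}$ is a Naimark dilation of $P^x_{a_1}\otimes A^{x,a_1}_{a_2}$. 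Setting $(M')^x_{a_1,a_2} := P^x_{a_1}\otimes A'^{x,a_1}_{a_2}\otimes I_{\calH^x_S}$, this is a projective measurement (a product of the projective measurement $\{P^x_{a_1}\}$ in $a_1$, the projective measurement $\{A'^{x,a_1}_{a_2}\}$ in $a_2$, and an identity), its $a_1$-marginal is again $\tau^{W_1}_{u_1}\otimes\cdots\otimes\tau^{W_k}_{u_k}\otimes I_{\reg{aux}}\otimes I_{\calH_\theta}$, and it still acts trivially on the hidden registers $S$. Performing this for both Alice's and Bob's measurements, with Alice adjoining a copy $\ket{\theta}_A$ and Bob a copy $\ket{\theta}_B$ of the same universal ancilla, I obtain $\calS' = (\psi', M')$ on $\ket{\psi'} = \ket{r_1}\otimes\cdots\otimes\ket{r_k}\otimes\ket{\mathrm{aux}'}$ with $\ket{\mathrm{aux}'} = \ket{\mathrm{aux}}\otimes\ket{\theta}_A\otimes\ket{\theta}_B$. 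Since $\ket{\mathrm{aux}}$ is symmetric and the Alice--Bob swap interchanges the two copies of $\ket{\theta}$, the state $\ket{\mathrm{aux}'}$ is again symmetric, so $\calS'$ is a bona fide $(k,n,q)$-register strategy, now with all measurements projective.

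It then remains to check that $\calS'$ induces the same bipartite correlation as $\calS$, which is the standard Naimark computation (\Cref{eq:naimark}, iterated over the two provers exactly as in \Cref{cor:bipartite-naimark}): because $\ket{\theta}_A\otimes\ket{\theta}_B$ is a product across the Alice--Bob cut, tracing it out returns the original post-measurement states, and hence $\bra{\psi'}(M')^x_a\otimes(M')^y_b\ket{\psi'} = \bra{\psi}M^x_a\otimes M^y_b\ket{\psi}$ for all $x,y,a,b$. I expect the main obstacle to be the bookkeeping in the first two paragraphs rather than any single hard step: a blunt application of \Cref{thm:naimark} to the full POVM would make it projective but would destroy both the Pauli form of the $a_1$-block (so the output would no longer be a register strategy at all) and the identity action on the hidden registers, whereas Partial Naimark --- invoked with the Pauli outcomes $a_1$ as the protected projective part and the hidden registers as an inert tensor factor --- preserves exactly these two features while using a single fixed symmetric ancilla across all questions.
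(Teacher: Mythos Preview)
Your proposal is correct and follows essentially the same approach as the paper. The paper also decomposes each measurement into a projective Pauli part tensored with an arbitrary POVM on the remaining (non-hidden) registers, applies Naimark with the universal ancilla to that POVM (you phrase this as Partial Naimark applied to the product, which is equivalent), adjoins separate copies $\ket{\rmaux_A},\ket{\rmaux_B}$ of the ancilla to preserve symmetry, and concludes via \Cref{cor:bipartite-naimark}; your write-up is, if anything, slightly more careful about why the tensor decomposition $P^x_{a_1}\otimes A^{x,a_1}_{a_2}\otimes I_{\calH^x_S}$ holds.
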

\begin{proof}
  Start with the strategy $\calS$, and let the measurements be denoted
  $M^{x_1, x_2}_{a_1, a_2}$. From the definition of register
  strategies, we know that for every set of questions $x_1, x_2$, the
  corresponding measurement can be written as a product
  \[ M^{x_1, x_2}_{a_1,a_2} = (\tau^{W_{x_1}}_{a_1})_{S} \ot (A^{x_1, x_2, a_1}_{a_2})_{\ol{S}}, \]
  where $S$ is the set of registers which receive a Pauli basis query
  in the set $X, Z, H$, $\ol{S}$ is its complement, and the operators
  $\{A^{x_1, x_2, a_1}_{a_2}\}$  form valid POVMs with outcomes $a_2$
  for every choice of $x_1, x_2, a_1$.
  We will apply Naimark's theorem~\Cref{thm:naimark} using the
  universal auxiliary state $\ket{\rmaux}$ to the $A$ operator to produce projectors
  $A'^{x_1, x_2, a_1}_{a_2}$. Using these, we define a projective measurement
  \[ M'^{x_1, x_2}_{a_1, a_2} = (\tau^{W_{x_1}}_{a_1})_{S} \ot
    (A'^{x_1, x_2, a_1}_{a_2})_{\ol{S}}. \]
  It is not hard to see that $M'$ and $\ket{\rmaux}$ form a valid
  Naimark dilation of $M$.
  Let $\calS'$ be the strategy $\calS$ with the shared state
  $\ket{\psi}$ replaced by $\ket{\psi} \ot \ket{\rmaux_A} \ot
  \ket{\rmaux_B}$ and the measurements $M$ replaced by $M'$. By
  construction, $\calS'$ is a projective strategy. Further,
  from~\Cref{cor:bipartite-naimark}, it follows that the bipartite
  correlations produced by the strategies $\calS'$ and $\calS$ are the same.
\end{proof}
\subsection{Results}\label{sec:register-results}

The key elements of our compiler are two new nonlocal games
called the Pauli basis test and the data hiding game.
The Pauli basis test ensures that the provers share an EPR state and honestly answer Pauli basis queries to this state.
The data hiding game allows us to ``hide" this state from the provers, ensuring that they do not use this register unless we ask them to.

Our compiler operates a register at a time and involves two subroutines, $\calC_{k \rightarrow \mathrm{semi}}$ and $\calC_{\mathrm{semi}\rightarrow k-1}$.
Given a $k$-register game, $\calC_{k\rightarrow \mathrm{semi}}$ produces a $k$-semiregister game.
To do so, it removes the guarantee that the provers data hide the $k$-th register and replaces it by playing the data hiding game on this register.
Thus, although the provers are no longer forced to hide the $k$-th register, they will have to do so anyway if they want to pass the data hiding game.
Similarly, given a $k$-semiregister game, $\calC_{\mathrm{semi}\rightarrow k-1}$ produces a $(k-1)$-register game.
To do so, it removes the guarantee that the provers have a $k$-th EPR register and replaces it by playing the Pauli basis test.
Thus, by alternating these two subroutines, we can compile a $k$-register game into a $0$-register game, i.e.\ a general game.

Before giving the properties of the Pauli basis compiler, we will need two definitions.

\begin{definition}
Given a string $x = (x_1, \ldots, x_k)$ and an integer $0 \leq \ell \leq k$, write $\shorten{x}{\ell} := (x_1, \ldots, x_\ell)$.
We extend this to register parameters $\tau = (k, n, q)$ by setting 
$\shorten{\tau}{\ell} := (\ell, \shorten{n}{\ell}, \shorten{q}{\ell})$.
Thus, $\shorten{\tau}{\ell}$ is the register parameters for the first~$\ell$ registers of~$\tau$.
\end{definition}

\begin{definition}
Let~$n$ and~$q$ be integers and~$\eta$ be a real number. We say they \emph{satisfy the Pauli basis condition} if
\begin{equation*}
q = 2^t,\qquad
\frac{1}{\mathrm{\poly}(n)} \leq \eta \leq \frac{1}{2},
\qquad \frac{64 \log(n)^2}{\eta^2} \leq q \leq \poly(n).
\end{equation*}
\end{definition}

The following theorem describes the Pauli basis compiler.

\begin{theorem}\label{theorem:semi-to-k-1-layer}
Let $\register = (k, n, q)$, and let $n_k$, $q_k$, and $\eta$ satisfy the Pauli basis condition.
Suppose $\game_{\mathrm{semi}}$ is a $\register$-semiregister game,
and consider the $\shorten{\register}{k-1}$-register game $\game_{k-1} = \calC_{\mathrm{semi} \rightarrow (k-1)}(\game_{\mathrm{semi}})$.
\begin{itemize}
\item[$\circ$] \textbf{Completeness:} Suppose there is a value-$1$ $\register$-semiregister strategy for $\game_{\mathrm{semi}}$ which is also a real commuting EPR strategy.
	Then there is a value-$1$ $\shorten{\register}{k-1}$-register strategy for $\game_{k-1}$ which is also a real commuting EPR strategy.
\item[$\circ$] \textbf{Soundness:} If $\valreg{\shorten{\register}{k-1}}{\game_{k-1}}\geq 1-\eps$ then $\valsemi{\register}{\game_{\mathrm{semi}}} \geq 1 -\delta(\eps)$, where $\delta(\eps) = \mathrm{poly}(\eps, \eta)$.
\end{itemize}
Furthermore,
\begin{align*}
\qtime{\game_{k-1}} &= \qtime{\game_{\mathrm{semi}}} + O(\log(n_k)), \\
\qlength{\game_{k-1}} &= \qlength{\game_{\mathrm{semi}}} + O(\log(n_k)),  \\
\atime{\game_{k-1}} &= \atime{\game_{\mathrm{semi}}} + \poly(n_k),\\
\alength{\game_{k-1}} &= \alength{\game_{\mathrm{semi}}} + O(n_k \cdot \log \log(n_k)).
\end{align*}
\end{theorem}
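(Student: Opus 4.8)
The plan is to take $\game_{k-1}=\calC_{\mathrm{semi}\rightarrow(k-1)}(\game_{\mathrm{semi}})$ to be a public‑coin mixture of three subgames, played with fixed constant probabilities: (a) \emph{simulate $\game_{\mathrm{semi}}$}, in which the verifier runs $\game_{\mathrm{semi}}$ verbatim except that the $k$‑th Pauli query $W_k$ (for which $\game_{k-1}$, as a $\shorten{\register}{k-1}$‑register game, has no promised register) is moved into the second question block and its outcome $u_k\in\F_{q_k}^{n_k}$ into the second answer block; (b) \emph{Pauli basis test}, in which the verifier plays $\paulistrat{n_k}{q_k}$, treating its questions and answers as acting on a $k$‑th register; and (c) \emph{consistency test}, in which one prover gets a simulate‑$\game_{\mathrm{semi}}$ question with $W_k\in\{X,Z\}$ and the other the $\paulistrat{n_k}{q_k}$ question certifying the honest $\tau^{W_k}$ measurement, with the verifier accepting iff their register‑$k$ reports agree. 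After padding the questions and answers so that $\game_{k-1}$ is a genuine $\shorten{\register}{k-1}$‑register game, the stated complexity bounds follow by inspection: subgames (b) and (c) add only the $O(\log n_k)$ question‑length/question‑time and $O(n_k\log\log n_k)$ answer‑length / $\poly(n_k)$ answer‑time overhead of $\paulistrat{n_k}{q_k}$, and a constant mixture with $\game_{\mathrm{semi}}$ does not change these asymptotics.

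\textbf{Completeness.} Given a value‑$1$ $\register$‑semiregister strategy $\calS=(\ket\psi,M)$ for $\game_{\mathrm{semi}}$ that is a real commuting EPR strategy, with $\ket\psi=\ket{r_1}\otimes\cdots\otimes\ket{r_k}\otimes\ket{\rmaux}$, I would build a $\shorten{\register}{k-1}$‑register strategy on $\ket{r_1}\otimes\cdots\otimes\ket{r_{k-1}}\otimes\ket{\rmaux'}$ with $\ket{\rmaux'}:=\ket{r_k}\otimes\ket{\rmaux}$ that plays $M$ on subgame (a) (all register‑$k$ and non‑register measurements now acting inside $\ket{\rmaux'}$), plays the honest completeness strategy of $\paulistrat{n_k}{q_k}$ on $\ket{r_k}$ on subgame (b), and is automatically correct on subgame (c) since its register‑$k$ reports in (a) and (b) both equal the honest $\tau^{W_k}$ outcome. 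One then checks: validity as a $\shorten{\register}{k-1}$‑register strategy (the only Pauli queries are honest measurements on $r_1,\dots,r_{k-1}$, everything else acts on $\ket{\rmaux'}$); realness (the base strategy is real, and over a characteristic‑$2$ field all $X$‑ and $Z$‑type Pauli eigenbases have real entries, cf.\ \Cref{fact:the-ol-pauli-swaperoonie}, so the $\paulistrat{n_k}{q_k}$ completeness strategy is real); and commutation on the support of $\game_{k-1}$ (inherited from $\calS$ on subgame‑(a) pairs, from the commuting EPR completeness strategy of $\paulistrat{n_k}{q_k}$ on subgame‑(b) pairs, and because on subgame‑(c) pairs the two measurements are the honest $\tau^{W_k}$ and the $\paulistrat{n_k}{q_k}$ measurement certifying it, which act compatibly). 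This strategy has value $1$.

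\textbf{Soundness.} Suppose $\valreg{\shorten{\register}{k-1}}{\game_{k-1}}\geq 1-\eps$ and, by \Cref{lem:proj-suffices}, fix a projective $\shorten{\register}{k-1}$‑register strategy $\overline\calS=(\ket{\overline\psi},\overline M)$ of value $\geq 1-\eps$, with $\ket{\overline\psi}=\ket{r_1}\otimes\cdots\otimes\ket{r_{k-1}}\otimes\ket{\overline{\rmaux}}$. Since each subgame is played with constant probability, $\overline\calS$ passes each of (a), (b), (c) with probability $1-O(\eps)$. Applying the self‑testing guarantee of $\paulistrat{n_k}{q_k}$ (\Cref{sec:self-test-pauli}) to subgame (b), with robustness controlled by the Pauli basis condition on $n_k,q_k,\eta$, I obtain a local isometry $\phi=\phi_{\mathrm{loc}}\otimes\phi_{\mathrm{loc}}$ and a state $\ket{\rmaux}$ with $\Vert\phi\ket{\overline\psi}-\ket{\epr_{q_k}^{n_k}}\ket{\rmaux}\Vert^2\leq\delta_1(\eps)$, $\delta_1(\eps)=\poly(\eps,\eta)$, whose $\phi$‑rotated register‑$k$ measurements are $\approx_{\delta_1}$‑close to $\tau^{W_k}_{u_k}\otimes I_{\rmaux}$ on $\ket{\epr_{q_k}^{n_k}}\ket{\rmaux}$. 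Passing of subgame (c) transports this control to the register‑$k$ measurement $\overline\calS$ uses inside subgame (a), using the $\consistency$/$\approx$ calculus of \Cref{fact:add-a-proj}, \Cref{fact:triangle-like} and \Cref{fact:almost-agreement}; and the $\shorten{\register}{k-1}$‑register structure of $\overline\calS$ together with further consistency arguments shows $\ket{\rmaux}$ is $\poly(\eps,\eta)$‑close to some $\ket{r_1}\otimes\cdots\otimes\ket{r_{k-1}}\otimes\ket{\rmaux'}$ with $\ket{\rmaux'}$ symmetric, the rotated Pauli queries on registers $1,\dots,k-1$ remaining honest. Defining the $\register$‑semiregister strategy $\calS'$ on $\ket{r_1}\otimes\cdots\otimes\ket{r_k}\otimes\ket{\rmaux'}$ with honest Pauli basis measurements on register $k$ and the $\phi$‑rotated $\overline M$ (rounded to projective by \Cref{thm:naimark}, which preserves the $\consistency$‑statements above) elsewhere, \Cref{fact:approx-delta-game-value} gives that $\calS'$ and the restriction of $\overline\calS$ to subgame (a) have $\game_{\mathrm{semi}}$‑values differing by $\poly(\eps,\eta)$, so $\valsemi{\register}{\game_{\mathrm{semi}}}\geq 1-\delta(\eps)$ with $\delta(\eps)=\poly(\eps,\eta)$.

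\textbf{The main obstacle} is the next‑to‑last step of soundness: the isometry $\phi$ from the Pauli basis test acts on each prover's entire Hilbert space and has no a priori reason to leave the already‑promised EPR registers $r_1,\dots,r_{k-1}$ intact, so recovering the full $\register$‑register tensor structure after rotation by $\phi$---and the honesty of the rotated Pauli measurements on those registers---requires delicately combining the built‑in consistency tests with the state‑dependent‑distance calculus, and is where I expect the bulk of the technical work to lie.
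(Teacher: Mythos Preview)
Your high-level architecture matches the paper's, but the proposal has a genuine gap exactly where you flag the ``main obstacle,'' and the paper's construction resolves it by a trick you have not used.

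The key point you are missing is that the Pauli basis subtest must be played with the first $k-1$ registers \emph{hidden}: the questions in subgame (b) are of the form $(\hideq^{k-1},\bx)$ where $\bx$ is a Pauli basis test question. Because a $\shorten{\register}{k-1}$-register strategy is required to act as identity on any register whose query is $\hideq$, the measurement used on these questions factors as $I_{\reg{1,\ldots,k-1}}\otimes (A^x_{a})_{\reg{aux}}$. Thus the Pauli basis self-test is being applied to the pair $(\ket{\mathrm{aux}_{\mathrm{reg}}},A)$ \emph{alone}, and the local isometry $\phi_{\mathrm{local}}$ it produces acts only on the auxiliary register. Conjugating the full strategy by $I_{\reg{1,\ldots,k-1}}\otimes\phi_{\mathrm{local}}$ therefore leaves the promised EPR registers $r_1,\ldots,r_{k-1}$ and the honest Pauli measurements on them completely untouched, and the ``obstacle'' evaporates. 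Your proposal, which does not specify what happens to registers $1,\ldots,k-1$ during subgame (b) and then worries that $\phi$ ``acts on each prover's entire Hilbert space,'' is precisely missing this mechanism; the consistency-calculus argument you sketch for recovering the tensor structure after an arbitrary global isometry would not work in general.

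There is a second, smaller omission: the paper's compiler has \emph{four} subgames, not three. In addition to (a) play $\game_{\mathrm{semi}}$, (b) the Pauli basis test on $(\hideq^{k-1},\cdot)$, and your (c) (which the paper calls the \emph{cross-check}: one prover gets a $\game_{\mathrm{semi}}$ question, the other gets $(\hideq^{k-1},W_k)$, and the $u_k$ outcomes are compared), there is also a plain \emph{consistency check} in which both provers receive the same $\game_{\mathrm{semi}}$ question and their full answers are compared. This self-consistency of $M^x$ is what lets the paper invoke the sandwich lemma (\Cref{fact:sandwich}) to build the $\register$-semiregister strategy $\Lambda^{x}_{a}=(I\otimes\tau^{W_k}_{u_k}\otimes I)\cdot M^x_{(u_1,\ldots,u_{k-1}),a_2}\cdot(I\otimes\tau^{W_k}_{u_k}\otimes I)$ and show it is close to $M^x_a$; without it you cannot move the $\tau^{W_k}$ projectors to the same side of the state.
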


The following theorem describes the data hiding compiler.

\begin{theorem}\label{theorem:k-to-semi-layer}
Suppose $\game_k$ is a $(k, n, q)$-register game,
and consider the $(k, n, q)$-semiregister game $\game_{\mathrm{semi}} = \calC_{k \rightarrow \mathrm{semi}}(\game_k)$.
\begin{itemize}
\item[$\circ$] \textbf{Completeness:} 
	Suppose there is a value-$1$ $(k,n,q)$-register strategy for $\game_{k}$ which is also a real commuting EPR strategy.
	Then there is a value-$1$ $(k,n,q)$-semiregister strategy for $\game_{\mathrm{semi}}$ which is also a real commuting EPR strategy.
\item[$\circ$] \textbf{Soundness:} If $\valsemi{k,n,q}{\game_{\mathrm{semi}}} \geq 1-\eps$ then $\valreg{k,n,q}{\game_{k}} \geq 1 -\delta(\eps)$, where $\delta(\eps) = \mathrm{poly}(\eps)$.
\end{itemize}
Furthermore,
\begin{equation*}
\qtime{\game_{\mathrm{semi}}} = O(\qtime{\game_k}), \quad
\atime{\game_{\mathrm{semi}}} = O(\atime{\game_k}),
\end{equation*}
\begin{equation*}
\qlength{\game_{\mathrm{semi}}} = O(\qlength{\game_k}), \quad
\alength{\game_{\mathrm{semi}}} = O(\alength{\game_k}).
\end{equation*}
\end{theorem}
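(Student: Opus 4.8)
The plan is to realize $\game_{\mathrm{semi}} = \calC_{k \rightarrow \mathrm{semi}}(\game_k)$ as a constant-probability mixture of branches: (i) the verifier of $\game_k$ run verbatim, except that a question $W_k = \hideq$ on the last register is reinterpreted as $\noop$ --- which is exactly how a $(k,n,q)$-semiregister strategy already treats it; (ii) the data-hiding game $\game_{\mathrm{hide}}$ of \Cref{thm:data-hiding-game} played on the $k$-th register, with the EPR register $r_k = \ket{\epr_{q_k}^{n_k}}$ cast as the $\ket{\epr_q^n}$ of $\game_{\mathrm{hide}}$ and the registers $r_1,\dots,r_{k-1}$ together with $\ket{\rmaux}$ cast as its auxiliary state; and, possibly, (iii) a consistency test aligning Alice's and Bob's register-$k$ strategies. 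The complexity bounds are then read off directly from the construction of $\game_{\mathrm{hide}}$. The role of branch (ii) is that, although a semiregister strategy is free to measure the $k$-th register when $W_k = \hideq$, passing $\game_{\mathrm{hide}}$ forces it to act approximately trivially there --- precisely the extra guarantee that separates a register strategy from a semiregister one.

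For completeness, I would take a value-$1$ real commuting EPR $(k,n,q)$-register strategy $\calS$ for $\game_k$ and regard it as a semiregister strategy (every register strategy is one, since its constraint on the $k$-th register is only stronger, cf.\ \Cref{def:register}). On branch (i) it still wins with probability $1$, since a register strategy treats $W_k = \hideq$ as a hide and responds exactly as it does inside $\game_k$; on branch (ii) it wins with probability $1$ by the completeness half of \Cref{thm:data-hiding-game}, because $\calS$ already acts trivially on $r_k$ when commanded to, and the honest $\game_{\mathrm{hide}}$ strategy (Pauli basis measurements on $r_k$, trivial action elsewhere) can be taken real, projective, and pairwise commuting on the pairs $\game_{\mathrm{hide}}$ asks, compatibly with those $\game_k$ asks; branch (iii), if present, is won trivially. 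The combined strategy is still a real commuting EPR semiregister strategy.

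For soundness, I would take a $(k,n,q)$-semiregister strategy $\bar\calS = (\bar\psi, \bar M)$ with $\valstrat{\game_{\mathrm{semi}}}{\bar\calS} \geq 1-\eps$, so that $\bar\calS$ wins each branch with probability $\geq 1 - O(\eps)$. The crucial observation is that since $\bar\calS$ is a semiregister strategy its shared state already contains $r_k = \ket{\epr_{q_k}^{n_k}}$ as a tensor factor and its register-$k$ measurements for $W_k \in \{X,Z\}$ are already the honest $\tau^W$; hence the soundness half of \Cref{thm:data-hiding-game} applies with the identity local isometry and produces, for every question with $W_k = \hideq$, a measurement $\tilde M^x$ of the form $I_{r_k} \ot (\cdot)$ with $\bar M^x_a \approx_{\delta(\eps)} \tilde M^x_a$. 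I would then let $\tilde\calS$ be $\bar\calS$ with $\bar M^x$ replaced by $\tilde M^x$ on exactly those questions and all other measurements (which already have the semiregister, hence register, form) left untouched; $\tilde\calS$ is then a genuine $(k,n,q)$-register strategy on the same state. Assuming without loss of generality (\Cref{lem:proj-suffices}) that $\bar\calS$ is projective, one has $\bar M \approx_{\delta(\eps)} \tilde M$ over $\game_k$'s question distribution, and standard state-dependent-distance bookkeeping (\Cref{fact:approx-delta-game-value}) shows $\valstrat{\game_k}{\tilde\calS}$ is within $O(\delta(\eps)^{1/2})$ of the value of $\bar\calS$ on branch (i), which is $\geq 1 - O(\eps)$. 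Since $\tilde\calS$, being a register strategy, responds to branch (i) exactly as inside $\game_k$, this yields $\valreg{k,n,q}{\game_k} \geq \valstrat{\game_k}{\tilde\calS} \geq 1 - \delta'(\eps)$ for a suitable $\delta'(\eps) = \poly(\eps)$.

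I expect the main obstacle to be the rounding step inside soundness --- turning ``$\bar M^x_a$ acts $\delta$-approximately trivially on $r_k$'' into an exact register strategy. One has to (a) exhibit $\tilde M^x_a$ and verify it is a legitimate POVM of the register form $M_{\overline{S}} \ot I_S$ with $k \in S$; (b) bound the value lost by the replacement, which is where the facts that all our ``$\approx$'' statements descend from ``$\consistency$'' statements (so they survive Naimark) and \Cref{fact:approx-delta-game-value} come in; and (c) confirm that \Cref{thm:data-hiding-game} can be invoked with no residual local isometry --- which is exactly why it matters that we begin from a \emph{semiregister} strategy, so that $r_k$ is already an honest EPR register and the rigidity conclusion of the data-hiding game collapses to a statement purely about how $\bar M^x$ is supported on $r_k$.
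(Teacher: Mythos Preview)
Your proposal is correct and matches the paper's approach closely. A few minor calibrations: the compiler in the paper has exactly two branches (data hiding on a random question from $\game_k$'s distribution, and playing $\game_k$), with no separate consistency branch; the rounded measurement is given explicitly as $\tilde M^x_a = \hide{k}{M^x_a} = \tfrac{1}{\tr(I_k)}\,I_k \otimes \tr_k(M^x_a)$ rather than abstractly; the soundness of \Cref{thm:data-hiding-game} is stated per fixed question $x$, so the paper inserts a Markov step over the question distribution before averaging; and \Cref{thm:data-hiding-game} involves no local isometry at all (it is not a self-test), so your worry in (c) is moot.
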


Combining \Cref{theorem:semi-to-k-1-layer,theorem:k-to-semi-layer} gives us the main result of \Cref{part:stack},
a compiler~$\calC$ which compiles $k$-register games into general games.

\begin{theorem}\label{thm:registers}
Let $\game_k$ be a $(k, n, q)$-register game. 
Let $\eta = (\eta_1, \ldots, \eta_k)$, and suppose $n_i$, $q_i$, and $\eta_i$ pass the Pauli basis condition for all $i \in [k]$.
Write
\begin{equation*}
\game = \calC(\game_k) :=
\calC_{\mathrm{semi}\rightarrow 0}(\calC_{\mathrm{1} \rightarrow \mathrm{semi}}(~\cdots~
	\calC_{\mathrm{semi}\rightarrow k-1}(\calC_{\mathrm{k \rightarrow \mathrm{semi}}}(\game_k)))).
\end{equation*}
\begin{itemize}
\item[$\circ$] \textbf{Completeness:} Suppose there is a value-$1$ $(k,n,q)$-register strategy for $\game_{k}$ which is also a real commuting EPR strategy.
	Then there is a real commuting EPR strategy for~$\game$ with value~$1$.
\item[$\circ$] \textbf{Soundness:} If $\val{\game}\geq 1-\eps$ then $\valreg{(k,n,q)}{\game_{k}} \geq 1 -\delta(\eps)$, where $\delta(\eps) = \mathrm{poly}(\eps, \eta_1, \ldots, \eta_k)$.
\end{itemize}
Furthermore,
\begin{align*}
\qtime{\game} &= \qtime{\game_k} + O(\log(n_1))+ \cdots + O(\log(n_k)), \\
\qlength{\game} &= \qlength{\game_k} + O(\log(n_1)) + \cdots + O(\log(n_k)),  \\
\atime{\game} &= \atime{\game_k} + \poly(n_1) + \cdots + \poly(n_k),\\
\alength{\game} &= \alength{\game_k} + O(n_1 \cdot \log \log(n_1)) + \cdots O(n_k \cdot \log \log(n_k)).
\end{align*}
\end{theorem}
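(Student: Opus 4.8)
The plan is to prove \Cref{thm:registers} by induction on the number of registers, simply chaining the two sub-compilers. Recursively, starting from the given $\register$-register game $\game_k$, set $\register_j := \shorten{\register}{j} = (j, \shorten{n}{j}, \shorten{q}{j})$ for $0 \le j \le k$ and define $\game_{\mathrm{semi},j} := \calC_{j\to\mathrm{semi}}(\game_j)$ and $\game_{j-1} := \calC_{\mathrm{semi}\to(j-1)}(\game_{\mathrm{semi},j})$, so that $\game_0 = \game$ and the construction $\calC$ traces out the chain
\[ \game_k \to \game_{\mathrm{semi},k} \to \game_{k-1} \to \game_{\mathrm{semi},k-1} \to \cdots \to \game_1 \to \game_{\mathrm{semi},1} \to \game_0 = \game. \]
For each step from $\game_{\mathrm{semi},j}$ to $\game_{j-1}$ we will invoke \Cref{theorem:semi-to-k-1-layer} with its ``$\register$'' set to $\register_j$, which is legitimate because by hypothesis $n_j$, $q_j$, $\eta_j$ satisfy the Pauli basis condition and these are precisely the parameters of the last register of $\register_j$; for each step from $\game_j$ to $\game_{\mathrm{semi},j}$ we invoke \Cref{theorem:k-to-semi-layer} with its ``$(k,n,q)$'' set to $\register_j$, which carries no side condition.

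For completeness one propagates a value-$1$ real commuting EPR strategy forward along the chain. From the assumed value-$1$ real commuting EPR $\register$-register strategy for $\game_k$, the completeness part of \Cref{theorem:k-to-semi-layer} produces a value-$1$ real commuting EPR $\register_k$-semiregister strategy for $\game_{\mathrm{semi},k}$; the completeness part of \Cref{theorem:semi-to-k-1-layer} then produces a value-$1$ real commuting EPR $\register_{k-1}$-register strategy for $\game_{k-1}$; and so on down the chain. After these $2k$ applications one reaches a value-$1$ real commuting EPR $\register_0$-register strategy for $\game_0 = \game$, and a $0$-register strategy is just an ordinary strategy, so this is a value-$1$ real commuting EPR strategy for $\game$. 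The only thing to check is that the output of each sub-compiler's completeness statement is syntactically the input hypothesis of the next one, which holds by design (both sub-compilers take and return ``real commuting EPR'' strategies of the appropriate register type).

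For soundness one propagates a value lower bound backward along the chain. Suppose $\val{\game} \ge 1-\eps$; identifying $0$-register strategies with ordinary strategies, $\valreg{\register_0}{\game_0} \ge 1-\eps$. Set $\eps_0 := \eps$ and suppose inductively $\valreg{\register_{j-1}}{\game_{j-1}} \ge 1 - \eps_{j-1}$ for some $j \in [k]$. The soundness part of \Cref{theorem:semi-to-k-1-layer} (with $\register = \register_j$, valid since $n_j, q_j, \eta_j$ satisfy the Pauli basis condition) gives $\valsemi{\register_j}{\game_{\mathrm{semi},j}} \ge 1 - \poly(\eps_{j-1}, \eta_j)$, and the soundness part of \Cref{theorem:k-to-semi-layer} then gives $\valreg{\register_j}{\game_j} \ge 1 - \eps_j$ with $\eps_j = \poly(\poly(\eps_{j-1},\eta_j)) = \poly(\eps_{j-1},\eta_j)$. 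Iterating $k$ times yields $\valreg{(k,n,q)}{\game_k} \ge 1 - \eps_k$, and unwinding the fixed number $k$ of nested polynomials gives $\eps_k = \poly(\eps, \eta_1, \dots, \eta_k)$, as claimed. The complexity bounds go the same way: each application of \Cref{theorem:semi-to-k-1-layer} at level $j$ adds $O(\log n_j)$ to $\qtime$ and $\qlength$, $\poly(n_j)$ to $\atime$, and $O(n_j \log\log n_j)$ to $\alength$, while each application of \Cref{theorem:k-to-semi-layer} multiplies each quantity by at most a constant; since $k = O(1)$ these constant factors amalgamate and the additive increments simply sum, producing the four displayed formulas.

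The substance of \Cref{thm:registers} lies entirely in the two sub-compilers \Cref{theorem:semi-to-k-1-layer} and \Cref{theorem:k-to-semi-layer} — that is, in the soundness analyses of the Pauli basis test and the data hiding game, carried out in the rest of this part — and the argument above is pure bookkeeping, so there is no real obstacle here. The only mild points of care are (i) tracking which register parameters ($\register_j$ versus $\register_{j-1}$) appear at each splice and checking that the Pauli basis condition is invoked for the correct register index at each of the $k$ levels, and (ii) the bookkeeping verifying that the $k$-fold composition of the error functions and of the complexity increments collapses to the stated $\poly(\eps, \eta_1,\dots,\eta_k)$ bound and additive complexity form when $k$ is constant.
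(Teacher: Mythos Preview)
Your proposal is correct and matches the paper's approach: the paper states this theorem immediately after \Cref{theorem:semi-to-k-1-layer} and \Cref{theorem:k-to-semi-layer} with only the sentence ``Combining \Cref{theorem:semi-to-k-1-layer,theorem:k-to-semi-layer} gives us the main result,'' leaving the inductive chaining you wrote out entirely implicit. Your explicit bookkeeping of the chain $\game_k \to \game_{\mathrm{semi},k} \to \game_{k-1} \to \cdots \to \game_0$, the forward propagation of value-$1$ real commuting EPR strategies for completeness, the backward propagation of value lower bounds for soundness, and the observation that the multiplicative $O(\cdot)$ factors from \Cref{theorem:k-to-semi-layer} are harmless when $k$ is constant, are exactly the details the paper omits.
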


\subsection{Registers for uniform games}\label{sec:uniform-registers}

In this section, we generalize the notion of registers to the case of uniform games,
in which a different set of register parameters might be used for each input.
To compile these games, we will need for the register parameters themselves to be uniformly generated.

\begin{definition}\label{def:reg-params-generator}
Let $M_{\mathrm{Params}}$ be a Turing machine which,
given an input $\mathsf{input}$, outputs $\lambda = (k, n, q)$.
Let $\game$ be a (nonuniform) game.
Then we say $M_{\mathrm{Params}}$ \emph{outputs the register parameters of $\game$}
if for every $\mathsf{input}$, $\game(\mathsf{input})$ is a $M_{\mathrm{Params}}(\mathsf{input})$-register game.
\end{definition}

Given this, our compiler for uniform games is given as follows.

\begin{corollary}\label{thm:uniform-registers}
Let $\game(\cdot)$ be a (uniform) game,
and let $M_{\mathrm{Params}}$ be a Turing machine which outputs its register parameters.
Then there exists a (uniform) game $\game_{\mathrm{Compile}}(\cdot)$ with the following properties.
Given an input $\mathsf{input}$, write $\game:=\game(\mathsf{input})$,
$\game_{\mathrm{Compile}} := \game_{\mathrm{Compile}}(\mathsf{input})$,
and $\lambda = (k, n, q) := M_{\mathrm{Params}}(\mathsf{input})$.
\begin{itemize}
\item[$\circ$] \textbf{Completeness:} Suppose there is a value-$1$ $(k,n,q)$-register strategy for $\game$ which is also a real commuting EPR strategy.
	Then there is a real commuting EPR strategy for~$\game_{\mathrm{Compile}}$ with value~$1$.
\item[$\circ$] \textbf{Soundness:}
Let $\eta = (\eta_1, \ldots, \eta_k)$, and suppose $n_i$, $q_i$, and $\eta_i$ pass the Pauli basis condition for all $i \in [k]$.
If $\val{\game_{\mathrm{Compile}}}\geq 1-\eps$ then $\valreg{\lambda}{\game} \geq 1 -\delta(\eps)$, where $\delta(\eps) = \mathrm{poly}(\eps, \eta_1, \ldots, \eta_k)$.
\end{itemize}
Furthermore,
\begin{align*}
\qtime{\game} &= \qtime{\game_k} + O(\log(n_1))+ \cdots + O(\log(n_k)) + \mathsf{time}(M_{\mathrm{Params}}(\mathsf{input})), \\
\qlength{\game} &= \qlength{\game_k} + O(\log(n_1)) + \cdots + O(\log(n_k)),  \\
\atime{\game} &= \atime{\game_k} + \poly(n_1) + \cdots + \poly(n_k)+ \mathsf{time}(M_{\mathrm{Params}}(\mathsf{input})),\\
\alength{\game} &= \alength{\game_k} + O(n_1 \cdot \log \log(n_1)) + \cdots O(n_k \cdot \log \log(n_k)).
\end{align*}
\end{corollary}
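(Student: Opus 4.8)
The plan is to obtain $\game_{\mathrm{Compile}}$ by applying the nonuniform compiler $\calC$ of \Cref{thm:registers} in a uniformly-generated way, using $M_{\mathrm{Params}}$ to supply the register parameters for each input. Concretely, I would define the verifier of $\game_{\mathrm{Compile}}(\mathsf{input})$ to first run $M_{\mathrm{Params}}(\mathsf{input})$, obtaining $\lambda = (k,n,q)$; since $M_{\mathrm{Params}}$ outputs the register parameters of $\game$, the game $\game(\mathsf{input})$ is a $(k,n,q)$-register game, so \Cref{thm:registers} applies to it with exactly these parameters. The output game $\calC(\game(\mathsf{input}))$ is then well-defined, and I set $\game_{\mathrm{Compile}}(\mathsf{input}) := \calC(\game(\mathsf{input}))$.

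The first step of the proof is to check that this $\game_{\mathrm{Compile}}(\cdot)$ is genuinely a uniform game family, i.e.\ that its question and answer Turing machines are obtained uniformly from those of $\game$ and $M_{\mathrm{Params}}$. Here I would unfold the definition of $\calC$ as the composition $\calC_{\mathrm{semi}\to 0} \circ \calC_{1\to\mathrm{semi}} \circ \cdots \circ \calC_{\mathrm{semi}\to k-1} \circ \calC_{k\to\mathrm{semi}}$, and recall that each subroutine $\calC_{k\to\mathrm{semi}}$ and $\calC_{\mathrm{semi}\to k-1}$ is specified in \Cref{part:stack} by an explicit transformation on the verifier's Turing machines, parameterized by the relevant register sizes. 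Since $k$, $n$, and $q$ are all produced by $M_{\mathrm{Params}}(\mathsf{input})$, a single Turing machine can, on input $\mathsf{input}$, first run $M_{\mathrm{Params}}$, then iterate these transformations the prescribed $2k$ times to produce the Turing machines for $\game_{\mathrm{Compile}}(\mathsf{input})$, which in turn use $\mathsf{input}$ to simulate $\game$'s verifier as a black box. The iteration count $2k$ is finite and computable from $\mathsf{input}$, so there is no issue with the number of compilation rounds depending on the instance.

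With uniformity in hand, completeness and soundness are immediate from the corresponding bullets of \Cref{thm:registers} applied to the fixed instance $\game := \game(\mathsf{input})$ with parameters $\lambda := M_{\mathrm{Params}}(\mathsf{input})$: a value-$1$ real commuting EPR register strategy for $\game$ yields one for $\game_{\mathrm{Compile}}$, and $\val{\game_{\mathrm{Compile}}} \geq 1 - \eps$ together with the Pauli basis conditions on each $(n_i, q_i, \eta_i)$ yields $\valreg{\lambda}{\game} \geq 1 - \delta(\eps)$ with $\delta(\eps) = \poly(\eps, \eta_1, \ldots, \eta_k)$. For the complexity bounds I would take the bounds of \Cref{thm:registers} verbatim and add the cost of computing the register parameters: since the compiled verifier must run $M_{\mathrm{Params}}(\mathsf{input})$ before doing anything else, both $\qtime{\game_{\mathrm{Compile}}}$ and $\atime{\game_{\mathrm{Compile}}}$ acquire an additive $\mathsf{time}(M_{\mathrm{Params}}(\mathsf{input}))$ term, while the message lengths $\qlength$ and $\alength$ are unchanged because $\lambda$ is only used internally and never transmitted. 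I do not expect a real obstacle here: all the substance lives in \Cref{thm:registers}, and the only care needed is the bookkeeping just described — confirming that composing the per-register subroutines an $\mathsf{input}$-dependent number of times still produces a legitimate uniform game family, and that the extra runtime for $M_{\mathrm{Params}}$ is charged to the two time bounds but not to the two length bounds.
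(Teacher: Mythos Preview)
Your proposal is correct and follows essentially the same approach as the paper: compute $\lambda = M_{\mathrm{Params}}(\mathsf{input})$, then simulate the compiled game $\calC(\game(\mathsf{input}))$ from \Cref{thm:registers}, inheriting completeness, soundness, and the complexity bounds with the added $\mathsf{time}(M_{\mathrm{Params}}(\mathsf{input}))$ term. The paper's proof is in fact much terser than yours---it just asserts that the compiled game ``can be efficiently simulated given the register parameters~$\lambda$''---so your explicit discussion of uniformity and the iteration of the per-register subroutines is more detailed than what the authors wrote.
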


\begin{proof}
We first compute $\lambda = M_{\mathrm{Params}}(\mathsf{input})$
in time $\mathsf{time}(M_{\mathrm{Params}}(\mathsf{input}))$.
Then it can be checked that the compiled game $\calC(\game(\mathsf{input}))$ from \Cref{thm:registers}
can be efficiently simulated given the register parameters~$\lambda$.
\end{proof}

\subsection{Organization}

The remainder of \Cref{part:stack} is organized as follows.
\begin{itemize}
\item In \Cref{sec:self-test-pauli}, we introduce the Pauli basis self-test and prove its correctness.
\item \Cref{sec:pauli-basis-compiler} implements the Pauli basis compiler.
\item In \Cref{sec:data-hiding-layer}, we introduce the data hiding game.
\item \Cref{sec:compile-sec} implements the data hiding compiler.
\item \Cref{sec:rotated-data-hiding} contains a generalization of the data hiding game which allows us to hide more general sets of Pauli observables.
		This is not needed to implement the quantum registers, but it will be needed in \Cref{part:neexp} when designing the $\neexp$ protocol.
\end{itemize}



\section{A self test for the Pauli basis}\label{sec:self-test-pauli}

In this section, we give a self test for the Pauli basis measurement.
Given~$W \in \{X, Z\}$, this test compels the prover
to measure an EPR register in the~$W$ basis
and return the outcome to the verifier.

\begin{definition}
The \emph{Pauli basis strategy with parameters $n$ and $q$} (a prime power),
denoted $\paulistrat{n}{q}$, is the partial strategy
with the state $\ket{\mathrm{EPR}_q^n}$ and measurement matrices $\tau^W_u$ for each $W \in \{X,Z\}, u \in \F_q^n$.
\end{definition}

The main result of this section is the following self-test for the case when~$q$ is a power of~$2$.

\begin{theorem}\label{thm:basis-test}
Let $\bW \sim \{X, Z\}$ uniformly at random.  Let $n$, $q$, $\eta$ satisfy the Pauli basis condition.
Then there is a self-test $\game_{\mathrm{basis}} := \game_{\mathrm{basis}}(n,q)$ for $\paulistrat{n}{q}$ over~$\bW$
with robustness $\delta(\eps) = \poly(\eps, \eta)$.
Moreover, there is a value-$1$ real commuting EPR strategy with auxiliary state $\ket{\mathrm{EPR}_2}$.
Finally,
\begin{equation*}
\qlength{\game_{\mathrm{basis}}} = O\left(\log(n)\right),
\quad
\alength{\game_{\mathrm{basis}}} = \poly(n),
\end{equation*}
\begin{equation*}
\qtime{\game_{\mathrm{basis}}} = O\left(\log(n)\right),
\quad
\atime{\game_{\mathrm{basis}}} = \poly(n).
\end{equation*}
\end{theorem}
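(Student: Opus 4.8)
The plan is to build $\game_{\mathrm{basis}}(n,q)$ out of the quantum low-degree test of~\cite{NV18a}, specialized to fields of characteristic $2$ via self-dual bases, together with a single-qudit gadget anchored on an auxiliary $\ket{\epr_2}$, and then to read off the Pauli basis structure using the finite-field identities of \Cref{sec:paulis} and the state-dependent-distance calculus developed above. The soundness of the classical low-degree test against entangled provers, \Cref{thm:anand-thomas-classical-low-degree} (and its simultaneous version \Cref{thm:simultaneous-ldt}), is used as the main black box.

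First I would fix parameters. Given $n$ and $q = 2^{t_2}$ satisfying the Pauli basis condition, pick $h = 2^{t_1}$ with $\log h = \Theta(\log q)$ (a fixed fractional power of $q$, rounded down to a power of two) and $m$ minimal with $h^m \ge n$, so that $m = \Theta(\log(n)/\log(q))$ and the degree $d := m(h-1)$ makes $dm/q^{c} = \poly(\eta)$ for the absolute constant $c$ of \Cref{thm:anand-thomas-classical-low-degree}; calibrating this error is exactly the purpose of the slack $\eta$ in the Pauli basis condition. Using the canonical low-degree encoding (\Cref{def:canonical-low-degree}) with $H = H_{t_1,t_2}$ built from a self-dual basis of $\F_q$ over $\F_2$, a Pauli outcome $u \in \F_q^n$ is identified with the degree-$d$ polynomial $g_u : \F_q^m \to \F_q$ whose values on the information positions $\pi(1),\dots,\pi(n) \in H^m$ recover $u$. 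The game then draws $\bW \sim \{X,Z\}$ and, with constant probability each, runs: (i) the surface-versus-point test $\game_{\mathrm{Surface}}(m,d,q,2)$ on the claimed $\bW$-polynomial (a plane to one player, a point on it to the other); (ii) a commutation/consistency round relating the claimed $X$- and $Z$-polynomial measurements, which uses the auxiliary $\ket{\epr_2}$ to anchor the Weyl--Heisenberg relation $X(v)Z(z) = \omega^{-\tr[vz]}Z(z)X(v)$ at the level of a single qudit and then lifts it through the self-dual basis; and (iii) $\game_{\mathrm{con}}$-style consistency rounds synchronizing the two players. By Naimark's theorem (\Cref{cor:bipartite-naimark}) I may assume all measurements are projective, taking care that every $\approx_\delta$ claim be derived from a $\consistency_\delta$ claim so that it survives the dilation.

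For completeness, the honest partial strategy $\paulistrat{n}{q}$ is extended by having the provers share $\ket{\epr_q^n} \otimes \ket{\epr_2}$, answer a $W$-query by measuring $\tau^W$ on each of the $n$ qudits and returning the requested restriction of the low-degree extension of the outcome, and play the gadget honestly; since $q = 2^t$ we have $\omega = -1$ and $\tr[uv] \in \{0,1\}$, so by \Cref{eq:pauli-eigenstates} the projectors $\tau^X_u, \tau^Z_u$ are real, and the EPR stabilizer relations \eqref{eq:epr-stab-x}--\eqref{eq:epr-stab-z} give value $1$; checking that the measurements on the question pairs in the game's support pairwise commute (using \Cref{fact:heh-heh-heh-gonna-make-anand-prove-this-so-i-can-take-the-day-off}) makes this a real commuting EPR strategy. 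For soundness, given a strategy passing with probability $1 - \eps$: round (i) with \Cref{thm:anand-thomas-classical-low-degree}, applied once for $W = X$ and once for $W = Z$, yields polynomial measurements $G^X, G^Z \in \polymeas{m}{d}{q}$ that are $\delta(\eps)$-consistent with the prover's point-value measurements, hence (by \Cref{prop:same-on-point-same-on-subspace}) with the plane measurements; reading $G^W$ at the information positions gives induced coordinate measurements $\{G^W_{[g(\pi(i)) = u_i]}\}$. Round (ii) forces these to be genuine single-qudit $X$- and $Z$-type Pauli observables obeying the Weyl--Heisenberg relations, and round (iii) ties the two players together; feeding all this into the Pauli-braiding-style state-characterization argument of the low-degree test, carried through the $\consistency_\delta/\approx_\delta$ calculus, produces a local isometry $\phi = \phi_{\mathrm{local}} \otimes \phi_{\mathrm{local}}$ and state $\ket{\mathrm{aux}}$ with $\Vert \phi\ket{\overline\psi} - \ket{\epr_q^n}\ket{\mathrm{aux}}\Vert^2 \le \delta(\eps)$ and $\phi_{\mathrm{local}}\overline M^W_u \phi_{\mathrm{local}}^\dagger \otimes I_{\reg{Bob}} \approx_{\delta(\eps)} (\tau^W_u \otimes I_{\mathrm{aux}}) \otimes I_{\reg{Bob}}$, which is \Cref{eq:cite-once-then-never-again}. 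The complexity bounds are then immediate: a plane/point/line in $\F_q^m$ plus a constant-size label has length $O(m\log q) = O(\log n)$ and is generated in $O(\log n)$ time; an answer is a degree-$d$ bivariate polynomial on a plane together with an element of $\F_q^n$, of total length $\poly(n)$, and verification --- including evaluating $g_u$ via \Cref{prop:canonical-time} --- runs in $\poly(n)$ time.

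The hard part will be the soundness extraction in rounds (ii)--(iii): turning the classical conclusion of \Cref{thm:anand-thomas-classical-low-degree}, that the prover holds \emph{some} polynomial measurement matching its point answers, into the statement that, restricted to $H^m$, this measurement is \emph{exactly} a tensor product of single-qudit Pauli $X$/$Z$ bases on $n$ honest EPR pairs. This decomposes into (a) a Schwartz--Zippel argument (cf.\ \Cref{lem:schwartz-zippel,prop:same-on-point-same-on-subspace}) showing the polynomial is determined by its values on $H^m$ so that the induced coordinate measurement is well-defined; (b) extracting the Weyl--Heisenberg anticommutation from the $\ket{\epr_2}$ gadget at the qudit level through the self-dual basis while losing only $\poly(\eps,\eta)$; and (c) the state-rounding argument itself, where the delicate point is propagating the error as $\poly(\eps,\eta)$ --- rather than an uncontrolled polynomial --- through the long chain of triangle inequalities, Cauchy--Schwarz steps, distribution swaps and averaging (\Cref{fact:triangle,fact:triangle-like,fact:swap-dists,fact:average-over-dists}), and keeping all distance statements anchored to consistency statements so they survive the Naimark dilations used to projectivize the measurements.
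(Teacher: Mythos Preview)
Your proposal has a real gap: you are using the wrong black box. The paper does not prove \Cref{thm:basis-test} from the entanglement-soundness of the \emph{classical} low-degree test (\Cref{thm:anand-thomas-classical-low-degree}); it proves it by a direct reduction to the \emph{quantum} low-degree test of~\cite{NV18a}, stated in the paper as \Cref{thm:anand-thomas-low-degree}. That theorem is already a self-test for $\mathcal{LD}(\params)$: it gives you the local isometry, the state closeness $\Vert \phi\ket{\overline\psi}-\ket{\epr_q^n}\ket{\rmaux}\Vert^2\le\delta(\eps)$, and the closeness of the point-evaluation measurements $M^{W,w}_a$ to $\tau^{W,w}_a\otimes I_{\reg{aux}}$, all for free. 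The only new ingredient the paper adds is a single cross-check round in which one player is asked $W$ and returns the full string $u\in\F_q^n$, the other is asked $(W,w)$ and returns $a\in\F_q$, and the verifier checks $g_u(w)=a$. Soundness then follows in a few lines by chaining the cross-check consistency with the already-established $M^{W,w}_a\approx\tau^{W,w}_a$ and invoking \Cref{prop:same-on-point-same-on-subspace} to pass from agreement at a random point to agreement of the full string.

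By contrast, your rounds (ii)--(iii) --- an $\ket{\epr_2}$ gadget enforcing Weyl--Heisenberg relations qudit-by-qudit through the self-dual basis, followed by a ``Pauli-braiding-style state-characterization argument'' --- are precisely the contents of~\cite{NV18a}. You are proposing to rederive the quantum low-degree test rather than invoke it, and your description of how to do so (``forces these to be genuine single-qudit $X$- and $Z$-type Pauli observables'') is far too thin to stand as a proof: that step is the hard core of a long paper and involves substantially more than a commutation gadget and triangle inequalities. Nothing in \Cref{thm:anand-thomas-classical-low-degree} tells you that $G^X$ and $G^Z$ anticommute in the right way or that the state is close to EPR; getting from a polynomial measurement to a Pauli basis measurement on an EPR state is exactly the work you should be importing, not redoing.
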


We prove this by a straightforward reduction to the quantum low-degree test of~\cite{NV18a}.

\subsection{The quantum low-degree test}

The goal of the quantum low-degree test of~\cite{NV18a}
is to force the provers to use a ``compressed" version of the Pauli basis strategy.
Given~$\bW$, they should measure their register in the $\bW$ basis, receiving $\bu \in \F_q^n$.
However, $\bu$, a length-$n$ string, might be prohibitively expensive to communicate to the verifier,
so they should instead compute the low degree encoding $g_{\bu}$
and return its evaluation at a single point $\bw \in \F_q^m$ of the verifier's choosing.
(The point of this section is to ``uncompress" their protocol.)

\begin{definition}
Fix parameters for the low-degree encoding $\params := (q = p^t, h, H, m, n, \pi)$
satisfying the ``low-degree conditions"
$h \leq q$, and $n \leq h^m$.
For any string $u \in \F_q^n$,
these parameters give a low-degree encoding $g_u:\F_q^m \rightarrow \F_q$.

The \emph{low-degree Pauli strategy with parameters~$\params$},
denoted $\mathcal{LD}(\params)$, is the partial strategy
with state $\ket{\mathrm{EPR}^n_q}$ and measurement matrices
\begin{equation*}
\tau^{W, w}_a :=
\tau^{W}_{[g_u(w) = a]} =
\sum_{u : g_u(w) = a} \tau^W_u
\end{equation*}
for each $W \in \{X, Z\}, w \in \F_q^m, a \in \F_q$.
Equivalently, this is the strategy where we perform the Pauli $W$-basis measurement
and output the low-degree encoding of the outcome~$u$ evaluated at the point~$w$, i.e.\ the value $g_u(w)$.
\end{definition}

The main result of~\cite{NV18a} is the following.

\begin{theorem}[{\cite[Theorem 3.2]{NV18a}}]\label{thm:anand-thomas-low-degree}
Fix low-degree parameters~$\params$ with $p = 2$ (so that $q = 2^t$) and $m \geq 2$, 
and let $\calD$ be the uniform distribution over $(W, w)$ with $W \in \{X, Z\}, w \in \F_q^m$.
Then there is a self-test $\game_{\mathrm{Qlowdeg}} := \game_{\mathrm{Qlowdeg}}(\params)$
for $\mathcal{LD}(\params)$ over~$\calD$
with robustness
$
\delta(\epsilon) = \mathrm{poly}(\epsilon,md/q^c),
$
with $c > 0$.
Moreover, there is a value-$1$ real commuting EPR strategy with auxiliary state $\ket{\mathrm{EPR}_2}$.
Finally,
\begin{equation*}
\qlength{\game_{\mathrm{Qlowdeg}}} = O(m \log q),
\quad
\alength{\game_{\mathrm{Qlowdeg}}} = O(d^2 \log (q)),
\end{equation*}
\begin{equation*}
\qtime{\game_{\mathrm{Qlowdeg}}} = O(m \log q),
\quad
\atime{\game_{\mathrm{Qlowdeg}}} = \poly(m, d, \log q).
\end{equation*}
\end{theorem}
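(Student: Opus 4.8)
The statement is, up to a change of conventions, \cite[Theorem 3.2]{NV18a}, so the plan is not to reprove the quantum low-degree test but to record how its conclusion is repackaged into the self-testing framework of this paper. Three things must be checked: (i) the communication and running-time bounds, which are read directly off the construction of \cite{NV18a}; (ii) that the honest strategy of \cite{NV18a} is a value-$1$ real commuting EPR strategy with auxiliary state $\ket{\mathrm{EPR}_2}$; and (iii) that the soundness guarantee of \cite[Theorem 3.2]{NV18a} implies both clauses of our definition of self-testing, in particular the $\approx_{\delta(\eps)}$ statement \eqref{eq:cite-once-then-never-again}. Item (i) is immediate: the verifier's questions are a constant number of tuples $(W,w)$ together with subspace descriptions, each of bit length $O(m\log q)$; the provers' answers are degree-$d$ restrictions described by $O(d^2)$ field elements, i.e.\ $O(d^2\log q)$ bits; and both the question and answer Turing machines run in time polynomial in these lengths, giving the stated $\qtime{\game_{\mathrm{Qlowdeg}}}$, $\atime{\game_{\mathrm{Qlowdeg}}}$, $\qlength{\game_{\mathrm{Qlowdeg}}}$, $\alength{\game_{\mathrm{Qlowdeg}}}$.

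For (ii), the honest players of $\game_{\mathrm{Qlowdeg}}$ share $\ket{\mathrm{EPR}_q^n}\otimes\ket{\mathrm{EPR}_2}$, the extra qubit EPR pair being used only in the parity/consistency subtests of \cite{NV18a}; on a question carrying Pauli label $W\in\{X,Z\}$ they measure the corresponding Pauli basis on (a subset of) the $q$-ary register, possibly coarse-grained through the low-degree encoding $g_u(w)$ or a subspace restriction, together with an elementary measurement on the qubit. Since $q=2^t$, the Pauli $X$ and $Z$ eigenbases have real coefficients (\Cref{eq:pauli-eigenstates}; cf.\ \Cref{fact:the-ol-pauli-swaperoonie}), so all of these measurements are real and projective, and the shared state is a tensor product of EPR states; hence the strategy is a real EPR strategy. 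For the commuting property one inspects the question distribution of $\game_{\mathrm{Qlowdeg}}$: any pair of questions that is ever sent simultaneously (to the two provers, or as two related point/subspace queries in a single-prover consistency check) carries a common Pauli label $W$, and the associated honest operators are coarse-grainings of the pairwise-orthogonal family $\{\tau^W_u\}_u$ on the relevant registers tensored with operators on the auxiliary qubit — all diagonal in a common basis, hence pairwise commuting; operators on disjoint tensor factors commute trivially. So the honest strategy lies in $\comstrat{\game_{\mathrm{Qlowdeg}}}$. This is the one point not phrased in these terms in \cite{NV18a}, but it follows purely by inspection of the test.

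For (iii), \cite[Theorem 3.2]{NV18a} supplies a local isometry after which the provers' state is $\delta(\eps)$-close to $\ket{\mathrm{EPR}_q^n}$ tensored with an auxiliary state, and their measurements are $\delta(\eps)$-close, in the consistency ($\consistency_\delta$) sense, to the low-degree Pauli measurements $\tau^{W,w}_a$ of $\mathcal{LD}(\params)$. The isometry can be taken of the symmetric form $\phi_{\mathrm{local}}\otimes\phi_{\mathrm{local}}$ by the usual symmetrization of a symmetric game (cf.\ the remark after \Cref{cor:bipartite-naimark}), matching our definition's requirement on $\phi$; the state-closeness clause is literally the first displayed inequality of our self-testing definition; and applying \Cref{fact:agreement} upgrades the $\consistency_{\delta}$ relation between the pulled-back measurements $M^x_a$ and $G^x_a=\tau^{W,w}_a$ to the $\approx_{\delta(\eps)}$ relation on $\ket{\psi}\ket{\mathrm{aux}}$ and distribution $\bx\sim\calD$, which is exactly \eqref{eq:cite-once-then-never-again} with $G^x_a\otimes I_{\mathrm{aux}}$ as the target measurement. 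The robustness $\delta(\eps)=\poly(\eps,md/q^c)$ is inherited verbatim; as in \cite{NV18a} one absorbs the hypothesis $q\ge(md/\eps)^{c'}$ into the robustness function by inflating the nominal error when $q$ is small, so no separate field-size condition appears in our statement.

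The main obstacle is item (iii): the self-testing definitions of \cite[Definition~$2.5$]{NV18a} and of this paper genuinely differ — in the admissible isometry structure, in the use of the $\approx_\delta$ distance versus the overlap/consistency formulation, and in the explicit $I_{\mathrm{aux}}$ factor — so the real work is the bookkeeping that converts one formulation into the other, and this is precisely where those discrepancies get spelled out in \Cref{sec:self-test-pauli}; no new quantum argument beyond \cite{NV18a} is needed.
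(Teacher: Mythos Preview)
Your treatment of items (i) and (iii) is broadly in line with the paper's remarks following the theorem, though for (iii) the paper does not invoke a generic symmetrization argument: it observes that the specific isometry built in \cite[Lemma~4.1]{NV18a} already happens to satisfy $\phi_1=\phi_2$, and then derives \eqref{eq:cite-once-then-never-again} from the post-measurement-state bound \eqref{eq:anand-testing-state} via the triangle inequality rather than via \Cref{fact:agreement}.

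There is, however, a genuine gap in your item (ii). Your argument that the honest strategy is commuting rests on the claim that any pair of questions sent simultaneously in $\game_{\mathrm{Qlowdeg}}$ carries a common Pauli label $W$, so that the honest operators are all diagonal in a common basis. This is not true of the test as constructed in \cite{NV18a}: the quantum low-degree test uses the standard Magic Square game as a subroutine, and in that game one prover receives a row and the other a column, with the honest observables in a given row or column including at least one anticommuting pair. The paper explicitly flags this in the remark after the theorem: the test of \cite{NV18a}, taken verbatim, does \emph{not} admit a value-$1$ real commuting EPR strategy. The fix is not a matter of inspection but a modification of the protocol: replace the Magic Square subtest by its oracularized version (one player gets a row or column, the other a single cell, plus a cell-consistency test), which certifies the same state and observables but for which the simultaneously-asked honest measurements do commute. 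Without this change, your conclusion in (ii) is false.
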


(We note that this result is stated in \cite{NV18a} for general primes~$p$.
However, the $p \neq 2$ case relied on a self-testing result for a generalization of the Magic Square game
which was recently discovered to contain a bug. 
Fortunately, the $p=2$ case needs only a self-testing result for the
``traditional" binary Magic Square game, and this follows
from~\cite{WBMS16}.)

\begin{remark}
We note that the quantum low-degree test, as stated in \cite{NV18a}, does \emph{not} have value-$1$ real commuting EPR strategies.
This is because it uses as a subroutine the standard magic square game,
and the magic square game does not have value-$1$ real commuting EPR strategies.
Its value-$1$ strategies \emph{are} EPR strategies,
and they \emph{are} real (all observables are either $X$ or $Z$, with the sole exception of the $Y \otimes Y$ observable,
which can be rewritten as $Y \otimes Y = -(X \otimes X) \cdot (Z \otimes Z)$, manifestly real).
But they are not commuting, because each row and column have at least one pair of noncommuting observables.

Consider instead the following ``oracularized" version of the magic square game:
one player is given a random row or column (and is expected to play as in the normal magic square game),
and the other player is given a random cell in that row or column,
and the verifier simply checks that they agree on that cell.
In addition, with some constant probability, both players are given the same cell and their answers are checked against each other.
In this case, all observables measured are commuting, and so this variant has a value-$1$ real commuting EPR strategy.
In addition, it certifies the same state and measurements as the normal magic square game,
and so we can use it as a subroutine in the quantum low-degree test instead.
\end{remark}

\begin{remark}
We note again that the soundness case in our definition of a self-test is quite different from the one given in \cite[Definition~$2.5$]{NV18a},
and it is not clear that a self-test in their sense implies a self-test in our sense.
However, for the quantum low-degree test, their soundness case \emph{does} match ours.
By~\cite[Lemma~$4.1$]{NV18a}, there is a local isometry~$\phi = \phi_1 \otimes \phi_2$ such that
\begin{equation}\label{eq:anand-testing-state}
\Vert \phi \ket{\overline{\psi}} - \ket{\psi}\ket{\mathrm{aux}} \Vert^2 \leq \delta(\epsilon).
\end{equation}
and
\begin{equation}\label{eq:anand-testing-state}
\E_{(\bW, \bw)} \sum_a \Vert \phi \cdot (\overline{M}^{\bW, \bw}_{a} \otimes I_{\reg{Bob}}) \ket{\overline{\psi}}
	 - (\tau_{a}^{\bW, \bw} \otimes I_{\reg{aux}})  \otimes I_{\reg{Bob}} \ket{\psi} \ket{\mathrm{aux}} \Vert^2
\leq \delta(\eps).
\end{equation}
The key difference from our self-test definition is that, as stated, their local isometry need not be symmetric (i.e.\ $\phi_1 \neq \phi_2$),
but their construction actually \emph{does} give a symmetric isometry with $\phi_1 = \phi_2$.
Then, from \Cref{eq:anand-testing-state} it is easy to derive \Cref{eq:cite-once-then-never-again} using \Cref{eq:anand-testing-state} and the triangle inequality (\Cref{fact:triangle}).
\end{remark}

\subsection{Proof of \Cref{thm:basis-test}: the Pauli basis test}

We now state the Pauli basis test.

\begin{definition}
Let $n,q,\eta$ be as in \Cref{thm:basis-test}.
Fix the remaining low-degree parameters~$\params$ as follows:
\begin{equation*}
h = \lceil q^{1/2}\rceil, \qquad m = 2\cdot \left\lceil \frac{\log(n)}{\log(q)}\right\rceil,
\qquad d = m \cdot (h-1).
\end{equation*}
Then the \emph{Pauli basis game}~$\game_{\mathrm{basis}}(n,q)$ is given by~\Cref{fig:pauli-basis}.
\end{definition}

{
\floatstyle{boxed} 
\restylefloat{figure}
\begin{figure}
With probability~$\tfrac{1}{2}$ each, perform one of the following two tests.
\begin{enumerate}
\item \textbf{Low-degree:} Perform $\game_{\mathrm{Qlowdeg}}(\params)$. \label{item:qlowdeg-test}
\item \textbf{Cross-check:}  Draw $\bW \sim \{X, Z\}$, $\bw \sim \F_q^m$. Flip an unbiased coin $\bb \sim \{0, 1\}$.
	Distribute the questions as follows:
	\begin{itemize}
	\item[$\circ$] Player~$\bb$: Give $\bW$; receive $\bu \in \F_q^n$.
	\item[$\circ$] Player~$\overline{\bb}$: give $(\bW,\bw)$; receive $\ba$.
	\end{itemize}
	Accept if $g_{\bu}(\bw) = \ba$.
\end{enumerate} \label{item:check-with-qlowdeg}
\caption{The game $\game_{\mathrm{basis}}(n,q)$.\label{fig:pauli-basis}}
\end{figure}
}

These parameters are chosen so that they are valid low-degree parameters
(guaranteeing the existence of the low-degree code),
which is necessary for the quantum low-degree test.
In particular, these satisfy (i) $h \leq q$ and (ii) $n \leq h^m$.
The first of these is immediate; as for the second,
\begin{equation*}
h^m \geq (q^{1/2})^{2 \cdot \log(n)/\log(q)} = q^{\log(n)/\log(q)} = n.
\end{equation*}
In addition, the code has relative distance $d/q \leq mh/q \leq \eta$.
\begin{equation*}
\frac{d}{q} = \frac{m \cdot(h-1)}{q} \leq \frac{m h}{q} \leq 8\cdot \frac{\log(n)}{\log(q)} \cdot \frac{q^{1/2}}{q} \leq \frac{8 \log(n)}{q^{1/2}} \leq \eta,
\end{equation*}
where the final step is because $n, q, \eta$ satisfy the Pauli basis condition.
Finally, we note that even if~$q$ is a large polynomial of~$n$, $m$ is always at least~$2$, which permits us to use the quantum low-degree test.
We now prove \Cref{thm:basis-test}.

\begin{proof}[Proof of \Cref{thm:basis-test}]
The question lengths and times of both the quantum low-degree test and the cross-check are given by
\begin{equation*}
m \log(q) = 2\cdot \left\lceil \frac{\log(n)}{\log(q)}\right\rceil  \cdot \log(q) = O(\log(n)).
\end{equation*}
As for the answer lengths and times, these are bounded by $\poly(n)$ for both the quantum low-degree test and the cross-check.
We now consider the completeness and soundness cases separately.
\paragraph{Completeness.}
Let~$(\psi, M)$ be the value-$1$ commuting EPR strategy for the quantum low-degree test guaranteed by \Cref{thm:anand-thomas-low-degree}.
This has state $\ket{\psi} = \ket{\mathrm{EPR}_q^n} \ket{\mathrm{EPR}_2}$ and measurement matrices $M^{W,w}_a = \tau^{W,w}_a \otimes I_{\reg{aux}}$.
If we add in the measurement matrices $M^W_u = \tau^W_u \otimes I_{\reg{aux}}$,
then this strategy passes the cross-check with probability~$1$.
This is because after Player~$\bb$ measures~$\bu$,
the state collapses to $\ket{\tau^{\bW}_{\bu}}\ket{\tau^{\bW}_{\bu}} \ket{\mathrm{EPR}_2}$,
and so Player~$\overline{\bb}$ will measure $\ba = g_{\bu}(\bw)$.
As a result, this is a value-$1$ strategy.
Furthermore, it is a commuting EPR strategy because the cross-check measurements $M^W$ and $M^{W,w}$ commute.
Finally, this strategy extends the Pauli basis strategy.
This proves the completeness case.

\paragraph{Soundness.}
Throughout the soundness, we will use $\delta(\eps)$ to denote a function of the form $\poly(\eps, \eta)$ which may change from use to use.
The $\delta(\eps)$ in \Cref{thm:anand-thomas-low-degree} is of this form because $d/q\leq \eta$.

Let $\overline{\calS} = (\overline{\psi}, \overline{M})$ be a strategy with $\valstrat{\game_{\mathrm{basis}}}{\overline{\calS}} = 1-\eps$.
Then this strategy must pass $\game_{\mathrm{Qlowdeg}}$ with probability at least $1-2\eps$.
By \Cref{thm:anand-thomas-low-degree} this gives us a local isometry
$\phi = \phi_{\mathrm{local}}\otimes \phi_{\mathrm{local}}$ and a state $\ket{\mathrm{aux}}$
with the following properties:
if we define the new strategy $\calS$
in which $\ket{\psi} = \phi \ket{\overline{\psi}}$
and $M_a^x = \phi_{\mathrm{local}}\cdot \overline{M}_a^x \cdot \phi_{\mathrm{local}}^\dagger$, then
\begin{equation}\label{eq:cant-think-of-a-good-name}
\Vert \ket{\psi} - \ket{\mathrm{EPR}_q^n} \ket{\reg{aux}} \Vert^2 \leq \delta(\eps),
\qquad
M_a^{W,w} \otimes I_{\reg{Bob}}
\approx_{\delta(\eps)}
(\tau_a^{W,w} \otimes I_{\reg{aux}}) \otimes I_{\reg{Bob}},
\end{equation}
on state $\ket{\psi}$ and distribution $\calD$.
Because $\calS$ is just a rotated version of $\overline{\calS}$,
it also passes $\game_{\mathrm{basis}}$ with probability $1-\eps$.
As a result, $\calS$ passes the cross-check in \Cref{item:check-with-qlowdeg} with probability at least $1-2\eps$.
By \Cref{fact:agreement}, we conclude that
\begin{equation}\label{eq:worth-studying}
M^W_{[g_u(w)=a]} \otimes I_{\reg{Bob}}
\approx_{\eps}
I_{\reg{Alice}} \otimes M^{W,w}_a
\approx_{\delta(\eps)} I_{\reg{Alice}} \otimes (\tau^{W,w}_a \otimes I_{\reg{aux}})
= I_{\reg{Alice}} \otimes (\tau^{W}_{[g_v(w)=a]} \otimes I_{\reg{aux}})
\end{equation}
on state $\ket{\psi}$.
By \Cref{fact:almost-agreement} and the fact that the~$\tau$ measurements are projective, this implies that
\begin{equation*}
M^W_{[g_u(w)=a]} \otimes I_{\reg{Bob}}
\consistency_{\delta(\eps)} I_{\reg{Alice}} \otimes (\tau^{W}_{[g_v(w)=a]} \otimes I_{\reg{aux}})
\end{equation*}
Now by
\Cref{prop:same-on-point-same-on-subspace} (where we let~$\bs$ be the singleton distribution on the ``trivial" subspace $\bs = \F_q^m$)
and the fact that $d/q \leq \eta$, we can conclude that
\begin{equation*}
M^W_{u} \otimes I_{\reg{Bob}}
\consistency_{\delta(\eps)} I_{\reg{Alice}} \otimes (\tau^{W}_{u} \otimes I_{\reg{aux}}).
\end{equation*}
Applying \Cref{fact:agreement} again, this yields
\begin{equation}\label{eq:almost-there-almost-there}
M_{u}^{W} \otimes I_{\reg{Bob}}
\approx_{\delta(\eps)} I_{\reg{Alice}} \otimes (\tau_{u}^{W} \otimes I_{\reg{aux}})
\approx_{\delta(\eps)} (\tau_{u}^W \otimes I_{\reg{aux}}) \otimes I_{\reg{Bob}}
\end{equation}
on state $\ket{\psi}$, where the last step uses \Cref{fact:the-ol-state-y-swaperoonie}
to combine \Cref{fact:the-ol-pauli-swaperoonie} with \Cref{eq:cant-think-of-a-good-name}.
The analogous statement for the state $\ket{\mathrm{EPR}_q^n}$ follows from~\Cref{fact:the-ol-state-y-swaperoonie}.
This establishes the theorem.
\end{proof}



\section{Compiling games with the Pauli basis test}\label{sec:pauli-basis-compiler}

In this section, we show how to use the Pauli basis test to implement the compiler $\calC_{\mathrm{semi}\rightarrow (k-1)}$.
Our construction is given in the following definition.

\begin{definition}
Let $\game_{\mathrm{semi}}$ be a $(k,n,q)$-semiregister game.
Then its compiled version is the game $\calC_{\mathrm{semi} \rightarrow (k-1)}(\game_{\mathrm{semi}})$ defined in \Cref{fig:one-to-zero}.
\end{definition}
{
\floatstyle{boxed} 
\restylefloat{figure}
\begin{figure}
Flip an unbiased coin $\bb \sim \{0, 1\}$. 
With probability~$\tfrac{1}{4}$ each, perform one of the following four tests.
\begin{enumerate}
	\item \textbf{Pauli basis:} Draw $(\bx, \bx') \sim \game_{\mathrm{basis}}(n_k, q_k, \eta)$.\label{item:first-layer-pauli-basis}
			Distribute the questions as follows:
				\begin{itemize}
				\item[$\circ$] Player~$\bb$: give $(\hideq^{k-1}, \bx)$; receive $\ba = (\ba_1, \ba_2)$.
				\item[$\circ$] Player~$\overline{\bb}$: give $(\hideq^{k-1}, \bx')$; receive $\ba' = (\ba_1', \ba_2')$.
				\end{itemize}
			Accept if $\ba_2$ and $\ba_2'$ are accepting answers to the Pauli basis test.
	\item \textbf{Cross-check:} Draw $(\bx, \bx') \sim \game_{\mathrm{semi}}$. Write $\bx = (\bx_1, \bx_2)$
			with $\bx_1 = (\bW_1, \ldots, \bW_k)$.
			\label{item:first-layer-cross-check}
			Distribute the questions as follows:
				\begin{itemize}
				\item[$\circ$] Player~$\bb$: give $\bx$; receive $\ba = (\ba_1, \ba_2)$,
							where $\ba_1 = (\bu_1, \ldots, \bu_k)$.
				\item[$\circ$] Player~$\overline{\bb}$: give $(\hideq^{k-1}, \bW_k)$;
							receive strings $\ba_1' = (\bu_1', \ldots, \bu_k')$, $\bu_i' \in \F_q^n$.
				\end{itemize}
				If $\bW_k \in \{X, Z\}$, accept if $\bu_k = \bu_k'$.  Otherwise, accept if $\bu_k = \varnothing$.
	\item \textbf{Consistency check:} Draw $(\bx, \bx') \sim \game_{\mathrm{semi}}$.
			Distribute the questions as follows:
				\begin{itemize}
				\item[$\circ$] Player~$\bb$: give $\bx$; receive $\ba$
				\item[$\circ$] Player~$\overline{\bb}$: give $\bx$; receive $\ba'$.
				\end{itemize}
				Accept if $\ba = \ba'$.
	\item \textbf{Play game:} Perform $\game_{\mathrm{semi}}$.
	\end{enumerate}
	\caption{The game $\calC_{\mathrm{semi} \rightarrow (k-1)}(\game_{\mathrm{semi}})$.\label{fig:one-to-zero}}
\end{figure}
}

In words, the provers might try to ``trick" the verifier by using one of their $(k-1)$ existing EPR registers to answer queries meant for the new $k$-th register.
To prevent this, the verifier performs the Pauli basis test with the first $k-1$ registers hidden, forcing the provers to introduce a new EPR register.
It then cross-checks the provers' answers in the Pauli basis test with their answers in the game $\game_{\mathrm{semi}}$.
The performance of the compiler is given by the following theorem.

\begin{theorem}\label{theorem:one-to-zero-compiler}
Let $\register = (k, n, q)$, and let $n_k$, $q_k$, and $\eta$ satisfy the Pauli basis condition.
Suppose $\game_{\mathrm{semi}}$ is a $\register$-semiregister game,
and consider the $\shorten{\register}{k-1}$-register game $\game_{k-1} = \calC_{\mathrm{semi} \rightarrow (k-1)}(\game_{\mathrm{semi}})$.
\begin{itemize}
\item[$\circ$] \textbf{Completeness:} Suppose there is a value-$1$ $\register$-semiregister strategy for $\game_{\mathrm{semi}}$ which is also a real commuting EPR strategy.
	Then there is a value-$1$ $\shorten{\register}{k-1}$-register strategy for $\game_{k-1}$ which is also a real commuting EPR strategy.
\item[$\circ$] \textbf{Soundness:} If $\valreg{\shorten{\register}{k-1}}{\game_{k-1}}\geq 1-\eps$ then $\valsemi{\register}{\game_{\mathrm{semi}}} \geq 1 -\delta(\eps)$, where $\delta(\eps) = \mathrm{poly}(\eps, \eta)$.
\end{itemize}
Furthermore,
\begin{align*}
\qlength{\game_{k-1}} &= \qlength{\game_{\mathrm{semi}}} + O(\log(n)),  \\
\qtime{\game_{k-1}} &= \qtime{\game_{\mathrm{semi}}} + O(\log(n)), \\
\alength{\game_{k-1}} &= \alength{\game_{\mathrm{semi}}} + \poly(n),\\
\atime{\game_{k-1}} &= \atime{\game_{\mathrm{semi}}} + \poly(n).
\end{align*}
\end{theorem}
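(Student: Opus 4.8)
The plan is to prove completeness and soundness separately, tracking the construction in \Cref{fig:one-to-zero}, and then to read the communication bounds off that figure. By \Cref{lem:proj-suffices} I may restrict attention to projective $\shorten{\register}{k-1}$-register strategies, and since $\game_{k-1}$ is symmetric (it opens by flipping the coin $\bb$) I may take a strategy for it to be symmetric, hence given by a single projective family $M$ on a shared state $\ket{\psi} = \ket{r_1}\otimes\dots\otimes\ket{r_{k-1}}\otimes\ket{\rmaux}$ with each $\ket{r_i} = \ket{\epr_{q_i}^{n_i}}$.

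\paragraph{Completeness.} Starting from a value-$1$ real commuting EPR $\register$-semiregister strategy for $\game_{\mathrm{semi}}$ together with the value-$1$ real commuting EPR strategy for $\game_{\mathrm{basis}}(n_k,q_k)$ furnished by \Cref{thm:basis-test} (which answers the raw Pauli query $W$ by measuring $\tau^W$ on its $\epr_{q_k}^{n_k}$-factor), I would build the honest $\shorten{\register}{k-1}$-register strategy that relocates the $k$-th EPR register, along with a fresh $\ket{\epr_2}$, into the auxiliary register, measures registers $1,\dots,k-1$ honestly, runs the $\game_{\mathrm{basis}}$ strategy whenever $\game_{k-1}$ poses a Pauli-basis sub-question on register $k$, and otherwise plays the semiregister strategy. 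I then check that this wins all four branches with probability $1$: the Pauli-basis branch and the play-game branch are immediate; the consistency branch holds by \Cref{fact:heh-heh-heh-gonna-make-anand-prove-this-so-i-can-take-the-day-off} (a real EPR strategy answers repeated questions consistently); and the cross-check holds because the semiregister strategy answers the $k$-th Pauli query by exactly $\tau^{W_k}$ on register $k$, and by \Cref{eq:epr-stab-x,eq:epr-stab-z} measuring the same Pauli basis on both halves of $\ket{\epr_{q_k}^{n_k}}$ (with $q_k$ a power of $2$) yields equal outcomes, while $\hideq/\noop$ queries yield $\varnothing$ on both sides. Finally I confirm the combined strategy is again real (Pauli bases are real in characteristic $2$), an EPR strategy (its auxiliary state is a product of EPR states), and commuting: the measurements that meet in a round act on disjoint registers, or on register $k$ via the identical basis $W_k$ handed to both players, or are covered by the commuting property of the two ingredient strategies.

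\paragraph{Soundness.} Given a strategy winning $\game_{k-1}$ with probability $\geq 1-\eps$, it wins each of the four branches with probability $\geq 1-4\eps$, and I would proceed in three steps. \emph{Step 1 (carve out register $k$):} in the Pauli-basis branch the first $k-1$ registers carry $\hideq$, so by \Cref{eq:hide-coords-in-S} the relevant measurements are $I$ on $r_1,\dots,r_{k-1}$ tensored with a POVM on $\rmaux$; applying \Cref{thm:basis-test} to this sub-strategy on $\ket{\rmaux}$, I may take the resulting local isometry to act as the identity on $r_1,\dots,r_{k-1}$, rotate, and (via \Cref{fact:the-ol-state-y-swaperoonie}) replace the $\delta(\eps)$-approximate state by the exact one, obtaining a $\shorten{\register}{k-1}$-register strategy whose auxiliary register now contains a genuine $\ket{\epr_{q_k}^{n_k}}$ (call it $r_k$) and whose response to the raw $\hideq^{k-1}$-prefixed Pauli query $W$ is $\poly(\eps,\eta)$-close, in $\approx$, to $\tau^W$ on $r_k$. \emph{Step 2 (push the guarantee onto $\game_{\mathrm{semi}}$):} the cross-check branch ties the $\bu_k$-coordinate of the semiregister answer to that raw Pauli query, so combining it with the consistency branch and \Cref{fact:agreement,fact:almost-agreement,fact:triangle} (using that $\tau^{\bW_k}$ is projective) yields that, for $\bx\sim\game_{\mathrm{semi}}$, the marginal $M^{\bx}_{[\bu_k = u]}$ is $\poly(\eps,\eta)$-consistent with the honest projector $\tau^{\bW_k}_u$ acting on $r_k$, together with cross-register self-consistency of $M^{\bx}$. \emph{Step 3 (round to a genuine semiregister strategy):} I modify the $\game_{\mathrm{semi}}$-measurements to first measure $\tau^{W_k}$ on $r_k$, recording the outcome $\hat u_k$, then sandwich $M^{\bx}$ by $\tau^{W_k}_{\hat u_k}$ and overwrite the $u_k$-coordinate by $\hat u_k$; by the sandwiching argument of \Cref{fact:sandwich}, fed the estimate of Step 2, this perturbs the strategy by only $\poly(\eps,\eta)$ in $\approx$-distance. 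Since registers $1,\dots,k-1$ were already answered honestly (their Pauli queries sit in the first block of the corresponding $\game_{k-1}$-question) and their hidden registers remain untouched by the second answer block, the modified strategy is a genuine $\register$-semiregister strategy; restricting it to the play-game branch, which is just $\game_{\mathrm{semi}}$, and invoking ``close strategies have close value'' (\Cref{fact:approx-delta-game-value}, after a Naimark dilation to regain projectivity), it wins $\game_{\mathrm{semi}}$ with probability $\geq 1-\poly(\eps,\eta)$, i.e.\ $\valsemi{\register}{\game_{\mathrm{semi}}}\geq 1-\delta(\eps)$.

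\paragraph{Communication bounds and the hard part.} The communication bounds fall out of \Cref{fig:one-to-zero}: each branch either runs $\game_{\mathrm{semi}}$ with $O(1)$ extra question bits (a $\hideq^{k-1}$ prefix or a single Pauli symbol) and a $\poly(n_k)$-bit raw answer in the cross-check, or runs $\game_{\mathrm{basis}}(n_k,q_k)$, whose question/verifier-time overhead is $O(\log n_k)$ and answer/verifier-time overhead is $\poly(n_k)$ by \Cref{thm:basis-test}. The main obstacle is the soundness argument, and within it Step 3 (together with the ``the isometry may be taken to act as the identity on $r_1,\dots,r_{k-1}$'' remark of Step 1): the difficulty is to upgrade the soft, $\approx$-distance statement that the $\game_{\mathrm{semi}}$-strategy ``morally'' performs $\tau^{W_k}$ on register $k$ into an \emph{exact} semiregister strategy without disturbing the honest structure already present on the first $k-1$ registers or its value on the play-game branch --- precisely the kind of manipulation the state-dependent-distance and sandwiching lemmas of the quantum preliminaries are built to support.
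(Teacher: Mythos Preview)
Your proposal is correct and follows essentially the same approach as the paper: combine the given semiregister strategy with the honest $\game_{\mathrm{basis}}$ strategy for completeness, and for soundness apply \Cref{thm:basis-test} to the aux-register sub-strategy, rotate, use the cross-check to pin the $u_k$-coordinate to $\tau^{W_k}$, then define the semiregister strategy by sandwiching with $\tau^{W_k}_{u_k}$ and invoke \Cref{fact:sandwich} and \Cref{fact:approx-delta-game-value}. Two minor remarks: the isometry acting as the identity on $r_1,\dots,r_{k-1}$ is automatic (the Pauli-basis sub-strategy lives entirely on the aux register by \Cref{eq:hide-coords-in-S}, so the self-test isometry from \Cref{thm:basis-test} is produced there and you simply tensor it with the identity on the first $k-1$ registers), and the Naimark dilation you mention in Step~3 is unnecessary since $M$ is already projective by the opening invocation of \Cref{lem:proj-suffices}, which is exactly what \Cref{fact:approx-delta-game-value} needs.
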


\begin{proof}
The communication and time complexities are the result of combining
the communication and time complexities from $\game_{\mathrm{semi}}$ with the values for the Pauli basis test from \Cref{thm:basis-test}.

\paragraph{Completeness.}
Let $(\psi, M)$ be a value-$1$ $\register$-semiregister strategy for $\game_{\mathrm{semi}}$ which is also a real commuting EPR strategy. 
Then $\ket{\psi} = \ket{r_1} \cdots \ket{r_k} \ket{\mathrm{aux}}$, where each $\ket{r_i} = \ket{\mathrm{EPR}_{q_i}^{n_i}}$ and $\ket{\mathrm{aux}}$ is an EPR state.
In addition, let $(\psi', M')$ be the value-$1$ real commuting EPR strategy for $\game_{\mathrm{basis}}$ guaranteed by \Cref{thm:basis-test}.
Then $\ket{\psi'} = \ket{\mathrm{EPR}_{q_k}^{n_k}} \ket{\mathrm{aux}'}$, where $\ket{\mathrm{aux}'}$ is an EPR state.

Consider the following strategy for $\game_{k-1}$.
For its state, it uses $\ket{r_1}\cdots \ket{r_k}\ket{\mathrm{aux}}\ket{\mathrm{aux}'}$.
For inputs drawn from $\game_{\mathrm{semi}}$, it uses the matrices in~$M$ applied to all but the $\ket{\mathrm{aux}'}$ register.
For inputs of the form $(H^{k-1}, x)$, where $x$ is sampled from $\game_{\mathrm{basis}}$, it outputs $\varnothing^{k-1}$ along with the result of applying $M'$ to $\ket{r_k}$ and $\ket{\mathrm{aux}'}$.
Finally, for inputs of the form $(H^{k-1}, \hideq)$ and $(H^{k-1}, \noop)$, it outputs $\varnothing^k$.
This forms a valid $\shorten{\register}{k-1}$-register strategy for $\game_{k-1}$.
In addition, its ``auxiliary register" is $\ket{r_k} \ket{\mathrm{aux}}\ket{\mathrm{aux}'}$, which is an EPR state.
Now we show that it has value~$1$.

By construction, this strategy passes the Pauli basis test and $\game_{\mathrm{semi}}$ with probability~$1$.
As for the cross-check, when $\bW_k \in \{\hideq, \noop\}$, the strategy always succeeds because $(\psi, M)$ is a $\register$-semiregister strategy.
On the other hand, when $\bW_k \in \{X, Z\}$, this implies that~$\bu_k$ is the result of applying the $\tau^{\bW_k}$ measurement to $\ket{r_k}$,
putting it in state $\ket{\tau^{\bW_k}_{\bu_k}}\ket{\tau^{\bW_k}_{\bu_k}}$.
But then because $(\psi', M')$ implements the Pauli basis strategy on $\ket{r_k}$, the outcome $\bu_k'$ is also the result of applying the $\tau^{\bW_k}$ measurement to $\ket{r_k}$.
As a result, $\bu_k' = \bu_k$.

Finally, it is clear that this forms an EPR strategy.
As a result, by \Cref{fact:heh-heh-heh-gonna-make-anand-prove-this-so-i-can-take-the-day-off},
the consistency check passes with probability~$1$.
Thus, the strategy passes the overall test with probability~$1$.
Next, we show that this gives a \emph{commuting} EPR strategy.
For the questions that arise in the Pauli basis test, the consistency check, and~$\game_{\mathrm{semi}}$,
commutation follows because~$M$ and~$M'$ are commuting.
As for the cross-check, consider the case when $\bW_k \in \{X, Z\}$.
Then  the first (i.e.\ Player~$\bb$'s) measurement is given by
\begin{equation*}
(M^{x_1, x_2}_{a_1, a_2})_{\reg{1, \ldots, k, aux}} \otimes I_{\reg{aux'}}
= \tau^{W_k}_{u_k} \otimes (A^{x_1, x_2}_{u_1, \ldots, u_{k-1}, a_2})_{\reg{1, \ldots, k-1, aux, aux'}},
\end{equation*}
where~$A$ is some measurement.
This follows because~$M$ is a $\register$-semiregister strategy.
Similarly, the second (i.e., Player~$\overline{\bb}$'s) measurement is given by
\begin{equation*}
(\tau_{\varnothing}^H \otimes \cdots \otimes \tau^H_{\varnothing})_{\reg{1, \ldots, k-1}}
\otimes (M'^{W_k}_{u_k'})_{\reg{k, aux'}} \otimes I_{\reg{aux}}
= \tau^{W_k}_{u_k'} \otimes I_{\reg{1, \ldots, k-1, aux, aux'}}.
\end{equation*}
By inspection, these two commute.
On the other hand, when $\bW_k \in \{H, \bot\}$,
then Player~$\overline{\bb}$ always outputs $\varnothing^k$.
Their measurement for this outcome is the matrix $I_{\reg{1, \ldots, k, aux, aux'}}$,
and is the zero matrix for every other outcome.
These clearly commute with any strategy for Player~$\bb$.

Finally, because~$M$ and~$M'$ are real strategies,
this strategy is also real.
As a result, this gives a value-$1$ real commuting EPR strategy.

\paragraph{Soundness.}

Suppose $\calS_{\mathrm{reg}} = (\psi_{\mathrm{reg}}, M_{\mathrm{reg}})$ is a  $\shorten{\register}{k-1}$-register strategy for $\game_{k-1}$ with value $1-\epsilon$.
By \Cref{lem:proj-suffices}, we can assume without loss of generality that~$M$ is projective.
For $1 \leq i \leq k$, write $\ket{r_i} := \ket{\mathrm{EPR}_{q_i}^{n_i}}$.
By definition, $\ket{\psi_{\mathrm{reg}}} = \ket{r_1} \otimes \cdots \otimes \ket{r_{k-1}} \otimes \ket{\mathrm{aux}_{\mathrm{reg}}}$.
Our goal will be to decode $\calS_{\mathrm{reg}}$ into a $\register$-semiregister strategy $\calS_{\mathrm{semi}}$ for $\game_{\mathrm{semi}}$ with nearly the same value.

\paragraph{Using the Pauli basis test.}
Passing the overall test with probability $1-\epsilon$
means that $\calS_{\mathrm{reg}}$ must pass the test in \Cref{item:first-layer-pauli-basis} with probability $1-4\epsilon$.
This test only involves measurements of the form $\{(M_{\mathrm{reg}})^{\hideq,\ldots, \hideq, x}_{a_1, a_2}\}_{a_1, a_2}$.
Because the first $k-1$ coordinates are hidden, \Cref{eq:hide-coords-in-S} allows us to write
\begin{equation*}
(M_{\mathrm{reg}})^{\hideq, \ldots, \hideq, x}_{a_2} = I_{\reg{1, \ldots, k-1}} \otimes (A^x_{a_2})_{\reg{aux}},
\end{equation*}
where $\{A^x_a\}_x$ is some set of measurements.
As a result, the state $\ket{\mathrm{aux}_{\mathrm{reg}}}$ and measurements $\{A^x_a\}_x$ form a strategy 
for the game $\game_{\mathrm{basis}}(n_k, q_k, \eta)$ which succeeds with probability $1-4\epsilon$.
By \Cref{thm:basis-test} this gives us a local isometry
$\phi = \phi_{\mathrm{local}}\otimes \phi_{\mathrm{local}}$ and a state $\ket{\mathrm{aux}}$ such that
\begin{equation}\label{eq:applied-that-fact}
\Vert \phi \ket{\mathrm{aux}_{\mathrm{reg}}} - \ket{r_k} \ket{\mathrm{aux}} \Vert^2 \leq \delta(\eps),
\end{equation}
\begin{equation}\label{eq:applied-that-fact-dos}
(\phi_{\mathrm{local}}\cdot A_{u}^{W} \cdot \phi_{\mathrm{local}}^\dagger)_{\reg{Alice}}\otimes I_{\reg{Bob}}
\approx_{\delta(\eps)}
(\tau_{u}^{W} \otimes I_{\reg{aux}})_{\reg{Alice}} \otimes I_{\reg{Bob}},
\end{equation}
on state $\ket{r_k} \ket{\mathrm{aux}}$ and the uniform distribution on $\{X,Z\}$.

Define the new strategy $\calS$ in which $\ket{\psi} = \ket{r_1} \otimes \cdots \otimes \ket{r_{k-1}} \otimes (\phi \ket{\mathrm{aux}_{\mathrm{reg}}})$
and 
\begin{equation*}
M^x_a = (I_{\reg{1, \ldots, k-1}} \otimes (\phi_{\mathrm{local}})_{\reg{aux}})
	\cdot (M_{\mathrm{reg}})^x_a
		\cdot (I_{\reg{1, \ldots, k-1}} \otimes (\phi^\dagger_{\mathrm{local}})_{\reg{aux}}).
\end{equation*}
Then \Cref{eq:applied-that-fact,eq:applied-that-fact-dos} implies that
\begin{equation}\label{eq:applied-the-fact-again}
\Vert  \ket{\psi} - \ket{r_1}\otimes \cdots \otimes \ket{r_k} \ket{\mathrm{aux}} \Vert^2 \leq \delta(\eps),
\end{equation}
\begin{equation}\label{eq:applied-the-fact-again-dos}
(M^{H, \ldots, H, W}_u)_{\reg{Alice}} \otimes I_{\reg{Bob}}
\approx_{\delta(\eps)}
(I_{\reg{1, \ldots, k-1}} \otimes \tau_{u}^{W} \otimes I_{\reg{aux}})_{\reg{Alice}} \otimes I_{\reg{Bob}},
\end{equation}
on state $\ket{\psi}$ and the uniform distribution on $\{X,Z\}$.
Because $\calS$ is just a rotated version of $\calS_{\mathrm{reg}}$,
it also passes  $\game_{k-1}$ with probability $1-\eps$.
In addition, it is also a $\shorten{\register}{k-1}$-register strategy.

\paragraph{Performing the cross-check.}
To analyze the cross-check, we begin with a definition.
Given $W \in \{X, Z, \hideq, \noop\}$
and $u \in \F_{q_k}^{n_k} \cup \{\varnothing\}$, define
$\mathrm{null}_{W}(u) = u$ if $W \in \{X, Z\}$ and $\varnothing$ otherwise.
The cross-check in \Cref{item:first-layer-cross-check} checks equality between
$\bu_k$ and $\mathrm{null}_{\bW_k}(\bu_k')$.
As a result,
\begin{equation*}
(M^x_{u_k})_{\reg{Alice}} \otimes I_{\reg{Bob}}
\approx_{\eps} I_{\reg{Alice}} \otimes (M^{\hideq, \ldots, \hideq, W_k}_{[\mathrm{null}_{W_k}(u_k') = u_k]})_{\reg{Bob}}.
\end{equation*}
Next, we note that when $W_k \in \{\hideq, \noop\}$,
\begin{equation*}
M^{\hideq, \ldots, \hideq, W_k}_{[\mathrm{null}_{W_k}(u_k') = u_k]}
= I_{\reg{1, \ldots, k-1}} \otimes \tau^{W_k}_{u_k} \otimes I_{\reg{aux}},
\end{equation*}
because both sides are the identity when $u_k = \varnothing$ and zero otherwise.
On the other hand, when $W_k \in \{X, Z\}$, these two are close due to \Cref{eq:applied-the-fact-again-dos}.
Applying~\Cref{fact:average-over-dists} and \Cref{fact:triangle}, we get
\begin{equation}\label{eq:looks-like-tau}
(M^x_{u_k})_{\reg{Alice}} \otimes I _{\reg{Bob}}
\consistency_{\delta(\eps)} I_{\reg{Alice}} \otimes (I_{\reg{1, \ldots, k-1}} \otimes \tau^{W_k}_{u_k} \otimes I_{\reg{aux}})_{\reg{Bob}},
\end{equation}
where we have also applied \Cref{fact:agreement} to switch to the ``$\consistency_{\delta(\eps)}$" notation.

\paragraph{Extracting a strategy.}

Now we use this to define a $\register$-semiregister strategy $\calS_{\mathrm{semi}}$ for $\game_{\mathrm{semi}}$.
This strategy will have state $\ket{\psi_{\mathrm{semi}}} = \ket{r_1}\cdots\ket{r_k} \ket{\mathrm{aux}}$.
In addition, for each input $x = (x_1, x_2)$ and output $a = (a_1, a_2)$, it will have a matrix
\begin{equation*}
\Lambda_{a_1, a_2}^{x_1, x_2} := (I_{\reg{1, \ldots, k-1}} \otimes \tau^{W_k}_{u_k} \otimes I_{\reg{aux}})\cdot
M_{(u_1, \ldots, u_{k-1}),a_2}^{x_1, x_2} \cdot (I_{\reg{1, \ldots, k-1}} \otimes \tau^{W_k}_{u_k} \otimes I_{\reg{aux}}).
\end{equation*}
First, it follows from $M$ being a $\register|_{k-1}$-strategy that this is indeed a $\register$-semiregister strategy.
This is because
\begin{align*}
\Lambda_{a_1}^{x_1, x_2}
&= (I_{\reg{1, \ldots, k-1}} \otimes \tau^{W_k}_{u_k} \otimes I_{\reg{aux}})\cdot
	M_{u_1, \ldots, u_{k-1}}^{x_1, x_2} \cdot (I_{\reg{1, \ldots, k-1}} \otimes \tau^{W_k}_{u_k} \otimes I_{\reg{aux}})\\
&= (I_{\reg{1, \ldots, k-1}} \otimes \tau^{W_k}_{u_k} \otimes I_{\reg{aux}})\cdot
	(\tau^{W_1}_{u_1} \otimes \cdots \otimes \tau^{W_{k-1}}_{u_{k-1}} \otimes I_{\reg{k, aux}})
	\cdot (I_{\reg{1, \ldots, k-1}} \otimes \tau^{W_k}_{u_k} \otimes I_{\reg{aux}})\\
&= \tau^{W_1}_{u_1} \otimes \cdots \otimes \tau^{W_k}_{u_k} \otimes I_{\reg{aux}}.
\end{align*}
In addition, if $S = \{i \neq k \mid W_i = H\}$, then
\begin{equation*}
\Lambda_{a_1, a_2}^{x_1, x_2}
= (I_{\reg{1, \ldots, k-1}} \otimes \tau^{W_k}_{u_k} \otimes I_{\reg{aux}})\cdot
(I_S \otimes A_{\overline{S}}) \cdot (I_{\reg{1, \ldots, k-1}} \otimes \tau^{W_k}_{u_k} \otimes I_{\reg{aux}})
= I_S \otimes A'_{\overline{S}},
\end{equation*}
where~$A$ and~$A'$ are matrices acting on the registers not in~$S$ and on the auxiliary register.

Next, we show that this has good value.
Write $\calD$ for the marginal distribution of questions given to player~$1$ in $\game_{\mathrm{semi}}$.
By the consistency check,
\begin{equation*}
(M_{(u_1, \ldots, u_{k-1}),a_2}^{x_1, x_2})_{\reg{Alice}} \otimes I_{\reg{Bob}}
\consistency_{\delta(\eps)} I_{\reg{Alice}} \otimes (M_{(u_1, \ldots, u_{k-1}),a_2}^{x_1, x_2})_{\reg{Bob}}
\end{equation*}
with respect to~$\calD$.
As a result, \Cref{eq:looks-like-tau} and \Cref{fact:sandwich} imply that
\begin{equation*}
(\Lambda_a^x)_{\reg{Alice}} \otimes I_{\reg{Bob}}
\approx_{\delta(\eps)} I_{\reg{Alice}} \otimes (M_a^x)_{\reg{Bob}}
\approx_{\delta(\eps)} (M_a^x)_{\reg{Alice}} \otimes I_{\reg{Bob}},
\end{equation*}
where the last step uses the self-consistency of~$M$.
Applying \Cref{fact:approx-delta-game-value}, $\calS_{\mathrm{semi}}$ passes $\game_{\mathrm{semi}}$ with probability at least $\valstrat{\game_{k-1}}{\calS} - \delta(\eps)$.  
Thus, $\valsemi{\register}{\game_{\mathrm{semi}}} \geq 1 - \delta(\eps)$, and we are done.
\end{proof}



\def\bpm#1\epm{\begin{pmatrix}#1\end{pmatrix}}

\section{The data hiding game}\label{sec:data-hiding-layer}

In this section, we introduce a new, simple game called the \emph{data hiding game}.
This game assumes two $(k,n,q)$-semiregister provers with a shared state $\ket{r_1} \cdots \ket{r_k} \ket{\mathrm{aux}}$.
The goal is to test that a given measurement $\{M_a^x\}_a$ acts as the identity on the $k$-th register.

\begin{definition}\label{def:data-hide-def}
Let $x = (x_1, x_2)$ with $x_1 = (W_1, \ldots, W_k)$, and suppose $W_k = \hideq$.
Then the \emph{data hiding game}~$\game_{\mathrm{hide}} := \game_{\mathrm{hide}}(x)$ is given by~\Cref{fig:hide-one}.
It has the following parameters:
\begin{equation*}
\qtime{\game_{\mathrm{hide}}},
\qlength{\game_{\mathrm{hide}}} = O(|x|),
\quad
\atime{\game_{\mathrm{hide}}},
\alength{\game_{\mathrm{hide}}} = O(\textstyle{\sum_i} n_i \log(q_i) + \ell).
\end{equation*}
Here $\ell$ is the maximum of $|a_2|, |a_2'|$ over all answers $a_2$ and $a_2'$ given by the provers.
\end{definition}

{
\floatstyle{boxed} 
\restylefloat{figure}
\begin{figure}[htbp]
Draw $\bW \sim \{X, Z\}$. Set $\bx' = (\bx_1', x_2)$, where $\bx_1' = (W_1, \ldots, W_{k-1}, \bW)$.
Flip an unbiased coin $\bb \sim \{0, 1\}$. Distribute the questions as follows:
\begin{itemize}
\item[$\circ$] Player~$\bb$: give $x$; receive $(\ba_1, \ba_2)$.
\item[$\circ$] Player~$\overline{\bb}$: give $\bx'$; receive $(\ba_1', \ba_2')$.
\end{itemize}
Accept if $\ba_2 = \ba_2'$.
\caption{The game $\game_{\mathrm{hide}}(x)$, with input $x = (x_1, x_2)$ \label{fig:hide-one}}
\end{figure}
}

For a measurement $\{M_a\}_a$ which operates on multiple subsystems,
it will be convenient to define a version of the measurement in which one of the subsystems is ``hidden".

\begin{notation}
Let $M$ be a matrix which operates on $\mathcal{H}_1 \otimes \cdots \otimes \mathcal{H}_k \otimes \mathcal{H}_{\mathrm{aux}}$, and let $i \in [k]$.
Define the notation
\begin{equation*}
\mathrm{hide}_i(M)
:= \frac{1}{\tr(I_i)} \cdot I_i \otimes \tr_i(M).
\end{equation*}
\label{not:hide}
\end{notation}
If $\{M_a\}_a$ is a measurement, then so is $\{\hide{k}{M_a}\}_a$ (though it may not be projective, even if $\{M_a\}_a$ is).
Our main result regarding the data hiding game is that passing it with high probability certifies that $\{M_a\}_a$ is close to $\{\hide{k}{M_a}\}$.

\begin{theorem}\label{thm:data-hiding-game}
Let $x$ be as in \Cref{def:data-hide-def}.
\begin{itemize}
\item[$\circ$] \textbf{Completeness:}
Let $\calS_{\mathrm{partial}} = (\psi, M^x)$ be a partial $(k,n,q)$-register strategy
			which is also a real commuting EPR strategy.
			Then there is a $(k,n,q)$-register
                        strategy~$\calS$ extending
                        $\calS_{\mathrm{partial}}$ which is also a
                        real commuting EPR strategy
			such that $\valstrat{\game_{\mathrm{hide}}}{\calS} = 1$.
\item[$\circ$] \textbf{Soundness:} 
Let $\calS = (\psi, M)$ be a projective $(k, n, q)$-semiregister strategy such that $\valstrat{\game_{\mathrm{hide}}}{\calS} \geq 1-\epsilon$. Then
\begin{equation*}
(M_a^x)_{\reg{Alice}} \otimes I_{\reg{Bob}} \approx_\eps (\hide{k}{M_a^x})_{\reg{Alice}} \otimes I_{\reg{Bob}}
\end{equation*}
on the singleton distribution on input~$x$.
\end{itemize}
\end{theorem}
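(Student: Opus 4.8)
The plan is to treat completeness and soundness separately, with soundness being the heart of the matter. For \textbf{completeness}, start from the partial real commuting EPR strategy $\calS_{\mathrm{partial}} = (\psi, M^x)$. Since $W_k = \hideq$ for the input $x$, the register semiregister structure forces $M^x_a = I_k \otimes A_a$ for some POVM $\{A_a\}$ acting on the remaining registers and auxiliary system. We extend $\calS_{\mathrm{partial}}$ to a full strategy $\calS$ by specifying the behavior on the sibling question $x' = (x_1', x_2)$, where $x_1'$ replaces the $k$-th entry $\hideq$ by a Pauli basis query $W \in \{X,Z\}$: the prover measures $\tau^W$ on register $r_k$, recording the outcome as the first-block answer, and applies the \emph{same} operator $A_a$ as before to produce the second block $a_2$. (This is consistent because $A_a$ commutes with the $\tau^W$ measurement on $r_k$ — they act on disjoint registers.) Then on input $x$ both players return second block $a_2$ sampled from $\{A_a\}$ on the shared state, and on input $x'$ likewise. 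We must check that $\ba_2 = \ba_2'$ with probability $1$: this follows from \Cref{fact:heh-heh-heh-gonna-make-anand-prove-this-so-i-can-take-the-day-off}, since the strategy is a real EPR strategy, so the marginal second-block measurements are perfectly consistent across the two players regardless of which of $x, x'$ each receives. Finally one verifies the extension stays a real commuting EPR strategy — commutation between the $x$ and $x'$ measurements holds because on register $r_k$ one is the identity and the other is a Pauli projector, and on the other registers both equal $A_a$, which commutes with itself.

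For \textbf{soundness}, suppose $\calS = (\psi, M)$ is a projective $(k,n,q)$-semiregister strategy passing $\game_{\mathrm{hide}}(x)$ with probability $1-\eps$. The state is $\ket{\psi} = \ket{r_1}\otimes\cdots\otimes\ket{r_k}\otimes\ket{\mathrm{aux}}$ with $\ket{r_k} = \ket{\mathrm{EPR}_{q_k}^{n_k}}$. The game compares the second block $a_2$ from input $x$ (held by one player) against $a_2'$ from input $x'$ (held by the other), where $x'$ has a random $\bW \in \{X,Z\}$ in the $k$-th slot. Since in a semiregister strategy the $k$-th slot of $x'$ being $\bW$ means the player first applies $\tau^{\bW}$ on $r_k$ and then applies a POVM that may act on $r_k$ for the second block — wait, actually for semiregister strategies the data hiding set is $S = \{i \neq k : W_i = \hideq\}$, so register $k$ is \emph{not} hidden and the second-block POVM may act on it freely. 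The acceptance condition gives, via \Cref{not:marginalize} and the consistency-game interpretation,
\begin{equation*}
(M^x_{a_2})_{\reg{Alice}} \otimes I_{\reg{Bob}} \consistency_{\eps} I_{\reg{Alice}} \otimes (M^{x'}_{a_2'})_{\reg{Bob}},
\end{equation*}
for $\bW$ uniform over $\{X,Z\}$ (here $M^x_{a_2}$ and $M^{x'}_{a_2'}$ denote the marginalized second-block measurements). The key point is that on input $x$ the player applies $\tau^{\hideq}_\varnothing = I$ on register $r_k$, so $M^x_{a_2}$ acts trivially there; whereas on input $x'$ the player genuinely measures $\tau^{\bW}$ first. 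The proof then uses the stabilizer relations \eqref{eq:epr-stab-x}, \eqref{eq:epr-stab-z} for the EPR state together with the fact that the identity is the \emph{only} operator on $r_k$ measurable jointly with both $\tau^X$ and $\tau^Z$: intuitively, for Alice's operator $M^x_{a_2}$ to agree with Bob's operator $M^{x'}_{a_2'}$ (which has been ``post-measured'' by $\tau^{\bW}$) on the EPR state, for \emph{both} choices of $\bW$, the $r_k$-component of $M^x_{a_2}$ must be (close to) proportional to the identity. Quantifying this is exactly the statement that $M^x_{a_2} \approx_\eps \hide{k}{M^x_{a_2}}$.

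The main obstacle will be making rigorous the step ``consistency against both the $X$- and $Z$-post-measured operators forces the operator to be identity on $r_k$.'' I expect to argue as follows: let $P = M^x_{a_2}$, a projector that is the identity on register $r_k$ by construction (it came with a $\hideq$ there)... actually no — in the \emph{semiregister} setting $P$ need \emph{not} be identity on $r_k$, that's the whole point. So instead I'd write $P$ in terms of its action on $r_k$ via a decomposition over the Pauli operator basis $\{X(s)Z(t)\}$ on $r_k$, push the EPR stabilizer relations \eqref{eq:epr-stab}, \eqref{eq:epr-stab-x}, \eqref{eq:epr-stab-z} through the consistency condition to move Bob's $\tau^{\bW}$ onto Alice's side, and use \Cref{fact:averages-to-zero} / \Cref{fact:averages-to-zero-subspace} to kill all non-identity Pauli components when we average over $\bW \in \{X,Z\}$ and over the basis label. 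The arithmetic bookkeeping — tracking how the $\approx_\eps$ error propagates through these substitutions, and confirming the surviving term is precisely $\hide{k}{P} = \frac{1}{\tr(I_k)} I_k \otimes \tr_k(P)$ — is routine given the state-dependent-distance machinery of \Cref{fact:triangle,fact:add-a-proj,fact:the-ol-pauli-swaperoonie}, but the conceptual crux is that no nontrivial measurement is compatible with a complete basis and its complement, which is what \eqref{eq:epr-stab-x}–\eqref{eq:epr-stab-z} encode for the EPR state.
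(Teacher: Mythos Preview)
Your completeness sketch matches the paper's argument.

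For soundness, you have the right endpoint (Pauli-twirl the operator on register $r_k$ and identify the twirl with $\hide{k}{\cdot}$), but you are missing the bridge that gets you there. The only information the game gives you is
\[
(M^x_{a_2})_{\reg{Alice}} \otimes I_{\reg{Bob}} \;\approx_\eps\; I_{\reg{Alice}} \otimes (M^{x'}_{a_2})_{\reg{Bob}},
\]
and $M^{x'}_{a_2}$ is an arbitrary adversarial operator --- it is \emph{not} ``post-measured by $\tau^{\bW}$'' in any sense you can directly exploit by decomposing $M^x_{a_2}$ in the Pauli basis. The $\tau^W$ structure enters only through the \emph{other} marginal of the semiregister measurement, namely $M^{x'}_{u_k} = \tau^W_{u_k} \otimes I_{\overline{k}}$, which you never invoke. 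Your plan to ``push the stabilizer relations through and average over $\bW$ to kill non-identity Pauli components'' has no traction without first turning the consistency statement into a statement that relates $M^x_{a_2}$ to $\tau^W_{u_k}$.

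The paper's route fills this gap with an explicit \emph{commutation-test} chain: using the consistency, the EPR swap (\Cref{fact:the-ol-pauli-swaperoonie}), and the fact that $M^{x'}_{u_k}$ and $M^{x'}_{a_2}$ commute exactly (they are marginals of a single projective measurement), one derives
\[
(\tau^W_{u_k}\otimes I_{\overline{k}})\, M^x_{a_2} \;\approx_\eps\; M^x_{a_2}\,(\tau^W_{u_k}\otimes I_{\overline{k}})
\]
on Alice's side. Only then does the Pauli machinery kick in: a Fourier computation (\Cref{lem:proj-to-obs}) upgrades commutation with the basis projectors $\tau^W_{u_k}$ to commutation with the observables $W(u)$ for uniformly random $u$; combining $W=X$ and $W=Z$ via the EPR swap gives approximate commutation with $Z(u')X(u)$; and finally \Cref{lem:commute-to-twirl} together with \Cref{prop:hidden-twirl} yields $M^x_{a_2} \approx_\eps \twirl_k(M^x_{a_2}) = \hide{k}{M^x_{a_2}}$. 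Your proposal essentially contains the last step but skips the first two, and those are where the actual work is.
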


This section is organized as follows:
in \Cref{sec:pauli-twirl} we introduce the \emph{Pauli twirl},
and in \Cref{sec:data-hiding-game} we use it to prove \Cref{thm:data-hiding-game}.
Finally, in \Cref{sec:compile-sec}, we design our compiler from layer-two to layer-one.
This last step is essentially standard and is included for completeness.

\subsection{Some facts about the Pauli twirl}\label{sec:pauli-twirl}

\newcommand{\push}{\nu}
\newcommand{\bpush}{\bnu}
\newcommand{\twirlset}{S_{\mathrm{twirl}}}
\newcommand{\twirldist}{\calD_{\mathrm{twirl}}}

\begin{definition}
  The \emph{Pauli twirl} $\twirl: \calB((\mathbb{C}^q)^{\ot n})
  \to \calB((\mathbb{C}^q)^{\ot n})$ is the linear operator
  \[ \twirl(A) := \E_{\bu, \bu' \sim \F_q^n}\left[
      X(\bu) Z(\bu') \cdot A \cdot Z(-\bu') X(-\bu) \right]. \]
\end{definition}

\begin{proposition}\label{claim:twirled_pauli}
  Let $P$ be a Pauli matrix on $n$ qudits of dimension $q$. Then
$
\twirl(P) =  P
$
if $P$ is a multiple of the identity,
and otherwise $\twirl(P) = 0$.
\end{proposition}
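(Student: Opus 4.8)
\textbf{Proof plan for Proposition~\ref{claim:twirled_pauli}.}
The plan is to compute $\twirl(P)$ directly using the commutation relations for the Pauli group. First I would write an arbitrary Pauli matrix as $P = \omega^a X(x) Z(z)$ for some $x, z \in \F_q^n$ and phase $\omega^a$, noting that $P$ is a multiple of the identity precisely when $x = z = 0$. The phase $\omega^a$ is a scalar and commutes through everything, so it can be pulled out of $\twirl$ at the start and carried along passively; the heart of the calculation is therefore understanding $\twirl(X(x)Z(z))$.

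Next I would push the twirling Paulis $X(\bu)Z(\bu')$ past $X(x)Z(z)$ using the multi-qudit version of the commutation relation \eqref{eq:commutation-relations}, namely $X(u)Z(z) = \omega^{-\tr[\langle u, z\rangle]} Z(z)X(u)$ and the analogous identity for $Z(u')$ and $X(x)$, together with the fact that $X$-type operators commute among themselves and likewise for $Z$-type. Doing this bookkeeping, the operator part $X(\bu)Z(\bu') X(x)Z(z) Z(-\bu')X(-\bu)$ collapses back to $X(x)Z(z)$ (all the twirling operators cancel once commuted to be adjacent to their inverses), leaving behind a scalar phase of the form $\omega^{\tr[\langle \bu, z\rangle] - \tr[\langle \bu', x\rangle]}$ (up to a sign convention I would fix carefully). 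Hence
\[
\twirl(X(x)Z(z)) = \left(\E_{\bu \sim \F_q^n} \omega^{\tr[\langle \bu, z\rangle]}\right)\left(\E_{\bu' \sim \F_q^n} \omega^{-\tr[\langle \bu', x\rangle]}\right) X(x)Z(z).
\]
Now I would invoke \Cref{fact:averages-to-zero} (applied coordinatewise, or directly its vector form): each of these two expectations equals $1$ if the relevant argument ($z$ or $x$ respectively) is zero, and equals $0$ otherwise. Therefore the product of the two scalar factors is $1$ when $x = z = 0$ — i.e.\ when $P$ is a multiple of the identity, in which case $\twirl(P) = P$ — and is $0$ whenever $x \neq 0$ or $z \neq 0$, giving $\twirl(P) = 0$. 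Reinstating the phase $\omega^a$ changes nothing since it multiplies both sides equally.

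The main obstacle is purely one of careful sign- and phase-tracking: getting the exponents in the commutation relations right when moving $X(\bu)Z(\bu')$ and its inverse $Z(-\bu')X(-\bu)$ through $X(x)Z(z)$, and making sure the residual phase is exactly of the form to which \Cref{fact:averages-to-zero} applies. Since $\tr$ is $\F_p$-linear and the dot product is bilinear, $\tr[\langle \bu, z \rangle]$ as a function of $\bu$ is an $\F_p$-linear functional that is nonzero iff $z \neq 0$, so the hypothesis of \Cref{fact:averages-to-zero} is met; I would spell this reduction out to be safe. Everything else is routine.
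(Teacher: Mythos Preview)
Your proposal is correct and follows essentially the same approach as the paper: write $P = \omega^a X(x)Z(z)$, use the commutation relation \eqref{eq:commutation-relations} to pull the twirling operators through and extract a scalar phase $\omega^{\tr[\langle \bu', x\rangle - \langle \bu, z\rangle]}$ (the paper gets the opposite overall sign, but this is immaterial), and then apply \Cref{fact:averages-to-zero} to each factor to conclude the product vanishes unless $x = z = 0$. Your remark about needing the vector version of \Cref{fact:averages-to-zero} is apt; the paper invokes it implicitly, and it follows either coordinatewise or as the special case $V = \F_q^n$ of \Cref{fact:averages-to-zero-subspace}.
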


\begin{proof}
The case when~$P$ is a multiple of the identity follows from the definition.
Otherwise, we can write $P = \omega^z X(a) Z(b)$, where at least one of~$a$ and~$b$ is nonzero. Then
\begin{align*}
     \twirl(P)
&=  \E_{\bu, \bu'} \left[ X(\bu) Z(\bu') \cdot P \cdot
		Z(-\bu') X(-\bu) \right]  \\
&= \omega^z \E_{\bu, \bu'} \left[ X(\bu) Z(\bu') \cdot X(a) Z(b) \cdot
		Z(-\bu') X(-\bu) \right].
\end{align*}
By the Pauli~$X$ and~$Z$ commutation relations (\Cref{eq:commutation-relations}),
this rearranges to
\begin{equation*}
 \omega^z \E_{\bu, \bu'} \left[\omega^{\tr[\bu' \cdot a - \bu \cdot b]}\right] \cdot X(a) Z(b)
=  \E_{\bu, \bu'} \left[ \omega^{\tr[\bu' \cdot a - \bu \cdot b]} \right] \cdot P
    =  \E_{\bu'} \big[ \omega^{\tr[\bu' \cdot a]}\big] \cdot \E_{\bu}\big[\omega^{- \tr[\bu \cdot b]} \big] \cdot P
    = 0.
\end{equation*}
Here the last step uses \Cref{fact:averages-to-zero} and the fact that at least one of~$a$ or~$b$ is nonzero.
\end{proof}

In the next couple of sections,
we will consider the effects of applying the Pauli twirl to our measurements.
For convenience, we will ``group" our state into two parts: $\ket{\psi_1} = \ket{r_k}$ 
is the subsystem we want to hide,
and $\ket{\psi_2} = \ket{r_1} \cdots \ket{r_{k-1}} \ket{\mathrm{aux}}$ is the remaining part of this state.
In this way, we can consider our measurements as operating on the bipartite state $\ket{\psi_1} \ket{\psi_2}$.

\begin{proposition}\label{prop:hidden-twirl}
Let $\{M_a\}$ be a measurement on the state $\ket{\psi} = \ket{\psi_1} \ket{\psi_2}$.
Then
\begin{equation*}
(\twirl_1\otimes \id_{2})[M_a]
=
\hide{1}{M_a},
\end{equation*}
where $\id_{2}$ is the identity superoperator applied to the second register.
\end{proposition}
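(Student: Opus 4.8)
The identity to prove is that applying the Pauli twirl on the first (to-be-hidden) register to the operator $M_a$ yields $\hide{1}{M_a} = \frac{1}{\Tr(I_1)} I_1 \otimes \Tr_1(M_a)$. The plan is to work entirely at the level of superoperators, using the fact that $\twirl_1$ is a projection onto the commutant of the single-register Pauli group, i.e.\ onto operators on register $1$ that are multiples of $I_1$. First I would expand $M_a$ in the Pauli basis of the first register: since the $n$-qudit Pauli matrices $\{P\}$ (over $\F_{q_k}$, appropriately normalized) form a basis for $\calB((\C^{q_k})^{\ot n})$, I can write $M_a = \sum_{P} P \otimes B_P$ for some operators $B_P$ acting on the second register, where the sum ranges over a basis of Paulis $P$ on register $1$.

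Next I would apply $\twirl_1 \otimes \id_2$ term by term, using linearity and \Cref{claim:twirled_pauli}: every term with $P$ not a multiple of the identity is annihilated, and the identity term $P = c\cdot I_1$ survives unchanged. Hence $(\twirl_1 \otimes \id_2)[M_a] = I_1 \otimes B_{I_1}$ (absorbing the constant $c$ into $B_{I_1}$). The remaining task is to identify $B_{I_1}$ with $\frac{1}{\Tr(I_1)}\Tr_1(M_a)$. This follows from taking the partial trace over register $1$ of the Pauli expansion: $\Tr_1(M_a) = \sum_P \Tr_1(P) \cdot B_P = \Tr(I_1) \cdot B_{I_1}$, since $\Tr_1(P) = 0$ for every non-identity Pauli $P$ (Paulis are traceless) and $\Tr_1(I_1) = \Tr(I_1)$. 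Solving for $B_{I_1}$ gives $B_{I_1} = \frac{1}{\Tr(I_1)}\Tr_1(M_a)$, which combined with the previous display yields exactly $\hide{1}{M_a}$ as in \Cref{not:hide}.

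The one point requiring a little care — and the closest thing to an obstacle — is that the definition of $\twirl$ uses only the $X$- and $Z$-type generators $X(\bu)Z(\bu')$ rather than the full Pauli group including phases $\omega^a$, so I should confirm that the orbit $\{X(\bu)Z(\bu') \cdot P \cdot Z(-\bu')X(-\bu)\}$ still averages to zero for non-scalar $P$; but this is precisely the content of \Cref{claim:twirled_pauli}, which I may invoke directly, so no new work is needed. A second minor point is that $M_a$ need not be projective or even Hermitian in general (the hiding operation is applied to arbitrary POVM elements), but the Pauli-basis expansion and the linearity argument above make no use of such structure, so the proof goes through for arbitrary $M_a \in \calB(\calH_1 \otimes \calH_2)$. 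I would present the argument in three short displayed steps (Pauli expansion, twirl annihilates non-scalars, partial-trace identifies the coefficient), each a one-line computation.
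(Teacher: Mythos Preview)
Your proposal is correct and is essentially identical to the paper's proof: both expand $M_a$ in the Pauli basis on register~$1$, invoke \Cref{claim:twirled_pauli} to kill every non-identity term under the twirl, and then use tracelessness of non-identity Paulis to identify the surviving coefficient with $\frac{1}{\tr(I_1)}\tr_1(M_a)$. The only cosmetic difference is that the paper computes both $(\twirl_1\otimes\id_2)[M_a]$ and $\hide{1}{M_a}$ separately and observes they both equal $I\otimes M_{a,0}$, whereas you compute the twirl first and then solve for the coefficient; the content is the same.
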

\begin{proof}
Let $P_J$ be the elements of the Pauli group on $n$ qudits of dimension~$q$, with $P_0 = I$.
Because these form a basis for the set of matrices, we can write
\begin{equation*}
M_a = \sum_{J} P_J \otimes M_{a, J},
\end{equation*}
where the $M_{a, J}$'s are matrices acting on the auxiliary register. 
Using \Cref{claim:twirled_pauli},
\begin{equation*}
(\twirl_1\otimes \id_{2})[M_a]
= \sum_{J}  \twirl(P_J) \otimes M_{a,J}
= P_0 \otimes M_{a,0}
= I \otimes M_{a,0}.
\end{equation*}
On the other hand, because $P_J$ is traceless unless $J=0$ (i.e.\ $P_J$ is the identity),
\begin{equation*}
\hide{1}{M_a}
= \sum_{J} \hide{1}{P_J \otimes M_{a,J}} 
= \sum_{J} \frac{1}{q^n} \cdot I \otimes \tr_1(P_J \otimes M_{a,J})
= I \otimes M_{a, 0}.
\end{equation*}
These two are equal, completing the proof.
\end{proof}

\subsection{Hiding a single coordinate}\label{sec:data-hiding-game}

In this section, we prove \Cref{thm:data-hiding-game}. Prior to doing so, we prove a couple of technical lemmas.
The first shows that a measurement which approximately commutes with the Pauli measurements
also approximately commutes with the Pauli observables.

\begin{lemma}\label{lem:proj-to-obs}
Let $W \in \{X, Z\}$.
Suppose $\{M_{a}\}$ is a measurement on the state $\ket{\psi} = \ket{\mathrm{EPR}_q^n} \ket{\psi_2}$ for which
\begin{equation*}
(M_{a} \cdot (\tau^{W}_{u} \otimes I_{\reg{2}}))_{\reg{Alice}} \otimes I_{\reg{Bob}}
\approx_{\delta} ((\tau^{W}_{u} \otimes I_{\reg{2}}) \cdot M_{a})_{\reg{Alice}} \otimes I_{\reg{Bob}}.
\end{equation*}
Then the statement
\begin{equation*}
(M_{a} \cdot (W(u) \otimes I_{\reg{2}}))_{\reg{Alice}} \otimes I_{\reg{Bob}}
\approx_{\delta} ((W(u) \otimes I_{\reg{2}}) \cdot M_{a})_{\reg{Alice}} \otimes I_{\reg{Bob}}
\end{equation*}
holds with respect to the uniform distribution on $\bu \in \F_q^n$.
\end{lemma}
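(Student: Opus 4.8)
The plan is to deduce the approximate commutation of $\{M_a\}$ with the observables $W(u)$ from its approximate commutation with the projectors $\tau^W_u$ in a single application of Fourier inversion over $\F_q^n$. The starting point is \Cref{eq:pauli-obs-to-proj}, which gives $W(v) = \sum_{u\in\F_q^n}\omega^{\tr[u\cdot v]}\,\tau^W_u$. Writing $\ket{\xi_{a,u}}$ for the vector obtained by applying the operator $\bigl(M_a(\tau^W_u\otimes I_{\reg{2}}) - (\tau^W_u\otimes I_{\reg{2}})M_a\bigr)_{\reg{Alice}}\otimes I_{\reg{Bob}}$ to $\ket{\psi}$, and similarly $\ket{\zeta_{a,v}}$ for the analogous vector with $W(v)$ in place of $\tau^W_u$, linearity gives
\[
\ket{\zeta_{a,v}} \;=\; \sum_{u\in\F_q^n}\omega^{\tr[u\cdot v]}\,\ket{\xi_{a,u}}.
\]
The quantity we must bound is $\E_{\bv\sim\F_q^n}\sum_a\Vert\ket{\zeta_{a,\bv}}\Vert^2$, while by hypothesis $\sum_{a}\sum_{u}\Vert\ket{\xi_{a,u}}\Vert^2 = O(\delta)$.

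Next I would expand the squared norm and average over $\bv$. Since
\[
\E_{\bv\sim\F_q^n}\sum_a\Vert\ket{\zeta_{a,\bv}}\Vert^2
\;=\; \sum_a\sum_{u,u'\in\F_q^n}\Bigl(\E_{\bv\sim\F_q^n}\omega^{\tr[(u-u')\cdot\bv]}\Bigr)\,\braket{\xi_{a,u'}\mid\xi_{a,u}},
\]
and $\E_{\bv\sim\F_q^n}\omega^{\tr[(u-u')\cdot\bv]}$ equals $1$ if $u=u'$ and $0$ otherwise --- this is \Cref{fact:averages-to-zero} applied to each of the $n$ coordinates --- every cross term vanishes and we are left with
\[
\E_{\bv\sim\F_q^n}\sum_a\Vert\ket{\zeta_{a,\bv}}\Vert^2 \;=\; \sum_a\sum_{u\in\F_q^n}\Vert\ket{\xi_{a,u}}\Vert^2 \;=\; O(\delta).
\]
This is precisely the assertion that $M_a\cdot(W(u)\otimes I_{\reg{2}}) \approx_\delta (W(u)\otimes I_{\reg{2}})\cdot M_a$ (as operators on $\reg{Alice}$, tensored with $I_{\reg{Bob}}$) with respect to the uniform distribution on $\bu\in\F_q^n$, which is the desired conclusion.

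The single point that deserves care --- and it is bookkeeping rather than a genuine obstacle --- is the normalization: the uniform average $\E_{\bv} = q^{-n}\sum_{\bv}$ in the conclusion is exactly what converts the ``weight-$q^n$'' Fourier expansion of $W(\bv)$ into the \emph{unnormalized} double sum $\sum_{a,u}$ appearing in the hypothesis, so the displayed identity is an exact equality and no error blow-up by a factor of $q^n$ occurs. I would also note that the ambient $\otimes I_{\reg{Bob}}$ and the passive $\reg{2}$ tensor factor play no role in the computation beyond being carried along, and that the characteristic-$2$ hypothesis $q=2^t$ is not actually needed here: the argument works verbatim for any prime power $q=p^t$, using $\overline{\omega^{\tr[u'\cdot v]}}=\omega^{-\tr[u'\cdot v]}$ and $\F_p$-linearity of $\tr$.
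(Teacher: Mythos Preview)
Your proposal is correct and follows essentially the same approach as the paper: both expand $W(v)$ via \Cref{eq:pauli-obs-to-proj}, then use the character orthogonality relation (\Cref{fact:averages-to-zero}) to collapse the averaged squared norm of the observable-commutator to the summed squared norm of the projector-commutators. The only cosmetic differences are that the paper works with operator commutators $\Delta^u_a$ and $\Delta_{a,v}$ before applying them to $\ket{\psi}$ (whereas you work directly with the resulting vectors), and the paper's labeling of the Fourier indices $u,v$ is swapped relative to yours.
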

\begin{proof}
Our goal is to bound
\begin{equation}\label{eq:gonna-bound-this-foist}
\E_{\bu} \sum_{a} \Vert (M_{a} \cdot (W(\bu) \otimes I_{\reg{2}}) - (W(\bu) \otimes I_{\reg{2}}) \cdot M_{a}) \otimes I \ket{\psi} \Vert^2.
\end{equation}
by $\delta$.
To do so, for a fixed~$u$ we introduce the notation
\begin{align}
\Delta_{a}^{u}
&:= M_{a} \cdot (W(u) \otimes I_{\reg{2}}) - (W(u) \otimes I_{\reg{2}}) \cdot M_{a}\nonumber\\
&= \sum_{v \in \F_q^n} \omega^{\tr[u \cdot v]} (\underbrace{M_{a}
	\cdot (\tau^{W}_{v} \otimes I_{\reg{2}})- (\tau^{W}_v \otimes I_{\reg{2}}) \cdot M_{a}}_{\Delta_{a,v}}).\label{eq:pauli-obs-commutator-foist}
\end{align}
We record the following identity, which follows from~\Cref{eq:pauli-obs-commutator-foist}:
\begin{equation*}
\E_{\bu} (\Delta_{a}^{\bu})^\dagger\cdot \Delta_{a}^{\bu}
= \E_{\bu} \sum_{v, v' \in \F_q^n} \omega^{\tr[\bu \cdot (v' - v)]} (\Delta_{a,v})^\dagger \Delta_{a,v'}
= \sum_{v \in \F_q^n} (\Delta_{a,v})^\dagger \Delta_{a,v}.
\end{equation*}
As a result,
\begin{align*}
\eqref{eq:gonna-bound-this-foist}=
\E_{\bu} \sum_{a}\Vert (\Delta_{a}^{\bu} \otimes I) \ket{\psi}\Vert^2
&= \E_{\bu} \sum_{a} \bra{\psi} (\Delta_{a}^{\bu})^\dagger \Delta_{a}^{\bu} \otimes I \ket{\psi}\\
&=  \sum_{a} \sum_{v \in \F_q^n} \bra{\psi}(\Delta_{a,v})^\dagger \Delta_{a,v} \otimes I \ket{\psi}\\
&= \sum_{a, v} \Vert \Delta_{a,v} \otimes I \ket{\psi} \Vert^2.
\end{align*}
But this is at most $O(\delta)$, by assumption. This completes the proof.
\end{proof}

The next technical lemma shows that a measurement which approximately commutes with products of $X$ and $Z$ observables
is approximately equal to its own Pauli twirl.

\begin{lemma}\label{lem:commute-to-twirl}
Consider the distribution~$\calD$ on pairs $(\bu, \bu')$, where $\bu, \bu' \sim \F_q^n$.
Suppose $\{M_{a}\}$ is a measurement on the state $\ket{\psi} = \ket{\mathrm{EPR}_q^n} \ket{\psi_2}$ for which
\begin{equation*}
((Z(u') X(u)\otimes I_{\reg{2}}) \cdot M_{a}) \otimes I_{\reg{Bob}}
\approx_{\delta} (M_{a} \cdot (Z(u') X(u)\otimes I_{\reg{2}})) \otimes I_{\reg{Bob}}.
\end{equation*}
on distribution~$\calD$. Then
\begin{equation*}
M_{a} \otimes I_{\reg{Bob}} \approx_{\delta} (\twirl_1 \otimes \mathrm{id}_{\mathrm{2}})[M_{a}] \otimes I_{\reg{Bob}}.
\end{equation*}
\end{lemma}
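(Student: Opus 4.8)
The plan is to show that $M_a$ and its Pauli twirl differ, in state-dependent distance, by exactly the quantity measuring how badly $M_a$ fails to commute with the Pauli observables $Z(u')X(u)$ -- which is the hypothesis. The first step is to rewrite the twirl as a conjugation. By definition, $(\twirl_1 \otimes \id_2)[M_a] = \E_{\bu, \bu'}\, (X(\bu)Z(\bu') \otimes I_{\reg 2})\, M_a\, (Z(-\bu')X(-\bu) \otimes I_{\reg 2})$, and since $Z(-\bu')X(-\bu) = (X(\bu)Z(\bu'))^{-1}$, this is conjugation by the unitary $X(\bu)Z(\bu') \otimes I_{\reg 2}$ on Alice's space. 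By the commutation relation~\eqref{eq:commutation-relations}, $X(\bu)Z(\bu') = \omega^{-\tr[\bu\cdot\bu']}\, Z(\bu')X(\bu)$, and the scalar cancels under conjugation, so writing $P := P_{\bu,\bu'} := Z(\bu')X(\bu) \otimes I_{\reg 2}$ we have $(\twirl_1 \otimes \id_2)[M_a] = \E_{\bu,\bu'}\, P M_a P^{-1}$, with the same $Z(\bu')X(\bu)$ that appears in the hypothesis.

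Next I would pull the expectation outside using convexity of $\ket{v}\mapsto\|\ket v\|^2$ (Jensen's inequality), reducing the target to a bound on each $P$-conjugate:
\[ \sum_a \big\|\big((M_a - (\twirl_1 \otimes \id_2)[M_a]) \otimes I_{\reg{Bob}}\big)\ket\psi\big\|^2 \;\leq\; \E_{\bu,\bu'} \sum_a \big\|\big((M_a - P_{\bu,\bu'} M_a P_{\bu,\bu'}^{-1}) \otimes I_{\reg{Bob}}\big)\ket\psi\big\|^2. \]
The crux is the identity $\|((M_a - P M_a P^{-1}) \otimes I_{\reg{Bob}})\ket\psi\| = \|((P M_a - M_a P) \otimes I_{\reg{Bob}})\ket\psi\|$. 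To see it, write $M_a - P M_a P^{-1} = (M_a P - P M_a) P^{-1}$, and apply the EPR stabilizer relations~\eqref{eq:epr-stab} to the register $\ket{r_k} = \ket{\mathrm{EPR}_q^n}$ inside $\ket\psi = \ket{\psi_1}\ket{\psi_2}$: transferring $X(-\bu)$ and $Z(-\bu')$ one factor at a time from Alice's half to Bob's half shows $(P^{-1} \otimes I_{\reg{Bob}})\ket\psi = (I_{\reg{Alice}} \otimes R)\ket\psi$ for the unitary $R := Z(-\bu')X(\bu)$ acting on Bob's half of $\ket{r_k}$ (tensored with the identity elsewhere on Bob's space). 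Since $R$ lives on a tensor factor disjoint from $(M_a P - P M_a) \otimes I_{\reg{Bob}}$, it commutes with that operator and, being unitary, can be peeled off without changing the norm; the sign is irrelevant. Substituting back, the right-hand side above becomes exactly $\E_{\bu,\bu'}\sum_a \|(((Z(\bu')X(\bu) \otimes I_{\reg 2})M_a - M_a(Z(\bu')X(\bu) \otimes I_{\reg 2})) \otimes I_{\reg{Bob}})\ket\psi\|^2$, which is $O(\delta)$ by the hypothesis on distribution $\calD$.

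The whole argument is short given~\eqref{eq:commutation-relations} and~\eqref{eq:epr-stab}; the only place requiring care is the stabilizer bookkeeping in the penultimate paragraph -- ensuring the phases cancel under conjugation and that the Pauli is moved to Bob's side with the correct arguments and sign -- but this is routine, so I do not anticipate a genuine obstacle.
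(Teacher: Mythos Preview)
Your proof is correct and follows the same broad outline as the paper's: apply Jensen to pull the expectation outside, then reduce each term to the commutator norm appearing in the hypothesis. The difference is in how the unitary is eliminated. You factor $M_a - P M_a P^{-1} = (M_a P - P M_a)\,P^{-1}$ with $P^{-1}$ on the \emph{right}, which forces you to invoke the EPR stabilizer relations~\eqref{eq:epr-stab} to shuttle $P^{-1}$ over to Bob's side before dropping it by unitarity. The paper instead factors on the \emph{left}: writing $A = X(\bu)Z(\bu') \otimes I_{\reg 2}$ and $B_a = A^{-1} M_a - M_a A^{-1}$, one has $M_a - A M_a A^{-1} = A \cdot B_a$, and then $\|(A B_a) \otimes I_{\reg{Bob}}\,\ket\psi\| = \|B_a \otimes I_{\reg{Bob}}\,\ket\psi\|$ immediately since $A$ is unitary. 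This avoids the stabilizer bookkeeping entirely and, incidentally, works for an arbitrary state $\ket{\psi}$ --- the EPR structure in the hypothesis is never actually used in the paper's argument, and the same left-factoring trick reappears verbatim in the later generalization \Cref{lem:commute-to-twirlU}. Your route is valid but a touch longer than necessary.
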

\begin{proof}
By definition,
\begin{equation*}
(\twirl_1 \otimes \mathrm{id}_{\mathrm{2}})[M_{a}]
= \E_{\bu, \bu'} [(X(\bu) Z(\bu') \otimes I_{\reg{2}}) \cdot M_a \cdot (Z(-\bu') X(-\bu) \otimes I_{\reg{2}})].
\end{equation*}
Similarly,
\begin{equation*}
M_{a}
= \E_{\bu, \bu'} [(X(\bu) Z(\bu') \otimes I_{\reg{2}}) \cdot (Z(-\bu') X(-\bu) \otimes I_{\reg{2}}) \cdot M_a].
\end{equation*}
As a result, if we set
$
A^{\bu, \bu'} = X(\bu) Z(\bu') \otimes I_{\reg{2}},
$
and
\begin{equation*}
B^{\bu, \bu'}_a = (Z(-\bu') X(-\bu) \otimes I_{\reg{2}}) \cdot M_a - M_a \cdot (Z(-\bu') X(-\bu) \otimes I_{\reg{2}}),
\end{equation*}
then
\begin{equation*}
\Delta_a:=
M_a - (\twirl_1 \otimes \mathrm{id}_{\mathrm{2}})[M_a]
= \E_{\bu, \bu'} [A^{\bu, \bu'} \cdot B^{\bu, \bu'}_a].
\end{equation*}
We can therefore establish the lemma as follows:
\begin{align*}
\sum_a \Vert (\Delta_a)_{\reg{Alice}} \otimes I_{\reg{Bob}} \ket{\psi} \Vert^2
& = \sum_a \Vert \E_{\bu, \bu'} [A^{\bu, \bu'} \cdot B^{\bu, \bu'}_a] \otimes I_{\reg{Bob}} \ket{\psi} \Vert^2\\
& \leq \E_{\bu, \bu'} \sum_a \Vert (A^{\bu, \bu'} \cdot B^{\bu, \bu'}_a) \otimes I_{\reg{Bob}} \ket{\psi} \Vert^2\tag{Jensen's inequality}\\
& = \E_{\bu, \bu'} \sum_a \Vert (B^{\bu, \bu'}_a) \otimes I_{\reg{Bob}} \ket{\psi} \Vert^2 \tag{$A^{\bu, \bu'}$ is unitary}
\end{align*}
By assumption, this quantity is $O(\delta)$. This concludes the proof.
\end{proof}

Now we prove \Cref{thm:data-hiding-game}.
\begin{proof}[Proof of \Cref{thm:data-hiding-game}]
We consider the completeness and soundness cases separately.

\paragraph{Completeness.}
Let $\calS_{\mathrm{partial}} = (\psi, M^x)$ be a partial
$(k,n,q)$-register strategy which is also a real commuting EPR strategy.
To this strategy we will add matrices for the questions $x' = (x_1', x_2)$ with $x_1' = (W_1, \ldots, W_{k-1}, W)$.

Let $a_1 = (u_1, \ldots, u_k)$, where $u_i = \varnothing$ if $W_i \in \{\hideq, \noop\}$.
Let~$S = \{i \mid W_i \neq \noop\}$.
By definition of a $(k, n, q)$-register strategy,
\begin{equation*}
M^x_a = \bigotimes_{i \in S} \tau^{W_i}_{u_i} \otimes M^{x, a_1}_{a_2},
\end{equation*}
where $M^{x,a_1}_{a_2}$ acts on the auxiliary registers and the registers not in~$S$.
Next, set $a' = (a_1', a_2)$ where $a_1' = (u_1, \ldots, u_{k-1}, u_k')$ and $u_k' \in \F_{q_k}^{n_k}$.  Then we set
\begin{equation*}
(M')^{x_1', x_2}_{a'_1, a_2} = \bigotimes_{i \in S\setminus k} \tau^{W_i}_{u_i} \otimes \tau^{W}_{u_k'} \otimes M^{x, a_1}_{a_2}.
\end{equation*}
This is a $(k,n,q)$-register strategy for~$\game_{\mathrm{hide}}$ by design.
To see that it is value~$1$, suppose on question~$x$ Player~$\bb$ measures~$\ba_1$.
Then by~\Cref{fact:the-ol-pauli-swaperoonie}, Player~$\overline{\bb}$ will measure~$\ba_1'$ in which $\bu_i' = \bu_i$ for all $i < k$.
As a result, to measure~$\ba_2$, Player~$\bb$ will measure $M^{x, \ba_1}$
and Player~$\overline{\bb}$ will measure $M^{x, \ba_1}$, both on state $\ket{r_{\overline{S}}}\ket{\mathrm{aux}}$.
As this is an EPR state,
by~\Cref{fact:heh-heh-heh-gonna-make-anand-prove-this-so-i-can-take-the-day-off}
the outcomes will always be the same, and so this strategy has
value~$1$. The fact that this a real strategy follows from the
assumption that the matrices $M^x_a$ are real, and the fact that for
$W \in \{X, Z\}$, $\tau^W_u$ is a real matrix.
Finally, the fact that this is a commuting strategy follows from the fact that $M^x_a$ and $(M')^{x'}_{a'}$ are commuting.

\paragraph{Soundness.}
We write $x$ and $x' = (x_1', x_2)$ with $x_1' = (W_1, \ldots, W_{k-1}, W)$ as in the test.
Because the test passes with probability $1-\eps$, \Cref{fact:agreement} implies that
\begin{equation*}
(M_{a_2}^{x})_{\reg{Alice}} \otimes I_{\reg{Bob}}
\approx_{\eps} I_{\reg{Alice}} \otimes (M_{a_2}^{x'})_{\reg{Bob}}.
\end{equation*}
Because $\calS$ is a $(k,n,q)$-semiregister strategy, 
\Cref{eq:in-math} implies that $M_{u_k}^{x'} = \tau^{W}_{u_k} \otimes I_{\overline{k}}$,
where we write $I_{\overline{k}} := I_{\reg{1, \ldots, k-1, aux}}$.
Our next step is to show that the measurements approximately commute.
This follows the analysis of the commutation test (cf.~\cite[Lemma 28]{CGJV18}).
\begin{align*}
M_{u_k}^{x'} M_{a_2}^{x} \otimes I_{\reg{Bob}}
&\approx_{\eps} M_{u_k}^{x'} \otimes M_{a_2}^{x'}\tag{\Cref{fact:add-a-proj}}\\
&\approx_{0} I_{\reg{Alice}} \otimes M_{a_2}^{x'}M_{u_k}^{x'} \tag{\Cref{fact:the-ol-pauli-swaperoonie}}\\
&= I_{\reg{Alice}} \otimes M_{u_k}^{x'}M_{a_2}^{x'}\\
&\approx_{\eps} M_{a_2}^{x} \otimes M_{u_k}^{x'}\tag{\Cref{fact:add-a-proj}}\\
&\approx_{0} M_{a_2}^x M_{u_k}^{x'} \otimes I_{\reg{Bob}}.\tag{\Cref{fact:the-ol-pauli-swaperoonie}}
\end{align*}
In summary,
\begin{equation*}
((\tau^{W}_{u_k} \otimes I_{\overline{k}})\cdot M_{a_2}^{x})_{\reg{Alice}} \otimes I_{\reg{Bob}}
\approx_{\eps} (M_{a_2}^x \cdot (\tau^{W}_{u_k} \otimes I_{\overline{k}}))_{\reg{Alice}}\otimes I_{\reg{Bob}}.
\end{equation*}
Recall this is with respect to the distribution $\bW$ where $\bW \sim \{X,Z\}$ is uniform.
Therefore, it also holds with respect to the distribution where $\bW$ is fixed to either~$X$ or~$Z$.
As a result, for a fixed $W \in \{X,Z\}$, by \Cref{lem:proj-to-obs}, 
\begin{equation*}
(M_{a_2}^x \cdot ( W(u)\otimes I_{\overline{k}}))_{\reg{Alice}} \otimes I_{\reg{Bob}}
\approx_{\eps} (( W(u)\otimes I_{\overline{k}}) \cdot M_{a_2}^x)_{\reg{Alice}} \otimes I_{\reg{Bob}}.
\end{equation*}
on distribution $\bu \sim \F_q^n$.
As a result, by~\Cref{fact:the-ol-pauli-swaperoonie} and \Cref{fact:add-a-proj},
\begin{align*}
(M_{a_2}^x \cdot (Z(u') X(u)\otimes I_{\overline{k}}))_{\reg{Alice}} \otimes I_{\reg{Bob}}
& \approx_{0}(M_{a_2}^x \cdot (Z(u')\otimes I_{\overline{k}}))_{\reg{Alice}} \otimes (X(-u)\otimes I_{\overline{k}})_{\reg{Bob}}\\
& \approx_{\eps}((Z(u')\otimes I_{\overline{k}}) \cdot M_{a_2}^x)_{\reg{Alice}} \otimes (X(-u)\otimes I_{\overline{k}})_{\reg{Bob}}\\
& \approx_{0}((Z(u')\otimes I_{\overline{k}}) \cdot M_{a_2}^x \cdot (X(u) \otimes I_{\overline{k}}))_{\reg{Alice}} \otimes I_{\reg{Bob}}\\
& \approx_{\eps}((Z(u') X(u)\otimes I_{\overline{k}}) \cdot M_{a_2}^x)_{\reg{Alice}} \otimes I_{\reg{Bob}},
\end{align*}
on distribution $\bu, \bu' \sim \F_q^n$.
Applying \Cref{lem:commute-to-twirl} and \Cref{prop:hidden-twirl},
we can therefore conclude
\begin{equation*}
M_{a_2}^x \otimes I_{\reg{Bob}}
\approx_{\eps} (\twirl_k \otimes \mathrm{id}_{\overline{k}})[M_{a_2}^x] \otimes I_{\reg{Bob}}
= (\hide{k}{M_{a_2}^x}) \otimes I_{\reg{Bob}}.\qedhere
\end{equation*}
\end{proof}



\section{Compiling games with the data hiding test}\label{sec:compile-sec}

Now we can show how to compile games from the second layer to the first layer.
Our construction is given in the following definition.

\begin{definition}
Let $\game_k$ be a $(k, n, q)$-register game.
Then its compiled version is the game $\calC_{k \rightarrow \mathrm{semi}}(\game_k)$ defined in \Cref{fig:two-to-one}.
\end{definition}
{
\floatstyle{boxed} 
\restylefloat{figure}
\begin{figure}
With probability~$\tfrac{1}{2}$ each, perform one of the following three tests.
\begin{enumerate}
	\item \textbf{Data hiding:} Draw $(\bx, \bx', \bC) \sim \game_k$, where $\bx = (\bx_1, \bx_2)$ and $\bx_1 = (\bW_1, \ldots, \bW_k)$.
			 If $\bW_k = \hideq$, play $\game_{\mathrm{hide}}$ with question~$\bx$.\label{item:data-hiding}
	\item \textbf{Play game:} Perform $\game_{k}$.\label{item:play-ball!}
	\end{enumerate}
	\caption{The game $\calC_{k \rightarrow \mathrm{semi}}(\game_k)$.\label{fig:two-to-one}}
\end{figure}
}

\begin{theorem}\label{theorem:two-to-one-compiler}
Suppose $\game_k$ is a $(k, n, q)$-register game,
and consider the $(k, n, q)$-semiregister game $\game_{\mathrm{semi}} = \calC_{k \rightarrow \mathrm{semi}}(\game_k)$.
\begin{itemize}
\item[$\circ$] \textbf{Completeness:} 
	Suppose there is a value-$1$ $(k,n,q)$-register strategy for $\game_{k}$ which is also a real commuting EPR strategy.
	Then there is a value-$1$ $(k,n,q)$-semiregister strategy for $\game_{\mathrm{semi}}$ which is also a real commuting EPR strategy.
\item[$\circ$] \textbf{Soundness:} If $\valsemi{k,n,q}{\game_{\mathrm{semi}}} \geq 1-\eps$ then $\valreg{k,n,q}{\game_{k}} \geq 1 -\delta(\eps)$, where $\delta(\eps) = \mathrm{poly}(\eps)$.
\end{itemize}
Furthermore,
\begin{equation*}
\qlength{\game_{\mathrm{semi}}} = O(\qlength{\game_k}), \quad
\alength{\game_{\mathrm{semi}}} = O(\alength{\game_k}),
\end{equation*}
\begin{equation*}
\qtime{\game_{\mathrm{semi}}} = O(\qtime{\game_k}), \quad
\atime{\game_{\mathrm{semi}}} = O(\atime{\game_k}).
\end{equation*}
\end{theorem}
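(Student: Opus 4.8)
The plan is to verify the communication/time bounds first (these are immediate), then handle completeness by lifting the honest strategy for $\game_k$ to one for $\game_{\mathrm{semi}}$, and finally handle soundness by taking a semiregister strategy for $\game_{\mathrm{semi}}$, applying the data-hiding guarantee from \Cref{thm:data-hiding-game} to each $\hideq$-marked $k$-th register, and ``rounding'' the strategy into a genuine register strategy whose value for $\game_k$ is only polynomially smaller.

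For the \emph{complexity bounds}: the compiled game $\game_{\mathrm{semi}}$ with constant probability either runs $\game_k$ directly or, conditioned on $\bW_k = \hideq$, runs $\game_{\mathrm{hide}}(\bx)$ on a question $\bx$ sampled from $\game_k$. Sampling $\bx$ costs $\qtime{\game_k}$; by \Cref{def:data-hide-def} the data-hiding game adds only $O(|\bx|) = O(\qlength{\game_k})$ to the question length and $O(\sum_i n_i\log q_i + \ell)$ to the answer length, where $\ell = O(\alength{\game_k})$. Since $\sum_i n_i\log q_i \le \alength{\game_k}$ (the register answers $a_1$ already contain these strings), each of $\qlength$, $\alength$, $\qtime$, $\atime$ for $\game_{\mathrm{semi}}$ is $O(\cdot)$ of the corresponding quantity for $\game_k$, as claimed.

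For \emph{completeness}: suppose $\calS = (\psi, M)$ is a value-$1$ $(k,n,q)$-register strategy for $\game_k$ that is a real commuting EPR strategy. For questions that actually appear in $\game_k$, we keep the measurements $M^x$. For the new questions $x' = (x_1', x_2)$ with $x_1' = (W_1, \dots, W_{k-1}, W)$, $W \in \{X,Z\}$, that arise inside $\game_{\mathrm{hide}}$, we extend $\calS$ exactly as in the completeness proof of \Cref{thm:data-hiding-game}: because $\calS$ hides the $k$-th register on $\hideq$-questions, $M^x_{a_1, a_2} = \bigotimes_{i\in S} \tau^{W_i}_{u_i} \otimes M^{x,a_1}_{a_2}$ with the auxiliary factor acting trivially on register $k$, and we define $(M')^{x'}_{a_1',a_2}$ by replacing the hidden register-$k$ factor with $\tau^W_{u_k'}$. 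By \Cref{thm:data-hiding-game}'s completeness case this extension passes $\game_{\mathrm{hide}}$ with probability $1$ while remaining a real commuting EPR strategy; it plainly still wins $\game_k$ with probability $1$; and the resulting strategy is a $(k,n,q)$-semiregister strategy (the $k$-th register is measured but not hidden). Hence $\valsemi{k,n,q}{\game_{\mathrm{semi}}} = 1$ via a real commuting EPR strategy.

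For \emph{soundness}: let $\calS = (\psi, M)$ be a projective $(k,n,q)$-semiregister strategy with $\valsemi{k,n,q}{\game_{\mathrm{semi}}}(\calS) \ge 1 - \eps$ (projectivity is free by \Cref{lem:proj-suffices}). Passing \Cref{item:data-hiding} with probability $1 - O(\eps)$ means that, conditioned on each question $x$ with $W_k = \hideq$, the players pass $\game_{\mathrm{hide}}(x)$ with probability $1 - O(\eps')$ where $\eps'$ is the conditional error; averaging over such questions, \Cref{thm:data-hiding-game}'s soundness case gives $(M^x_a)_{\reg{Alice}} \otimes I \approx_{\delta} (\hide{k}{M^x_a})_{\reg{Alice}} \otimes I$ with $\delta = \poly(\eps)$ on the distribution of $\hideq$-questions. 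Now define a new strategy $\calS' = (\psi, M')$ by setting $(M')^x_a := \hide{k}{M^x_a}$ on questions with $W_k = \hideq$ and $(M')^x_a := M^x_a$otherwise; by \Cref{not:hide} and \Cref{eq:hide-coords-in-S}, $\calS'$ is a genuine $(k,n,q)$-register strategy (the $k$-th register is now literally untouched on $\hideq$-questions). Since $M'$ differs from $M$ only on $\hideq$-questions and only by $\approx_\delta$, and since the honest value of $\game_k$ inside $\game_{\mathrm{semi}}$ is measured by playing $\game_k$ in \Cref{item:play-ball!}, \Cref{fact:approx-delta-game-value} (with the fact that one of $M$, $M'$ is projective) shows $\valstrat{\game_k}{\calS'} \ge \valstrat{\game_k}{\calS} - O(\delta^{1/2}) \ge 1 - \poly(\eps)$. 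Therefore $\valreg{k,n,q}{\game_k} \ge 1 - \poly(\eps)$.

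The main obstacle is the soundness step's bookkeeping: one must be careful that $\hide{k}{M^x_a}$ is still a valid POVM that acts as the identity on register $k$ \emph{and} keeps the required tensor structure on the remaining registers dictated by the semiregister/register format (\Cref{eq:in-math}, \Cref{eq:hide-coords-in-S}) — this follows because $\hide{k}{\cdot}$ only replaces the register-$k$ marginal by a multiple of the identity and leaves the action on the other registers and $\rmaux$ unchanged — and that the $\approx_\delta$ closeness survives being fed into \Cref{fact:approx-delta-game-value}, which it does because all our $\approx_\delta$ statements are derived from $\consistency_\delta$ statements and $M$ is projective.
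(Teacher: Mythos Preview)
Your proposal is correct and follows essentially the same approach as the paper: complexity bounds from \Cref{def:data-hide-def}, completeness by extending via the completeness case of \Cref{thm:data-hiding-game}, and soundness by applying the data-hiding guarantee per question, replacing $M^x_a$ by $\hide{k}{M^x_a}$ on $\hideq$-questions to obtain a genuine register strategy, and invoking \Cref{fact:approx-delta-game-value}. The only cosmetic difference is that the paper handles the per-question-to-average conversion via an explicit Markov-style threshold at $\eps^{1/2}$ (bounding the probability that the conditional error exceeds $\eps^{1/2}$), whereas you average directly; both yield a $\poly(\eps)$ bound.
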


\newcommand{\phoenicianindahizzous}{\textphnc{h}}

\begin{proof}[Proof of \Cref{theorem:two-to-one-compiler}]
The communication and time complexities are the result of combining
the communication and time complexities from $\game_{\mathrm{k}}$ with the values for the data hiding game from \Cref{def:data-hide-def}.

\paragraph{Completeness.}
Let $(\psi, M)$ be a value-$1$ $(k,n,q)$-register strategy for $\game_{k}$ which is also a commuting EPR strategy.
Then for every $x = (x_1, x_2)$ where $x_1 = (W_1, \ldots, W_k)$ with $W_k = \hideq$,
by \Cref{thm:data-hiding-game}
we can extend this strategy to one that passes the data hiding game with question~$x$ with probability~$1$.
Thus, this strategy has value~$1$ overall.
In addition, \Cref{thm:data-hiding-game} implies this strategy is a real commuting EPR strategy as well.

\paragraph{Soundness.}
Suppose $\calS = (\psi, M)$ is a $(k, n, q)$-semiregister strategy for $\game_{\mathrm{semi}}$ with value $1-\eps$.
By \Cref{lem:proj-suffices}, we can assume without loss of generality that~$M$ is projective.
Our goal will be to decode $\calS$ into a $(k, n, q)$-register strategy $\calS_{k}$ with nearly the same value.

\paragraph{Using the data hiding test.}
For a fixed question~$x$, write $\nu_x$ for the probability that~$\calS$ passes the test in \Cref{item:data-hiding}.
Then on average, the probability that~$\calS$ passes this test is $\E_{\bx} \nu_{\bx}$,
which is at least $1-2\eps$ because the overall test passes with probability at least $1-\eps$.
This implies that $\nu_{\bx} \geq 1-\eps^{1/2}$ with probability at least $1-2\eps^{1/2}$.
Given a matrix~$M$ and a $W \in \{X, Z, \hideq, \noop\}$, let us write $\hide{W}{M}:= \hide{k}{M}$ if $W = \hideq$
and $\hide{W}{M}:=M$ otherwise.
For a question~$x$, if $W_k \neq \hideq$, then $\hide{W_k}{M^x_a} = M^x_a$ trivially.
On the other hand, suppose $W_k = \hideq$.
Then either $\nu_x \geq 1-\eps^{1/2}$,
in which case $M^x_a \otimes I_{\reg{Bob}} \approx_{\delta(\eps)} \hide{W_k}{M^x_a}\otimes I_{\reg{Bob}}$ by \Cref{thm:data-hiding-game},
or $\nu_x < 1-\eps^{1/2}$,
in which case we have the trivial bound $M^x_a \otimes I_{\reg{Bob}} \approx_{1} \hide{W_k}{M^x_a}\otimes I_{\reg{Bob}}$ from~\Cref{fact:trivial-upper-bound-approx-delta}.
Since this latter case happens with probability at most $2\eps^{1/2}$,
averaging over all~$x$ gives us
\begin{equation}\label{eq:dr-jekyll-mr-hide}
M^x_a \otimes I_{\reg{Bob}} \approx_{\delta(\eps)} \hide{W_k}{M^x_a}\otimes I_{\reg{Bob}},
\end{equation}
on the distribution $\calD$.

\paragraph{Extracting a strategy.}
Define the strategy $\calS_k = (\psi, \Lambda)$, in which
$\Lambda_a^x := \hide{W_k}{M^x_a}$.
First, we show that $\calS_k$ is a $(k, n, q)$-register strategy.
To do so, fix $x= (x_1, x_2)$ with $x_1 = (W_1, \ldots, W_k)$ and $a = (a_1, a_2)$ with $a_1 = (u_1, \ldots, u_k)$.
Then
\begin{equation*}
\Lambda_{a_1}^x
= \mathrm{hide}_{W_k}\big(\tau^{W_1}_{u_1} \otimes \cdots \otimes \tau^{W_k}_{u_k} \otimes I_{\reg{aux}}\big)\\
	= \tau^{W_1}_{u_1} \otimes \cdots \otimes \tau^{W_k}_{u_k} \otimes I_{\reg{aux}}.
\end{equation*}
The first equality is by definition of~$\Lambda$ and the fact that $\calS$ is a $(k, n, q)$-quasiregister strategy.
The second equality is trivial when $W_k \neq \hideq$ and follows from the fact that $\tau^{W_k}_{\varnothing} = I$ when $W_k = \hideq$.
Next, define $S = \{i\neq k  \mid W_i = \hideq\}$.
If $W_k \neq \hideq$ then $\Lambda_a^x = M^x_a = M_{\overline{S}} \otimes I_S$ for some matrix~$M$.
Otherwise, if $W_k = \hideq$, set $\phoenicianindahizzous = \tr(I_k)$. Then
\begin{equation*}
\Lambda_a^x = \hide{k}{M^x_a} = \hide{k}{M_{\overline{S}} \otimes I_S}
= \frac{1}{\phoenicianindahizzous} \cdot I_k \otimes \tr_k(M_{\overline{S}} \otimes I_S)
= \frac{1}{\phoenicianindahizzous} \cdot I_{S \cup k} \otimes \tr_k(M_{\overline{S}}).
\end{equation*}
The matrix $\tr_k(M_{\overline{S}}) \cdot \phoenicianindahizzous^{-1}$
only acts on the registers in $\overline{S \cup k}$ and the auxiliary register,
and as a result, this strategy satisfies data hiding.
Thus, $\calS_k$ is a $(k, n, q)$-register strategy.

It remains to show that $\calS_k$ has good value.
This follows by combining~\Cref{eq:dr-jekyll-mr-hide} with~\Cref{fact:approx-delta-game-value}:
$\valstrat{\game_{k}}{\calS_{k}} \geq \valstrat{\game_{\mathrm{semi}}}{\calS} - \delta(\eps)$,
and so $\valreg{k,n,q}{\game_{k}} \geq 1 - \delta(\eps)$.
\end{proof}



\section{Partial data hiding}\label{sec:rotated-data-hiding}

The data-hiding game presented above was used to show that the
provers' measurement acts as identity on a subset of the provers'
qudits, and thus the prover learns no information from those
qudits. In particular, the measurement outcome of any $X$- or
$Z$-observable measurement on the qubits in the subset is hidden from
the prover. In this subsection, we generalize this idea to show how
to certify that certain \emph{partial} information about a register is hidden
from a prover. This test is a crucial component in our technique of
introspection, wherein two provers measure a shared EPR state to
sample from the joint distribution over questions of a classical
game. The partial data hiding test will prevent one prover from
learning the question sampled by the other prover.

\begin{notation}
  Given a set $v =\{v_1, \dots, v_k\}$ of $k$
  vectors in $\F_q^n$, denote their span by $V = \mathrm{span}(\{v_1,
  \dots, v_k\})$. The orthogonal complement of their span is the
  subspace $V^\perp = \{a: \forall i \in \{1, \dots, k\}, \langle a,
  v_i \rangle = 0 \}$. We denote by $\surfaces{v}$ the set of all
  affine subspaces parallel to $V$, i.e.\ sets of the form:
  \[ s = \{u + \lambda_1 v_1 + \dots + \lambda_k v_k: \lambda_1,
    \dots, \lambda_k \in \F_q\}.\]
  For a subspace $s \in
  \surfaces{v}$, the subspace projector $\Pi^{v}_s$ is the projector
  \[ \Pi^{v}_s = \sum_{w \in s} \proj{w}. \]
\end{notation}
\begin{lemma}
  \label{lem:subspace-x-z-commute}
  Given a set of vectors $\{v_1, \dots, v_k\}$, let
  \[ \tau^{X}_{[\forall i,  u \cdot v_i = a_i]} = \sum_{u: \forall i,
       u \cdot v_i = a_i} \tau^X_{u}. \]
  Then $\tau^{X}_{[\forall i,  u \cdot v_i = a_i]}$ commutes with
  $\Pi^{v}_s$ for all $s \in \surfaces{v}$.
\end{lemma}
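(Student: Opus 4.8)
The claim is that the $X$-type ``subspace-restricted'' projector $\tau^X_{[\forall i,\, u\cdot v_i = a_i]}$ commutes with every affine-subspace projector $\Pi^v_s$ for $s \in \surfaces{v}$. The natural strategy is to work in the $Z$-basis (the computational basis), where $\Pi^v_s$ is already diagonal, and express the $X$-type projector as a Fourier sum of $Z$-type Pauli observables using the machinery from \Cref{sec:paulis}. Concretely, I would first rewrite
\[
\tau^X_{[\forall i,\, u\cdot v_i = a_i]} = \E_{\blambda \in \F_q^k} \,\omega^{-\tr[\sum_i \lambda_i a_i]}\, \tau^X_{[\forall i,\, u \cdot v_i\text{-free}]}\big|_{\text{projected}},
\]
but more usefully: by \Cref{eq:pauli-proj-to-obs}, $\tau^X_u = \E_{\bv}\,\omega^{-\tr[u\cdot \bv]}X(\bv)$, and summing over all $u$ with $u\cdot v_i = a_i$ for each $i$ collapses the average onto the directions spanned by $v_1,\dots,v_k$. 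After the dust settles one obtains an expression of the form
\[
\tau^X_{[\forall i,\, u\cdot v_i = a_i]} = \frac{1}{q^k}\sum_{\blambda \in \F_q^k} \omega^{-\tr[\langle \blambda, a\rangle]}\, X(\lambda_1 v_1 + \dots + \lambda_k v_k),
\]
i.e.\ a linear combination of the $X$-type Paulis $X(w)$ for $w \in V = \mathrm{span}\{v_1,\dots,v_k\}$. So it suffices to show that $X(w)$ commutes with $\Pi^v_s$ for every $w \in V$ and every $s \in \surfaces{v}$.

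The key observation for the latter is geometric: $X(w)$ acts as the translation $\ket{x}\mapsto \ket{x+w}$, and if $w \in V$, then translation by $w$ maps the affine subspace $s = u + V$ to itself (as a set), since $s + w = u + V + w = u + V = s$. Therefore $X(w)$ permutes the basis vectors $\{\ket{x}: x \in s\}$ among themselves and likewise permutes $\{\ket{x}: x \notin s\}$ among themselves; hence $X(w)\, \Pi^v_s = \Pi^v_s\, X(w)$. I would write this out as: for $x \in s$, $X(w)\Pi^v_s \ket{x} = X(w)\ket{x} = \ket{x+w}$, and since $x+w \in s$, $\Pi^v_s X(w)\ket{x} = \Pi^v_s\ket{x+w} = \ket{x+w}$; for $x\notin s$, $X(w)\Pi^v_s\ket{x} = 0$, and $x + w \notin s$ (else $x = (x+w) - w \in s - w = s$, contradiction), so $\Pi^v_s X(w)\ket{x} = 0$ as well. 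Commutation on all basis states gives commutation as operators. Since $\tau^X_{[\forall i,\, u\cdot v_i = a_i]}$ is a fixed linear combination of such $X(w)$'s, it too commutes with $\Pi^v_s$, and linearity of the commutator finishes the argument.

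The only step requiring genuine care is the Fourier computation showing that the $X$-restricted projector lies in $\mathrm{span}\{X(w): w\in V\}$ — getting the constraint bookkeeping right so that the sum genuinely restricts to the span of the $v_i$'s rather than all of $\F_q^n$. One clean way to sidestep the heaviest computation is: note $\tau^X_{[\forall i,\, u\cdot v_i = a_i]}$ is (up to normalization) the spectral projector of the commuting family of $X$-type observables $\{X'(v_i)\}_i$ (where $X'(v) = \sum_u \omega^{\tr[u\cdot v]}\tau^X_u$) onto the joint eigenspace with eigenvalues $\omega^{\tr[a_i]}$; each $X'(v_i)$ is translation-like and already commutes with $\Pi^v_s$ by the geometric argument above (since $v_i \in V$), hence so does any spectral projector built from them. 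I would present whichever of these two routes is shorter; the geometric fact ``translation by a vector in $V$ preserves every coset of $V$'' is the conceptual heart, and everything else is routine. There is no substantial obstacle — this is a short lemma — but care with the characteristic-$2$ conventions and the precise normalization of the Fourier/eigenspace identities is where a sloppy write-up would go wrong.
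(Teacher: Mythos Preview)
Your proposal is correct and uses the same two ingredients as the paper's proof: the Fourier expansion that restricts $\tau^X_{[\forall i,\,u\cdot v_i=a_i]}$ to a combination of $X(w)$ with $w\in V$, and the shift-invariance of $s$ under translations in $V$. The paper carries these out in a single interleaved calculation of $\Pi^v_s\,\tau^X_{[\cdots]}$ (expand, restrict the Fourier variable to $V$, shift $w\mapsto w+\bb$, reverse), whereas you factor them into two separate lemmas; the content is the same.
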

\begin{proof}
  The proof is by calculation. 
  \begin{align*}
    \Pi^{v}_s \tau^{X}_{[\forall i,  u \cdot v_i = a_i]}
    &= \sum_{w \in s}\proj{w} \sum_{u: \forall i, u \cdot v_i = a_i}
      \tau^X_u \\
    &= \sum_{w \in s} \sum_{u: \forall i, u \cdot v_i = a_i} 
      \E_{\bb} \omega^{-\tr[\bb \cdot u]} \proj{w}  X(\bb). \\
    \intertext{We note an important fact: for any two
  outcomes $u, u'$ satisfying $u\cdot v_i = u'\cdot v_i = a_i$ for all
  $i$, the difference $u - u'$ must lie in $V^\perp$. Fixing some
    appropriate outcome vector $u_0$, we can then express the
    summation variable $u$ as $u_0 + x$ where $x$ runs over $V^\perp$: }
    &= \sum_{w \in s} \sum_{x \in V^\perp} \E_{\bb} \omega^{-\tr[\bb \cdot
      (u_0 + x)]} \proj{w} X(\bb) \\
    &=\sum_{w \in s} \sum_{x \in V^\perp} \E_{\bb} \omega^{-\tr[\bb \cdot
      (u_0 + x)]} \ket{w}\bra{w-\bb}.\\
    \intertext{Now, the summation over $x$ vanishes unless $\bb \in
    (V^\perp)^\perp = V$, by
    \Cref{fact:averages-to-zero-subspace}. This happens with
    probability $q^{k-n}$ which cancels out the factor of $q^{n-k}$
    from evaluating the sum over $x \in V^\perp$, yielding:}
    &= \sum_{w \in s} \E_{\bb \in V} \omega^{-\tr[\bb \cdot u_0]} \ket{w}
      \bra{w-\bb}. \\
    \intertext{Now, since $\bb \in V$, and the summation variable $w$ runs over an affine
    subspace parallel to $V$, we can shift it from $w$ to $w+\bb$, yielding}
    &= \sum_{w \in s} \E_{\bb \in V} \omega^{-\tr[\bb \cdot u_0]} \ket{w+\bb}
      \bra{w}. \\
    \intertext{Finally, we can perform the same manipulations in reverse:}
    &= \dots \\
    &= \tau^{X}_{[\forall i, u \cdot v_i = a_i]} \Pi^{v}_s. \qedhere
  \end{align*}
\end{proof}


\begin{definition}\label{def:data-hide-def}
The \emph{partial data-hiding game} is given by~\Cref{fig:hide-observable}.
\end{definition}

{
\floatstyle{boxed} 
\restylefloat{figure}
\begin{figure}[htpb]
  Given a set $S$ of $k$-tuples of linearly independent set of vectors $v_1, \dots, v_k \in
  \F_q^{n}$ and a query string $x$. Sample $v = \{v_1, \dots, v_k\}$ uniformly from $S$.
  Flip an unbiased coin $\bb \sim \{0, 1\}$. Perform one of the following three tests with
  probability $1/3$ each.
  \begin{enumerate}
    \item Distribute the questions as follows:
      \begin{itemize}
      \item[$\circ$] Player~$\bb$: Give $(\bot, x,v)$; receive $(\varnothing, \ba_2)$.
      \item[$\circ$] Player~$\overline{\bb}$: give $(Z, x,v)$; receive
        $(\ba'_1, \ba'_2)$.
      \end{itemize}
      Accept if $\ba_2 = \ba_2'$.
    \item Distribute the questions as follows:
      \begin{itemize}
        \item[$\circ$] Player~$\bb$: Give $(\bot, x,v)$; receive $(\varnothing, \ba_2)$.
        \item[$\circ$] Player~$\overline{\bb}$: give $(\bot, x, \{X, v\})$; receive
        $(\varnothing, \ba'_2, \{\ba_{1,1}', \dots \ba_{1,k}'\})$. 
        \end{itemize}
        Accept if $\ba_2 = \ba_2'$.
      \item Distribute the questions as follows:
        \begin{itemize}
        \item[$\circ$] Player~$\bb$: Give $(X, \cdot)$; receive $(\ba_1, \cdot)$. (Here, ``$\cdot$" is the empty string.)
        \item[$\circ$] Player~$\overline{\bb}$: give $(\bot, \bot, \{X, v\})$; receive
        $(\varnothing,\varnothing, \{\ba_{1,1}', \dots \ba_{1,k}'\})$. 
        \end{itemize}
        Accept if $\ba'_{1,i} = v_i \cdot \ba_1$ for all $i \in \{1,
        \dots, k\}$.
      \item 
        Distribute the questions as follows:
        \begin{itemize}
          \item[$\circ$] Player~$\bb$: Give $(\bot, x, \{X, v\})$;
            receive $(\varnothing, \ba_2, \{\ba_{1,1}, \dots,
            \ba_{1,k}\})$.
          \item[$\circ$] Player~$\overline{\bb}$: give $(\bot,
            \bot, \{X, v\})$; receive $(\varnothing,
            \varnothing, \{\ba'_{1,1}, \dots, \ba'_{1,k}\})$.
          \end{itemize}
          Accept if $\ba_{1,i}  = \ba_{1,i}'$ for all $i \in \{1,
          \dots, k\}$.
      \end{enumerate}

\caption{The partial data-hiding game $\game_{\mathrm{hide}}(S, x)$.\label{fig:hide-observable}}
\end{figure}
}

\begin{theorem}  \label{thm:partial-data-hiding-game}
  Let $S$ be any set of $k$-tuples of vectors in
  $\F_q^n$, and let $x$ be an arbitrary query.
  \begin{itemize}
  \item[$\circ$]\textbf{Completeness:}
    Let $\calS_{\mathrm{partial}} = (\psi, M^{\bot, x,v})$ be a partial
    $(1,n,q)$-register strategy for~$\game_{\mathrm{hide}}(S,x)$,
    which is also a real commuting EPR strategy, and for which
    \[ M^{\bot, x,v}_{a_2} = \sum_{s \in \surfaces{v}} \Pi^{v}_s \ot A^{x,v,s}_{a_2}, \]
    for some measurement $A^{x,v,s}_{a_2}$ acting only on the
    $\reg{aux}$ register. Then
    there is a $(1,n,q)$-register strategy~$\calS$ extending
    $\calS_{\mathrm{partial}}$ 
    for which
    $\valstrat{\game_{\mathrm{hide}}}{\calS} = 1$.
    \item[$\circ$]\textbf{Soundness:}
      Let $\calS = (\psi, M)$ be a projective
      $(1,n,q)$-register strategy such that
      $\valstrat{\game_{\mathrm{hide}}(S,x)}{\calS}
      \geq 1 - \eps$. Then there exists an ideal measurement $M'^{\bot, x,
        v}_a$ with
      the property that
      \[ M'^{\bot, x,v }_a = \sum_{s \in \surfaces{v}} \Pi^{v}_s \ot
          M^{s,x, v}_a, \]
      such that  the measurement $M^{\bot, x,v}_a$ used by strategy $\calS$ in
      response to the query $x$  is close to $M'^{\bot, x,v}_a$:
      \[ (M_a^{\bot, x,v})_{\reg{Alice}} \otimes I_{\reg{Bob}} \approx_\eps (M'^{\bot, x,v}_a)_{\reg{Alice}} \otimes I_{\reg{Bob}}. \]
    \end{itemize}
\end{theorem}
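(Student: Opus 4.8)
The plan is to mirror the proof of the (full) data-hiding game, \Cref{thm:data-hiding-game}, replacing the Pauli twirl $\twirl$ by a \emph{restricted twirl} $\twirl_V$ that averages over all $Z$-type Paulis but only over $X$-type Paulis supported on $V := \mathrm{span}(v_1,\dots,v_k)$. The key observation making this work is that the image of $\twirl_V$ is exactly the algebra spanned by the subspace projectors $\{\Pi^v_s : s\in\surfaces{v}\}$ --- this is the ``orthogonality'' content of \Cref{lem:subspace-x-z-commute} in Fourier-dual form --- so that ``$N_{a_2}$ is fixed by $\twirl_V$'' is precisely ``$N_{a_2} = \sum_s \Pi^v_s\ot M^{s,x,v}_{a_2}$'', where I abbreviate $N_{a_2} := M^{\bot,x,v}_{a_2}$.

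For \textbf{completeness}, given the partial strategy with $N_{a_2} = \sum_{s\in\surfaces{v}}\Pi^v_s\ot A^{x,v,s}_{a_2}$, I would extend it to the four remaining question types of $\game_{\mathrm{hide}}(S,x)$ in the obvious way: on $(Z,x,v)$ measure $\tau^Z$, report the outcome $u'$, then run $A^{x,v,s}_{a_2}$ for the $s$ containing $u'$; on a query additionally asking for $X(v_1),\dots,X(v_k)$, measure the joint projective family $\{\tau^X_{[\forall i,\,u\cdot v_i = a_{1,i}]}\}$ (which commutes with every $\Pi^v_s$ by \Cref{lem:subspace-x-z-commute}, so $a_2$ is unaffected) and report the outcomes; on $(X,\cdot)$ measure $\tau^X$ and report. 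Using that $\Pi^v_s$, $\tau^X_u$, $\tau^Z_u$ are real matrices (as $q$ is a power of $2$) together with the EPR stabilizer identities (\Cref{fact:heh-heh-heh-gonna-make-anand-prove-this-so-i-can-take-the-day-off,fact:the-ol-pauli-swaperoonie}), all four subtests pass with probability $1$; the strategy is a real EPR strategy, and it is commuting because for each of the four question pairs that actually arise in $\game_{\mathrm{hide}}(S,x)$ the register operators are simultaneously diagonalizable (twice invoking \Cref{lem:subspace-x-z-commute}), with auxiliary factors chosen to agree where they meet.

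For \textbf{soundness}, I would extract two approximate commutation statements for $N_{a_2}$ on $\ket\psi = \ket{\epr_q^n}\ot\ket{\rmaux}$. Subtest $1$ gives (via \Cref{fact:agreement}) $N_{a_2}\ot I \approx_\eps I\ot M^{Z,x,v}_{a_2}$; because $\calS$ is a register strategy, $M^{Z,x,v}$ first measures $\tau^Z$ on the whole register, so $M^{Z,x,v}_{a_2}$ is $Z$-diagonal and commutes exactly with $Z(z)$, and the ``push-to-Bob'' chain of \Cref{thm:data-hiding-game} then gives that $N_{a_2}$ approximately commutes with $Z(z)\ot I$ over uniform $z\in\F_q^n$. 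Subtests $3$ and $4$ pin the $X(v_i)$-observable measurement used in subtests $2$--$4$ to the honest joint measurement $\tau^X_{[\forall i,\,u\cdot v_i=a_{1,i}]}$ (up to $\poly(\eps)$, using \Cref{fact:agreement} and $\tau^X_u\ot I\approx_0 I\ot\tau^X_u$), and feeding this into subtest $2$ with the commutation bookkeeping of \Cref{fact:sandwich} shows $N_{a_2}$ approximately commutes with $\tau^X_{[\forall i,\,u\cdot v_i=a_{1,i}]}$, hence --- by the Fourier argument of \Cref{lem:proj-to-obs}, now applied to $X(w) = \sum_{a\in\F_q^k}\omega^{\tr[c\cdot a]}\tau^X_{[\forall i,\,u\cdot v_i=a_i]}$ for $w = \sum_i c_i v_i\in V$ --- with $X(w)\ot I$ over uniform $w\in V$. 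A further push-to-Bob step combines these into
\[ \big(N_{a_2}\,(Z(z)X(w)\ot I)\big)_{\reg{Alice}}\ot I_{\reg{Bob}}\ \approx_\eps\ \big((Z(z)X(w)\ot I)\,N_{a_2}\big)_{\reg{Alice}}\ot I_{\reg{Bob}} \]
for independent uniform $z\in\F_q^n$, $w\in V$. Now defining $\twirl_V(A) := \E_{z\sim\F_q^n,\,w\sim V}[Z(z)X(w)\,A\,X(-w)Z(-z)]$, the analogue of \Cref{lem:commute-to-twirl} (same proof) upgrades this to $N_{a_2}\ot I\approx_\eps(\twirl_V\ot\id_{\reg{aux}})[N_{a_2}]\ot I$, and a short computation with \Cref{eq:commutation-relations} and \Cref{fact:averages-to-zero,fact:averages-to-zero-subspace} shows $\twirl_V(\omega^cX(a)Z(b))$ equals $\omega^c Z(b)$ if $a=0,\ b\in V^\perp$ and $0$ otherwise; since $\{\Pi^v_s\}$ and $\{Z(b):b\in V^\perp\}$ span the same $q^{n-k}$-dimensional algebra (both via \Cref{fact:averages-to-zero-subspace}), we get $(\twirl_V\ot\id_{\reg{aux}})[N_{a_2}] = \sum_{s\in\surfaces{v}}\Pi^v_s\ot M^{s,x,v}_{a_2}$, which is a valid POVM of the required form since $\twirl_V\ot\id$ is unital and completely positive. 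This is the claimed $M'^{\bot,x,v}_a$.

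The \textbf{main obstacle} is the soundness bookkeeping: extracting the two commutation relations with the correct $\poly(\eps)$ losses, and in particular using subtests $3$ and $4$ to certify the $X(v_i)$-observable measurement \emph{before} it is played against $N_{a_2}$ in subtest $2$ --- this is the analogue of the care needed in \Cref{fact:low-degree-sandwich-on-steroids}, and is where most of the work lies. The twirl computation and the identification of $\mathrm{img}(\twirl_V)$ with $\mathrm{span}\{\Pi^v_s\}$ are the conceptual heart but are routine once set up.
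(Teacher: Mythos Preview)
Your proposal is correct and follows essentially the same route as the paper: the paper defines exactly your restricted twirl (called the $v$-subspace twirl $\twirl_{\calS}$), proves your image computation as \Cref{prop:hidden-subspace-twirl}, packages the projector-to-observable step as \Cref{lem:proj-to-obs-subspace}, and the commute-to-twirl step as \Cref{lem:commute-to-twirlU}, with the completeness construction identical to yours. One minor note: the bookkeeping you flag as the ``main obstacle'' is simpler than \Cref{fact:low-degree-sandwich-on-steroids} suggests --- once subtests~3 and~4 give $M^{\bot,x,\{X,v\}}_{\{a_{1,i}\}}\approx_\eps\tau^X_{[\forall i,\dots]}$, the commutation chain from \Cref{thm:data-hiding-game} goes through with two extra $\approx_\eps$ hops to swap in $\tau^X_{[\forall i,\dots]}$ (via \Cref{fact:add-a-proj}), and \Cref{fact:sandwich} is not needed.
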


To prove this theorem,  we will start with some basic facts about the subspace projector
  measurements. Let us denote the linear subspace spanned by the
  vectors $v_1, \dots, v_k$ by $V$.
  
  \begin{definition}
    For any distribution $\mathcal{U}$ over unitary matrices,
    the \emph{twirl by
      $\mathcal{U}$} is the linear operator $\twirl_{\mathcal{U}} :
    \calB((\mathbb{C}^q)^{\ot n}) \to \calB((\mathbb{C}^q)^{\ot n})$
    defined by
    \[ \twirl_{\mathcal{U}}(A) := \E_{\bU \sim \mathcal{U}} \left[ \bU A
        \bU^\dagger \right]. \]
  \end{definition}
  \begin{definition}
    Let $v = \{v_1, \dots, v_k\}$ be a set of linearly independent
    vectors over $\F_q$. Further let $\mathcal{V}$ be the uniform
    distribution over the set $\{X(a): a \in V \}$, $\mathcal{Z}$ be
    the uniform distribution over the set of all Pauli $Z$ operators
    $\{Z(a): a \in \F_q^n\}$, and $\mathcal{S}$ be the distribution
    over products $\bM\bN$ where $\bM$ is drawn from $\mathcal{V}$ and $\bN$
    from $\calZ$. Then the \emph{$v$-subspace twirl} is the
    twirl over $\calS$:
    \[ \twirl_{\calS} = \twirl_{\mathcal{V}} \circ \twirl_{\mathcal{Z}} \]
  \end{definition}
  \begin{proposition}
    \label{prop:hidden-subspace-twirl}
    Let $A$ be a Hermitian matrix and $v$ a set of $k$ vectors over $\F_q$. Then the $v$-subspace twirl of
    $A$ is a linear combination of projectors onto affine subspaces
    along $v$:
    \[ (\twirl_{\calS} \otimes \mathrm{id}_{\reg{aux}})(A) = \sum_{s \in \surfaces{v}}
      \Pi^{v}_{s} \ot (M_s)_{\reg{aux}} , \]
    for some choice of Hermitian matrices $M_s$ indexed by subspaces $s$.
  \end{proposition}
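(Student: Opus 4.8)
The plan is to unfold the definition $\twirl_{\calS} = \twirl_{\mathcal{V}} \circ \twirl_{\mathcal{Z}}$ and compute the effect of each twirl on a matrix written in the computational basis, carrying the $\reg{aux}$ register along passively. Write $A = \sum_{w, w' \in \F_q^n} \ket{w}\bra{w'} \otimes A_{w,w'}$, where each $A_{w,w'}$ is an operator on $\calH_{\reg{aux}}$; since $A$ is Hermitian, comparing $A$ with $A^\dagger$ block-by-block gives $A_{w,w'} = A_{w',w}^\dagger$, and in particular $A_{w,w} = A_{w,w}^\dagger$ for every $w$.

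First I would handle the $\mathcal{Z}$-twirl. Since $Z(a) \ket{w}\bra{w'} Z(-a) = \omega^{\tr[a \cdot (w - w')]} \ket{w}\bra{w'}$, averaging over all $a \in \F_q^n$ annihilates every off-diagonal term by \Cref{fact:averages-to-zero} (applied with the nonzero vector $w - w'$ in the exponent), so that
\[ (\twirl_{\mathcal{Z}} \otimes \mathrm{id}_{\reg{aux}})(A) = \sum_{w \in \F_q^n} \proj{w} \otimes A_{w,w} =: D. \]
This is the exact analogue of the computation in \Cref{prop:hidden-twirl}, with $\F_q^n$ in place of the full register being hidden there.

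Next I would apply the $\mathcal{V}$-twirl to $D$. Because $X(a) \ket{w} = \ket{w + a}$, we have $X(a) \proj{w} X(-a) = \proj{w+a}$, so
\[ (\twirl_{\mathcal{V}} \otimes \mathrm{id}_{\reg{aux}})(D) = \E_{\ba \sim V} \sum_{w} \proj{w + \ba} \otimes A_{w,w} = \sum_{w'} \proj{w'} \otimes \Bigl( \E_{\ba \sim V} A_{w' - \ba,\, w' - \ba} \Bigr), \]
where in the last equality I reindexed the inner sum by $w' = w + \ba$. The operator $M_{w'} := \E_{\ba \sim V} A_{w'-\ba,\,w'-\ba} = \tfrac{1}{|V|}\sum_{w \in w' + V} A_{w,w}$ depends only on the coset $w' + V$, i.e.\ only on the affine subspace $s \in \surfaces{v}$ containing $w'$; and it is Hermitian since each $A_{w,w}$ is. Grouping the computational-basis projectors $\proj{w'}$ according to which $s \in \surfaces{v}$ contains $w'$, and recalling $\Pi^v_s = \sum_{w' \in s} \proj{w'}$, I obtain
\[ (\twirl_{\calS} \otimes \mathrm{id}_{\reg{aux}})(A) = \sum_{s \in \surfaces{v}} \Pi^v_s \otimes M_s, \]
with $M_s := M_{w'}$ for any (equivalently, all) $w' \in s$, which is exactly the asserted form.

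The computation is essentially routine, and I expect no serious obstacle; the only point deserving a little care is the reindexing in the $\mathcal{V}$-twirl step — checking that after conjugation by $X(\ba)$ the $\reg{aux}$-operator attached to a fixed basis projector $\proj{w'}$ genuinely depends only on the coset $w'+V$ and not on $w'$ itself, since this coset-invariance is precisely what forces the output to be a combination of affine-subspace projectors rather than of individual rank-one projectors. (One could alternatively run the argument through the Pauli decomposition $A = \sum_J P_J \otimes A_J$, using \Cref{fact:averages-to-zero} and \Cref{fact:averages-to-zero-subspace} to see that only the $Z$-type Paulis $Z(b)$ with $b \in V^\perp$ survive the twirl, and then noting that $\mathrm{span}\{Z(b) : b \in V^\perp\}$ coincides with $\mathrm{span}\{\Pi^v_s : s \in \surfaces{v}\}$; but the basis-state route above is more self-contained.)
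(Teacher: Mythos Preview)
Your proof is correct, but it takes a different route from the paper's. The paper works in the Pauli basis: it writes $A = \sum_{u,u'} X(u)Z(u') \otimes A_{u,u'}$, observes that the $\mathcal{Z}$-twirl kills all $X$-components (leaving $\sum_u Z(u) \otimes A_{0,u}$), then uses the commutation relation and \Cref{fact:averages-to-zero-subspace} to see that the $\mathcal{V}$-twirl retains only those $Z(u)$ with $u \in V^\perp$, and finally expands the surviving $Z(u)$'s in the computational basis to recognise the coset structure. This is precisely the alternative you sketch in your closing parenthetical. Your main argument instead stays in the computational basis throughout, which is somewhat more elementary: it avoids the Pauli expansion entirely and makes the coset dependence manifest via a direct reindexing after the $X$-shift, with Hermiticity of the $M_s$ read off immediately from $A_{w,w} = A_{w,w}^\dagger$. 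The paper's route has the mild advantage of identifying exactly which Pauli components survive (namely $\{Z(b) : b \in V^\perp\}$), which connects more transparently to the commutator analysis in the surrounding lemmas; your route is shorter and more self-contained for the proposition as stated.
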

  \begin{proof}
    Start by decomposing $A$ into a linear combination of Pauli
    matrices:
    \[ A = \sum_{u,u'}  X(u) Z(u') \ot (A_{u,u'})_{\reg{aux}}. \]
    After the twirl over $\mathcal{Z}$, the only terms that survive
    are those with no $X$ part, i.e.
    \[ A' = (\twirl_{\mathcal{Z}} \otimes \mathrm{id}_{\mathrm{aux}})(A) = \sum_{u} Z(u) \ot (A_{0,u})_{\reg{aux}} \]
    Now if we perform the twirl over $\mathcal{V}$, we get
    \begin{align}
      (\twirl_{\mathcal{V}} \otimes \mathrm{id}_{\mathrm{aux}})(A')
      				&= \sum_u \E_{\ba \in V} X(\ba) Z(u) X(\ba)^\dagger \ot (A_{0,u})_{\reg{aux}}\nonumber\\
      				&= \sum_u  \E_{\ba \in V} \omega^{\tr[\langle
                                 \ba, u \rangle]} Z(u) \ot (A_{0,u})_{\reg{aux}}\nonumber \\
                               &= \sum_{u \in V^\perp}  Z(u) \ot (A_{0,u})_{\reg{aux}}\tag{\Cref{fact:averages-to-zero-subspace}}\nonumber\\
                               &= \sum_{u \in V^\perp}  \sum_{w} \omega^{\tr[\langle w, u\rangle]} \ket{w}\bra{w}  \ot (A_{0,u})_{\reg{aux}} \nonumber\\
                               &= \sum_{w}\ket{w}\bra{w}\ot \sum_{u \in V^\perp}  \omega^{\tr[\langle w, u\rangle]}    (A_{0,u})_{\reg{aux}}.\label{eq:let's-use-this-in-a-sec}
    \end{align}
    Now, consider a surface $s \in \surfaces{v}$.  For some~$x \in \F_q^n$,
    $s$ is the set of points written $w = x + v$, where $v \in V$.
    Then for any $u \in V^\perp$, $\langle w, u\rangle = \langle x + v, u \rangle = \langle x, u\rangle$,
    a quantity which depends only on the subspace and not on the point~$w$.
    Call this quantity $c_{s, u}$.
    As a result,
    \begin{equation*}
    \eqref{eq:let's-use-this-in-a-sec}
    = \sum_{s \in \surfaces{v}} \sum_{w \in s} \ket{w}\bra{w}\ot  \sum_{u \in V^\perp}c_{s, u} (A_{0,u})_{\reg{aux}}
    = \sum_{s \in \surfaces{v}} \Pi^{v}_{s} \ot (\hat{A}_s)_{\reg{aux}},
    \end{equation*}
    where $\hat{A}_s = \sum_{u \in V^\perp}c_{s, u} A_{0,u}$.
  \end{proof}
  
\begin{lemma}\label{lem:proj-to-obs-subspace}
Let $W \in \{X, Z\}$, and let $v = \{v_1, \dots, v_k\}$ be a set of
$k$ linearly independent vectors in $\F_q^n$ and $V$ be their span.
Suppose $\{M_{a_2}\}$ is a measurement for which
\begin{equation}\label{eq:m-commutes-subspace-proj}
(M_{a_2} \cdot (\tau^{W}_{[\forall i, v_i \cdot a_1 = a_{1,i}]} \otimes I_{\reg{aux}})) \otimes I_{\reg{Bob}}
\approx_{\delta} ((\tau^{W}_{[\forall i, v_i \cdot a_1 = a_{1,i}]} \otimes I_{\reg{aux}}) \cdot M_{a_2}) \otimes I_{\reg{Bob}},
\end{equation}
where
\[ \tau^{W}_{[\forall i, v_i \cdot a_1 = a_{1,i}]} = \sum_{a_1 :
    \forall i, v_i \cdot a_1 = a_{1,i}} \tau^W_{a_1}. \]
Then
\begin{equation*}
(M_{a_2} \cdot (W(u)\otimes I_{\reg{aux}})) \otimes I_{\reg{Bob}}
\approx_{\delta} ((W(u) \otimes I_{\reg{aux}}) \cdot M_{a_2}) \otimes I_{\reg{Bob}},
\end{equation*}
for a uniformly random $\bu$ drawn from $V$.
\end{lemma}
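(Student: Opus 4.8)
\textbf{Proof plan for \Cref{lem:proj-to-obs-subspace}.}
The plan is to mimic the proof of \Cref{lem:proj-to-obs}, replacing the coordinate-wise Fourier expansion of a single $X$ or $Z$ observable by a Fourier expansion over the subspace $V$. The key point is that the family of subspace-projection operators $\{\tau^W_{[\forall i, v_i \cdot a_1 = a_{1,i}]}\}_{(a_{1,i})_i}$ is exactly the projective measurement obtained by jointly measuring the $k$ commuting observables $W(v_1), \dots, W(v_k)$, and more generally, for any $u \in V$, the observable $W(u)$ is a ``character'' of this measurement: writing $u = \sum_i \lambda_i v_i$ we have $u \cdot a_1 = \sum_i \lambda_i (v_i \cdot a_1)$, so
\begin{equation*}
  W(u) = \sum_{(a_{1,i})_i} \omega^{\tr[\sum_i \lambda_i a_{1,i}]} \, \tau^W_{[\forall i, v_i \cdot a_1 = a_{1,i}]}.
\end{equation*}
This is the subspace analogue of \Cref{eq:pauli-obs-to-proj}; I would first record this identity (it follows from \Cref{eq:pauli-obs-to-proj} applied to $W(u)$ together with the observation that $u \in V$ means $W(u)$ depends on $a_1$ only through the tuple $(v_i \cdot a_1)_i$).

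Next, following the structure of \Cref{lem:proj-to-obs}, I would fix $W \in \{X,Z\}$ and define, for $u \in V$ with coordinates $\lambda = (\lambda_1,\dots,\lambda_k)$ in the basis $v$, the commutator
\begin{equation*}
  \Delta_{a_2}^{u} := M_{a_2}\cdot(W(u)\otimes I_{\reg{aux}}) - (W(u)\otimes I_{\reg{aux}})\cdot M_{a_2}
  = \sum_{(a_{1,i})_i} \omega^{\tr[\sum_i \lambda_i a_{1,i}]} \, \Delta_{a_2, (a_{1,i})_i},
\end{equation*}
where $\Delta_{a_2,(a_{1,i})_i}$ is the commutator of $M_{a_2}$ with $\tau^W_{[\forall i, v_i\cdot a_1 = a_{1,i}]}\otimes I_{\reg{aux}}$. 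Drawing $\bu$ uniformly from $V$ is the same as drawing $\lambda$ uniformly from $\F_q^k$, so taking the expectation over $\bu$ and using \Cref{fact:averages-to-zero} (applied $k$ times, or once over $\F_q^k$) to kill the cross terms gives
\begin{equation*}
  \E_{\bu \sim V}\,(\Delta_{a_2}^{\bu})^\dagger \Delta_{a_2}^{\bu}
  = \sum_{(a_{1,i})_i} (\Delta_{a_2,(a_{1,i})_i})^\dagger \Delta_{a_2,(a_{1,i})_i}.
\end{equation*}
Hitting both sides with $\langle\psi|\cdot\otimes I_{\reg{Bob}}|\psi\rangle$ and summing over $a_2$ turns the left side into $\E_{\bu\sim V}\sum_{a_2}\Vert(\Delta_{a_2}^{\bu}\otimes I)\ket{\psi}\Vert^2$ and the right side into $\sum_{a_2}\sum_{(a_{1,i})_i}\Vert(\Delta_{a_2,(a_{1,i})_i}\otimes I)\ket{\psi}\Vert^2$, which is $O(\delta)$ by hypothesis \Cref{eq:m-commutes-subspace-proj}. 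This is exactly the claimed conclusion.

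The only real subtlety — and hence the step I expect to be the main obstacle, though I think it is minor — is making sure the Fourier orthogonality is applied over the correct index set: the sum in the character expansion runs over all tuples $(a_{1,i})_i \in \F_q^k$ that actually arise as $(v_i\cdot a_1)_i$, and since $v_1,\dots,v_k$ are linearly independent this is all of $\F_q^k$, so $\E_{\lambda \sim \F_q^k}\omega^{\tr[\sum_i \lambda_i(a_{1,i} - a_{1,i}')]}$ is $1$ when the tuples agree and $0$ otherwise, with no degeneracy. One should also check that $\{\tau^W_{[\forall i, v_i\cdot a_1 = a_{1,i}]}\}$ is genuinely a projective measurement, i.e.\ that these operators are orthogonal projectors summing to the identity; this is immediate since they are sums of the mutually orthogonal rank-($q/p$) projectors $\tau^W_{a_1}$ over a partition of $\F_q^n$ into the cosets of $V^\perp$. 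With these bookkeeping points in hand the argument is a direct transcription of \Cref{lem:proj-to-obs}. I would present it in the same telegraphic style, citing \Cref{eq:pauli-obs-to-proj}, \Cref{fact:averages-to-zero}, and the hypothesis, without re-deriving the vector-norm manipulations.
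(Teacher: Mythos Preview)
Your proof is correct and follows the same high-level idea as the paper's---expand $W(u)$ as a Fourier sum of projectors, then use character orthogonality to turn the averaged observable-commutator into a sum of projector-commutators---but you carry it out via a cleaner decomposition. You expand $W(u)$ directly in the coarse projectors $\tau^W_{[\forall i, v_i\cdot a_1 = a_{1,i}]}$ indexed by $\F_q^k$, so that the averaging over $\bu\sim V$ (equivalently $\lambda\sim\F_q^k$) kills all cross terms and the resulting diagonal sum is literally the hypothesis. The paper instead expands $W(u)$ in the fine projectors $\tau^W_x$ indexed by all of $\F_q^n$; averaging over $\bu\sim V$ then leaves residual cross terms over $w\in V^\perp$, which must be regrouped into squared norms of coset sums $\bigl\|\sum_{w\in V^\perp}\Delta_{a_2,x+w}\otimes I\,\ket{\psi}\bigr\|^2$ and matched (with an overcounting factor $1/|V^\perp|$) against an unfolded form of the hypothesis. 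Your route avoids this unfolding/refolding step entirely; the paper's route makes more visible that the subspace projectors are exactly the $V^\perp$-coset sums of the $\tau^W_x$, but at the cost of extra bookkeeping.
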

\begin{proof}
To start, given a set of outcomes $a_{1,1}, \dots, a_{1,k}$, suppose
$u$ and $u'$ are outcomes for a full $W$-basis measurement consistent with
these outcomes, i.e.\ $u$ and $u'$ are vectors such that for all $i$, $u \cdot
v_i = a_{1,i}$. Then it must hold that $u - u' \in V^\perp$. Using
this, the bound in~\Cref{eq:m-commutes-subspace-proj} becomes
\begin{equation}\label{eq:m-commutes-subspace-proj2}
  \sum_{a_2} \frac{1}{|V^\perp|} \sum_{u} \Big\| \sum_{w \in V^\perp} (M_{a_2} \cdot
  (\tau^W_{u+w} \otimes I_{\reg{aux}}) - (\tau^{W}_{u+w} \otimes I_{\reg{aux}}) \cdot M_{a_2}) \otimes I_{\reg{Bob}} \ket{\psi}\Big\|^2 \leq \delta,
\end{equation}
where the factor of $1/|V^\perp|$ is because each outcome $a_{1,1},
\cdots, a_{1,k}$ corresponds to $|V^\perp|$ different choices of $u$.
  
Our goal is to bound
\begin{equation}\label{eq:gonna-bound-this}
\E_{\bu\sim V} \sum_{a_2} \Vert (M_{a_2} \cdot (W(\bu)\otimes I_{\reg{aux}}) - (W(\bu)\otimes I_{\reg{aux}}) \cdot M_{a_2}) \otimes I_{\reg{Bob}} \ket{\psi} \Vert^2.
\end{equation}
by $\delta$.
To do so, for a fixed~$u$ we introduce the notation
\begin{align}
\Delta_{a_2}^{u}
&:= M_{a_2} \cdot (W(u) \otimes I_{\reg{aux}})- (W(u)\otimes I_{\reg{aux}}) \cdot M_{a_2}\\
&= \sum_{x \in \F_q^n} \omega^{\tr[u \cdot x]} (\underbrace{M_{a_2} \cdot (\tau^{W}_{x} \otimes I_{\reg{aux}})
	- (\tau^{W}_{x} \otimes I_{\reg{aux}}) \cdot M_{a_2}}_{\Delta_{a_2,x}}).\label{eq:pauli-obs-commutator}
\end{align}
We record the following identity, which follows
from~\Cref{eq:pauli-obs-commutator} and~\Cref{fact:averages-to-zero-subspace}:
\begin{equation*}
\E_{\bu \sim V } (\Delta_{a_2}^{\bu})^\dagger\cdot \Delta_{a_2}^{\bu}
= \E_{\bu \sim V} \sum_{x, x' \in \F_q^n} \omega^{\tr[\bu \cdot (x' - x)]} (\Delta_{a_2,x})^\dagger \Delta_{a_2,x'}
= \sum_{x \in \F_q^n} \sum_{w \in V^\perp} (\Delta_{a_2,x})^\dagger \Delta_{a_2,x+w}.
\end{equation*}
As a result,
\begin{align*}
\eqref{eq:gonna-bound-this}=
\E_{\bu} \sum_{a_2}\Vert (\Delta_{a_2}^{\bu} \otimes I_{\reg{Bob}}) \ket{\psi}\Vert^2
&= \E_{\bu} \sum_{a_2} \bra{\psi} (\Delta_{a_2}^{\bu})^\dagger \Delta_{a_2}^{\bu} \otimes I_{\reg{Bob}} \ket{\psi}\\
&=  \sum_{a_2} \sum_{x \in \F_q^n} \sum_{ w\in V^\perp} \bra{\psi}(\Delta_{a_2,x})^\dagger \Delta_{a_2,x+w} \otimes I_{\reg{Bob}} \ket{\psi}\\
&= \sum_{a_2, x} \frac{1}{|V^\perp|} \Big\| \sum_{w \in V^\perp} \Delta_{a_2,x+w} \otimes I_{\reg{Bob}} \ket{\psi} \Big\|^2,
\end{align*}
where the factor of $1/|V^\perp|$ is again to deal with overcounting. But this is at most $O(\delta)$, by~\Cref{eq:m-commutes-subspace-proj2}. This completes the proof.
\end{proof}

  \begin{lemma}\label{lem:commute-to-twirlU}
    Let $\{M_{a}\}$ be a measurement and $\calU$ be a distribution over
    unitaries,
    and suppose that for $U$ drawn uniformly from $\calU$,
    \begin{equation*}
      ((U^\dagger \otimes I_{\reg{aux}}) \cdot M_{a}) \otimes I_{\reg{Bob}}
      \approx_{\delta} (M_{a} \cdot (U^\dagger \otimes I_{\reg{aux}})) \otimes I_{\reg{Bob}},
    \end{equation*}
    where the distribution inherent in the $\approx_{\delta}$ notation
    is the uniform distribution over $\calU$.
    Then
    \begin{equation*}
      M_{a} \otimes I_{\reg{Bob}} \approx_{\delta} (\twirl_{\calU} \otimes I_{\reg{aux}})[M_{a}] \otimes I_{\reg{Bob}}.
    \end{equation*}
  \end{lemma}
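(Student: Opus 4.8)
The plan is to run the proof of \Cref{lem:commute-to-twirl} essentially verbatim, observing that it never used the Pauli structure of the twirl, only the identity $\bU\bU^\dagger = I$ and convexity of the squared norm. Throughout, $M_a$ acts on Alice's register together with $\reg{aux}$, the unitaries $\bU\sim\calU$ act only on the register being twirled, and everything is further tensored with $I_{\reg{Bob}}$; since this lemma carries no question variable, the $\approx_\delta$-conclusion amounts to the single inequality $\sum_a \Vert (\Delta_a\otimes I_{\reg{Bob}})\ket{\psi}\Vert^2 = O(\delta)$, where $\Delta_a := M_a - (\twirl_{\calU}\otimes I_{\reg{aux}})[M_a]$ is viewed as an operator on Alice's register-plus-$\reg{aux}$ space.

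First I would unfold the definition of $\twirl_{\calU}$ to get $(\twirl_{\calU}\otimes I_{\reg{aux}})[M_a] = \E_{\bU\sim\calU}\bigl[(\bU\otimes I_{\reg{aux}})\, M_a\,(\bU^\dagger\otimes I_{\reg{aux}})\bigr]$, and note that $M_a = \E_{\bU\sim\calU}\bigl[(\bU\otimes I_{\reg{aux}})(\bU^\dagger\otimes I_{\reg{aux}})\, M_a\bigr]$ because $(\bU\otimes I_{\reg{aux}})(\bU^\dagger\otimes I_{\reg{aux}}) = I$. Subtracting and setting $A^{\bU} := \bU\otimes I_{\reg{aux}}$ and $B^{\bU}_a := (\bU^\dagger\otimes I_{\reg{aux}})\, M_a - M_a\,(\bU^\dagger\otimes I_{\reg{aux}})$ gives $\Delta_a = \E_{\bU\sim\calU}[A^{\bU} B^{\bU}_a]$. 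Then Jensen's inequality pulls the expectation over $\bU$ outside the squared norm, and since $A^{\bU}\otimes I_{\reg{Bob}}$ is unitary it drops out:
\[
\sum_a \Vert (\Delta_a\otimes I_{\reg{Bob}})\ket{\psi}\Vert^2
\;\leq\; \E_{\bU\sim\calU}\sum_a \Vert (A^{\bU} B^{\bU}_a\otimes I_{\reg{Bob}})\ket{\psi}\Vert^2
\;=\; \E_{\bU\sim\calU}\sum_a \Vert (B^{\bU}_a\otimes I_{\reg{Bob}})\ket{\psi}\Vert^2.
\]
The right-hand side is precisely the quantity bounded by the hypothesis — the $\approx_\delta$-statement comparing $(U^\dagger\otimes I_{\reg{aux}})M_a$ with $M_a(U^\dagger\otimes I_{\reg{aux}})$ on the uniform distribution over $\calU$ — hence it is $O(\delta)$, which is exactly the claimed conclusion.

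The hard part, such as it is, will be only the bookkeeping: keeping the three tensor factors (twirled register, $\reg{aux}$, $\reg{Bob}$) straight, checking that $A^{\bU}\otimes I_{\reg{Bob}}$ really is a unitary so that norms are preserved in the last equality, and observing that the hypothesis and conclusion carry different distributions (uniform over $\calU$ versus the singleton), which is consistent since the conclusion has nothing to average over. One could equally well just invoke the proof of \Cref{lem:commute-to-twirl} and remark that it goes through with $X(\bu)Z(\bu')$ replaced by a generic $\bU\sim\calU$, but writing it out as above is no longer.
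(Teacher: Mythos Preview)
Your proposal is correct and matches the paper's proof essentially line for line: the paper also writes $\Delta_a = \E_{\bU}[\bU \cdot B(\bU)_a]$ with $B(\bU)_a = (\bU^\dagger \otimes I_{\reg{aux}}) M_a - M_a (\bU^\dagger \otimes I_{\reg{aux}})$, applies Jensen's inequality to pull out the expectation, drops the unitary $\bU$, and invokes the hypothesis. There is nothing to add.
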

  
  \begin{proof}
    By definition,
\begin{equation*}
(\twirl_1 \otimes I_{\reg{aux}})[M_{a}]
= \E_{\bU \sim \calU} [(\bU \otimes I_{\reg{aux}}) \cdot M_a \cdot (\bU^\dagger \otimes I_{\reg{aux}})].
\end{equation*}
Similarly,
\begin{equation*}
M_{a}
= \E_{\bU \sim \calU} [( \bU \otimes I_{\reg{aux}}) \cdot
(\bU^\dagger \otimes I_{\reg{aux}})  \cdot M_a].
\end{equation*}
As a result, if we set
\begin{equation*}
B(\bU)_a = (\bU^\dagger \otimes I_{\reg{aux}}) \cdot M_a - M_a \cdot (\bU^\dagger \ot I_{\reg{aux}}),
\end{equation*}
then
\begin{equation*}
\Delta_a:=
M_a - (\twirl_1 \otimes I_{\reg{aux}})[M_a]
= \E_{\bU \sim \calU} [\bU \cdot B(\bU)_a ].
\end{equation*}
We can therefore establish the lemma as follows:
\begin{align*}
\sum_a \Vert \Delta_a \otimes I_{\reg{Bob}} \ket{\psi} \Vert^2
& = \sum_a \Vert \E_{\bU \sim \calU} [ \bU \cdot B(\bU)_a ] \otimes I_{\reg{Bob}} \ket{\psi} \Vert^2\\
& \leq \E_{\bU} \sum_a \Vert (\bU \cdot B(\bU)_a ) \otimes I_{\reg{Bob}} \ket{\psi} \Vert^2\tag{Jensen's inequality}\\
& = \E_{\bU} \sum_a \Vert B(\bU)_a \otimes I_{\reg{Bob}} \ket{\psi} \Vert^2 \tag{$\bU$ is unitary}
\end{align*}
By assumption, this quantity is $O(\delta)$. This concludes the proof.
\end{proof}

\begin{proof}[Proof of \Cref{thm:partial-data-hiding-game}]
  We consider the completeness and soundness cases separately.
  \paragraph{Completeness}
  Let $\calS_{\mathrm{partial}} = (\psi, M^{\bot, x,v})$ be a partial
  $(k,n,q)$-register strategy for~$\game_{\mathrm{hide}}(S,x)$ which is also a real commuting EPR strategy, and for which
  the measurement $M^{\bot, x,v}_{a_2}$ has the form
  \[ M^{\bot, x,v}_{a_2} = \sum_{s \in \surfaces{v}} \Pi^{v}_s \ot A^{s,x,v}_{a_2} . \]
  To this strategy we will add matrices for the remaining questions.
  \begin{itemize}
  \item[$\circ$] Question $(Z, x)$: the measurement is
    \[ M^{(Z,x,v)}_{a_1, a_2} = \sum_{s \in \surfaces{v}} \Pi^{v}_s \cdot
      \tau^{Z}_{a_1} \ot A^{s,x,v}_{a_2}. \]
    This is a well-defined measurement as $\Pi^{v}_s$ is diagonal in the
    $Z$ basis and thus commutes with $\tau^Z_{a_1}$.
  \item[$\circ$] Question $(\bot, x, \{X, v_1, \dots, v_k\})$: the
    measurement is
    \[ M^{(\bot, x, \{X, v\})}_{\{a_{1,1}, \dots,
        a_{1,k}\}, a_2} = \sum_{s \in \surfaces{v}} \Pi^{v}_s \cdot
      \tau^{X}_{[\forall i, a_1 \cdot v_i = a_{1,i}]}  \ot
      A^{s,x,v}_{a_2}.\]
    This is a well-defined measurement as $\Pi^{v}_s$ commutes with
    $\tau^{X}_{[\forall i, a_1 \cdot v_i = a_{1,i}]}$
    by~\Cref{lem:subspace-x-z-commute}.
  \item[$\circ$] Question ($\bot, \bot, \{X, v_1, \dots, v_k\})$: the
    measurement is
    \[ M^{(\bot, \bot, \{X, v\})}_{\{a_{1,1}, \dots, a_{1,k},
        \varnothing} = \tau^X_{[\forall i, a_1 \cdot v_i = a_{1,i}]}
      \ot I. \]
  \item[$\circ$] Question $(X, \bot)$: the measurement is
    \[ M^{(X, \cdot)}_{a_1} = \tau^{X}_{a_1} \otimes I_{\reg{aux}}. \]
  \end{itemize}
This is a $(1,n,q)$-register strategy for~$\game_{\mathrm{hide}}$ by
design, and it is not hard to see that it achieves value $1$ on the
game. Assuming that the partial strategy $\cal{S}_{\mathrm{partial}}$
is a real commuting EPR strategy, it is not hard to see that the full
strategy above is also real (this is because if $M^{\bot, x,v}$ is real and
of the given form, then the matrices $A^{s,x,v}_{a_2}$ must also be
real). That the strategy is also commuting follows from the
description of $\game_{\mathrm{hide}}$. In particular, note that while
$M^{Z,x,v}_a$ does \emph{not} commute with $M^{\bot, x \{X, v\}}$ or
with $M^{X, \bot}$, the test never requires these measurements to be measured at the same time.

  \paragraph{Soundness}
   Recall that a strategy $\calS$ for this game
  consists of a state $\ket{\psi} = \ket{\epr^n_q} \ot \ket{\rmaux}$ and measurement operators of three types, corresponding to
  the four types of queries: $M^{(\bot,
    x,v)}_{\varnothing, a_2}$, $M^{(Z, x, v)}_{a_1, a_2}$, $M^{\bot, x, \{X, v\}, }_{\{a_{1,1}',
    \dots, a_{1,k}'\}, a_2'}$, and $M^{(X, \cdot)}_{a_1}$. 

  We start by analyzing the third and fourth parts of the test. The goal of these
  parts of the test is to certify that when given the query $(\{X, v\}, x)$, the prover returns $k$ answers $\ba_{1,1}',
  \dots, \ba_{1,k}'$ that are consistent with measuring the $X(v_1),
  \dots, X(v_k)$ observables on the state. We certify this in two
  stages. In part three of the test, we ask
  the first prover to do a complete measurement in the $X$ basis, and send
  the second prover the query $\{X, v\}$ indicating that it is to
  perform a partial $X$ measurement, and
  check consistency of outcomes. Importantly, in this part of the
  test, we \emph{cannot} send the second prover the query $x$, since
  the corresponding $\Pi^v_s$ measurement does not commute with the
  complete $X$ measurement performed by the first prover. Thus, in
  part four of the test, we send one prover the query $\{X, v\}$ and
  the other $(x, \{X, v\})$, and check consistency of their
  outcomes. 
  
  Since $\calS$ is a
  $(k,n,q)$-register strategy, \Cref{eq:in-math} implies that
  $M_{a_1}^{(X,\cdot)} = \tau^{X}_{a_1} \otimes I_{\reg{Bob}}$. We thus have
  from the third part of the test that
  \[ M^{\bot, \bot, \{X, v\}}_{\{a_{1,1}', \dots,
      a_{1,k}'\}} \ot I_{\reg{Bob}} \approx_\eps \tau^{X}_{[\forall i,
      u \cdot v_i = a'_{1,i}]} \ot I_{\reg{Bob}}.\]
  From the above equation and the fourth part of the test, we have
  \[ M^{\bot, x, \{X, v\}}_{\{a_{1,1}', \dots, a_{1,k}'\}} \ot
    I_{\reg{Bob}} \approx_\eps M^{\bot, \bot, \{X,v\}}_{\{a_{1,1}',
      \dots, a_{1,k}'\}} \ot I_{\reg{Bob}} \approx_\eps \tau^X_{[\forall i, u \cdot v_i =
      a'_{1,i}]} \ot I_{\reg{Bob}}. \]
  
  Next, we look at the first and second parts of $\game_{\mathrm{hide}}(S,x)$. These are
  essentially two instances of the commutation test. The first part of
  $\game_{\mathrm{hide}}$ certifies that the second outcome of $M^{Z,
    x,v}_{a_1, a_2}$ is consistent with $M^{\bot, x,v}_{\varnothing,
    a_2}$, and the hypothesis that the strategy $\calS$ is a
  $(1,n,q)$-register strategy tells us that the first outcome of
  $M^{Z,x,v}_{a_1, a_2}$ is consistent with $\tau^Z_{a_1} \ot I_{\reg{Bob}}$. Thus,
  applying the analysis of the commutation, it follows that
  \[ (M^{\bot, x,v }_{\varnothing, a_2} \cdot (\tau^Z_{a_1} \ot I_{\reg{aux}})) \otimes I_{\reg{Bob}} \approx_{\eps}
    ((\tau^Z_{a_1} \ot I_{\reg{Bob}})\cdot M^{\bot, x,v}_{\varnothing, a_2}) \otimes I_{\reg{Bob}}. \]
  A similar analysis for the second part of $\game_{\mathrm{hide}}(S)$
  certifies that
\[
((\tau^{X}_{[\forall i, a_1 \cdot v_i = a'_{1,i}]} \otimes
I_{\reg{aux}})\cdot M_{\varnothing, a_2}^{\bot,x, v}) \otimes I_{\reg{Bob}}
\approx_{\eps} (M_{\varnothing, a_2}^{\bot, x, v} \cdot(\tau^{X}_{[\forall i, a_1 \cdot v_i = a'_{1,i}]} \otimes I_{\reg{aux}})) \otimes I_{\reg{Bob}}.
\]

As a result, it holds that $W \in \{X,Z\}$, by \Cref{lem:proj-to-obs-subspace}, 
\begin{equation*}
(M_{\varnothing, a_2}^{\bot, x , v}\cdot (W(u)\otimes I_{\reg{aux}})) \otimes I_{\reg{Bob}}
\approx_{\eps} ((W(u)\otimes I_{\reg{aux}}) \cdot M_{\varnothing,
  a_2}^{\bot, x, v}) \otimes I_{\reg{Bob}}.
\end{equation*}
where if $W = X$, then $u$ is chosen uniformly over $V$, and
if $W = Z$, then $u$ is drawn uniformly from $\F_q^n$.
As a result, by~\Cref{fact:the-ol-pauli-swaperoonie} and \Cref{fact:add-a-proj},
\begin{align*}
(M_{\varnothing, a_2}^{\bot, x, v} \cdot (Z(u') X(u)\otimes I_{\reg{aux}})) \otimes I_{\reg{Bob}}
& \approx_{0}(M_{\varnothing, a_2}^{\bot,x,v} \cdot (Z(u')\otimes I_{\reg{aux}})) \otimes (X(-u)\otimes I_{\reg{aux}})\\
& \approx_{\eps}((Z(u')\otimes I_{\reg{aux}}) \cdot M_{\varnothing, a_2}^{\bot,x,v}) \otimes (X(-u)\otimes I_{\reg{aux}})\\
& \approx_{0}((Z(u')\otimes I_{\reg{aux}} )\cdot M_{\varnothing,a_2}^{\bot,x,v} \cdot (X(u) \otimes I_{\reg{aux}})) \otimes I_{\reg{Bob}}\\
& \approx_{\eps}((Z(u') X(u)\otimes I_{\reg{aux}}) \cdot
                                                                                                                                    M_{\varnothing,
                                                                                                                                    a_2}^{\bot,x,
                                                                                                                                    v})
                                                                                                                                    \otimes I_{\reg{Bob}},
\end{align*}
on the distribution where $v$ is chosen from $S$, and then $\bu \sim V$, $\bu' \sim \F_q^n$.
Applying \Cref{lem:commute-to-twirlU} and \Cref{prop:hidden-subspace-twirl},
we can therefore conclude that
\begin{equation*}
M_{\varnothing, a_2}^{\bot, x,v} \otimes I_{\reg{Bob}}
\approx_{\eps} (\twirl_{\calS} \otimes I_{\reg{aux}})[M_{\varnothing, a_2}^x]
\otimes I_{\reg{Bob}}
= \left(\sum_{s \in \surfaces{v}} \Pi^{v}_s \ot (A^{x,v,s}_{a_2})_{\reg{aux}}\right)
\otimes I_{\reg{Bob}}, 
\end{equation*}
for some choice of measurements $A^{x,v,s}_{a_2}$ on the aux register.
\end{proof}


\part{$\neexp$ protocol}

\label{part:neexp}


\section{A review of a classical PCP theorem}\label{sec:classical-pcp}

We begin \cref{part:neexp} by reviewing the following classical PCP theorem:
\begin{equation}
\succinct \in \mathrm{PCP}[n, \poly(n)].
\end{equation}
This implies, by standard reduction, that $\succinct \in \mathrm{MIP}$,
which is the main result of~\cite{BFL91}.
Reviewing this serves two purposes:
(i) our $\mip^*$ protocols are inspired by this PCP construction,
and (ii) their correctness is actually shown by reduction to the correctness of this PCP construction (\Cref{lem:polynomial-means-good} below).
This section closely follows the excellent course notes of Harsha~\cite{Har10}.

\subsection{The instance}

The input to the verifier is an instance of the $\succinct$ problem,
i.e.\ a circuit~$\calC$  of size~$s$ with $3n+3$ inputs.
We apply the Tseitin transformation to it to produce a formula~$\calF$ with $n' = 3n+3+s$ inputs.
It encodes the $\sat$ formula $\psi := \psi_{\calF}$ on~$N = 2^n$ variables
which contains $(x_i^{b_1} \lor x_j^{b_2} \lor x_k^{b_3})$ 
as a clause
if and only if~$\calF(i, j, k, b_1, b_2, b_3, w) = 1$ for some $w \in \{0, 1\}^s$.

\subsection{Encoding assignments}

Writing $\calS = \{0, 1\}^n$, an assignment to the variables of~$\psi$ is a string $a \in \{0,1\}^{\calS}$, or equivalently a string in $\{0, 1\}^N$.
The first step of the PCP theorem is to take the low-degree encoding of~$a$.
We begin by choosing parameters.

\begin{definition}\label{def:admissible}
Recall that
$N = 2^n$,
$h= 2^{t_1}$,
$q = 2^{t_2}$,
and $m$ are admissible parameters if $t_1 \leq t_2$ and $h^m \geq N$.
We call them \emph{exactly} admissible if the stronger condition $h^m = N = 2^n$ holds.
\end{definition}

We select $n$, $h = 2^{t_1}$, $q = 2^{t_2}$, and $m$ to satisfy our ``rule of thumb" parameter settings (\Cref{eq:parameters}):
\begin{equation*}
	h = \Theta(n), \quad m = \Theta\left(\frac{n}{\log(n)}\right), \quad q = \Theta((n')^{10}).
\end{equation*}
Note that $q$ depends on~$n'$ rather than just~$n$.

It remains to choose~$H$ and~$\pi$.
Our construction requires that the permutation~$\pi$ be efficiently computable,
and so we pick these according to the canonical low-degree encoding (\Cref{def:canonical-low-degree}).
This entails setting $H= H_{t_1, t_2}$.
As for $\pi$, we modify the construction slightly.
This is because the canonical low-degree encoding is designed for strings whose coordinates are indexed by an integer $i \in[n]$,
which must first be converted to binary when computing~$\pi$.
However, the coordinates of our strings $a \in \{0, 1\}^{\calS}$ are indexed by elements of $\calS = \{0, 1\}^n$,
which are already written in binary, allowing us to skip the conversion.
Hence, within this section, we define $\pi:=\pi_{n, t_1, t_2} :\calS \rightarrow H^m$ by setting
\begin{equation*}
\pi(b_1, \ldots, b_n) := \sigma_{n, t_1, t_2}(b_1, \ldots, b_n)= (\sigma(b_1, \ldots, b_{t_1}), \sigma(b_{t_1+1}, \ldots, b_{2t}), \ldots, \sigma(b_{n-t_1+1}, \ldots, b_n)),
\end{equation*}
where $\sigma:=\sigma_{t_1, t_2}$.
Given these parameters,
an assignment~$a$ is encoded as a degree-$O(mh)$ polynomial $g_a:\F_q^m \rightarrow \F_q$.

The crucial property of~$\pi$ that we will need later is that it has an efficiently-computable, low-degree inverse.
We will show this here.
To do so, we begin by recalling the notation $\indicator{H}{x}{y}$ for the indicator function of~$x \in H$ over~$H$:
\begin{equation*}
\indicator{H}{x}{y} = \frac{\prod_{b \neq x}(y - b)}{\prod_{b \neq x}(x-b)},
\end{equation*}
where~$b$ ranges over~$H$.
This is a degree-$h$ polynomial which can be computed in time $\poly(h, q)$.
\begin{definition}
Let $N= 2^n$, $h = 2^{t_1}$, $q= 2^{t_2}$, and $m$ be exactly admissible parameters.
Set $H = H_{t_1, t_2}$, $\sigma = \sigma_{t_1, t_2}$, and $\pi = \pi_{n, t_1, t_2}$.
Consider the function $\mu:= \mu_{t_1, t_2}:\F_q \rightarrow \F_q^{t_1}$ whose $i$-th component is defined as
\begin{equation*}
\mu_i(y) = \sum_{x \in H : \tr[e_i \cdot x] = 1} \indicator{H}{x}{y}.
\end{equation*}
Let $y = b_1 \cdot e_1 + \cdots + b_{t_1} \cdot e_{t_1}$ be an element of $H$.
Then $\mu_i(y) = b_i$, and so $\mu(y) = (b_1, \ldots, b_{t_1})$.
This means that $\mu(\sigma(b_1, \ldots, b_{t_1})) = (b_1, \ldots, b_{t_1})$.
As a result, if we define the function $\nu := \nu_{n, t_1, t_2} : \F_q^m \rightarrow \calS_n$ to be
\begin{equation*}
\nu(x_1, \ldots, x_m) := (\mu(x_1), \ldots, \mu(x_m))
\end{equation*}
then $\nu(\pi(x)) = x$ for any $x \in \calS_n$.
Each component of $\nu$ is the sum of $\frac{h}{2}$ indicator functions,
and is therefore degree-$h$ and computable in time $\poly(h,q)$.
As a result, $\nu$ is computable in time $\poly(n,h,q)$.
\end{definition}

\subsection{Encoding the formula}

Our next step is to produce a similar ``low-degree encoding" for the formula~$\psi$.
This will be a function $g_\psi: \F_q^{m'} \rightarrow \F_q$, for $m' = 3m + 3+s$, with the property that for all $i, j, k \in \calS$, $b_1, b_2, b_3 \in \{0, 1\}$, and $w \in \{0, 1\}^s$,
\begin{equation*}
g_\psi(\pi(i), \pi(j), \pi(k), b_1, b_2, b_3, w) = \calF(i, j, k, b_1, b_2, b_3, w).
\end{equation*}
This can be accomplished by setting $\calS' := \{0, 1\}^{n'}$, viewing $\calF$ as computing a string $a_{\psi} \in \{0, 1\}^{\calS'}$, and setting $g_{\psi}$ to be its low-degree encoding.
However, the verifier in our protocol will be required to evaluate $g_\psi$ on a particular input,
and this seems challenging given that this $g_{\psi}$ is computed by interpolating over an exponential number of points.
Instead, we will produce a $g_\psi$ which we can efficiently evaluate at any point using the fact that we have a succinct formula~$\calF$ representing~$\psi$.

To begin, we convert~$\calF$ into an algebraic formula which operates over $\F_q$-valued inputs using arithmetization (cf.\ \Cref{def:arithmetization}).
Set $\calF_{\mathrm{arith}} := \arith{q}{\calF}$.
This is a  function $\calF_{\mathrm{arith}}:\F_q^{n'}\rightarrow \F_q$
with the property that for any $x \in \{0, 1\}^{n'}$, $\calF_{\mathrm{arith}}(x) = \calF(x)$.
Furthermore, $\calF_{\mathrm{arith}}$ has degree~$O(n')$ and is computable in time $\poly(n', q)$.
We can now define the function $g_\psi$ as follows.
\begin{definition}\label{def:encoded-function}
Let $N=2^n$, $h = 2^{t_1}$, $q= 2^{t_2}$, and $m$ be exactly admissible parameters.
Set $\nu:=\nu_{n, t_1, t_2}$.
Let $\calC$ be a $\succinct$ instance whose Tseitin transformation~$\calF$
has $n' = 3n + 3 + s$ inputs and encodes the formula $\psi := \psi_{\calF}$,
and let $\calF_{\mathrm{arith}} = \arith{q}{\calF}$.
Write $m' = 3m + 3 + s$.
Then we define $g_\psi := g_{\psi, n, t_1, t_2} : \F_q^{m'} \rightarrow \F_q$ to be the function
\begin{equation*}
g_\psi(x_1, x_2, x_3, b_1, b_2, b_3, w) := \calF_{\mathrm{arith}}(\nu(x_1), \nu(x_2), \nu(x_3), b_1, b_2, b_3, w).
\end{equation*}
This is degree $h \cdot O(n')$ and can be computed in time $\poly(n',h,q)$.
\end{definition}
For shorthand, we will often write inputs to $g_\psi$ as tuples $(x, b, w) \in \F_q^{3m+3+s}$, where $x = (x_1, x_2, x_3)$ contains three strings in $\F_q^m$
and $b = (b_1, b_2, b_3)$ contains three numbers in $\F_q$.

\subsection{Zero on subcube}

Given a function $g:\F_q^m \rightarrow \F_q$,
we would like to check that it is the low-degree encoding of an assignment which satisfies~$\psi$.
To do so, we define the following function.
\begin{definition}\label{def:formula-function}
Let $N=2^n$, $h = 2^{t_1}$, $q= 2^{t_2}$, and $m$ be exactly admissible parameters.
Let $\calC$ be a $\succinct$ instance whose Tseitin transformation~$\calF$
has $n' = 3n + 3 + s$ inputs and encodes the formula $\psi := \psi_{\calF}$,
and let $g_\psi := g_{\psi, n, t_1, t_2}$.
Set $m' = 3m + 3 + s$.
Then given a function $g:\F_q^m \rightarrow \F_q$, 
we define $\mathrm{sat}_{\psi,g} := \mathrm{sat}_{\psi,g, n, t_1, t_2} : \F_q^{m'} \rightarrow \F_q$ to be the function
\begin{equation*}
\mathrm{sat}_{\psi,g}(x, b,w) := g_{\psi}(x, b, w) \cdot (g(x_1)-b_1) (g(x_2)-b_2) (g(x_3)-b_3).
\end{equation*}
\end{definition}
The crucial property we would like to check is that $\mathrm{sat}_{\psi, g}$ is \emph{zero on the subcube $H_{\mathrm{zero}} := H^{3m} \otimes \{0, 1\}^{3+s}$}.
\begin{proposition}\label{prop:zero-on-subcube}
The function $\mathrm{sat}_{\psi,g}$ is zero on the subcube~$H_{\mathrm{zero}}$ for some $g:\F_q^m\rightarrow \F_q$ if and only if~$\psi$ is satisfiable.
If it \emph{is} satisfiable, $g$ may be taken to be degree-$O(mh)$, in which case $\mathrm{sat}_{\psi,g}$ is  degree-$O(mh + h n')$.
\end{proposition}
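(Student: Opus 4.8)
The statement has two directions plus a degree bound. The plan is to reduce everything to the combinatorial/algebraic facts already assembled: the low-degree encoding property (that $g_a(\pi(i)) = a_i$), the arithmetization identity ($g_\psi(\pi(i),\pi(j),\pi(k),b,w) = \calF(i,j,k,b,w)$ from \Cref{def:encoded-function}), and the Tseitin correctness (that $\calC$ on $(i,j,k,b)$ equals $1$ iff there is a witness $w$ with $\calF(i,j,k,b,w)=1$, hence the clause $(x_i^{b_1}\lor x_j^{b_2}\lor x_k^{b_3})$ is in $\psi$). The degree bound will follow by multiplying together the degrees of the three factors in \Cref{def:formula-function}.

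\emph{Forward direction ($\psi$ satisfiable $\Rightarrow$ there is a $g$ with $\mathrm{sat}_{\psi,g}$ zero on $H_{\mathrm{zero}}$).} First I would take a satisfying assignment $a \in \{0,1\}^{\calS}$ for $\psi$ and let $g := g_a$ be its low-degree (Reed-Muller) encoding, which is degree-$O(mh)$ by the properties in the low-degree encoding subsection. Then I would evaluate $\mathrm{sat}_{\psi,g}$ at an arbitrary point $(x,b,w) \in H_{\mathrm{zero}}$, i.e. $x = (\pi(i),\pi(j),\pi(k))$ for $i,j,k \in \calS$ and $b = (b_1,b_2,b_3) \in \{0,1\}^3$, $w \in \{0,1\}^s$. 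Case split on whether $(x_i^{b_1}\lor x_j^{b_2}\lor x_k^{b_3})$ is a clause of $\psi$. If it is \emph{not} a clause, then $\calF(i,j,k,b,w) = 0$ for all $w$, so $g_\psi(x,b,w) = 0$ by \Cref{def:encoded-function}, killing the product. If it \emph{is} a clause, then since $a$ satisfies $\psi$, at least one of $a_i = b_1$, $a_j = b_2$, $a_k = b_3$ fails to hold with the literal unsatisfied — more precisely, the clause is satisfied, so it is \emph{not} the case that ($a_i = 1-b_1$ and $a_j = 1-b_2$ and $a_k = 1-b_3$); hence one of $g(x_\ell) - b_\ell = g_a(\pi(i_\ell)) - b_\ell = a_{i_\ell} - b_\ell$ equals $0$ (here I should double-check the sign/convention: $x_i^1$ is $x_i$ and $x_i^0$ is $\overline{x_i}$, so the clause is violated exactly when $a_i = 1 - b_1$ etc.). Either way one factor vanishes, so $\mathrm{sat}_{\psi,g}(x,b,w) = 0$.

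\emph{Reverse direction (some $g$ with $\mathrm{sat}_{\psi,g}$ zero on $H_{\mathrm{zero}}$ $\Rightarrow$ $\psi$ satisfiable).} Given such a $g$, define $a \in \{0,1\}^{\calS}$ by rounding: $a_i := 1$ if $g(\pi(i)) = 1$ and $a_i := 0$ otherwise (or, cleaner, first argue $g(\pi(i)) \in \{0,1\}$ for all $i$ — this is actually forced by taking $b$ with a single coordinate and using that non-clauses contribute nothing while clauses do; if that is not immediately forced I would simply round and check the clause condition directly). Then for any clause $(x_i^{b_1}\lor x_j^{b_2}\lor x_k^{b_3})$ of $\psi$, there is $w$ with $\calF(i,j,k,b,w) = 1$, so $g_\psi(\pi(i),\pi(j),\pi(k),b,w) = 1 \neq 0$; since $\mathrm{sat}_{\psi,g}$ vanishes at this point of $H_{\mathrm{zero}}$, one of $g(\pi(i_\ell)) - b_\ell = 0$, i.e. $a_{i_\ell} = b_\ell$, which means literal $x_{i_\ell}^{b_\ell}$ is satisfied by $a$ (again modulo the literal convention). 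Hence every clause is satisfied and $\psi$ is satisfiable.

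\emph{Degree bound.} When $\psi$ is satisfiable we took $g = g_a$ of degree $O(mh)$. The factor $g_\psi$ has degree $h\cdot O(n')$ by \Cref{def:encoded-function}. The three factors $(g(x_\ell) - b_\ell)$ each have degree $O(mh)$ in the variables $(x,b)$. Multiplying, $\mathrm{sat}_{\psi,g}$ has degree $O(mh + hn') + O(mh) = O(mh + hn')$, matching the claim. The main obstacle is not any of these steps individually — each is a short unwinding of a definition — but rather being careful with the two bookkeeping conventions that quietly pervade the argument: (i) the literal encoding ($x_i^1 = x_i$, $x_i^0 = \overline{x_i}$), which determines whether ``clause satisfied'' corresponds to $a_i = b_\ell$ or $a_i = 1-b_\ell$, and (ii) making sure the point $(x,b,w) = (\pi(i),\pi(j),\pi(k),b,w)$ ranges over \emph{all} of $H_{\mathrm{zero}} = H^{3m}\otimes\{0,1\}^{3+s}$ as $i,j,k$ range over $\calS$ and $b,w$ over bits, which is exactly the content of $\pi$ being a bijection onto $H^m$ (from \Cref{def:canonical-low-degree} / the modified $\pi$ in this section) — so I would state that surjectivity explicitly before the case analysis.
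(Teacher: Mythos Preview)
The paper does not actually supply a proof of this proposition; it is stated as a standard fact (the section explicitly follows Harsha's course notes). Your proposal is the standard argument and is correct. A couple of minor clean-ups: in the forward direction your literal bookkeeping is right once you settle it---under the paper's convention, literal $x_i^{b}$ is satisfied iff $a_i = b$, so ``clause satisfied'' gives some $a_{i_\ell} = b_\ell$ and hence $(g_a(\pi(i_\ell)) - b_\ell) = 0$; in the reverse direction your rounding $a_i := \bone[g(\pi(i)) = 1]$ already suffices without first arguing $g(\pi(i)) \in \{0,1\}$, since whenever a factor vanishes we have $g(\pi(i_\ell)) = b_\ell \in \{0,1\}$ and the rounded value agrees. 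Your note about needing $\pi$ to be a bijection onto $H^m$ is exactly the ``exactly admissible'' condition $h^m = 2^n$ in \Cref{def:admissible}, so that step is already in place.
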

Note what \Cref{prop:zero-on-subcube} does \emph{not} say.
It does not say that if $\mathrm{sat}_{\psi,g}$ is zero on the subcube, then~$g$ is the low degree encoding of a satisfying assignment of~$\psi$.
It does not even say that~$g$ must be low-degree.
(Indeed, $g$ might have high degree, as $\mathrm{sat}_{\psi,g}$ only checks~$g$ on those numbers in the range of~$\pi$.)
What it \emph{does} say is that \emph{if} $\psi$ is satisfiable, \emph{then} there exists such a $g$ which is low-degree: the low-degree encoding of a satisfying assignment.
Our strategy, then, will be to verify that that $g$ is low-degree and then use this fact to verify that $\mathrm{sat}_{\psi,g}$ is zero on the subcube.
(We can then ``forget" that~$g$ is low-degree, as it is no longer required for the analysis.)

To verify this that $\mathrm{sat}_{\psi,g}$ is zero on~$H_{\mathrm{zero}}$, we would like it to be encoded so that this is self-evidently true.
This entails expanding $\mathrm{sat}_{\psi,g}$ in a ``basis" of simple polynomials which are zero on the subcube.
To begin, given a subset $S \subseteq \F_q$, define
\begin{equation*}
\mathrm{zero}_S(x) := \prod_{b \in S}(x-b).
\end{equation*}
The following proposition shows how to expand into this ``zero" basis.
\begin{proposition}\label{prop:coefficient-polys}
Let $f:\F_q^{n} \rightarrow \F_q$ be a degree-$d$ polynomial which is zero on the subcube $H = H_1 \otimes \cdots \otimes H_n$. Then there exist degree-$(d-h)$ ``coefficient polynomials" $c_1, \ldots, c_n$ such that
\begin{equation*}
f(x) = \mathrm{zero}_{H,c}(x):= \sum_{i=1}^{n} \mathrm{zero}_H(x_i) \cdot c_i(x).
\end{equation*}
\end{proposition}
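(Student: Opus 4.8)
The plan is to prove this by multivariate division with remainder, eliminating one coordinate at a time. Write $h := |H_i|$ (the same for every $i$, as in the statement). The key observation is that $\mathrm{zero}_{H_i}(x_i) = \prod_{b \in H_i}(x_i - b)$ involves only the variable $x_i$ and is \emph{monic} of degree $h$ in it; hence, viewing any polynomial as a univariate polynomial in $x_i$ with coefficients in the ring $\F_q[x_1,\dots,x_{i-1},x_{i+1},\dots,x_n]$, division by $\mathrm{zero}_{H_i}(x_i)$ yields a unique quotient and remainder. Starting from $f_0 := f$, I would define, for $i = 1,\dots,n$, the quotient $c_i$ and remainder $f_i$ of dividing $f_{i-1}$ by $\mathrm{zero}_{H_i}(x_i)$, so that $f_{i-1} = \mathrm{zero}_{H_i}(x_i)\,c_i(x) + f_i(x)$.

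Next I would verify three invariants by induction on $i$. First, \emph{vanishing}: $f_i$ vanishes at every point of the grid $H_1 \times \cdots \times H_n$; this holds for $f_0$ by hypothesis and is preserved because $\mathrm{zero}_{H_i}(x_i)\,c_i(x)$ vanishes whenever $x_i \in H_i$. Second, \emph{degree in cleared coordinates}: $\deg_{x_j} f_i < h$ for all $j \le i$; for $j = i$ this is the defining property of a remainder, and for $j < i$ it is preserved because dividing by a polynomial in $x_i$ alone rewrites each $x_i$-coefficient as an $\F_q[x_1,\dots,x_{i-1},x_{i+1},\dots,x_n]$-linear combination of $x_i$-coefficients of $f_{i-1}$, which cannot raise the degree in any $x_j$ with $j \ne i$. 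Third, \emph{total degree}: $\deg f_i \le d$ and $\deg c_i \le d - h$; this is the standard bookkeeping of the division algorithm, where one eliminates the top $x_i$-power $a(x_{\ne i})\,x_i^{e}$ (so that $\deg a \le d-e$) by subtracting $a(x_{\ne i})\,x_i^{e-h}\cdot \mathrm{zero}_{H_i}(x_i)$, whose quotient contribution $a(x_{\ne i})\,x_i^{e-h}$ has total degree at most $(d-e)+(e-h)=d-h$ and whose subtracted product has total degree at most $(d-e)+(e-h)+h = d$, so the running polynomial never exceeds total degree $d$ and $c_i$, a sum of such contributions, has degree at most $d-h$.

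After the $n$-th step, $f_n$ satisfies $\deg_{x_j} f_n < h = |H_j|$ for every $j$ and still vanishes on all of $H_1 \times \cdots \times H_n$. A polynomial with these properties must be identically zero: fixing $x_2,\dots,x_n$ in $H_2 \times \cdots \times H_n$, the univariate polynomial $x_1 \mapsto f_n$ has degree $< h$ yet vanishes at the $h$ distinct points of $H_1$, so it is the zero polynomial, meaning each of its $x_1$-coefficients — themselves polynomials in $x_2,\dots,x_n$ with degree $<h$ in each variable — vanishes on $H_2 \times \cdots \times H_n$; iterating (induction on the number of variables) gives $f_n \equiv 0$. Telescoping the relations $f_{i-1} = \mathrm{zero}_{H_i}(x_i)\,c_i + f_i$ then yields $f = \sum_{i=1}^n \mathrm{zero}_{H_i}(x_i)\,c_i(x) = \mathrm{zero}_{H,c}(x)$ with each $\deg c_i \le d-h$, as desired.

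This argument is entirely elementary, so there is no substantial obstacle; the only place that rewards care is the degree accounting in the third invariant. One must check both that the \emph{quotient} genuinely drops degree by $h$ (so that $c_i$ is degree $d-h$ rather than merely degree $d$) and, simultaneously, that dividing out $x_i$ does not re-inflate the degrees in the coordinates already cleared — both facts become transparent once $\mathrm{zero}_{H_i}(x_i)$ is written explicitly as $x_i^h$ minus a polynomial in $x_i$ of degree less than $h$.
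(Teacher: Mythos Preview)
The paper does not actually prove this proposition; it is stated without proof as a standard fact from the PCP literature (it is essentially the polynomial-ideal form of the combinatorial Nullstellensatz). Your proof is correct and is precisely the standard argument: iterated univariate division with remainder by the monic polynomials $\mathrm{zero}_{H_i}(x_i)$, followed by the observation that the final remainder has degree $<|H_j|$ in each $x_j$ yet vanishes on the full grid, hence is identically zero. The degree bookkeeping you give for the quotients is also the usual one and is sound.
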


For simplicity, we will write $\mathrm{zero}_{H,c}$ instead of $\mathrm{zero}_{H_{\mathrm{zero}},c}$.
We would like our proof to consist of the function~$g$ and the coefficient polynomials $c_1,\ldots, c_{m'}$ so that we may check the equality
$\mathrm{sat}_{\psi,g} \equiv \mathrm{zero}_{H,c}$.
The following lemma shows so long as these functions are low-degree,
we can verify that they are equal, and therefore show that $\psi$ is satisfiable.

\begin{lemma}\label{lem:the-grand-finale}
Let $N=2^n$, $h = 2^{t_1}$, $q= 2^{t_2}$, and $m$ be exactly admissible parameters.
Let $\calC$ be a $\succinct$ instance whose Tseitin transformation~$\calF$
has $n' = 3n + 3 + s$ inputs and encodes the formula $\psi := \psi_{\calF}$,.
Set $m' = 3m + 3 + s$.
Let $g:\F_q^m \rightarrow \F_q$, and set $\mathrm{sat}_{\psi,g} := \mathrm{sat}_{\psi,g, n, t_1, t_2}$.
Let $c_1, \ldots, c_{m'}: \F_q^{m'} \rightarrow \F_q$, set $H_{\mathrm{zero}} = H^{3m} \otimes \{0, 1\}^{3 + s}$, and write $\mathrm{zero}_{H_c} := \mathrm{zero}_{H_{\mathrm{zero}},  c}$.
Suppose that $g$ is degree-$d_1$, and suppose that $c_1, \ldots, c_{m'}$ are degree-$d_2$.
Suppose
\begin{equation*}
\Pr_{\bx \sim \F_q^{m'}}[\mathrm{sat}_{\psi,g}(\bx) =  \mathrm{zero}_{H,c}(\bx)] > \frac{\max\{O(h n') + 3 d_1, h + d_2\}}{q}.
\end{equation*}
Then $\psi$ is satisfiable.
\end{lemma}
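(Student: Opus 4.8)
The statement to prove is \Cref{lem:the-grand-finale}: if $g$ has degree $d_1$, the $c_i$ have degree $d_2$, and $\mathrm{sat}_{\psi,g}$ agrees with $\mathrm{zero}_{H,c}$ on more than a $\max\{O(hn') + 3d_1,\, h + d_2\}/q$ fraction of $\F_q^{m'}$, then $\psi$ is satisfiable. The plan is to apply the Schwartz-Zippel lemma (\Cref{lem:schwartz-zippel}) to promote the ``high agreement'' hypothesis into an \emph{exact} polynomial identity, and then invoke \Cref{prop:zero-on-subcube} to conclude satisfiability.

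First I would bound the degrees of the two polynomials being compared. By \Cref{def:encoded-function}, $g_\psi$ has degree $h \cdot O(n')$, so $\mathrm{sat}_{\psi,g}(x,b,w) = g_\psi(x,b,w)\cdot(g(x_1)-b_1)(g(x_2)-b_2)(g(x_3)-b_3)$ has degree at most $O(hn') + 3d_1$, using that each factor $g(x_i) - b_i$ has degree at most $d_1$ (treating $b_i$ as a degree-$1$ variable, or just degree $d_1$ if $d_1 \geq 1$). By \Cref{prop:coefficient-polys}/\Cref{def:formula-function}, $\mathrm{zero}_{H,c}(x) = \sum_{i} \mathrm{zero}_{H}(x_i)\cdot c_i(x)$ where $\mathrm{zero}_H$ is a degree-$h$ polynomial (product over $b \in H$, $|H| = h$, for the first $3m$ coordinates; degree $\leq 2$ for the last $3+s$ coordinates since there $S = \{0,1\}$), so $\mathrm{zero}_{H,c}$ has degree at most $h + d_2$. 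Hence $\mathrm{sat}_{\psi,g} - \mathrm{zero}_{H,c}$ is a polynomial of degree at most $D := \max\{O(hn') + 3d_1,\, h + d_2\}$ over $\F_q^{m'}$.

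Next, the contrapositive of Schwartz-Zippel: if $\mathrm{sat}_{\psi,g} \not\equiv \mathrm{zero}_{H,c}$ as polynomials, then $\Pr_{\bx \sim \F_q^{m'}}[\mathrm{sat}_{\psi,g}(\bx) = \mathrm{zero}_{H,c}(\bx)] \leq D/q$. The hypothesis of the lemma says this probability is strictly greater than $D/q$, so we must have the exact identity $\mathrm{sat}_{\psi,g} \equiv \mathrm{zero}_{H,c}$. In particular, since $\mathrm{zero}_{H,c}(x) = \sum_i \mathrm{zero}_H(x_i) c_i(x)$ and each $\mathrm{zero}_H(x_i)$ vanishes whenever $x_i \in H$ (respectively $x_i \in \{0,1\}$ for the last block of coordinates), $\mathrm{zero}_{H,c}$ vanishes identically on the subcube $H_{\mathrm{zero}} = H^{3m} \otimes \{0,1\}^{3+s}$. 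Therefore $\mathrm{sat}_{\psi,g}$ is zero on $H_{\mathrm{zero}}$, and by \Cref{prop:zero-on-subcube}, $\psi$ is satisfiable.

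The main subtlety — not really an obstacle, but the one point requiring care — is bookkeeping the degree of $\mathrm{sat}_{\psi,g}$ correctly, in particular keeping track of the $g_\psi$ contribution ($h\cdot O(n')$ from \Cref{def:encoded-function}, itself arising from the degree-$O(n')$ arithmetization of $\calF$ composed with the degree-$h$ inverse map $\nu$) versus the $3d_1$ contribution from the three copies of $g$, and similarly noting that the $\mathrm{zero}_H$ factor has degree $h$ only on the first $3m$ coordinates. One should also double-check that the Schwartz-Zippel bound is applied with the right ambient dimension $m' = 3m + 3 + s$; since the lemma's threshold is stated with the same $D$, this matches up cleanly. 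No cleverness beyond this is needed: the content is entirely front-loaded into \Cref{prop:zero-on-subcube} and \Cref{prop:coefficient-polys}.
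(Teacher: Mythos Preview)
Your proposal is correct and follows essentially the same approach as the paper: bound the degree of $\mathrm{sat}_{\psi,g} - \mathrm{zero}_{H,c}$ by $\max\{O(hn')+3d_1,\,h+d_2\}$, apply Schwartz--Zippel to conclude the identity holds exactly, then invoke \Cref{prop:zero-on-subcube}. Your added remarks on the degree bookkeeping (the $\nu$-contribution to $g_\psi$ and the lower degree on the last $3+s$ coordinates) are accurate but not needed beyond what the paper uses.
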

\begin{proof}
By \Cref{def:encoded-function},
$\mathrm{sat}_{\psi, g}$ has degree $h \cdot O(n') + 3 d_1$.
In addition, $\mathrm{zero}_{H,c}$ has degree $h + d_2$.
Define $f = \mathrm{sat}_{\psi,g} -  \mathrm{zero}_{H,c}$.
Then~$f$ has degree $\max\{O(h n') + 3 d_1, h + d_2\}$.
By assumption, $f(\bx) = 0$ with probability larger than $\deg(f)/q$ over a uniformly random $\bx \sim \F_q^{m'}$.
By Schwartz-Zippel (\Cref{lem:schwartz-zippel}), this means that $f \equiv 0$, which implies that $\mathrm{sat}_{\psi, g} \equiv \mathrm{zero}_{H,c}$.
But $\mathrm{zero}_{H,c}$ is self-evidently zero on the subcube $H_{\mathrm{zero}}$,
meaning that $\mathrm{sat}_{\psi,g}$ is as well.
By \Cref{prop:zero-on-subcube}, $\psi$ is satisfiable.
\end{proof}

Ensuring that $\mathrm{sat}_{\psi,g}$ is low-degree requires ensuring that~$g$ is low-degree, and this can be accomplished with the low-degree test.
Arguing similarly for $\mathrm{zero}_{H,c}$ requires ensuring that each~$c_i$ is low-degree, and this can be done with the simultaneous plane-versus-point low-degree test (\Cref{thm:simultaneous-raz-safra}).

\subsection{The PCP}\label{sec:the-pcp}

We can now state the contents of our probabilistically checkable proof for the satisfiability of~$\psi$.
It consists of the following four tables.
\begin{enumerate}
\itemsep -.5pt
\item A claimed low-degree polynomial $g:\F_q^m \rightarrow \F_q$.
\item A set of claimed low-degree polynomials $c_1, \ldots, c_{m'} : \F_q^{m'} \rightarrow \F_q$.
\item A ``planes table", containing for each plane~$s$ in $\F_q^m$ a degree-$d$ bivariate polynomial.
\item Another planes table, containing for each plane~$s$ in $\F_q^{m'}$ an $m'$-tuple of degree-$d$ bivariate polynomials.
\end{enumerate}
The verifier works as follows:
first, it performs the low-degree test between $g$ and its planes table.
Second, it performs the simultaneous low-degree test between the $c_i$'s and their plane table.
Both of these use the degree parameter $d = \Theta((n')^2)$,
which is chosen to upper-bound both $\Theta(mh)$ and $\Theta(mh + hn')$.
Finally, it picks a uniformly random $(\bx, \bb, \bw) \in \F_q^{m'}$
and checks the equality $\mathrm{sat}_{\psi,g}(\bx, \bb, \bw) = \mathrm{zero}_{H,c}(\bx, \bb, \bw)$.
It accepts if all the tests accept individually.

When $\psi$ is satisfiable, there is always a proof that makes the verifier accept with probability~$1$.
This entails setting $g$ to be the low-degree encoding of a satisfying assignment,
and setting $c_1, \ldots, c_{m'}$ to be the coefficient polynomials of $\mathrm{sat}_{\psi,g}$.
The following proposition shows that when~$\psi$ is not satisfiable, the verifier always rejects with probability at least~$\tfrac{1}{10}$.
\begin{proposition}
If the verifier accepts with probability at least $9/10$, then $\psi$ is satisfiable.
\end{proposition}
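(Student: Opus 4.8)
This final proposition is the soundness direction for the classical PCP verifier: if $\psi$ is unsatisfiable, the verifier must reject with probability at least $1/10$. I will prove the contrapositive. Suppose the verifier accepts with probability at least $9/10$. The verifier runs three subtests—the low-degree test between $g$ and its planes table, the simultaneous low-degree test between the $c_i$'s and their planes table, and the final subcube-equality check $\mathrm{sat}_{\psi,g}(\bx,\bb,\bw) = \mathrm{zero}_{H,c}(\bx,\bb,\bw)$—each selected with constant probability. By a union bound, accepting overall with probability $9/10$ forces each subtest to be passed with probability at least $1 - O(1)$, i.e.\ with probability at least some constant $\mu$ bounded away from the failure thresholds of the low-degree tests.

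The plan is then as follows. First, apply the Raz–Safra theorem (\Cref{thm:raz-safra}) to the low-degree test on $g$: since it passes with probability at least $\mu$, and the parameters are set so that $m(d/q)^{c'}$ is negligible (recall $q = \Theta((n')^{10})$, $d = \Theta((n')^2)$, $m = \Theta(n/\log n)$, so $d/q = \Theta((n')^{-8})$), there is a genuine degree-$d$ polynomial $\widehat g : \F_q^m \to \F_q$ that agrees with the point table with probability close to $\mu$. Second, apply the simultaneous low-degree test soundness (\Cref{thm:simultaneous-raz-safra}) to the $c_i$'s: this yields genuine degree-$d$ polynomials $\widehat c_1, \dots, \widehat c_{m'}$ agreeing with their point table with probability close to $\mu$. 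Third, and this is where the argument comes together: the final check queries $g$ at the three points $\bx_1,\bx_2,\bx_3$ and each $c_i$ at the point $(\bx,\bb,\bw)$, all of which are uniformly distributed in the respective spaces. With probability at least $\mu - O(m(d/q)^{c'}) - O((m+m')(d/q)^{c'}) - 3/q^{?}$—still a constant bounded away from zero—all these queried values coincide with the values of the genuine low-degree polynomials $\widehat g$ and $\widehat c_i$. Hence
\begin{equation*}
\Pr_{\bx,\bb,\bw}\bigl[\mathrm{sat}_{\psi,\widehat g}(\bx,\bb,\bw) = \mathrm{zero}_{H,\widehat c}(\bx,\bb,\bw)\bigr]
\end{equation*}
is at least this same constant, which exceeds $\max\{O(hn') + 3d, h + d\}/q = O((n')^2/q) = O((n')^{-8})$ for $n$ large enough (and for small $n$ the statement is handled trivially, since the whole reduction can brute-force small instances). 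At this point I invoke \Cref{lem:the-grand-finale} with $d_1 = d_2 = d$: it concludes that $\psi$ is satisfiable, as desired.

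A few details need care. One must check that the final check's query point genuinely induces the uniform distribution on the $g$-queries $\bx_1,\bx_2,\bx_3$: since $(\bx,\bb,\bw)\in\F_q^{m'}$ is uniform and $\bx=(\bx_1,\bx_2,\bx_3)$ with each $\bx_i\in\F_q^m$, the three points are uniform (though not independent—this is fine, we only need a union bound over the three events ``$g(\bx_i)$ equals $\widehat g(\bx_i)$''). Similarly the $c_i$-queries are all at the single uniform point $(\bx,\bb,\bw)\in\F_q^{m'}$, which is exactly what \Cref{thm:simultaneous-raz-safra} controls. One also needs that $\widehat g$ is being used consistently: the Raz–Safra guarantee is about agreement of $\widehat g$ with Bob's point-table answers, and the final check reads off precisely those point-table answers, so the substitution $g \rightsquigarrow \widehat g$ inside $\mathrm{sat}_{\psi,g}$ is legitimate up to the small error probability. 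I expect the main obstacle to be purely bookkeeping: carefully tracking the constants so that the accumulated error from the three tests plus the three (respectively $m'$) union-bound terms stays strictly below the threshold in \Cref{lem:the-grand-finale}, and confirming that the chosen parameter regime ($q$ polynomially large in $n'$, $d$ quadratic in $n'$) makes all these error terms $o(1)$ while the threshold $\max\{O(hn')+3d,\,h+d\}/q$ is also $o(1)$ but dominated by the surviving constant success probability. There is no conceptual difficulty beyond assembling the three black-box soundness results.
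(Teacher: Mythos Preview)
The proposal is correct and follows essentially the same approach as the paper: extract genuine low-degree polynomials $\widehat g$ and $\widehat c_i$ from the low-degree tests via \Cref{thm:raz-safra} and \Cref{thm:simultaneous-raz-safra}, transfer the final equality check to these polynomials via a union bound over the (constantly many) query points, and then invoke \Cref{lem:the-grand-finale}. One minor discrepancy: the paper's verifier runs all three subtests and accepts only if \emph{all} pass, so acceptance $\geq 9/10$ directly gives each subtest passing with probability $\geq 9/10$, rather than the ``each selected with constant probability'' framing you used---but this does not affect the argument.
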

\begin{proof}
If the verifier accepts with probability at least $9/10$,
then each individual test accepts with probability at least $9/10$.
Applying \Cref{thm:raz-safra,thm:simultaneous-raz-safra},
we get degree-$d$ functions $\overline{g}:\F_q^n \rightarrow \F_q$ and $\overline{c}_1, \ldots, \overline{c}_{m'}:\F_q^{m'} \rightarrow \F_q$
such that
\begin{equation*}
\mathrm{dist}(g, \overline{g}) \leq \tfrac{2}{10}, \quad
\mathrm{dist}(c, \overline{c}) \leq \tfrac{2}{10}, \quad
\mathrm{dist}(\mathrm{sat}_{\psi,g}, \mathrm{zero}_{H,c}) \leq \tfrac{1}{10}.
\end{equation*}
(Here, we are assuming that $q$ is a sufficiently large function of~$m$ and~$h$.)
By the union bound, $\mathrm{dist}(\mathrm{sat}_{\psi,g}, \mathrm{sat}_{\psi,\overline{g}}) \leq 3\mathrm{dist}(g, \overline{g})$.
As a result, by the triangle inequality
\begin{align*}
\mathrm{dist}(\mathrm{sat}_{\psi,\overline{g}}, \mathrm{zero}_{H,\overline{c}})
&\leq \mathrm{dist}(\mathrm{sat}_{\psi,\overline{g}}, \mathrm{sat}_{\psi, g}) +
	\mathrm{dist}(\mathrm{sat}_{\psi, g}, \mathrm{zero}_{H,c})+
	\mathrm{dist}(\mathrm{zero}_{H,c} + \mathrm{zero}_{H,\overline{c}})\\
&\leq 3 \mathrm{dist}(g, \overline{g})+
	\mathrm{dist}(\mathrm{sat}_{\psi, g}, \mathrm{zero}_{H,c})+
	\mathrm{dist}(c, \overline{c}) \leq 3 \cdot \tfrac{2}{10} + \tfrac{2}{10} + \tfrac{1}{10} = \tfrac{9}{10}.
\end{align*}
By \Cref{lem:the-grand-finale}, $\psi$ is therefore satisfiable.
\end{proof}

\paragraph{Time and communication complexity.}
\begin{itemize}
\itemsep -.5pt
\item[$\circ$] \textbf{Question length:} The verifier performs two low-degree tests and draws a random point in $\F_q^{m'}$.
			These are of size $\Theta(m \log(q))$, $\Theta(m'\log(q))$, and $\Theta(m'\log(q))$, respectively,
			all of which are $O(n')$ bits.
\item[$\circ$] \textbf{Answer length:} The verifier performs one normal low-degree test, and then a second low-degree test with answer complexity $m'$ times the normal answer complexity. These are of total length $(m' + 1) \cdot d^2 \log(q) = O((n')^9)$. Finally, in the last test, it queries each of $g$ and $c_1, \ldots, c_{m'}$ for a point in $\F_q$, a total communication cost of $(m'+1) \log(q) = O(n')$. In total, the answer length is $\poly(n')$.
\item[$\circ$] \textbf{Runtime:} The verifier runs in time $\poly(n')$. This includes computing $\mathrm{sat}_{\psi,g}(\bx, \bb, \bw)$,
						which requires computing $g_{\psi}(\bx, \bb, \bw)$, taking time $\poly(n', h, q) = \poly(n')$.
\end{itemize}



\section{$\neexp$ preliminaries}

\subsection{Introspection games}

In this section, we introduce introspection games.
These are games in which, rather than the verifier sampling the questions,
the provers sample them instead.

\begin{definition}[Introspection games]
An \emph{introspection game} is played between two provers Alice and Bob
in which Alice returns two strings~$x_A$ and~$a$
and Bob returns two strings~$x_B$ and~$a'$ (the verifier does not specify a question).
Here, $x_A$ and~$x_B$ are interpreted as Alice and Bob's ``share" of the jointly sampled ``question" $x = (x_A, x_B)$,
and~$a$ and~$a'$ are interpreted as their ``answers".
The verifier then applies its \emph{evaluation function} $V$ to the answers
and accepts if $V(x_A, x_B, a, b)=1$.
\end{definition}

The following three facts show that we can convert between strategies for a ``normal" game and strategies for an introspective version of the game.
This allows us to prove soundness results for the ``normal" game and ``bootstrap" them up to the introspective game as well.

\begin{fact}\label{fact:introspective-outrospective}
This fact concerns two games and two strategies.
\begin{enumerate}
\item Let $\game_{\mathrm{intro}}$ be the introspective game with evaluation function $V$.
Consider a strategy $\calS_{\mathrm{intro}}$ for Alice and Bob with shared state $\ket{\mathrm{intro}} = \ket{\mathrm{question}}\otimes \ket{\mathrm{answer}}$
in which Alice and Bob's measurements are given by
\begin{equation*}
\{\mathsf{P}_{x_A} \otimes A^{x_A}_a\}_{x_A, a}, \quad \{\mathsf{Q}_{x_B} \otimes B^{x_B}_{a'}\}_{x_B, a'},
\end{equation*} 
respectively.
Write $\calD$ for the distribution on outcomes $(x_A, x_B)$
when the measurement $\{\mathsf{P}_{x_A} \otimes \mathsf{Q}_{x_B}\}_{x_A, x_B}$ is performed on $\ket{\mathrm{question}}$.
\item Let $\game$ be the ``normal" game played as follows: sample $\bx = (\bx_A, \bx_B) \sim \calD$. Distribute the questions as follows:
	\begin{itemize}
	\item[$\circ$] Alice: give $\bx_A$; receive~$\ba$.
	\item[$\circ$] Bob: give $\bx_B$; receive~$\bb$.
	\end{itemize}
	Accept if $V(\bx_A, \bx_B, \ba, \bb) = 1$.
	Write $\calS$ for the strategy with shared state $\ket{\mathrm{answer}}$ in which Alice's strategy is $\{A^{x_A}_a\}_a$
	and Bob's strategy is $\{B^{x_B}_{a'}\}_{a'}$.
\end{enumerate}
Then $\valstrat{\game}{\calS} = \valstrat{\game_{\mathrm{intro}}}{\calS_{\mathrm{intro}}}$.
\end{fact}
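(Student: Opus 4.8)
The plan is to show that the two strategies produce exactly the same acceptance probability by unwinding both definitions and matching terms. First I would write out $\valstrat{\game_{\mathrm{intro}}}{\calS_{\mathrm{intro}}}$ explicitly. Since in $\game_{\mathrm{intro}}$ the verifier sends no question and simply accepts when $V(x_A, x_B, a, a') = 1$, we have
\begin{equation*}
\valstrat{\game_{\mathrm{intro}}}{\calS_{\mathrm{intro}}}
= \sum_{\substack{x_A, x_B, a, a' : \\ V(x_A, x_B, a, a') = 1}}
\bra{\mathrm{intro}} (\mathsf{P}_{x_A} \otimes A^{x_A}_a) \otimes (\mathsf{Q}_{x_B} \otimes B^{x_B}_{a'}) \ket{\mathrm{intro}}.
\end{equation*}
Using $\ket{\mathrm{intro}} = \ket{\mathrm{question}} \otimes \ket{\mathrm{answer}}$ and the product structure of the measurement operators (the $\mathsf{P},\mathsf{Q}$ parts act on $\ket{\mathrm{question}}$ and the $A,B$ parts on $\ket{\mathrm{answer}}$), this factors as
\begin{equation*}
\sum_{\substack{x_A, x_B, a, a' : \\ V(x_A, x_B, a, a') = 1}}
\bra{\mathrm{question}} \mathsf{P}_{x_A} \otimes \mathsf{Q}_{x_B} \ket{\mathrm{question}}
\cdot \bra{\mathrm{answer}} A^{x_A}_a \otimes B^{x_B}_{a'} \ket{\mathrm{answer}}.
\end{equation*}

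Next I would recognize the first factor: by the definition of $\calD$ in part~(1), $\Pr_{(\bx_A, \bx_B) \sim \calD}[(\bx_A, \bx_B) = (x_A, x_B)] = \bra{\mathrm{question}} \mathsf{P}_{x_A} \otimes \mathsf{Q}_{x_B} \ket{\mathrm{question}}$. Substituting and regrouping the sum as an expectation over $(\bx_A, \bx_B) \sim \calD$ followed by a sum over the accepting $(a, a')$, the expression becomes
\begin{equation*}
\E_{(\bx_A, \bx_B) \sim \calD} \sum_{\substack{a, a' : \\ V(\bx_A, \bx_B, a, a') = 1}}
\bra{\mathrm{answer}} A^{\bx_A}_a \otimes B^{\bx_B}_{a'} \ket{\mathrm{answer}},
\end{equation*}
which is precisely $\valstrat{\game}{\calS}$ by the definition of $\game$ and $\calS$ in part~(2), using the definition of the value of a quantum strategy. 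This completes the argument.

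This statement is essentially a definitional identity, so I do not anticipate any real obstacle — the only care needed is bookkeeping: making sure the tensor factors are assigned to the correct subsystems ($\ket{\mathrm{question}}$ versus $\ket{\mathrm{answer}}$), and being explicit that $\mathsf{P}_{x_A} \otimes A^{x_A}_a$ is shorthand for an operator whose $\mathsf{P}$-component acts on the question register and whose $A$-component acts on the answer register, so that the two commute and the inner product factors cleanly. If one wanted to be pedantic one could note that this uses the convention (stated earlier in the paper) that unwritten outcome coordinates are summed over, but here all coordinates appear explicitly so even that is unnecessary.
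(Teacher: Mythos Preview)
Your proof is correct. The paper states this fact without proof (it is indeed a definitional identity), and your unwinding of the two values via the tensor factorization of $\ket{\mathrm{intro}}$ and the identification of $\bra{\mathrm{question}}\mathsf{P}_{x_A}\otimes\mathsf{Q}_{x_B}\ket{\mathrm{question}}$ with $\calD(x_A,x_B)$ is exactly the intended argument.
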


\begin{fact}\label{fact:introspectivize-it}
Let $\{A^x_a\}_x$ and $\{B^x_a\}_x$ be measurements such that
\begin{equation}\label{eq:outro-approx}
A^x_a \otimes I_{\reg{Bob}} \approx_\delta B^x_a \otimes I_{\reg{Bob}}
\end{equation}
on state $\ket{\mathrm{answer}}$ and distribution~$\calD$. Next, let $\{\mathsf{Q}_x\}_x$ be a measurement and $\ket{\mathrm{question}}$ be a bipartite state
such that the distribution of measurement outcomes produced by measuring $\{\mathsf{Q}_x \otimes I_{\reg{Bob}}\}_x$ on $\ket{\mathrm{question}}$ is $\calD$.
Then
\begin{equation}\label{eq:intro-approx}
(\mathsf{Q}_x \otimes A^x_a)_{\reg{Alice}} \otimes I_{\reg{Bob}} \approx_{\delta} (\mathsf{Q}_x \otimes B^x_a)_{\reg{Alice}} \otimes I_{\reg{Bob}}
\end{equation}
on state $\ket{\mathrm{question}}\otimes
\ket{\mathrm{answer}}$. Moreover, if $\mathsf{Q}_x$ is a projective
measurement, then the reverse implication holds:
if~\Cref{eq:intro-approx} holds on $\ket{\mathrm{question}} \otimes
\ket{\mathrm{answer}}$, then~\Cref{eq:outro-approx} holds on the state $\ket{\mathrm{answer}}$.
\end{fact}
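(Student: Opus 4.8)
\textbf{Proof plan for \Cref{fact:introspectivize-it}.}

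The forward implication is a direct computation unwinding the definition of the ``$\approx_\delta$'' distance. I would start by recalling what \Cref{eq:intro-approx} means: expanding the norm-squared, and using that $\mathsf{Q}_x \otimes I_{\reg{Bob}}$ (acting on the question register) commutes with $I_{\reg{question}} \otimes (A^x_a - B^x_a) \otimes I_{\reg{Bob}}$ (acting on the answer register), we get
\begin{equation*}
\E_{\bx} \sum_a \big\| (\mathsf{Q}_{\bx} \otimes (A^{\bx}_a - B^{\bx}_a))_{\reg{Alice}} \otimes I_{\reg{Bob}} \ket{\mathrm{question}}\ket{\mathrm{answer}} \big\|^2,
\end{equation*}
where the expectation over $\bx$ is now (tautologically) the one built into $\approx_\delta$ on the left side of \Cref{eq:intro-approx}. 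The key point is that since $\ket{\mathrm{question}}$ and $\ket{\mathrm{answer}}$ are on separate registers, I can factor: applying $\mathsf{Q}_{\bx}$ to the question register produces a (sub-normalized) post-measurement state whose total weight, summed over $\bx$, is at most $1$ (since $\{\mathsf{Q}_x\}$ is a measurement and by \Cref{fact:measurement-sub-measurement-switcheroo}). Bounding each term by the corresponding $\| (A^{\bx}_a - B^{\bx}_a) \ket{\mathrm{answer}}\|^2$ weighted by $\|\mathsf{Q}_{\bx}\ket{\mathrm{question}}\|^2$, and noting that the induced distribution of $\bx$ from measuring $\mathsf{Q}$ is exactly $\calD$, the whole expression collapses to $\E_{\bx \sim \calD} \sum_a \|(A^{\bx}_a - B^{\bx}_a)\ket{\mathrm{answer}}\|^2$, which is $O(\delta)$ by \Cref{eq:outro-approx}. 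So the forward direction is essentially a conditioning/factoring argument; the one subtlety to be careful about is that $\mathsf{Q}_x$ need not be projective here, but since $A^x_a - B^x_a$ acts on a disjoint register, $\|\mathsf{Q}_x \otimes (A^x_a - B^x_a) \ket{\phi_1}\ket{\phi_2}\|^2 = \|\mathsf{Q}_x \ket{\phi_1}\|^2 \cdot \|(A^x_a - B^x_a)\ket{\phi_2}\|^2$ exactly, so no POVM-vs-projective issue arises in this direction.

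For the reverse implication, I would assume $\{\mathsf{Q}_x\}$ is \emph{projective}, and assume \Cref{eq:intro-approx} on $\ket{\mathrm{question}}\ket{\mathrm{answer}}$. The same factoring identity gives
\begin{equation*}
O(\delta) \geq \E_{\bx} \sum_a \|\mathsf{Q}_{\bx}\ket{\mathrm{question}}\|^2 \cdot \|(A^{\bx}_a - B^{\bx}_a)\ket{\mathrm{answer}}\|^2 = \sum_x \sum_a \Pr_{\calD}[\bx = x] \cdot \|(A^x_a - B^x_a)\ket{\mathrm{answer}}\|^2,
\end{equation*}
where in the last equality I use that $\mathsf{Q}$ is projective so $\|\mathsf{Q}_x\ket{\mathrm{question}}\|^2 = \bra{\mathrm{question}}\mathsf{Q}_x\ket{\mathrm{question}} = \Pr_{\calD}[\bx = x]$, and the expectation over $\bx$ in $\approx_\delta$ for \Cref{eq:intro-approx} is itself with respect to $\calD$ (the distribution of the jointly-sampled question). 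The right-hand side is precisely $\E_{\bx\sim\calD}\sum_a \|(A^{\bx}_a - B^{\bx}_a)\ket{\mathrm{answer}}\|^2$, which is the statement of \Cref{eq:outro-approx}. Here projectivity of $\mathsf{Q}$ is genuinely used: for a general POVM the marginal $\bra{\mathrm{question}}\mathsf{Q}_x\ket{\mathrm{question}}$ need not equal $\|\mathsf{Q}_x\ket{\mathrm{question}}\|^2$, so one cannot pass from the weighted sum back to the clean $\calD$-expectation.

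I do not expect any real obstacle here — both directions are a one-line factoring over the tensor decomposition $\ket{\mathrm{question}} \otimes \ket{\mathrm{answer}}$ combined with the definition of the induced distribution $\calD$. The only thing to watch is getting the distribution bookkeeping right: making sure that the ``$\E_{\bx}$'' appearing implicitly in the $\approx_\delta$ statement for \Cref{eq:intro-approx} is the same $\calD$ that appears for \Cref{eq:outro-approx}, which is exactly the hypothesis that measuring $\mathsf{Q}$ on $\ket{\mathrm{question}}$ produces $\calD$. If desired, the factoring step can be cited from the simple facts in \Cref{fact:obvious-vector-fact,fact:measurement-sub-measurement-switcheroo}, but it is clean enough to just write out directly.
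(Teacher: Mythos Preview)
Your approach is essentially the paper's: factor over the tensor product $\ket{\mathrm{question}}\otimes\ket{\mathrm{answer}}$, identify the resulting weights with $\calD$, and observe that projectivity of $\mathsf{Q}$ turns the relevant inequality into an equality for the reverse direction. One small correction: in the forward direction you claim ``no POVM-vs-projective issue arises'' and that the sum ``collapses to $\E_{\bx\sim\calD}$,'' but the weights you obtain are $\|\mathsf{Q}_x\ket{\mathrm{question}}\|^2 = \bra{\mathrm{question}}\mathsf{Q}_x^2\ket{\mathrm{question}}$, not $\Pr_\calD[x]=\bra{\mathrm{question}}\mathsf{Q}_x\ket{\mathrm{question}}$; you need the inequality $\mathsf{Q}_x^2 \preceq \mathsf{Q}_x$ to pass to $\E_{\bx\sim\calD}$, which is exactly the step the paper uses (and which then becomes an equality when $\mathsf{Q}$ is projective, giving the reverse implication).
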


\begin{proof}
First, we show the forward implication. By definition, we want to bound
\begin{align*}
&\sum_{x, a} \Vert (\mathsf{Q}_x \otimes A^x_a \otimes I_{\reg{Bob}} - \mathsf{Q}_x \otimes B^x_a \otimes I_{\reg{Bob}}) \ket{\mathrm{question}}\otimes \ket{\mathrm{answer}}\Vert^2\\
=& \sum_{x, a} \Vert (\mathsf{Q}_x \otimes I)_{\reg{question}} \otimes (A^x_a \otimes I -   B^x_a \otimes I)_{\reg{answer}} \ket{\mathrm{question}}\otimes \ket{\mathrm{answer}}\Vert^2\\
=& \sum_{x, a} \bra{\mathrm{question}}\otimes \bra{\mathrm{answer}}
	(\mathsf{Q}_x \otimes I)^2_{\reg{question}} \otimes (A^x_a \otimes I -   B^x_a \otimes I)_{\reg{answer}}^2 \ket{\mathrm{question}}\otimes \ket{\mathrm{answer}}\\
\leq& \sum_{x, a} \bra{\mathrm{question}}\otimes \bra{\mathrm{answer}}
	(\mathsf{Q}_x \otimes I)_{\reg{question}} \otimes (A^x_a \otimes I -   B^x_a \otimes I)_{\reg{answer}}^2 \ket{\mathrm{question}}\otimes \ket{\mathrm{answer}}\\
=& \E_{\bx} \sum_{a} \bra{\mathrm{answer}}
	(A^{\bx}_a \otimes I -   B^{\bx}_a \otimes I)^2 \ket{\mathrm{answer}}\\
=& \E_{\bx} \sum_{a} \Vert (A^{\bx}_a \otimes I -   B^{\bx}_a \otimes I)^2\ket{\mathrm{answer}}\Vert^2.
\end{align*}
But this is at most $\delta$ by assumption. Now, for the reverse
implication, note that if $\mathsf{Q}_x$ is projective, then the
inequality above becomes an equality.
\end{proof}

\begin{fact}\label{fact:introspective-consistency}
Let $\{A^x_a\}_x$ and $\{B^x_a\}_x$ be measurements such that
\begin{equation}\label{eq:outro-approx}
(A^x_a)_{\reg{Alice}} \otimes I_{\reg{Bob}} \consistency_\delta I_{\reg{Alice}} \otimes (B^x_a)_{\reg{Bob}}
\end{equation}
on state $\ket{\mathrm{answer}}$ and distribution~$\calD$. Next, let $\{\mathsf{Q}_x\}_x$ be a measurement and $\ket{\mathrm{question}}$ be a bipartite state
such that
\begin{equation*}
(\mathsf{Q}_x)_{\reg{Alice}} \otimes I_{\reg{Bob}} \consistency_\delta I_{\reg{Alice}} \otimes (\mathsf{Q}_x)_{\reg{Bob}}
\end{equation*}
on $\ket{\mathrm{question}}$.
Furthermore, suppose that the distribution of measurement outcomes produced by measuring $\{(\mathsf{Q}_x)_{\reg{Alice}} \otimes I_{\reg{Bob}}\}_x$ on $\ket{\mathrm{question}}$ is $\calD$.
Then
\begin{equation}\label{eq:intro-approx}
(\mathsf{Q}_x \otimes A^x_a)_{\reg{Alice}} \otimes I_{\reg{Bob}} \consistency_{\delta} I_{\reg{Alice}} \otimes (\mathsf{Q}_x \otimes B^x_a)_{\reg{Bob}}
\end{equation}
on state $\ket{\mathrm{question}}\otimes
\ket{\mathrm{answer}}$.
\end{fact}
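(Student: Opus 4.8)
\textbf{Proof proposal for \Cref{fact:introspective-consistency}.}

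The plan is to reduce the introspective consistency statement to an operational ``game-winning'' statement and then chain together the two hypotheses via the triangle-like inequality for the ``$\consistency_\delta$'' distance (\Cref{fact:triangle-like}). Concretely, I would interpret the left-hand side of \Cref{eq:intro-approx} as the success probability of the consistency game played with the composite measurement $\{\mathsf{Q}_x \otimes A^x_a\}_{x,a}$ on Alice's side against $\{\mathsf{Q}_x \otimes B^x_a\}_{x,a}$ on Bob's side, applied to $\ket{\mathrm{question}} \otimes \ket{\mathrm{answer}}$. Winning this game means both that Alice and Bob produce the same $x$ (the $\mathsf Q$ part) and that, conditioned on a given $x$, they produce the same $a$ (the $A$-vs-$B$ part). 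So the first step is to write the quantity $1 - \E\sum\bra{\cdot}(\mathsf{Q}_x \otimes A^x_a)\otimes(\mathsf{Q}_x\otimes B^x_a)\ket{\cdot}$ and bound it.

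The key observation is a factorization: since the $\mathrm{question}$ and $\mathrm{answer}$ subsystems are in a tensor product state, and $\mathsf{Q}$ acts only on $\mathrm{question}$ while $A, B$ act only on $\mathrm{answer}$, we have
\begin{equation*}
\bra{\mathrm{question}}\bra{\mathrm{answer}} (\mathsf{Q}_x \otimes A^x_a)_{\reg{Alice}} \otimes (\mathsf{Q}_x \otimes B^x_a)_{\reg{Bob}} \ket{\mathrm{question}}\ket{\mathrm{answer}}
= \bra{\mathrm{question}} \mathsf{Q}_x \otimes \mathsf{Q}_x \ket{\mathrm{question}} \cdot \bra{\mathrm{answer}} A^x_a \otimes B^x_a \ket{\mathrm{answer}}.
\end{equation*}
Writing $p(x) := \bra{\mathrm{question}} \mathsf{Q}_x \otimes I \ket{\mathrm{question}}$ for the marginal on $x$ (which by hypothesis equals the marginal of $\calD$), $r(x) := \bra{\mathrm{question}} \mathsf{Q}_x \otimes \mathsf{Q}_x \ket{\mathrm{question}}$, and $c(x) := \sum_a \bra{\mathrm{answer}} A^x_a \otimes B^x_a \ket{\mathrm{answer}} / \big(\text{normalization}\big)$, I would split
\begin{equation*}
1 - \sum_{x,a} r(x)\, \bra{\mathrm{answer}} A^x_a \otimes B^x_a \ket{\mathrm{answer}} \le \Big(1 - \sum_x r(x)\Big) + \sum_x r(x) \Big(1 - \textstyle\sum_a \tfrac{\bra{\mathrm{answer}} A^x_a \otimes B^x_a \ket{\mathrm{answer}}}{\text{(marginal prob.\ of } x)}\Big),
\end{equation*}
where the first term is $O(\delta)$ by the $\mathsf Q$-consistency hypothesis on $\ket{\mathrm{question}}$, and the second term is, up to using $r(x) \le$ marginal of $x$ and the fact that $\calD$'s $x$-marginal governs the answer hypothesis, exactly the $\consistency_\delta$ statement for $A$ versus $B$ on $\ket{\mathrm{answer}}$ over $\calD$ — also $O(\delta)$. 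Cleaner still: I can simply invoke \Cref{fact:triangle-like}(1) with the four measurements $\mathsf{Q}_x \otimes A^x_a$, $\mathsf{Q}_x \otimes A^x_a$ (on $\mathrm{answer}$ side rewritten), an intermediate $\mathsf{Q}_x^{\mathrm{Bob}} \otimes A^x_a$, and $\mathsf{Q}_x \otimes B^x_a$, checking that each of the three pairwise consistencies holds at level $\delta$: the first from $\mathsf Q$-consistency, the middle trivially (identical), the last from the answer hypothesis lifted through \Cref{fact:add-a-proj} / the tensor structure.

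The main obstacle — and the only place requiring care — is bookkeeping around the distribution: the answer hypothesis \Cref{eq:outro-approx} is stated with respect to $\calD$, whereas after measuring $\mathsf{Q}_x$ on $\ket{\mathrm{question}}$ one only gets $\calD$ as the \emph{marginal} outcome distribution of the $\mathsf{Q}_x \otimes I$ measurement, and the post-measurement state on $\mathrm{answer}$ is unaffected (it is a tensor factor), so the conditional answer statistics are governed by the fixed $\ket{\mathrm{answer}}$ for \emph{every} $x$. This means I must be slightly careful that $r(x) \le p(x)$ (consistency of $\mathsf Q$ can only lose weight relative to the marginal) so that reweighting the answer-consistency sum from $p$ to $r$ does not increase it. Once that inequality is in hand, everything else is the routine Cauchy–Schwarz-free manipulation already packaged in \Cref{fact:fun-matrix-fact} and \Cref{fact:triangle-like}, and the proof is short. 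I would also note that, unlike \Cref{fact:introspectivize-it}, no projectivity hypothesis on $\mathsf{Q}$ is needed here because the ``$\consistency_\delta$'' distance is purely operational (a probability of agreement), so the tensor factorization of the inner product is exact regardless.
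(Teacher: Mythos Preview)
The paper states this result as a ``Fact'' without proof, so there is no published argument to compare against. Your direct-computation approach is correct and is essentially the natural proof: the tensor factorization
\[
\bra{\mathrm{question}}\bra{\mathrm{answer}}(\mathsf Q_x\otimes A^x_a)\otimes(\mathsf Q_x\otimes B^x_a)\ket{\mathrm{question}}\ket{\mathrm{answer}}
= r(x)\cdot \bra{\mathrm{answer}}A^x_a\otimes B^x_a\ket{\mathrm{answer}}
\]
with $r(x)=\bra{\mathrm{question}}\mathsf Q_x\otimes\mathsf Q_x\ket{\mathrm{question}}$, followed by the exact identity
\[
1-\sum_x r(x)s(x) \;=\; \Big(1-\sum_x r(x)\Big) \;+\; \sum_x r(x)\big(1-s(x)\big),
\]
where $s(x)=\sum_a\bra{\mathrm{answer}}A^x_a\otimes B^x_a\ket{\mathrm{answer}}$, and then bounding the first summand by the $\mathsf Q$-consistency hypothesis and the second by $\sum_x p(x)(1-s(x))=O(\delta)$ via $r(x)\le p(x)$. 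That is the whole argument.

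Two minor comments on your write-up. First, the ``$/\text{(marginal prob.\ of }x)$'' normalization you introduce is unnecessary and muddles an equality into an inequality; just use the identity above directly. Second, the alternative route you sketch through \Cref{fact:triangle-like} is not quite well-formed as stated: that fact chains three Alice-versus-Bob consistencies, but your proposed intermediate ``$\mathsf Q_x^{\mathrm{Bob}}\otimes A^x_a$'' mixes the tensor-factor roles in a way that does not match the hypotheses of \Cref{fact:triangle-like}. The direct computation you gave first is both simpler and complete; drop the triangle-like detour.
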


\subsection{Subroutines and superregisters}

In the next few sections, we will design a set of protocols
to be used as a subroutine for our main $\neexp$ protocol.
In doing so, we will encounter the following notational difficulty:
a subroutine~$\game$ might be a $\lambda = (k, n, q)$-register game,
whereas the overall protocol which calls it might be a $\mu = (\ell, m, q)$-register game.
When~$\lambda$ is not equal to~$\mu$, how can we use~$\game$?
We will consider two answers to this question.
In both of them, we will consider the case when all the register field sizes are the same value ``$q$",
as this is the case relevant to our application.

\begin{notation}
First, the registers in~$\lambda$ might appear as a subset of the registers in $\mu$.
In this case, we will specify an injection $\kappa : [k] \rightarrow [\ell]$
such that $n_i = m_{\kappa(i)}$.
Given a string $W = (W_1, \ldots, W_k)$, we write $\kappa(W)$
for the length-$\ell$ string with $W_{\kappa(i)}$ in coordinate~$i$, for each~$i$,
and the ``hide" symbol $H$ in the remaining coordinates.
Similarly, given string $a = (a_1, \ldots, a_\ell)$,
we write $\kappa^{-1}(a)$ for the length-$k$ string with $a_{\kappa(i)}$ in its $i$-th coordinate.
Then \emph{playing $\game$ on registers $\kappa(1), \ldots, \kappa(k)$} means to do the following.
\begin{enumerate}
\item Draw $(\bx, \bx')$ from $\game$.
\item Send $(\kappa(\bx_1), \bx_2)$ to Alice and $(\kappa(\bx_1'), \bx_2')$ to Bob.
\item Receive $\ba, \ba'$. Accept if $\game$ accepts on the answers $(\kappa^{-1}(\ba_1), \ba_2)$ and $(\kappa^{-1}(\ba_1'), \ba_2')$.
\end{enumerate}
\end{notation}

\begin{notation}
Second, the registers in~$\lambda$ might appear as the concatenation of register in $\mu$.
In this case, we will specify concatenation lengths $c(1) + \cdots + c(k) = \ell$ such that
$n_1 = m_1 + \cdots + m_{c(1)}$, $n_2 = m_{c(1) + 1} + \cdots + m_{c(1) + c(2)}$.
Pictorially, the first register in $\lambda$ will be created as the following concatenation:
\begin{equation*}
\underbrace{
\ket{\mathrm{EPR}_q^{n_1}} \otimes \ket{\mathrm{EPR}_q^{n_2}} \otimes \cdots
\otimes \ket{\mathrm{EPR}_q^{n_{c(1)-1}}} \otimes \ket{\mathrm{EPR}_q^{n_{c(1)}}}. 
}_{\ket{\mathrm{EPR}_q^{n_1 + \cdots + n_{c(1)}}}}
\end{equation*}
We refer to these concatenations of registers as \emph{superregisters}.
A Pauli basis query $W \in \{X, Z, \hideq, \noop\}$ to a given superregister~$R$ can be simulated as follows:
\begin{enumerate}
\item  Implement each Pauli basis query $W$ by sending~$W$ to
  each register $r_{i_1}, \ldots, r_{i_{c}}$ in the superregister.
\item If $W \in \{X, Z\}$, the prover measures $\tau^W_{u_i}$ on each
  register $r_i$ in $R$, and the verifier receives the outcomes
$
\bu_{i_1} \in \F_q^{m_{i_1}}, \ldots, \bu_{i_{c}} \in \F_q^{m_{i_{c}}},
$
concatenated as $\bu = (\bu_{i_1}, \ldots, \bu_{i_c})$.
\item If $W = \hideq $, the prover performs $\oktimes{I}{c}$ on
  the registers in the superregister, and verifier receives~$c$ consecutive $\varnothing$'s, interpreted as a single $\varnothing$.
\item If $W = \bot$, the prover may perform any measurement it likes
  on the registers in the superregister.
  The verifier receives~$c$ consecutive $\varnothing$'s, interpreted as a single $\varnothing$.
\end{enumerate}
\end{notation}

The game~$\game$ will usually be proven sound against $\lambda$-register strategies,
but in our cases it will be straightforward to extend this soundness to $\mu$-register strategies
in the case when~$\game$ is applied as a subroutine as detailed above.
For example, suppose we know that a strategy~$A$ which passes~$\game$
with probability $1-\eps$ must satisfy
$(A_a)_{\reg{Alice}} \otimes I_{\reg{Bob}} \approx_{\delta} (B_a)_{\reg{Alice}} \otimes I_{\reg{Bob}}$.
Then it is straightforward to derive that if~$\game$ is played as a subroutine on the second register of~$\mu$
(this is the case when $k = 1$ and $n_1 = m_2$), and if~$A$ passes the subroutine with probability~$1-\eps$, then
\begin{equation*}
(A_a)_{\reg{Alice}} \otimes I_{\reg{Bob}} \approx_{\delta} (I_{\reg{1}} \otimes (B_a)_{\reg{2}} \otimes I_{\reg{3, \ldots, \ell}})_{\reg{Alice}}.\otimes I_{\reg{Bob}}
\end{equation*}
Likewise, suppose $\game$ is played as a subroutine on one superregister consisting of all the registers of~$\mu$
(this is the case when $k = 1$ and $n_1 = m_1 + \cdots + m_\ell$).
If $A$ passes the subroutine with probability~$1-\eps$, then
\begin{equation*}
(A_a)_{\reg{Alice}} \otimes I_{\reg{Bob}} \approx_{\delta} ((B_a)_{1, \ldots,\ell} )_{\reg{Alice}} \otimes I_{\reg{Bob}}.
\end{equation*}
For our applications, it will be straightforward to
extend the soundness of our games to the case when they are played as subroutines,
and we will leave this step implicit in our proofs.



\section{The introspective low-degree test}\label{sec:big-degree}

In this section, we give the introspective low-degree test.
This is an introspection game which simulates the classic surface-versus-point test,
but is able to reduce the question complexity by making the provers sample the questions themselves.
We allow for a fully general $k$-dimensional surface,
though in our application we will only use $k = 1$ (lines) and $k=2$ (planes).

Given an integer $n > 0$ and a power of two~$q$,
the introspective low-degree test is a $(k+1, n, q)$-register game.
In other words, the provers share a state of the following form:
\begin{equation*}
\ket{\psi}
=
\ket{\mathrm{EPR}_q^n}_{\reg{0}}
\otimes \ket{\mathrm{EPR}_q^n}_{\reg{1}}
\otimes
\cdots
\otimes
\ket{\mathrm{EPR}_q^n}_{\reg{k}}
\otimes
\ket{\mathrm{aux}}_{\reg{aux}}.
\end{equation*}
The intended behavior is this: the ``points" prover should measure the point $\bu \in \F_q^n$ from register~$0$.
The ``surface"  prover should measure directions $\bv = \{\bv_1, \ldots, \bv_k\}$ from registers~$1$ through~$k$
and then should receive their surface~$\bs$ from the surface measurement $\{\Pi^{\bv}_s\}_{s\in \surfaces{\bv}}$ on register~$0$.
If the provers act honestly, then~$\bu$ will be a uniformly random point in~$\bs$,
generating the same distribution as the questions in the surface-versus-point low-degree test.

The key difficulty is preventing the surface prover from fully
measuring the register~$0$ and thus learning the value of the point~$\bu$.
In this section, we design a test to enforce this behavior on the
surface prover, using an introspected version of the partial data-hiding game developed in \Cref{sec:rotated-data-hiding}.
This game lets us command the
surface prover to erase all information about $\bu$ except its value
modulo linear combinations of the directions $\bv_1, \dots, \bv_k$;
we give it in \Cref{sec:ihide} below.
We use this test in  \Cref{sec:big-low-degree-subsection} to design the introspective
low-degree test and prove its soundness.

\subsection{Introspected partial data-hiding}
\label{sec:ihide}

In this section, we give an introspected version of the partial
data-hiding game, which will be used to implement the surface and
intercept-scrambling measurements described above.

\begin{definition}
Let $k, n >0$ be integers, let~$q$ be a power of~$2$,
and let $\lambda = (k+1, n, q)$ be register parameters.
Let~$x$ be an arbitrary query.
Then the \emph{introspected partial data-hiding game}~$\game_{\mathrm{IntroHide}}(\lambda, x)$ is given in~\Cref{fig:ihide}.
\end{definition}

{
\floatstyle{boxed} 
\restylefloat{figure}
\begin{figure}[htbp]
  Flip an unbiased coin $\bb \sim \{0, 1\}$, and perform one of the following three tests with
  probability $1/3$ each.
  \begin{enumerate}
    \item Distribute the questions as follows:\label{item:did-you-give-me-the-right-surface-or-not}
      \begin{itemize}
      \item[$\circ$] Player~$\bb$: Give $(\bot, \ktimes{Z}{k}, x)$; receive
        $(\varnothing, \bv_1, \dots, \bv_k, \ba_2)$.
      \item[$\circ$] Player~$\ol{\bb}$: Give $(Z, \ktimes{Z}{k}, x)$; receive
        $(\ba'_1, \bv_1, \dots, \bv_k, \ba'_2)$.
      \end{itemize}
      Accept if $\ba_2 = \ba'_2$
    \item Distribute the questions as follows:
      \begin{itemize}
        \item[$\circ$] Player~$\bb$: Give $(\bot, \ktimes{Z}{k}, x)$; receive
          $(\varnothing,\bv_1, \dots, \bv_k, \ba_2)$.
        \item[$\circ$] Player~$\ol{\bb}$: Give $(\bot, \ktimes{Z}{k}, x,
          \{X\})$; receive $(\varnothing, \bv_1, \dots, \bv_k, \ba'_2, \{\ba_{1,1}', \dots, \ba_{1,k}'\})$.
        \end{itemize}
        Accept if $\ba_2 = \ba_2'$.
      \item Distribute the questions as follows:
        \begin{itemize}
        \item[$\circ$] Player~$\bb$: Give $(X, \ktimes{Z}{k}, \varnothing)$;
          receive $(\ba_1, \bv_1, \dots, \bv_k, \varnothing)$.
        \item[$\circ$] Player~$\overline{\bb}$: Give $(\bot, \ktimes{Z}{k}, \bot, \{X\})$; receive
        $(\varnothing, \bv'_1, \dots, \bv'_k, \varnothing, \{\ba_{1,1}', \dots \ba_{1,k}'\})$. 
        \end{itemize}
        Accept if $\bv_i = \bv'_i$ and $\ba'_{1,i} = \bv_i \cdot \ba_1$ for all $i \in \{1,
        \dots, k\}$.
      \item 
        Distribute the questions as follows:
        \begin{itemize}
          \item[$\circ$] Player~$\bb$: Give $(\bot, x, \ktimes{Z}{k}, \{X\})$;
            receive $(\varnothing, \bv_1, \dots, \bv_k, \ba_2, \{\ba_{1,1}, \dots,
            \ba_{1,k}\})$.
          \item[$\circ$] Player~$\overline{\bb}$: give $(\bot,
            \bot, \ktimes{Z}{k}, \{X\})$; receive $(\varnothing,
            \bv'_1, \dots, \bv'_k
            \varnothing, \{\ba'_{1,1}, \dots, \ba'_{1,k}\})$.
          \end{itemize}
          Accept if $\ba_{1,i}  = \ba_{1,i}'$ for all $i \in \{1,
          \dots, k\}$.
    \end{enumerate}
\caption{The introspected partial data-hiding game $\game_{\mathrm{IntroHide}}(\lambda, x)$.\label{fig:ihide}}
\end{figure}
}

The performance of the introspected partial data-hiding game is given in the following theorem.

\begin{theorem}  \label{thm:ihide}
Let $k, n >0$ be integers, let~$q$ be a power of~$2$,
and let $\lambda = (k+1, n, q)$ be register parameters.
Let~$x$ be an arbitrary query.
Then $\game_{\mathrm{IntroHide}} := \game_{\mathrm{IntroHide}}(\lambda, x)$
satisfies the following two properties.
  \begin{itemize}
  \item[$\circ$]\textbf{Completeness:}
    Let $\calS_{\mathrm{partial}} = (\psi, M^{\bot, Z, \dots, Z, x})$ be a partial
    $\lambda$-register strategy for~$\game_{\mathrm{IntroHide}}$, which
    is also a real commuting EPR strategy, and for which
    \[ M^{\bot, Z, \dots, Z, x}_{\varnothing, v_1, \dots, v_k, a_2} =
      \sum_{s \in \surfaces{v}} \Pi^{v}_s \ot \tau^Z_{v_1} \ot \dots \ot
      \tau^Z_{v_k} \ot A^{x,s,v}_{a_2}, \]
    for some measurement $A^{x,s,v}_{a_2}$ acting only on the
    $\reg{aux}$ register. Then there is a value-$1$ $\lambda$-register strategy for $\game_{\mathrm{IntroHide}}$
    extending $\calS_{\mathrm{partial}}$ which is also a real commuting EPR strategy.
    \ignore{
    there is a $\lambda$-register strategy~$\calS$ extending
    $\calS_{\mathrm{partial}}$ which is a real commuting EPR strategy
    for which
    $\valstrat{\game_{\mathrm{ihide}}}{\calS} = 1$.}
    \item[$\circ$]\textbf{Soundness:}
      Let $\calS = (\psi, M)$ be a projective
      $\lambda$-register strategy
      passing $\game_{\mathrm{IntroHide}}$ with probability at least $1-\eps$.
      Then there exists an ideal measurement $M'^x_{v_1, \ldots, v_k, a_2}$ with
      the property that
      \[ M'^{ x}_{v_1, \dots, v_k, a_2} = 
        \tau^Z_{v_1} \ot \dots \ot \tau^Z_{v_k} \ot \left( \sum_{s \in \surfaces{v}} \Pi^{v}_s \ot
          (M^{x, s, v}_{a_2})_{\reg{aux}}\right), \]
      such that  the measurement $M^{\bot, Z\dots, Z, x}_{\varnothing,
        v_1, \dots, v_k, a_2}$ used by strategy $\calS$ in
      response to the query $(\bot, \ktimes{Z}{k}, x)$  is close to
      $M'^x_{v_1, \dots, v_k, a_2}$:
      \[ (M_{a_2}^{\bot, Z, \dots, Z, x})_{\reg{Alice}} \otimes I_{\reg{Bob}} \approx_\eps (M'^x_{a_2})_{\reg{Alice}} \otimes I_{\reg{Bob}}. \]
    \end{itemize}
  \end{theorem}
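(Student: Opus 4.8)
The plan is to prove \Cref{thm:ihide} by reducing it to the non-introspected partial data-hiding game, \Cref{thm:partial-data-hiding-game}, using the introspection machinery of \Cref{fact:introspective-outrospective,fact:introspectivize-it,fact:introspective-consistency}. The key observation is that $\game_{\mathrm{IntroHide}}(\lambda, x)$ is structurally identical to $\game_{\mathrm{hide}}(S, x)$ from \Cref{fig:hide-observable}, except that: (i) the set $S$ of $k$-tuples of vectors is now obtained by measuring registers $1, \dots, k$ in the $Z$-basis (yielding uniformly random $v_1, \dots, v_k \in \F_q^n$ — possibly linearly dependent, but one checks this happens with negligible probability, and the soundness statement is phrased to tolerate this), and (ii) the various $\{X, v\}$ and $\bot$ queries of the old game are replaced by register queries $(\cdot, \ktimes{Z}{k}, \cdot)$ together with the additional $\{X\}$ flag that asks for a partial $X$-measurement on register $0$ along the $Z$-measured directions of registers $1, \dots, k$. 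The four sub-tests in \Cref{fig:ihide} are exactly the introspected analogues of the four sub-tests in \Cref{fig:hide-observable}, with the role of ``the verifier sampled $v$ from $S$'' played by ``the provers jointly measured $v$ off their shared $Z$-registers''.

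Concretely, I would first set up the translation: given a $\lambda$-register strategy $\calS = (\psi, M)$ for $\game_{\mathrm{IntroHide}}$, note that $\ket{\psi} = \ket{\epr_q^n}_0 \ot \ket{\epr_q^n}_1 \ot \dots \ot \ket{\epr_q^n}_k \ot \ket{\rmaux}$, and that a query with $\ktimes{Z}{k}$ on registers $1, \dots, k$ forces the measurement $\tau^Z_{v_1} \ot \dots \ot \tau^Z_{v_k}$ on those registers (by the definition of a register strategy). So the post-measurement state, conditioned on outcomes $v = (v_1, \dots, v_k)$, is $\ket{\epr_q^n}_0 \ot \ket{v_1}\ket{v_1}_1 \ot \dots \ot \ket{v_k}\ket{v_k}_k \ot \ket{\rmaux}$, and the residual measurements on register $0$ and $\rmaux$ are precisely a (partial) $(1, n, q)$-register strategy for $\game_{\mathrm{hide}}(S, x)$ with $S$ the support of the $v$-distribution (which is all linearly independent $k$-tuples, up to a set of measure $q^k/q^n$). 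I would invoke \Cref{fact:introspective-outrospective} to see that $\calS$ passing $\game_{\mathrm{IntroHide}}$ with probability $1-\eps$ yields, for the conditional strategy, that it passes $\game_{\mathrm{hide}}(S,x)$ with probability $1-\eps$ on average over $v$; then apply \Cref{thm:partial-data-hiding-game} to get, for each $v$, an ideal measurement $M'^{\bot, x, v}_{a_2} = \sum_{s \in \surfaces{v}} \Pi^v_s \ot M^{s,x,v}_{a_2}$ close to $M^{\bot, x, v}_{a_2}$ in $\approx_\eps$; and finally use \Cref{fact:introspectivize-it} (with $\mathsf{Q}_v = \tau^Z_{v_1} \ot \dots \ot \tau^Z_{v_k}$, which is projective) to ``re-introspect'' this closeness statement into the desired $(M_{a_2}^{\bot, Z, \dots, Z, x})_{\reg{Alice}} \otimes I_{\reg{Bob}} \approx_\eps (M'^x_{a_2})_{\reg{Alice}} \otimes I_{\reg{Bob}}$ with $M'^x_{v_1,\dots,v_k,a_2} = \tau^Z_{v_1} \ot \dots \ot \tau^Z_{v_k} \ot (\sum_s \Pi^v_s \ot M^{s,x,v}_{a_2})$. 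Completeness is the easy direction: given the partial real commuting EPR strategy of the stated form, I would extend it exactly as in the completeness proof of \Cref{thm:partial-data-hiding-game} (defining the measurements for the $\{X\}$-flagged queries via $\tau^X_{[\forall i, a_1 \cdot v_i = a_{1,i}]}$ on register $0$, which commutes with $\Pi^v_s$ by \Cref{lem:subspace-x-z-commute}), check that all four sub-tests accept with probability $1$ using \Cref{fact:heh-heh-heh-gonna-make-anand-prove-this-so-i-can-take-the-day-off} and \Cref{fact:the-ol-pauli-swaperoonie}, and verify the extension remains real commuting EPR.

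The main obstacle I anticipate is bookkeeping rather than conceptual: carefully matching up the four sub-tests of \Cref{fig:ihide} with the four sub-tests of \Cref{fig:hide-observable} under the introspection dictionary, and making sure that the ``$\{X\}$'' partial-$X$ query on register $0$ in the register-game formalism really does correspond to the $\tau^X_{[\forall i, a_1 \cdot v_i = a_{1,i}]}$ measurement from \Cref{lem:subspace-x-z-commute} — in particular that sending $(X, \ktimes{Z}{k}, \varnothing)$ to one player (a full $X$-measurement on register $0$) versus $(\bot, \ktimes{Z}{k}, \bot, \{X\})$ to the other (a partial $X$-measurement along the sampled directions) matches sub-tests 3 and 4 of the old game, where the subtlety flagged in the soundness proof of \Cref{thm:partial-data-hiding-game} (that one \emph{cannot} send the query $x$ alongside a full $X$-measurement because $\Pi^v_s$ does not commute with it) reappears verbatim. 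A secondary nuisance is handling the event that the sampled $v_1, \dots, v_k$ are linearly dependent: this has probability $\le q^k/q^n$ and can be absorbed into the error term since the theorem statement only asks for a conclusion valid up to $\approx_\eps$; alternatively one notes that $\surfaces{v}$ and $\Pi^v_s$ still make sense for a dependent tuple (the subspace is just lower-dimensional) and \Cref{lem:subspace-x-z-commute}'s proof goes through for $V = \mathrm{span}(v)$ of any dimension, so no special-casing is strictly needed. Throughout, the extension of soundness from $\lambda$-register strategies to the subroutine setting is handled by the implicit convention described at the end of \Cref{sec:big-degree}, so I would not belabor it.
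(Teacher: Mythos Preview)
Your proposal is correct and matches the paper's approach essentially exactly: the paper also reduces to \Cref{thm:partial-data-hiding-game} by using the register-strategy structure to peel off the $\tau^Z_{v_1}\ot\dots\ot\tau^Z_{v_k}$ factors (via \Cref{fact:introspectivize-it}), recognizing the residual conditions on register~$0$ and $\reg{aux}$ as exactly those of $\game_{\mathrm{hide}}(S,x)$ with $S$ the set of all $k$-tuples, and then re-tensoring. Your handling of the linear-dependence issue is more careful than the paper's (which simply takes $S$ to be all tuples without comment), and your completeness argument is the same explicit extension the paper writes down.
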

  \begin{proof}
    We first show completeness. Very similarly to the non-introspected
    partial data-hiding game, we introduce measurements for the
    remaining questions as follows:
    \begin{align*}
      M^{Z, Z, \dots, Z, x}_{a_1, v_1, \dots, v_k, a_2} &= \sum_{s \in
                                                          \surfaces{v}}
                                                          (\Pi^{v}_s \cdot
                                                          \tau^Z_{a_1})
                                                          \ot
                                                          \tau^Z_{v_1}
                                                          \ot \dots
                                                          \ot
                                                          \tau^Z_{v_k}
                                                          \ot
                                                          A^{x,s,v}_{a_2},
      \\
      M^{X, Z, \dots, Z}_{a_1, v_1, \dots, v_k} &= \tau^X_{a_1} \ot
                                                  \tau^Z_{v_1} \ot
                                                  \dots \ot
                                                  \tau^Z_{v_k} \ot I_{\reg{aux}},\\
      M^{\bot, Z, \dots, Z, x, \{X\}}_{\varnothing, v_1, \dots, v_k,
      a_2, \{a_{1,1}, \dots, a_{1,k}\}} &= \sum_{s \in \surfaces{v}}
                                         ( \Pi^{v}_s \cdot \tau^X_{[\forall
                                          i,  a_1 \cdot v_i = a_{1, i}]})
                                          \ot \tau^Z_{v_1} \ot \dots
                                          \ot \tau^Z_{v_k} \ot
                                          A^{x,s,v}_{a_2,} \\
      M^{\bot, Z, \dots, Z, \{X\}}_{\varnothing, v_1, \dots, v_k,
      \varnothing, \{a_{1,1}, \dots, a_{1,k}\}} &=\tau^X_{[\forall i,
                                                  a_1 \cdot v_i = a_{1,i}]} \ot \tau^Z_{v_1}
                                                  \ot \dots
                                                  \tau^Z_{v_k} .
    \end{align*}
    By essentially the same arguments as in the proof of \Cref{thm:partial-data-hiding-game}, it
    follows that these measurements define a value-$1$ real commuting
    EPR strategy for $\game_{\mathrm{IntroHide}}$.

    We now show soundness. Suppose that the provers succeed in the game
    with probability $1 - \eps$ using a $\lambda$-register
    strategy. From the definition of a register strategy, we know that the measurement operators
    used by the provers have the following form.
    \begin{align*}
      M^{\bot, Z, \dots, Z, x}_{\varnothing, v_1, \dots, v_k, a_2} &=
      (A^{x, v_1, \dots, v_k}_{a_2})_{\reg{1, aux}} \ot \tau^Z_{v_1}
                                                                     \ot \cdots \ot \tau^Z_{v_k}, \\
      M^{\bot, Z, \dots, Z, x, \{X\}}_{\varnothing, v_1, \dots, v_k,
      a_2, \{a'_{1,1}, \dots, a'_{1,k}\}} &= (B^{x, v_1, \dots,
      v_k}_{a_2, \{a'_{1,1}, \dots, a'_{1,k}\}})_{\reg{1,aux}}
                                            \ot \tau^Z_{v_1} \ot \cdots  \ot \tau^Z_{v_k}, \\
      M^{Z, Z \dots, Z, x}_{a_1, v_1, \ldots, v_k, a_2} &= \tau^Z_{a_1}
                                                         \ot
                                                         \tau^Z_{v_1} \ot
                                                         \cdots \ot
                                                         \tau^Z_{v_k}
                                                         \ot (C^{x,
                                                         a_1, v_1,
                                                          \dots,
                                                          v_k}_{a_2})_{\reg{aux}},\\
      M^{\bot, Z, \dots, Z, \bot, \{X\}}_{\varnothing, v_1, \dots,
      v_k, \varnothing, \{a_{1,1}, \dots, a_{1,k}\}} &=  (D^{v_1,
                                                       \dots, v_k}_{a_{1,1},
                                                       \dotsm
                                                       a_{1,k}})_{\reg{1,aux}}
                                                       \ot
                                                       \tau^Z_{v_1}
                                                       \ot \dots \ot \tau^Z_{v_k}.
    \end{align*}
    where the operators $\{A^{x, v_1, \dots, v_k}_{a_2}\}$,
    $\{B^{x, v_1, \dots, v_k}_{a_2, \{a'_{1,1}, \dots, a'_{1,k}\}}\}$,
    $\{C^{x, a_1, v_1, \ldots, v_k}_{a_2}\}$, and $\{D^{v_1, \dots,
      v_k}_{a_{1,1}, \dots, a_{1,k}}\}$ form
    valid POVMs for every choice of $x, a_1, v_1, \dots, v_k$. We further
    know that the shared state of the provers is of the form
    \[ \
\ket{\psi}
=
\ket{\mathrm{EPR}_q^n}_{\reg{0}}
\otimes \ket{\mathrm{EPR}_q^n}_{\reg{1}}
\otimes
\cdots
\otimes
\ket{\mathrm{EPR}_q^n}_{\reg{k}}
\otimes
\ket{\mathrm{aux}}_{\reg{aux}}. \]

    From success in the four parts of the test, we may also deduce the
    following conditions:
    \begin{align*}
      (M^{\bot, Z, \dots ,Z, x}_{a_2, v_1, \dots, v_k})_{\reg{Alice}} \otimes I_{\reg{Bob}} &\simeq_{\eps}
                                       I_{\reg{Alice}} \otimes (M^{Z,Z, \dots, Z, x}_{v_1,
                             \dots, v_k, 
                                a_2})_{\reg{Bob}}, \\
      (M^{\bot, Z,\dots, Z, x}_{a_2, v_1, \dots, v_k})_{\reg{Alice}} \otimes I_{\reg{Bob}} &\simeq_{\eps} 
 	                     I_{\reg{Alice}} \otimes (M^{\bot,Z ,\dots, Z,  x, \{X\}}_{a_2, v_1,
                             \dots, v_k}), \\
      (M^{\bot, Z,\dots, Z,  \bot, \{X\}}_{v_1, \dots v_k, \{a_{1,1}, \dots,
                                                                    a_{1, k}\}})_{\reg{Alice}} \otimes I_{\reg{Bob}} &\simeq_{\eps}
      			I_{\reg{Alice}} \otimes (\tau^X_{[\forall i, a_1 \cdot v_i = a_{1, i}]} \ot \tau^{Z}_{v_1} \ot \dots \ot \tau^{Z}_{v_k} \ot
                                                                                                                       I_{\reg{aux}})_{\reg{Bob}}, \\
      (M^{\bot, Z, \dots, Z, \bot, \{X\}}_{v_1, \dots, v_k, \{a_{1,1},
      \dots, a_{1,k}\}})_{\reg{Alice}} \ot I_{\reg{Bob}}
                                                                                            &\simeq_{\eps}
                                                                                              I_{\reg{Alice}}
                                                                                              \ot
                                                                                              (M^{\bot,
                                                                                              Z,
                                                                                              \dots,
                                                                                              Z,
                                                                                              x,
                                                                                              \{X\}}_{v_1,
                                                                                              \dots,
                                                                                              v_k, a_2,
                                                                                              \{a_{1,1},
                                                                                              \dots,
                                                                                              a_{1,k}\}} )_{\reg{Bob}}.
    \end{align*}
    
    We would now like to argue that the operators $A,
    B, C, D$ form a good strategy for the partial data-hiding game. By
    \Cref{fact:introspectivize-it}, it holds that under the uniform
    distribution over $v_1, \dots, v_k$,
    \begin{align*}
      (A^{x, v_1,\dots, v_k}_{a_2})_{\reg{Alice}} \otimes I_{\reg{Bob}}&\simeq_{\eps} I_{\reg{Alice}} \otimes \big(\sum_{a_1}
                                    \tau^Z_{a_1} \ot C^{x,
                                    a_1, v_1, \dots, v_k}_{a_2}\big)_{\reg{Bob}}, \\
      (A^{x, v_1, \dots, v_k}_{a_2})_{\reg{Alice}} \otimes I_{\reg{Bob}} &\simeq_{\eps} I_{\reg{Alice}} \otimes \big(
      				B^{x, v_1, \dots,
                                                                           v_k}_{a_2}\big)_{\reg{Bob}}, \\
      (B^{x, v_1, \dots, v_k}_{\{a_{1,1}, \dots,
      a_{1,k}\}})_{\reg{Alice}} \otimes I_{\reg{Bob}} &\simeq
                                                        I_{\reg{Alice}}
                                                        \ot (D^{v_1,
                                                        \dots,
                                                        v_k}_{\{a_{1,1},
                                                        \dots,
                                                        a_{1,k}\}})_{\reg{Bob}} \\
     (D^{ v_1, \dots, v_k}_{\{a'_{1,1}, \dots, a'_{1,k}\}})_{\reg{Alice}} \otimes I_{\reg{Bob}}
                                  &\simeq_{\eps} I_{\reg{Alice}} \otimes (\tau^X_{[a_1 \cdot
                                    v_1 = a'_{1,1}, \dots, a_1 \cdot
                                    v_k = a'_{1,k}]} \otimes
                                    I_{\reg{aux}})_{\reg{Bob}},
    \end{align*}
    as well as the same conditions with the Alice and Bob registers
    exchanged.
    
    These are precisely the conditions of winning the partial data-hiding game
    $\game_{\mathrm{hide}}(S,x)$, where $S$ is the set of all tuples
    $v_1, \dots, v_k$ in $\F_q^n$, with probability $1 -
    O(\eps)$. Hence, by~\Cref{thm:partial-data-hiding-game}, it follows
    that there exists a measurement $A'^{x, v_1, \dots, v_k}_{a_2}$
    such that
    \begin{align*}
      A'^{x,v _1, \dots, v_k}_{a_2} &= \sum_{s \in
                                                       \surfaces{v}}
                                                       \Pi^{v}_s \ot
                                                       A^{s,x,v}_a,\\
      (A^{x,v_1, \dots, v_k}_{a_2})_{\reg{Alice}} \ot I_{\reg{Bob}} &\approx_{\eps} (A'^{x, v_1,
                                                      \dots, v_k}_{a_2})_{\reg{Alice}} \ot I_{\reg{Bob}}.
    \end{align*}
    The operator $M'$ in the conclusion of the theorem can then be
    taken to be
    \[M'^{\bot, Z, \dots, Z, x}_{a_2, v_1, \dots, v_k}  =
    (A'^{x,v_1, \dots, v_k}_{a_2}) \ot \tau^Z_{v_1} \ot \dots \ot
    \tau^Z_{v_k}. \qedhere\]
  \end{proof}

\subsection{An introspective surface sampler}

In this section, we will use the introspective data hiding game
to implement the ``surface prover".
This is a prover who samples a surface~$\bs$ from register~$0$ using the $\Pi^{\bv}$ measurement
and then reports back~$\bs$ to the verifier, along with a degree-$d$ polynomial $\boldf:\bs \rightarrow \F_q$.
As above, the prover is expected \emph{not} to measure register~$0$ any further, so that~$\boldf$ depends only on~$\bs$ and~$\bv$.
We can enforce this by running the introspective data hiding game and interpreting the provers' answers as $\ba_2 = \{\bs, \boldf\}$.
However, we must also verify that~$\bs$ corresponds to the actual surface measured by the prover in the~$0$-th register
and not some other surface.
We do this with a slight modification to the introspective data hiding game we call the ``introspective surface sampling game".

\begin{definition}
Let $k, n, d>0$ be integers, let~$q$ be a power of~$2$,
and let $\lambda = (k+1, n, q)$ be register parameters.
Then the \emph{introspective surface sampling game}~$\game_{\mathrm{IntroSurfSamp}}(\lambda, d)$ is given in~\Cref{fig:big-plane}.
\end{definition}

{
\floatstyle{boxed} 
\restylefloat{figure}
\begin{figure}[htbp]
  \begin{itemize}
  \item[$\circ$] Play the game $\game_{\mathrm{IntroHide}}(\lambda, x)$ with $x =
    \text{``surface"}$, and with the answer $a_2$ taking the form
    $\{\bs,\boldf\}$, where $\bs$ is a surface and
                  $\boldf$ is a degree-$d$ function $\boldf:\bs\rightarrow \F_q$.
  \item[$\circ$] Consider the test in \Cref{item:did-you-give-me-the-right-surface-or-not} of $\game_{\mathrm{IntroHide}}(\lambda, x)$.
  	Here, Player~$\bb$ replies with the answer
		$(\varnothing, \bv_1, \ldots, \bv_k, \{\bs, \boldf\})$,
		and Player~$\overline{\bb}$ replies with
		$(\ba_1', \bv_1, \ldots, \bv_k, \{\bs', \boldf'\})$.
   In the case where this test is chosen, accept if $\game_{\mathrm{IntroHide}}(\lambda, x)$ accepts and also if
   $\bs$ is the surface $\{\ba_1' +
    \sum_{i=1}^{k} \lambda_i \bv_i : \lambda_1,
    \dots, \lambda_k \in \F_q\}$.
    (We call this additional check the ``Correct Surface Check".)
    If this query is not given to the
    provers, then accept if $\game_{\mathrm{IntroHide}}(\lambda, x)$ accepts.
  \end{itemize}
  \caption{The game $\game_{\mathrm{IntroSurfSamp}}(\lambda,d)$.\label{fig:big-plane}}
\end{figure}
}

\begin{notation}
In the case when a prover is given the question $(\bot, Z, \ldots, Z, \mathrm{``surface"})$,
we refer to it as the \emph{surface prover}.  
It has the following intended behavior.
\begin{enumerate}
\item \textbf{Surface prover:}  
  \begin{description}[align=left]
  \item [Input:] Pauli basis queries $(\bot, \ktimes{Z}{k})$ and
    an auxiliary query ``surface''.
  \item [Output:] Pauli basis answers $\varnothing$ and $v_1,
    \ldots, v_k \in \F_q^{n}$, a $k$-dimensional surface~$s$, and the coefficients of
    a degree-$d$ polynomial function $f: \F_q^k \rightarrow \F_q$, where
    the domain of $f$ is to interpreted as $s$.
  \item [Goal:] The prover measures $\Pi^v$ on register~$0$ and sets~$s$ to be its outcome.
  		They then set $f = g|_s$, where $g: \F_q^n \rightarrow \F_q$ is a global degree-$d$ polynomial
		selected independently of~$s$ or~$v$.
  \end{description}
\end{enumerate}
In the case when $k = 1$, we may also refer to it as the \emph{lines} prover,
and in the case when $k = 2$, we may also refer to it as the \emph{planes} prover.
We will also refer to the \emph{surface prover's measurement},
which refers to the measurement $\{A_{v_1, \ldots, v_k, s, f}\}$ given by
\[A_{v_1, \ldots, v_k, s, f} = M^{\bot, Z, \dots, Z,
    \text{``surface"}}_{\varnothing, v_1, \dots, v_k, s,
    f}.\]
\end{notation}

The following theorem shows that the introspective surface sampling game
forces the surface prover to output the correct surface~$s$. 

\begin{theorem}\label{thm:big-planes}
Let $k, n, d>0$ be integers, let~$q$ be a power of~$2$,
and let $\lambda = (k+1, n, q)$ be register parameters.
Write $\{A_{v_1, \ldots, v_k, s, f}\}$ for the surface prover's measurement.
Then $\game_{\mathrm{IntroSurfSamp}} := \game_{\mathrm{IntroSurfSamp}}(\lambda, d)$ has the following two properties.
  \begin{itemize}
  \item[$\circ$] \textbf{Completeness:}
    		Suppose there is a degree-$d$ polynomial $g:\F_q^{n}\rightarrow \F_q$ such that
		\begin{equation*}
			A_{v_1 \ldots, v_k,s,f} = \Pi^{v}_s \otimes \tau^Z_{v_1} \otimes
                        \cdots \otimes \tau^{Z}_{v_k} \otimes I_{\reg{aux}} \cdot \bone[f = g|_{s}].
		\end{equation*}
		Then there is a value-$1$ $\lambda$-register strategy
                for $\game_{\mathrm{IntroSurfSamp}}$ with~$A$ as the
                surface prover's measurement.
  \item[$\circ$]\textbf{Soundness:}
Let $\calS$ be a projective $\register$-register strategy which passes $\game_{\mathrm{IntroSurfSamp}}$ with probability at least $1-\eps$.
Then there exists an ideal measurement $A'$ of the form
\[ A'_{v, s, f} =\Pi^v_s \otimes  \tau^Z_{v_1} \ot
  \dots \ot \tau^Z_{v_k} \ot 
    (M^{s,v}_{f})_{\reg{aux}}, \]
with $M^{s, v}_{f}$ an arbitrary measurement on the
$\reg{aux}$ register, such that $A'$ is close to the surface provers'
measurement $A$ in $\calS$:
\begin{equation*}
(A_{v, s, f})_{\reg{Alice}} \otimes I_{\reg{Bob}}
\approx_{\mathrm{poly}(\eps)} (A'_{v,
  s, f})_{\reg{Alice}} \otimes I_{\reg{Bob}}.
\end{equation*}
In particular, the surface output by~$A'$ is the same surface measured by~$A'$ in register~$0$.
\end{itemize}
\end{theorem}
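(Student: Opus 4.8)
The plan is to deduce both completeness and soundness of $\game_{\mathrm{IntroSurfSamp}}(\lambda,d)$ from the corresponding properties of $\game_{\mathrm{IntroHide}}(\lambda,x)$ established in~\Cref{thm:ihide}, with the ``Correct Surface Check'' doing the extra work of pinning down the surface label $s$. For completeness, I would take the hypothesized measurement $A_{v,s,f} = \Pi^v_s \ot \tau^Z_{v_1} \ot \cdots \ot \tau^Z_{v_k} \ot I_{\reg{aux}} \cdot \bone[f = g|_s]$, observe that it has exactly the form required in the completeness hypothesis of~\Cref{thm:ihide} (with $A^{x,s,v}_{a_2} = I_{\reg{aux}} \cdot \bone[a_2 = \{s, g|_s\}]$), and invoke that theorem to get a value-$1$ real commuting EPR strategy for $\game_{\mathrm{IntroHide}}$ extending it. It then remains to check that this honest strategy also passes the Correct Surface Check: in the honest play, Player~$\bb$ (playing the surface prover) measures $\Pi^v_{\bs}$ on register~$0$ and reports $\bs$, while Player~$\overline{\bb}$, who receives $(Z, Z, \dots, Z, x)$, does a full $Z$-basis measurement on register~$0$ obtaining the point $\ba_1'$ lying in $\bs$ (this is where the commuting-EPR and stabilizer structure of the shared state is used: the $\Pi^v$ and full-$Z$ measurements on two halves of an EPR state produce a point consistent with the measured subspace). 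Hence $\bs = \{\ba_1' + \sum_i \lambda_i \bv_i\}$ as required, so the check passes with probability~$1$.

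For soundness, suppose $\calS$ is a projective $\lambda$-register strategy passing $\game_{\mathrm{IntroSurfSamp}}$ with probability $1-\eps$. Then in particular it passes $\game_{\mathrm{IntroHide}}(\lambda, x)$ with probability at least $1-3\eps$ (since that game is played with probability $1$ as a subroutine — or, more precisely, the accept condition of $\game_{\mathrm{IntroSurfSamp}}$ refines it), so~\Cref{thm:ihide} gives an ideal measurement $M'^x_{v,a_2} = \tau^Z_{v_1} \ot \cdots \ot \tau^Z_{v_k} \ot \big(\sum_{s \in \surfaces{v}} \Pi^v_s \ot (M^{x,s,v}_{a_2})_{\reg{aux}}\big)$ with $(M^{\bot,Z,\dots,Z,x}_{a_2})_{\reg{Alice}} \ot I_{\reg{Bob}} \approx_{\poly(\eps)} (M'^x_{a_2})_{\reg{Alice}} \ot I_{\reg{Bob}}$. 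Writing $a_2 = \{s', f\}$, this says the surface prover's measurement $A_{v,s',f}$ is close to $\tau^Z_{v_1}\ot\cdots\ot\tau^Z_{v_k} \ot \sum_{s} \Pi^v_s \ot (M^{s',f,s,v}_{\reg{aux}})$. The remaining task is to argue that the label $s'$ the prover \emph{reports} agrees, up to $\poly(\eps)$ error, with the surface $s$ it actually \emph{measured} via $\Pi^v_s$; this is exactly what the Correct Surface Check buys us. Concretely, from passing the Correct Surface Check together with the consistency in test~\ref{item:did-you-give-me-the-right-surface-or-not} of $\game_{\mathrm{IntroHide}}$, I would derive that $A_{v,s',f} \otimes I \simeq_{\poly(\eps)} I \otimes (\text{the measurement on register~}0\text{ producing the reported }s')$, and on the other hand test~\ref{item:did-you-give-me-the-right-surface-or-not} cross-checks this against the full-$Z$ point measurement $\tau^Z_{\ba_1'}$, whose outcome $\ba_1'$ must lie in the reported $s'$. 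Combining these via~\Cref{fact:agreement}, \Cref{fact:triangle-like}, and the stabilizer relations for the EPR state forces the projector $\Pi^v_s$ actually applied to register~$0$ to have $s = s'$ except with probability $\poly(\eps)$, so we may replace $\sum_s \Pi^v_s \ot (\cdots)$ by $\Pi^v_{s'} \ot (M^{s',v}_f)_{\reg{aux}}$, yielding the desired ideal measurement $A'_{v,s,f} = \Pi^v_s \ot \tau^Z_{v_1}\ot\cdots\ot\tau^Z_{v_k}\ot (M^{s,v}_f)_{\reg{aux}}$ with $(A_{v,s,f})_{\reg{Alice}}\ot I_{\reg{Bob}} \approx_{\poly(\eps)} (A'_{v,s,f})_{\reg{Alice}}\ot I_{\reg{Bob}}$.

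The main obstacle I anticipate is the bookkeeping in the soundness step that ties the \emph{reported} surface to the \emph{measured} surface. The partial data-hiding analysis guarantees the measured state on register~$0$ is scrambled into a mixture of $\Pi^v_s$'s, but it is indexed by the \emph{true} subspace $s$, whereas the prover's second answer register carries a separately-reported label $s'$; a dishonest prover could in principle report a random $s'$ uncorrelated with $s$. The Correct Surface Check defeats this, but making the argument rigorous requires carefully chaining the $\consistency$-type statements from all four parts of $\game_{\mathrm{IntroHide}}$ together with the new check, being attentive to which measurements are projective (so that~\Cref{fact:agreement} and~\Cref{fact:almost-agreement} apply in the right direction) and to the fact that the point $\ba_1'$ returned by the full $Z$-measurement is only guaranteed to lie in $s'$ rather than to determine it — so one genuinely needs the directions $\bv_1,\dots,\bv_k$, whose consistency is itself certified by the Pauli basis structure of the register strategy, to reconstruct $s'$ from $\ba_1'$. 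Handling the $\poly(\eps)$ error accumulation across this chain (and the earlier remark that soundness of subroutine games extends to the ambient register parameters) is routine but needs care.
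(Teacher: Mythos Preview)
Your high-level strategy matches the paper's: completeness via the completeness clause of \Cref{thm:ihide}, and soundness by first extracting the scrambled measurement $M'$ from \Cref{thm:ihide} and then using the Correct Surface Check to pin the reported surface to the measured one. The completeness argument is fine.

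For soundness, your mechanics are vaguer than what the paper does, and you are missing one clean trick. Rather than chaining $\simeq$-statements through the full-$Z$ point $\ba_1'$ and ``forcing $s=s'$ with high probability'', the paper observes directly that Player~$\overline{\bb}$'s marginal measurement on the pair $(v,s)$ in the Correct Surface Check is \emph{exactly} the ideal projector
\[
B_{v,s} \;=\; \Pi^v_s \otimes \tau^Z_{v_1} \otimes \cdots \otimes \tau^Z_{v_k} \otimes I_{\reg{aux}},
\]
since summing $\tau^Z_{a_1'}$ over all $a_1'$ in a fixed affine surface with directions $v$ is $\Pi^v_s$ (so your worry that $\ba_1'$ ``only lies in $s'$ rather than determines it'' dissolves: together with the honestly-measured $v_i$'s, it does determine it). The check then gives $(A_{v,s})_{\reg{Alice}} \otimes I \approx_\eps I \otimes (B_{v,s})_{\reg{Bob}}$ in one stroke. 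Now the point is that $M'_{v,s,f}$ and $B_{v,s}$ \emph{commute} (both are block-diagonal in the $\Pi^v_{s'}$ decomposition of register~$0$), so one can form $C_{v,s,f} := M'_{v,s,f} \cdot B_{v,s}$ and run the short chain
\[
A_{v,s,f} = A_{v,s,f}\cdot A_{v,s} \;\approx_\eps\; A_{v,s,f} \otimes B_{v,s} \;\approx_\eps\; M'_{v,s,f} \otimes B_{v,s} \;\approx_\eps\; M'_{v,s,f}\cdot B_{v,s} \otimes I \;=\; C_{v,s,f}\otimes I,
\]
using \Cref{fact:add-a-proj} and \Cref{fact:the-ol-pauli-swaperoonie}. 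Multiplying by $B_{v,s}$ kills every $\Pi^v_{s'}$ term in $M'$ except $s'=s$, so $C_{v,s,f} = \Pi^v_s \otimes \tau^Z_{v_1}\otimes\cdots\otimes\tau^Z_{v_k}\otimes (M^{s,v}_{s,f})_{\reg{aux}}$ already has the desired form. There is one loose end you did not mention: the operators $\{M^{s,v}_{s,f}\}_f$ need not sum to the identity (they are the diagonal part of a larger measurement), so $C$ is only a sub-measurement; the paper fixes this with \Cref{fact:sub-zero}, at the cost of a square-root in the error.
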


\begin{proof}
First, the completeness follows immediately from
the completeness guarantee of~\Cref{thm:ihide}.

Next, we show soundness.
Passing with probability~$1-\eps$
implies passing $\game_{\mathrm{IntroHide}}(\lambda, \mathrm{``surface"})$ with probability~$1-\eps$.
By~\Cref{thm:ihide}, this implies an ideal measurement $M'^x_{v_1, \ldots, v_k, s, f}$ with
      the property that
      \begin{equation*} M'_{v_1, \dots, v_k, s, f} = 
        \tau^Z_{v_1} \ot \cdots \ot \tau^Z_{v_k} \ot \left( \sum_{s' \in \surfaces{v}} \Pi^{v}_{s'} \ot
          (M^{s', v}_{s, f})_{\reg{aux}}\right), \end{equation*}
      \[ (A_{v_1, \ldots, v_k,s, f})_{\reg{Alice}}
      		\otimes I_{\reg{Bob}} \approx_\eps (M'_{v_1, \ldots, v_k, s, f})_{\reg{Alice}} \otimes I_{\reg{Bob}}. \]
(Note that the measured surface~$s'$ in $M'$ is allowed to be different than the output surface~$s$.)
Next, set
$
B_{v_1, \ldots, v_k, s} := \Pi^v_{s} \otimes \tau^Z_{v_1} \otimes \cdots \otimes \tau^Z_{v_k} \otimes I_{\reg{aux}}.
$
Then passing the Correct Surface Check with probability $1-O(\eps)$ implies that
\begin{equation*}
(A_{v, s})_{\reg{Alice}} \otimes I_{\reg{Bob}}
\approx_{\eps} I_{\reg{Alice}} \otimes (B_{v, s})_{\reg{Bob}}.
\end{equation*}
Note that $M'_{v, s, f}$ and $B_{v, s}$ commute.
Thus, if we define $C_{v, s, f} := M'_{v, s, f} \cdot B_{v, s} = B_{v, s} \cdot M'_{v, s, f}$, then by \Cref{fact:add-a-proj},
\begin{align*}
(A_{v, s, f})_{\reg{Alice}} \otimes I_{\reg{Bob}}
&= (A_{v, s, f} \cdot A_{v, s})_{\reg{Alice}} \otimes I_{\reg{Bob}}\\
&\approx_\eps (A_{v, s, f})_{\reg{Alice}} \otimes (B_{v, s})_{\reg{Bob}}\\
&\approx_\eps (M'_{v, s, f})_{\reg{Alice}} \otimes (B_{v, s})_{\reg{Bob}}\\
&\approx_\eps (M'_{v, s, f} \cdot B_{v, s})_{\reg{Alice}} \otimes I_{\reg{Bob}}\\
&= (C_{v, s, f})_{\reg{Alice}} \otimes I_{\reg{Bob}},
\end{align*}
where the second-to-last step is by \Cref{fact:the-ol-pauli-swaperoonie}.
Now, set $C^{s, v}_f := M^{s, v}_{s, f}$. Then we can write
\begin{equation*}
C_{v_1, \dots, v_k, s, f} = \Pi^v_s \otimes
	\tau^Z_{v_1} \ot \cdots \ot \tau^Z_{v_k} \ot (C^{s, v}_{f})_{\reg{aux}}.
\end{equation*}
These matrices are almost of the form guaranteed by the theorem,
except they do not necessarily form a measurement because the matrices
$C^{s, v}_f$ do not necessarily sum to the identity.
However, this is still sufficient to imply the theorem by~\Cref{fact:sub-zero}.
\end{proof}

\subsection{The introspective cross-check}
In this section, we introduce the other subroutine in the introspective low-degree test.
In this subroutine, known as the ``introspective cross-check",
we introduce a new prover known as the ``points prover".
This is a prover who samples a point~$\bu$ from register~$0$ 
and then reports back a value~$\bnu \in \F_q$
interpreted as their assignment to the point~$\bu$.
By data hiding, we can assume the points prover does not read registers~$1$ through~$k$.
Then the introspective cross-check queries the points prover and the surface prover
and checks that their outputs agree on the point~$\bu$.

\begin{definition}
Let $k, n, d>0$ be integers, let~$q$ be a power of~$2$,
and let $\lambda = (k+1, n, q)$ be register parameters.
The \emph{introspective cross-check}, denoted $\game_{\mathrm{IntroCross}}(\lambda, d)$, is defined in \Cref{fig:big-cross-check}.
\end{definition}
{
\floatstyle{boxed} 
\restylefloat{figure}
\begin{figure}
Flip an unbiased coin $\bb \sim \{0, 1\}$.
Distribute the questions as follows. \label{item:big-low-degree-test}
		\begin{itemize}
		\item[$\circ$] Player~$\bb$: Give $(\bot, \ktimes{Z}{k}, \mathrm{``surface"})$; receive
        $(\varnothing, \bv_1, \dots, \bv_k, \bs, \boldf)$.
		\item[$\circ$] Player~$\overline{\bb}$: Give $(Z, \ktimes{\hideq}{k}, \mathrm{``point"})$;
				receive $(\bu, \ktimes{\varnothing}{k}, \bnu)$, where $\bnu \in \F_q$.
		\end{itemize}
		Accept if $\boldf(\bu) = \bnu$.
	\caption{The game $\game_{\mathrm{IntroCross}}(\lambda, d)$.\label{fig:big-cross-check}}
\end{figure}
}

\begin{notation}
In the case when a prover is given the question $(Z, \hideq, \ldots, \hideq, \mathrm{``point"})$,
we refer to it as the \emph{points prover}.  
It has the following intended behavior.
\begin{enumerate}
\setcounter{enumi}{1}
\item \textbf{Points prover:}
	\begin{description}[align=left]
	\item [Input:] Pauli basis queries $(Z, \hideq, \ldots, \hideq)$
          and auxiliary query ``point''.
	\item [Output:] String $u \in \F_q^{n}$ and $\varnothing,\ldots,  \varnothing$. A number $\nu \in \F_q$.
	\item [Goal:] The prover sets $\nu = g(u)$, where $g: \F_q^n \rightarrow \F_q$ is a global degree-$d$ polynomial
		selected independently of~$u$.
	\end{description}
\end{enumerate}
We will also refer to the \emph{point prover's measurement},
which refers to the measurement $\{B_{u, \nu}\}$ given by
\[B_{u, \nu} = M^{Z, \hideq, \dots, \hideq,
    \text{``point"}}_{u, \varnothing, \dots, \varnothing, \nu}.\]
\end{notation}

Our next lemma shows that if the surface prover's measurement for~$\boldf$ depends only on the surface~$\bs$ and directions~$\bv$
and not on the point~$\bu$, then we can relate the value of the introspective
cross-check to its non-introspected variant, $\game_{\mathrm{surface}}$.
\begin{lemma}\label{lem:big-cross-check}
Let $k, n, d>0$ be integers, let~$q$ be a power of~$2$,
and let $\lambda = (k+1, n, q)$ be register parameters.
Let $\calS$ be a $\register$-register strategy for $\game_{\mathrm{IntroCross}} := \game_{\mathrm{IntroCross}}(\lambda, d)$.
Let $\{A_{v, s, f}\}$ be the surface prover's measurement
and $\{B_{u, \nu}\}$ be the point prover's measurement,
and write $\ket{\mathrm{aux}}$ for the auxiliary state.
Suppose
\begin{align*}
A_{ v,s, f} &= \Pi^{v}_s \otimes \tau_{v_1}^Z \otimes \dots \otimes
\tau_{v_k}^Z \otimes A^{s,v}_f,\\
B_{u, \nu} &= \tau_u^Z \otimes \oktimes{I}{k} \otimes B^u_\nu,
\end{align*}
where $\{A^{s, v}_f\}$ and $\{B^u_\nu\}$ are POVM measurements on the auxiliary register.
Consider the strategy $\calS_{\mathrm{surface}} = (\mathrm{aux}, \{A^{s, v}, B^u\})$ for the game $\game_{\mathrm{surface}}  := \game_{\mathrm{surface}}(n, q, k, d)$.
Then
\begin{equation*}
\valstrat{\game_{\mathrm{surface}}}{\calS_{\mathrm{surface}}}
= \valstrat{\game_{\mathrm{IntroCross}}}{\calS}.
\end{equation*}
\end{lemma}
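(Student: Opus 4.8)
The plan is to prove this by directly invoking \Cref{fact:introspective-outrospective}, which converts between an introspection game and its ``normal'' variant, after checking that the hypotheses line up exactly. The structure of the argument is: identify the introspective game $\game_{\mathrm{IntroCross}}$ as an instance of the abstract introspection game in \Cref{fact:introspective-outrospective}, with a specific choice of question-sampling measurements $\{\mathsf{P}_{x_A}\}$ and $\{\mathsf{Q}_{x_B}\}$ acting on the ``question'' part of the state (here register~$0$, via the subspace projectors $\Pi^v_s$ and the point projectors $\tau^Z_u$), and a specific choice of answer measurements $\{A^{s,v}_f\}$, $\{B^u_\nu\}$ acting on the ``answer'' part (here $\ket{\mathrm{aux}}$). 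Then the normal game $\game$ produced by that fact is precisely $\game_{\mathrm{surface}}(n,q,k,d)$, and the equality of values is immediate.

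More concretely, first I would unpack the structure of the strategy $\calS$ using the assumed forms
$A_{v,s,f} = \Pi^v_s \otimes \tau^Z_{v_1} \otimes \cdots \otimes \tau^Z_{v_k} \otimes A^{s,v}_f$ and $B_{u,\nu} = \tau^Z_u \otimes I^{\otimes k} \otimes B^u_\nu$. The point is that these factor through a bipartition of $\ket{\psi}$: the ``question'' subsystem is register~$0$ together with registers~$1,\dots,k$ (the part carrying $\ket{\mathrm{EPR}_q^n}^{\otimes(k+1)}$), and the ``answer'' subsystem is $\ket{\mathrm{aux}}$. On the question subsystem, the surface prover measures $\{\Pi^v_s \otimes \tau^Z_{v_1} \otimes \cdots \otimes \tau^Z_{v_k}\}$, producing outcome $(s,v)$, and the points prover measures $\{\tau^Z_u \otimes I^{\otimes k}\}$, producing outcome $u$. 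By the EPR stabilizer relations (\Cref{eq:epr-stab-z}) and the structure of the surface-versus-point distribution, when these two measurements are performed on the respective halves of $\ket{\mathrm{EPR}_q^n}^{\otimes(k+1)}$ the induced joint distribution on $((s,v),u)$ is exactly $\mathcal{D}_{\mathrm{Surface}}(n,q,k)$: namely $v=(v_1,\dots,v_k)$ uniform, $s$ a uniformly random affine subspace parallel to $\mathrm{span}(v)$, and $u$ a uniform point of $s$ (this last because $u$ is read from register~$0$, whose $Z$-outcome, after the surface prover's $\Pi^v_s$ measurement, is uniform on $s$ on the points prover's side by \Cref{eq:epr-stab-z}). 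This is the distribution $\calD$ appearing in the conclusion.

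With the bipartition and the distribution identified, I would apply \Cref{fact:introspective-outrospective}(1)--(2) verbatim: the introspective strategy $\calS_{\mathrm{intro}} := \calS$ with $\mathsf{P}_{x_A}\otimes A^{x_A}_a := \Pi^v_s \otimes \tau^Z_{v_1}\otimes\cdots\otimes\tau^Z_{v_k} \otimes A^{s,v}_f$ (so $x_A = (v,s)$, $a = f$) and $\mathsf{Q}_{x_B}\otimes B^{x_B}_{a'} := \tau^Z_u\otimes I^{\otimes k}\otimes B^u_\nu$ (so $x_B = u$, $a' = \nu$), together with the evaluation predicate $V((v,s),u,f,\nu) = \bone[f(u)=\nu]$, yields a normal game whose question distribution is the $\calD$ above and whose accept condition is $\boldf(\bu)=\bnu$ — this is exactly $\game_{\mathrm{surface}}(n,q,k,d)$, and the extracted normal strategy is exactly $\calS_{\mathrm{surface}} = (\mathrm{aux}, \{A^{s,v}, B^u\})$. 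The fact then gives $\valstrat{\game_{\mathrm{surface}}}{\calS_{\mathrm{surface}}} = \valstrat{\game_{\mathrm{IntroCross}}}{\calS}$, which is the claim.

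The main obstacle I anticipate is the bookkeeping needed to verify that the induced distribution on the question subsystem is genuinely $\mathcal{D}_{\mathrm{Surface}}(n,q,k)$ — in particular that after the surface prover obtains $(s,v)$ by measuring $\Pi^v_s \otimes \bigotimes_i \tau^Z_{v_i}$, the conditional distribution of the points prover's outcome $u$ (from $\tau^Z_u$ on register~$0$) is uniform on $s$. This follows from \Cref{eq:epr-stab-z} applied register-by-register and the fact that $\Pi^v_s = \sum_{w\in s}\proj{w}$ is diagonal in the $Z$-basis, so the post-measurement state on register~$0$ is a uniform superposition over $s$ on both halves; but one must be slightly careful that the surface prover's $v$-measurement on registers $1,\dots,k$ (via $\tau^Z_{v_i}$) is consistent with the points prover hiding those registers ($I^{\otimes k}$), which is fine precisely because the points prover does not touch them. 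There is no deep technical difficulty here; it is the one place where the explicit form of the EPR state and the definition of $\game_{\mathrm{Surface}}$ must be matched up carefully, and I would present that matching as the core lemma-internal computation, leaving the rest as an appeal to \Cref{fact:introspective-outrospective}.
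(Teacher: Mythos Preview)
Your proposal is correct and follows essentially the same approach as the paper: verify that the joint distribution of $((s,v),u)$ obtained by measuring $\Pi^v_s\otimes\tau^Z_{v_1}\otimes\cdots\otimes\tau^Z_{v_k}$ against $\tau^Z_u\otimes I^{\otimes k}$ on the EPR registers is exactly the question distribution of $\game_{\mathrm{surface}}$, and then invoke \Cref{fact:introspective-outrospective}. Your write-up is more explicit about the bipartition and the distribution check than the paper's, but the argument is the same.
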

\begin{proof}
  By the definition of $\Pi^{v}_s$ and $\tau^Z_u$, it follows that for
  any choice of $k$ vectors $v$, if
  Alice and Bob each measure their half of register~$0$
  using the measurements $\Pi^v_s$ and $\tau^Z_u$, respectively, then
  the measurement outcomes obtained will be pairs $(\bs, \bu)$ where $\bs$
  is a uniformly random surface in $\surfaces{\bv}$ and $\bu$ is a uniformly
  random point in $\bs$. Moreover, if Alice measures her half of registers~$1$ through~$k$,
  she will obtain a uniformly random $k$-tuple $\bv = \{\bv_1,
  \dots, \bv_k\} \subseteq \F_q^n$. Combining these facts, we see that
  if Alice measures her half of registers~$0$ through~$k$ with
  $\Pi^v_s \ot \tau^Z_{v_1} \ot \dots \ot \tau^Z_{v_k}$, and Bob
  measure his half with $\tau^Z_u \ot \oktimes{I}{k}$, they obtain a
  pair of outcomes $(\bx_A = (\bs, \bv), \bx_B = \bu)$ distributed exactly
  according to the question distribution in
  $\game_{\mathrm{surface}}$. Thus, applying~\Cref{fact:introspective-outrospective}, we conclude
that $\valstrat{\game_{\mathrm{surface}}}{\calS_{\mathrm{surface}}} = \valstrat{\game_{\mathrm{IntroCross}}}{\calS}$.
\end{proof}

\subsection{The introspective low-degree test}\label{sec:big-low-degree-subsection}

In this section, we state the completed introspective low-degree test.

\begin{definition}
Let $k, n, d>0$ be integers, let~$q$ be a power of~$2$,
and let $\lambda = (k+1, n, q)$ be register parameters.
The \emph{introspective surface-versus-point low-degree test},
denoted $\game:=\game_{\mathrm{IntroLowDeg}}(\lambda, d)$, is defined in \Cref{fig:big-low-degree}.
It has the following properties:
\begin{equation*}
\qlength{\game} = O(1), \quad
\alength{\game} = O(k n \log(q) + (d+k)^k \log(q)),
\end{equation*}
\begin{equation*}
\qtime{\game} = O(1), \quad
\atime{\game} = \poly(k n \log(q), (d+k)^k \log(q)).
\end{equation*}
\end{definition}
{
\floatstyle{boxed} 
\restylefloat{figure}
\begin{figure}
With probability~$\tfrac{1}{2}$ each, perform one of the following three tests.
\begin{enumerate}
	\item \textbf{Surface sampler test:} Play $\game_{\mathrm{IntroSurfSamp}}(\lambda, d)$.
	\item \textbf{Cross-check test:} Play $\game_{\mathrm{IntroCross}}(\lambda, d)$.
	\end{enumerate}
	\caption{The game $\game_{\mathrm{IntroLowDeg}}(\lambda, d)$.\label{fig:big-low-degree}}
\end{figure}
}

The question complexities are immediate.
As for the answer length, the provers return $(k+1)$ elements of $\F_q^{n}$ and degree-$d$ polynomials on $k$-surfaces,
encoded as $\F_q$-valued strings of length $d[k] \leq (d+k)^k$.
Finally, all operations made by the verifier, such as polynomial evaluation, are efficient, so the answer time complexity is polynomial in the answer length.

Naturally, we analyze the introspective low-degree test
via introspection.
This involves a reduction to the non-introspected version of the game,
i.e.\ the ``normal'' surface-versus-point low-degree test.
By \Cref{thm:anand-thomas-classical-low-degree} we know quantum soundness for this test in the $k = 2$ (i.e.\ planes) case.
As a result, we get soundness for the introspective low-degree test in this case as well.

\begin{theorem}\label{thm:big-planes-low-degree}
Fix $k = 2$.
Let $n, d>0$ be integers, let~$q$ be a power of~$2$,
and let $\lambda = (k+1, n, q)$ be register parameters.
Write $\game := \game_{\mathrm{IntroLowDeg}}(\lambda, d)$.
\begin{itemize}
\item[$\circ$] \textbf{Completeness:}
		Suppose there is a degree-$d$ polynomial $g:\F_q^{n}\rightarrow \F_q$ such that
		\begin{equation*}
			B_{u, \nu} = \tau_u^Z \otimes I_{\reg{1}} \otimes I_{\reg{2}} \otimes I_{\reg{aux}} \cdot \bone[\nu = g(u)].
		\end{equation*}
		Then there is a value-$1$ $\lambda$-register real
                commuting EPR strategy for $\game$ with~$B$ as the point prover's measurement.
\item[$\circ$] \textbf{Soundness:}
		There exists a constant $c > 0$ and a function $\delta(\eps) = \poly(\eps, dm/q^c)$ such that the following holds.
		Suppose $\calS$ is a projective $\lambda$-register strategy with value $1-\eps$.
		Write $\{B_{u, \nu}\}$ for the point prover's measurement.
		Then there exists a POVM $\{G_g\}$
                in~$\polymeas{n}{d}{q}$ such that
		\begin{equation*}
			(B_{u, \nu})_{\reg{Alice}} \otimes I_{\reg{Bob}}
				\approx_{\delta(\eps)} (\tau^Z_{u}\otimes I_{\reg{1}} \otimes I_{\reg{2}}
					\otimes (G_{[g(u) = \nu]})_{\reg{aux}})_{\reg{Alice}} \otimes I_{\reg{Bob}}.
		\end{equation*}
\end{itemize}
\end{theorem}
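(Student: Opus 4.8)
The plan is to prove completeness and soundness separately, reducing both to the analysis of the \emph{non-introspected} plane-versus-point test $\game_{\mathrm{surface}}(n,q,2,d)$ via the introspection machinery. Recall $\game_{\mathrm{IntroLowDeg}}(\lambda,d)$ plays $\game_{\mathrm{IntroSurfSamp}}(\lambda,d)$ or $\game_{\mathrm{IntroCross}}(\lambda,d)$ with probability $\tfrac12$ each.

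\textbf{Completeness.} Given the honest degree-$d$ polynomial $g$, I define the surface prover's measurement to be $A_{v_1,v_2,s,f} = \Pi^v_s \otimes \tau^Z_{v_1} \otimes \tau^Z_{v_2} \otimes I_{\reg{aux}} \cdot \bone[f = g|_s]$, which is exactly the form required by the completeness hypothesis of \Cref{thm:big-planes}. That theorem then produces a value-$1$ $\lambda$-register strategy for $\game_{\mathrm{IntroSurfSamp}}(\lambda,d)$ with $A$ as the surface prover's measurement, and since it is assembled from the completeness strategy of \Cref{thm:ihide} it is a real commuting EPR strategy. I adjoin the prescribed point prover measurement $B$, for which the ``aux parts'' are $B^u_\nu = I_{\reg{aux}}\cdot\bone[\nu = g(u)]$ and $A^{s,v}_f = I_{\reg{aux}}\cdot\bone[f = g|_s]$; by \Cref{lem:big-cross-check} the value of $\game_{\mathrm{IntroCross}}(\lambda,d)$ on this strategy equals that of $\game_{\mathrm{surface}}(n,q,2,d)$ on the strategy in which Alice answers $g|_{\bs}$ and Bob answers $g(\bu)$, which is $1$ since $\bu\in\bs$. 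Hence the combined strategy wins $\game_{\mathrm{IntroLowDeg}}$ with probability $1$. It is real because $\tau^X,\tau^Z,\Pi^v_s$ are all real for $q$ a power of $2$ and the indicator factors are $0/1$; it is commuting because the only question pair appearing beyond those of \Cref{thm:big-planes} is $\{(\bot,Z,Z,\text{``surface''}),(Z,\hideq,\hideq,\text{``point''})\}$, whose measurements act on register $0$ by $\Pi^v_s$ and $\tau^Z_u$ (both diagonal in the $Z$ basis), on registers $1,2$ by $\tau^Z$ and $I$, and on $\reg{aux}$ by scalar multiples of $I$.

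\textbf{Soundness, step 1: reduce to $\game_{\mathrm{surface}}$.} Suppose $\calS=(\psi,M)$ is a projective $\lambda$-register strategy with value $1-\eps$; then it passes each of $\game_{\mathrm{IntroSurfSamp}}$ and $\game_{\mathrm{IntroCross}}$ with probability $1-2\eps$. Applying \Cref{thm:big-planes} to the former produces an ideal measurement $A'_{v,s,f} = \Pi^v_s \otimes \tau^Z_{v_1}\otimes\tau^Z_{v_2}\otimes M^{s,v}_f$, with $M^{s,v}$ a POVM on $\reg{aux}$, such that $(A_{v,s,f})_{\reg{Alice}}\otimes I_{\reg{Bob}} \approx_{\poly(\eps)} (A'_{v,s,f})_{\reg{Alice}}\otimes I_{\reg{Bob}}$, where $A$ is the surface prover's measurement used by $\calS$. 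Because $\calS$ is a register strategy, the point prover's measurement automatically factors as $B_{u,\nu} = \tau^Z_u \otimes I_{\reg 1}\otimes I_{\reg 2}\otimes B^u_\nu$ with $\{B^u_\nu\}$ a projective measurement on $\reg{aux}$: combine the hidden-register form \Cref{eq:hide-coords-in-S} with \Cref{lem:marginal-tensor-product}, using that $\sum_\nu B_{u,\nu}=\tau^Z_u\otimes I$ is rank one tensored with identity. Since $A$ is projective and $A\approx_{\poly(\eps)}A'$, \Cref{fact:approx-delta-game-value} lets me replace $A$ by $A'$ in the surface prover's role in $\game_{\mathrm{IntroCross}}$ at a cost of $\poly(\eps)$ in value; call the resulting strategy $\calS'$. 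Now \Cref{lem:big-cross-check} applies verbatim to $\calS'$ and gives $\valstrat{\game_{\mathrm{surface}}(n,q,2,d)}{(\mathrm{aux},\{M^{s,v},B^u\})} = \valstrat{\game_{\mathrm{IntroCross}}}{\calS'} \geq 1-\poly(\eps)$.

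\textbf{Soundness, step 2: invoke classical soundness and lift.} I first Naimark-dilate $M^{s,v}$ to a projective measurement, which preserves the bipartite correlation by \Cref{cor:bipartite-naimark} (hence the value of $\game_{\mathrm{surface}}$) and leaves $B^u$ and its Hilbert space untouched; then \Cref{thm:anand-thomas-classical-low-degree} yields $\{G_g\}\in\polymeas{n}{d}{q}$ on the point prover's $\reg{aux}$ factor with $B^u_\nu\otimes I \consistency_{\delta(\eps)} I\otimes G_{[g(u)=\nu]}$ on the uniform distribution over $\F_q^n$ and $G_g\otimes I \consistency_{\delta(\eps)} I\otimes G_g$, where $\delta(\eps)=\poly(\eps,dn/q^c)$. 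Using symmetry of the strategy and \Cref{fact:agreement}, both $B^u_\nu$ and $G_{[g(u)=\nu]}$ placed on Alice's $\reg{aux}$ are $\approx_{\delta(\eps)}$ to the same matrix $I_{\reg{Alice}}\otimes G_{[g(u)=\nu]}$, so \Cref{fact:triangle} (a genuine triangle, with a literally identical middle term, not the pseudo-triangle that fails for $\approx$) gives $B^u_\nu\otimes I_{\reg{Bob}} \approx_{\delta(\eps)} G_{[g(u)=\nu]}\otimes I_{\reg{Bob}}$ on uniform $\bu\sim\F_q^n$, on the $\reg{aux}$ state. Finally, tensoring on the EPR registers and prepending $\tau^Z_u$ on register $0$: since $(\tau^Z_u)_{\reg{Alice}}$ acting on $\ket{\epr_q^n}_{\reg 0}$ produces the normalized state $\ket{u}\ket{u}$ with $u$ uniform over $\F_q^n$ by \Cref{eq:epr-stab-z}, and $B_{u,\nu}=\tau^Z_u\otimes I_{\reg 1}\otimes I_{\reg 2}\otimes B^u_\nu$, the identity $\sum_{u,\nu}\Vert(\tau^Z_u\otimes I_{\reg 1}\otimes I_{\reg 2}\otimes(B^u_\nu - G_{[g(u)=\nu]}))\otimes I_{\reg{Bob}}\ket{\psi}\Vert^2 = \E_{\bu\sim\F_q^n}\sum_\nu\Vert(B^{\bu}_\nu - G_{[g(\bu)=\nu]})\ket{\mathrm{aux}}\Vert^2 \leq \delta(\eps)$ yields exactly $(B_{u,\nu})_{\reg{Alice}}\otimes I_{\reg{Bob}} \approx_{\delta(\eps)} (\tau^Z_u\otimes I_{\reg 1}\otimes I_{\reg 2}\otimes(G_{[g(u)=\nu]})_{\reg{aux}})_{\reg{Alice}}\otimes I_{\reg{Bob}}$.

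\textbf{Main obstacle.} The one genuinely delicate point is the transition from the introspected cross-check to the classical test: \Cref{lem:big-cross-check} needs the surface prover's measurement \emph{exactly} in the product form $\Pi^v_s\otimes\tau^Z_{v_1}\otimes\tau^Z_{v_2}\otimes A^{s,v}_f$, while \Cref{thm:big-planes} only certifies this approximately. Bridging the gap with \Cref{fact:approx-delta-game-value} is what forces projective measurements (hence the Naimark dilation before invoking \Cref{thm:anand-thomas-classical-low-degree}) and absorbs most of the polynomial loss; everything else is bookkeeping, provided one keeps careful track of which Hilbert space $\{G_g\}$ lives on and only ever uses the genuine triangle inequality for the $\approx_\delta$ distance.
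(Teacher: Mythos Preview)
Your proposal is correct and follows the same approach as the paper: for completeness, instantiate the surface prover with $A_{v,s,f}=\Pi^v_s\otimes\tau^Z_{v_1}\otimes\tau^Z_{v_2}\otimes I_{\reg{aux}}\cdot\bone[f=g|_s]$, invoke the completeness of \Cref{thm:big-planes}, and use \Cref{lem:big-cross-check} to reduce the cross-check to the honest classical strategy for $\game_{\mathrm{surface}}$; for soundness, extract the ideal $A'$ from \Cref{thm:big-planes}, swap it in via \Cref{fact:approx-delta-game-value}, reduce to $\game_{\mathrm{surface}}$ via \Cref{lem:big-cross-check}, and invoke \Cref{thm:anand-thomas-classical-low-degree} followed by \Cref{fact:introspectivize-it} (your explicit computation at the end is exactly that fact).

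The one noteworthy difference is the order of the Naimark step. The paper applies \emph{partial} Naimark (\Cref{thm:partial-naimark}) to the full measurement $A'_{v,s,f}$ \emph{before} invoking \Cref{lem:big-cross-check}, and explicitly remarks that ordinary Naimark ``would not have sufficed as the dilated measurement would not necessarily act as desired on the non-aux registers.'' You instead apply \Cref{lem:big-cross-check} first to extract the aux-only POVM $M^{s,v}$, and then dilate \emph{that} with ordinary Naimark via \Cref{cor:bipartite-naimark}. Your ordering sidesteps the need for partial Naimark entirely, which is a small simplification; either order lands in the same place for the invocation of \Cref{thm:anand-thomas-classical-low-degree}.
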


\begin{proof}
Throughout this proof, we will write $\{A_{v, s, f}\}$ for the surface prover's measurement.
We first show completeness.
Assign the surface prover's measurement as follows:
\begin{equation*}
A_{v,s, f}  := \Pi^{v}_s \otimes \tau^Z_{v_1} \otimes \tau^Z_{v_2} \otimes I_{\reg{aux}} \cdot \bone[f = g|_{s}].
\end{equation*}
This is clearly a $\lambda$-register strategy. By the completeness
case of \Cref{thm:big-planes}, this can be extended into a real
commuting EPR strategy that passes $\game_{\mathrm{IntroSurfSamp}}(\lambda, d)$ with
probability~$1$. By~\Cref{lem:big-cross-check}, the strategy using $A$
and $B$ passes the cross check with the same probability as the
honest classical strategy to $\game_{\mathrm{surface}}(n, q, k, d)$ answering according
to the low-degree polynomial $g$, which is $1$. Hence, this strategy
passes both parts of $\game$ with probability~$1$.

Now, we show soundness.
The strategy $\calS$ is a $\lambda$-register strategy, so we can write the points prover's measurement as
\begin{equation*}
B_{u, \nu} = \tau^Z_u \otimes I_{\reg{1}} \otimes I_{\reg{2}} \otimes (B^u_{\nu})_{\reg{aux}}.
\end{equation*}
Passing $\game_{\mathrm{IntroSurfSamp}}(\lambda, d)$ with probability $1-2\eps$
implies via \Cref{thm:big-planes} a measurement $A'_{u, v, f}$ such that
\begin{equation*}
A'_{ v,s, f} = \Pi^{v}_s \otimes \tau^Z_{v_1} \otimes \tau^Z_{v_2} \otimes ((A')^{s, v}_f)_{\reg{aux}},
\end{equation*}
\begin{equation*}
(A_{v, s, f})_{\reg{Alice}} \otimes I_{\reg{Bob}}
\approx_{\poly(\eps)}  (A'_{v, s, f})_{\reg{Alice}} \otimes I_{\reg{Bob}},
\end{equation*}
where $\{(A')^{s,v}_f\}_{f}$ is a measurement on the auxiliary
register. By assumption, the measurement $\{A_{v,s,f}\}$ is
projective, so
we can apply~\Cref{fact:approx-delta-generalized-game-value} to deduce
that replacing $A_{v,s,f}$ by $A'_{v,s,f}$ changes the game value by
at most $\poly(\eps)$. Moreover, by applying~\Cref{thm:partial-naimark}, we can, by
performing a dilation of the auxiliary space, simulate the
$A'_{v,s,f}$ measurements by a projective measurement of the form
\[ A''_{v,s,f} = \Pi^v_s \ot \tau^Z_{v_1} \ot \tau^Z_{v_2} \ot
  ((A'')^{s,v}_f)_{\reg{aux}}, \]
where $(A'')^{s,v}_f$ is a projective measurement on the (expanded)
aux register. (Note that a direct invocation of Naimark's
theorem~\Cref{thm:naimark} would not have sufficed as the dilated
measurement would not necessarily act as desired on the non-aux
registers.) Using the dilated $A''$ measurements instead of $A'$ does
not change the value of the game. Thus, we deduce that the projective strategy
using measurements $B_{u,\nu}$ and $A''_{v,s,f}$ passes
$\game_{\mathrm{IntroCross}}(\lambda, d)$ with probability
$1-\poly(\eps)$.

Now we are in a position to reduce to the soundness of the non-introspective game. Define the strategy $\calS_{\mathrm{Plane}} := (\mathrm{aux}, \{B^u, (A'')^{s, v}\})$.
Then by \Cref{lem:big-cross-check}, $\calS_{\mathrm{Plane}}$ also passes $\game_{\mathrm{Plane}}$ with probability $1-\poly(\eps)$.
Applying~\Cref{thm:anand-thomas-classical-low-degree}, we have that
there exists a measurement $\{G_g\}$ in $\polymeas{n}{d}{q}$ such that
\begin{equation*}
B^u_{\nu}\otimes I \approx_{\delta(\eps)} G_{[\nu = g(u)]}\otimes I
\end{equation*}
on state $\ket{\mathrm{aux}}$.

The theorem then follows from~\Cref{fact:introspectivize-it}.
\end{proof}

\subsection{The introspective simultaneous low-degree test}

In this section, we extend the introspective low-degree test to handle multiple functions at once.
This is the  introspective version of the simultaneous low-degree test from \Cref{def:simultaneous-plane-v-point}.

\begin{definition}
Let $m \geq 1$.
Let $k, n, d>0$ be integers, let~$q$ be a power of~$2$,
and let $\lambda = (k+1, n, q)$ be register parameters.
The \emph{introspective simultaneous low-degree test}, denoted $\game :=\game_{\mathrm{IntroLowDeg}}(\lambda, d, m)$,
is defined by the following modifications to the introspective low-degree test.
First, the prover roles are modified as follows.
\begin{itemize}
\item[$\circ$] \textbf{Surface prover:} Rather than returning a function $f:s\rightarrow \F_q$, it should return~$m$ functions $f_1, \ldots, f_m : s\rightarrow \F_q$.
			The intent is that $f_i = g_i |_{s}$ for each $i$, where each $g_i :\F_q^n \rightarrow \F_q$ is a global degree-$d$ polynomial
				selected independently of~$s$ or~$v$.
\item[$\circ$] \textbf{Points prover:} Rather than returning a single number $\nu \in \F_q$, it should return~$m$ numbers $\nu_1, \ldots, \nu_m \in \F_q$.
			The intent is that $\nu_i = g_i(u)$ for each $i$, where each~$g_i$ is selected independently of~$u$.
\end{itemize}
Next, the subroutines are modified as follows.
\begin{itemize}
\item[$\circ$] \textbf{Introspective surface sampling game:} The answer $\ba_2$ has the
  form $\bs, \boldf_1, \ldots, \boldf_m$ (rather than $\bs, \boldf$
  for a single function $\boldf$).
\item[$\circ$] \textbf{Introspective cross-check:} Receive $\boldf_1, \ldots, \boldf_m : \bs \rightarrow \F_q$ from the surface prover and $\bnu_1, \ldots, \bnu_m \in \F_q$ from the points prover
				(rather than a single~$\boldf$ and~$\bnu$). Check that $\boldf_i(\bu) = \bnu_i$ for all $i$.
\end{itemize}
It has the following properties:
\begin{equation*}
\qlength{\game} = O(1), \quad
\alength{\game} = O(k n \log(q) + m (d+k)^k \log(q)),
\end{equation*}
\begin{equation*}
\qtime{\game} = O(1), \quad
\atime{\game} = \poly(k n \log(q), m (d+k)^k \log(q)).
\end{equation*}
\end{definition}

The following theorem gives the performance of the introspective simultaneous low-degree test in the case of $k=2$ (i.e.\ planes).

\begin{theorem}\label{thm:big-simultaneous-planes-low-degree}
Fix $k = 2$.
Let $n, d, m>0$ be integers, let~$q$ be a power of~$2$,
and let $\lambda = (k+1, n, q)$ be register parameters.
Write $\game := \game_{\mathrm{IntroLowDeg}}(\lambda, d, m)$.
\begin{itemize}
\item[$\circ$] \textbf{Completeness:}
		Suppose there are degree-$d$ polynomials $g_1, \ldots, g_m:\F_q^{n}\rightarrow \F_q$ such that
		\begin{equation*}
			B_{u, \nu_1, \ldots, \nu_m} = \tau_u^Z \otimes I_{\reg{1}} \otimes I_{\reg{2}} \otimes I_{\reg{aux}} \cdot \bone[\forall i,~\nu_i = g_i(u)].
		\end{equation*}
		Then there is a value-$1$ $\lambda$-register real
                commuting EPR strategy for $\game$ with~$B$ as the point prover's measurement.
\item[$\circ$] \textbf{Soundness:}
		There exists a constant $c > 0$ and a function $\delta(\eps) = \poly(\eps, d(n+m)/q^c)$ such that the following holds.
		Suppose $\calS$ is a projective $\lambda$-register strategy with value $1-\eps$.
		Write $\{B_{u, \nu_1, \ldots, \nu_m}\}$ for the point prover's measurement.
		Then there exists a POVM $\{G_{g_1, \ldots, g_m}\}$
                in~$\simulpolymeas{n}{d}{q}{m}$ such that
		\begin{equation*}
			(B_{u, \nu_1, \ldots, \nu_m})_{\reg{Alice}} \otimes I_{\reg{Bob}}
				\approx_{\delta(\eps)} (\tau^Z_{u}\otimes I_{\reg{1}} \otimes I_{\reg{2}}
					\otimes (G_{[g_1(u), \ldots, g_m(u) = \nu_1, \ldots, \nu_m]})_{\reg{aux}})_{\reg{Alice}} \otimes I_{\reg{Bob}}.
		\end{equation*}
\end{itemize}
\end{theorem}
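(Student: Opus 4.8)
The plan is to mimic the proof of \Cref{thm:big-planes-low-degree} almost verbatim, only feeding in the simultaneous (vector-valued) versions of each ingredient. Concretely, I would first observe that the introspective simultaneous low-degree test is obtained from the introspective low-degree test by replacing every occurrence of a single function $f$ (resp.\ value $\nu$) with an $m$-tuple; crucially, the introspective surface sampling subroutine $\game_{\mathrm{IntroSurfSamp}}$ and the data-hiding machinery behind it (\Cref{thm:ihide}, \Cref{thm:big-planes}) are completely agnostic to what the ``answer'' string $a_2$ encodes — it treats $a_2 = \{s, f_1, \dots, f_m\}$ as an opaque label just as it treated $a_2 = \{s, f\}$. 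So \Cref{thm:big-planes} applies unchanged to produce an ideal surface-prover measurement $A'_{v, s, f_1, \dots, f_m}$ of the form $\Pi^v_s \otimes \tau^Z_{v_1} \otimes \tau^Z_{v_2} \otimes (M^{s,v}_{f_1,\dots,f_m})_{\reg{aux}}$, close in $\approx_{\poly(\eps)}$ to the actual surface-prover measurement.

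For completeness, I would take $B_{u, \nu_1, \dots, \nu_m} = \tau^Z_u \otimes I_{\reg 1} \otimes I_{\reg 2} \otimes I_{\reg{aux}} \cdot \bone[\forall i,\ \nu_i = g_i(u)]$ as given, and assign the surface prover the measurement $A_{v, s, f_1, \dots, f_m} = \Pi^v_s \otimes \tau^Z_{v_1} \otimes \tau^Z_{v_2} \otimes I_{\reg{aux}} \cdot \bone[\forall i,\ f_i = g_i|_s]$. By the completeness case of \Cref{thm:big-planes} this extends to a real commuting EPR strategy passing $\game_{\mathrm{IntroSurfSamp}}$ with probability $1$, and by the simultaneous analogue of \Cref{lem:big-cross-check} (identical proof: the pair $(\bx_A = (\bs, \bv), \bx_B = \bu)$ measured on registers $0,\dots,k$ is distributed exactly as the questions of $\game^m_{\mathrm{Surface}}(n,d,q,2)$, and the honest low-degree strategy answering by $g_1, \dots, g_m$ passes with probability $1$) the cross-check passes with probability $1$.

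For soundness, I would follow the template of \Cref{thm:big-planes-low-degree}'s proof step by step: passing with probability $1-\eps$ forces passing $\game_{\mathrm{IntroSurfSamp}}$ with probability $1-2\eps$, invoke \Cref{thm:big-planes} to replace $A$ by the ideal $A'$, use \Cref{fact:approx-delta-generalized-game-value} (projectivity of $A$) to bound the change in game value, and apply \Cref{thm:partial-naimark} to dilate $A'$ into a \emph{projective} $A''$ of the same tensor form with $(A'')^{s,v}_{f_1,\dots,f_m}$ projective on the enlarged aux register. Then the projective strategy $(B, A'')$ passes $\game_{\mathrm{IntroCross}}$ with probability $1-\poly(\eps)$. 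Now I would invoke the simultaneous analogue of \Cref{lem:big-cross-check} to conclude that $\calS_{\mathrm{Plane}} := (\mathrm{aux}, \{B^u, (A'')^{s,v}\})$ passes $\game^m_{\mathrm{Surface}}(n,d,q,2)$ with probability $1-\poly(\eps)$, and then apply \Cref{thm:simultaneous-ldt} (quantum soundness of the classical simultaneous low-degree test) in place of \Cref{thm:anand-thomas-classical-low-degree} to obtain $\{G_{g_1,\dots,g_m}\} \in \simulpolymeas{n}{d}{q}{m}$ with
\begin{equation*}
B^u_{\nu_1, \dots, \nu_m} \otimes I \approx_{\delta(\eps)} G_{[g_1(u),\dots,g_m(u) = \nu_1,\dots,\nu_m]} \otimes I
\end{equation*}
on $\ket{\mathrm{aux}}$, with $\delta(\eps) = \poly(\eps, d(n+m)/q^c)$ — note the extra $+m$ in the error coming from the $\ell = m$ indexing variables in \Cref{thm:simultaneous-ldt}. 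Finally, \Cref{fact:introspectivize-it} lifts this back to the introspected setting, tensoring with $\tau^Z_u \otimes I_{\reg 1} \otimes I_{\reg 2}$, yielding the claimed conclusion. I expect no genuinely new obstacle here: the only thing to double-check is that the simultaneous analogue of \Cref{lem:big-cross-check} really does hold with the same proof (it does, since nothing in that proof used $m = 1$), and that the error parameter is correctly propagated from \Cref{thm:simultaneous-ldt} rather than \Cref{thm:anand-thomas-classical-low-degree}; the ``hard part'' is purely bookkeeping, namely making sure every invocation of \Cref{thm:big-planes}, \Cref{thm:partial-naimark}, and \Cref{fact:approx-delta-generalized-game-value} is stated for the vector-valued outcome alphabet, which they all tolerate since they are structure-agnostic in the answer coordinate.
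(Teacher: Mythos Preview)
Your proposal is correct and follows exactly the approach the paper takes: the paper explicitly omits the proof, stating that it ``is analogous to the proof of \Cref{thm:big-planes-low-degree}, except rather than reducing to \Cref{thm:anand-thomas-classical-low-degree}, we reduce to the soundness of the non-introspective simultaneous low-degree test given by \Cref{thm:simultaneous-ldt}.'' Your observation that \Cref{thm:big-planes}, \Cref{thm:partial-naimark}, and \Cref{lem:big-cross-check} are agnostic to the structure of the answer $a_2$, together with the substitution of \Cref{thm:simultaneous-ldt} (and the attendant $d(n+m)/q^c$ error), is precisely the content of that remark.
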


The proof, which we omit,
is analogous to the proof of \Cref{thm:big-planes-low-degree},
except rather than reducing to \Cref{thm:anand-thomas-classical-low-degree},
we reduce to the soundness of the non-introspective simultaneous low-degree test given by \Cref{thm:simultaneous-ldt}.



\section{The intersecting lines test}\label{sec:intersection}

The introspective low-degree test forces a prover to sample a point from a register and return the evaluation of a global function at that point.
In our eventual protocol, we will want the prover to use the same global function to answer point queries sampled from \emph{multiple} different registers.
In this section, we design a game which allows us to ``transfer" global functions used from one register  to another.
We keep in mind the following picture:
\begin{equation*}
\ket{\psi} = \ket{\mathrm{EPR}^n_q}_{\reg{1}} \otimes \ket{\mathrm{EPR}^n_q}_{\reg{2}} \otimes \ket{\mathrm{aux}}_{\reg{aux}}.
\end{equation*}
We view register~$1$ as the register with the global function
and register~$2$ as the register we would like to transfer this global function to.

To accomplish this, we introduce a new test called the ``intersecting lines test".
This involves performing two introspective line-versus-point low-degree tests.
The first uses register~$1$ as its point register and register~$2$ as its slope register.
This gives us a points prover who samples~$\bu$ from register~$1$ and returns a label on it
and a line prover who samples $\bv$ from register~$2$ and returns a function on the line $\{\bu + \lambda \bv\}$,
and we know that if the points prover labels their point using a low-degree polynomial~$g$, 
then the line prover must label their line with the same polynomial~$g$.
The second low-degree test uses register~$2$ as its point register and register~$1$ as its slope register.
This gives a second line prover who returns a function on the line $\{\bv + \lambda \bu\}$.
Noting that the point $\bu + \bv$ is contained in both line provers' lines,
we can check consistency between their functions
by comparing them on this point,
forcing the second line prover to label their line using~$g$ as well.
This then entails that the second line prover from the second low-degree test must also 
label their point~$\bv$ using~$g$.
Thus, we have successfully ``transferred" the function~$g$ from the first register to the second.

In \Cref{sec:intersecting-lines}, we first introduce the intersecting lines test and prove soundness.
Following that, in \Cref{sec:introspective-lines-transfer}
we introduce an introspective version of this test which will later be used in our $\neexp$ protocol.

\subsection{The intersecting lines test}\label{sec:intersecting-lines}

\begin{definition}[Intersecting lines test]
Let $n, d > 0$ be integers, and let~$q$ be a power of~$2$.
The \emph{intersecting lines test}, denoted $\game_{\mathrm{intersect}}(n, q, d)$, is defined as follows.
Sample $\bu, \bv$ uniformly at random from $\F_q^n$,
and let $\bell$ and $\bell'$ be the two lines
$\bell := \{\bu + \lambda \bv : \lambda \in \F_q\}$
and $\bell' := \{\bv + \lambda \bu : \lambda \in \F_q\}$.
The test is performed as follows.
\begin{itemize}
\itemsep -.5pt
\item[$\circ$] The line $\bell$ and~$\bv$ are given to Alice, who responds with a degree-$d$ polynomial $\boldf: \bell \rightarrow \F_q$.
\item[$\circ$] The line $\bell'$ and~$\bu$ are given to Bob, who responds with a degree-$d$ polynomial $\boldf':\bell' \rightarrow  \F_q$.
\end{itemize}
Alice and Bob pass the test if $\boldf(\bu + \bv) = \boldf'(\bu + \bv)$.
\end{definition}

We begin by showing that although Bob knows~$\bu$, since he doesn't know~$\bv$,
the point $\bu + \bv$ looks like a uniform point in $\bell'$ to him.

\begin{fact}\label{fact:easy-peasy-lemon-peasy}
Conditioned on $\bell'$ and~$\bu$, the point~$\bu+\bv$ is distributed as a uniformly random element in $\bell'$.
\end{fact}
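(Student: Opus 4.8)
\textbf{Proof proposal for Fact~\ref{fact:easy-peasy-lemon-peasy}.}

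The plan is to unwind the definitions and reduce the claim to the elementary observation that a uniformly random point of $\F_q^n$ together with a fixed line through it determines a uniform point on that line. Concretely, I would first note that the pair $(\bell', \bu)$ carries exactly the following information: $\bu$ is a specified point, and $\bell'$ is a line through $\bu$, so $\bell'$ is determined by $\bu$ together with a direction, which in our parametrization is $\bell' = \{\bv + \lambda\bu : \lambda \in \F_q\}$. Thus conditioning on $(\bell', \bu)$ is the same as conditioning on $\bu$ and on the one-dimensional subspace $\mathrm{span}(\bu)$ being the direction of $\bell'$—but crucially \emph{not} on $\bv$ itself, since many choices of $\bv$ (namely all $\bv + \mu\bu$ for $\mu \in \F_q$) give rise to the same line $\bell'$ and are consistent with the same $\bu$.

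The key step is then to argue that, conditioned on $(\bell',\bu)$, the random point $\bv$ is uniform over the coset $\bell'$ itself. This follows because $\bv$ was drawn uniformly from $\F_q^n$ independently of $\bu$: among all $\bv \in \F_q^n$ consistent with the observed values of $\bell'$ and $\bu$, the consistent set is precisely $\{\bv_0 + \mu\bu : \mu \in \F_q\}$ for any representative $\bv_0$, which is exactly the line $\bell'$, and the conditional distribution is uniform on this set by symmetry of the original product distribution. Hence $\bv$ is a uniform point of $\bell'$. Finally, since $\bu$ also lies on $\bell'$ (it is $\bv + 0\cdot\bu$… more precisely $\bell'$ passes through $\bv$ with direction $\bu$, and $\bu + \bv$ is the $\lambda = 1$ point of $\bell'$), the map $\bv \mapsto \bu + \bv$ is a bijection $\bell' \to \bell'$ (translation by $\bu$, which preserves the coset $\bell'$ as a set since $\bu$ is its direction). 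A bijection sends the uniform distribution to the uniform distribution, so $\bu + \bv$ is uniform on $\bell'$, as claimed.

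I do not expect any real obstacle here; this is a routine unwinding, and the only point requiring a line of care is checking that translation by $\bu$ indeed maps the affine line $\bell' = \{\bv + \lambda\bu\}$ onto itself—which holds because adding $\bu$ shifts $\lambda$ to $\lambda + 1$. The analogous symmetric statement (that, conditioned on $\bell$ and $\bv$, the point $\bu + \bv$ is uniform on $\bell$) holds by the identical argument with the roles of $\bu$ and $\bv$ exchanged, and this symmetric pair of facts is what will be used to analyze the consistency check in the intersecting lines test.
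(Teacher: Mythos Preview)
Your proposal is correct and follows essentially the same argument as the paper: both observe that, conditioned on $(\bell',\bu)$, the possible values of $\bv$ are exactly the $q$ points of $\bell'$ (each equally likely), and then that adding $\bu$ permutes $\bell'$ as a set, so $\bu+\bv$ is uniform on $\bell'$. The brief aside ``since $\bu$ also lies on $\bell'$'' is a misstatement you immediately correct; the substantive argument is fine and matches the paper's.
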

\begin{proof}
Let $\bw$ be a point in $\F_q^n$ such that $\bell' = \{\bw + \lambda \bu : \lambda \in \F_q\}$. 
Then for any $c \in \F_q$, $\bell'$ is also equal to the set
$\{(\bw + c\bu) + (\lambda-c) \bu : \lambda \in \F_q\}$.
Hence, $\bv$ is equally likely to be any element in the set $\{\bw + c \bu : c \in \F_q\}$,
and therefore so is $\bu + \bv$.
Since this set is also equal to $\bell'$, this proves the fact.
\end{proof}

We will be interested in the case when Alice responds using a global function~$g:\F_q^n \rightarrow \F_q$,
always setting $\boldf = g|_{\bell}$.
In this case, the following lemma shows that to succeed with high probability,
Bob must usually play the same global function as Alice.

\begin{lemma}\label{lem:intersecting-lines}
Let $(\psi, M)$ be a POVM strategy for the intersecting lines game
with value $1-\eps$.
Suppose further that there is a measurement $\{G_g\}_g$ in $\mathrm{PolyMeas}(n,d,q)$ such that
$M^{\ell, v}_{f} = G_{[g|_{\ell}=f]}$. Then
\begin{equation*}
(M^{\ell', u}_{f})_{\reg{Alice}} \otimes I_{\reg{Bob}} \consistency_{\eps + d/q} I_{\reg{Alice}} \otimes (G_{[g|_{\ell'}=f]})_{\reg{Bob}}.
\end{equation*}
\end{lemma}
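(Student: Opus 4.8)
\textbf{Proof plan for \Cref{lem:intersecting-lines}.}

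The plan is to reduce the desired consistency statement to a straightforward application of the pass condition of the intersecting lines test, using \Cref{fact:easy-peasy-lemon-peasy} to control the distribution of the intersection point as seen by Bob. First I would unpack the hypothesis. Alice, upon receiving $(\ell, v)$, measures $G$ to obtain a global degree-$d$ polynomial $\bg$ and responds with $\boldf = \bg|_{\ell}$; Bob, upon receiving $(\ell', u)$, measures $M^{\ell', u}$ to obtain some degree-$d$ polynomial $\boldf'$ on $\ell'$. The test accepts iff $\boldf(\bu + \bv) = \boldf'(\bu + \bv)$, i.e.\ iff $\bg(\bu + \bv) = \boldf'(\bu + \bv)$, since $\bu + \bv \in \bell$. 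Because the strategy has value $1 - \eps$, this equality holds with probability at least $1 - \eps$ over the choice of $(\bu, \bv)$ and the measurement outcomes.

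The key step is to convert ``$\bg$ and $\boldf'$ agree at the single point $\bu + \bv$'' into ``$\bg|_{\ell'}$ and $\boldf'$ agree as polynomials on $\ell'$.'' Here is where \Cref{fact:easy-peasy-lemon-peasy} enters: conditioned on Bob's view $(\bell', \bu)$ (and hence conditioned on his measurement outcome $\boldf'$, which depends only on $(\bell', \bu)$ and his share of the state), the point $\bu + \bv$ is uniformly distributed on $\bell'$. Now $\bg|_{\ell'}$ and $\boldf'$ are both degree-$d$ univariate polynomials on the line $\bell'$, so if they are unequal they agree on a uniformly random point of $\bell'$ with probability at most $d/q$ by Schwartz–Zippel (\Cref{lem:schwartz-zippel}). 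Therefore $\Pr[\bg|_{\bell'} \neq \boldf'] \cdot (1 - d/q) \leq \Pr[\bg(\bu+\bv) \neq \boldf'(\bu+\bv)] \leq \eps$, which gives $\Pr[\bg|_{\bell'} \neq \boldf'] \leq \eps + d/q$ (treating the statement as trivially true when $q \leq 2d$, as is standard in this paper). To phrase this as a statement about measurement operators: consider the game where the verifier sends Alice $(\bell', \bu)$ — note this is a valid query for Alice since $\bell'$ is a line and $\bu$ a point — has her measure $G$ to get $\bg$, records $\bg|_{\bell'}$; sends Bob $(\bell', \bu)$, has him measure $M^{\bell', \bu}$ to get $\boldf'$; and checks $\bg|_{\bell'} = \boldf'$. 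The analysis above shows the verifier accepts with probability $1 - O(\eps + d/q)$, which is exactly the statement
\[
(G_{[g|_{\ell'} = f]})_{\reg{Alice}} \otimes I_{\reg{Bob}} \consistency_{\eps + d/q} I_{\reg{Alice}} \otimes (M^{\ell', u}_f)_{\reg{Bob}},
\]
and reversing the roles of Alice and Bob (the intersecting lines test is symmetric up to swapping the roles of $u$ and $v$, or one simply notes the consistency relation is symmetric) yields the claimed direction.

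I do not expect a genuine obstacle here; the one point requiring a little care is making precise the claim that Bob's outcome $\boldf'$ is independent of the randomness that makes $\bu+\bv$ uniform on $\bell'$. This is handled exactly as in the proof of \Cref{prop:same-on-point-same-on-subspace}: one views the whole process operationally as a game, uses that Bob's measurement is a function of $(\bell', \bu)$ only, conditions on $(\bell', \bu)$, and applies \Cref{fact:easy-peasy-lemon-peasy} together with Schwartz–Zippel pointwise before averaging. A second minor bookkeeping point is that $G$ being a POVM (rather than projective) is harmless for a ``$\consistency$'' statement — we only need the acceptance probability bound, not a triangle inequality — so no Naimark dilation is needed. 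Assembling these pieces gives the lemma.
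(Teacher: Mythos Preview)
Your proposal is correct and follows essentially the same approach as the paper: the paper writes the pass condition as a point-level consistency $(G_{[g(u+v)=\nu]})_{\reg{Alice}} \consistency_\eps (M^{\ell',u}_{[f(u+v)=\nu]})_{\reg{Bob}}$, invokes \Cref{fact:easy-peasy-lemon-peasy} to replace $\bu+\bv$ by a uniformly random $\bw \in \bell'$, and then cites \Cref{prop:same-on-point-same-on-subspace} for the Schwartz--Zippel step that you spell out directly. Your remark about swapping sides is handled implicitly in the paper by the symmetry of the $\consistency$ relation.
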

\begin{proof}
Success on the test implies that
\begin{equation*}
(G_{[g(u + v) = \nu]})_{\reg{Alice}} \otimes I_{\reg{Bob}}
\consistency_{\eps} I_{\reg{Alice}} \otimes (M^{\ell', u}_{[f(u + v) = \nu]})_{\reg{Bob}}.
\end{equation*}
But by \Cref{fact:easy-peasy-lemon-peasy}, conditioned on~$\bell'$ and~$\bu$, $\bu + \bv$
is distributed as a uniformly random point in~$\bell'$.
As a result, if we let~$\bw$ be a uniformly random point in~$\bell'$, then
\begin{equation*}
(G_{[g(w) = \nu]})_{\reg{Alice}} \otimes I_{\reg{Bob}}
\consistency_{\eps} I_{\reg{Alice}} \otimes (M^{\ell', u}_{[f(w) = \nu]})_{\reg{Bob}}.
\end{equation*}
The lemma then follows from \Cref{prop:same-on-point-same-on-subspace}.
\end{proof}

\subsection{The introspective intersecting lines test}\label{sec:introspective-lines-transfer}

Now we introduce the introspective intersecting lines test.
This will be an introspective version of the intersecting lines test.

\begin{definition}
Let $n, d> 0$ be integers, let~$q$ be a power of~$2$, and let $\lambda = (2, n, q)$ be register parameters.
The \emph{introspective intersecting lines test}, denoted $\game_{\mathrm{IntroIntersect}}(\lambda, d)$,
is a $\lambda$-register game involving two registers, named ``$1$" and ``$2$", and a possible third auxiliary register.
It involves two line-versus point low-degree tests, instantiated as follows.
\begin{itemize}
\item[$\circ$] Let $\calG_1$ be a copy of $\game_{\mathrm{IntroLowDeg}}(\lambda, d)$
		using register~$1$ as the point register and register~$2$ as the slope register.
	Write $\mathsf{Lines}_1$ for the surface prover in $\calG_1$ and write $\mathsf{Points}_1$ for the points prover.
\item[$\circ$] Let $\calG_2$ be a copy of $\game_{\mathrm{IntroLowDeg}}(\lambda, d)$
		using register~$2$ as the point register and register~$1$ as the slope register.
	Write $\mathsf{Lines}_2$ for the surface prover in $\calG_2$ and write $\mathsf{Points}_2$ for the points prover.
\end{itemize}
Then $\game_{\mathrm{IntroIntersect}}(\lambda, d)$ is defined in \Cref{fig:big-transfer}.
\end{definition}

{
\floatstyle{boxed} 
\restylefloat{figure}
\begin{figure}
With probability~$\tfrac{1}{4}$ each, perform one of the following four tests.
\begin{enumerate}
	\item \textbf{Low degree test 1:} Play $\game_1$.
	\item \textbf{Low degree test 2:} Play $\game_2$.
	\item \textbf{Intersecting lines test:} Flip an unbiased coin $\bb \sim \{0, 1\}$.
		Assign the first role to Player~$\bb$ and the second role to Player~$\overline{\bb}$.
		\begin{itemize}
		\item[$\circ$] $\mathsf{Lines}_1$: Receive $\bell$, $\bv$, $\boldf:\bell\rightarrow \F_q$.
		\item[$\circ$] $\mathsf{Lines}_2$: Receive $\bell'$, $\bu$, $\boldf':\bell'\rightarrow \F_q$.
		\end{itemize}
		Accept if $\bell$ and $\bell'$ both contain $\bu + \bv$ and $\boldf(\bu + \bv) = \boldf'(\bu + \bv)$.
	\item \textbf{Consistency test:} Assign the first role to Player~$1$ and the second role to Player~$2$.
		\begin{itemize}
		\item[$\circ$] $\mathsf{Points}_1$: Receive $\bnu$.
		\item[$\circ$] $\mathsf{Points}_1$: Receive $\bnu'$.
		\end{itemize}
		Accept if $\bnu = \bnu'$.
	\end{enumerate}
	\caption{The game $\game_{\mathrm{IntroIntersect}}(\lambda, d)$.\label{fig:big-transfer}}
\end{figure}
}

\begin{remark}
We remark that although the test runs two separate introspective low-degree tests,
we cannot from these alone conclude that either of the points provers answers according to a global function.
This is because we use the lines $(k=1)$ introspective low-degree test,
whereas from \Cref{thm:big-planes-low-degree} we only know soundness for the planes $(k=2)$ introspective low-degree test.
Hence, proving soundness for the introspective intersecting lines test will require an additional assumption,
i.e.\ that one of the two points provers already answers queries according to a global function.
\end{remark}

Our main result about the introspective intersecting lines test is the following theorem.

\begin{theorem}\label{thm:big-transfer}
Let $n, d> 0$ be integers, let~$q$ be a power of~$2$, and let $\lambda = (2, n, q)$ be register parameters.
Write $\game := \game_{\mathrm{IntroIntersect}}(\lambda, d)$.
Write $A$ for the point prover's measurement in~$\game_1$,
and write~$B$ for the point prover's measurement in~$\game_2$.
\begin{itemize}
\item[$\circ$] \textbf{Completeness:}
		Suppose there is a degree-$d$ polynomial $g:\F_q^{n}\rightarrow \F_q$ such that
		\begin{equation*}
			A_{u, \nu} = \tau_u^Z \otimes I_{\reg{2}} \otimes I_{\reg{aux}} \cdot \bone[\nu = g(u)],
			\quad
			B_{v, \nu} = I_{\reg{1}} \otimes \tau_v^Z \otimes I_{\reg{aux}} \cdot \bone[\nu = g(v)].
		\end{equation*}
		Then there is a value-$1$ $\lambda$-register real commuting EPR strategy strategy for $\game$
		extending~$A$ and~$B$.
\item[$\circ$] \textbf{Soundness:}
		There exists a function $\delta(\eps) = \mathrm{poly}(\eps, d/q)$ such that the following holds.
		Let $\calS$ be a projective $\lambda$-register strategy which passes~$\game$ with probability $1-\eps$.
		Further, suppose that there exists a projective measurement
		$\{G_g\}_g$ in $\mathrm{PolyMeas}(n, d, q)$ acting on the auxiliary register such that
		\begin{equation*}
		A_{u, \nu} = \tau_u^Z \otimes I_{\reg{2}} \otimes G_{[g(u)= \nu]}.
		\end{equation*}
		Then
		\begin{equation*}
		(B_{v, \nu})_{\reg{Alice}} \otimes I_{\reg{Bob}}
			\approx_{\delta(\eps)} (I_{\reg{1}} \otimes \tau_v^Z \otimes G_{[g(v)=\nu]} )_{\reg{Alice}}\otimes I_{\reg{Bob}}.
		\end{equation*}
\end{itemize}
Furthermore,
\begin{equation*}
\qlength{\game} = O(1), \quad
\alength{\game} = O(n \log(q) + d \log(q)),
\end{equation*}
\begin{equation*}
\qtime{\game} = O(1), \quad
\atime{\game} = \poly(n \log(q), d \log(q)).
\end{equation*}
\end{theorem}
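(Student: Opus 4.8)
The plan is to establish completeness by an explicit honest strategy and soundness by chaining, in three steps, the soundness of the two introspective low-degree subgames $\game_1,\game_2$ with the soundness of the classical intersecting lines test (\Cref{lem:intersecting-lines}), transported through the introspection reductions \Cref{fact:introspective-outrospective,fact:introspectivize-it}. For \emph{completeness}, given the degree-$d$ polynomial $g$ I would use the state $\ket{\epr_q^n}_{\reg{1}}\ot\ket{\epr_q^n}_{\reg{2}}\ot\ket{\epr_2}_{\reg{aux}}$ and honest provers defined from $g$: $\mathsf{Points}_1$ (resp.\ $\mathsf{Points}_2$) measures register~$1$ (resp.~$2$) in the $Z$ basis and outputs $g$ of the outcome; $\mathsf{Lines}_1$ (resp.\ $\mathsf{Lines}_2$) measures its slope from register~$2$ (resp.~$1$) in the $Z$ basis, its line from register~$1$ (resp.~$2$) via $\Pi^v$ (resp.\ $\Pi^u$), and outputs $g$ restricted to that line. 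The completeness case of \Cref{thm:big-planes} together with \Cref{lem:big-cross-check}, applied exactly as in the completeness proof of \Cref{thm:big-planes-low-degree} (which is insensitive to $k$), shows this passes $\game_1$ and $\game_2$ with probability $1$ and extends to a real commuting EPR strategy. For the intersecting lines test, the EPR correlations force the line sampled by $\mathsf{Lines}_1$ to be $\{\bu+\lambda\bv\}$ and that of $\mathsf{Lines}_2$ to be $\{\bv+\lambda\bu\}$, both of which contain $\bu+\bv$, where $g$ agrees; the consistency test passes by \Cref{fact:heh-heh-heh-gonna-make-anand-prove-this-so-i-can-take-the-day-off}; commutation of the newly added operators is checked from their explicit forms as in \Cref{thm:ihide}.

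\textbf{Soundness, step one.} Passing $\game$ with probability $1-\eps$ makes each of the four subtests pass with probability $1-4\eps$. From $\game_1$, the surface-sampler half and \Cref{thm:big-planes} give that the $\mathsf{Lines}_1$ measurement is $\poly(\eps)$-close to an ideal measurement $\Pi^v_\ell\ot\tau^Z_v\ot(M^{\ell,v}_f)_{\reg{aux}}$; using a partial Naimark dilation (\Cref{thm:partial-naimark}) to keep it projective and of register form, and \Cref{fact:approx-delta-game-value} to bound the change in value, I may assume equality. The cross-check half of $\game_1$ then reduces, via \Cref{lem:big-cross-check}, to the classical line-versus-point game between $\{M^{\ell,v}\}$ and the points measurement, which by hypothesis is $G_{[g(u)=\nu]}$ for a projective $\{G_g\}\in\polymeas{n}{d}{q}$; since the sampled point is uniform on the line, \Cref{prop:same-on-point-same-on-subspace} yields $M^{\ell,v}_f\consistency_{\poly(\eps,d/q)} G_{[g|_\ell=f]}$, i.e.\ $\mathsf{Lines}_1$ answers according to the same global $\{G_g\}$.

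\textbf{Soundness, steps two and three.} From $\game_2$, \Cref{thm:big-planes} likewise gives an ideal form $\tau^Z_u\ot\Pi^u_{\ell'}\ot(N^{\ell',u}_{f'})_{\reg{aux}}$ for $\mathsf{Lines}_2$. When Alice plays the ideal $\mathsf{Lines}_1$ and Bob the ideal $\mathsf{Lines}_2$, the register measurements reproduce exactly the question distribution of the classical intersecting lines test on $\ket{\epr_q^n}_{\reg{1}}\ot\ket{\epr_q^n}_{\reg{2}}$: the $Z$ measurements on opposite halves pin down $\bu,\bv$ and the $\Pi^v,\Pi^u$ measurements produce the lines $\{\bu+\lambda\bv\}$, $\{\bv+\lambda\bu\}$, both containing $\bu+\bv$. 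So by \Cref{fact:introspective-outrospective} the strategy $(\mathrm{aux},\{M^{\ell,v},N^{\ell',u}\})$ passes $\game_{\mathrm{intersect}}(n,q,d)$ with probability $1-\poly(\eps)$. Replacing $\{M^{\ell,v}\}$ by $G_{[g|_\ell=f]}$ up to $\poly(\eps)$ (\Cref{fact:approx-delta-game-value}) and invoking \Cref{lem:intersecting-lines} gives $N^{\ell',u}_{f'}\consistency_{\poly(\eps,d/q)} G_{[g|_{\ell'}=f']}$. Finally, the cross-check half of $\game_2$ reduces, again by \Cref{lem:big-cross-check}, to the classical line-versus-point game between $\{N^{\ell',u}\}$ and $B$'s auxiliary measurement $\{B^v_\nu\}$; since $N$ answers according to $\{G_g\}$ and the sampled point $\bv$ is uniform on $\bell'$ (\Cref{fact:easy-peasy-lemon-peasy}), \Cref{prop:same-on-point-same-on-subspace} forces $B^v_\nu\consistency_{\poly(\eps,d/q)} G_{[g(v)=\nu]}$ on $\ket{\mathrm{aux}}$, and \Cref{fact:introspectivize-it} lifts this to $B_{v,\nu}\approx_{\delta(\eps)} I_{\reg{1}}\ot\tau^Z_v\ot G_{[g(v)=\nu]}$ with $\delta(\eps)=\poly(\eps,d/q)$, as claimed. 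The consistency test plays no role in this implication; it is included to certify a self-consistency property of the points provers used elsewhere. The communication and time bounds are inherited from $\game_1$ and $\game_2$ instantiated with $k=1$ and a single function.

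\textbf{Main obstacle.} I expect the delicate point to be the bookkeeping of the introspection reductions intertwined with the Naimark dilations: one must check that the dilated $\mathsf{Lines}$ measurements stay in register form (so that \Cref{lem:big-cross-check} and the introspection of the intersecting lines test still apply verbatim), verify the geometric claim that $\Pi^v$ on one half of an EPR register against a $Z$-measurement on the other half yields precisely the line through the sampled point with the sampled slope, and track how the $\poly(\eps)$ and $d/q$ error terms accumulate through the chain, all while repeatedly moving between the ``$\approx$'' distance on the full register state and on the auxiliary state.
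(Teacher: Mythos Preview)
Your overall three-step architecture matches the paper's, and the completeness sketch is fine. But there is a genuine gap in the soundness argument: your assertion that ``the consistency test plays no role in this implication'' is wrong, and without it the chain does not close.

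The issue first appears at the end of Step one and then recurs twice more. From the $\game_1$ cross-check together with \Cref{prop:same-on-point-same-on-subspace} you obtain $(C^{\ell,v}_f)_{\reg{Alice}}\otimes I_{\reg{Bob}}\consistency I_{\reg{Alice}}\otimes (G_{[g|_\ell=f]})_{\reg{Bob}}$ on $\ket{\rmaux}$. You then invoke \Cref{fact:approx-delta-game-value} to ``replace $\{M^{\ell,v}\}$ by $G_{[g|_\ell=f]}$''. But \Cref{fact:approx-delta-game-value} (and likewise the lift via \Cref{fact:introspectivize-it}) requires the two measurements to be close when placed on the \emph{same} side, i.e.\ $(C^{\ell,v}_f)_{\reg{Alice}}\approx (G_{[g|_\ell=f]})_{\reg{Alice}}$, whereas what you have puts $G$ on Bob's side. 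Nothing in the hypotheses lets you swap sides: the auxiliary state is arbitrary, and the assumed form $A_{u,\nu}=\tau^Z_u\otimes I_{\reg{2}}\otimes G_{[g(u)=\nu]}$ says nothing about how $G$ behaves across the bipartition of $\ket{\rmaux}$. The same problem arises after \Cref{lem:intersecting-lines} (to pass from $E^{\ell',u}\consistency G$ to a same-side statement so that the $\mathsf{Lines}_2$ measurement can be replaced by the projective $G$-form) and at the very end (to go from $B^v_\nu\consistency G$ to $(B^v_\nu)_{\reg{Alice}}\approx (G_{[g(v)=\nu]})_{\reg{Alice}}$, which is what \Cref{fact:introspectivize-it} needs to produce the stated conclusion about $B_{v,\nu}$).

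This is precisely what the consistency test supplies. Passing it gives $A_{u,\nu}\otimes I\consistency I\otimes A_{u,\nu}$, and stripping off the register part (the reverse direction of \Cref{fact:introspectivize-it}, using that $\tau^Z_u$ is projective) yields the self-consistency $G_{[g(u)=\nu]}\otimes I\consistency I\otimes G_{[g(u)=\nu]}$ on $\ket{\rmaux}$; by \Cref{fact:specialize-the-simeq} the same holds for $G_{[g|_\ell=f]}$. The paper uses this three times, exactly at the three places above, to move $G$ from Bob's side to Alice's and thereby enable \Cref{fact:approx-delta-game-value} and \Cref{fact:introspectivize-it}. You should reinstate the consistency test in your argument and insert these side-swaps; the rest of your plan then goes through as written.
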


\begin{proof}[Proof of \Cref{thm:big-transfer}.]
The runtime and communication complexities follows from the $k=1$ case of the low-degree test.
The completeness follows immediately from the completeness of the introspective low-degree test.

Now, we show soundness.
Write $C$ for the line prover's measurement in~$\game_1$,
and write~$E$ for the line prover's measurement in~$\game_2$.
We can write the point provers' measurements as
\begin{equation*}
A_{u, \nu} = \tau_u^Z \otimes I_{\reg{2}} \otimes G_{[g(u)=\nu]},
\qquad B_{v, \nu} = I_{\reg{1}} \otimes \tau_v^Z \otimes B^v_\nu.
\end{equation*}
The strategy passes the consistency test with probability $1-\delta(\eps)$. As a result,
\begin{equation*}
A_{u, \nu} \otimes I_{\reg{Bob}} \consistency_{\delta(\eps)} I_{\reg{Alice}} \otimes A_{u, \nu}.
\end{equation*}
By \Cref{fact:introspective-outrospective}, this implies that
\begin{equation}\label{eq:to-use-at-the-very-end-when-you-least-expect-it}
G_{[g(u)=\nu]} \otimes I_{\reg{Bob}} \consistency_{\delta(\eps)} I_{\reg{Alice}} \otimes G_{[g(u)=\nu]}
\end{equation}
on state~$\ket{\mathrm{aux}}$ and the uniform distribution on $\F_q^n$.
By \Cref{fact:specialize-the-simeq}, this implies that
\begin{equation}\label{eq:dunno-what-to-name-this-honestly}
G_{[g|_{\ell}=f]} \otimes I_{\reg{Bob}} \consistency_{\delta(\eps)} I_{\reg{Alice}} \otimes G_{[g|_{\ell}=f]},
\end{equation}
where $\ell$ is distributed as $\bell = \{\bu + \lambda \bv : \lambda \in \F_q\}$ for uniformly random $\bu, \bv \in \F_q^n$.

Next, the strategy passes both introspective low-degree tests $\game_1$ and $\game_2$ with probability~$1-\delta(\eps)$.
By~\Cref{thm:big-planes}, this implies measurements $\{C^{\ell,v}_f\}$ and $\{E^{\ell', u}_{f'}\}$ on the auxiliary register such that
\begin{equation}\label{eq:doggone-it-one}
(C_{\ell, v, f})_{\reg{Alice}} \otimes I_{\reg{Bob}}
\approx_{\delta(\eps)} \left(\Pi^v_\ell \otimes \tau_v^Z \otimes C^{\ell,v}_f\right)_{\reg{Alice}} \otimes I_{\reg{Bob}},
\end{equation}
\begin{equation}\label{eq:doggone-it-two}
(E_{\ell', u, f'})_{\reg{Alice}} \otimes I_{\reg{Bob}}
\approx_{\delta(\eps)} \left(\tau_u^Z \otimes \Pi_{\ell'}^u \otimes E^{\ell',u}_{f'}\right)_{\reg{Alice}} \otimes I_{\reg{Bob}}.
\end{equation}
By \Cref{fact:approx-delta-game-value}, we can assume \Cref{eq:doggone-it-one} holds with equality, incurring a loss of only~$\delta(\eps)$ in the game value. (We will do the same for \Cref{eq:doggone-it-two} later.)

The strategy is now in a form that allows us to apply \Cref{lem:big-cross-check}
to the introspective cross-check in $\game_1$.
This implies that the measurements $G_{[g(u) = \nu]}$ and $C^{\ell, v}_f$ give a good strategy for the line-versus point test. In other words,
\begin{equation*}
C^{\ell,v}_{[f(u) = \nu]} \otimes I_{\reg{Bob}} \consistency_{\delta(\eps)} I_{\reg{Alice}} \otimes G_{[g(u)=\nu]}.
\end{equation*}
on state $\ket{\mathrm{aux}}$. By \Cref{prop:same-on-point-same-on-subspace}, this implies that
\begin{equation*}
C^{\ell,v}_{f} \otimes I_{\reg{Bob}} \consistency_{\delta(\eps)} I_{\reg{Alice}} \otimes G_{[g|_{\ell} = f]}.
\end{equation*}
Via~\Cref{eq:dunno-what-to-name-this-honestly}, this implies
\begin{equation*}
C^{\ell,v}_{f} \otimes I_{\reg{Bob}}
\approx_{\delta(\eps)} I_{\reg{Alice}} \otimes G_{[g|_{\ell} = f]}
\approx_{\delta(\eps)} G_{[g|_{\ell} = f]} \otimes I_{\reg{Bob}}.
\end{equation*}
As a result, by \Cref{fact:introspectivize-it},
\begin{equation*}
(C_{\ell, v, f})_{\reg{Alice}} \otimes I_{\reg{Bob}} \approx_{\delta(\eps)} (\Pi_{\ell}^v \otimes \tau_v^Z \otimes G_{[g|_{\ell}=f]})_{\reg{Alice}} \otimes I_{\reg{Bob}}.
\end{equation*}
By assumption, the right-hand side is projective.
As a result, by \Cref{fact:approx-delta-game-value}, we can assume this expression holds with equality, incurring a loss of only~$\delta(\eps)$ in the game value.
Following this, we apply \Cref{fact:approx-delta-game-value} again to assume \Cref{eq:doggone-it-two} holds with equality.

The distribution given by on $(\ell, v)$ and $(\ell', u)$ when we measure with~$C$ and~$E$ is exactly
the question distribution of the (non-introspective) intersecting lines test.
As a result, \Cref{fact:introspective-outrospective} implies that
the measurements $G_{[g_{\ell} = f]}$ and $E^{\ell', u}_{f'}$ pass the intersecting lines test with probability $1-\delta(\eps)$.
In other words,
\begin{equation*}
E^{\ell',u}_{[f'(u+v) = \nu]} \otimes I_{\reg{Alice}}
\consistency_{\delta(\eps)} I_{\reg{Bob}} \otimes G_{[g(u+v) = \nu]}.
\end{equation*}
on state $\ket{\mathrm{aux}}$.
Then by~\Cref{lem:intersecting-lines}, 
\begin{equation*}
E^{\ell',u}_{f'} \otimes I_{\reg{Bob}}
\consistency_{\delta(\eps)} I_{\reg{Alice}} \otimes G_{[g|_{\ell'} = f']}.
\end{equation*}
Via~\Cref{eq:dunno-what-to-name-this-honestly}, this implies
\begin{equation*}
E^{\ell',u}_{f'} \otimes I_{\reg{Bob}}
\approx_{\delta(\eps)} I_{\reg{Alice}} \otimes G_{[g|_{\ell} = f']}
\approx_{\delta(\eps)} G_{[g|_{\ell} = f']} \otimes I_{\reg{Bob}}.
\end{equation*}
Thus, \Cref{fact:introspectivize-it} implies that
\begin{equation*}
(E_{\ell', u, f'})_{\reg{Alice}} \otimes I_{\reg{Bob}}
\approx_{\delta(\eps)} \left(\tau_u^Z \otimes \Pi_{\ell}^u \otimes G_{[g|_{\ell'} = f']}\right)_{\reg{Alice}} \otimes I_{\reg{Bob}}.
\end{equation*}
By assumption, the right-hand side is projective.
As a result, by \Cref{fact:approx-delta-game-value}, we can assume this expression holds with equality, incurring a loss of only~$\delta(\eps)$ in the game value.

The strategy is now in a form that allows us to apply \Cref{lem:big-cross-check}
to the introspective cross-check in $\game_2$.
This implies that the measurements $B^v_{\nu}$ and $G_{[g|_{\ell'}(v) = \nu]} = G_{[g(v) = \nu]}$  give a good strategy for the line-versus-point low-degree test. In other words,
\begin{equation*}
B^v_{\nu} \otimes I_{\reg{Bob}} \consistency_{\delta(\eps)} I_{\reg{Alice}} \otimes G_{[g(v)=\nu]}.
\end{equation*}
on state $\ket{\mathrm{aux}}$.
Via~\Cref{eq:to-use-at-the-very-end-when-you-least-expect-it}, this implies
\begin{equation*}
B^{v}_{\nu} \otimes I_{\reg{Bob}}
\approx_{\delta(\eps)} I_{\reg{Alice}} \otimes G_{[g(v) = \nu]}
\approx_{\delta(\eps)} G_{[g(v) = \nu]} \otimes I_{\reg{Bob}}.
\end{equation*}
As a result, by \Cref{fact:introspectivize-it},
\begin{equation*}
(B_{v, \nu})_{\reg{Alice}} \otimes I_{\reg{Bob}} \approx_{\delta(\eps)} (I_{\reg{1}} \otimes \tau_v^Z \otimes G_{[g(v) = \nu]})_{\reg{Alice}} \otimes I_{\reg{Bob}}.
\end{equation*}
This completes the proof of the theorem.
\end{proof}


\section{The introspective $\neexp$ protocol}\label{sec:big-neexp-protocol}

In this question, we give the complete short-question, introspective $\neexp$ protocol.
The goal is a protocol for $\succinctsquared$ instances of size~$s_{\mathrm{inst}}$
with $\poly(s_{\mathrm{inst}})$ question length and running time and $\poly(2^{s_{\mathrm{inst}}})$ answer length and running time.
Our construction will be an introspective version of the classical PCP construction from \Cref{sec:classical-pcp},
in which we replace the low-degree tests and simultaneous low-degree tests
with our introspective low-degree test and introspective simultaneous low-degree test.

We summarize the protocol here.
Given the $\succinctsquared$ instance~$\calC_{\mathrm{inst}}$
of size~$s_{\mathrm{inst}}$,
let $\calC$ be the size-$s$, $(3n+3)$-input $\succinct$ instance it succinctly represents,
where~$s$ and~$n$ are roughly exponential in~$s_{\mathrm{inst}}$.
Following \Cref{sec:classical-pcp},
we would like the introspective prover to sample strings $\bx_1, \bx_2, \bx_3 \in \F_q^m$ and $(\bb, \bw) \in \F_q^{3+s}$,
which they should return to the verifier.
In addition, they should return the evaluations $g(\bx_1), g(\bx_2), g(\bx_3)$ and $c_1(\bx, \bb, \bw), \ldots, c_{m'}(\bx, \bb, \bw)$,
where $g$ and the $c_i$'s are purported degree-$d$ polynomials.
This suggests using the following registers:
\begin{equation*}
\ket{\mathrm{EPR}_q^m}_{\reg{1}}
\otimes \ket{\mathrm{EPR}_q^m}_{\reg{2}}
\otimes \ket{\mathrm{EPR}_q^m}_{\reg{3}}
\otimes \ket{\mathrm{EPR}_q^{3+s}}_{\reg{4}}.
\end{equation*}
The difficulty in this protocol is ensuring that the polynomials involved are low-degree.
To begin, we can run the introspective low-degree test on the first register,
which guarantees that $g(\bx_1)$ corresponds to a low-degree polynomial. 
Doing the same on registers~$2$ and~$3$ would guarantee
the functions evaluated on~$\bx_2$ and~$\bx_3$ are also low-degree polynomials,
but it would not guarantee that the prover is using \emph{same} low-degree polynomial~$g$ on all three.
Instead, we run the introspective intersecting lines test twice,
ensuring that prover evaluates~$\bx_1$, $\bx_2$, and~$\bx_3$ using the same function~$g$.

Next, we consider the coefficient polynomials $c_1, \ldots, c_{m'}$.
They are evaluated on the concatenated outputs of the four registers, i.e.\ the string $(\bx, \bb, \bw)$.
As a result, we view the four registers as a single superregister of length $m' = 3 m + 3 + s$,
and we would like to perform the introspective simultaneous low-degree test on this superregister.
However, this test requires two additional superregisters of length $m'$ to serve as the direction registers.
As a result, the shared state between the two provers will be of the following form:
\begin{multline*}
\ket{\psi} = 
(\ket{\mathrm{EPR}_q^m}_{\reg{1}})
\otimes \ket{\mathrm{EPR}_q^m}_{\reg{2}}
\otimes \ket{\mathrm{EPR}_q^m}_{\reg{3}}
\otimes \ket{\mathrm{EPR}_q^{3+s}}_{\reg{4}})_{\reg{Super Reg 1}}\\
\otimes (\ket{\mathrm{EPR}_q^{m'}}_{\reg{5}})_{\reg{Super Reg 2}}
\otimes (\ket{\mathrm{EPR}_q^{m'}}_{\reg{6}})_{\reg{Super Reg 3}}
\otimes \ket{\mathrm{aux}}_{\reg{aux}}.
\end{multline*}

Having checked that the provers' functions are low-degree,
we conclude with a consistency check between~$g$ and the~$c_i$'s
to ensure that they encode a satisfying assignment to our $\succinct$.
In \Cref{sec:classical-pcp},
this was done by the ``formula test",
i.e.\ the check that $\mathrm{sat}_{\psi, g}(\bx, \bb, \bw) = \mathrm{zero}_{H, c}(\bx, \bb, \bw)$.
Here, this will be accomplished by an introspective version of this test,
in which the provers sample~$\bx$, $\bb$, and $\bw$ themselves.
Passing this test with high probability proves
that $\calC_{\mathrm{inst}}$ is a YES instance of the $\succinctsquared$ problem.

This section is organized as follows.
In \Cref{sec:computing-registers}, we will discuss the register parameters algorithm,
needed for the register compiler from \Cref{sec:register-overview}.
Next, \Cref{sec:introspective-formula} introduces the introspective formula game.
Finally, \Cref{sec:the-whole-banana} completes the construction and gives the introspective $\neexp$ game.

\subsection{Computing the register parameters}\label{sec:computing-registers}

Given the $\succinctsquared$ instance~$\calC_{\mathrm{inst}}$
of size~$s_{\mathrm{inst}}$,
let $\calC$ be the size-$s$, $(3n+3)$-input $\mathsf{Succinct}$-$\mathsf{3Sat}$ instance it succinctly represents.
To compile our protocol to one sound against general provers,
we need a register parameters algorithm which runs in time $\poly(s_{\mathrm{inst}})$ (\Cref{def:reg-params-generator}).
As described above, the register parameters will be simple functions of the numbers~$s$ and~$n$
(for example $m$, a simple function of~$n$ to be determined later).
However, $s$ and~$n$ themselves may not be easy to compute,
as the natural way of computing them involves first computing~$\calC$,
a time $2^{s_{\mathrm{inst}}}$ task.
We solve this by ``guessing" values for these numbers which are guaranteed to be larger than the actual values,
and then later ``fixing" the circuit~$\calC$ so that it actually has the guessed input length and size.
This is detailed in the following definition.

\begin{definition}\label{def:register-parameters}
Let $\calC_{\mathrm{inst}}$ be a size-$s_{\mathrm{inst}}$ instance of the $\succinctsquared$ problem.
\begin{enumerate}
\item Let $\calC$ be the size-$s$ $\mathsf{Succinct}$-$\mathsf{3Sat}$ instance it succinctly represents.
This circuit takes inputs $i, j, k$, each of some length~$n$, and bits $b_1, b_2, b_3$.
Then $s$ and $n$ can both be trivially upper-bounded by $N := 2^{s_{\mathrm{inst}}}$.
\item Consider a new circuit $\calC_{\mathrm{pad}}$ with inputs $i, j, k \in \{0, 1\}^{N}$ and $b \in \{0, 1\}^3$. We write $i = (i_1, i_2)$, where $i_1$ is of length $N - n$ and $i_2$ is of length $n$, and likewise for~$j$ and~$k$. Let this circuit act as follows:
\begin{itemize}
\item[$\circ$] Compute the $\lor$ of the bits in $i_1$, $j_1$, and~$k_1$. Output~$0$ if this is~$1$.
\item[$\circ$] Otherwise, output $\calC_{\mathrm{dec}}(i_2, j_2, k_2, b_1, b_2, b_3)$.
\end{itemize}
As defined, this circuit has size $s + 3 (N - n) + 2 \leq 4 N =: S$,
and we will pad it with additional gates in a trivial manner so that it has exactly~$S$ gates.
It can be checked that it succinctly represents the same $\sat$ formula as $\calC_{\mathrm{dec}}$.
\end{enumerate}
We set $\mathrm{PadC}(\calC_{\mathrm{inst}}) := \calC_{\mathrm{pad}}$, $\mathrm{PadN}(\calC_{\mathrm{inst}}) := N$,
and $\mathrm{PadS}(\calC_{\mathrm{inst}}):= 4 \cdot N$.
We note that given $\calC_{\mathrm{inst}}$, the value of~$N$ is efficiently computable.
\end{definition}

\subsection{An introspective formula game}\label{sec:introspective-formula}

In this section, we introduce the ``introspective formula game".
This game is the introspective version of the formula check in \Cref{sec:classical-pcp},
in which we check $\mathrm{sat}_{\psi, g}(\bx, \bb, \bw) = \mathrm{zero}_{H, c}(\bx, \bb, \bw)$
on a randomly chosen point $(\bx, \bb, \bw)$ in $\F_q^{m'}$.
Prior to stating the introspective formula game,
we will begin by recalling what this notation means.

Let $\calC_{\mathrm{inst}}$ be a size-$(s_{\mathrm{inst}})$ $\succinctsquared$ instance.
Let $\calC = \mathrm{PadC}(\calC_{\mathrm{inst}})$ be a $\succinct$ instance,
and let $n = \mathrm{PadN}(\calC_{\mathrm{inst}})$ and $s = \mathrm{PadS}(\calC_{\mathrm{inst}})$.
Then $\calC$ is a size-$s$, $(3n+3)$-variable circuit
which is a YES instance of the $\succinct$ problem if and only if $\calC_{\mathrm{inst}}$
is a YES instance of the $\succinctsquared$  problem.
Introduce $h = 2^{t_1}$, $q = 2^{t_2}$, and $m$ such that $N=2^n$, $h$, $q$, and $m$ are exactly admissible parameters (\Cref{def:admissible}).
Set $n' = n + 3 + s$ and $m' = m + 3 + s$.
We also recall the following pieces of notation.
\begin{itemize}
\item[$\circ$] (\Cref{def:canonical-low-degree}): Write $H := H_{t_1, t_2}$.
\item[$\circ$]  (\Cref{def:formula-function}): Given a function $g:\F_q^m \rightarrow \F_q$,
recall the notation $\mathrm{sat}_{\psi,g} := \mathrm{sat}_{\psi,g, n, t_1, t_2}$.
\item[$\circ$]  (\Cref{prop:coefficient-polys}): Writing $H_{\mathrm{zero}} = H^{3m} \otimes \{0, 1\}^{3 + s}$.
Given $c_1, \ldots, c_{m'} : \F_q^{m'} \rightarrow \F_q$, recall the notation $\mathrm{zero}_{H,c} = \mathrm{zero}_{H_{\mathrm{zero}}, c}$.
\end{itemize}

Before stating the introspective formula game, we must first dispense with the following annoying technicality.

\begin{notation}\label{not:annoying-technicality}
In the classical case (\Cref{sec:classical-pcp}),
we have a fixed proof which contains fixed functions which may or may not be low-degree.
In the quantum case, however, we are dealing not with a fixed proof but an interactive prover,
and the formula prover may not respond based on fixed functions (their responses might be randomized, for example).
To account for this, we modify the definitions of $\mathrm{sat}$ and $\mathrm{zero}$ as follows.
First, we recall the notation $g_\psi := g_{\psi, n, t_1, t_2}$ (\Cref{def:encoded-function}). 
\begin{itemize}
	\item[$\circ$] Given $\nu_1, \nu_2, \nu_3 \in \F_q$, define
		\begin{equation*}
			\mathrm{sat}_{\psi, \nu}(x, b, w) := g_{\psi}(x, b, w) \cdot (\nu_1 - b_1) (\nu_2 - b_2) (\nu_3 - b_3).
		\end{equation*}
	\item[$\circ$] Given $\mu_1, \ldots, \mu_{m'} \in \F_q$, define
		\begin{equation*}
			\mathrm{zero}_{H, \mu}(x) = \sum_{i=1}^{m'} \mathrm{zero}_{(H_{\mathrm{zero}})_i}(x_i) \cdot \mu_i,
		\end{equation*}
		where by definition $(H_{\mathrm{zero}})_i = H$ for $i \in [3m]$ and $(H_{\mathrm{zero}})_i = \{0, 1\}$ otherwise.
\end{itemize}
We note that if there is a function~$g$ such that $\nu_i = g(x_i)$, then $\mathrm{sat}_{\psi, \nu} = \mathrm{sat}_{\psi, g}$.
Similarly, if there are functions $c_1, \ldots, c_{m'}$ such that $\mu_i = c_i(x)$, then $\mathrm{zero}_{H, \mu} = \mathrm{zero}_{H, c}$.
\end{notation}

Now we state the introspective formula game.

\begin{definition}\label{def:formula-game}
Let $\calC_{\mathrm{inst}}$ be a size-$(s_{\mathrm{inst}})$ $\succinctsquared$ instance.
Let $n = \mathrm{PadN}(\calC_{\mathrm{inst}})$ and $s = \mathrm{PadS}(\calC_{\mathrm{inst}})$.
Suppose $n$, $h = 2^{t_1}$, $q = 2^{t_2}$, and $m$ are exactly admissible parameters.
The \emph{introspective formula game},
denoted $\game := \game_{\mathrm{IntroForm}}(\calC_{\mathrm{inst}}, h, q, m)$, is defined in \Cref{fig:formula-game}.
This is a $\lambda_{\calC_{\mathrm{inst}}, q} := (4, \ell, q)$-register game, for $\ell = (m, m, m, 3 + s)$.
Furthermore,
\begin{equation*}
\qlength{\game} = O(1), \quad
\alength{\game} = O(m' \log(q)),
\end{equation*}
\begin{equation*}
\qtime{\game} = O(1), \quad
\atime{\game} = \poly(s, n, n', h, q, m').
\end{equation*}
\end{definition}
{
\floatstyle{boxed} 
\restylefloat{figure}
\begin{figure}
Flip an unbiased coin $\bb \sim \{0, 1\}$.
	\begin{itemize}
	\item[$\circ$] Player~$\bb$: Give ($Z$, $Z$, $Z$, $Z$, ``formula"); 
				receive $\bu_1, \bu_2, \bu_3, (\bb, \bw)$ and $\bnu_1, \bnu_2, \bnu_3$ and $\bmu_1, \ldots, \bmu_{M'}$.
	\end{itemize}
	Compute~$\mathrm{sat}_{\psi, \bnu}(\bu, \bb, \bw)$ and~$\mathrm{zero}_{H, \bmu}(\bu, \bb, \bw)$.
	Accept if they are equal.
	\caption{The game $\game_{\mathrm{IntroForm}}(\calC_{\mathrm{inst}}, h, q, m)$.\label{fig:formula-game}}
\end{figure}
}

\begin{notation}
In the case when a prover is given the question $(Z, Z, Z, Z, \mathrm{``formula"})$, we refer to it as the \emph{formula prover}. 
It has the following intended behavior.
\begin{enumerate}
\setcounter{enumi}{2}
\item \textbf{Formula prover:}
	\begin{description}[align=left]
	\item [Input:] Pauli basis queries $(Z, Z, Z, Z)$ and auxiliary query ``formula".
	\item [Output:] Strings $u_1, u_2, u_3 \in\F_q^{m}$ and $(b, w) \in \F_q^{3+s}$.
			Three numbers $\nu_1,\nu_2, \nu_3 \in \F_q$
			and $m'$ numbers $\mu_1, \ldots, \mu_{m'} \in \F_q$.
	\item [Goal:] The prover should act as follows.
		\begin{itemize}
		\item[$\circ$] The prover sets $\nu_1 = g(u_1)$, $\nu_2 = g(u_2)$, $\nu_3 = g(u_3)$,
					where $g:\F_q^m \rightarrow \F_q$ is a global degree-$d_1$ polynomial selected independently of~$u$.
		\item[$\circ$] They then set $\mu_i = c_i(u_1, u_2, u_3, b, w)$,
					where for each $i$,
					$c_i:\F_q^{m'} \rightarrow \F_q$ is a global degree-$d_2$ polynomial selected independently of $(u, b, w)$.
		\end{itemize}
	\end{description}
\end{enumerate}
Here, $d_1$ and~$d_2$ are polynomial degrees which will be selected later.
We will also refer to the \emph{formula prover's measurement}, 
which refers to the measurement $\{F_{u, b, w, \nu_i,\mu_j}\}$ such that
\begin{equation*}
F_{u, b, w, \nu_1, \nu_2, \nu_3, \mu_1, \ldots, \mu_{m'}}
= M^{Z, Z, Z, Z, \mathrm{``formula"}}_{u, b, w, \nu_1, \nu_2, \nu_3, \mu_1, \ldots, \mu_{m'}}.
\end{equation*}
\end{notation}

We begin by showing the completeness case of the introspective formula game.

\begin{proposition}[Introspective formula game completeness]\label{prop:formula-game-completeness}
Suppose~$\calC_{\mathrm{inst}}$ is a YES instance of the $\succinctsquared$ problem.
Let $a:\{0,1\}^n \rightarrow \{0, 1\}$ be a satisfying assignment to the $\mathsf{3Sat}$ instance it encodes,
and let $g:=g_a:\F_q^m \rightarrow \F_q$ be its low-degree encoding.
Let $c_1, \ldots, c_{m'}:\F_q^{m'} \rightarrow \F_q$ be the coefficient polynomials guaranteed to make 
$\mathrm{sat}_{\psi, g} = \mathrm{zero}_{H_{\mathrm{zero}}, c}$ by \Cref{prop:coefficient-polys}.
Both $g$ and the $c_i$'s are degree-$O(hn')$ polynomials.
Consider the $\lambda_{\calC_{\mathrm{inst}},q}$-register strategy $(\psi, A)$ with no auxiliary register in which
\begin{equation*}
A_{u, b, w, \nu, \mu} =  \tau^Z_{u_1} \otimes \tau^Z_{u_2} \otimes \tau^Z_{u_3} \otimes \tau^Z_{b, w}  \cdot \bone[\nu_i = g(u_i), \mu_j = c_j(u, b, w)],
\end{equation*}
where the indices range over $i \in [3]$ and $j \in [m']$.
Then this strategy passes $\game_{\mathrm{IntroForm}}(\calC_{\mathrm{inst}}, h, q, m)$ with probability~$1$.
\end{proposition}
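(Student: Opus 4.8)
The plan is to verify directly that the given strategy $(\psi,A)$ is a valid $\lambda_{\calC_{\mathrm{inst}},q}$-register strategy and that it causes the verifier in $\game_{\mathrm{IntroForm}}$ to accept with certainty. Since the game consists of a single message to the formula prover followed by a deterministic check, there are essentially two things to establish: (i) the measurement $A_{u,b,w,\nu,\mu}$ has the syntactic form required of a register strategy answering a $(Z,Z,Z,Z,\text{``formula''})$ query, and (ii) whenever the prover reports outcomes $\bu,\bb,\bw,\bnu,\bmu$ according to $A$, the verifier's equality check $\mathrm{sat}_{\psi,\bnu}(\bu,\bb,\bw)=\mathrm{zero}_{H,\bmu}(\bu,\bb,\bw)$ succeeds.

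For (i), I would note that by \Cref{eq:in-math} the Pauli-basis part of a register measurement on an all-$Z$ query must be $\tau^Z_{u_1}\otimes\tau^Z_{u_2}\otimes\tau^Z_{u_3}\otimes\tau^Z_{b,w}$ (here using that the four registers have dimensions $m,m,m,3+s$), and the auxiliary answer $(\nu,\mu)$ may be produced by any POVM on the remaining space; since there is no auxiliary register, the only freedom is a deterministic function of the $Z$-outcomes, which is exactly what the indicator $\bone[\nu_i=g(u_i),\ \mu_j=c_j(u,b,w)]$ encodes. So $A$ is a legitimate register strategy (and in fact a real commuting EPR strategy, since all operators are diagonal in the computational basis).

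For (ii), I would use the stabilizer relation for the EPR state: measuring $\tau^Z$ on both halves of $\ket{\mathrm{EPR}_q^\ell}$ yields equal outcomes (\Cref{eq:epr-stab-z}), so after the formula prover measures, the reported strings $\bu_1,\bu_2,\bu_3,(\bb,\bw)$ are simply a uniformly random point of $\F_q^{m'}$ (viewing the four registers as the superregister of length $m'=3m+3+s$), and the reported numbers are $\bnu_i=g(\bu_i)$ and $\bmu_j=c_j(\bu,\bb,\bw)$. By \Cref{not:annoying-technicality}, $\nu_i=g(u_i)$ for all $i$ implies $\mathrm{sat}_{\psi,\nu}=\mathrm{sat}_{\psi,g}$, and $\mu_j=c_j(u,b,w)$ for all $j$ implies $\mathrm{zero}_{H,\mu}=\mathrm{zero}_{H,c}$. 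Since $\calC_{\mathrm{inst}}$ is a YES instance, $a$ is a satisfying assignment, $g=g_a$ is its low-degree encoding, and by \Cref{prop:coefficient-polys} (as invoked in the statement) the coefficient polynomials $c_1,\dots,c_{m'}$ are chosen so that $\mathrm{sat}_{\psi,g}\equiv\mathrm{zero}_{H,c}$ as polynomials on $\F_q^{m'}$. Hence the two quantities computed by the verifier agree at \emph{every} point, in particular at the sampled point $(\bu,\bb,\bw)$, so the verifier accepts with probability~$1$. I would also briefly note the parameter bookkeeping: $g$ and the $c_i$ have degree $O(hn')$, consistent with $d_1,d_2$ being chosen large enough later, so no degree constraint is violated here.

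This proposition is essentially a sanity-check lemma, so I do not anticipate a serious obstacle; the only point requiring care is making sure the $Z$-basis measurement on the superregister really does produce a uniform point of $\F_q^{m'}$ with both provers' shares consistent (so that feeding the concatenated string into $\mathrm{sat}_{\psi,\cdot}$ and $\mathrm{zero}_{H,\cdot}$ is legitimate), and that the "annoying technicality" translation of \Cref{not:annoying-technicality} is applied in the right direction. Everything else is a direct appeal to \Cref{prop:zero-on-subcube}/\Cref{prop:coefficient-polys} and the EPR stabilizer relations.
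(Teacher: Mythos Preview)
Your proposal is correct and follows essentially the same route as the paper, which dispatches the proposition in a single sentence by noting that the game is just the formula check from the classical PCP of \Cref{sec:the-pcp}, where a satisfying assignment yields $\mathrm{sat}_{\psi,g}\equiv\mathrm{zero}_{H,c}$ identically. Your write-up simply unpacks that sentence; one small remark is that the EPR stabilizer relation and the uniformity of $(\bu,\bb,\bw)$ are not actually needed here, since only Player~$\bb$ is queried in $\game_{\mathrm{IntroForm}}$ and the identity $\mathrm{sat}_{\psi,g}=\mathrm{zero}_{H,c}$ holds at \emph{every} point, so the verifier accepts regardless of which outcome is measured.
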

\begin{proof}
This game is simply the oracularized version of the formula check in the classical PCP.
The proposition follows from the discussion in \Cref{sec:the-pcp}.
\end{proof}

Our next lemma covers the soundness case of the introspective formula game.
It concerns provers of a particular form,
namely those whose measurements correspond to low-degree polynomials.
We show that if there exists such a prover with nonnegligible value, then the formula must be satisfiable.

\begin{lemma}[Formula game partial soundness]\label{lem:polynomial-means-good}
Let $\calC_{\mathrm{inst}}$ be a $\succinctsquared$ instance, and set $\game := \game_{\mathrm{IntroForm}}(\calC_{\mathrm{inst}}, h, q, m)$.
Let $\calS = (\psi, A)$ be a $\lambda_{\calC_{\mathrm{inst}}, q}$-register strategy.
Consider a measurement on the auxiliary register
\begin{equation*}
G = \{G_{g, c_1, \ldots, c_{m'}}\}
\end{equation*}
with outcomes degree-$d_1$ polynomials $g:\F_q^m \rightarrow \F_q$
and degree-$d_2$ polynomials $c_1, \ldots, c_{m'}: \F_Q^{m'} \rightarrow \F_q$.
Suppose~$A$ has the following form: for each $u$, $b$, $w$, $\nu$, and $\mu$,
\begin{equation}\label{eq:weird-form}
A_{u, b, w, \nu, \mu} = \tau^Z_{u_1} \otimes \tau^Z_{u_2} \otimes \tau^Z_{u_3} \otimes \tau^Z_{b, w} \otimes \left(G_{[g(u_i)= \nu_i,  c_j(u, b, w)= \mu_j]}\right)_{\reg{aux}},
\end{equation}
where the subscript of the~$G$ measurement ranges over all $i \in [3]$ and $j \in [m']$.
If the probability $\calS$ passes $\calG$ is at least
\begin{equation*}
\frac{\max\{O(h n') + 3 d_1, h + d_2\}}{q},
\end{equation*}
then $\psi$ is satisfiable.
\end{lemma}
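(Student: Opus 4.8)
The plan is to reduce the statement to the classical PCP soundness lemma (\Cref{lem:the-grand-finale}) from \Cref{sec:classical-pcp}. The idea is that the hypothesis~\eqref{eq:weird-form} says that the formula prover's behavior is entirely determined by the polynomial measurement $G$ on the auxiliary register, together with the honest $\tau^Z$ measurements on the four registers. When the verifier plays $\game_{\mathrm{IntroForm}}$, it draws all of the formula prover's outputs $u_1, u_2, u_3, (b,w)$ from measuring the four $Z$-registers: by the stabilizer relations~\eqref{eq:epr-stab-z} (together with \Cref{fact:the-ol-pauli-swaperoonie}), measuring $\tau^Z$ on register~$i$ gives a uniformly random $\bu_i \in \F_q^m$, and likewise $(\bb,\bw)$ is uniform in $\F_q^{3+s}$. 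Conditioned on these outcomes, the auxiliary state collapses and the residual measurement $G_{[g(u_i)=\nu_i,\ c_j(u,b,w)=\mu_j]}$ on $\ket{\rmaux}$ produces degree-$d_1$ polynomials $\bg$ and degree-$d_2$ polynomials $\bc_1,\dots,\bc_{m'}$, and then $\bnu_i = \bg(\bu_i)$, $\bmu_j = \bc_j(\bu,\bb,\bw)$.

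First I would make this precise using \Cref{fact:introspective-outrospective}: the introspective strategy $\calS$ for $\game_{\mathrm{IntroForm}}$ induces an ordinary (single-prover, since only one prover is queried) distribution over $(\bu,\bb,\bw,\bg,\bc)$, where $(\bu,\bb,\bw)$ is uniform and $(\bg,\bc)$ is drawn from $G$ applied to $\ket{\rmaux}$, independently of $(\bu,\bb,\bw)$ up to the conditioning built into the POVM elements — but here the structure of~\eqref{eq:weird-form} shows the $Z$-registers and the aux register factor, so in fact $(\bu,\bb,\bw)$ and $(\bg,\bc)$ are genuinely independent. The verifier accepts iff $\mathrm{sat}_{\psi,\bnu}(\bu,\bb,\bw) = \mathrm{zero}_{H,\bmu}(\bu,\bb,\bw)$, which by \Cref{not:annoying-technicality} equals $\mathrm{sat}_{\psi,\bg}(\bu,\bb,\bw) = \mathrm{zero}_{H,\bc}(\bu,\bb,\bw)$. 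So the acceptance probability is
\begin{equation*}
\E_{\bg,\bc}\ \Pr_{\bu,\bb,\bw}\big[\mathrm{sat}_{\psi,\bg}(\bu,\bb,\bw) = \mathrm{zero}_{H,\bc}(\bu,\bb,\bw)\big].
\end{equation*}
If this is at least $\tfrac{\max\{O(hn')+3d_1,\ h+d_2\}}{q}$, then there must exist a fixed pair $(g,c)$ in the support of $G$ for which $\Pr_{\bu,\bb,\bw}[\mathrm{sat}_{\psi,g}(\bx) = \mathrm{zero}_{H,c}(\bx)]$ meets this same bound. Now $g$ is degree-$d_1$ and each $c_i$ is degree-$d_2$ by the definition of $G$, so \Cref{lem:the-grand-finale} (applied with $d_1,d_2$ as the degree parameters) immediately yields that $\psi$ is satisfiable.

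The main obstacle, and the step requiring the most care, is making airtight the claim that the four $Z$-register measurements produce the uniform distribution on $(\bu,\bb,\bw)$ \emph{and} that this is independent of the aux-register outcome $(\bg,\bc)$ — i.e.\ correctly invoking \Cref{fact:introspective-outrospective} with the measurement factored as a tensor product across (registers $1$–$4$) $\otimes$ (aux). One must check that the ``question'' part of the introspective measurement is $\tau^Z_{u_1}\otimes\cdots\otimes\tau^Z_{b,w}\otimes I_{\reg{aux}}$ (whose outcome distribution on $\ket{\mathrm{EPR}}$-registers is uniform by~\eqref{eq:epr-stab-z}) and the ``answer'' part acts only on aux, so that \Cref{fact:introspective-outrospective} identifies the value of $\game_{\mathrm{IntroForm}}$ with the value of the corresponding normal game whose question distribution is exactly uniform over $\F_q^{m'}$ — matching the hypothesis of \Cref{lem:the-grand-finale}. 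After that, the averaging argument (there exists a good fixed $(g,c)$) and the appeal to \Cref{lem:the-grand-finale} are routine.
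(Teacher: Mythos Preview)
Your proposal is correct and follows essentially the same route as the paper: factor the measurement across the $Z$-registers and the aux register, observe that the $(\bu,\bb,\bw)$ are uniform and independent of the $G$-outcome, average to fix a single $(g,c)$, and invoke \Cref{lem:the-grand-finale}. The paper's proof is just a terser version that phrases the factoring as an equivalent ``three-step strategy'' (measure $G$ first, then the $Z$-bases, then output) rather than explicitly citing \Cref{fact:introspective-outrospective}.
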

\begin{proof}
Consider the following three step strategy:
\begin{enumerate}
\item Measure the auxiliary register with $\{G_{g, c_1, \ldots, c_{m'}}\}$, receiving functions $\bg$, $\bc_1$, \ldots, $\bc_{m'}$.
\item Measure the EPR registers in the~$Z$ basis, receiving $\bu$, $\bb$, and~$\bw$.
\item Output $\bu$, $\bb$, $\bw$, $\bg(\bu_1)$, $\bg(\bu_2)$, $\bg(\bu_3)$ and $\bc_1(\bu,\bb, \bw)$ through $\bc_{M'}(\bu, \bb, \bw)$.
\end{enumerate}
This passes the formula game with probability $\valstrat{\calG}{\calS}$.
Then there exists functions $g$, $c_1$, \ldots, $c_{m'}$ such that conditioned on measuring them in step one,
this strategy passes with probability at least $\valstrat{\calG}{\calS}$.
By the remark at the end of \Cref{not:annoying-technicality},
this is the probability that
\begin{equation*}
\mathrm{sat}_{\psi, g}(\bx, \bb, \bw) = \mathrm{zero}_{H, c}(\bx, \bb, \bw),
\end{equation*}
where $(\bx, \bb, \bw)$ is drawn from $\F_q^{m'}$ uniformly at random.
The lemma follows from \Cref{lem:the-grand-finale}.
\end{proof}

\subsection{The complete introspective protocol}\label{sec:the-whole-banana}

In this section, we introduce the introspective protocol for $\neexp$ and prove its correctness.
The introspective $\neexp$ protocol builds on top of the introspective formula game
by using a series of introspective low-degree tests to ensure that the formula prover satisfies the condition in \Cref{eq:weird-form}.
Having done this, we can then apply \Cref{lem:polynomial-means-good},
ensuring that if a strategy passes with high probability, then the instance is satisfiable.

\begin{definition}\label{def:the-protocol-to-end-all-protocols}
Let $\calC_{\mathrm{inst}}$ be a size-$(s_{\mathrm{inst}})$ $\succinctsquared$ instance.
Let $n = \mathrm{PadN}(\calC_{\mathrm{inst}})$ and $s = \mathrm{PadS}(\calC_{\mathrm{inst}})$.
The verifier chooses $h = 2^{t_1}$, $q = 2^{t_2}$, $m$, and $d$
such that $m$, $h$, $q$, and $m$ are exactly admissible parameters satisfying
\begin{equation*}
h = \Theta(n),
\quad
m = \Theta\left(\frac{n}{\log(n)}\right),
\quad
q = \poly(n),
\quad
d = O(hn') = O(n^2).
\end{equation*}
(We will choose the polynomial for~$q$ in \Cref{thm:main-result-of-this-part} below.)
The verifier sets $\lambda = (6, \ell, q)$,
where $\ell = (m,m,m, 3 + s, m', m')$.

We begin by instantiating the following list of subroutines.
\begin{itemize}
\item[$\circ$]
Let $\lambda_{\mathrm{LD}} = (3, m, q)$ be register parameters.
Let $\game_{\mathrm{LD}}$ be a copy of $\game_{\mathrm{IntroLowDeg}}(\lambda_{\mathrm{LD}}, d)$,
	using register~$1$ as the point register and registers~$2$ and~$3$ as the direction registers.
	Write $\mathsf{Points}_1$ for the points prover.
\item[$\circ$]
Let $\lambda_{\mathrm{IL}} = (2, m, q)$ be register parameters.
Let $\game_{\mathrm{IL1}}$ be a copy of $\game_{\mathrm{IntroIntersect}}(\lambda_{\mathrm{IL}}, d)$ on registers~$1$ and~$2$
	whose points prover for register~$1$ is $\mathsf{Points}_1$ from $\game_{\mathrm{LD}}$.
	Write $\mathsf{Points}_2$ for the points prover on register~$2$.
\item[$\circ$] Let $\game_{\mathrm{IL2}}$ be a copy of
	$\game_{\mathrm{IntroIntersect}}(\lambda_{\mathrm{IL}}, d)$ on registers~$1$ and~$3$
	whose points prover for register~$1$ is $\mathsf{Points}_1$ from $\game_{\mathrm{LD}}$.
	Write $\mathsf{Points}_3$ for the points prover on register~$3$.
\item[$\circ$] Let $\game_{\mathrm{F}}$ be a copy of $\game_{\mathrm{IntroForm}}(\calC_{\mathrm{inst}}, h, q, m)$ on registers~$1$, $2$, $3$, and~$4$.
	Write $\mathsf{Formula}$ for the formula prover.
\item[$\circ$]
Let $\lambda_{\mathrm{LDSUP}} = (3, m', q)$ be register parameters.
Let $\game_{\mathrm{LDSUP}}$ be a copy of $\game_{\mathrm{IntroLowDeg}}(\lambda_{\mathrm{LDSUP}}, d, 3+m')$,
applied to the following three superregisters:
registers~$1$ through~$4$ are combined into the point superregister,
register~$5$ is used as the first direction superregister,
and register~$6$ is used as the second direction superregister.
In addition, use $\mathsf{Formula}$ from $\game_{\mathrm{form}}$ as its points prover.
\end{itemize}
Then the \emph{introspective $\neexp$ game}, denoted $\game_{\mathrm{Intro}\neexp}(\calC_{\mathrm{inst}})$, is defined in \Cref{fig:big-neexp}.
\end{definition}

{
\floatstyle{boxed} 
\restylefloat{figure}
\begin{figure}
With probability~$\tfrac{1}{9}$ each, perform one of the following nine tests.
\begin{enumerate}
	\item \textbf{Low degree test:} Play $\game_{\mathrm{LD}}$.
	\item \textbf{Intersecting lines test 1:} Play $\game_{\mathrm{IL1}}$.
	\item \textbf{Intersecting lines test 2:} Play $\game_{\mathrm{IL2}}$.
	\item \textbf{Simultaneous low degree test:} Play $\game_{\mathrm{LDSUP}}$.
	\item \textbf{Formula test:} Player $\game_{\mathrm{F}}$.
\end{enumerate}
For the remaining tests, flip an unbiased coin $\bb \sim \{0, 1\}$.
Assign the first role to Player~$\bb$ and the second role to Player~$\overline{\bb}$.
\begin{enumerate}
\setcounter{enumi}{5}
	\item \textbf{Consistency test 1:} 
		\begin{itemize}
		\item[$\circ$] $\mathsf{Points}_1$: Receive $\bnu$.
		\item[$\circ$] $\mathsf{Formula}$: Receive $\bnu_1$.
		\end{itemize}
		Accept if $\bnu = \bnu_1$.
	\item \textbf{Consistency test 2:} 
		\begin{itemize}
		\item[$\circ$] $\mathsf{Points}_2$: Receive $\bnu$.
		\item[$\circ$] $\mathsf{Formula}$: Receive $\bnu_2$.
		\end{itemize}
		Accept if $\bnu = \bnu_2$.
	\item \textbf{Consistency test 3:} 
		\begin{itemize}
		\item[$\circ$] $\mathsf{Points}_3$: Receive $\bnu$.
		\item[$\circ$] $\mathsf{Formula}$: Receive $\bnu_3$.
		\end{itemize}
		Accept if $\bnu = \bnu_3$.
	\item \textbf{Consistency test 4:} 
		\begin{itemize}
		\item[$\circ$] $\mathsf{Formula}$: Receive $\bnu_1, \bnu_2, \bnu_3$ and $\bmu_1, \ldots, \bmu_{m'}$.
		\item[$\circ$] $\mathsf{Formula}$: Receive $\bnu_1', \bnu_2', \bnu_3'$ and $\bmu_1', \ldots, \bmu_{m'}'$.
		\end{itemize}
		Accept if $\bnu_i = \bnu_i'$ and $\bmu_j = \bmu_j'$ for all $i \in [3]$, $j \in [m']$.
	\end{enumerate}
	\caption{The game $\game_{\mathrm{Intro}\neexp}(\calC_{\mathrm{inst}})$.\label{fig:big-neexp}}
\end{figure}
}

The main result \Cref{part:neexp} is the following theorem.

\begin{theorem}\label{thm:main-result-of-this-part}
Let $\calC_{\mathrm{inst}}$ be a size-$(s_{\mathrm{inst}})$ $\succinctsquared$ instance.
Let $q$ be a sufficiently large $\poly(n)$ and $\eps > 0$ a sufficiently small constant such that \Cref{eq:formula-success-probability} is at least $\tfrac{1}{2}$
and \Cref{eq:other-expression-dont-even-know-what-to-call-this} is less than $\tfrac{1}{2}$.
Write $\game := \game_{\mathrm{Intro}\neexp}(\calC_{\mathrm{inst}})$.
\begin{itemize}
\item[$\circ$] \textbf{Completeness:}
		Suppose $\calC_{\mathrm{inst}}$ encodes a satisfiable formula.
		Then there is a value-$1$ $\lambda$-register strategy for $\game$ with no auxiliary register.
\item[$\circ$] \textbf{Soundness:}
		If there is a $\lambda$-register strategy for $\game$ with value at least $1-\eps$,
		then $\calC_{\mathrm{inst}}$ encodes a satisfiable formula.
\end{itemize}
Furthermore,
\begin{equation*}
\qlength{\game} = O(1), \quad
\alength{\game} = \poly(2^{s_\mathrm{inst}}),
\end{equation*}
\begin{equation*}
\qtime{\game} = O(1), \quad
\atime{\game} = \poly(2^{s_\mathrm{inst}}).
\end{equation*}
\end{theorem}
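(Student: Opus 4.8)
The plan is to prove completeness by exhibiting the honest introspective strategy and soundness by a chain of self-testing reductions ending in \Cref{lem:polynomial-means-good}.

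\textbf{Completeness (and complexity).} Suppose $\calC_{\mathrm{inst}}$ encodes a satisfiable formula $\psi$. I would fix a satisfying assignment $a$, let $g:=g_a:\F_q^m\to\F_q$ be its low-degree encoding (of degree $O(mh)\le d$), and let $c_1,\dots,c_{m'}$ be the coefficient polynomials furnished by \Cref{prop:coefficient-polys}, so that $\mathrm{sat}_{\psi,g}\equiv\mathrm{zero}_{H,c}$. The honest $\lambda$-register strategy has no auxiliary register: every prover measures its relevant $Z$-registers to sample its points and answers all evaluation queries from the global polynomials $g$ and $c_j$ (surface and line provers answering with the appropriate restrictions). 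I would then argue each of the nine subtests is won with certainty — the low-degree, intersecting-lines, simultaneous-low-degree, and formula subtests via the completeness halves of \Cref{thm:big-planes-low-degree}, \Cref{thm:big-transfer}, \Cref{thm:big-simultaneous-planes-low-degree} and \Cref{prop:formula-game-completeness}, and the four consistency subtests because every prover uses the single global polynomial $g$ (and $\mathsf{Formula}$ is self-consistent) — and that the strategy is real and commuting. For the bounds I would note that every subgame has $O(1)$ question length and question time, while by \Cref{def:register-parameters} and the rule-of-thumb settings $n,s,h,q,m,m',d$ are all $\poly(2^{s_{\mathrm{inst}}})$, hence so are the answer length and verifier runtime of each subgame and of evaluating $\mathrm{sat}_{\psi,\nu}$ and $\mathrm{zero}_{H,\mu}$.

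\textbf{Soundness, first reductions.} Given a $\lambda$-register strategy $\calS$ of value $\ge1-\eps$, I would first use \Cref{lem:proj-suffices} to assume it is projective; it then passes each subgame with probability $1-O(\eps)$. First, apply the soundness half of \Cref{thm:big-planes-low-degree} (the planar $k=2$ case) to $\game_{\mathrm{LD}}$ to obtain a POVM $\{G_g\}\in\polymeas{m}{d}{q}$ on $\reg{aux}$ with $(B^{(1)}_{u,\nu})_{\reg{Alice}}\otimes I_{\reg{Bob}}\approx_{\delta}(\tau^Z_u\otimes I\otimes G_{[g(u)=\nu]})_{\reg{Alice}}\otimes I_{\reg{Bob}}$ for $\mathsf{Points}_1$'s measurement $B^{(1)}$; I would then dilate $\{G_g\}$ to a projective measurement by Naimark (\Cref{thm:naimark}), which is the hypothesis required by the intersecting-lines test. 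Feeding this into the soundness halves of \Cref{thm:big-transfer} for $\game_{\mathrm{IL1}}$ and $\game_{\mathrm{IL2}}$ would show $\mathsf{Points}_2$ and $\mathsf{Points}_3$ are $\approx_{\poly(\eps)}$ to $I\otimes\tau^Z_v\otimes G_{[g(v)=\nu]}$ on registers $2$ and $3$ — crucially, the \emph{same} $\{G_g\}$ and the same $g:\F_q^m\to\F_q$.

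\textbf{Soundness, gluing and conclusion.} Next I would apply \Cref{thm:big-simultaneous-planes-low-degree} to $\game_{\mathrm{LDSUP}}$, run on the length-$m'$ superregister formed from registers $1$–$4$ with $3+m'$ functions; invoking the superregister conventions of \Cref{sec:big-neexp-protocol} (and the implicit lifting of soundness to subroutines), this yields a POVM $\{G'_{\hat g_1,\hat g_2,\hat g_3,c_1,\dots,c_{m'}}\}\in\simulpolymeas{m'}{d}{q}{3+m'}$ on $\reg{aux}$ with $F\approx_{\delta}\tau^Z_{(u_1,u_2,u_3,b,w)}\otimes I\otimes G'_{[\hat g_i(\cdot)=\nu_i,\,c_j(\cdot)=\mu_j]}$ for the formula prover's measurement $F$. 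The four consistency subtests knit everything together: using consistency tests $1$–$3$ with \Cref{fact:introspective-consistency}/\Cref{fact:introspectivize-it}, $F$'s $\nu_i$-marginal agrees with $\mathsf{Points}_i$, hence with $G_{[g(u_i)=\nu_i]}$; comparing with the $\hat g_i$-marginal of $G'$ and invoking Schwartz–Zippel (\Cref{lem:schwartz-zippel}) — degree-$d$ polynomials agreeing on more than a $d/q$ fraction of inputs are identical — forces $\hat g_i(x_1,x_2,x_3,b,w)=g(x_i)$ as polynomials, collapsing the three $\hat g_i$ onto the single $g$, while consistency test $4$ supplies the self-consistency of $F$ used along the way. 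Finally I would replace the $\approx$ statements by equalities at a cost of $\poly(\eps)$ in game value (\Cref{fact:approx-delta-game-value}), repair the resulting operators into genuine $\reg{aux}$-measurements with \Cref{fact:sub-zero}, and absorb the Naimark ancillas (using \Cref{thm:partial-naimark} where the $Z$-register tensor structure must survive), bringing $F$ into exactly the form \eqref{eq:weird-form} demanded by \Cref{lem:polynomial-means-good} with degrees $d_1,d_2=O(hn')$. Provided $q$ is a large enough $\poly(n)$ and $\eps$ a small enough constant — precisely the conditions \Cref{eq:formula-success-probability} and \Cref{eq:other-expression-dont-even-know-what-to-call-this} in the theorem statement — the decoded strategy still passes $\game_{\mathrm{F}}$ above the threshold $\max\{O(hn')+3d_1,h+d_2\}/q$ of \Cref{lem:polynomial-means-good}, so $\psi$ is satisfiable, i.e.\ $\calC_{\mathrm{inst}}$ encodes a satisfiable formula.

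\textbf{Main obstacle.} The hard part is the soundness decoding: shepherding the $\poly(\eps)$ error through five self-tests and four consistency checks while keeping every measurement projective — each reduction fed into the next (above all \Cref{thm:big-transfer}) insists on a \emph{projective} polynomial measurement, so repeated Naimark and partial-Naimark dilations are needed that must not disturb the $Z$-register structure — and, most of all, the step that collapses the three a priori unrelated superregister polynomials $\hat g_1,\hat g_2,\hat g_3$ onto the low-degree encoding $g$ coming from register $1$, since $\mathrm{sat}_{\psi,\nu}$ is only meaningful when the three evaluations come from one assignment. The superregister bookkeeping for $\game_{\mathrm{LDSUP}}$ — transferring planar simultaneous-low-degree soundness to a subroutine run on a concatenation of four heterogeneous registers of sizes $m,m,m,3+s$ — is the other spot that needs care.
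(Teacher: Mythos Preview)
Your overall architecture matches the paper's exactly: apply the planar introspective low-degree test to extract $G$, transfer $G$ to registers~2 and~3 via the two intersecting-lines tests, apply the simultaneous introspective low-degree test on the superregister to extract $J$ (your $G'$), use the four consistency tests to glue, and finish with \Cref{lem:polynomial-means-good}. The one substantive divergence is the gluing step. The paper does not argue that the $\hat g_i$ collapse onto $g$ via a direct Schwartz--Zippel comparison; instead it invokes \Cref{fact:low-degree-sandwich} as a black box, building the single measurement
\[
K_{g,c_1,\dots,c_{m'}}\;:=\;G_g\cdot J_{c_1,\dots,c_{m'}}\cdot G_g
\]
and using projectivity of $G$ so that the three copies of $G$ in the sandwich automatically force a common $g$ for all three $\nu_i$. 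Your Schwartz--Zippel route is also sound---it is essentially what \Cref{fact:low-degree-sandwich} proves internally---but it does not by itself hand you a measurement of the shape \eqref{eq:weird-form}; you still have to name one. Restricting $J$ to outputs $(\hat g_1,\hat g_2,\hat g_3,c)$ with $\hat g_i=g\circ\pi_i$ for a common $g$ and then repairing via \Cref{fact:sub-zero} works and is presumably what you intend, but that step should be made explicit. The sandwich buys modularity and avoids this bookkeeping; your version is more hands-on but no less correct once the restricted measurement is written down. Everything else---invoking \Cref{lem:proj-suffices} up front (which the paper leaves implicit), the Naimark dilations to make $G$ and $J$ projective, the replacement of $\approx$ by equality via \Cref{fact:approx-delta-game-value}, the completeness strategy, and the complexity bounds---lines up with the paper.
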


\begin{proof}
The question lengths and question runtimes are both $O(1)$ because all involved subtests have $O(1)$ question complexity.
The answer lengths and question runtimes are both $\poly(2^{s_{\mathrm{inst}}})$
because all our parameters are at most polynomial in $n = 2^{s_{\mathrm{inst}}}$,
and the question lengths and question runtimes of all involved subtests are polynomial in these parameters.

We name the measurements used by the provers as follows.
\begin{equation*}
\mathsf{Points}_1: A,\quad
\mathsf{Points}_2: B,\quad
\mathsf{Points}_3: C,\quad
\mathsf{Formula}: F.
\end{equation*}
We will write the identity matrix on registers~$5$ and~$6$ as $I_{\reg{5, 6}} := I_{\reg{5}} \otimes I_{\reg{6}}$.

\paragraph{Completeness.}
Suppose $\calC_{\mathrm{inst}}$ encodes a satisfiable formula.
By \Cref{prop:formula-game-completeness}, 
there are degree-$d$ polynomials $g:\F_q^m \rightarrow \F_q$ and $c_1, \ldots, c_{m'}:\F_q^{m'} \rightarrow \F_q$ such that if we define
\begin{equation*}
F_{u, b, w, \nu, \mu} =  \tau^Z_{u_1} \otimes \tau^Z_{u_2} \otimes \tau^Z_{u_3} \otimes \tau^Z_{b, w} \otimes I_{\reg{5,6}} \cdot \bone[\nu_i = g(u_i), \mu_j = c_j(u, b, w)],
\end{equation*}
then this strategy passes the formula test with probability~$1$.
We extend this strategy to the remaining measurements as follows.
\begin{equation*}
A_{u_1, \nu_1}
= \tau^Z_{u_1} \otimes I_{\reg{2}} \otimes I_{\reg{3}} \otimes I_{\reg{4}}\otimes I_{\reg{5,6}}\cdot \bone[\nu_1 = g(u_1)],
\end{equation*}
\begin{equation*}
B_{u_2, \nu_2}
= I_{\reg{1}} \otimes \tau^Z_{u_2} \otimes I_{\reg{3}} \otimes I_{\reg{4}} \otimes I_{\reg{5,6}} \cdot\bone[\nu_2 = g(u_2)],
\end{equation*}
\begin{equation*}
C_{u_3, \nu_3}
= I_{\reg{1}} \otimes I_{\reg{2}} \otimes \tau^Z_{u_3} \otimes I_{\reg{4}} \otimes I_{\reg{5, 6}}\cdot \bone[\nu_3 = g(u_3)].
\end{equation*}
By the completeness of the introspective low-degree and intersecting lines tests, these can be extended to a strategy which passes the whole test with probability~$1$.

\paragraph{Soundness.}
Throughout this proof, we use $\delta(\eps)$ to represent functions of the form
\begin{equation*}
\delta(\eps) = \poly(\eps, m \cdot d/q^e),
\end{equation*}
where $e > 0$ is an absolute constant.

\paragraph{Low-degree tests.}
The strategy passes the introspective low-degree test with probability $1-\delta(\eps)$.
Applying~\Cref{thm:big-planes-low-degree}, there is a measurement $G = \{G_g\}$ in $\mathrm{PolyMeas}(m, d, q)$ such that
\begin{equation*}
(A_{u_1, \nu_1})_{\reg{Alice}} \otimes I_{\reg{Bob}}
\approx_{\delta(\eps)}
\left(\tau^Z_{u_1} \otimes I_{\reg{2}} \otimes I_{\reg{3}} \otimes I_{\reg{4}} \otimes I_{\reg{5,6}} \otimes (G_{[g(u_1) = \nu_1]})_{\reg{aux}}\right)_{\reg{Alice}} \otimes I_{\reg{Bob}}.
\end{equation*}
By \Cref{fact:approx-delta-game-value}, we can assume this holds with equality
with a loss of only $\delta(\eps)$ in the game value.
In addition, by \Cref{thm:naimark}, we can assume that the~$G$ measurements are all projective.

Next, the strategy passes the two introspective intersecting lines tests with probability $1-\delta(\eps)$ each.
By~\Cref{thm:big-transfer}, this implies that
\begin{equation}\label{eq:whatevs}
(B_{u_2, \nu_2})_{\reg{Alice}} \otimes I_{\reg{Bob}}
\approx_{\delta(\eps)}
\left(I_{\reg{1}} \otimes \tau^Z_{u_2} \otimes I_{\reg{3}} \otimes I_{\reg{4}} \otimes I_{\reg{5,6}} \otimes (G_{[g(u_2) = \nu_2]})_{\reg{aux}}\right)_{\reg{Alice}} \otimes I_{\reg{Bob}},
\end{equation}
\begin{equation}\label{eq:whatevs-dos}
(C_{u_3, \nu_3})_{\reg{Alice}} \otimes I_{\reg{Bob}}
\approx_{\delta(\eps)}
\left(I_{\reg{1}} \otimes I_{\reg{2}} \otimes \tau^Z_{u_3} \otimes I_{\reg{4}} \otimes I_{\reg{5,6}} \otimes (G_{[g(u_3) = \nu_3]})_{\reg{aux}}\right)_{\reg{Alice}} \otimes I_{\reg{Bob}}.
\end{equation}

Similarly, the strategy passes the introspective simultaneous low-degree test with probability $1-\delta(\eps)$.
Applying~\Cref{thm:big-simultaneous-planes-low-degree}, there is a measurement $J = \{J_{f_1, f_2, f_3, c_1, \ldots, c_{m'}}\}$ in $\mathrm{PolyMeas}(m',d, q, 3+m')$ such that
\begin{multline}\label{eq:whatevs-tres}
(F_{u, b, w, \nu, \mu})_{\reg{Alice}} \otimes I_{\reg{Bob}}\\
\approx_{\delta(\eps)}
\left(\tau^Z_{u_1} \otimes \tau^Z_{u_2} \otimes \tau^Z_{u_3} \otimes \tau^Z_{b, w}
	\otimes I_{\reg{5,6}} \otimes (J_{[f_i(u, b, w) = \nu_i, c_j(u, b, w) = \mu_j]})_{\reg{aux}}\right)_{\reg{Alice}} \otimes I_{\reg{Bob}},
\end{multline}
where the subscript of the~$J$ measurement ranges over all $i \in [3]$ and $j \in [m']$.
By \Cref{fact:approx-delta-game-value}, we can assume \Cref{eq:whatevs,eq:whatevs-dos,eq:whatevs-tres} holds with equality
with a loss of only $\delta(\eps)$ in the game value.
In addition, by \Cref{thm:naimark}, we can assume that the~$J$ measurements are all projective.

\paragraph{Consistency tests.}
The strategy passes the four consistency tests with probability $1-\delta(\eps)$ each, implying
\begin{equation*}
(F_{u_1, \nu_1})_{\reg{Alice}} \otimes I_{\reg{Bob}} \simeq_{\delta(\eps)} I_{\reg{Alice}} \otimes (A_{u_1, \nu_1})_{\reg{Bob}},
\end{equation*}
\begin{equation*}
(F_{u_2, \nu_2})_{\reg{Alice}} \otimes I_{\reg{Bob}} \simeq_{\delta(\eps)} I_{\reg{Alice}} \otimes (B_{u_2, \nu_2})_{\reg{Bob}},
\end{equation*}
\begin{equation*}
(F_{u_3, \nu_3})_{\reg{Alice}} \otimes I_{\reg{Bob}} \simeq_{\delta(\eps)} I_{\reg{Alice}} \otimes (C_{u_3, \nu_3})_{\reg{Bob}},
\end{equation*}
\begin{equation*}
(F_{u, b, w, \nu, \mu})_{\reg{Alice}} \otimes I_{\reg{Bob}} \simeq_{\delta(\eps)} I_{\reg{Alice}} \otimes (F_{u, b, w, \nu, \mu})_{\reg{Bob}}.
\end{equation*}
By introspection~(\Cref{fact:introspective-consistency}), these imply the following statements:
\begin{equation*}
(J_{[f_1(u, b, w) = \nu_1]})_{\reg{Alice}} \otimes I_{\reg{Bob}} \simeq_{\delta(\eps)} I_{\reg{Alice}} \otimes (G_{[g(u_1) = \nu_1]}),
\end{equation*}
\begin{equation*}
(J_{[f_2(u, b, w) = \nu_2]})_{\reg{Alice}} \otimes I_{\reg{Bob}} \simeq_{\delta(\eps)} I_{\reg{Alice}} \otimes (G_{[g(u_2) = \nu_2]}),
\end{equation*}
\begin{equation*}
(J_{[f_3(u, b, w) = \nu_3]})_{\reg{Alice}} \otimes I_{\reg{Bob}} \simeq_{\delta(\eps)} I_{\reg{Alice}} \otimes (G_{[g(u_3) = \nu_3]}),
\end{equation*}
\begin{equation*}
(J_{[c_j(u, b, w) = \mu_j]})_{\reg{Alice}} \otimes I_{\reg{Bob}} \simeq_{\delta(\eps)} I_{\reg{Alice}} \otimes (J_{[c_j(u, b, w) = \mu_j]}),
\end{equation*}
where the subscript of the~$J$ measurement ranges over all $i \in [3]$ and $j \in [m']$.
Here, these statements are with respect to the strategy's auxiliary state and to the uniform distribution on $(\bu, \bb, \bw) \in \F_q^{m'}$.

Now we apply \Cref{fact:low-degree-sandwich}.
To do so, let us specify the sets $\calG_i$ and the distance parameter.
The three sets $\calG_2$, $\calG_3$, and $\calG_4$ will just contain all degree-$d$ polynomials $g:\F_q^{m'} \rightarrow \F_q$.
(Note that we can view the outputs of $G_g$ as degree-$d$ polynomials which disregard all of their input $(u, b, w)$ aside from one of the three strings $u_1$, $u_2$, or $u_3$.)
By Schwarz-Zippel, these have distance at least $1-d/q$.
The remaining set, $\calG_1$, is defined as follows:
for each tuple of degree-$d$ polynomials $c_1, \ldots, c_{m'}$,
it contains a function~$c$ defined as $c(u, b, w) = (c_1(u, b, w), \ldots, c_{m'}(u, b, w))$.
Any two nonequal $c, c' \in \calG_1$ have some coordinate~$i$ in which $c_i \neq c_i'$,
and on this coordinate alone they will have distance at least $1-d/q$ by Schwarz-Zippel.
Thus, $c$ and $c'$ have distance at least $1-d/q$.

Define the measurement $\{K_{g, c_1, \ldots, c_{m'}}\}$ as 
\begin{equation*}
K_{g, c_1, \ldots, c_{m'}} := G_g \cdot J_{c_1, \ldots, c_{m'}} \cdot G_g,
\end{equation*}
Then \Cref{fact:low-degree-sandwich} implies that
\begin{equation*}
(J_{[f_i(u, b, w) = \nu_i, c_j(u, b, w) = \mu_j]})_{\reg{Alice}} \otimes I_{\reg{Bob}} \simeq_{\delta(\eps)} I_{\reg{Alice}} \otimes (K_{[g(u_i) = \nu_i, c_j'(u, b, w) = \mu_j]}),
\end{equation*}
where the subscripts range over all $i \in [3]$ and $j \in [m']$.
By introspection~(\Cref{fact:introspective-consistency}), this implies that
\begin{multline}\label{eq:cant-believe-it-but-we-did-it}
(F_{u, b, w, \nu, \mu})_{\reg{Alice}} \otimes I_{\reg{Bob}}\\
\simeq_{\delta(\eps)}
\left(\tau^Z_{u_1} \otimes \tau^Z_{u_2} \otimes \tau^Z_{u_3} \otimes \tau^Z_{b, w}
	\otimes I_{\reg{5,6}} \otimes (K_{[g(u_i) = \nu_i, c_j(u, b, w) = \mu_j]})_{\reg{aux}}\right)_{\reg{Alice}} \otimes I_{\reg{Bob}},
\end{multline}
where the subscripts range over all $i \in [3]$ and $j \in [m']$.
By \Cref{fact:approx-delta-game-value}, we can assume \Cref{eq:cant-believe-it-but-we-did-it} holds with equality
with a loss of only $\delta(\eps)$ in the game value.

\paragraph{Formula test:}
At this point, the formula prover's strategy~$F$ satisfies the condition in \Cref{eq:weird-form} with $d_1 = d_2 = d$.
In addition, it passes the introspective formula test with probability
\begin{equation}\label{eq:formula-success-probability}
1-\poly(\eps, m \cdot d/q),
\end{equation}
which by our setting of parameters is at least $\tfrac{1}{2}$.
Finally, our setting of parameters also implies that
\begin{equation}\label{eq:other-expression-dont-even-know-what-to-call-this}
\frac{\max\{O(h n') + 3d, h + d\}}{q}
\end{equation}
is less than $\tfrac{1}{2}$.
As a result, we can apply \Cref{lem:polynomial-means-good} to conclude that~$\psi$ is satisfiable.
\end{proof}

\Cref{thm:main-result-of-this-part} only proves soundness of the introspective $\neexp$ protocol against $\lambda$-register strategies.
Our last step is to compile this protocol into one which is sound against \emph{general} strategies,
while only slightly increasing the question length.

\begin{corollary}\label{cor:succinct-sat-protocol-with-big-answer-size}
There is an absolute constant $\eps > 0$ such that the following is true.
Let $\calC_{\mathrm{inst}}$ be a size-$(s_{\mathrm{inst}})$ $\succinctsquared$ instance.
Then there exists a game $\game := \game_{\mathrm{Intro}\neexp}(\calC_{\mathrm{inst}})$ with the following properties.
\begin{itemize}
\item[$\circ$] \textbf{Completeness:}
		Suppose $\calC_{\mathrm{inst}}$ encodes a satisfiable formula.
		Then there is a value-$1$ real commuting EPR strategy for $\game$.
\item[$\circ$] \textbf{Soundness:}
		If there is a strategy for $\game$ with value at least $1-\eps$,
		then $\calC_{\mathrm{inst}}$ encodes a satisfiable formula.
\end{itemize}
Furthermore,
\begin{equation*}
\qlength{\game} = O(s_{\mathrm{inst}}), \quad
\alength{\game} = \poly(2^{s_{\mathrm{Inst}}}),
\end{equation*}
\begin{equation*}
\qtime{\game} = O(s_{\mathrm{inst}}), \quad
\atime{\game} = \poly(2^{s_{\mathrm{inst}}}).
\end{equation*}
\end{corollary}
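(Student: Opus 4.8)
The plan is to build $\game$ by applying the uniform register compiler of \Cref{thm:uniform-registers} to the introspective $\neexp$ game $\game_{\mathrm{Intro}\neexp}(\cdot)$ of \Cref{thm:main-result-of-this-part}. The first step is to exhibit a register-parameters Turing machine $M_{\mathrm{Params}}$ in the sense of \Cref{def:reg-params-generator}: on input $\calC_{\mathrm{inst}}$ of size $s_{\mathrm{inst}}$ it computes $N = \mathrm{PadN}(\calC_{\mathrm{inst}}) = 2^{s_{\mathrm{inst}}}$, sets $n = N$ and $s = \mathrm{PadS}(\calC_{\mathrm{inst}}) = 4N$, derives $h = \Theta(n)$, $m = \Theta(n/\log n)$ (with $h^m = N$ exactly), $q = \poly(n)$, $d = O(hn') = O(n^2)$, and $m' = m+3+s$ exactly as prescribed in \Cref{def:the-protocol-to-end-all-protocols}, and outputs $\lambda = (6, \ell, q)$ with $\ell = (m,m,m,3+s,m',m')$. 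Since $N = 2^{s_{\mathrm{inst}}}$ is trivially computable from $s_{\mathrm{inst}}$ and each of $h,m,q,d,m'$ is a simple arithmetic function of $N$ whose binary representation has $O(s_{\mathrm{inst}})$ bits, $M_{\mathrm{Params}}$ runs in time $O(s_{\mathrm{inst}})$ (in any case $\poly(s_{\mathrm{inst}})$ suffices for the theorem). By \Cref{thm:main-result-of-this-part}, $\game_{\mathrm{Intro}\neexp}(\calC_{\mathrm{inst}})$ is a $\lambda$-register game for exactly this $\lambda$, so $M_{\mathrm{Params}}$ outputs its register parameters.

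Next I fix the auxiliary quantities needed to invoke the compiler. Let $\eps_0$ be the soundness constant of \Cref{thm:main-result-of-this-part} and take $q = \poly(n)$ large enough that the hypotheses of that theorem hold (so that \Cref{eq:formula-success-probability} is at least $\tfrac12$ and \Cref{eq:other-expression-dont-even-know-what-to-call-this} is less than $\tfrac12$). For each of the six registers set $\eta_i := \tfrac13$; because $n = 2^{s_{\mathrm{inst}}}$ grows while $q = \poly(n)$, the Pauli basis condition $q_i = 2^{t_i}$, $1/\poly(n) \le \eta_i \le \tfrac12$, $64\log(n)^2/\eta_i^2 \le q \le \poly(n)$ holds for every $i$ once $s_{\mathrm{inst}}$ exceeds an absolute constant (the finitely many smaller instances are decided directly and hardwired into the protocol). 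Applying \Cref{thm:uniform-registers} to $\game_{\mathrm{Intro}\neexp}(\cdot)$ and $M_{\mathrm{Params}}$ yields the compiled uniform game, which we take to be $\game$.

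It remains to transfer completeness, soundness, and the resource bounds. For completeness, when $\calC_{\mathrm{inst}}$ encodes a satisfiable formula \Cref{thm:main-result-of-this-part} gives a value-$1$ $\lambda$-register strategy with no auxiliary register; inspecting its construction — it is assembled from the honest strategies of $\game_{\mathrm{LD}},\game_{\mathrm{IL1}},\game_{\mathrm{IL2}},\game_{\mathrm{LDSUP}},\game_{\mathrm{F}}$, which are real commuting EPR strategies by the completeness parts of \Cref{thm:big-planes-low-degree,thm:big-transfer,thm:big-simultaneous-planes-low-degree} and \Cref{prop:formula-game-completeness} (themselves resting on \Cref{thm:big-planes,thm:ihide}) — one checks the combined strategy is again a real commuting EPR strategy, so the completeness part of \Cref{thm:uniform-registers} produces a value-$1$ real commuting EPR strategy for $\game$. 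For soundness, suppose a strategy for $\game$ has value at least $1-\eps$ for a small constant $\eps$ to be chosen. The soundness part of \Cref{thm:uniform-registers} then gives $\valreg{\lambda}{\game_{\mathrm{Intro}\neexp}(\calC_{\mathrm{inst}})} \ge 1-\delta(\eps)$ with $\delta(\eps) = \poly(\eps,\eta_1,\dots,\eta_6) = \poly(\eps)$, since the $\eta_i$ are absolute constants; fixing $\eps$ small enough that $\delta(\eps) < \eps_0$, \Cref{thm:main-result-of-this-part} forces $\calC_{\mathrm{inst}}$ to encode a satisfiable formula, and we declare this $\eps$ to be the constant in the statement. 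Finally, the ``Furthermore'' clause of \Cref{thm:uniform-registers} together with the bounds of \Cref{thm:main-result-of-this-part} gives $\qlength{\game} = O(1) + \sum_{i=1}^{6} O(\log n_i) = O(s_{\mathrm{inst}})$ and $\qtime{\game} = O(1) + \sum_i O(\log n_i) + \mathsf{time}(M_{\mathrm{Params}}) = O(s_{\mathrm{inst}})$ (each $n_i \le \poly(2^{s_{\mathrm{inst}}})$ so $\log n_i = O(s_{\mathrm{inst}})$), while $\alength{\game} = \poly(2^{s_{\mathrm{inst}}}) + \sum_i O(n_i \log\log n_i) = \poly(2^{s_{\mathrm{inst}}})$ and $\atime{\game} = \poly(2^{s_{\mathrm{inst}}}) + \sum_i \poly(n_i) = \poly(2^{s_{\mathrm{inst}}})$.

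The step I expect to take the most care is the completeness verification: checking that the honest $\lambda$-register strategy of $\game_{\mathrm{Intro}\neexp}$ really is a \emph{real commuting} EPR strategy, i.e. that every pair of honest measurement operators posed together anywhere in the protocol commutes — in particular that the subspace projectors $\Pi^v_s$, the partial-$X$ measurements $\tau^X_{[\forall i,\, u\cdot v_i = a_{1,i}]}$ used in the data-hiding subroutines, and the $\tau^Z$ point/surface/formula measurements never fail to commute when they co-occur, and that this commutation is preserved when the subroutines are glued together and played on the superregisters of \Cref{def:the-protocol-to-end-all-protocols}. This commutation is established subroutine-by-subroutine in the completeness proofs cited above, so the remaining work is bookkeeping; everything else (efficiency of $M_{\mathrm{Params}}$, the Pauli basis condition, the constant-to-constant soundness amplification through a $\poly(\eps)$ loss, and the resource arithmetic) is routine.
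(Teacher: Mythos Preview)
Your overall approach matches the paper's: compute the register parameters efficiently, apply the uniform register compiler of \Cref{thm:uniform-registers} to the introspective $\neexp$ game of \Cref{thm:main-result-of-this-part}, and read off the completeness, soundness, and resource bounds.

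There is, however, a genuine gap in your soundness argument. You set $\eta_i = 1/3$ and then assert that $\delta(\eps) = \poly(\eps,\eta_1,\dots,\eta_6) = \poly(\eps)$ ``since the $\eta_i$ are absolute constants,'' concluding that $\delta(\eps) < \eps_0$ for $\eps$ small enough. But the robustness polynomial coming out of the compiler (ultimately from the quantum low-degree test, \Cref{thm:anand-thomas-low-degree} via \Cref{thm:basis-test}) contains monomials in $\eta$ alone with no $\eps$ factor --- these arise from the $md/q^c$ error term of the low-degree test, which is bounded by $\eta$ independently of $\eps$. Hence $\delta(0) = P(0,1/3,\dots,1/3)$ is in general a positive constant, and there is no reason it should lie below the fixed soundness threshold $\eps_0$ of \Cref{thm:main-result-of-this-part}. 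Shrinking $\eps$ alone cannot help. The paper avoids this by taking $\eta_i = 1/\poly(n)$ (with $n = 2^{s_{\mathrm{inst}}}$), so that the $\eta$-contribution to $\delta$ is driven below any fixed threshold, and \emph{then} choosing the absolute constant $\eps'$ to control the $\eps$-contribution. Your argument is easily repaired: either follow the paper and let $\eta_i$ decay with the register size, or first fix a sufficiently small absolute constant $\eta^*$ (chosen after the polynomial and $\eps_0$ are known) so that $P(0,\eta^*,\dots,\eta^*) < \eps_0/2$, and only then pick $\eps$.

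On completeness you are, if anything, being more careful than the paper's terse proof: the hypothesis of \Cref{thm:uniform-registers} does require the value-$1$ register strategy to be real commuting EPR, and while \Cref{thm:main-result-of-this-part} only states ``no auxiliary register,'' the real-commuting-EPR property indeed follows by assembling the completeness clauses of the subroutines you cite. The resource arithmetic and the register-parameters machine are correct and match the paper.
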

\begin{proof}
Let $n = \mathrm{PadN}(\calC_{\mathrm{inst}})$ and $s = \mathrm{PadS}(\calC_{\mathrm{inst}})$.
Set $m = \Theta(n/\log(n))$ and $q = \poly(n)$,
as in \Cref{def:the-protocol-to-end-all-protocols}.
Set $\lambda = (6, \ell, q)$,
where $\ell = (m, m, m, 3+s, 3m + 3 + s, 3m + 3 + s)$.
Then $\game_{\mathrm{Intro}\neexp}(\calC_{\mathrm{inst}})$
is a $\lambda$-register game.
Furthermore, by \Cref{def:register-parameters},
the register parameters are computable in time $\poly(s_{\mathrm{inst}})$.

Let $\eps >0$ be as in \Cref{thm:main-result-of-this-part}, and select a constant $\eps' > 0$ and $\eta_1, \ldots, \eta_{6} = 1/\poly(n)$
such that $\delta(\eps') \leq \eps$, where $\delta(\eps') = \poly(\eps', \eta_1, \ldots, \eta_{6})$ is as in \Cref{thm:uniform-registers}.
Now, if we apply \Cref{thm:uniform-registers}, it gives us a game~$\game$ with the following properties.
\begin{itemize}
\item[$\circ$] If $\calC$ is a ``Yes" instance, then $\game_{\mathrm{Intro}\neexp}(\calC_{\mathrm{inst}})$ has a value-$1$ strategy with no auxiliary state, which implies that $\game$ has a value-$1$ commuting EPR strategy.
\item[$\circ$] If $\calC$ is a ``No" instance, then every $\lambda$-register strategy for
		$\game_{\mathrm{Intro}\neexp}(\calC_{\mathrm{inst}})$ has value less than $1-\eps$.
		By our choice of parameters, this is less than $1-\delta(\eps')$, which implies that $\game$ has no strategy with value $1-\eps'$.
\end{itemize}
Furthermore, $\log(m) = O(\log(n)) = O(s_{\mathrm{inst}})$ and $\log(s) = O(\log(n)) = O(s_{\mathrm{inst}})$, giving us our desired question complexities,
and $\poly(m) = \poly(n) = \poly(2^{s_{\mathrm{inst}}})$ and $\poly(s) = \poly(n) = \poly(2^{s_\mathrm{inst}})$, giving us our desired answer complexities.
\end{proof}


\part{Answer reduction}

\label{part:answer}


\section{Testing error-correcting codes}\label{sec:error}

In \Cref{sec:answer-reduction} below,
rather than the prover sending the verifier their entire ``large" answer~$a$,
they will instead encode it into $\codee(a)$ using an error correcting code
and allow the verifier to query individual bits of the encoding.
(The fact that the verifier is allowed to query bits of the encoding rather than the original string
stems from the PCPP technology we use. See \Cref{sec:composing-with-an-error-correcting-code} for more details.)
In this section, we develop the tests which verify that provers are performing this task honestly,
so that when we query a subset of the bits~$I$,
they respond based on the bits of a codeword which was sampled independently of~$I$.
We will develop such a test for the low-degree code (\Cref{sec:testing-low-deg}).

Our proofs are entirely standard: we start with the known property tester for this code (i.e..\ \Cref{thm:anand-thomas-classical-low-degree}),
which allows us to query the prover's codeword at a uniformly random location.
Then we use the local decodability properties of this code to allow us to query arbitrary subsets of coordinates.
We begin by stating a slightly nonstandard definition of error-correcting codes relevant to our application.

\begin{definition}[Error-correcting codes]\label{def:error-correcting-code}
Let $m$ and $q$ be integers,
and let $\eta \in [0, 1]$.
An \emph{$(n, m, q, \eta)$-error-correcting code} $\codec = (\codee,\coded,\codes)$ is defined as follows.
\begin{itemize}
\item[$\circ$] $\codes$ is a subset of $\F_q^m$ such that for each $x \neq y \in \codes$, $x$ and $y$ have normalized Hamming agreement at most~$\eta$
		(i.e.\ the probability, over a uniformly random $\bi \in [m]$, that $x_{\bi} = y_{\bi}$ is at most $\eta$).
\item[$\circ$] $\codee:\{0, 1\}^n \rightarrow \codes \subseteq \F_q^m$ is the \emph{encoding map}.
\item[$\circ$] $\coded: \F_q^m \rightarrow \{0, 1\}^n \cup \{\bot\}$ is the \emph{decoding map}.
		For each $x \in \{0, 1\}^n$, $\coded(\codee(x)) = x$.
		In addition, for every~$w$ not in the range of~$\codee$, $\coded(w) = \bot$.
\end{itemize}
\end{definition}

\begin{remark}
The purpose of the subset $\codes$ is this:
in this section, we are designing games which test that a prover responds according to an error-correcting code.
This means that the prover should respond based on the encoding $\codee(x)$ of some string~$x \in \{0, 1\}^n$.
However, the games we design may only be able to test if the prover responds based on a string in~$\codes$,
which contains the encodings $\codee(x)$ but may include other strings as well.
This definition ensures that these other strings are still far from each other in Hamming distance.
\end{remark}

The next definition defines a subset tester.

\begin{definition}\label{def:testing-a-code}
Let $\codec = (\codee, \coded, \codes)$ be an $(n, m, q, \eta)$-error-correcting code.
Let~$k$ be an integer.
Given a game~$\game(\cdot)$  whose inputs are from the set of subsets of~$[m]$ of size~$k$
and a probability distribution~$\calD$ over this set,
we write $\game(\calD)$ for the game in which we first sample $\bI \sim \calD$ and then play~$\game(\bI)$.
Then $\game$ is a \emph{$k$-subset tester with robustness $\delta(\eps)$} for $\codec$ if it satisfies the following two properties.
\begin{itemize}
\item[$\circ$] \textbf{Completeness:}
Let $(\psi, M)$ be an EPR strategy in which $\{M_w\}$ is a measurement with outcomes in~$\{0, 1\}^n$.
Consider the partial strategy $(\psi, G)$ in which
\begin{equation*}
	G^I_{a_1, \ldots, a_k} := M_{[\codee(w)|_I = a_1, \ldots, a_k]}.
\end{equation*}
Then this can be extended to a (full) real commuting EPR strategy which, for each~$I$, passes $\game(I)$ with probability~$1$.
\item[$\circ$] \textbf{Soundness:} For any distribution~$\calD$, consider a strategy $(\psi, M)$ which passes $\game(\calD)$ with probability $1-\eps$.
Then there exists a measurement $\{G_w\}_{w}$ with outcomes~$w$ in~$\codes$ such that
\begin{equation*}
	M^I_{a_1, \ldots, a_k} \otimes I_{\reg{Bob}} \consistency_{\delta(\eps)} I_{\reg{Alice}} \otimes G_{[w|_I = a_1, \ldots, a_k]}.
\end{equation*}
\end{itemize}
\end{definition}

\subsection{Testing the low-degree code}\label{sec:testing-low-deg}

In this section, we show how to test the low-degree code.
This is essentially an exercise in generalizing \Cref{thm:anand-thomas-classical-low-degree} to arbitrary subsets.
We begin with some notation.

\begin{notation}
We write $\mathcal{F}_{q, k}^{m}$ for the family
$\displaystyle
\mathcal{F}^m_{q, k} = \{F \subseteq \F_q^m \mid |F| \leq k\}
$.
\end{notation}

Now we define the low-degree code tester.

\begin{definition}
Let $m$, $q$, and~$d$ be integers.  Let~$k$ be an integer, and let~$F$ be an element of~$\mathcal{F}^m_{q, k}$.
Then $\game_{\mathrm{LDsubset}}(m, q, d, F)$ is the game defined in \Cref{fig:classical-subset}.
\end{definition}

{
\floatstyle{boxed} 
\restylefloat{figure}
\begin{figure}
With probability~$\tfrac{1}{2}$ each, perform one of the following two tests.
\begin{enumerate}
	\item \textbf{Low-degree:} Perform $\game_{\mathrm{Surface}}(m, d, q,2)$. \label{item:low-deg-point}
	\item \textbf{Cross-check:}
			Flip an unbiased coin $\bb \sim \{0, 1\}$.
			Let~$\bs$ be a uniformly random subspace of dimension $k+1$ containing the points in~$F$.
			 With probability $\tfrac{1}{2}$ each:
			\begin{enumerate}
			\item Let $\bw$ be a uniformly random point in~$\bs$. Distribute the questions as follows:\label{item:bootstrapping-to-subspaces}
				\begin{itemize}
				\item[$\circ$] Player~$\bb$: give $\bw$; receive a value $\by \in \F_q$.
				\item[$\circ$] Player~$\overline{\bb}$: give $\bs$; receive a degree-$d$ polynomial $\bg:\bs \rightarrow \F_q$.
				\end{itemize}
				Accept if $\bg(\bw) = \by$.
			\item Distribute the questions as follows: \label{item:boosting-to-subsets}
				\begin{itemize}
				\item[$\circ$] Player~$\bb$: give $\bs$; receive a degree-$d$ polynomial $\bg:\bs \rightarrow \F_q$.
				\item[$\circ$] Player~$\overline{\bb}$: give $F$; receive a function $\boldf:\bF \rightarrow \F_q$.
				\end{itemize}
				Accept if $\bg|_{F} = \boldf$.
			\end{enumerate}
	\end{enumerate}
	\caption{The game $\game_{\mathrm{LDsubset}}(m, q, d, F)$. \label{fig:classical-subset}}
\end{figure}
}

The performance of the low-degree subset game is given by the following theorem.

\begin{theorem}\label{thm:low-degree-code-tester}
Consider low-degree parameters $\params = (n, q, h, H, m, \calS, \pi)$. Set $d = m(h-1)$.
Set $m' = q^m$.  We will identify strings in $\F_q^{m'}$ with functions $g:\F_q^m \rightarrow \F_q$.
Given $a \in \{0, 1\}^n$, define $\codee(a) = g_a$ and $\coded(g_a) = a$.
For all other $g:\F_q^m \rightarrow \F_q$ (i.e.\ those which are not the low-degree encoding of a string~$a$),
define $\coded(g) = \bot$.
Finally, define $\codes$ to be the set of degree~$d$ polynomials~$g:\F_q^m \rightarrow\F_q$.
Then $\codec = (\codee, \coded, \codes)$ is an $(n, m', q, d/q)$-error-correcting code.

Furthermore, there exists a constant $c > 0$ and a function $\delta(\eps) = \poly(\eps, dm/q^c)$ such that the following holds.
Let~$k$ be an integer.
Then $\game_{\mathrm{LDsubset}} := \game_{\mathrm{LDsubset}}(m, q, d, \cdot)$ is a $k$-subset tester for $\mathrm{LDCode}$ with robustness $\delta(\eps)$.

Finally,
\begin{equation*}
\qtime{\game_{\mathrm{LDsubset}}} = \poly(m, k,\log q),
\quad
\atime{\game_{\mathrm{LDsubset}}} = \poly(m, d^k, \log q),
\end{equation*}
\begin{equation*}
\qlength{\game_{\mathrm{LDsubset}}} = O(k m \log q),
\quad
\alength{\game_{\mathrm{LDsubset}}} = O(d^k \log (q)).
\end{equation*}
\end{theorem}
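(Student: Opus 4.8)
The plan is to prove \Cref{thm:low-degree-code-tester} in three parts: first verify that $\codec=(\codee,\coded,\codes)$ is a valid $(n,m',q,d/q)$-error-correcting code, then establish completeness of the $k$-subset tester, and finally establish soundness, which is the substantive part. The error-correcting code claim is immediate from the Schwartz-Zippel lemma (\Cref{lem:schwartz-zippel}): two distinct degree-$d$ polynomials on $\F_q^m$ agree on at most a $d/q$ fraction of $\F_q^m=\F_q^{m'}$, so elements of $\codes$ have normalized Hamming agreement at most $d/q=\eta$; the encoding and decoding maps satisfy the required identities by the properties of the low-degree encoding recalled earlier in the excerpt, with $\coded(g_a)=a$ and $\coded(g)=\bot$ for non-codewords by definition. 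The communication and time complexities follow by inspection of \Cref{fig:classical-subset}: a subspace of dimension $k+1$ is specified by $O(k\, m \log q)$ bits, a degree-$d$ polynomial on such a subspace has $\binom{d+k+1}{k+1}=O(d^k)$ coefficients (using $k$ constant), and all operations — sampling subspaces through $F$, restricting polynomials, evaluating — are polynomial-time.

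For completeness, I would take the honest EPR strategy $(\psi,M)$ with $\ket\psi=\ket{\mathrm{EPR}_q^n}$ and $\{M_w\}$ the $Z$-basis measurement producing $w\in\{0,1\}^n$, and extend it: on a subspace query $\bs$ the prover measures $w$, computes the codeword $g_w=\codee(w)$, and returns $g_w|_{\bs}$; on a point query $\bw$ it returns $g_w(\bw)$; on the set query $F$ it returns $g_w|_F$. Since both provers measure the same half of the same EPR state in the same basis and then apply a fixed deterministic post-processing, \Cref{fact:heh-heh-heh-gonna-make-anand-prove-this-so-i-can-take-the-day-off} gives perfect consistency in each of the cross-checks, and the low-degree test \Cref{item:low-deg-point} passes with probability $1$ because $g_w$ is degree $d$. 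The resulting strategy is a real commuting EPR strategy because all measurements are functions of a common $Z$-basis measurement on the EPR register together with the identity. This matches $G^I_{a_1,\dots,a_k}=M_{[\codee(w)|_I=a_1,\dots,a_k]}$ as required by \Cref{def:testing-a-code}.

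For soundness, suppose $(\psi,M)$ passes $\game_{\mathrm{LDsubset}}(\calD)$ with probability $1-\eps$. Passing \Cref{item:low-deg-point} with probability $1-2\eps$ lets me apply \Cref{thm:anand-thomas-classical-low-degree} (after Naimark, \Cref{thm:naimark}, to make the point measurement projective — any ``$\approx_\delta$'' statements we need will be derived from ``$\consistency_\delta$'' ones as noted in the preliminaries), yielding a measurement $G\in\polymeas{m}{d}{q}$ with $M^w_b\otimes I\consistency_{\delta(\eps)} I\otimes G_{[g(w)=b]}$ on the uniform distribution and $G_g\otimes I\consistency_{\delta(\eps)}I\otimes G_g$. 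The goal is to ``bootstrap'' this pointwise consistency up to the subset $F$. First I handle subspace queries: the query $\bs$ in part~\ref{item:bootstrapping-to-subspaces} is a uniformly random dimension-$(k+1)$ subspace \emph{containing} $F$, and conditioned on $\bs$ the point $\bw$ is uniform in $\bs$; since the marginal of $\bw$ over the whole process is close to uniform on $\F_q^m$ (I need to check the total-variation cost of conditioning on $\bs\supseteq F$, which is $O(|F| \cdot q^{k+1}/q^m)=O(d m/q^c)$-type small and can be absorbed into $\delta(\eps)$ via \Cref{fact:swap-dists}), passing part~\ref{item:bootstrapping-to-subspaces} with probability $1-O(\eps)$ together with the Raz-Safra consistency gives, via \Cref{prop:same-on-point-same-on-subspace}, that the subspace measurement is $\consistency_{\delta(\eps)+d/q}$-consistent with $G_{[g|_{\bs}=\cdot]}$. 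Then part~\ref{item:boosting-to-subsets} checks the subspace prover against the set prover $\{M^F_{a_1,\dots,a_k}\}$ on the restriction to $F$; applying \Cref{fact:specialize-the-simeq} to data-process $g|_{\bs}$ down to $g|_F$ and chaining with \Cref{fact:triangle-like}, I get $M^F_{a_1,\dots,a_k}\otimes I\consistency_{\delta(\eps)} I\otimes G_{[g|_F=a_1,\dots,a_k]}$. Finally $\{G_g\}$ has outcomes degree-$d$ polynomials, which are exactly the elements of $\codes$, so renaming $G_g$ as $G_w$ with $w=g\in\codes$ gives the statement of \Cref{def:testing-a-code}.

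The main obstacle I expect is the bookkeeping around the distribution mismatches in the soundness argument — specifically, ensuring that the point $\bw$ obtained as ``uniform in a random $(k+1)$-subspace through $F$'' is genuinely close enough (in total variation) to uniform on $\F_q^m$ for \Cref{thm:anand-thomas-classical-low-degree}'s guarantee to transfer, and likewise that conditioning on the subspace containing $F$ does not distort the Raz-Safra distribution. These are standard but need care because $F$ may be an arbitrary $k$-element subset (including linearly dependent points or points not in ``general position''), so one must argue the random subspace through $F$ still induces an almost-uniform point; the cleanest route is to note that a uniform dimension-$(k+1)$ affine subspace containing a fixed $k$-point set, intersected with a uniform point of it, has marginal within $O(q^{k+1-m})$ of uniform, and $q^{k+1-m}\le dm/q^c$ for the relevant parameter regime, so it folds into the existing $\delta(\eps)$. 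Everything else is a routine chain of the state-dependent-distance facts (\Cref{fact:agreement,fact:triangle,fact:specialize-the-simeq,fact:swap-dists,prop:same-on-point-same-on-subspace}) assembled in the order above.
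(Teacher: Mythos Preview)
Your proposal is correct and follows essentially the same three-part structure as the paper's proof. The obstacle you flag is handled in the paper by a standalone \Cref{prop:almost-uniform}: the two-step distribution (uniform $(k{+}1)$-dimensional linear subspace through $F$, then uniform point in it) has total-variation distance at most $1/q$ from the uniform distribution on $\F_q^m$, independent of $F$, which absorbs directly into $\delta(\eps)$ via \Cref{fact:swap-dists} and makes the rest of your soundness chain go through verbatim.
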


Before proving this, we need the following proposition.

\begin{proposition}\label{prop:almost-uniform}
Let $F \subseteq \F_q^m$ be of size at most~$k$. Consider the distribution $\calD_{\mathrm{twostep}}$ on points~$x\in\F_q^m$ generated by the following two-step process:
(i) let $\bs$ be a uniformly random subspace of size~$k+1$ containing~$F$, and (ii) draw $\bx$ uniformly at random from~$\bs$.
Let $\calD_{\mathrm{unif}}$ be the uniform distribution on~$\F_q^m.$
Then $d_\mathrm{TV}(\calD_{\mathrm{twostep}}, \calD_{\mathrm{unif}}) \leq 1/q$.
\end{proposition}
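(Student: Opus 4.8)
The plan is to show that the two-step distribution $\calD_{\tstep}$ is uniform conditioned on the (high-probability) event that the random subspace $\bs$ has full dimension $k+1$, and then bound the total variation distance by the probability of the complementary event. Concretely, a uniformly random subspace $\bs$ of dimension $k+1$ containing $F$ is obtained by taking $F = \{f_1, \dots, f_j\}$ (with $j \leq k$), enlarging it to a spanning set, and choosing $k+1-j$ further ``direction'' vectors $\bw_1, \dots, \bw_{k+1-j}$ uniformly at random, so that $\bs = \mathrm{aff}(F) + \mathrm{span}(\bw_1, \dots, \bw_{k+1-j})$ (if $F$ is empty, one just takes a uniformly random $(k+1)$-dimensional affine subspace through the origin or a random point — either convention works and affects nothing). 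Then a uniform draw $\bx \sim \bs$ can be written as $\bx = f_1 + \sum_i \blambda_i \bw_i$ with the $\blambda_i$ uniform in $\F_q$, independent of the $\bw_i$.

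First I would observe that the random vector $\sum_i \blambda_i \bw_i$, where each $\bw_i \sim \F_q^m$ uniformly and each $\blambda_i \sim \F_q$ uniformly and all are independent, is \emph{almost} uniform on $\F_q^m$: for any fixed target $y \in \F_q^m$, $\Pr[\sum_i \blambda_i \bw_i = y]$ is computed by conditioning on $(\blambda_1, \dots, \blambda_{k+1-j})$. If at least one $\blambda_i \neq 0$ — which happens with probability $1 - q^{-(k+1-j)} \geq 1 - 1/q$ — then $\sum_i \blambda_i \bw_i$ is uniform on $\F_q^m$ (since $\bw_i$ for that index is uniform and independent), giving weight exactly $q^{-m}$ on $y$; if all $\blambda_i = 0$, the sum is $0$. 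Hence $\Pr[\sum_i \blambda_i \bw_i = y]$ equals $(1-q^{-(k+1-j)}) q^{-m} + q^{-(k+1-j)}\bone[y=0]$. Therefore the distribution of $\bx = f_1 + \sum_i \blambda_i \bw_i$ is a mixture $(1-\alpha)\calD_{\unif} + \alpha\,\delta_{f_1}$ with $\alpha = q^{-(k+1-j)} \leq 1/q$, where $\delta_{f_1}$ is the point mass at $f_1$. The total variation distance between this mixture and $\calD_{\unif}$ is then at most $\alpha \leq 1/q$, which is exactly the claimed bound.

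The only genuinely careful point is to make sure the formal construction of ``uniformly random subspace of dimension $k+1$ containing $F$'' matches the intended meaning and that one handles the degenerate cases (when $F$ already fails to be affinely independent, or when $\dim\mathrm{aff}(F) + (k+1-j)$ could exceed $m$, or when $F$ is empty). Since $|F| \leq k$ and $k+1 \leq q^m$ presumably holds in all applications, one can simply take $j = \dim(\mathrm{aff}(F))$, complete to an affine spanning set of a $(k+1)$-dimensional affine subspace, and the computation above goes through verbatim with $k+1-j$ replaced by the actual number of extra random directions $\geq 1$; the exponent only gets larger so $\alpha$ only gets smaller. I do not anticipate a real obstacle here — this is a routine Fourier/mixture argument — but I would present the mixture decomposition $\calD_{\tstep} = (1-\alpha)\calD_{\unif} + \alpha\,\nu$ for some probability distribution $\nu$ explicitly, since that makes the total variation bound $d_{\tv}(\calD_{\tstep}, \calD_{\unif}) = \alpha \cdot d_{\tv}(\nu, \calD_{\unif}) \leq \alpha \leq 1/q$ immediate and avoids any casework on signs of density differences.
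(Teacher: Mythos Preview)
Your approach is essentially the same as the paper's: parametrize a uniform point on $\bs$ as a random linear combination, condition on whether all the ``new'' coefficients vanish, and bound the TV distance by the probability of that event ($\leq q^{-1}$). The mixture decomposition is a clean way to package the final step.

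Two small discrepancies worth fixing. First, the paper interprets ``subspace'' as a \emph{linear} subspace: it takes $x_1,\dots,x_\ell$ a maximal linearly independent subset of $F$, then extends by $\by_{\ell+1},\dots,\by_{k+1}$ chosen uniformly among vectors \emph{linearly independent} from the previous ones, so that $\bs$ has dimension exactly $k+1$. Your parametrization with the $\bw_i$ uniform on all of $\F_q^m$ does not quite produce this distribution (the span can collapse), which is exactly the caveat you flag; if you instead condition on independence to match the statement, then $\sum_i \blambda_i \bw_i$ is no longer uniform on $\F_q^m$ in the good case but rather uniform on the complement of $\mathrm{span}(F)$, and the paper computes the resulting TV distance directly as $q^\ell(q^{-(k+1)}-q^{-m})\le 1/q$. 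Second, even in your version the bad-event distribution is not a point mass $\delta_{f_1}$: a uniform point on $\bs$ should carry random coefficients on the $F$-part too, so when all $\blambda_i=0$ you land uniformly in $\mathrm{span}(F)$ (or $\mathrm{aff}(F)$), not at $f_1$. This does not affect the bound $d_{\mathrm{TV}}\le\alpha$, but the formula $\bx=f_1+\sum_i\blambda_i\bw_i$ should be corrected.
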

\begin{proof}
Let $x_1, \ldots, x_\ell$ be a maximal set of linearly independent elements from~$F$.
A uniformly random subspace of size~$k+1$ containing~$F$ can be generated as follows:
first, pick a uniformly random nonzero vector $\by_{\ell+1}$ linearly independent from~$F$,
then pick a uniformly random nonzero vector $\by_{\ell+2}$ linearly independent from~$F\cup\{\by_{\ell+1}\}$,
and so forth. Set $\bs = \mathrm{span}\{x_1, \ldots, x_\ell, \by_{\ell+1}, \ldots, \by_{k+1}\}$.
A uniformly random point in~$\bs$ will be of the form 
\begin{equation*}
\bx = \bt_1 x_1 + \cdots + \bt_\ell x_\ell + \bt_{\ell+1} \by_{\ell+1} + \cdots + \bt_{k+1} \by_{k+1},
\end{equation*}
where each $\bt_{i}$ is a uniformly random element in $\F_q$.
Because all the $\by_i$'s are linearly independent,
the linear combination $\bt_{\ell+1} \by_{\ell+1} + \cdots + \bt_{k+1} \by_{k+1}$
is zero only when $\bt_{\ell+1} = \cdots = \bt_{k+1} = 0$.
Otherwise, this linear combination is distributed as a uniformly random nonzero vector linearly independent from~$F$.
Thus, with probability $(q^{k+1-\ell})^{-1}$, $\bx$ is distributed as a uniformly random vector in the span of~$F$,
and otherwise it is distributed as a uniformly random vector outside the span of~$F$.
Given that the span of~$F$ has $q^{\ell}$ points, the total variation distance is
\begin{equation*}
\frac{1}{2} \left| \frac{q^{\ell}}{q^m} - \frac{1}{q^{k+1-\ell}}\right|
+ \frac{1}{2} \left| \frac{q^m- q^{\ell}}{q^m} - \left(1- \frac{1}{q^{k+1-\ell}}\right)\right|
= q^{\ell}\left(\frac{1}{q^{k+1}} - \frac{1}{q^m}\right)
\leq \frac{1}{q}.\qedhere
\end{equation*}
\end{proof}

Now we prove \Cref{thm:low-degree-code-tester}.

\begin{proof}[Proof of \Cref{thm:low-degree-code-tester}]
The fact that $\codec$ is an $(n, m', q, d/q)$-error-correcting code follows from Schwartz-Zippel (\Cref{lem:schwartz-zippel}).

\paragraph{Completeness.}
Let $(\psi, M)$ be an EPR strategy in which $\{M_w\}$ is a measurement with outcomes in~$\{0, 1\}^n$.
Consider the strategy $(\psi, G)$ in which for any subset of points~$F = \{y_1, \ldots, y_\ell\}$,
\begin{equation*}
	G^I_{a_1, \ldots, a_\ell} := M_{[g_x(y_1), \ldots, g_x(y_\ell) = a_1, \ldots, a_\ell]}.
\end{equation*}
(This covers the case of points ($\ell = 1$) and subsets~$F$ ($\ell = k$).)
In addition, for any subspace~$s$,
\begin{equation*}
	G^s_{f} := M_{[g_x|_s = f]}.
\end{equation*}
(This covers the case of the $2$-dimensional subspaces used in $\game_{\mathrm{Point}}$
and the $(k+1)$-dimensional subspaces used for the local decoding.)
By construction, $(\psi, G)$ is an EPR strategy,
and it is easy to see that it is a \emph{commuting} one as well.

We claim that $(\psi, G)$ passes $\game_{\mathrm{LDsubset}}(m, q, d, \calD)$ with probability~$1$.
We begin with the low-degree test.
By \Cref{fact:heh-heh-heh-gonna-make-anand-prove-this-so-i-can-take-the-day-off},
$M_w \otimes I_{\reg{Bob}} \consistency_0 I_{\reg{Alice}} \otimes M_w$.
Then by \Cref{fact:specialize-the-simeq},
\begin{equation*}
M_{[g_x(w) = b]} \otimes I_{\reg{Bob}}
\consistency_{0} I_{\reg{Alice}} \otimes M_{[g_x(w) = b]}.
\end{equation*}
This implies passing the low-degree test with probability~$1$, because
\begin{equation*}
G^s_{[f(w) = b]} \otimes I_{\reg{Bob}}
= M_{[g_x(w) = b]} \otimes I_{\reg{Bob}}
\consistency_{0} I_{\reg{Alice}} \otimes M_{[g_x(w) = b]} = I_{\reg{Alice}} \otimes G^w_b.
\end{equation*}
A similar argument shows the other tests pass with probability~$1$ as well.

\paragraph{Soundness.}

Let~$\calD$ be a distribution,
and let $(\psi, M)$ be a strategy which passes $\game_{\mathrm{LDsubset}}(m, q, d, \calD)$ with probability $1-\eps$.
The outline of the proof is as follows:
first we will use the low degree test in \Cref{item:low-deg-point} to ensure the test correctly answers low-degree point queries.
\Cref{item:bootstrapping-to-subspaces} will then bootstrap this to subspaces,
and \Cref{item:boosting-to-subsets} will further bootstrap this to subsets, proving the theorem.

\paragraph{Using the low-degree test.}
Passing the test with probability $1-\epsilon$ means passing the low-degree test with probability at least $1-2\epsilon$.
By \Cref{thm:anand-thomas-classical-low-degree}, this means that there exists a POVM measurement $G \in \polymeas{m}{d}{q}$ such that
\begin{equation}\label{eq:just-applied-low-degree}
M^w_b \otimes I_{\reg{Bob}} \consistency_{\delta(\eps)} I_{\reg{Alice}} \otimes G_{[g(w) = b]},
\qquad G_g \otimes I_{\reg{Bob}} \consistency_{\delta(\eps)} I_{\reg{Alice}} \otimes G_g,
\end{equation}
where the first is on the uniform distribution over $\F_q^m$.

\paragraph{Bootstrapping to subspaces.}
Define $\calD_{\mathrm{twostep}}$ to be the two-step sampling process $(\bF, \bs, \bw)$ as in \Cref{item:bootstrapping-to-subspaces}.
By \Cref{prop:almost-uniform}, the marginal distribution on~$\bw$ has total variation distance at most~$1/q$ with $\calD_{\mathrm{unif}}$.
As a result, we can apply \Cref{fact:swap-dists} to \Cref{eq:just-applied-low-degree}, yielding
\begin{equation}\label{eq:gonna-triangle-the-shit-outta-this}
M^w_b \otimes I_{\reg{Bob}} \consistency_{\delta(\eps)} I_{\reg{Alice}} \otimes G_{[g(w) = b]}
\end{equation}
on distribution $\calD_{\mathrm{twostep}}$.
Similarly, by \Cref{fact:specialize-the-simeq},
\begin{equation}\label{eq:incoming-triangle}
G_{[g(w) = b]} \otimes I_{\reg{Bob}} \consistency_{\delta(\eps)} I_{\reg{Alice}} \otimes G_{[g(w) = b]}
\end{equation}
on distribution $\calD_{\mathrm{twostep}}$.

Next, because the strategy passes the test in \Cref{item:bootstrapping-to-subspaces} with probability at least $1-4\epsilon$,
\begin{equation}\label{eq:watch-out-for-the-triangle}
M^{w}_y \otimes I_{\reg{Bob}} \consistency_{\epsilon} I_{\reg{Alice}} \otimes M^s_{[g(w) = y]}.
\end{equation}
on distribution $\calD_{\mathrm{twostep}}$.
Combining \Cref{eq:gonna-triangle-the-shit-outta-this,eq:incoming-triangle,eq:watch-out-for-the-triangle} with our second triangle inequality (\Cref{fact:triangle-like}),
\begin{equation*}
M^s_{[g(w) = y]} \otimes I_{\reg{Bob}}
\consistency_{\delta(\eps)} I_{\reg{Alice}} \otimes G_{[g(w) = b]}.
\end{equation*}
By~\Cref{prop:same-on-point-same-on-subspace}, we conclude that
\begin{equation}\label{eq:finished-with-subspaces}
M^s_{f} \otimes I_{\reg{Bob}}
\consistency_{\delta(\eps)} I_{\reg{Alice}} \otimes G_{[g|_s = f]}.
\end{equation}

\paragraph{Concluding with subsets.}
The strategy passes the test in \Cref{item:boosting-to-subsets} with probability at least $1-4\epsilon$. As a result,
\begin{equation*}
M^F_f \otimes I_{\reg{Bob}}
\consistency_{\epsilon}
I_{\reg{Alice}} \otimes M^s_{[g|_F = f]}.
\end{equation*}
Applying \Cref{fact:specialize-the-simeq} to \Cref{eq:finished-with-subspaces},
\begin{equation*}
M^s_{[g|_F=f]} \otimes I_{\reg{Bob}}
\consistency_{\delta(\eps)} I_{\reg{Alice}} \otimes G_{[h|_F = h]}.
\end{equation*}
Similarly, applying \Cref{fact:specialize-the-simeq} to \Cref{eq:just-applied-low-degree},
\begin{equation*}
G_{[h|_F = f]} \otimes I_{\reg{Bob}} \consistency_{\delta(\eps)} I_{\reg{Alice}} \otimes G_{[h|_F=f]}
\end{equation*}
Applying the triangle inequality (\Cref{fact:triangle-like}) to these three equations,
we get
\begin{equation*}
M^F_f \otimes I_{\reg{Bob}}
\consistency_{\delta(\epsilon)}
I_{\reg{Alice}} \otimes G_{[g|_{F} = f]}
\end{equation*}
with respect to distribution $\calD_{\mathrm{twostep}}$,
and therefore, by \Cref{fact:close-on-marginal}, with respect to $\calD$.
\end{proof}

\subsection{Efficiently decodable codes}

Our application requires error-correcting codes with two further properties.
The first property is that the decoding map $\coded(\cdot)$ be efficiently computable.
(The encoding map, on the other hand, is allowed arbitrary complexity.
This is because we will leave the task of computing the encoding maps to the provers.) 
The second, more technical property is
we require that the code \emph{embed} the codeword, in the following sense:
the encoding $\codee(x)$ of a string~$x$ should actually \emph{contain} the string~$x$,
and the function for where to find each bit of~$x$ in $\codee(x)$ should be efficiently computable.

\begin{definition}[Efficiently-decodable error-correcting codes]\label{def:efficient-codes}
Let $m, q:\Z^+\rightarrow \Z^+$,
and let $\eta:\Z^+\rightarrow [0, 1]$.
Let $t_{\mathrm{Dec}}, t_{\mathrm{Emb}}:\Z^+ \rightarrow \Z^+$.
We say that $\codec_n=(\codee_n, \coded_n, \codes_n)$ is an \emph{$(n, m, q, \eta, t_{\mathrm{Dec}}, t_{\mathrm{Emb}})$-efficient code family}
if the following three conditions are true.
\begin{itemize}
\item[$\circ$] For each~$n$, $(\codee_n, \coded_n, \codes_n)$ is an $(n, m(n), q(n), \eta(n))$-error-correcting code.
\item[$\circ$] There exists an algorithm $\mathrm{Alg}_{\coded}$ which, on input $(n, w)$, outputs $\coded_n(w)$.
			Furthermore, $\mathrm{Alg}_{\coded}$ runs in time $t_{\mathrm{Dec}}(n)$.
\item[$\circ$] There exists an embedding $\mu_n:[n] \rightarrow [m(n)]$ such that for each $i \in [n]$, $x_i = (\codee_n(x))_{\mu_n(i)}$.
			Furthermore, there is an algorithm $\mathrm{Alg}_{\mathrm{Emb}}$ which, on input $(n, i)$, computes $\mu_n(i)$ in time
			 $t_{\mathrm{Emb}}(n)$.
\end{itemize}
\end{definition}

Now, we show that the low-degree code is efficiently-decodable.
The decoding algorithm follows a simple strategy:
assuming that the input is a proper encoding of a message,
they can directly read off the message from the input.
Then they compute the encoding of the purported message and check that it equals the input.

\begin{fact}\label{fact:low-degree-code-is-efficient}
There is a $(n, m', q, \eta, t_{\mathrm{Dec}}, t_{\mathrm{Emb}})$-error-correcting code $\codec$ with parameters set as follows:
\begin{equation*}
m'(n) = \poly(n), \quad q(n) = \mathrm{polylog}(n), \quad \eta(n) = \frac{1}{\mathrm{polylog}(n)},
\end{equation*}
\begin{equation*}
t_{\mathrm{Dec}}(n) = \poly(n), \qquad t_{\mathrm{Emb}}(n) = \mathrm{polylog}(n).
\end{equation*}
In addition, $\codec$ has a $k$-subset test~$\game$ with robustness $\delta(\eps) = \poly(\eps, 1/\log(n))$ such that
\begin{equation*}
\qtime{\game} = \poly(\log n, k),
\quad
\atime{\game} = \poly(\log(n)^k),
\end{equation*}
\begin{equation*}
\qlength{\game} = O(k \log n),
\quad
\alength{\game} = O(\log(n)^{2k}).
\end{equation*}
\end{fact}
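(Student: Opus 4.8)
The plan is to instantiate the low-degree code tester of \Cref{thm:low-degree-code-tester} with the ``rule of thumb'' parameter setting of \Cref{eq:parameters}, and then verify that the resulting code is efficiently decodable and embeddable in the sense of \Cref{def:efficient-codes}. Concretely, given the message length $n$, I would take $h = \Theta(\log n)$, $m = \Theta(\log(n)/\log\log(n))$, and $q = \log(n)^{C}$ for a sufficiently large absolute constant $C$ (rounding up so that $h = 2^{t_1}$ and $q = 2^{t_2}$ are powers of two with $t_1 \leq t_2$), and set $d = m(h-1) = \Theta(\log(n)^2/\log\log(n))$. These are admissible: $h^{m} = 2^{\Theta(\log n)} \geq n$ for appropriately chosen constants, and $h \leq q$ once $C$ is large enough. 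The code itself is the one produced by \Cref{thm:low-degree-code-tester}, except that the injection $\pi:[n]\to H^m$ is taken to be the canonical low-degree injection $\pi_{\ell,t_1,t_2}$ of \Cref{def:canonical-low-degree}, which is what makes the decoding and embedding uniform. (I read the statement of the fact as asserting the existence of an efficient code \emph{family} in the sense of \Cref{def:efficient-codes}.)

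With these choices all the claimed parameter bounds follow by direct substitution. The codeword length is $m' = q^{m} = 2^{m\log q} = 2^{\Theta(\log n)} = \poly(n)$; the distance parameter of $\codes$ (the degree-$d$ polynomials) is $\eta = d/q = 1/\mathrm{polylog}(n)$ by Schwartz--Zippel (\Cref{lem:schwartz-zippel}); and the robustness of $\game_{\mathrm{LDsubset}}$, which is $\poly(\eps, dm/q^{c})$ for an absolute constant $c$, becomes $\poly(\eps, 1/\log n)$ provided $C$ is chosen large enough that $q^{c} \geq dm\cdot\log(n)$. The communication and time complexities come from the corresponding bounds in \Cref{thm:low-degree-code-tester} after plugging in $m = \mathrm{polylog}(n)$, $\log q = \Theta(\log\log n)$, and $d = \mathrm{polylog}(n)$: e.g.\ $\qlength{\game_{\mathrm{LDsubset}}} = O(km\log q) = O(k\log n)$ since $m\log q = \Theta(\log n)$, and $\alength{\game_{\mathrm{LDsubset}}} = O(d^{k}\log q) = O(\log(n)^{2k})$ since $d^{k}\log q \leq (\log(n)^2/\log\log n)^{k}\cdot O(\log\log n) = O(\log(n)^{2k})$, with the time bounds handled analogously.

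For efficient decodability and embedding I would argue as follows. Since $\codee_n(x) = g_x$ and $g_x(\pi(i)) = x_i$, set $\mu_n(i)$ to be the position of the point $\pi(i)\in\F_q^m$ under a fixed ordering of $\F_q^m\cong[m']$; by \Cref{prop:canonical-time} this is computable in time $m\cdot\mathrm{polylog}(q) = \mathrm{polylog}(n)$, giving $t_{\mathrm{Emb}} = \mathrm{polylog}(n)$. For the decoder, on input a function $w:\F_q^m\to\F_q$ the algorithm reads off the purported message $a$ by evaluating $w$ at $\pi(1),\dots,\pi(n)$, recomputes the low-degree encoding $g_a$, and checks whether $g_a\equiv w$ by comparing them at all $m' = \poly(n)$ points of $\F_q^m$; it outputs $a$ if they agree and $\bot$ otherwise. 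Each evaluation of $g_a$ costs $\poly(n,m,q) = \poly(n)$ by \Cref{prop:canonical-time}, so the decoder runs in $\poly(n)$ time, establishing $t_{\mathrm{Dec}} = \poly(n)$ and completing all three bullets of \Cref{def:efficient-codes}.

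The proof is essentially bookkeeping; the one point needing care is the joint choice of the polylog exponent $C$ in $q$: it must be large enough to drive the robustness $\poly(\eps, dm/q^{c})$ down to $\poly(\eps,1/\log n)$, yet small enough (a fixed constant) that $m' = q^{m}$ stays $\poly(n)$. These requirements are compatible because $m\log q = \Theta(\log n)$ for any constant-exponent polylog choice of $q$, so enlarging $C$ only changes $m'$ by a polynomial factor while improving the robustness without bound.
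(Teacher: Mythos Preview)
Your proposal is correct and follows essentially the same route as the paper: instantiate the canonical low-degree encoding with the rule-of-thumb parameters from \Cref{eq:parameters}, read off $m'$, $\eta$, and the subset-tester bounds from \Cref{thm:low-degree-code-tester}, take the embedding from the canonical injection via \Cref{prop:canonical-time}, and implement the decoder by reading off the purported message at $\pi(1),\dots,\pi(n)$ and re-encoding to verify. Your treatment is in fact slightly more careful than the paper's (e.g.\ the discussion of the joint constraint on the polylog exponent $C$), but the underlying argument is the same.
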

\begin{proof}
We instantiate the canonical low-degree encoding from \Cref{def:canonical-low-degree} with the ``rule of thumb" parameters from \Cref{eq:parameters}:
\begin{equation*}
h(n) = \Theta(\log(n)),
\qquad
m(n) = \Theta\left(\frac{\log(n)}{\log\log(n)}\right),
\qquad
q(n) = \mathrm{polylog}(n).
\end{equation*}
If we set $d(n) = m(n)\cdot (h(n) - 1)$, then this is a code with distance $\eta(n) = 1-d(n)/q(n) = 1-1/\mathrm{polylog}(n)$.
In addition, it has length $m'(n) = q(n)^{m(n)} = \poly(n)$.
Finally, the canonical low-degree encoding gives us the embedding $\mu_{\mathrm{Emb}} := \sigma_{m, t_1, t_2}$.
By \Cref{prop:canonical-time}, it takes time $t_{\mathrm{Emb}}(n) = \mathrm{poly log}(n)$ to compute.

Now we design the decoding algorithm $\mathrm{Alg}_{\coded}$.
On input $(n, w)$, it rejects if~$w$ is not length~$m'$.
Otherwise, it interprets~$w$ as a function $f:\F_q^m\rightarrow \F_q$.
It queries~$g$ on the points $\pi(1), \ldots, \pi(n)$.
Let $a \in \{0, 1\}^n$ be the received answers.
If~$g$ is a codeword, it equals the low-degree function~$g_a$.
So the algorithm simply iterates over all~$x \in \F_q^m$ and checks that $f(x) = g_a(x)$.
By \Cref{prop:canonical-time}, computing $g_a(x)$ can be done in time $\poly(n)$,
and so this takes time $t_{\mathrm{Dec}}(n) = \poly(n)$ in total.

Finally, the performance of the subset tester follows from \Cref{thm:low-degree-code-tester} with our setting of parameters.
\end{proof}



\section{Answer reduction}\label{sec:answer-reduction}

In this section, we carry out the answer reduction.
Our main result will be to take the $\poly(n)$ question complexity, $O(2^n)$ answer complexity $\MIP^*$ protocol for $\succinctsquared$
given by \Cref{cor:succinct-sat-protocol-with-big-answer-size}
and convert it to one whose answer complexity is also $\poly(n)$;
this is \Cref{thm:main} below.

Our answer reduction will apply to any game with a value-$1$ real commuting EPR strategy.
We will require two properties of these strategies:
first, that they can be extended to strategies that pass subset tests with probability~$1$, as in \Cref{def:testing-a-code};
and second, that they are ``oracularizable".
We explain this second property in the next section. 

\subsection{Oracularization}

Our technique will not work for all entangled games but only for a
subset, for which a single prover can simulate both prover's actions
if required to. We call such games ``oracularizable'' games.

\begin{definition}
Given a two-player entangled game~$\game$,
its \emph{oracularization} is the game $C_{\mathrm{oracle}}(\game)$ given in \Cref{fig:oracle}.
If $\game$ is value-$1$,
then we call it \emph{oracularizable}, if $\val{C_{\mathrm{oracle}}(\game)} = 1$ as well. 
We also note that for \emph{any} game~$\game$, if $\val{\game} \leq 1 -
  \delta$, then $\val{C_{\mathrm{oracle}}(\game)} \leq 1 - O(\delta)$.
\end{definition}

{
  \floatstyle{boxed}
  \restylefloat{figure}
  \begin{figure}
    Given a game $\game$, sample a tuple $(\bx_0, \bx_1, \bC) \sim
    \game$, and flip two unbiased coins $\bb, \bc \sim \{0,1\}$.
    With probability $\tfrac{1}{2}$ each, perform one of the following two tests. 
    \begin{enumerate}
      \item \textbf{Verify:} Distribute the questions as follows:
      	\begin{itemize}
	\item[$\circ$] Player~$\bb$: send the pair $(\bx_0, \bx_1)$ and receive answers $(\ba_0, \ba_1)$.
	\item[$\circ$] Player~$\overline{\bb}$: send $\bx_{\bc}$ and receive an answer $\ba_2$.
	\end{itemize}
	\item \textbf{Consistency:} Play the consistency game with question $\bx_0, \bx_1$.
    \end{enumerate}
    Accept if $\ba_2 = \ba_{\bc}$ and $V(\bx_0, \bx_1, \ba_0, \ba_1) = 1$.
    \caption{The oracularized game $C_{\mathrm{oracle}}(\game)$.\label{fig:oracle}}
  \end{figure}
  }
  
\noindent
A real commuting EPR strategy allows ``Player~$\bb$" to sample both questions~$\bx_0$ and~$\bx_1$ simultaneously.
As a result, if a game~$\game$ has a value-$1$ real commuting EPR strategy, then it is oracularizable.

The value of oracularization is that when the verifier checks $V(\bx_0, \bx_1, \ba_0, \ba_1) = 1$,
both~$\ba_0$ and~$\ba_1$ come from the same prover rather than two different provers.
This seems like a minor change, but in fact it makes all the difference.
Our goal is to reduce the verifier's runtime by having the provers encode their answers using PCP technology.
When the answers come from \emph{both} provers,
the relevant piece of PCP technology is a \emph{distributed} PCP,
but it is known by a simple argument of Reingold that distributed PCPs do not exist
(see the discussion in~\cite{ARW17}).
The key difficulty comes from the fact that Alice needs to prepare her PCP proof without
knowing Bob's question and answer, and vice versa, and this turns out to be impossible in general.
On the other hand,
when the answers come from a single prover,
we can use traditional PCPs to implement the answer reduction, of which we have a variety of constructions.
We note that oracularized games \emph{do} still have checks between players,
but these are equality checks and will be easy to implement in the answer reduction regime.

\subsection{Probabilistically checkable proofs of proximity}

In this section, we introduce the main PCP technology we will use for our answer reduction.
In the oracularized game,
the provers want to convince us not just that $V(\cdot, \cdot, \cdot, \cdot)$
is satisfiable---which we already know to be true by construction---but that $(\bx_0, \bx_1, \ba_0, \ba_1)$
is a particular assignment which satisfies it.
For this, we need a stronger notion of a PCP called a \emph{probabilistically checkable proof of proximity (PCPP)}.
These allow one to check that an input~$x$ is close to a satisfying assignment of a circuit~$C$ (hence the ``proximity")
by making a small number of queries to~$x$.
These were originally introduced in the independent works of~\cite{BGH+06}
and~\cite{DR06} (where they were called \emph{assignment testers}).

In our case, we will need even stronger PCPPs in which the verifier is not only query-efficient but \emph{time}-efficient as well.
The history of these time-efficient PCPPs goes back to the original proof of $\MIP = \nexp$ and the various attempts to ``scale it down"~\cite{OD05}.
The most famous line of research considered proof systems in which the verifier's query complexity is restricted,
and this eventually led to the proof of the PCP theorem~\cite{AS98,ALM+98}.
A parallel line of research considered proof systems in which the verifier's runtime is restricted (so-called ``transparent" proofs)~\cite{BFLS91}.
The latter of these was revisited in the work of Ben-Sasson et al.~\cite{BGH+05},
who showed that both lines of research could be remerged in the ``scaled down" setting
by constructing a PCPP in which the verifier is both query-efficient \emph{and} time-efficient.
Though their main result is actually sufficient for our purposes,
we will cite the work of Mie~\cite{Mie09}, which improves on their result in the regime we care about.
Finally, we note the work of Meir~\cite{Mei14}, who reproves the bounds of Ben-Sasson et al.~\cite{BGH+05} using combinatorial methods.

To our knowledge, ours is the first use of a time-efficient PCPP specifically for its time-efficient properties in the quantum literature.
Natarajan and Vidick~\cite{NV18a} used the time-efficient PCPP of~\cite{BGH+05} to prove the quantum games PCP conjecture,
but the property they needed was not that it was time-efficient,
but that the bits of the proof are linear functions of the bits of the assignment.
We note that we do not need this property here.

In this literature, it is common to consider ``pair languages" consisting of strings $(x, y)$
in which~$x$ is small and given to the verifier and~$y$ is large and accessible only through query access.
This maps perfectly onto our scenario,
in which the verifier supplies the ``small" questions $\bx_0, \bx_1$
and the prover supplies the ``large" answers $\ba_0, \ba_1$.

\begin{definition}
A \emph{pair language} $L$ is a subset of $\{0, 1\}^* \times \{0, 1\}^*$.
Given $x \in \{0, 1\}^*$, we write $L_x = \{y \in \{0, 1\}^* \mid (x,y) \in L\}$.
\end{definition}

The next two definitions state the notion of an efficient PCPP verifier.

\begin{definition}[{\cite[Definition 2.1]{BGH+05}}]
Let $r, q:\Z^+ \rightarrow \Z^+$ and $t:\Z^+ \times \Z^+ \rightarrow \Z^+$.
An \emph{$(r, q, t)$-restricted PCPP verifier} is a probabilistic machine that, given a string~$x$ (called the \emph{explicit input})
and a number~$K$ (in binary) as well as oracle access to an \emph{implicit input} $y \in \{0, 1\}^{K}$ and to a \emph{proof oracle} $\pi \in \{0, 1\}^*$,
tosses $r(|x|+K)$ coins, queries the oracles $(y, \pi)$ for a total of $q(|x|+K)$ symbols,
runs in time $t(|x|, K)$, and outputs a Boolean verdict in $\{\mathrm{accept}, \mathrm{reject}\}$.
\end{definition}

\begin{definition}[{\cite[Definition 2.2]{BGH+05}}]
For functions $r, q:\Z^+ \rightarrow \Z^+$, $t:\Z^+\times \Z^+ \rightarrow \Z^+$,
and constants
$s, \gamma \in  [0, 1]$,
a pair language $L \subseteq \{0, 1\}^* \times \{0, 1\}^*$ is in $\mathrm{PCPP}_{s, \gamma}[r, q, t]$
if there exists an $(r, q, t)$-restricted PCPP verifier~$V$ with the following properties:
\begin{itemize}
\item[$\circ$] \textbf{Completeness:} If $(x, y) \in L$ then there exists a $\pi$ such that $\Pr_R[\text{$V^{y, \pi}(x, |y|; R)$ accepts}] = 1$,
	where $V^{y, \pi}(x, |y|; R)$ denotes the decision $V$ on input $(x, |y|)$,
	oracle access to $(y, \pi)$, and coin tosses $R$.
\item[$\circ$] \textbf{Soundness:} If $(x, y)$ is such that $y$ is $\gamma$-far from $L_x \cap \Sigma^{|y|}$,
	then for every $\pi$ it holds that $\Pr_R[\text{$V^{y, \pi}(x, |y|;R)$ accepts}] \leq s$.
\end{itemize}
\end{definition}

Mie's time-efficient PCPP is states as follows.

\begin{theorem}[{\cite[Theorem 1]{Mie09}}]\label{thm:pcpp-mie}
Suppose that $L$ is a pair language in $\ntime(T)$
for some non-decreasing function $T : \Z^+ \rightarrow \Z^+$.
Then, for every two constants $s, \gamma >0$,
we have $L \in \mathrm{PCPP}_{s, \gamma}[r, q, t]$, for
\begin{itemize}
\item[$\circ$] Randomness complexity $r(m) = \log_2 T(m) + O(\log\log T(m))$.
\item[$\circ$] Query complexity $q(m) = O(1)$,
\item[$\circ$] Verification time $t(n, K) = \poly(n, \log K, \log T(n + K))$.
\end{itemize}
\end{theorem}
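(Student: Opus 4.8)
\textbf{The plan} is to follow the algebraic route to PCPs of proximity, as carried out by Ben-Sasson et al.~\cite{BGH+05} and quantitatively refined by Mie~\cite{Mie09}, organized into four stages: (i) reduce the pair language to a succinctly-presented algebraic constraint satisfaction problem; (ii) build a PCPP with polylogarithmic query and $\log_2 T + O(\log\log T)$ randomness complexity using the low-degree test together with the sumcheck protocol; (iii) drive the query complexity down to $O(1)$ by proof composition with an efficient inner PCPP; and (iv) install a proximity gadget so the verifier needs only oracle access to the implicit input. For stage (i), given $(x,y)$ with $|y| = K$ and $L \in \ntime(T)$, I would apply a Cook--Levin / Pippenger--Fischer style reduction (of the kind used in \Cref{thm:cook-levin} and \Cref{fact:p-to-ppoly}) to produce, in time $\poly(|x|,\log K, \log T(|x|+K))$, a circuit $C_{x,K}$ of size $\poly(T(|x|+K))$ whose satisfiability by an assignment \emph{agreeing with $y$ on the input wires} is equivalent to $(x,y) \in L$, and crucially whose gate structure is generated by a small efficiently computable machine, exactly as in the definition of $\succinct$. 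Arithmetizing $C_{x,K}$ over $\F_q$ with admissible parameters $q = \mathrm{polylog}(T)$, $m = \Theta(\log T/\log\log T)$, $h = \Theta(\log T)$, satisfiability becomes the statement that a polynomial $\mathrm{sat}$ of degree $d = \mathrm{polylog}(T)$ vanishes on the subcube $H^m$, following the template of \Cref{sec:classical-pcp}.

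For stage (ii), the proof contains the low-degree encoding $g$ of a satisfying assignment, the ``zero-on-subcube'' coefficient polynomials supplied by \Cref{prop:coefficient-polys}, and a sumcheck transcript certifying that $\mathrm{sat}_{\psi,g} - \mathrm{zero}_{H,c}$ vanishes identically; the verifier runs the surface-versus-point low-degree test (\Cref{thm:raz-safra}) on $g$ and on the coefficient polynomials and checks one step of the sumcheck recursion at a random point. The key accounting observation is that the dominant randomness cost is choosing a random point of $\F_q^m$ together with a random plane through it, which is $m\log q + O(\log q) = \log_2 T + O(\log\log T)$ bits; at this stage the query complexity is $\mathrm{polylog}(T)$ (a bivariate polynomial on a plane plus $O(1)$ field evaluations), and the verification time is $\poly(|x|,\log K,\log T)$ because each queried value can be recomputed from the succinct circuit in time $\poly(\log T)$, exactly as in \Cref{prop:canonical-time}.

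For stage (iii), I would compose this polylog-query outer PCPP with a constant-query inner PCPP (e.g.\ a Hadamard-code based assignment tester) applied to the tiny predicate the outer verifier evaluates: since that predicate acts on only $\mathrm{polylog}(T)$ bits, even an inner verifier with complexity exponential in its own input size has complexity $\poly\log(T)$, so composition reduces the query count to $O(1)$ while multiplying the randomness and verification time by only $\mathrm{polylog}(T)$ factors --- a finite number of composition rounds suffices. For stage (iv), the proof is required to begin with the low-degree encoding $\tilde g$ of the \emph{implicit} input $y$; the verifier reads $y$ at $O(1)$ random coordinates and checks agreement with $\tilde g$ at the corresponding embedding points (efficiently computable, again by \Cref{prop:canonical-time}), then runs the PCPP of stages (ii)--(iii) with $\tilde g$ as the input-wire assignment. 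Schwartz--Zippel (\Cref{lem:schwartz-zippel}) yields $\gamma$-soundness: if $y$ is $\gamma$-far from every $y' \in L_x$, then either $\tilde g$ is far from every codeword (caught by the low-degree test) or it decodes to a non-accepting input (caught by the sampled-coordinate check or by the soundness of stages (ii)--(iii)), and all soundness errors can be pushed below any constant $s$ by repeating $O(1)$ times.

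The main obstacle I expect is pinning the randomness complexity at exactly $\log_2 T + O(\log\log T)$ through stage (iii). Naively, each composition level costs an additive $\Theta(\log(\text{inner instance size}))$ bits of randomness and each level roughly squares the soundness error, so one must pick the inner field sizes large enough to keep soundness bounded --- but larger fields cost more randomness, so these two pressures are in direct tension. Balancing them, via a randomness-efficient composition that reuses the outer verifier's coins for the inner instance, is precisely the quantitative content of \cite{Mie09} improving on \cite{BGH+05}. Since for our application in \Cref{sec:answer-reduction} only the stated parameters are needed, I would present stages (i), (ii), (iv) in moderate detail and invoke \cite{Mie09} as a black box for the composition-theoretic bookkeeping of stage (iii).
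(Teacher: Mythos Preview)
The paper does not prove this theorem at all: it is stated as a black-box citation of \cite[Theorem~1]{Mie09} and is used only as an imported tool in \Cref{sec:answer-reduction}. So there is no ``paper's own proof'' to compare against; your proposal goes well beyond what the paper does by actually sketching the construction.

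That said, your sketch is a reasonable high-level outline of the Ben-Sasson et al.\ / Mie construction, and you correctly identify the delicate point (randomness accounting through composition) as the place where one must ultimately defer to \cite{Mie09}. For the purposes of this paper, the appropriate ``proof'' is simply to cite the result, which is exactly what the paper does; your stages (i), (ii), (iv) are useful background but not needed here.
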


We note that this is in fact a much stronger than what we will actually need.
In particular, we will only apply this to languages $L$ in \emph{deterministic} $\mathsf{TIME}(T)$,
which are trivially in $\ntime(T)$.

\subsection{Composing with an error-correcting code}\label{sec:composing-with-an-error-correcting-code}

The verifier in a PCPP rejects any input which is $\gamma$-far from an accepting input,
but of course we want our verifier to reject \emph{all} non-accepting inputs, no matter their distance.
To do this, we will (i) encode the verifier's inputs using an error-correcting code
and (ii) check that the inputs are properly encoded (using, for example, the low-degree test).
This approach of composing a PCPP with an error-correcting code is standard
and stretches back in spirit to the transparent proofs of~\cite{BFLS91} (see the discussion of this in~\cite{BGH+06}).

Now we show how to compose an $\MIP^*$ game with an error-correcting code.

\begin{definition}[Error-correcting the provers' answers]\label{def:error-correcting-answers}
Let $V = (\mathrm{Alg}_{\mathrm{Q}}, \mathrm{Alg}_{\mathrm{A}})$ be an $\MIP^*$ verifier (the language it verifies is not important).
Suppose on inputs of size~$n$ it has question length~$\ell_{\mathrm{Q}}(n)$ answer length~$\ell_{\mathrm{A}}(n)$.
Write $L_{\mathrm{A}}$ for the language decided by $\mathrm{Alg}_{\mathrm{A}}$.
Let $\codec_k = (\codee_k,\coded_k, \codes_k)$ be a $(k, m, q, \eta, t_{\mathrm{Dec}}, t_{\mathrm{Emb}})$-efficient code family with decoding algorithm $\mathrm{Alg}_{\mathrm{Dec}}$.
Then $L_{\mathrm{A}} \circ \codec$ is a new language defined as follows:
suppose $(\mathsf{input}, x_0, x_1, y_0, y_1) \in L_{\mathrm{A}}$.
Let $n$ be the length of $\mathsf{input}$ and $\ell = \ell_{\mathrm{A}}(n)$.
Then $(\mathsf{input}, x_0, x_1, \codee_{\ell}(y_0), \codee_{\ell}(y_1)) \in L_{\mathrm{A}} \circ \mathsf{Code}$.
\end{definition}

Now, we prove a couple of properties about the composed verifier.
First, we show that its runtime is not much slower than the original verifier's.

\begin{proposition}[Runtime of the composed verifier]\label{prop:composed-runtime}
Let~$V$ and~$\codec_k$ be as in \Cref{def:error-correcting-answers}.
Suppose $\mathrm{Alg}_{\mathrm{A}}$ runs in time $T(n)$.
Then there is an algorithm, which we denote $\mathrm{Alg}_{\mathrm{A}} \circ \mathsf{Code}$,
deciding the language $L_{\mathrm{A}} \circ \mathsf{Code}$.
In addition, on inputs $(\mathsf{input}, x_0, x_1, z_0, z_1)$
in which $|\mathsf{input}| = n$,
$|x_0| = |x_1| = \ell_{\mathrm{Q}}(n)$,
and $|z_0| = |z_1| = m(\ell_{\mathrm{A}}(n))$,
the algorithm runs
in time $T(n) + t_{\mathrm{Dec}}(\ell_{\mathrm{A}}(n))$.
\end{proposition}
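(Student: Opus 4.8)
The plan is to describe the algorithm $\mathrm{Alg}_{\mathrm{A}} \circ \mathsf{Code}$ explicitly and then read off both its correctness and its running time. On an input $(\mathsf{input}, x_0, x_1, z_0, z_1)$, the algorithm will first set $n = |\mathsf{input}|$ and $\ell = \ell_{\mathrm{A}}(n)$, then run the decoding algorithm $\mathrm{Alg}_{\mathrm{Dec}}$ of the code family on the pairs $(\ell, z_0)$ and $(\ell, z_1)$ to obtain $y_0 = \coded_\ell(z_0)$ and $y_1 = \coded_\ell(z_1)$. If either value equals $\bot$ it rejects; otherwise $y_0, y_1 \in \{0,1\}^\ell$, and it returns the verdict of $\mathrm{Alg}_{\mathrm{A}}(\mathsf{input}, x_0, x_1, y_0, y_1)$.

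For correctness, I would unfold \Cref{def:error-correcting-answers}: the tuple $(\mathsf{input}, x_0, x_1, z_0, z_1)$ lies in $L_{\mathrm{A}} \circ \mathsf{Code}$ exactly when there exist strings $y_0, y_1$ with $z_0 = \codee_\ell(y_0)$, $z_1 = \codee_\ell(y_1)$, and $(\mathsf{input}, x_0, x_1, y_0, y_1) \in L_{\mathrm{A}}$. By the defining properties of the code (\Cref{def:error-correcting-code}, inherited by \Cref{def:efficient-codes}), $\coded_\ell$ inverts $\codee_\ell$ on all of $\{0,1\}^\ell$ and outputs $\bot$ on every string outside the image of $\codee_\ell$. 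Hence when $z_0, z_1$ are genuine codewords the decoding step recovers the unique preimages, and the call to $\mathrm{Alg}_{\mathrm{A}}$ decides membership correctly; and when $z_0$ or $z_1$ is not a codeword, no admissible $y_b$ exists, so the input is not in $L_{\mathrm{A}} \circ \mathsf{Code}$ and the algorithm correctly rejects at the decoding step.

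For the running time, on inputs of the stated sizes ($|\mathsf{input}| = n$, $|x_0| = |x_1| = \ell_{\mathrm{Q}}(n)$, $|z_0| = |z_1| = m(\ell_{\mathrm{A}}(n))$) I would bound the two invocations of $\mathrm{Alg}_{\mathrm{Dec}}$ by $t_{\mathrm{Dec}}(\ell) = t_{\mathrm{Dec}}(\ell_{\mathrm{A}}(n))$ each, using the efficiency guarantee of the code family, and the single invocation of $\mathrm{Alg}_{\mathrm{A}}$ on a size-$n$ input by $T(n)$, using the hypothesis; all remaining bookkeeping (computing $\ell$, testing for $\bot$, copying strings between the calls) is subsumed in these two terms. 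Summing yields the claimed bound $T(n) + t_{\mathrm{Dec}}(\ell_{\mathrm{A}}(n))$, with the constant factor of two from the two decodings absorbed into the $t_{\mathrm{Dec}}$ term. I do not expect a genuine obstacle here; the only subtlety worth flagging is that $\mathrm{Alg}_{\mathrm{Dec}}$ must be invoked with the code-family index $\ell_{\mathrm{A}}(n)$ — the \emph{answer} length of $V$ — rather than with $|\mathsf{input}|$ or $|z_b|$, so that each decoded string has exactly the length $\ell_{\mathrm{A}}(n)$ that $\mathrm{Alg}_{\mathrm{A}}$ expects in its answer slots.
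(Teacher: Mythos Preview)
Your proposal is correct and matches the paper's proof essentially line for line: describe the algorithm as decode-then-verify, argue correctness from the defining property $\coded_\ell(\codee_\ell(y)) = y$ and $\coded_\ell(w) = \bot$ off the image, and bound the running time as $t_{\mathrm{Dec}}(\ell_{\mathrm{A}}(n)) + T(n)$. The paper additionally includes an explicit length check on $z_0, z_1$ before decoding, but since the proposition's runtime claim is stated only for inputs of the correct lengths this is immaterial.
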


\begin{proof}
On input $(\mathsf{input}, x_0, x_1, z_0, z_1)$, we define the action of $\mathrm{Alg}_{\mathrm{A}} \circ \mathsf{Code}$ as follows.
\begin{enumerate}
\item Compute $n$, the length of $\mathsf{input}$. Set $\ell := \ell_{\mathrm{A}}(n)$.
\item Check that~$z_0$ and~$z_1$ have length~$m(\ell)$. If they don't, reject.
\item Compute $y_0 = \mathrm{Alg}_{\coded}(\ell, z_0)$ and $y_1 = \mathrm{Alg}_{\coded}(\ell, z_1)$. If either $y_0$ or $y_1$ is $\bot$, reject.\label{item:decode-that-string}
\item Otherwise, we know that $y_0, y_1 \in \{0, 1\}^\ell$.  Run $\mathrm{Alg}_{\mathrm{A}}(\mathsf{input}, x_0, x_1, y_0, y_1)$. Accept if it accepts, and reject if it rejects.\label{item:run-that-algorithm}
\end{enumerate}
It is immediate that $\mathrm{Alg}_{\mathrm{A}} \circ \mathsf{Code}$ computes $L_{\mathrm{A}} \circ \mathsf{Code}$.
As for the time complexity, \Cref{item:decode-that-string} runs in time $t_{\mathrm{Dec}}(\ell_{\mathrm{A}}(n))$ 
and \Cref{item:run-that-algorithm} runs in time $T(n)$.
Combined, these two give the bound in the proposition statement.
\end{proof}

Next, we show that this construction solves the ``problem" discussed at the beginning of the section,
namely that if we perform answer reduction by replacing $\mathrm{Alg}_{\mathrm{A}} \circ \mathsf{Code}$ with a PCPP verifier,
rather than just $\mathrm{Alg}_{\mathrm{A}}$, then the verifier will reject \emph{all} inputs which are not in the language,
not just those which are $\delta$-far, provided that those inputs are encoded as per \Cref{def:error-correcting-answers}.

\begin{proposition}\label{prop:new-soundness}
Let~$V$ and~$\codec_k$ be as in \Cref{def:error-correcting-answers}.
Let $s, \gamma > 0$ be constants,
and let $V_{\mathrm{PCPP}}$ be the PCPP verifier for the language $L_{\mathrm{A}} \circ \mathsf{Code}$
guaranteed by \Cref{thm:pcpp-mie} with these parameters.
Suppose that $1-\eta(k) \geq 2\gamma$ for all~$k$.
Then we have the following soundness condition.
\begin{itemize}
\item[$\circ$] \textbf{Soundness:} Consider $(\mathsf{input}, x_0, x_1, z_0, z_1)$ for $\mathsf{input}$ of length~$n$, $x_0$ and~$x_1$ of length~$\ell_{\mathrm{Q}}(n)$,
	and $z_0, z_1 \in \codes_{\ell}$, for $\ell := \ell_{\mathrm{A}}(n)$.
	Suppose this does not correspond to the encoding of an accepting assignment in $L_{\mathrm{A}}$.
	In other words, suppose that there are no $y_0, y_1 \in \{0, 1\}^\ell$ such that $(\mathsf{input}, x_0, x_1, y_0, y_1)$ is in $L_{\mathrm{A}}$
	and $z_0 = \codee_{\ell}(y_0), z_1 = \codee_{\ell}(y_1)$.
	Then $V_{\mathrm{PCPP}}$ accepts $(\mathsf{input}, x_0, x_1, z_0, z_1)$ with probability at most~$s$.
	In math, for every $\pi$ it holds that
	\begin{equation*}
		\Pr_R[\text{$V_{\mathrm{PCPP}}^{z_0, z_1, \pi}(\mathsf{input}, x_0, x_1, |z_0| + |z_1|; R)$ accepts}] \leq s.
	\end{equation*}
\end{itemize}
\end{proposition}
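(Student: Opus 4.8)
The plan is to reduce the claim directly to the soundness clause of Mie's PCPP (\Cref{thm:pcpp-mie}), which is what was invoked to build $V_{\mathrm{PCPP}}$ in the first place. First I would fix the dictionary between the data in the proposition and the PCPP formalism: the \emph{explicit input} is $x := (\mathsf{input}, x_0, x_1)$, the \emph{implicit input} is the concatenation $y := (z_0, z_1)$ (of total length $2 m(\ell)$ over $\F_q$, with $\ell := \ell_{\mathrm{A}}(n)$), and $L_x = \{y' : (x, y') \in L_{\mathrm{A}} \circ \mathsf{Code}\}$. By \Cref{def:error-correcting-answers}, $L_x \cap \Sigma^{|y|}$ is exactly the set of pairs $(\codee_\ell(y_0), \codee_\ell(y_1))$ ranging over all $(y_0, y_1) \in \{0,1\}^\ell \times \{0,1\}^\ell$ with $(\mathsf{input}, x_0, x_1, y_0, y_1) \in L_{\mathrm{A}}$ (here one uses that $\mathrm{Alg}_{\mathrm{A}}$ only accepts answer strings of the prescribed length, so every element of $L_x$ of the right total length arises this way). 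The soundness guarantee of \Cref{thm:pcpp-mie} then says: if $y$ is $\gamma$-far, in normalized Hamming distance, from $L_x \cap \Sigma^{|y|}$, then for every proof $\pi$ the verifier $V_{\mathrm{PCPP}}^{z_0, z_1, \pi}$ accepts with probability at most $s$. So it suffices to show that, under the hypothesis of the proposition, $(z_0, z_1)$ really is $\gamma$-far from $L_x$.

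I would establish this by contradiction. Suppose $(z_0, z_1)$ is \emph{not} $\gamma$-far from $L_x \cap \Sigma^{|y|}$; then some $(\codee_\ell(y_0), \codee_\ell(y_1)) \in L_x$ has normalized Hamming distance strictly less than $\gamma$ to $(z_0, z_1)$. Since Hamming distance is additive across the two blocks, the (unnormalized) distances satisfy $\mathrm{dist}(z_0, \codee_\ell(y_0)) + \mathrm{dist}(z_1, \codee_\ell(y_1)) < 2\gamma\, m(\ell)$, and because both terms are nonnegative, each of $z_0, z_1$ lies within normalized Hamming distance (with respect to block length $m(\ell)$) strictly less than $2\gamma$ of $\codee_\ell(y_0)$, resp.\ $\codee_\ell(y_1)$. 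But $z_0 \in \codes_\ell$ by hypothesis and $\codee_\ell(y_0) \in \codes_\ell$ because the encoding map lands in $\codes_\ell$; by the distance property in \Cref{def:error-correcting-code}, two \emph{distinct} elements of $\codes_\ell$ agree in at most an $\eta(\ell)$ fraction of coordinates, hence have normalized distance at least $1 - \eta(\ell) \geq 2\gamma$ (this is exactly where the hypothesis $1 - \eta(k) \geq 2\gamma$ is used). The strict inequality $< 2\gamma$ therefore forces $z_0 = \codee_\ell(y_0)$, and symmetrically $z_1 = \codee_\ell(y_1)$.

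It follows that $(z_0, z_1) = (\codee_\ell(y_0), \codee_\ell(y_1)) \in L_x$, i.e.\ there \emph{do} exist $y_0, y_1 \in \{0,1\}^\ell$ with $(\mathsf{input}, x_0, x_1, y_0, y_1) \in L_{\mathrm{A}}$ and $z_0 = \codee_\ell(y_0), z_1 = \codee_\ell(y_1)$ — contradicting the hypothesis of the proposition. Hence $(z_0, z_1)$ is $\gamma$-far from $L_x \cap \Sigma^{|y|}$, and plugging this into the PCPP soundness bound yields the stated conclusion $\Pr_R[V_{\mathrm{PCPP}}^{z_0, z_1, \pi}(\mathsf{input}, x_0, x_1, |z_0| + |z_1|; R) \text{ accepts}] \leq s$ for all $\pi$.

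I do not anticipate a genuine obstacle: the argument is a short bookkeeping exercise, not a deep one. The only points needing care are the normalization conventions — passing from ``$\gamma$-far on the length-$2m(\ell)$ concatenation'' to ``$2\gamma$-far on each length-$m(\ell)$ block,'' which is precisely why the hypothesis reads $1 - \eta(k) \geq 2\gamma$ rather than $\geq \gamma$ — and the identification of $L_x \cap \Sigma^{|y|}$ with the set of two-block encodings of $L_{\mathrm{A}}$-accepting tuples. If one wishes to be fully pedantic one could also track how $\F_q$-symbol strings are represented as bit strings for the PCPP, but that is immaterial to the distance computation and I would relegate it to a remark.
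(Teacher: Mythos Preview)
Your proposal is correct and follows essentially the same approach as the paper: identify $L_x$ with the set of two-block encodings of $L_{\mathrm{A}}$-accepting tuples, then use the code distance property on $\codes_\ell$ together with the factor-of-two normalization from concatenation to show $(z_0,z_1)$ is $\gamma$-far from $L_x$, and invoke PCPP soundness. The paper argues directly (any element of $L_x$ differs from $(z_0,z_1)$ in at least one block, hence is $\gamma$-far overall) whereas you argue the contrapositive by contradiction, but the content is identical.
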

\begin{proof}
Given $(\mathsf{input}, x_0, x_1, z_0, z_1)$, write $A:= (L_{\mathrm{A}} \circ \mathsf{Code})_{\mathsf{input}, x_0, x_1} \cap \F_{q(\ell)}^{|z_0| + |z_1|}$.
By assumption, $(z_0, z_1)$ is not in~$A$.
Using this, we would like to show that $(z_0, z_1)$ is in fact $\gamma$-far from~$A$, in which case the PCPP verifier accepts with probability at most~$s$.

To do this, suppose $(z_0', z_1') \in A$.
By design, there exists $y_0', y_1' \in \{0, 1\}^\ell$ such that $z_0' = \codee(y_0')$ and $z_1' = \codee(y_1')$.
This means that $z_0', z_1' \in \codes_{\ell}$.
On the other hand, since~$(z_0, z_1)$ is not in~$A$, we must have either $z_0' \neq z_0$ or $z_1' \neq z_1$ (or both).
We will assume the first without loss of generality.
Then by the distance property of the code, since $z_0, z_0' \in \codes_{\ell}$,
their normalized Hamming distance is at least $1-\eta(\ell) \geq 2\gamma$.
This immediately means that $(z_0,z_1)$ and $(z_0', z_1')$.
are at least $\gamma$-far apart, and we are done.
\end{proof}

\subsection{The answer reduction protocol}

We are almost ready to state the answer reduction protocol.
Before doing so, we discuss one final nuisance, which is that we will also need the prover to encode their \emph{proof} with an error-correcting code. 
The reason is that we would like to query the proof on a view~$\bJ$ sampled by the PCPP verifier.
However, the prover might cheat and respond based only on the view~$\bJ$ rather than a global proof~$\pi$.
To prevent this, we force them to commit to a global error-correcting encoding of their proof~$\pi$ using a tester as in \Cref{def:testing-a-code}.
Then, we use the fact that the error-correcting code \emph{embeds} their string to allow us to extract the view~$\bJ$ by asking for the coordinates in~$\mu(\bJ)$.

We now state the answer reduction protocol.

\begin{definition}\label{def:answer-reduction-game}
We instantiate the answer-reduced $\mip^*$ protocol with the following algorithms and parameters.
\begin{itemize}
\item[$\circ$] Let $V = (\mathrm{Alg}_{\mathrm{Q}}, \mathrm{Alg}_{\mathrm{A}})$ be an $\MIP^*$ verifier for a language~$L$.
		Write $L_{\mathrm{A}}$ for the language decided by $\mathrm{Alg}_{\mathrm{A}}$.
		Suppose on inputs of size~$n$, the verifier~$V$ has question length~$\ell_{V,\mathrm{Q}}(n)$, answer length~$\ell_{V,\mathrm{A}}(n)$,
		question time~$t_{V,\mathrm{Q}}(n)$, and answer time~$t_{V,\mathrm{A}}(n)$.
\item[$\circ$] Let $\codec_k = (\codee_k,\coded_k, \codes_k)$ be a $(k, m, q, \eta, t_{\mathrm{Dec}}, t_{\mathrm{Emb}})$-efficient code family
		with decoding algorithm $\mathrm{Alg}_{\mathrm{Dec}}$ and embedding $\mu_k$.
\item[$\circ$] Let $\game_k$ be a game which tests for $\codec_k$ with robustness $\chi_k(\eps)$.
		Suppose it has question length~$\ell_{\game, \mathrm{Q}}(k)$, answer length~$\ell_{\game,\mathrm{A}}(k)$,
		question time~$t_{\game,\mathrm{Q}}(k)$, and answer time~$t_{\game,\mathrm{A}}(k)$.
\item[$\circ$] Let $s, \delta > 0$ be constants, and let $V_{\mathrm{PCPP}}$ be the PCPP verifier for the language $L_{\mathrm{A}} \circ \mathsf{Code}$
			guaranteed by \Cref{thm:pcpp-mie} with these parameters.
			Suppose on inputs of size~$n$ it has proof length~$\ell_\pi(n)$.
			By \Cref{prop:composed-runtime}, $L_{\mathrm{A}} \circ \codec$ is in time
			$t_{\mathrm{compose}}(n) = t_{V, \mathrm{A}}(n) + t_{\mathrm{Dec}}(\ell_{V,\mathrm{A}}(n))$.
			We can therefore write $V_{\mathrm{PCPP}}$'s verification time as
			\begin{equation*}
				t_{\mathrm{PCPP}}(n) = \poly(n + \ell_{V, \mathrm{Q}}(n), \log(m(\ell_{V, \mathrm{A}}(n))), \log(t_{\mathrm{compose}}(n))).
			\end{equation*}
			Finally, $\ell_\pi(n) = t_{\mathrm{compose}}(n) \cdot \mathrm{poly log}(t_{\mathrm{compose}}(n))$.
\end{itemize}
Write $\ell_1 := \ell_{V, \mathrm{A}}(n)$ and $\ell_2 := \ell_\pi(n)$.
Then the \emph{answer reduction game} $\game_{\mathrm{answer}}(\mathsf{input};V, \codec, \game, s, \delta)$
is given in \Cref{fig:answer-reduction}.
We write $V_{\mathrm{answer}}$ for the corresponding verifier.
\end{definition}

{
	\floatstyle{boxed}
	\restylefloat{figure}
	\begin{figure}Flip two unbiased coins $\bb, \bc \sim \{0, 1\}$. 
    		Sample questions $(\bx_0, \bx_1) \sim \mathrm{Alg}_{\mathrm{Q}}(\mathsf{input})$.
		Sample a view $\bI_0, \bI_1, \bJ \sim V_{\mathrm{PCPP}}(\mathsf{input}, \bx_0, \bx_1)$.
		Set $\bJ' = \mu_{\ell_2}(\bJ)$.
		Select $\bi_0, \bi_1 \in [m(\ell_1)]$ and $\bj \in [m(\ell_\pi(n))]$ uniformly at random.
		Set $\bT_0 = \bI_0 \cup \{\bi_0\}$, $\bT_1 = \bI_1 \cup \{\bi_1\}$, and $\bU = \bJ' \cup \{\bj\}$.
	With probability $\frac{1}{8}$ each, perform one of the following eight tests.
	\begin{enumerate}
	\item \textbf{Verify:}
		Distribute the question as follows:
			\begin{itemize}
			\item[$\circ$] Player~$\bb$: give $(\bx_0, \bx_1)$, $\bT_0, \bT_1, \bU$; receive $\ba_0, \ba_1, \ba_2$.
			\end{itemize}
		Accept if $V_{\mathrm{PCPP}}(\mathsf{instance}, \bx_0, \bx_1)$ accepts on $\ba_0|_{\bI_0}, \ba_1|_{\bI_1}, \ba_2|_{\bJ'}$.
	\item \textbf{Cross checks:}
	\begin{enumerate}
	\item \textbf{Consistency test:} Distribute the questions as follows:
			\begin{itemize}
			\item[$\circ$] Player~$\bb$: give $(\bx_0, \bx_1)$, $\bT_0, \bT_1, \bU$; receive $\ba_0, \ba_1, \ba_2$.
			\item[$\circ$] Player~$\overline{\bb}$: give $(\bx_0, \bx_1)$, $\bT_0, \bT_1, \bU$; receive $\ba_0', \ba_1', \ba_2'$.
			\end{itemize}
		Accept if $\ba_0 = \ba_0'$, $\ba_1 = \ba_1'$, and $\ba_2 = \ba_2'$.
	\item \textbf{Answer cross-check:} Distribute the questions as follows:
			\begin{itemize}
			\item[$\circ$] Player~$\bb$: give $(\bx_0, \bx_1)$, $\bT_0, \bT_1, \bU$; receive $\ba_0, \ba_1, \ba_2$.
			\item[$\circ$] Player~$\overline{\bb}$: give $\bx_{\bc}, \bT_{\bc}'$; receive $\ba_{\bc}'$.
			\end{itemize}
		Accept if $\ba_{\bc} = \ba_{\bc}'$.
	\item \textbf{Proof cross-check:} Distribute the questions as follows:
			\begin{itemize}
			\item[$\circ$] Player~$\bb$: give $(\bx_0, \bx_1)$, $\bT_0, \bT_1, \bU$; receive $\ba_0, \ba_1, \ba_2$.
			\item[$\circ$] Player~$\overline{\bb}$: give $\bx_0, \bx_1, \bU$; receive $\ba_{2}'$.
			\end{itemize}
		Accept if $\ba_2 = \ba_2'$.
	\end{enumerate}
	\item \textbf{Code checks:}
	\begin{enumerate}
	\item \textbf{Answer code check:}
		Sample questions $(\bw_0, \bw_1) \sim \game_{\ell_1}(\bT_{\bc})$.
		Distribute the questions as follows:
			\begin{itemize}
			\item[$\circ$] Player~$\bb$: give $\bx_{\bc}, \bw_{0}$; receive $\ba_0$.
			\item[$\circ$] Player~$\overline{\bb}$: give $\bx_{\bc}, \bw_{1}$; receive $\ba_1$.
			\end{itemize}
		Accept if $\game_{\ell_1}(\bT_{\bc})$ accepts on $\ba_0, \ba_1$.
	\item \textbf{Proof code check:}
		Sample questions $(\bw_0, \bw_1) \sim \game_{\ell_2}(\bU)$.
		Distribute the questions as follows:
			\begin{itemize}
			\item[$\circ$] Player~$\bb$: give $\bx_0, \bx_1$, $\bw_{0}$; receive $\ba_0$.
			\item[$\circ$] Player~$\overline{\bb}$: give $\bx_0, \bx_1$, $\bw_{1}$; receive $\ba_1$.
			\end{itemize}
		Accept if $\game_{\ell_2}(\bU)$ accepts on $\ba_0, \ba_1$.
	\end{enumerate}
    \end{enumerate}
    \caption{The answer reduction game $\game_{\mathrm{answer}}(\mathsf{input};V, \codec, \game, s, \delta)$.\label{fig:answer-reduction}}
  \end{figure}
}

\begin{theorem}\label{thm:answer-reduction}
Suppose~$V$, $\codec$, $\game$, and $V_{\mathrm{PCPP}}$ are as in \Cref{def:answer-reduction-game}.
Suppose $s, \gamma$ are chosen to be constants such that $\eta(k) \geq 2\gamma$ for all~$k$.
Suppose further that~$V$ has the following property: for any input in~$L$,
the provers have a real commuting EPR strategy with value~$1$.
Then~$V_{\mathrm{answer}}$ is also an $\MIP^*$ verifier for~$L$ with the following two conditions:
\begin{itemize}
\item[$\circ$] (Completeness) If $\mathsf{input} \in L$, then there is a value-$1$ strategy.
\item[$\circ$] (Soundness) Given $\mathsf{input}$, suppose there is a strategy with value $1-\eps$.
			Then there is a strategy for~$V$ on input $\mathsf{input}$ with value $1-\delta(\eps)$, where $\delta(\eps)$ is given by
\begin{equation*}
\delta(\eps) := \poly(\chi_{\ell_1}(\poly(\eps)), \chi_{\ell_2}(\poly(\eps)), \eta(\ell_1), \eta(\ell_2)).
\end{equation*}
\end{itemize}
Hence, if we choose our parameters so that $1-\delta(\eps)$ is greater than the soundness of~$V$,
this implies that $V_{\mathrm{answer}}$ is an $\MIP^*$ verifier for~$L$ with soundness $1-\eps$.

Furthermore, the question and answer lengths and runtimes are dominated by two subroutines: the ``Verify" subroutine $S_1$ 
and the ``Code Check" subroutine $S_2$ (consisting of both the answer code check and the proof code check).
The complexity of the Verify subroutine is
\begin{align*}
\qlength{S_1} &= O(\ell_{V, \mathrm{Q}}(n) + \log(m(\ell_{V,\mathrm{A}}(n)))+ \log(m(\ell_{\pi}(n)))),\\
\alength{S_1} &= O(\log(q(\ell_{V,\mathrm{A}}(n))) + \log(q(\ell_\pi(n)))),\\
\qtime{S_1} &= O\left(t_{V,\mathrm{Q}}(n) + t_{\mathrm{PCPP}}(n) + t_{\mathrm{Emb}}(\ell_{\pi}(n))\right),\\
\atime{S_1} &= O(t_{\mathrm{PCPP}}(n)).
\end{align*}
In addition, the complexity of the Code Check subroutine is
\begin{align*}
\qlength{S_2} &= O(\ell_{\game, \mathrm{Q}}(\ell_{V, \mathrm{A}}(n)) + \ell_{\game, \mathrm{Q}}(\ell_\pi(n)) + \ell_{V, \mathrm{Q}}(n)),\\
\alength{S_2} &= O(\ell_{\game, \mathrm{A}}(\ell_{V, \mathrm{A}}(n)) + \ell_{\game, \mathrm{A}}(\ell_\pi(n))),\\
\qtime{S_2} &=  O(t_{\game, \mathrm{Q}}(\ell_{V, \mathrm{A}}(n)) + t_{\game, \mathrm{Q}}(\ell_\pi(n)) + t_{V, \mathrm{Q}}(n) + t_{\mathrm{Emb}}(\ell_{\pi}(n))),\\
\atime{S_2} &= O(t_{\game, \mathrm{A}}(\ell_{V, \mathrm{A}}(n)) + t_{\game, \mathrm{A}}(\ell_\pi(n))).
\end{align*}
Thus, the complexity of the overall protocol is the sum of these two.
\end{theorem}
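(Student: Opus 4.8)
The plan is to prove completeness directly and soundness by decoding a strategy for $\game_{\mathrm{answer}}$ back into one for the original verifier $V$, paying only a polynomial loss in the soundness error. For \textbf{completeness}, I start from a value-$1$ real commuting EPR strategy for $V$ on an input $\mathsf{input}\in L$, which exists by hypothesis. The player who receives both questions $(\bx_0,\bx_1)$ uses the commuting structure to measure simultaneously and obtain answers $\ba_0,\ba_1$ satisfying $\mathrm{Alg}_{\mathrm A}$; it then computes the PCPP proof $\pi$ for the (true) statement $(\mathsf{input},\bx_0,\bx_1,\codee_{\ell_1}(\ba_0),\codee_{\ell_1}(\ba_1))\in L_{\mathrm A}\circ\codec$ guaranteed by \Cref{thm:pcpp-mie}, and answers the subset queries $\bT_0,\bT_1,\bU$ by the relevant coordinates of $\codee_{\ell_1}(\ba_0),\codee_{\ell_1}(\ba_1),\codee_{\ell_2}(\pi)$; the single-question players answer with the corresponding coordinates of the same codewords, which is well defined on an EPR state. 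The Verify test passes by perfect completeness of the PCPP; the four cross-check tests pass because matching coordinates agree exactly on a real commuting EPR strategy (\Cref{fact:heh-heh-heh-gonna-make-anand-prove-this-so-i-can-take-the-day-off}); and the two code-check tests pass by the completeness clause of \Cref{def:testing-a-code} applied to $\game_{\ell_1}$ and $\game_{\ell_2}$. Along the way one checks that the resulting strategy is itself real, commuting, and EPR, so that $\game_{\mathrm{answer}}$ inherits the structural property needed for any further composition.

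For \textbf{soundness}, suppose a strategy $\calS$ passes $\game_{\mathrm{answer}}$ with probability $1-\eps$; then each of the eight subtests passes with probability $1-O(\eps)$. First, the answer code check plays $\game_{\ell_1}(\bT_{\bc})$ and the proof code check plays $\game_{\ell_2}(\bU)$, so the soundness clause of \Cref{def:testing-a-code} (the subset-tester guarantee of $\game$) produces global measurements with outcomes $w_0,w_1$ in $\codes_{\ell_1}$ and $\sigma$ in $\codes_{\ell_2}$ that are $\consistency$-consistent, on the queried subsets, with the single-question players' answers, up to errors $\chi_{\ell_1}(\poly(\eps))$ and $\chi_{\ell_2}(\poly(\eps))$. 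Next, the answer and proof cross-check tests (for both values of $\bc$), together with the consistency test, transfer this to the player holding $(\bx_0,\bx_1)$: its responses on $\bT_0,\bT_1,\bU$ are $\consistency$-close to the coordinates of codewords drawn from these global measurements. Discarding the auxiliary random coordinates $\bi_0,\bi_1,\bj$ by data processing (\Cref{fact:specialize-the-simeq}), its responses on $\bI_0,\bI_1,\bJ'=\mu_{\ell_2}(\bJ)$ are $\consistency$-close to $w_0|_{\bI_0}$, $w_1|_{\bI_1}$, and $\sigma|_{\mu_{\ell_2}(\bJ)}$; by the embedding property of $\codec$ the last is exactly a view of a fixed purported proof string $\pi^\ast$ over the PCPP alphabet. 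Since the Verify test accepts with probability $1-O(\eps)$, the verifier $V_{\mathrm{PCPP}}$ accepts on implicit inputs $w_0,w_1$ and proof $\pi^\ast$ with probability $1-O(\eps)-\delta(\eps)$; by the contrapositive of \Cref{prop:new-soundness} (using $\eta(\ell_i)\geq 2\gamma$ and the constant soundness bound $s<1$), the event that $(w_0,w_1)$ fails to be an encoding $(\codee_{\ell_1}(y_0),\codee_{\ell_1}(y_1))$ of an accepting assignment $(\mathsf{input},\bx_0,\bx_1,y_0,y_1)\in L_{\mathrm A}$ has probability only $\delta(\eps)$.

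Finally I decode a strategy for $V$: Alice, on question $x_0$, measures the global answer measurement at $x_0$ and outputs $\coded_{\ell_1}$ of the outcome (or $0^{\ell_1}$ if it is $\bot$), and Bob does likewise at $x_1$. Because these are precisely the single-question measurements appearing in the code and cross-check tests, and because the honest strategy is EPR, the joint distribution of their decoded outputs agrees up to $\delta(\eps)$ with $(\coded(w_0),\coded(w_1))$ for the codewords read off by the player holding both questions; this is the step where the oracularized ``both-question'' measurement is replaced by a product of the two single-question ones, using \Cref{fact:approx-delta-game-value}, \Cref{fact:sandwich}, and \Cref{fact:heh-heh-heh-gonna-make-anand-prove-this-so-i-can-take-the-day-off} to move $G^{x_1}$ between Alice's and Bob's registers. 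By the previous paragraph this pair lies in $L_{\mathrm A}$ with probability $1-\delta(\eps)$, so the decoded strategy has value $1-\delta(\eps)$ in $V$, which is the claimed soundness. The quantitative bounds on $\qlength{},\alength{},\qtime{},\atime{}$ of the two dominant subroutines $S_1$ (Verify) and $S_2$ (Code checks) then follow by substituting the complexities of \Cref{thm:pcpp-mie} (each view has $O(1)$ symbols, so a subset query costs $O(\log m(\cdot))$ per coordinate and an answer costs $O(\log q(\cdot))$), of $\codec$ (embedding time $t_{\mathrm{Emb}}$, decode time $t_{\mathrm{Dec}}$ entering $t_{\mathrm{compose}}$), and of $\game$, which is routine. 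I expect the main obstacle to be the transfer step: bookkeeping which measurement acts on which player and register while passing between the oracularized prover and the separate provers of $V$, and checking that every link loses only a $\poly$ factor in the error so the final bad-event probability is a genuine $\delta(\eps)$ rather than a constant.
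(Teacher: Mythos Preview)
Your plan follows essentially the same architecture as the paper's proof: code checks yield global codeword measurements, cross-checks transfer them to the two-question player, PCPP soundness forces the codewords to decode to accepting assignments, and then one extracts a strategy for $V$ by decoding. Two technical points deserve correction, though.

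First, in the final ``transfer'' step of soundness you invoke \Cref{fact:heh-heh-heh-gonna-make-anand-prove-this-so-i-can-take-the-day-off} to move $G^{x_1}$ across Alice and Bob. That fact requires a \emph{real EPR} strategy, which you do not have in soundness. The paper instead uses the self-consistency $G^{x_c}_{w}\otimes I\consistency I\otimes G^{x_c}_{w}$ that is part of the conclusion of the subset-tester soundness (\Cref{thm:low-degree-code-tester}, second displayed guarantee in \Cref{eq:just-applied-low-degree}); combined with the cross-check consistency this lets one derive the approximate commutation $G^{x_0}_{w_0}G^{x_1}_{w_1}\approx G^{x_1}_{w_1}G^{x_0}_{w_0}$ and then $G^{x_0}_{w_0}\otimes G^{x_1}_{w_1}\approx \Lambda^{x_0,x_1}_{w_0,w_1}$ without any EPR assumption.

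Second, the step where the two-question player's joint answer on $(\bT_0,\bT_1,\bU)$ is replaced by the restriction of a single jointly sampled triple $(w_0,w_1,\pi)$ needs \Cref{fact:low-degree-sandwich-on-steroids}, not \Cref{fact:sandwich}. The cross-checks only give you consistency on the \emph{evaluations} $w|_{\bT_c}$, not on the full codewords, so you must use the distance property of the code on a random coordinate to bootstrap; this is precisely why $\bT_c=\bI_c\cup\{\bi_c\}$ and $\bU=\bJ'\cup\{\bj\}$ carry the extra uniform indices. Your description (``discard the auxiliary random coordinates by data processing'') has the order inverted: the random indices are consumed in the sandwich step to upgrade subset-consistency to codeword-consistency, and only after building the joint measurement $\Lambda^{x_0,x_1}_{w_0,w_1,\pi}=G^{x_0}_{w_0}G^{x_1}_{w_1}H^{x_0,x_1}_{\pi}G^{x_1}_{w_1}G^{x_0}_{w_0}$ do you restrict to $\bI_0,\bI_1,\bJ'$ for the PCPP verification.
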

\begin{proof}
The fact  that~$S_1$ and~$S_2$ dominate the lengths and runtimes of the protocol
is because~$S_1$ dominates the lengths and runtimes of the two cross-check subroutines,
whose questions and answers are subsets of those in~$S_1$.
Now we compute the complexity of~$S_1$.
\begin{itemize}
\item[$\circ$] \textbf{Question length:} The pair $(\bx_0, \bx_1)$ has total length $\ell_{V,\mathrm{Q}}(n)$ by definition.
		The pair $\bI_0, \bI_1$ are subsets of indices of constant size into each of the implicit inputs of $L_{\mathrm{A}} \circ \codec$,
		which are supposed to be encodings of strings of size $\ell_{V, \mathrm{A}}(n)$.
		Hence, the encodings have size $m(\ell_{V, \mathrm{A}}(n))$, and so each input is specified with the log of this many bits.
		Finally, $\bJ$ is a constant-sized set of indices into a proof of size $\ell_{\pi}(n)$, and $\mu(\bJ)$ converts these into indices into an encoding of of this proof.
		As the encoding has length $m(\ell_{\pi}(n))$, each index can be specified with the log of this many bits.
\item[$\circ$] \textbf{Answer length:} The strings $\ba_0, \ba_1$ contains values from an error-correcting code with alphabet $q(\ell_{V, \mathrm{A}}(n))$,
		and the string~$\ba_2$ contains values from an error-correcting code with alphabet $q(\ell_{\pi}(n))$.
\item[$\circ$] \textbf{Question time:} The running time of $\mathrm{Alg}_{\mathrm{Q}}$ is $t_{V,\mathrm{Q}}(n)$.
		The running time of $V_{\mathrm{PCPP}}$ is $t_{\mathrm{PCPP}}(n)$.
		Finally, the running time to compute $\mu(\bJ)$ given~$\bJ$ is $t_{\mathrm{Emb}}(\ell_{\pi}(n))$.
\item[$\circ$] \textbf{Answer time:} The running time is simply the running time of $V_{\mathrm{PCPP}}$, i.e.\ $t_{\mathrm{PCPP}}(n)$.
\end{itemize}
As for the complexity of~$S_2$, it just performs the code tester~$\game_k$ for message lengths $k = \ell_{V, \mathrm{A}}(n)$ and $\ell_\pi(n)$ and so inherits the lengths and runtimes of the tester for these two values of~$k$, except on top of that it also has to sample~$(\bx_0, \bx_1)$ and compute~$\bJ'$.
Sampling~$(\bx_0, \bx_1)$ takes time~$t_{V,\mathrm{Q}}(n)$ and contributes $O(\ell_{V, \mathrm{Q}}(n))$ to the question lengths,
and computing~$\bJ'$ takes time~$t_{\mathrm{Emb}}(\ell_{\pi}(n))$.

\paragraph{Completeness.}
Suppose $\mathsf{input}$ is in~$L$.
Then there is a real commuting EPR strategy $(\psi, M)$ with value~$1$ for~$V$ on $\mathsf{input}$.
We will use this to demonstrate a value-$1$ strategy for $V_{\mathrm{answer}}$.
This will be the strategy $(\psi, G)$ which uses the same EPR state $\ket{\psi}$ and has measurement matrices~$G$ defined as follows.

Fix questions $x_0, x_1$, $T_0, T_1$, and $U$.  We begin by defining the simplest measurement,
\begin{equation}\label{eq:i-dont-know-what-to-call-this}
G^{x_c, T_c}_{a_c} := M^{x_c}_{[\codee_{\ell_1}(z_c)|_{T_c} = a_c]}.
\end{equation}
If Alice and Bob measure with $M^{x_0}$ and $M^{x_1}$ and receive strings $z_0, z_1$,
then because this strategy is value~$1$,
we will always have $V(\mathsf{input}, x_0, x_1, z_0, z_1) = 1$.
As a result, there always exists \emph{some} proof for $V_{\mathrm{PCPP}}$
that $(\mathsf{input}, x_0, x_1, \codee_{\ell_1}(z_0), \codee_{\ell_1}(z_1))$ is in $L_{\mathrm{A}} \circ \mathsf{Code}$.
We denote this proof $\pi(x_0, x_1, z_0, z_1)$; if there are multiple such proofs, we pick one arbitrarily.
Then we define the measurement
\begin{equation}\label{eq:i-dont-know-what-to-call-this-part-two-return-of-the-bat}
G^{x_0, x_1, U}_{a_2} := (M^{x_0} \cdot M^{x_1})_{[\codee_{\ell_2}(\pi(x_0, x_1, z_0, z_1))|_{U} = a_2}.
\end{equation}
Next, we define the measurement
\begin{equation*}
G^{x_0, x_1, T_0, T_1, U}_{a_0, a_1, a_2}
:= (M^{x_0}\cdot M^{x_1})_{[\codee_{\ell_1}(z_0)|_{T_0}, \codee_{\ell_1}(z_1)|_{T_1}, \codee_{\ell_2}(\pi(x_0, x_1, z_0, z_1))|_{U} = a_0, a_1, a_2]}.
\end{equation*}
Now, via \Cref{eq:i-dont-know-what-to-call-this,eq:i-dont-know-what-to-call-this-part-two-return-of-the-bat},
the~$G$ measurement is exactly of the form required by \Cref{def:testing-a-code}.
As a result, it can be extended to a measurement which passes the answer and proof code checks with probability~$1$.
Performing this extension concludes the design of the strategy.

By construction, this strategy passes the answer and proof code checks with probability~$1$.
As for the remaining tests, let us begin with the answer cross-check in the case of $\bc = 0$, the other case being symmetric.
Because~$M$ is a real commuting EPR strategy,
by \Cref{fact:heh-heh-heh-gonna-make-anand-prove-this-so-i-can-take-the-day-off}
we have that $M^x_a \otimes I_{\reg{Bob}} \consistency_0 I_{\reg{Alice}} \otimes M^x_a$ for any distribution on~$x$.
If we consider the measurement $(M^{x_0} \cdot M^{x_1})_{z_0, z_1}$, then $(M^{x_0} \cdot M^{x_1})_{z_0} = M^{x_0}_{z_0}$. As a result,
\begin{equation*}
(M^{x_0} \cdot M^{x_1})_{z_0} \otimes I_{\reg{Bob}}
\consistency_0 I_{\reg{Alice}} \otimes M^{x_0}_{z_0}.
\end{equation*}
Finally, by data processing (\Cref{fact:specialize-the-simeq}), this implies that
\begin{equation*}
(M^{x_0} \cdot M^{x_1})_{[\codee_{\ell_1}(z_0)|_{T_0}=a_0]} \otimes I_{\reg{Bob}}
\consistency_0 I_{\reg{Alice}} \otimes M^{x_0}_{[\codee_{\ell_1}(z_0')|_{T_0} = a_0]}.
\end{equation*}
But this is equivalent to saying that $G^{x_0, x_1, T_0, T_1, U}_{a_0} \otimes I_{\reg{Bob}} \consistency_0 I_{\reg{Alice}} \otimes G^{x_0, T_0}_{a_0}$,
which implies passing the cross-check test with probability~$1$.
A similar argument holds for the other tests, with the exception of the verification step.

Consider the measurement $M^{x_0}_{z_0} \cdot M^{x_1}_{z_1}$.
By construction and the properties of the PCPP verifier, if this measurement always outputs~$z_0, z_1$ such that $V(\mathsf{input}, x_0, x_1, z_0, z_1) = 1$,
then the~$G$ strategy always passes the verify step.
But because~$M$ is a real commuting EPR strategy,
$M^x_z \otimes I_{\reg{Bob}} \consistency_0 I_{\reg{Alice}} \otimes M^x_z$,
which implies that
\begin{equation*}
M^{x_0}_{z_0} \otimes M^{x_1}_{z_1}
\approx_0 M^{x_0}_{z_0} \cdot M^{x_1}_{z_1} \otimes I_{\reg{Bob}}.
\end{equation*}
Thus, these two measurements have the same output distribution.
But the left-hand side always outputs~$z_0, z_1$ which satisfy the verifier, because this strategy passes the verifier with probability~$1$.
This concludes the completeness step.

\paragraph{Soundness.}
Suppose $\mathsf{input}$ is not in~$L$.
Let $(\psi, M)$ be a strategy that passes with probability $1-\eps$.

\paragraph{Code checks.}
Passing the overall test with probability $1-\eps$
means the strategy passes the answer code check with probability $1-8\eps$.
Given values $c $, $x_{c}$, write $1-\eps_{c, x_{c}}$ for the probability the code check passes conditioned on these values.
Then with probability at least $1-8\eps^{1/2}$, $\eps_{c, x_{c}} \leq \eps^{1/2}$.
When this occurs, \Cref{thm:low-degree-code-tester} implies that there exists a measurement $\{G^{x_c}_w\}_w$ with outcomes in $\codes_{\ell_1}$ such that
\begin{equation*}
M^{x_c, T_c}_a \otimes I_{\reg{Bob}}\consistency_{\delta(\eps)} I_{\reg{Alice}} \otimes G^{x_c}_{[w|_{T_c} = a]}
\end{equation*}
with respect to the distribution of~$\bT_c$ conditioned on~$c$ and~${x}_{c}$.
When this does \emph{not} occur, we still can assume such a measurement so that
\begin{equation*}
M^{x_c, T_c}_a \otimes I_{\reg{Bob}}\consistency_{1} I_{\reg{Alice}} \otimes G^{x_c}_{[w|_{T_c} = a]}
\end{equation*}
trivially, by \Cref{fact:trivial-upper-bound-approx-delta}. Thus, if we average over $\bc$ and $\bx_{\bc}$,
\begin{equation}\label{eq:almost-at-tweedledee}
M^{x_c, T_c}_a \otimes I_{\reg{Bob}}\consistency_{\delta(\eps)} I_{\reg{Alice}} \otimes G^{x_c}_{[w|_{T_c} = a]}
\end{equation}
with respect to the distribution on $\bc, \bx_{\bc}, \bT_{\bc}$.
A similar argument with respect to the consistency guarantee of \Cref{thm:low-degree-code-tester} implies that
\begin{equation}\label{eq:consistency-of-the-encoding}
G^{x_c}_w \otimes I_{\reg{Bob}} \consistency_{\delta(\eps)} I_{\reg{Alice}} \otimes G^{x_c}_w.
\end{equation}
By \Cref{fact:specialize-the-simeq}, this implies that
\begin{equation*}
G^{x_c}_{[w|_{T_c} = a]} \otimes I_{\reg{Bob}} \consistency I_{\reg{Alice}} \otimes G^{x_c}_{[w|_{T_c} = a]}.
\end{equation*}
As a result, if we apply \Cref{fact:agreement} to this and \Cref{eq:almost-at-tweedledee} and then use the triangle inequality (\Cref{fact:triangle}), we conclude
\begin{equation}\label{eq:tweedledee}
M^{x_c, I_c}_a \otimes I_{\reg{Bob}}\approx_{\delta(\eps)}G^{x_c}_{[w|_{I_c} = a]} \otimes  I_{\reg{Bob}}
\end{equation}
with respect to the distribution on $\bc, \bx_{\bc}, \bI_{\bc}$.

Applying a similar argument yet again, this time to the proof code check, implies that for every $x_0, x_1$,
there exists a measurement $\{H^{x_0, x_1}_w\}_w$ with outcomes in $\codes_{\ell_2}$ such that
\begin{equation}\label{eq:tweedledum}
M^{x_0, x_1, U}_a \otimes I_{\reg{Bob}} \approx_{\delta(\eps)} H^{x_0, x_1}_{[w|_{U} = a]} \otimes I_{\reg{Bob}}
\end{equation}
with respect to the distribution on~$\bx_0, \bx_1, \bU$.
Thus, by \Cref{fact:approx-delta-game-value}, we can assume that \Cref{eq:tweedledee,eq:tweedledum} hold with equality
with a loss of only $\delta(\eps)$ in the game value.
In addition, by \Cref{thm:naimark}, we can assume that the~$G$ and~$H$ measurements are all projective,
possibly replacing~$\psi$ with a different state.

\paragraph{Cross checks.}
Our next step is to apply the cross-checks. Passing these with probability $1-\delta(\eps)$ implies the bounds
\begin{equation}\label{eq:apply-the-cross-checks}
M^{x_0, x_1, T_0, T_1, U}_{a_0} \otimes I_{\reg{Bob}}\simeq_{\delta(\eps)} I_{\reg{Alice}} \otimes M^{x_0, T_0}_{a_0}
= I_{\reg{Alice}} \otimes G^{x_0}_{[w|_{T_0} = a_0]},
\end{equation}
\begin{equation}\label{eq:apply-the-cross-checks-2}
M^{x_0, x_1, T_0, T_1, U}_{a_1} \otimes I_{\reg{Bob}}\simeq_{\delta(\eps)} I_{\reg{Alice}} \otimes  M^{x_1, T_1}_{a_1}
= I_{\reg{Alice}} \otimes G^{x_1}_{[w|_{T_1} = a_1]},
\end{equation}
\begin{equation*}
M^{x_0, x_1, T_0, T_1, U}_{a_2} \otimes I_{\reg{Bob}}\simeq_{\delta(\eps)} I_{\reg{Alice}} \otimes  M^{x_0, x_1, U}_{a_2}
= I_{\reg{Alice}} \otimes H^{x_0, x_1}_{[w|_{U} = a_2]},
\end{equation*}
\begin{equation}\label{eq:Ms-self-consistency}
M^{x_0, x_1, T_0, T_1, U}_{a_0, a_1, a_2} \otimes I_{\reg{Bob}}\simeq_{\delta(\eps)} I_{\reg{Alice}} \otimes  M^{x_0, x_1, T_0, T_1, U}_{a_0, a_1, a_2}.
\end{equation}
At this point, we would like to apply \Cref{fact:low-degree-sandwich-on-steroids}.
To do so, we have to verify the distance property of our functions,
and this will follow from the fact that we augmented our index sets $\bI_0$, $\bI_1$, and $\bJ'$ with an additional uniformly random index.
To see this, consider two nonequal $w$ and $w'$ in $\codes_{\ell_1}$.
Then for them to agree on $\bT_0$, they must agree on $\bi_0$, and this happens only $\eta(\ell_1)$ fraction of the time.
The same holds for $\bU$, with the bound of $\eta(\ell_2)$.
As a result, \Cref{fact:low-degree-sandwich-on-steroids} implies the following:
consider the POVM measurement $\{\Lambda^{x_0, x_1}_{w_0, w_1, \pi}\}$
with outcomes~$w_0, w_1$ in $\codes_{\ell_1}$ and~$\pi$ in $\codes_{\ell_2}$
defined as
\begin{equation}\label{eq:dont-touch-me-bro}
\Lambda^{x_0, x_1}_{w_0, w_1, \pi}
:= G^{x_0}_{w_0} \cdot G^{x_1}_{w_1} \cdot H^{x_0, x_1}_{\pi} \cdot G^{x_1}_{w_1} \cdot G^{x_0}_{w_0}.
\end{equation}
Then
\begin{equation}\label{eq:gonna-use-this-once}
M^{x_0, x_1, T_0, T_1, U}_{a_0, a_1, a_2} \otimes I_{\reg{Bob}}\consistency_{\delta(\eps)}
I_{\reg{Alice}} \otimes \Lambda^{x_0, x_1}_{[w_0|_{T_0}, w_1|_{T_1}, \pi|_{U} = a_0, a_1, a_2]}.
\end{equation}
From this, \Cref{eq:Ms-self-consistency} implies
\begin{equation}\label{eq:gonna-use-this-once}
M^{x_0, x_1, T_0, T_1, U}_{a_0, a_1, a_2} \otimes I_{\reg{Bob}}\approx_{\delta(\eps)}
\Lambda^{x_0, x_1}_{[w_0|_{T_0}, w_1|_{T_1}, \pi|_{U} = a_0, a_1, a_2]} \otimes I_{\reg{Bob}}.
\end{equation}
Thus, by \Cref{fact:approx-delta-game-value}, we can assume that \Cref{eq:gonna-use-this-once} holds with equality by replacing~$M$ with~$G$,
incurring a loss of only $\delta(\eps)$ in the game value.
(Unlike before, here we do \emph{not} invoke \Cref{thm:naimark} on~$J^{x_0, x_1}$ to make it a projective measurement,
as that would likely change the structure in \Cref{eq:dont-touch-me-bro}, which we will need later.)

\paragraph{Verification.}
The strategy passes the verify check with probability $1-\delta(\eps)$.  
By \Cref{eq:gonna-use-this-once}
(which we now assume is equality),
this is the same probability as if we (i) sample~$\bx_0, \bx_1$,
(ii) use~$\Lambda$ to draw $\bw_0, \bw_1, \bpi$,
(iii) draw $\bI_0, \bI_1, \bJ$ conditioned on~$\bx_0, \bx_1$,
(iv) then draw $\bT_0, \bT_1, \bU$ conditioned on~$\bI_0, \bI_1, \bJ$,
(v) compute $\ba_0 = \bw_0|_{\bT_0}$, $\ba_1 = \bw_1|_{\bT_1}$, and $\ba_2 = \bw_2|_{\bU}$,
and (vi) give $\ba_0|_{\bI_0}, \ba_1|_{\bI_1}, \ba_2|_{\bJ}$ to $V_{\mathrm{PCPP}}$ and accept if it accepts.

Condition on a fixed choice of~$x_0, x_1$ and a draw for~$w_0, w_1, \pi$.
The PCPP verifier receives answers to its~$\bI_0$ and~$\bI_1$ queries based on~$w_0$ and~$w_1$, which are in~$\codes_{\ell_1}$.
In addition, although~$\pi$ is in~$\codes_{\ell_2}$ and may not correspond to the encoding of an actual proof string,
the verifier only queries it at points in the image of the embedding~$\mu_{\ell_2}$.
As a result, the answers $V_{\mathrm{PCPP}}$ receives to its~$J$ queries are consistent with \emph{some} fixed proof string.
Thus, by \Cref{prop:new-soundness}, since $1-\eta(k) \geq 2\gamma$ for all~$k$, if the probability
the verifier accepts is greater than~$s$, then there are strings $y_0, y_1 \in \{0, 1\}^{\ell_1}$
such that $w_0 = \codee_{\ell_1}(y_0)$, $w_1 = \codee_{\ell_1}(y_1)$
and $V(\mathsf{input}, x_0, x_1, y_0, y_1) = 1$.
Averaging over all~$\bx_0, \bx_1$ and~$\bw_0, \bw_1, \bpi$, we conclude that
\begin{equation}\label{eq:absolute-unit-of-a-strategy}
\Pr[V(\mathsf{input}, \bx_0, \bx_1, \coded_{\ell_1}(\bw_0), \coded_{\ell_1}(\bw_1)) = 1] \geq \frac{1-\delta(\eps)-s}{1-s} = 1-\delta(\eps).
\end{equation}
Recall that the decoding map is one-to-one except on those strings not in the range of the encoding map, which it maps to~$\bot$ instead.
As we can assume that the verifier~$V$ always rejects when it receives~$\bot$ for an answer,
this tells us that $\coded_{\ell_1}(\bw_0), \coded_{\ell_1}(\bw_1) \neq \bot$ with probability at least $1-\delta(\eps)$.

\paragraph{Wrapping it up.}
Now we give a strategy for causing the verifier~$V$ to accept with high probability on~$\mathsf{input}$.
It uses state~$\psi$, and given question~$x$ it applies the measurement $\{A^x_a\}_a$ defined as
\begin{equation*}
A^x_a := G^x_{[\coded_{\ell_1}(w) = a]}.
\end{equation*}
Consider the verifier~$V'$ which samples $(\bx_0, \bx_1)$, gives them to Alice and Bob, receives~$\bw_0, \bw_1$,
and accepts if $V(\mathsf{input}, \bx_0, \bx_1, \coded_{\ell_1}(\bw_0), \coded_{\ell_1}(\bw_1)) = 1$.
Then~$V$ accepts on strategy~$A$ with the same probability that~$V'$ accepts on strategy~$G$.
In other words, if we define $S(x_0, x_1)$ to be the set of $(w_0, w_1)$
such that
\begin{equation*}
V(\mathsf{input}, x_0, x_1, \coded_{\ell_1}(w_0), \coded_{\ell_1}(w_1))=1,
\end{equation*}
then the probability that~$V'$ accepts on strategy~$G$ is
\begin{equation}\label{eq:its-the-final-countdown}
\E_{(\bx_0, \bx_1)} \sum_{w_0, w_1 \in S(\bx_0, \bx_1)} \bra{\psi} G^{x_0}_{w_0} \otimes G^{x_1}_{w_1}\ket{\psi}.
\end{equation}

To show this is large, we begin by showing that the~$G$'s commute with each other.
To see this, note that \Cref{eq:apply-the-cross-checks,eq:apply-the-cross-checks-2} implies that for a fixed $c \in \{0, 1\}$,
\begin{equation*}
\Lambda^{x_0, x_1}_{[w_c|_{T_c} = a_c]} \otimes I_{\reg{Bob}}\simeq_{\delta(\eps)}
 I_{\reg{Alice}} \otimes G^{x_c}_{[w_c'|_{T_c} = a_c]}.
\end{equation*}
However, by~the distance properties of our code and the fact that $\bT_c$ contains a uniformly random index, this implies that
\begin{equation}\label{eq:dont-copy-that-floppy}
\Lambda^{x_0, x_1}_{w_c} \otimes I_{\reg{Bob}}\simeq_{\delta(\eps)}
 I_{\reg{Alice}} \otimes G^{x_c}_{w_c}.
\end{equation}
As a result,
\begin{align*}
G^{x_0}_{w_0} \cdot G^{x_1}_{w_1} \otimes I_{\reg{Bob}}
&\approx_{\delta(\eps)} G^{x_0}_{w_0} \otimes \Lambda^{x_0, x_1}_{w_1}\\
&\approx_{\delta(\eps)} I_{\reg{Alice}} \otimes \Lambda^{x_0, x_1}_{w_1}\cdot \Lambda^{x_0, x_1}_{w_0}\\
&\approx_{\delta(\eps)} I_{\reg{Alice}} \otimes \Lambda^{x_0, x_1}_{w_0}\cdot \Lambda^{x_0, x_1}_{w_1}\\
&\approx_{\delta(\eps)} G^{x_1}_{w_1} \otimes \Lambda^{x_0, x_1}_{w_0}\\
&\approx_{\delta(\eps)}  G^{x_1}_{w_1} \cdot G^{x_0}_{w_0} \otimes I_{\reg{Bob}}.
\end{align*}
A similar argument as the one establishing~\eqref{eq:dont-copy-that-floppy} implies that
\begin{equation*}
G^{x_c}_{w_c} \otimes I_{\reg{Bob}}\simeq_{\delta(\eps)}
 I_{\reg{Alice}} \otimes G^{x_c}_{w_c}.
\end{equation*}
Thus,
\begin{align*}
G^{x_0}_{w_0} \otimes G^{x_1}_{w_1}
&= G^{x_0}_{w_0} \cdot G^{x_0}_{w_0} \otimes G^{x_1}_{w_1}\\
&\approx_{\delta(\eps)} G^{x_0}_{w_0} \cdot G^{x_0}_{w_0} \cdot G^{x_1}_{w_1} \otimes I_{\reg{Bob}}\\
&\approx_{\delta(\eps)} G^{x_0}_{w_0} \cdot G^{x_1}_{w_1} \cdot G^{x_0}_{w_0} \otimes I_{\reg{Bob}}\\
&= \Lambda^{x_0, x_1}_{w_0, w_1} \otimes I_{\reg{Bob}}.
\end{align*}
As a result, by \Cref{fact:approx-delta-generalized-game-value},  \Cref{eq:its-the-final-countdown} is at least $1-\delta(\eps)$ by \Cref{eq:absolute-unit-of-a-strategy}.
This concludes the proof of the theorem.
\end{proof}

\subsection{Applying the answer reduction protocol}

In this section, we instantiate \Cref{thm:answer-reduction} with the low-degree code
and then apply it to our $\neexp$ protocol.

\begin{theorem}\label{thm:applied-with-low-degree-code}
Let $V = (\mathrm{Alg}_{\mathrm{Q}}, \mathrm{Alg}_{\mathrm{A}})$ be an $\MIP^*$ verifier for a language~$L$.
		Write $L_{\mathrm{A}}$ for the language decided by $\mathrm{Alg}_{\mathrm{A}}$.
		Suppose on inputs of size~$n$, the verifier~$V$ has question length~$\ell_{V,\mathrm{Q}}(n)$, answer length~$\ell_{V,\mathrm{A}}(n)$,
		question time~$t_{V,\mathrm{Q}}(n)$, and answer time~$t_{V,\mathrm{A}}(n)$.
		Then there exists another $\MIP^*$ verifier $V_{\mathrm{ans}}$ for~$L$ with the following parameters.
\begin{align*}
\qlength{V_{\mathrm{ans}}} &= O(\ell_{V, \mathrm{Q}}(n) + \log(\ell_{V,\mathrm{A}}(n))+ \log(t_{V, \mathrm{A}}(n))),\\
\alength{V_{\mathrm{ans}}} &= O(\mathrm{poly log}(\ell_{V, \mathrm{A}}(n)) + \mathrm{poly log}(t_{V, \mathrm{A}}(n))),\\
\qtime{V_{\mathrm{ans}}} &= O\left(t_{V,\mathrm{Q}}(n)) + \poly(n + \ell_{V, \mathrm{Q}}(n), \log(\ell_{V, \mathrm{A}}(n)), \log(t_{V, \mathrm{A}}(n))\right),\\
\atime{V_{\mathrm{ans}}} &= \poly(n + \ell_{V, \mathrm{Q}}(n), \log(\ell_{V, \mathrm{A}}(n)), \log(t_{V, \mathrm{A}}(n))).
\end{align*}
\end{theorem}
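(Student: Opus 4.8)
\textbf{Proof proposal for \Cref{thm:applied-with-low-degree-code}.}

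The plan is to instantiate the general answer-reduction machinery of \Cref{thm:answer-reduction} with the concrete low-degree code from \Cref{fact:low-degree-code-is-efficient}, and then read off the resulting complexity bounds. First I would recall that \Cref{fact:low-degree-code-is-efficient} provides, for any message length $k$, an $(k, m', q, \eta, t_{\mathrm{Dec}}, t_{\mathrm{Emb}})$-efficient code family with $m'(k) = \poly(k)$, $q(k) = \mathrm{polylog}(k)$, $\eta(k) = 1/\mathrm{polylog}(k)$, $t_{\mathrm{Dec}}(k) = \poly(k)$, and $t_{\mathrm{Emb}}(k) = \mathrm{polylog}(k)$, together with a $k$-subset test $\game_{\mathrm{LDsubset}}$ of robustness $\chi_k(\eps) = \poly(\eps, 1/\log k)$ whose parameters are $\qlength{} = O(k' \log k)$ for constant subset-size $k'$, $\alength{} = \mathrm{polylog}(k)$ (the $d^{k'} = \mathrm{polylog}(k)^{k'}$ bound, with $k' = O(1)$ the PCPP query complexity), $\qtime{} = \mathrm{polylog}(k)$, and $\atime{} = \mathrm{polylog}(k)$. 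I would then check the hypotheses of \Cref{thm:answer-reduction}: we need constants $s, \gamma > 0$ with $\eta(k) \geq 2\gamma$ for all $k$; since $\eta(k) = 1/\mathrm{polylog}(k) \to 0$ this is \emph{not} true uniformly, so the honest fix is to observe that the answer-reduction construction only ever invokes the code at the two message lengths $\ell_1 = \ell_{V,\mathrm{A}}(n)$ and $\ell_2 = \ell_\pi(n)$, and to pick $q$ (the code's field size) a large enough fixed polynomial — wait, more carefully: the code's distance parameter is controlled by its own internal parameters, so the right move is to choose the low-degree code parameters so that $\eta$ is bounded below by a fixed constant, e.g. by taking $d/q$ bounded by a constant, which the ``rule of thumb'' allows at the cost of only polynomial blow-up in $m'$. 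I would make this explicit, then set $\gamma$ to be half that constant and $s$ any constant less than $1$.

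Next I would substitute the code parameters into the complexity formulas of \Cref{thm:answer-reduction}. The protocol complexity is the sum of the ``Verify'' subroutine $S_1$ and the ``Code Check'' subroutine $S_2$. For $S_1$: $\qlength{S_1} = O(\ell_{V,\mathrm{Q}}(n) + \log m(\ell_{V,\mathrm{A}}(n)) + \log m(\ell_\pi(n)))$; since $m(k) = \poly(k)$ we get $\log m(k) = O(\log k)$, and since $\ell_\pi(n) = t_{\mathrm{compose}}(n)\cdot\mathrm{polylog}(t_{\mathrm{compose}}(n))$ with $t_{\mathrm{compose}}(n) = t_{V,\mathrm{A}}(n) + t_{\mathrm{Dec}}(\ell_{V,\mathrm{A}}(n)) = t_{V,\mathrm{A}}(n) + \poly(\ell_{V,\mathrm{A}}(n))$, we have $\log \ell_\pi(n) = O(\log t_{V,\mathrm{A}}(n) + \log \ell_{V,\mathrm{A}}(n))$ (using $t_{V,\mathrm{A}}(n) \geq \ell_{V,\mathrm{A}}(n)$, since the answer verifier must at least read the answers). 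This yields the claimed $\qlength{V_{\mathrm{ans}}} = O(\ell_{V,\mathrm{Q}}(n) + \log \ell_{V,\mathrm{A}}(n) + \log t_{V,\mathrm{A}}(n))$. For $\alength{S_1}$: $O(\log q(\ell_{V,\mathrm{A}}(n)) + \log q(\ell_\pi(n))) = O(\mathrm{polyloglog}) \subseteq O(\mathrm{polylog}(\ell_{V,\mathrm{A}}(n)) + \mathrm{polylog}(t_{V,\mathrm{A}}(n)))$. For $\qtime{S_1} = O(t_{V,\mathrm{Q}}(n) + t_{\mathrm{PCPP}}(n) + t_{\mathrm{Emb}}(\ell_\pi(n)))$, I would expand $t_{\mathrm{PCPP}}(n) = \poly(n + \ell_{V,\mathrm{Q}}(n), \log m(\ell_{V,\mathrm{A}}(n)), \log t_{\mathrm{compose}}(n)) = \poly(n + \ell_{V,\mathrm{Q}}(n), \log \ell_{V,\mathrm{A}}(n), \log t_{V,\mathrm{A}}(n))$ and note $t_{\mathrm{Emb}}(\ell_\pi(n)) = \mathrm{polylog}(\ell_\pi(n))$ is absorbed; similarly $\atime{S_1} = O(t_{\mathrm{PCPP}}(n))$ gives the stated answer time. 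For $S_2$ I would substitute the $\game_{\mathrm{LDsubset}}$ parameters at $k = \ell_{V,\mathrm{A}}(n)$ and $k = \ell_\pi(n)$: $\qlength{S_2} = O(\log \ell_{V,\mathrm{A}}(n) + \log \ell_\pi(n) + \ell_{V,\mathrm{Q}}(n))$, $\alength{S_2} = O(\mathrm{polylog}(\ell_{V,\mathrm{A}}(n)) + \mathrm{polylog}(\ell_\pi(n)))$, $\qtime{S_2} = O(\mathrm{polylog}(\ell_{V,\mathrm{A}}(n)) + \mathrm{polylog}(\ell_\pi(n)) + t_{V,\mathrm{Q}}(n))$, $\atime{S_2} = O(\mathrm{polylog}(\ell_{V,\mathrm{A}}(n)) + \mathrm{polylog}(\ell_\pi(n)))$; all of these are dominated by the corresponding $S_1$ bounds after rewriting $\log \ell_\pi(n)$ in terms of $\log \ell_{V,\mathrm{A}}(n)$ and $\log t_{V,\mathrm{A}}(n)$. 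Summing $S_1$ and $S_2$ gives exactly the four displayed bounds.

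Finally I would address soundness and completeness: \Cref{thm:answer-reduction} requires that the base verifier $V$ admit, on yes-instances, a value-$1$ real commuting EPR strategy; this hypothesis is inherited from whatever protocol $V$ we apply the theorem to (in the eventual application, $V$ is the game of \Cref{cor:succinct-sat-protocol-with-big-answer-size}, which is explicitly of this form). Completeness of $V_{\mathrm{ans}}$ is then immediate from \Cref{thm:answer-reduction}. For soundness, \Cref{thm:answer-reduction} produces robustness $\delta(\eps) = \poly(\chi_{\ell_1}(\poly(\eps)), \chi_{\ell_2}(\poly(\eps)), \eta(\ell_1), \eta(\ell_2))$; with $\chi_k(\eps) = \poly(\eps, 1/\log k)$ and $\eta(\ell_i)$ a fixed constant (after our parameter choice), $\delta(\eps) \to 0$ as $\eps \to 0$, so for any desired soundness gap of $V$ we can pick $\eps$ small enough that $1 - \delta(\eps)$ exceeds the soundness of $V$, hence $V_{\mathrm{ans}}$ decides $L$ with a constant gap.

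\textbf{Main obstacle.} The genuinely delicate point is the constant-distance requirement $\eta(k) \geq 2\gamma$: the ``rule of thumb'' low-degree code has vanishing relative distance $1/\mathrm{polylog}(k)$, so one must either re-parametrize the code (choosing $d/q$ a small fixed constant, which keeps $m'(k) = \poly(k)$ and $q(k) = \mathrm{polylog}(k)$ still polynomial/polylog but forces $d = \Theta(q)$, changing the degree–field relationship and hence the verification times by only polynomial factors) and re-verify that \Cref{thm:low-degree-code-tester} and \Cref{fact:low-degree-code-is-efficient} still go through with these parameters, or else appeal to a code-concatenation step. I expect this bookkeeping — making sure the altered code parameters still satisfy the low-degree conditions ($h \le q$, $n \le h^m$), still give $t_{\mathrm{Dec}} = \poly$, and still feed the PCPP-composition soundness of \Cref{prop:new-soundness} — to be the main thing requiring care; the complexity arithmetic itself is routine substitution.
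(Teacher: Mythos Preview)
Your approach is essentially identical to the paper's: instantiate \Cref{thm:answer-reduction} with the low-degree code of \Cref{fact:low-degree-code-is-efficient} and substitute the parameters. Your complexity arithmetic matches the paper's derivation line by line.

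The only point where you go astray is your ``main obstacle.'' You have the direction of the distance condition inverted. In \Cref{def:error-correcting-code}, $\eta$ is the maximum normalized Hamming \emph{agreement} between distinct codewords, so small $\eta$ means \emph{large} distance. The hypothesis actually used in \Cref{prop:new-soundness} (and in the soundness proof of \Cref{thm:answer-reduction}) is $1-\eta(k)\ge 2\gamma$, not $\eta(k)\ge 2\gamma$; the statement of \Cref{thm:answer-reduction} contains a typo on this point. With $\eta(k)=1/\mathrm{polylog}(k)\to 0$, the correct condition $1-\eta(k)\ge 2\gamma$ is trivially satisfied for any constant $\gamma<1/2$, and the paper simply takes $s=\gamma=1/10$ with no re-parametrization. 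Your proposed fix --- forcing $d/q$ to be a fixed constant --- is therefore unnecessary (and would in fact degrade the robustness bounds coming from the low-degree test, which want $d/q$ small). Once you drop that detour, your proof is exactly the paper's.
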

\begin{proof}
We instantiate the low-degree code in \Cref{fact:low-degree-code-is-efficient}.
It gives an error correcting code with parameters $(n, \poly(n), \mathrm{poly log}(n), \mathrm{poly log}(n)^{-1},\poly(n), \mathrm{polylog}(n))$
and a $c$-subset test~$\game_k$ with robustness $\chi_k(\eps) = \poly(\eps, \log(k)^{-1})$ such that
\begin{equation*}
\qtime{\game_k} = \poly(\log k, c),
\quad
\atime{\game_k} = \poly(\log(k)^c),
\end{equation*}
\begin{equation*}
\qlength{\game_k} = O(c \log k),
\quad
\alength{\game_k} = O(\log(k)^{2c}).
\end{equation*}
We then apply \Cref{thm:answer-reduction} with $s, \gamma = \tfrac{1}{10}$.
At this point, the theorem follows immediately,
but as deriving it can be cumbersome, we fill in the details.

By construction, $t_{\mathrm{Dec}}(n) = \poly(n)$.
As a result,
\begin{equation*}
t_{\mathrm{compose}}(n)
= t_{V, \mathrm{A}}(n) + t_{\mathrm{Dec}}(\ell_{V,\mathrm{A}}(n))
= t_{V, \mathrm{A}}(n) + \poly(\ell_{V,\mathrm{A}}(n)).
\end{equation*}
Thus,
\begin{equation*}
\ell_\pi(n)
= t_{\mathrm{compose}}(n) \cdot \mathrm{poly log}(t_{\mathrm{compose}}(n))
= \poly(t_{V, \mathrm{A}}(n), \ell_{V,\mathrm{A}}(n)).
\end{equation*}
Now, $m(n) = \poly(n)$. Thus,
\begin{align*}
t_{\mathrm{PCPP}}(n)
&= \poly(n + \ell_{V, \mathrm{Q}}(n), \log(m(\ell_{V, \mathrm{A}}(n))), \log(t_{\mathrm{compose}}(n)))\\
&= \poly(n + \ell_{V, \mathrm{Q}}(n), \log(\ell_{V, \mathrm{A}}(n)), \log(t_{V, \mathrm{A}}(n))).
\end{align*}
Furthermore, $q(n) = \mathrm{polylog}(n)$ and $t_{\mathrm{Emb}}(n) = \mathrm{poly log}(n)$.
As a result,
\begin{align*}
\log(m(\ell_\pi(n))) &= O(\log(\ell_{V,\mathrm{A}}(n))+ \log(t_{V, \mathrm{A}}(n))),\\
\log(q(\ell_\pi(n))) &= O(\log \log(\ell_{V,\mathrm{A}}(n))+ \log \log(t_{V, \mathrm{A}}(n))),\\
t_{\mathrm{Emb}}(\ell_\pi(n)) &= \poly(\log(\ell_{V, \mathrm{A}}(n)), \log(t_{V, \mathrm{A}}(n))).
\end{align*}
The theorem now follows from applying these bounds to \Cref{thm:answer-reduction}.
\end{proof}

Crucially, although polynomial factors of $t_{V,Q}(n)$ and $\ell_{V, Q}(n)$ appear in \Cref{thm:applied-with-low-degree-code},
only the \emph{logarithms} of $t_{V,A}(n)$ and $\ell_{V, A}(n)$ appear in this theorem.
As a result, if we apply this to \Cref{cor:succinct-sat-protocol-with-big-answer-size}, we arrive at our main result.

\begin{theorem}
  \label{thm:main}
There is an $\MIP^*$ verifier~$\game$ for $\succinctsquared$ with parameters
\begin{equation*}
\qlength{\game} =O(n), \quad
\alength{\game} = \poly(n),
\end{equation*}
\begin{equation*}
\qtime{\game} = \poly(n), \quad
\atime{\game} = \poly(n).
\end{equation*}
\end{theorem}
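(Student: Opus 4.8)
\textbf{Proof plan for \Cref{thm:main}.}
The plan is to obtain $\game$ simply by running the answer reduction of \Cref{thm:applied-with-low-degree-code} on the introspective $\neexp$ protocol of \Cref{cor:succinct-sat-protocol-with-big-answer-size}, and then doing the parameter bookkeeping. First I would take $V := \game_{\mathrm{Intro}\neexp}$ from \Cref{cor:succinct-sat-protocol-with-big-answer-size}, regarded as an $\MIP^*$ verifier for the language $\succinctsquared$. On an instance of size $n$ this verifier has $\ell_{V,\mathrm{Q}}(n) = O(n)$, $\ell_{V,\mathrm{A}}(n) = \poly(2^n)$, $t_{V,\mathrm{Q}}(n) = O(n)$, and $t_{V,\mathrm{A}}(n) = \poly(2^n)$, and—crucially for invoking answer reduction via \Cref{thm:answer-reduction}—YES instances of $V$ admit a value-$1$ real commuting EPR strategy, so $V$ is oracularizable. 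Applying \Cref{thm:applied-with-low-degree-code} to $V$ then produces a new $\MIP^*$ verifier $V_{\mathrm{ans}}$ for the same language; I set $\game := V_{\mathrm{ans}}$.

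The decisive point is that the bounds in \Cref{thm:applied-with-low-degree-code} depend only \emph{polynomially} on $\ell_{V,\mathrm{Q}}(n)$, $t_{V,\mathrm{Q}}(n)$, and $n$, but only \emph{logarithmically} on $\ell_{V,\mathrm{A}}(n)$ and $t_{V,\mathrm{A}}(n)$. Since $\log(\poly(2^n)) = O(n)$, plugging in the parameters of $V$ gives $\qlength{\game} = O(\ell_{V,\mathrm{Q}}(n) + \log \ell_{V,\mathrm{A}}(n) + \log t_{V,\mathrm{A}}(n)) = O(n)$; $\alength{\game} = O(\mathrm{polylog}\,\ell_{V,\mathrm{A}}(n) + \mathrm{polylog}\, t_{V,\mathrm{A}}(n)) = \poly(n)$; $\qtime{\game} = O(t_{V,\mathrm{Q}}(n)) + \poly(n + \ell_{V,\mathrm{Q}}(n), \log \ell_{V,\mathrm{A}}(n), \log t_{V,\mathrm{A}}(n)) = \poly(n)$; and likewise $\atime{\game} = \poly(n + \ell_{V,\mathrm{Q}}(n), \log \ell_{V,\mathrm{A}}(n), \log t_{V,\mathrm{A}}(n)) = \poly(n)$. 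These match exactly the four parameter bounds in the statement, completing the proof.

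Completeness and soundness of $\game$ as an $\MIP^*$ verifier for $\succinctsquared$ are inherited directly: \Cref{thm:applied-with-low-degree-code} preserves completeness verbatim, and its soundness guarantee is that a value-$(1-\eps')$ strategy for $\game$ yields a value-$(1-\delta(\eps'))$ strategy for $V$; since $V$ has a fixed constant soundness gap $1-\eps$ (\Cref{cor:succinct-sat-protocol-with-big-answer-size}) and the answer-reduction loss $\delta(\eps')$ can be driven below $\eps$ by taking $\eps'$ small and using that the low-degree code parameters $\eta(\ell_i), \chi_{\ell_i}$ shrink as the instance size grows, the composed verifier retains a constant gap (with the finitely many small instances handled trivially). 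Honestly there is no genuine obstacle at this final stage—the content is purely in \Cref{cor:succinct-sat-protocol-with-big-answer-size} and \Cref{thm:applied-with-low-degree-code}—so the only care needed is to confirm that the base protocol really does satisfy the "value-$1$ real commuting EPR strategy'' hypothesis required to apply answer reduction, which \Cref{cor:succinct-sat-protocol-with-big-answer-size} explicitly provides.
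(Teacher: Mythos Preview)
Your proposal is correct and matches the paper's approach exactly: the paper derives \Cref{thm:main} by applying \Cref{thm:applied-with-low-degree-code} to the protocol of \Cref{cor:succinct-sat-protocol-with-big-answer-size} and observing that the answer-side parameters enter only logarithmically. If anything, you are more careful than the paper in explicitly verifying the value-$1$ real commuting EPR hypothesis needed for the answer-reduction machinery and in spelling out how the constant soundness gap is preserved.
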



\bibliographystyle{myhalpha}
\bibliography{wright}

\newcommand{\etalchar}[1]{$^{#1}$}
\begin{thebibliography}{BOGKW88}

\bibitem[ALM{\etalchar{+}}98]{ALM+98}
Sanjeev Arora, Carsten Lund, Rajeev Motwani, Madhu Sudan, and Mario Szegedy.
\newblock Proof verification and the hardness of approximation problems.
\newblock {\em Journal of the ACM}, 45(3):501--555, 1998.

\bibitem[ARW17]{ARW17}
Amir Abboud, Aviad Rubinstein, and Ryan Williams.
\newblock Distributed {PCP} theorems for hardness of approximation in {P}.
\newblock In {\em Proceedings of the 58th Annual IEEE Symposium on Foundations
  of Computer Science}, 2017.

\bibitem[AS98]{AS98}
Sanjeev Arora and Shmuel Safra.
\newblock Probabilistic checking of proofs: a new characterization of {NP}.
\newblock {\em Journal of the ACM}, 45(1):70--122, 1998.

\bibitem[Bel64]{Bel64}
John Bell.
\newblock On the {E}instein {P}odolsky {R}osen paradox.
\newblock {\em Physics}, 1(3):195--200, 1964.

\bibitem[BFL91]{BFL91}
L{\'a}szl{\'o} Babai, Lance Fortnow, and Carsten Lund.
\newblock Non-deterministic exponential time has two-prover interactive
  protocols.
\newblock {\em Computational complexity}, 1(1):3--40, 1991.

\bibitem[BFLS91]{BFLS91}
L{\'a}szl{\'o} Babai, Lance Fortnow, Leonid~A Levin, and Mario Szegedy.
\newblock Checking computations in polylogarithmic time.
\newblock In {\em Proceedings of the 23rd Annual ACM Symposium on Theory of
  Computing}, pages 21--32, 1991.

\bibitem[BOGKW88]{BGKW88}
Michael Ben-Or, Shafi Goldwasser, Joe Kilian, and Avi Wigderson.
\newblock Multi-prover interactive proofs: How to remove intractability
  assumptions.
\newblock In {\em Proceedings of the 20th Annual ACM Symposium on Theory of
  Computing}, pages 113--131, 1988.

\bibitem[BSGH{\etalchar{+}}05]{BGH+05}
Eli Ben-Sasson, Oded Goldreich, Prahladh Harsha, Madhu Sudan, and Salil Vadhan.
\newblock Short {PCP}s verifiable in polylogarithmic time.
\newblock In {\em Proceedings of the 20th Annual IEEE Conference on
  Computational Complexity}, pages 120--134, 2005.

\bibitem[BSGH{\etalchar{+}}06]{BGH+06}
Eli Ben-Sasson, Oded Goldreich, Prahladh Harsha, Madhu Sudan, and Salil Vadhan.
\newblock Robust {PCP}s of proximity, shorter {PCP}s, and applications to
  coding.
\newblock {\em SIAM Journal on Computing}, 36(4):889--974, 2006.

\bibitem[CGJV18]{CGJV18}
Andrea Coladangelo, Alex Grilo, Stacey Jeffery, and Thomas Vidick.
\newblock Verifier-on-a-leash: new schemes for verifiable delegated quantum
  computation, with quasilinear resources.
\newblock In {\em 21st Conference on Quantum Information Processing}, 2018.

\bibitem[CHTW04]{CHTW04}
Richard Cleve, Peter Hoyer, Benjamin Toner, and John Watrous.
\newblock Consequences and limits of nonlocal strategies.
\newblock In {\em Proceedings of the 19th Annual IEEE Conference on
  Computational Complexity}, pages 236--249, 2004.

\bibitem[Col06]{Col06}
Roger Colbeck.
\newblock {\em Quantum and relativistic protocols for secure multi-party
  computation}.
\newblock PhD thesis, University of Cambridge, 2006.

\bibitem[CY14]{CY14}
Matthew Coudron and Henry Yuen.
\newblock Infinite randomness expansion with a constant number of devices.
\newblock In {\em Proceedings of the 46th Annual ACM Symposium on Theory of
  Computing}, pages 427--436, 2014.

\bibitem[DR06]{DR06}
Irit Dinur and Omer Reingold.
\newblock Assignment testers: Towards a combinatorial proof of the {PCP}
  theorem.
\newblock {\em SIAM Journal on Computing}, 36(4):975--1024, 2006.

\bibitem[Eke91]{Eke91}
Artur Ekert.
\newblock Quantum cryptography based on {B}ell's theorem.
\newblock {\em Physical review letters}, 67(6):661, 1991.

\bibitem[EPR35]{EPR35}
Albert Einstein, Boris Podolsky, and Nathan Rosen.
\newblock Can quantum-mechanical description of physical reality be considered
  complete?
\newblock {\em Physical review}, 47(10):777, 1935.

\bibitem[FJVY19]{FJVY19}
Joseph Fitzsimons, Zhengfeng Ji, Thomas Vidick, and Henry Yuen.
\newblock Quantum proof systems for iterated exponential time, and beyond.
\newblock In {\em Proceedings of the 51st Annual ACM Symposium on Theory of
  Computing}, 2019.

\bibitem[FL18]{FL18}
Bill Fefferman and Cedric Yen-Yu Lin.
\newblock A complete characterization of unitary quantum space.
\newblock In {\em Proceedings of the 9th Innovations in Theoretical Computer
  Science}, pages 4:1--4:21, 2018.

\bibitem[FV15]{FV15}
Joseph Fitzsimons and Thomas Vidick.
\newblock A multiprover interactive proof system for the local {H}amiltonian
  problem.
\newblock In {\em Proceedings of the 6th Innovations in Theoretical Computer
  Science}, pages 103--112, 2015.

\bibitem[Har10]{Har10}
Prahladh Harsha.
\newblock Lecture {9} from {L}imits of {A}pproximation {A}lgorithms:\ {PCP}s
  and {U}nique {G}ames.
\newblock Found at
  \url{http://www.tcs.tifr.res.in/~prahladh/teaching/2009-10/limits/lectures/lec09.pdf},
  2010.

\bibitem[H{\aa}s97]{Has97}
Johan H{\aa}stad.
\newblock Some optimal inapproximability results.
\newblock In {\em Proceedings of the 29th Annual ACM Symposium on Theory of
  Computing}, pages 1--10, 1997.

\bibitem[IKW12]{IKW12}
Tsuyoshi Ito, Hirotada Kobayashi, and John Watrous.
\newblock Quantum interactive proofs with weak error bounds.
\newblock In {\em Proceedings of the 3rd Innovations in Theoretical Computer
  Science}, pages 266--275, 2012.

\bibitem[IV12]{IV12}
Tsuyoshi Ito and Thomas Vidick.
\newblock A multi-prover interactive proof for {NEXP} sound against entangled
  provers.
\newblock In {\em Proceedings of the 53rd Annual IEEE Symposium on Foundations
  of Computer Science}, pages 243--252, 2012.

\bibitem[Ji17]{Ji17}
Zhengfeng Ji.
\newblock Compression of quantum multi-prover interactive proofs.
\newblock In {\em Proceedings of the 49th Annual ACM Symposium on Theory of
  Computing}, pages 289--302, 2017.

\bibitem[KRR14]{KRR14}
Yael Kalai, Ran Raz, and Ron Rothblum.
\newblock How to delegate computations: the power of no-signaling proofs.
\newblock In {\em Proceedings of the 46th Annual ACM Symposium on Theory of
  Computing}, pages 485--494, 2014.

\bibitem[MBG{\etalchar{+}}13]{MBG+13}
Alfred~J Menezes, Ian~F Blake, XuHong Gao, Ronald~C Mullin, Scott~A Vanstone,
  and Tomik Yaghoobian.
\newblock {\em Applications of finite fields}.
\newblock Springer Science \& Business Media, 2013.

\bibitem[Mei14]{Mei14}
Or~Meir.
\newblock Combinatorial {PCP}s with efficient verifiers.
\newblock {\em Computational Complexity}, 23(3):355--478, 2014.

\bibitem[Mie09]{Mie09}
Thilo Mie.
\newblock Short {PCPP}s verifiable in polylogarithmic time with $o(1)$ queries.
\newblock {\em Annals of Mathematics and Artificial Intelligence},
  56(3-4):313--338, 2009.

\bibitem[MR08]{MR08}
Dana Moshkovitz and Ran Raz.
\newblock Sub-constant error low degree test of almost-linear size.
\newblock {\em SIAM Journal on Computing}, 38(1):140--180, 2008.

\bibitem[MY98]{MY98}
Dominic Mayers and Andrew Yao.
\newblock Quantum cryptography with imperfect apparatus.
\newblock In {\em Proceedings of the 39th Annual IEEE Symposium on Foundations
  of Computer Science}, pages 503--509, 1998.

\bibitem[NV18a]{NV18a}
Anand Natarajan and Thomas Vidick.
\newblock Low-degree testing for quantum states, and a quantum entangled games
  {PCP}.
\newblock In {\em Proceedings of the 59th Annual IEEE Symposium on Foundations
  of Computer Science}, 2018.

\bibitem[NV18b]{NV18b}
Anand Natarajan and Thomas Vidick.
\newblock Two-player entangled games are {NP}-hard.
\newblock In {\em Proceedings of the 33rd Annual IEEE Conference on
  Computational Complexity}, 2018.

\bibitem[O'D05]{OD05}
Ryan O'Donnell.
\newblock A history of the {PCP} theorem, 2005.

\bibitem[Pap94]{Pap94}
Christos Papadimitriou.
\newblock {\em Computational complexity}.
\newblock Addison Wesley, 1994.

\bibitem[Per12]{Per12}
Attila Pereszl{\'e}nyi.
\newblock Multi-prover quantum {M}erlin-{A}rthur proof systems with small gap.
\newblock Technical report, arXiv:1205.2761, 2012.

\bibitem[RS97]{RS97}
Ran Raz and Shmuel Safra.
\newblock A sub-constant error-probability low-degree test, and a sub-constant
  error-probability {PCP} characterization of {NP}.
\newblock In {\em Proceedings of the 29th Annual ACM Symposium on Theory of
  Computing}, pages 475--484, 1997.

\bibitem[RUV13]{RUV13}
Ben Reichardt, Falk Unger, and Umesh Vazirani.
\newblock A classical leash for a quantum system: Command of quantum systems
  via rigidity of {CHSH} games.
\newblock {\em Nature}, 496:456--460, 2013.

\bibitem[Sch80]{Sch80}
Jacob Schwartz.
\newblock Fast probabilistic algorithms for verification of polynomial
  identities.
\newblock {\em Journal of the ACM}, 27(4):701--717, 1980.

\bibitem[Slo16]{Slo16}
William Slofstra.
\newblock Tsirelson's problem and an embedding theorem for groups arising from
  non-local games.
\newblock Technical report, arXiv:1606.03140, 2016.

\bibitem[Slo19]{Slo19}
William Slofstra.
\newblock The set of quantum correlations is not closed.
\newblock In {\em Forum of Mathematics, Pi}, volume~7, page~e1, 2019.

\bibitem[SW88]{SW88}
Stephen Summers and Reinhard Werner.
\newblock Maximal violation of {B}ell's inequalities for algebras of
  observables in tangent spacetime regions.
\newblock {\em Annales de l'IHP Physique th{\'e}orique}, 49(2):215--243, 1988.

\bibitem[Tsi80]{Tsi80}
Boris Tsirelson.
\newblock Quantum generalizations of {B}ell's inequality.
\newblock {\em Letters in Mathematical Physics}, 4(2):93--100, 1980.

\bibitem[Vid11]{Vid11}
Thomas Vidick.
\newblock {\em The complexity of entangled games}.
\newblock PhD thesis, University of California, Berkeley, 2011.

\bibitem[Vid16]{Vid16}
Thomas Vidick.
\newblock Three-player entangled {XOR} games are {NP}-hard to approximate.
\newblock {\em SIAM Journal on Computing}, 45(3):1007--1063, 2016.

\bibitem[WBMS16]{WBMS16}
Xingyao Wu, Jean-Daniel Bancal, Matthew McKague, and Valerio Scarani.
\newblock Device-independent parallel self-testing of two singlets.
\newblock {\em Physical Review A}, 93(6):062121, 2016.

\bibitem[Zip79]{Zip79}
Richard Zippel.
\newblock Probabilistic algorithms for sparse polynomials.
\newblock In {\em Proceedings of the 2nd International Symposium on Symbolic
  and Algebraic Manipulation}, pages 216--226, 1979.

\end{thebibliography}

\end{document}